\numberwithin{equation}{section}
\renewcommand\ln{\log}
\newcommand\disteq{\sim}
\newcommand{\vgamma}{\vec\gamma}
\newcommand{\vGamma}{\vec\Gamma}
\newcommand{\vDelta}{\vec\Delta}
\newcommand{\va}{\vec a}
\newcommand{\vb}{\vec b}
\newcommand{\vd}{\vec d}
\newcommand{\vg}{\vec g}
\newcommand{\vh}{\vec h}
\newcommand{\vi}{\vec i}
\newcommand{\vk}{\vec k}
\newcommand{\vm}{\vec m}
\newcommand{\vt}{\vec t}
\newcommand{\vv}{\vec v}
\newcommand{\vx}{\vec x}
\newcommand{\vy}{\vec y}
\newcommand{\vC}{\vec C}
\newcommand{\vG}{\vec G}
\newcommand{\vH}{\vec H}
\newcommand{\vJ}{\vec J}
\newcommand{\vM}{\vec M}
\newcommand{\vX}{\vec X}
\newcommand{\vY}{\vec Y}
\newcommand\FF{\mathbb{F}}
\newcommand\fD{\mathfrak{D}}
\newcommand\fM{\mathfrak{M}}
\newcommand\fP{\mathfrak{P}}
\newcommand\cA{\mathcal{A}}
\newcommand\cB{\mathcal{B}}
\newcommand\cC{\mathcal{C}}
\newcommand\cD{\mathcal{D}}
\newcommand\cE{\mathcal{E}}
\newcommand\cG{\mathcal{G}}
\newcommand\cK{\mathcal{K}}
\newcommand\cL{\mathcal{L}}
\newcommand\cM{\mathcal{M}}
\newcommand\cN{\mathcal{N}}
\newcommand\cO{\mathcal{O}}
\newcommand\cP{\mathcal{P}}
\newcommand\cQ{\mathcal{Q}}
\newcommand\cR{\mathcal{R}}
\newcommand\cS{\mathcal{S}}
\newcommand\cT{\mathcal{T}}
\newcommand\cV{\mathcal{V}}
\newcommand\cX{\mathcal{X}}
\newcommand\cY{\mathcal{Y}}
\newcommand\MU{\vec\mu}
\newcommand\PSI{\vec\psi}
\newcommand\RHO{{\vec\rho}}
\newcommand\SIGMA{\vec\sigma}
\newcommand\THETA{\vec\theta}
\newcommand\G{{\vec G}}
\newcommand\vmu{\vec\mu}
\newcommand\vpsi{\vec\psi}
\newcommand\vomega{\vec\omega}
\newcommand\vsigma{\vec\sigma}
\newcommand\vtheta{\vec\theta}
\newcommand\dTV{d_{\mathrm{TV}}}
\newcommand{\Po}{{\rm Po}}
\newcommand{\Bin}{{\rm Bin}}
\newcommand{\Mult}{{\rm Mult}}
\newcommand\nix{\,\cdot\,}
\newcommand\dd{{\mathrm d}}
\renewcommand{\vec}[1]{\boldsymbol{#1}}
\newcommand{\vecone}{\vec{1}}
\newcommand\KL[2]{D_{\mathrm{KL}}\bc{{{#1}\|{#2}}}}
\newcommand\eul{\mathrm{e}}
\newcommand\eps{\varepsilon}
\newcommand\ZZ{\mathbb{Z}}
\newcommand\NN{\mathbb{N}}
\newcommand\pr{\mathbb{P}} 
\renewcommand\Pr{\pr}
\newcommand\Var{\mathrm{Var}}
\newcommand\Erw{\mathbb{E}}
\newcommand\RR{\mathbb{R}}
\newcommand{\Whp}{W.h.p.}
\newcommand{\whp}{w.h.p.}
\newtheorem{definition}{Definition}[section]
\newtheorem{claim}[definition]{Claim}
\newtheorem{example}[definition]{Example}
\newtheorem{remark}[definition]{Remark}
\newtheorem{theorem}[definition]{Theorem}
\newtheorem{lemma}[definition]{Lemma}
\newtheorem{proposition}[definition]{Proposition}
\newtheorem{corollary}[definition]{Corollary}
\newtheorem{fact}[definition]{Fact}
\newcommand\Lem{Lemma}
\newcommand\Prop{Proposition}
\newcommand\Thm{Theorem}
\newcommand\Sec{Section}
\newcommand\bc[1]{\left({#1}\right)}
\newcommand\cbc[1]{\left\{{#1}\right\}}
\newcommand{\bck}[1]{\left\langle{#1}\right\rangle}
\newcommand\brk[1]{\left\lbrack{#1}\right\rbrack}
\newcommand\norm[1]{\left\|{#1}\right\|}
\newcommand\abs[1]{\left|{#1}\right|}
\newcommand{\Erdos}{Erd\H{o}s}
\newcommand{\Renyi}{R\'enyi}
\newcommand{\SYM}{\textbf{SYM}}
\newcommand{\BAL}{\textbf{BAL}}
\newcommand{\POS}{\textbf{POS}}
\newcommand{\matTr}{{\mathrm t}}
\newcommand{\intD}{{\mathrm d}}
\newcommand{\cpxI}{{\mathrm i}}
\DeclareMathOperator{\Cov}{Cov}
\DeclareMathOperator{\im}{im}
\newcommand{\m}{\bar m}
\newcommand{\vmIndP}{P}
\newcommand{\vmIndPSp}{\cP_{\mathrm{L}}}
\newcommand{\vmIndPD}{\Delta_{\mathrm{L}}}
\newcommand{\vmIndSp}{\cL}
\newcommand{\vmInd}{\ell}
\newcommand{\svmInd}{{\underline\vmInd}}
\newcommand{\vmRInd}{\vec\ell}
\newcommand{\vmCol}{\omega}
\newcommand{\vmColSp}{\cO}
\newcommand{\vmPrior}{{\mathfrak p}}
\newcommand{\vmRPrior}{\vec\pi}
\newcommand{\vmAssSp}{\cX}
\newcommand{\vmAss}{\chi}
\newcommand{\vmRAss}{\vec\chi}
\newcommand{\lltLat}{\mathfrak{L}}
\newcommand{\lltLatA}{\mathfrak{C}}
\newcommand{\lltLatR}{\mathfrak{R}}
\newcommand{\lltLatP}{\mathfrak{P}}
\newcommand{\lltVL}{i}
\newcommand{\epsGen}{\varepsilon_{\mathrm{gen}}}
\newcommand{\lltEpsP}{\varepsilon_{\mathrm{p}}}
\newcommand{\secB}{E^{(2)}}
\newcommand{\thiB}{E^{(3)}}
\newcommand{\lltBox}{\cQ}
\newcommand{\lltHEA}{{\underline\vmAss}}
\newcommand{\lltRHEA}{{\underline\vmRAss}}
\newcommand{\lltCFA}{\gamma}
\newcommand{\lltCFR}{\rho}
\newcommand{\lltCFT}{\tau}
\newcommand{\lltRVL}{\vx}
\newcommand{\lltRCFA}{\vec\lltCFA}
\newcommand{\lltRCFR}{\vec\lltCFR}
\newcommand{\lltECFA}{\bar\lltCFA}
\newcommand{\lltECFR}{\bar\lltCFR}
\newcommand{\lltCFAng}{\varphi}
\newcommand{\lltCFInt}{f}
\newcommand{\lltCFLog}{g}
\newcommand{\acSupp}{\cA}
\newcommand{\acAD}{\alpha}
\newcommand{\acRAD}{\vec\alpha}
\newcommand{\acEAD}{\bar\acAD}
\newcommand{\acADSp}{\cP_{\mathrm{A}}}
\newcommand{\acADD}{\Delta_{\mathrm{A}}}
\newcommand{\mcCol}{\vmCol}
\newcommand{\mcColSp}{\Omega}
\newcommand{\mcFPsi}{\Psi}
\newcommand{\mcFPsiP}{P}
\newcommand{\mcFEZ}{Z}
\newcommand{\mcFExi}{\xi}
\newcommand{\mcFdeg}{k}
\newcommand{\mcRFdeg}{\vec\mcFdeg}
\newcommand{\mcFInd}{\vmInd}
\newcommand{\mcRFInd}{\vec\mcFInd}
\newcommand{\mcFSInd}{{\underline\vmInd}}
\newcommand{\mcFIndSp}{{\vmIndSp_{\mathrm{F}}}}
\newcommand{\mcFED}{\mu}
\newcommand{\mcVdeg}{d}
\newcommand{\mcRVdeg}{\vec\mcVdeg}
\newcommand{\mcVInd}{\lambda}
\newcommand{\mcRVInd}{\vec\mcVInd}
\newcommand{\mcVSInd}{{\underline\mcVInd}}
\newcommand{\mcVIndSp}{{\vmIndSp_{\mathrm{V}}}}
\newcommand{\mcVED}{\nu}
\newcommand{\mcFL}{a}
\newcommand{\mcFLSp}{F}
\newcommand{\mcFHESp}{\cA}
\newcommand{\mcFHEA}{y}
\newcommand{\mcFSHEA}{\underline y}
\newcommand{\mcRFSHEA}{{\underline{\vec y}}}
\newcommand{\mcVL}{x}
\newcommand{\mcVLSp}{V}
\newcommand{\mcVA}{\sigma}
\newcommand{\mcRVA}{\vec\mcVA}
\newcommand{\mcRVAN}{\hat{\vsigma}}
\newcommand{\mcRVAPost}{\vsigma}
\newcommand{\mcVHESp}{\cX}
\newcommand{\mcVHEA}{\chi}
\newcommand{\mcVSHEA}{\underline\chi}
\newcommand{\mcRVSHEA}{{\underline{\vec\chi}}}
\newcommand{\mcRVSHEAN}{\hat{\mcRVSHEA}}
\newcommand{\mcT}{t}
\newcommand{\mcSeq}{T}
\newcommand{\mcSeqSp}{\mathfrak T}
\newcommand{\mcDegDens}{\alpha}
\newcommand{\mcFpsi}{\psi}
\newcommand{\mcFRpsi}{\vec\mcFpsi}
\newcommand{\mcFEpsi}{\bar\mcFpsi}
\newcommand{\mcZ}{Z}
\newcommand{\mcZE}{\bar Z}
\newcommand{\mcBMD}{\mu}
\newcommand{\mcBij}{g}
\newcommand{\mcRBij}{\vec\mcBij}
\newcommand{\mcG}{G}
\newcommand{\mcRG}{\vec\mcG}
\newcommand{\mcRGN}{\hat{\mcRG}}
\newcommand{\mcCFA}{\gamma}
\newcommand{\mcRCFA}{\vec\mcCFA}
\newcommand{\mcRCFAN}{\hat{\mcRCFA}}
\newcommand{\mcCFR}{\rho}
\newcommand{\mcRCFR}{\vec\mcCFR}
\newcommand{\mcRCFRN}{\hat{\mcRCFR}}
\newcommand{\mcPrior}{\vmPrior}
\newcommand{\mcCov}[1]{\Sigma_{#1}}
\newcommand{\mcVCov}[1]{\Sigma_{\mathrm{V},#1}}
\newcommand{\mcFCov}[1]{\Sigma_{\mathrm{F},#1}}
\newcommand{\mcECov}[1]{\Sigma_{\mathrm{E},#1}}
\newcommand{\mcVDGCD}{h}
\newcommand{\mcVDTot}{D}
\newcommand{\mcFDTot}{K}
\newcommand{\mcDTot}{D^*}
\newcommand{\mcDVDelta}{\Delta_{\mathrm{D}}}
\newcommand{\mcDFDelta}{\Delta_{\mathrm{K}}}
\newcommand{\mcDVDeltaSp}{\cD_{\mathrm{D}}}
\newcommand{\mcDFDeltaSp}{\cD_{\mathrm{K}}}
\newcommand{\mcRDSC}{\vec t}
\newcommand{\acVdegP}{p_{\mathrm{d}}}
\newcommand{\acFdegP}{p_{\mathrm{k}}}
\newcommand{\acVAD}{\alpha_{\mathrm{V}}}
\newcommand{\acFAD}{\alpha_{\mathrm{F}}}
\newcommand{\acJAD}{\alpha}
\newcommand{\acSuppT}{\acSupp^\circ}
\newcommand{\acSSpV}{\cS}
\newcommand{\acSSpT}{\acSSpV^\circ}
\newcommand{\acS}{s}
\newcommand{\acRS}{\vec\acS}
\newcommand{\acRSN}{\hat{\acRS}}
\newcommand{\acADPD}{\acADD}
\newcommand{\acVAF}[1]{\alpha_{\mathrm{V},#1}}
\newcommand{\acFAF}[1]{\alpha_{\mathrm{F},#1}}
\newcommand{\acJAF}[1]{\acAD_{#1}}
\newcommand\CE[2]{H\bc{{{#1}\|{#2}}}}
\DeclareMathOperator{\supp}{supp}
\renewcommand{\d}{\bar d}
\renewcommand{\k}{\bar k}
\newcommand{\sd}{\underline d}
\newcommand{\sP}{\underline P}
\newcommand{\sk}{\underline k}
\newcommand{\PP}{\fP}
\crefname{theorem}{Theorem}{Theorems}
\newcommand{\hGi}{\hat \G_{t, \eps, \pi,T}}
\newcommand{\Gmm}{\G_{t, \eps, \pi,T}(\vm_\eps(t), \vm'_\eps(t))}
\newcommand{\sGmm}{\G^*_{t, \eps, \pi,T}(\vm_\eps(t), \vm'_\eps(t))}
\newcommand{\sGmpm}{\G^*_{t, \eps, \pi,T}(\vm_\eps(t)+1, \vm'_\eps(t))}
\newcommand{\sGmmp}{\G^*_{t, \eps, \pi,T}(\vm_\eps(t), \vm'_\eps(t)+1)}
\begin{document}

\title{Inference and mutual information on random factor graphs} 

\author{Amin Coja-Oghlan, Max Hahn-Klimroth, Philipp Loick, No\"ela M\"uller, Konstantinos~Panagiotou, Matija Pasch}
\thanks{Amin Coja-Oghlan and Philipp Loick are supported by DFG CO 646/3. Max Hahn-Klimroth is supported by Stiftung Polytechnische Gesellschaft. Konstantinos Panagiotou, Matija Pasch: The research leading to these results has received funding from the European Research Council, ERC Grant Agreement 772606-PTRCSP}

\address{{\tt \{acoghlan,hahnklim,loick,nmueller\}@math.uni-frankfurt.de}, Goethe University, Mathematics Institute, 10 Robert Mayer St, Frankfurt 60325, Germany.}
\address{{\tt \{kpanagio, pasch\}@math.lmu.de}, Mathematisches Institut der Universit\"at M\"unchen, Theresienstr.~39, 80333 M\"unchen, Germany}

\begin{abstract}
\noindent
Random factor graphs provide a powerful framework for the study of inference problems such as decoding problems or the stochastic block model.
Information-theoretically the key quantity of interest is the mutual information between the observed factor graph and the underlying ground truth around which the factor graph was created; in the stochastic block model, this would be the planted partition.
The mutual information gauges whether and how well the ground truth can be inferred from the observable data.
For a very general model of random factor graphs we verify a formula for the mutual information predicted by physics techniques.
As an application we prove a conjecture about low-density generator matrix codes from [Montanari: IEEE Transactions on Information Theory 2005].
Further applications include phase transitions of the stochastic block model and the mixed $\vk$-spin model from physics.
\end{abstract}

\maketitle

\newpage

\section{Introduction}\label{Sec_intro}

\subsection{Background and motivation}\label{Sec_background}
Since the 1990s there has been an immense interest in inference and learning problems on random graphs.
One motivation has been to seize upon random graphs as benchmarks for inference algorithms of all creeds and denominations.
An excellent example of this is the stochastic block model; the impressive literature on this model alone is surveyed in~\cite{Abbe}.
A second, no less salient motivation has been the use of random graphs in probabilistic constructions.
Concrete examples include powerful error correcting codes such as low density generator matrix or low density parity check codes, which have since found their way into modern communications standards~\cite{KRU,RichardsonUrbanke}.
Further prominent recent applications include compressed sensing and group testing~\cite{Aldridge_2019,Donoho_2006,Donoho_2013}.
It appears hardly a stretch to claim that  in terms of real world impact these constructions occupy top ranks among applications of the probabilistic method and, indeed, modern combinatorics generally.

Yet many applications of {the probabilistic method} to inference problems still lack a satisfactory rigorous justification.
Some are supported primarily by empirical evidence, i.e., not much more than a bunch of computer experiments.
Quite a few others have been inspired by a versatile but non-rigorous approach from physics known as the `cavity method.'
But while there has been progress in recent years, vast gaps between the physics predictions and their rigorous vindications remain.
One important reason for this is that the random graph models used in practical inference tend to be significantly more intricate than, say, a classical binomial random graph.
For instance, a highly popular breed of low-density parity check codes use delicately tailored degree distributions for both the variable nodes and the check nodes of the Tanner graph~\cite{RichardsonUrbanke}.

In this paper we significantly advance the rigorous state of the art by corroborating important cavity method predictions wholesale for a rich class of inference problems that accommodates the very general choices of degree distributions of interest in high-dimensional Bayesian inference problems and coding theory.
Generally, the objective in such inference problems is to recover the ground truth from the observable data.
Think, for instance, of retrieving the hidden communities in the stochastic block model or of reconstructing the original message from a noisy codeword.
For this broad class of models we rigorously establish the formulas that the cavity method predicts for the mutual information, which is the key information-theoretic potential that gauges precisely how much it is possible in principle to learn about the ground truth.
Technically we build upon and extend the methods developed in~\cite{CKPZ} for random graph models of \Erdos-\Renyi\ type.
While we follow a similar general proof strategy, the greater generality of the present results necessitates significant upgrades to virtually all of the moving parts.
For example,
due to the more rigid combinatorial structure of graphs with given degrees many of the manoeuvres that are straightforward for binomial random graphs now require delicate coupling arguments.

We proceed to highlight applications of our main results to three specific problems that have each received a great deal of attention in their own right:
low-density generator matrix codes, the stochastic block model and the mixed $\vk$-spin model, which hails from mathematical physics. 
Then in \Sec~\ref{Sec_main} we state the main results concerning the general class of random factor graph models.
\Sec~\ref{Sec_strategy} contains an overview of the proof strategy and a detailed comparison with prior work.

\subsection{Low-density generator matrix codes}\label{Sec_LDGM}
A powerful and instructive class of error-correcting codes, low-density generator matrix (`ldgm') codes are based on random bipartite graphs with given degree distributions. 
Specifically, let $\vd,\vk\geq0$ be bounded integer-valued random variables, let $n$ be an integer and let $\vm\disteq\Po(n\Erw[\vd]/\Erw[\vk])$ be a Poisson variable.
One vertex class $V=\{x_1,\ldots,x_n\}$ of the graph represents the bits of the original message.
The other class $F=\{a_1,\ldots,a_{\vm}\}$ represents the rows of the code's generator matrix.
To obtain the random graph $\G$ create for each variable node $x_i$ an independent copy $\vd_i$ of $\vd$.
Similarly,  create an independent copy $\vk_i$ of $\vk$ for each check node $a_i$.
Then given the event
\begin{align}\label{eqDeg}
\sum_{i=1}^{n}\vd_i=\sum_{i=1}^{\vm}\vk_i
\end{align}
that the total degrees on both sides match let $\vG$ be a  random bipartite graph where every $x_i$ has degree $\vd_i$ and every $a_i$ has degree $\vk_i$.
We tacitly restrict to $n$ such that the event (\ref{eqDeg}) has positive probability.

The generator matrix of the ldgm code is now precisely the $\vm\times n$ biadjacency matrix $A(\vG)$ of $\vG$, viewed as a matrix over $\FF_2$.
Thus, the rows of $A(\vG)$ correspond to the check nodes $a_1,\ldots,a_{\vm}$, the columns correspond to $x_1,\ldots,x_n$ and the $(i,j)$-entry equals one iff $a_i$ and $x_j$ are adjacent.
For a given message $\vx\in\FF_2^n$ the corresponding codeword reads $\vy=A(\vG)\vx\in\FF_2^{\vm}$.
The receiver on the other end of a noisy channel observes a scrambled version $\vy^*$ of $\vy$.
Specifically, $\vy^*$ is obtained from $\vy$ by flipping every bit with probability $\eta\in(0,1/2)$ independently.
To gauge the potential of the code, the key question is how much information about the original $\vx$ the receiver can possibly extract from $\vy^*$.
Naturally, the receiver also knows $\vG$.
Hence, we aim to work out the conditional mutual information
\begin{align*}
I(\vx,\vy^*\mid\vG)=\sum_{x\in\FF_2^n,y\in\FF_2^{\vm}}\pr\brk{\vx=x,\vy^*=y\mid\vG}\log\frac{\pr\brk{\vx=x,\vy^*=y\mid\vG}}{2^n\pr\brk{\vy^*=y\mid\vG}}.
\end{align*}

A precise prediction as to its asymptotical value was put forward on the basis of the physicists' cavity method.
As most such predictions, the formula comes as a variational problem that asks to optimise a functional called the Bethe free entropy over a space of probability measures.
Specifically, let $\fP_*([-1,1])$ be the space of all probability measures $\rho$ on the interval $[-1,1]$ with mean zero.
Let $(\vtheta_{i,\rho})_{i\geq1} \subseteq[-1,1]$ be a family of samples from $\rho$.
Further, let $(\vJ_i)_{i\geq1}$ be Rademacher variables, i.e., $\pr\brk{\vJ_i=1}=\pr\brk{\vJ_i=-1}=1/2$.
In addition, let $(\hat\vk_{i,j})_{i,j\geq1}$ be random variables with distribution
\begin{align}\label{eqhatk}
\pr\brk{\hat\vk_i=\ell}&=\frac{\ell\pr\brk{\vk=\ell}}{\Erw[\vk]}&(\ell\geq0).
\end{align}
All of these are independent. 
Finally, let $\Lambda(z)=z\log(z)$.
Then the Bethe free entropy reads
\begin{align*}
\cB_{\mathrm{ldgm}}(\rho,\eta)&=\Erw\brk{\frac{1}{2}\Lambda\bc{\sum_{\sigma\in\{0,1\}}\prod_{i=1}^{\vd}1+(-1)^\sigma\vJ_i(1-2\eta)\prod_{j=1}^{\hat\vk_i-1}\vtheta_{i,j}}-\frac{\Erw[\vd](\vk-1)}{\Erw[\vk]}\Lambda\bc{1+\vJ_1(1-2\eta)\prod_{j=1}^{\vk}\vtheta_{1,j}}}.
\end{align*}

\begin{theorem}\label{Thm_ldgm}
For any $\vd,\vk$ and for all $\eta\in(0,1)$ we have
\begin{align*}
{\lim_{n\to\infty}\frac{1}{n}I(\vx,\vy^*\mid\vG)}&=\bc{1+\frac{\Erw[\vd]}{\Erw[\vk]}}\ln(2)+\eta\ln(\eta)+(1-\eta)\ln(1-\eta)-\sup_{\pi\in\fP_*([-1,1])}\cB_{\mathrm{ldgm}}(\pi,\eta)&\mbox{in probability}.
\end{align*}
\end{theorem}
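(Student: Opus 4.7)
The plan is to deduce \Thm~\ref{Thm_ldgm} as a specialisation of the general mutual information formula for random factor graphs that will be stated in \Sec~\ref{Sec_main}. I would proceed in three steps; the bulk of the technical work lies in the general theorem invoked in Step~3, not in \Thm~\ref{Thm_ldgm} itself.

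\emph{Step 1: Bayes-optimal reformulation.} Under the sign substitutions $\sigma_j=(-1)^{x_j}$ and $J_i=(-1)^{y_i^*}$, the posterior $\Pr[\vx=\cdot\mid\vG,\vy^*]$ is the Gibbs measure on $\{\pm 1\}^n$ with uniform prior and check-node weights
\[
\psi_i(\vsigma)=\tfrac12\bc{1+(1-2\eta)J_i\prod_{j\in\partial a_i}\sigma_j},\qquad i=1,\ldots,\vm.
\]
Conditional on the planted $\vsigma$, $J_i$ equals $\prod_{j\in\partial a_i}\sigma_j$ with probability $1-\eta$, which is precisely the $\pm 1$ encoding of a binary symmetric channel of noise $\eta$. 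Hence $(\vG,\vy^*)$ is exactly a Bayes-optimal random factor graph of the kind addressed in \Sec~\ref{Sec_main}, with variable degrees $(\vd_i)_i$ and check degrees $(\vk_j)_j$ coupled through~\eqref{eqDeg}.

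\emph{Step 2: Reduction to a log-partition function.} Writing $Z(\vG,\vy^*)=\sum_{\vsigma}\prod_i\psi_i(\vsigma)$, the uniformity of the prior gives $\Pr[\vy^*=y\mid\vG]=2^{-n}Z(\vG,y)$, and the memoryless channel gives $H(\vy^*\mid\vx,\vG)=\vm\,h(\eta)$ where $h(\eta)=-\eta\log\eta-(1-\eta)\log(1-\eta)$. Consequently
\[
\frac{1}{n}I(\vx,\vy^*\mid\vG)=\log 2-\frac{\vm}{n}h(\eta)-\frac{1}{n}\Erw_{\vy^*\mid\vG}\brk{\log Z(\vG,\vy^*)},
\]
and since $\vm/n\to\Erw[\vd]/\Erw[\vk]$ in probability, the theorem reduces to identifying $\lim_n\frac{1}{n}\log Z(\vG,\vy^*)$ in probability.

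\emph{Step 3: Apply the general main theorem.} The factor family $\{\psi_i\}$ satisfies the hypotheses \SYM, \BAL, and \POS\ of the main theorem: symmetry holds under the joint flip $\vsigma\mapsto-\vsigma$, $J\mapsto-J$; balance comes for free from the Bayes-optimal Nishimori construction; and positivity holds because $\psi_i\in(0,1)$ for every $\eta\in(0,1)$ and every $\vsigma$. The main theorem then identifies $\lim_n\frac{1}{n}\log Z$ with the supremum of the generic Bethe free entropy. Evaluating that functional on mean-zero order parameters in $\fP_*([-1,1])$---the only relevant ones on account of the Rademacher symmetry of the $(\vsigma,J)$ pair---with the size-biased check distribution~\eqref{eqhatk}, produces exactly $\cB_{\mathrm{ldgm}}(\pi,\eta)$, and the additive constants assemble into the $(1+\Erw[\vd]/\Erw[\vk])\log 2+\eta\log\eta+(1-\eta)\log(1-\eta)$ prefactor of the theorem.

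\emph{Main obstacle.} The real work is contained in the general theorem itself, not in the deduction of \Thm~\ref{Thm_ldgm}. Two points are particularly delicate for LDGM. First, the rigid degree constraint~\eqref{eqDeg} destroys the half-edge independence underlying the \Erdos--\Renyi\ machinery of~\cite{CKPZ}, so a careful coupling between $\vG$ and a Poissonised counterpart is needed before the cavity-method estimates can be transferred. Second, upgrading an in-expectation limit of $\frac{1}{n}\log Z$ to the in-probability statement of the theorem requires a concentration argument that accommodates the fact that a single edge swap in $\vG$ may simultaneously perturb the degrees of up to four vertices; these Efron--Stein/Azuma-type estimates are developed inside the general framework of \Sec~\ref{Sec_main} so that the deduction of \Thm~\ref{Thm_ldgm} itself amounts to the verifications in Steps~1--3.
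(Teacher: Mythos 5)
Your overall route coincides with the paper's: \Thm~\ref{Thm_ldgm} is deduced by specialising \Thm~\ref{Thm_main} to the weight functions $\psi_{\eta,k,J}$, your Step~2 is the LDGM instance of the general reduction of the mutual information to $\Erw[\log Z]/n$ (\Prop~\ref{Prop_IZ}), and the final rewriting of the generic Bethe functional via $\vtheta=2\vmu(1)-1$ is exactly what the paper does. So the architecture is fine; the issue is in Step~3.

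The genuine gap is your verification of \POS. You justify it by noting that ``$\psi_i\in(0,1)$ for every $\eta\in(0,1)$,'' but strict positivity of the weights is part of \SYM, not \POS. The condition \POS\ is the inequality
\begin{align*}
\Erw\brk{\Lambda\bc{\sum_{\tau\in\Omega^{k}}\vpsi_k(\tau)\prod_{i=1}^{k}\vmu_{i,1,\rho}(\tau_i)}}
+(k-1)\Erw\brk{\Lambda\bc{\sum_{\tau\in\Omega^{k}}\vpsi_k(\tau)\prod_{i=1}^{k}\vmu_{i,1,\rho'}(\tau_i)}}
\geq\sum_{j=1}^{k}\Erw\brk{\Lambda\bc{\sum_{\tau\in\Omega^{k}}\vpsi_k(\tau)\vmu_{j,1,\rho}(\tau_j)\prod_{i\neq j}\vmu_{i,1,\rho'}(\tau_i)}}
\end{align*}
for \emph{arbitrary} $\rho,\rho'\in\PP_*(\Omega)$, and it is a substantive hypothesis: it fails for the assortative stochastic block model even though all weights there are strictly positive, so no argument based on positivity alone can establish it. For LDGM one must expand $\Lambda(1-x)=-x+\sum_{\ell\geq2}x^\ell/(\ell(\ell-1))$, apply Fubini to reduce the $\ell$-th term of the difference to $\Erw\brk{((1-2\eta)\vJ)^\ell}\bc{X_\ell^k-kX_\ell Y_\ell^{k-1}+(k-1)Y_\ell^k}$ with $X_\ell,Y_\ell$ the $\ell$-th moments of $2\vmu(1)-1$ under $\rho,\rho'$, and then observe that the odd-$\ell$ terms vanish by the symmetry of $\vJ$ while the even-$\ell$ terms are nonnegative because $X^k-kXY^{k-1}+(k-1)Y^k\geq0$ for $X,Y\geq0$. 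Relatedly, \BAL\ does not come ``from the Bayes-optimal Nishimori construction''; it is a property of the averaged weight function and holds here simply because $\Erw[\vJ]=0$ makes $\sigma\mapsto\Erw[\vpsi_k(\sigma)]$ constant. With these two verifications supplied, your deduction agrees with the paper's.
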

\noindent
\Thm~\ref{Thm_ldgm} completely solves a well known conjecture~\cite[Conjecture~1]{MontanariBounds} and significantly extends the results from~\cite{NorJan, CKPZ},
which required the restrictive assumption that the check degree $\vk$ be constant.

A possible objection to a result such as \Thm~\ref{Thm_ldgm} might be that the resulting formula appears exceedingly complicated as it leaves us with a potentially difficult variational problem.
Yet two points are to be made in defense.
First, by vindicating the precise formula predicted by the cavity method, the theorem and its proof show that this technique and the ideas behind it do indeed get to the bottom of the problem.
Second, since the formula involves a supremum, any $\pi\in\PP_*([-1,1])$ yields an upper bound on the mutual information.
Hence, the heuristic population dynamics algorithm deemed to produce good candidate maximisers and beloved of physicists, can be harnessed to get rigorous bounds in one direction.
Finally, in some cases it is possible to precisely identify the maximiser analytically~\cite{Cond,COEGHR}.

\subsection{The stochastic block model}\label{Sec_sbm}
An instructive model of graph clustering, the stochastic block model presumes that a random graph is created in two steps.
First each of the $n$ vertices $\{x_1,\ldots,x_n\}$ receives one of $q\geq2$ possible colours $\vsigma^*_{x_i}\in[q]$ uniformly and independently.
Then a sparse random graph is created where vertices with the same colour are either more likely to be connected by an edge (assortative case), or less likely (disassortative).
Different versions of this model have been proposed.
While in the simplest one edges are inserted independently, here we consider a model from~\cite{Mossel_2015} that produces a $d$-regular graph.
Hence, let $d\geq3$ be an integer and let $\G=\G(n,d)$ be a random $d$-regular graph.
Further, given a parameter $\beta>0$ let $\vG^*=\G^*(n,d,\vsigma^*)$ be a random graph drawn from the distribution
\begin{align}\label{eqsbm}
\pr\brk{\G^*=G\mid\vsigma^*}&\propto\exp\brk{-\beta\sum_{vw\in E(G)}\vecone\cbc{\vsigma^*_v=\vsigma^*_w}},
\end{align}
with the $\propto$-symbol hiding the normalisation required to obtain a probability distribution.
Thus, the parameter $\beta$ tunes the penalty that we impose on monochromatic edges by comparison to the null model $\G$.
At $\beta=0$ there is no such penalty and $\G^*$ and $\G$ are identical.
But even for positive $\beta$ the random graphs $\G,\G^*$ may still be indistinguishable and in effect recovering $\vsigma^*$ may be impossible.
Hence, a fundamental question is for what $q,d,\beta$ it is possible to discriminate between $\G,\G^*$.
Formally, we recall that the {\em Kullback-Leibler divergence} of $\G^*,\G$ is defined as
\begin{align*}
\KL{\G^*}{\G}=\sum_G\pr\brk{\G^*=G}\ln\frac{\pr\brk{\G^*=G}}{\pr\brk{\G=G}}.
\end{align*}
The Kullback-Leibler divergence is an information-theoretic potential that gauges the similarity of two random graph models.
In particular, if $\KL{\G^*}{\G}=\Omega(n)$, then $\G,\G^*$ can be told apart because natural observables will take vastly different values on the two models.

Whether $\KL{\G^*}{\G}=\Omega(n)$ depends on the value of the Bethe free entropy for the stochastic block model.
To be precise, let $\cP([q])$ be the set of all probability distributions $(\mu(1),\ldots,\mu(q))$ on $[q]$.
We identify $\cP([q])$ with the standard simplex in $\RR^q$.
Further, let $\PP_*([q])$ be the set of all probability measures $\pi$ on $\cP([q])$ such that $\int\mu(\sigma)\dd\pi(\mu)=1/q$ for every $\sigma\in[q]$.
In other words, the mean of $\pi$ is the barycenter of the simplex.
Let $(\vmu_{i,\pi})_{i\geq1}$ be a family of independent samples from $\pi$ and let
\begin{align*}
\cB_{\mathtt{sbm}}(\pi,\beta)&=\Erw\brk{\frac{\Lambda\bc{\sum_{\sigma=1}^{q}\prod_{i=1}^{d}1-(1-\eul^{-\beta})\vmu_{i,\pi}(\sigma)}}{q \bc{1-(1-\eul^{-\beta})/q}^{d}}
-\frac{d\Lambda\bc{1-(1-\eul^{-\beta})\sum_{\sigma=1}^q\vmu_{1,\pi}(\sigma)\vmu_{2,\pi}(\sigma)}}{2\bc{1-(1-\eul^{-\beta})/q}}}.
\end{align*} 

\begin{theorem}\label{Thm_sbm}
Let
\begin{align*}
\beta^*&=\inf\cbc{\beta>0:\sup_{\pi\in\PP_*([q])}\cB_{\mathtt{sbm}}(\pi,\beta)>\ln(q)+\frac{d}{2}\ln\bc{1-(1-\eul^{-\beta})/q}}.
\end{align*}
\begin{enumerate}[(i)]
\item If $\beta<\beta^*$, then $\lim_{n\to\infty}\frac1n\KL{\G^*}{\G}=0$.
\item If $\beta>\beta^*$, then $\lim_{n\to\infty}\frac1n\KL{\G^*}{\G}>0$.
\end{enumerate}
\end{theorem}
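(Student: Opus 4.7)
The plan is to reduce $\KL{\vG^*}{\vG}$ to the difference between the planted Potts free energy on $\vG^*$ and a simple annealed normaliser that a first-moment computation on the configuration model supplies. Setting $H_\sigma(G)=\sum_{vw\in E(G)}\vecone\{\sigma_v=\sigma_w\}$ and $Z_\beta(\sigma)=\Erw\exp[-\beta H_\sigma(\vG)]$, marginalising over $\vsigma^*$ in~(\ref{eqsbm}) gives
\begin{align*}
\frac{\pr[\vG^*=G]}{\pr[\vG=G]}=q^{-n}\sum_{\sigma\in[q]^n}\frac{\exp[-\beta H_\sigma(G)]}{Z_\beta(\sigma)}.
\end{align*}
Since $\vsigma^*$ is uniform on $[q]^n$, its colour profile concentrates at the barycenter $(1/q,\ldots,1/q)$, and a McDiarmid-type bound on the configuration model yields $\ln Z_\beta(\vsigma^*)=\ln\bar Z_\beta+o(n)$ \whp, where $\bar Z_\beta$ is the common value for perfectly balanced colourings. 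A direct first-moment computation gives $\frac{1}{n}\ln\bar Z_\beta\to\frac{d}{2}\ln\bc{1-(1-\eul^{-\beta})/q}$, because each of the $nd/2$ edges is monochromatic under a balanced colouring with probability $1/q+O(1/n)$.

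Substituting back, and writing $\Xi_\beta(G)=\sum_{\sigma\in[q]^n}\exp[-\beta H_\sigma(G)]$, we obtain
\begin{align*}
\frac{1}{n}\KL{\vG^*}{\vG}=\frac{1}{n}\Erw\brk{\ln\Xi_\beta(\vG^*)}-\ln q-\frac{d}{2}\ln\bc{1-(1-\eul^{-\beta})/q}+o(1).
\end{align*}
The partition function $\Xi_\beta$ is precisely that of the $q$-state Potts model on the random $d$-regular graph, which fits the framework of \Sec~\ref{Sec_main} with variable degree $d$, factor degree $2$ and factor weight $\exp[-\beta\vecone\{\sigma_v=\sigma_w\}]$. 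The general theorem therefore identifies $\frac{1}{n}\Erw[\ln\Xi_\beta(\vG^*)]\to\sup_{\pi\in\PP_*([q])}\cB_{\mathtt{sbm}}(\pi,\beta)$. A short calculation at the Dirac mass $\pi_0=\atom_{(1/q,\ldots,1/q)}$ yields $\cB_{\mathtt{sbm}}(\pi_0,\beta)=\ln q+\frac{d}{2}\ln\bc{1-(1-\eul^{-\beta})/q}$, so the right-hand side of the preceding display is always non-negative and equals zero exactly at $\pi_0$. By the very definition of $\beta^*$, the limit is therefore zero for $\beta<\beta^*$ and strictly positive for $\beta>\beta^*$, proving~(i) and~(ii) respectively.

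The main obstacle will be the application of the general theorem of \Sec~\ref{Sec_main} to the planted Potts free energy on $\vG^*$ rather than on the null model $\vG$; this transfer rests on the Nishimori symmetry of the Boltzmann weight~(\ref{eqsbm}) and must be extracted carefully to avoid circularity, since in the regime $\beta<\beta^*$ the assertion $\lim\frac{1}{n}\KL{\vG^*}{\vG}=0$ is itself a contiguity statement between $\vG^*$ and $\vG$. A secondary technical nuisance is the concentration of $Z_\beta(\vsigma^*)$ uniformly across the typical fluctuations of the colour profile of $\vsigma^*$: the configuration model couples edges through the degree constraints, so one needs an Azuma-type bound adapted to the switching chain rather than to independent edges.
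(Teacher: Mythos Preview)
Your approach is essentially the paper's own: reduce $\frac1n\KL{\vG^*}{\vG}$ to $\phi^*-\phi_{\mathrm a}$, identify $\phi^*=\sup_\pi\cB_{\mathtt{sbm}}(\pi,\beta)$ via the main result, compute $\phi_{\mathrm a}$ by a first-moment calculation, and read off the threshold. The paper packages this as \Thm~\ref{main_kl_divergence} (for the joint law $(\vsigma^*,\vG^*)$ against $(\vsigma_{\vG},\vG)$) and then specialises; the passage from the joint to the marginal KL is the $o(1)$ term $i^*_{\mathrm{err}}$ of \Sec~\ref{mip_negligible}.

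Two remarks on your self-identified obstacles. First, there is no ``transfer'' issue and no circularity: \Prop s~\ref{Prop_ASS} and~\ref{Prop_int} (equivalently \Thm~\ref{Thm_main} via \Prop~\ref{Prop_IZ}) are proved directly for $\vG^*$ by Aizenman--Sims--Starr and interpolation, without ever assuming contiguity of $\vG^*$ and $\vG$. Second, the substitution $Z_\beta(\sigma)\to\bar Z_\beta$ is not justified by concentration of $Z_\beta(\vsigma^*)$ alone, because the sum in your ratio ranges over \emph{all} $\sigma$, and {\bf BAL} only gives $Z_\beta(\sigma)\le\bar Z_\beta$, the wrong direction for a one-line replacement. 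What is actually needed is that the posterior $\vsigma^*\mid\vG^*$---whose density is exactly $\exp[-\beta H_\sigma(G)]/Z_\beta(\sigma)$---concentrates on nearly balanced colourings; this is the mutual contiguity of $\vsigma^*$ and $\hat\vsigma$ supplied by \Prop~\ref{Prop_Nishi} (proved in \Sec~\ref{mc} via local limit theorems, not McDiarmid). With that input your display follows, and the rest of your argument goes through verbatim.
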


\Thm~\ref{Thm_sbm} easily implies that for $\beta>\beta^*$ it is information-theoretically possible to recover a non-trivial approximation to $\vsigma^*$ from $\G^*$.
In other words, there exists an exponential time algorithm that likely outputs a colouring $\tau$ of the vertices that has a significantly greater overlap with the ground truth $\vsigma^*$ than a random guess.
An open question is whether for $\beta>\beta^*$ this problem can even be solved by a polynomial time algorithm.
The going conjecture is that in general the answer is `no' and that efficient recoverability kicks in only at a second threshold $\beta^{**}>\beta^*$ for many interesting choices of $q,d$~\cite{Decelle}.

\subsection{The mixed $k$-spin model}\label{Sec_mixed_k_spin}
Not only do the main results of this paper facilitate rigorous proofs of physics predictions for problems in computer science,
but also, conversely, do we obtain new theorems on problems of keen interest in statistical physics.
For example, the mixed $\vk$-spin model is an important spin glass model~\cite{PanchenkoBook}; its purpose is to describe the magnetic interactions in metallic alloys.
To define the model let $\vk\geq2$ be an integer-valued random variable such that $\Erw[\vk^{2+\eps}]<\infty$ for some $\eps>0$ and $\Pr\brk{\vk=2}>0$.
Let $(\vk_i)_{i\geq1}$ be a sequence of independent copies of $\vk$.
Moreover, let $d>0$ and let $\vH=\vH_{\vk}(n,\vm)$ be a (non-uniform) random hypergraph on $V_n=\{x_1,\ldots,x_n\}$ with $\vm=\Po(dn/\Erw[\vk])$ independent hyperedges $a_1,\ldots,a_{\vm}$ such that $a_i$ comprises $\vk_i$ vertices, drawn uniformly without replacement.
Thus, in the special case that $\vk$ is constant we obtain the classical binomial random hypergraph.
To turn this random hypergraph into a spin glass model we draw for each of its edges $a_i$ an independent standard Gaussian $\vJ_i$.
Additionally, let $\beta>0$ be a parameter, commonly coined the inverse temperature.
Then the {\em Boltzmann distribution} of the model is the probability distribution on $\{\pm1\}^{V_n}$ defined by
\begin{align*}
\mu_{\vH,\vJ,\beta}(\sigma)&=\frac{\exp\bc{\beta\sum_{i=1}^{\vm}\vJ_i\prod_{x\in a_i}\sigma_x}}{Z(\vH,\vJ,\beta)}
\quad(\sigma\in\{\pm1\}^{V_n}),&\mbox{ where }&&
Z(\vH,\vJ,\beta)&=\sum_{\tau\in\{\pm1\}^{V_n}}\exp\bc{\beta\sum_{i=1}^{\vm}\vJ_i\prod_{x\in a_i}\tau_x}.
\end{align*}
The normalising term $Z(\vH,\vJ,\beta)$ is known as the {\em partition function}.

A key question is whether for given $d,\beta,\vk$ there occur long-range correlations between the magnetic `spins' observed at $x_1,\ldots,x_n$.
Formally, let $\vsigma\in\{\pm1\}^{V_n}$ signify a sample from the Boltzmann distribution.
Then we say that {\em long-range correlations are absent} if
\begin{align}\label{eqLRC}
\lim_{n\to\infty}\frac1{n^2}\sum_{x, y \in V_n} \Erw\abs{\mu_{\vH,\vJ,\beta}(\{\vsigma_{x}=\vsigma_{y}=1\})-\mu_{\vH,\vJ,\beta}(\{\vsigma_{x}=1\})\mu_{\vH,\vJ,\beta}(\{\vsigma_{y}=1\})}=0.
\end{align}
In words, \eqref{eqLRC} expresses that for most pairs $x,y$ of vertices the spins $\vsigma_{x},\vsigma_{y}$ are essentially independent.
If \eqref{eqLRC} is violated, we say that long-range correlations are present.

According to physics predictions for a given $\beta>0$ long-range correlations emerge at a critical value $d_{\beta,\vk}$ that can be determined in terms of the Bethe free entropy~\cite{pnas,MM}.
The methods developed in this paper enable us to corroborate this formula rigorously.
Specifically, let $\PP_*([-1,1])$ be the space of all probability measures on $[-1,1]$ with mean zero.
Given $\pi\in\PP_*([-1,1])$ let $(\vmu_{\pi,i,j})_{i,j\geq1}$ be a family of independent samples from $\pi$.
Additionally, let $(\hat\vk_i)_{i\geq1}$ be a family of independent copies of $\hat\vk$ from \eqref{eqhatk} and let $\vd=\Po(d)$.
Then the Bethe free entropy of the $\vk$-spin model reads
\begin{align*}
\cB_{\vk-\mathrm{spin}}(\pi)&=
\frac 12 \Erw \brk{\Lambda \bc{\sum_{\sigma_1 \in \cbc{\pm 1}} \prod_{i=1}^{\vd} \bc{1+ \sum_{\sigma_2,\ldots,\sigma_{\hat\vk_i}\in \cbc{\pm 1}} 
		\tanh \bc{\beta \vJ_j \prod_{j \in [\hat \vk_i]} \sigma_j} \prod_{j=2}^{\hat \vk_i}\frac{1+\sigma_j\vmu_{\pi,i,j}}{2} }}} \\
& \qquad - \frac{d}{\Erw[\vk]}
	\Erw \brk{(\vk-1)\Lambda \bc{1+ \sum_{\sigma \in \cbc{\pm 1}^{\vk}} \tanh \bc{\beta \vJ_1 \prod_{i=1}^{\vk} \sigma_j} \prod_{i=1}^{\vk}\frac{1+\sigma_i\vmu_{\pi,1,i}}2}}.
\end{align*}
%\mpa{Amin, can you take a look at the first expression, Something seems off in the first contribution. Definitely an extra pair of brackets, i.e.~$\prod_i(1+$. Further, if not more than that, it's not really intuitive to fix $\sigma_1$, sum over the remaining clones but take the product from $1$ to $\hat{\vk}_i-1$. And I simply don't know if it's required here to choose the clone randomly or if we can really fix it to $1$ wlog, see below.}
\begin{theorem} \label{thm_mixed_k}
Let $d_{\beta,\vk}= \inf \cbc{d >0: \sup_{\pi \in \PP_*([-1,1])} \cB_{\vk-\mathrm{spin}}(\pi) > \log 2}$.
\begin{enumerate}[(i)]
\item Long-range correlations are absent for $d<d_{\beta,\vk}$.
\item For any $\eps>0$ there exists $d_{\beta,\vk}<d<d_{\beta,\vk}+\eps$ where long-range correlations are present. 
\end{enumerate}
\end{theorem}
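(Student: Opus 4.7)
\textbf{Proof proposal for \Thm~\ref{thm_mixed_k}.}
The plan is to recognise the mixed $\vk$-spin model as an instance of the general random factor graph framework of \Sec~\ref{Sec_main} and then to read off both statements from the free energy formula together with a standard perturbation argument. Concretely, take the spin alphabet $\{\pm1\}$ with the uniform prior and the random weight function $\psi_{\vJ}(\sigma_1,\ldots,\sigma_k)=\exp(\beta\vJ\prod_{i=1}^k\sigma_i)$ with a standard Gaussian $\vJ$ drawn independently per hyperedge. Because $\vJ\stackrel{d}{=}-\vJ$, one checks that $\Erw[\psi_{\vJ}(\sigma_1,\ldots,\sigma_k)]$ is independent of the argument, so the model is balanced and the Nishimori symmetry holds, which puts it within scope of the main theorem. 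Granted the main theorem, this yields $\lim_{n\to\infty}\frac1n\Erw\log Z(\vH,\vJ,\beta)=\sup_{\pi\in\PP_*([-1,1])}\cB_{\vk-\mathrm{spin}}(\pi)$; plugging in $\pi=\delta_0$ and using that $\tanh$ is odd gives $\cB_{\vk-\mathrm{spin}}(\delta_0)=\log 2$, so the paramagnetic value $\log 2$ is always a lower bound for the supremum and $d_{\beta,\vk}$ is precisely the density at which it ceases to be tight.

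For part (i), in the regime $d<d_{\beta,\vk}$ the supremum equals $\log 2$. The quiet-planting reformulation used throughout the paper identifies this with contiguity of the Boltzmann-planted model and the null model $\vH$, i.e.\ the Kullback-Leibler divergence between the two is $o(n)$. A now-standard dictionary (developed for the block-model style applications in~\cite{CKPZ}) translates contiguity into the non-reconstruction statement that for almost every pair $x,y$ the conditional law of $\vsigma_x$ given $\vsigma_y$ agrees with its marginal up to $o(1)$ in total variation, which is exactly \eqref{eqLRC}.

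For part (ii), the plan is a small-field perturbation argument. Plant spins on an $\eps$-fraction of the vertices according to the Boltzmann law and let $\eps\to 0$; differentiating the resulting mutual information in $d$ produces via an I-MMSE type identity the averaged overlap square $n^{-2}\sum_{x,y}\Erw[\mu_{\vH,\vJ,\beta}(\vsigma_x=\vsigma_y=1)-\mu_{\vH,\vJ,\beta}(\vsigma_x=1)\mu_{\vH,\vJ,\beta}(\vsigma_y=1)]^2$. Since $d\mapsto\sup_\pi\cB_{\vk-\mathrm{spin}}(\pi)$ is convex and exceeds $\log 2$ for $d>d_{\beta,\vk}$, it must be strictly increasing at a dense set of points $d\in(d_{\beta,\vk},d_{\beta,\vk}+\eps)$, forcing the corresponding overlap moment to be bounded away from zero there; this is incompatible with \eqref{eqLRC}.

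The main obstacle I expect is twofold and sits in the embedding step. First, the general framework is phrased for bounded weights, whereas $\vJ$ is unbounded Gaussian, so one has to truncate $\vJ$ at a slowly growing threshold and control the truncation error using Lipschitz continuity of $\frac1n\log Z$ in the disorder together with Gaussian concentration. Second, the hypergraph $\vH_{\vk}(n,\vm)$ has a genuinely non-constant edge size distribution, which means the coupling to the canonical configuration-model construction underlying the general theorem needs a delicate two-level Poissonisation, first over the number $\vm$ of hyperedges and then over the sequence $(\vk_i)_{i\leq\vm}$ of their sizes; this is where the $\Erw[\vk^{2+\eps}]<\infty$ and $\pr[\vk=2]>0$ hypotheses are expected to enter, respectively to control large hyperedges and to ensure that no pathological degeneracy blocks the perturbation argument in part (ii).
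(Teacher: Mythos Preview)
Your proposal misidentifies which model the main theorem computes. The partition function $Z(\vH,\vJ,\beta)$ in \Sec~\ref{Sec_mixed_k_spin} is that of the \emph{null} model $\G$---there is no planted assignment---whereas \Thm~\ref{Thm_main} (via \Prop~\ref{Prop_IZ}) yields the free energy of the teacher--student model $\G^*$. These differ: by Fact~\ref{klcondp_prel_qfed_inequ} and \Thm~\ref{main_qfed}, $\frac1n\Erw\log Z(\G)\le\phi_{\mathrm a}=\log 2$ always, with strict inequality exactly when $\sup_\pi\cB_{\vk-\mathrm{spin}}(\pi)>\log 2$, so your asserted limit $\frac1n\Erw\log Z(\vH,\vJ,\beta)\to\sup_\pi\cB_{\vk-\mathrm{spin}}(\pi)$ is false on the entire condensed side. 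The contiguity/``standard dictionary'' step for part~(i) inherits this confusion: contiguity of $\G^*$ and $\G$ addresses distinguishability of the two ensembles, not pairwise correlations under $\mu_{\vH,\vJ,\beta}$, and there is no ready-made translation to \eqref{eqLRC}.

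The paper's route stays entirely on the null model and goes through differentiation in $d$. Adding one random hyperedge gives
\[
\frac{\partial}{\partial d}\,\frac1n\Erw\log Z(\G)=\frac{1}{\Erw[\vk]}\,\Erw\brk{\log\Bigl(1+\tanh(\beta\vJ)\,\bck{\textstyle\prod_{i=1}^{\vk}\sigma_{\vy_i}}_{\G}\Bigr)},
\]
while the annealed derivative vanishes by the symmetry of $\vJ$. Jensen gives $\partial_d(\text{quenched})\le\partial_d(\text{annealed})$ with equality iff $\langle\prod_i\sigma_{\vy_i}\rangle=0$ a.s.\ (since $\tanh(\beta\vJ)$ is non-degenerate). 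This is where $\pr[\vk=2]>0$ actually enters: on the $\vk=2$ event, equality forces $\langle\sigma_{\vy_1}\sigma_{\vy_2}\rangle=0$, which together with the global $\pm1$ symmetry of the Hamiltonian is precisely \eqref{eqLRC}. Conversely, pairwise decorrelation upgrades to $k$-wise via \Lem~\ref{lem_pairwise_symmetry}, killing the Jensen gap for every $k$. Integrating in $d$ yields the biconditional ``LRC absent $\Leftrightarrow$ quenched $=$ annealed'', and both (i) and (ii) then follow from \Thm~\ref{main_qfed}; part~(ii) is simply the contrapositive, with no need for an I-MMSE identity or for convexity of $d\mapsto\sup_\pi\cB$. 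Your instinct about truncating the Gaussian is correct and is carried out in \Lem~\ref{lem_mixed_conditions}.
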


\noindent
Thus, the point $d_{\beta,\vk}$, characterised by the Bethe variational principle, marks the onset of  complex magnetic interactions in the mixed $\vk$-spin model.
This critical value is known as the {\em replica symmetry breaking} phase transition in physics jargon.
As a further application of the main results we can pinpoint the so-called condensation phase transition of the Potts antiferromagnet on random $d$-regular graphs, another problem of interest in mathematical physics.
The details can be found in \Sec~\ref{Sec_applications}.

\section{The mutual information of random factor graphs}\label{Sec_main}
\noindent
The theorems quoted in \Sec~\ref{Sec_intro} are easy consequences of results on general random factor graph models.
These more general theorems, which we present next, constitute the main results of the paper.

\subsection{Random factor graph models}\label{Sec_rfg}
Remarkably many classical problems from combinatorics, statistics and physics can be expressed conveniently in the language of factor graph models \cite{MM, Pearl_2014, LF}.
A factor graph $G$ is a bipartite graph whose vertex classes are variable nodes $V(G)$ and factor nodes $F(G)$.
The former represent the variables of the combinatorial problem in question, such as the individual bits of a codeword.
Generally we assume that these variables range over a domain $\Omega\neq\emptyset$ of size $q=|\Omega| \geq 2$.
Moreover, the factor nodes encode the interactions between the variables, such as the linear relations imposed by the check matrix of a code.
Each factor node $a\in F(G)$ comes with a function $\psi_a:\Omega^{\partial a}\to(0,\infty)$ that assigns a positive weight to value combinations of the adjacent variables $\partial a$.
The factor graph gives rise to a probability distribution
\begin{align}
\mu_G(\sigma)=\frac{\psi_G(\sigma)}{Z_G}\mbox{, where }
\psi_G(\sigma)=\prod_{a\in F(G)}\psi_a(\sigma_{\partial a})\mbox{ and }
Z_G=\sum_{\tau\in\Omega^{V(G)}}\psi_G(\tau) \qquad(\sigma\in\Omega^{V(G)}). \label{eqZ}
\end{align}

To describe problems such as the ones from \Sec~\ref{Sec_intro} we introduce models where the factor graph itself is random.
Specifically, let $\vd,\vk\geq0$ be integer-valued random variables and let $(\vd_i)_{i\geq1}$, $(\vk_i)_{i\geq1}$ be independent copies of $\vd,\vk$.
Further, for each $k$ in the support of $\vk$ let $\Psi_k$ be a finite set of $k$-ary functions $\psi:\Omega^{k}\to(0,\infty)$.
Let $P_k$ be a probability distribution on $\Psi_k$ and let us write $\vpsi_k$ for a sample from $P_k$.
Further, let $\vpsi$ be a random variable distributed as $\vpsi_{\vk}$, let $P$ be the distribution of $\vpsi_{\vk}$ and let $k_\psi$ denote the arity of $\psi$.

Now, to construct a factor graph let $V_n=\{x_1,\ldots,x_n\}$ be a set of variable nodes and let $F_{\vm}=\{a_1,\ldots,a_{\vm}\}$ be a set of $\vm\disteq\Po(n\Erw[\vd]/\Erw[\vk])$ factor nodes.
We obtain the random factor graph $\vG$ as follows.
\begin{description}
\item[G1] given the event $\sum_{i=1}^{n}\vd_i=\sum_{i=1}^{\vm}\vk_i$, choose a bipartite graph on variable and factor nodes such that every $x_i$ has degree $\vd_i$ and every $a_j$ has degree $\vk_j$ uniformly at random.
\item[G2] choose for every factor node $a_i$ a weight function $\psi_{a_i}$ from the distribution $\vpsi_{\vk_i}$. 
\end{description}
In the language of inference problems the random factor graph $\G$ is going to provide a null model because the weight functions in \textbf{G2} are independent of the graph structure from \textbf{G1}.
For instance, in the context of the stochastic block model from \Sec~\ref{Sec_sbm}, this model plays the role of the purely random graph without a particular underlying colouring.

\subsection{The teacher-student scheme}\label{Sec_teacher}
The teacher-student scheme organically turns the null model into an inference problem.
A helpful metaphor might be to imagine a teacher who attempts to convey a ground truth $\vsigma^*$ to a student by presenting examples.
The ground truth itself is a random vector chosen uniformly from the space $\Omega^{V_n}$.
The set of examples corresponds to a factor graph $\G^*$.

To be precise, let $\fD$ be the $\sigma$-algebra generated by the degrees and the total number of factor nodes of the null model $\G$.
Then the factor graph $\G^*$ is chosen from the distribution
\begin{align}\label{eqG*}
\pr\brk{\G^*=G\mid\fD,\vsigma^*}=\frac{\pr\brk{\G=G\mid\fD}\psi_G(\vsigma^*)}{\Erw[\psi_{\vG}(\vsigma^*)\mid\fD,\vsigma^*]}.
\end{align}
Hence, we reweigh the  null model {\bf G1}--{\bf G2} according to the ground truth $\vsigma^*$, rewarding graphs under which $\vsigma^*$ receives a higher weight.
In the case of the stochastic block model, $\G^*$ matches the reweighing~\eqref{eqsbm}  that prefers bichromatic edges. 
The obvious question is how much of an imprint $\vsigma^*$ leaves on the resulting factor graph $\G^*$?
Before we answer this question in general let us illustrate how the examples from \Sec~\ref{Sec_intro} fit into the general framework.

\begin{example}[ldgm codes]\label{Ex_ldgm}
Let $\Omega=\{+1,-1\}$ with $+1=(-1)^0$ representing $0\in\FF_2$ and $-1$ representing $1\in\FF_2$.
For every degree $k\geq1$ there are two $k$-ary weight functions $\psi_{\eta,k,\pm1}$ defined by
\begin{align*}
\psi_{\eta,k,J}(\sigma)=1-(1-2\eta)J\prod_{i=1}^{k}\sigma_i&&(\sigma\in\Omega^k).
\end{align*}
The probability distribution $P_k$ is defined by $P(\psi_{\eta,k,J})=1/2$.
With this setup the bipartite graph structure of the null model $\G$ coincides with the bipartite graph introduced in \Sec~\ref{Sec_LDGM}.
Moreover, the $\pm1$-labels of the weight functions (i.e., value of $J$ such that $\psi_{a_i}=\psi_{\eta,\vk_i,J}$) represent the entries of the vector $\vy^*$.
Thus, while in the null model $\G$ these vector entries are purely random, in the reweighted model $\G^*$ the labels are distributed precisely as the entries of the vector $\vy^*$ from the ldgm model.
\end{example}

\begin{example}[stochastic block model]\label{Ex_sbm}
Let $\Omega=[q]$ be a set of $q$ colours.
We introduce a single binary weight function $\psi_{\beta,q}(\sigma_1,\sigma_2)=\exp(-\beta\vecone\{\sigma_1=\sigma_2\})$ and we let $\vd$ be the constant random variable $d$.
With this weight function the construction \eqref{eqG*} coincides with the definition \eqref{eqsbm} of the stochastic block model.  
\end{example}

The main theorem is going to provide a formula for the mutual information of $\G^*$ and the ground truth $\vsigma^*$, provided that the distribution $P$ on weight functions satisfies a number of easy-to-check conditions.
To state these conditions let us denote by $\cP(\Omega)$ the set of all probability distributions on $\Omega$, endowed with the topology inherited from Euclidean space.
Moreover, let $\PP_*(\Omega)$
signify the space of all probability measures $\pi$ on $\cP(\Omega)$ such that $\int_{\cP(\Omega)}\mu(\omega)\dd\pi(\mu)=1/q$ for all $\omega\in\Omega$.
Finally, for a given $\pi\in\PP_*(\Omega)$ let $(\vmu_{i,j,\pi})_{i,j\geq1}$ be independent samples from $\pi$ and recall $\Lambda(x)=x\log x$.
The assumptions read as follows.
\begin{description}
\item[DEG] there exists $\eps>0$ such that $\Erw[\vd^{2+\eps}],\Erw[\vk^{2+\eps}]<\infty$. 
\item[SYM] there exist reals $\eps,\xi>0$ such that for all $k\in\supp\vk$, $\psi\in\Psi_k$, $j\in[k]$, $\omega\in\Omega$ we have
\begin{align*}
\sum_{\sigma\in\Omega^{k}}\vecone\cbc{\sigma_j=\omega}\psi(\sigma)&=q^{k-1}\xi, \hspace{2 cm} \eps < \psi(\sigma)<1/\eps\qquad{(\sigma\in\Omega^{k})}.
\end{align*}
\item[BAL] for every $k\in\supp\vk$ the function $\mu\in\cP(\Omega)\mapsto\sum_{\sigma\in\Omega^k}\Erw\brk{\vpsi_k(\sigma)}\prod_{i=1}^{k}\mu(\sigma_i)$ is concave and attains its maximum at the uniform distribution on $\Omega$.
\item[POS] for any two probability distributions $\pi,\pi'\in\PP_*(\Omega)$ and any $k\in\supp\vk$ we have
\begin{align*}
\Erw&\brk{\Lambda\bc{\sum_{\tau\in\Omega^{k}}\vpsi_k(\tau)\prod_{i=1}^{k}\vmu_{i,1,\rho}(\tau_i)}}
+(k-1)\Erw\brk{\Lambda\bc{\sum_{\tau\in\Omega^{k}}\vpsi_k(\tau)\prod_{i=1}^{k}\vmu_{i,1,\rho'}(\tau_i)}}\\
&\geq\sum_{j=1}^{k}\Erw\brk{\Lambda\bc{\sum_{\tau\in\Omega^{k}}\vpsi_k(\tau)\vmu_{j,1,\rho}(\tau_j)\prod_{i\neq j}\vmu_{i,1,\rho'}(\tau_i)}}.
\end{align*}
\end{description}

The first assumption {\bf DEG} ensures that the factor graphs are
`sparse' or, formally, locally finite.
Yet {\bf DEG} allows for very general degree distributions, including Poisson and power law distributions.
Moreover, conditions {\bf SYM} and {\bf BAL} are symmetry conditions.
Roughly speaking, they provide that all the values $\omega\in\Omega$ are on the same footing, i.e., there is no semantic preference for any value.
Finally condition {\bf POS} can be viewed as a convexity requirement.
This assumption is needed for the technical reason of facilitating the interpolation method, a proof technique that we borrow from mathematical physics.
The conditions are easily seen to be satisfied in many models of interest including, of course, the stochastic block model and ldgm codes; see \Sec~\ref{Sec_applications}.
Crucially, the assumptions can be checked solely in terms of the weight functions; no random graphs considerations are required.
\footnote{We point out that {\bf POS} fails to hold in the case of the {\em assortative} stochastic block model.}

\subsection{The mutual information}\label{Sec_mutual_inf}
The main result of the paper vindicates the physicists' hunch that the mutual information between the teacher's ground truth $\vsigma^*$ and the data $\vG^*$ presented to the student is determined by the Bethe free entropy.
To state the result we introduce the following generic version of the Bethe functional.
Let $(\vpsi_{k,i})_{k,i}$ be a family of independent random weight functions such that $\vpsi_{k,i}$ is distributed as $\vpsi_k$.
Further, let $(\vh_{k,i})_{k,i}$ with $\vh_{k,i}\in[k]$ be a family of independent uniformly distributed indices.
Recalling that $(\hat\vk_i)_{i\geq1}$ are independent copies of $\hat\vk$ from \eqref{eqhatk}, we define
\begin{align}\label{eqBFE}
\cB(\pi)&=\frac{1}{q}\Erw\brk{\xi^{-\vd}\Lambda\bc{\sum_{\sigma\in\Omega}\prod_{i=1}^{\vd} \sum_{\tau\in\Omega^{\hat\vk_i}}
\vecone\cbc{\tau_{\vh_{\hat{\vk}_i,i}}=\sigma}\vpsi_{\hat\vk_i, i}(\tau) \prod_{j\in[\hat{\vk}_i]\setminus\{\vh_{\hat{\vk}_i,i}\}}\vmu_{i,j,\pi}(\tau_j)}}\\
&\qquad-\frac{\Erw[\vd]}{\xi\Erw[\vk]}\Erw\brk{(\vk-1)\Lambda\bc{\sum_{\tau\in\Omega^{\vk}}\vpsi_{\vk}(\tau)\prod_{j=1}^{\vk}\vmu_{1,j,\pi}(\tau_j)}}.\nonumber
\end{align}
%\mpa{Amin, can you take a look at the definition of $\vh_i$ and the inclusion in the expression?}
The following theorem expresses the mutual information of $\G^*$ and $\vsigma^*$ given the degrees and the total number of factor nodes as the variational problem of maximising the Bethe functional.

\begin{theorem}\label{Thm_main}
For any random factor graph model that satisfies the conditions {\bf DEG}, {\bf SYM}, {\bf BAL} and {\bf POS},
\begin{align}\label{eqThm_main}
\lim_{n\to\infty}\frac{1}{n}I(\vsigma^*,\G^*\mid\fD)&=
\ln q+\frac{\Erw[\vd]}{\xi\Erw[\vk]} \Erw\brk{q^{-k_{\vpsi}}\sum_{\tau\in\Omega^{k_{\vpsi}}}\Lambda(\vpsi(\tau))}
-\sup_{\pi\in\PP_*(\Omega)}\cB(\pi)&&\mbox{in probability}.
\end{align}
\end{theorem}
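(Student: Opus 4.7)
The plan is to convert the mutual information into a difference of free entropies and then establish matching upper and lower bounds for the limiting free entropy via the interpolation method and the Aizenman--Sims--Starr scheme, adapted to the fixed-degree-sequence setting.

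First I would use Bayes' rule together with \eqref{eqG*} to rewrite
\begin{align*}
I(\vsigma^*,\G^*\mid\fD) = \Erw\log\frac{\psi_{\G^*}(\vsigma^*)}{\Erw[\psi_{\vG}(\vsigma^*)\mid\fD,\vsigma^*]}
= \Erw\log\psi_{\G^*}(\vsigma^*) - \Erw\log\Erw[\psi_{\vG}(\vsigma^*)\mid\fD,\vsigma^*].
\end{align*}
Using \textbf{SYM} the second term is deterministic up to vanishing corrections and evaluates to the $\Erw[q^{-k_{\vpsi}}\sum_{\tau}\Lambda(\vpsi(\tau))]$ contribution after multiplying out, while the first term, by the Nishimori identity (which holds because $\vsigma^*$ has the uniform prior and $\G^*$ is reweighted by $\psi_G(\vsigma^*)$), equals $\Erw\log Z_{\G^*} - \Erw\log Z_{\vG} + $ explicit constants. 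Thus the problem reduces to showing
\begin{align*}
\frac1n \Erw\log Z_{\G^*} \to \log q + \sup_{\pi\in\PP_*(\Omega)}\cB(\pi).
\end{align*}

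For the upper bound I would run the Franz--Leone--Guerra interpolation. Fix a candidate $\pi\in\PP_*(\Omega)$ and construct a one-parameter family $\G^*(t)$, $t\in[0,1]$, interpolating between $\G^*$ at $t=1$ and a decoupled ``cavity'' model at $t=0$ whose log-partition function is exactly $n\log q + n\cB(\pi)$ up to lower order. Differentiating in $t$ and invoking \textbf{POS} shows $\frac{d}{dt}\frac1n\Erw\log Z_{\G^*(t)}\geq 0$, so the free entropy is bounded above by $\log q + \cB(\pi)$ for every $\pi$, hence by the supremum. The complication versus the Erd\H os--R\'enyi setting handled in~\cite{CKPZ} is that the interpolation must preserve the degree sequence; I would resolve this by interpolating at the level of the configuration model paired with a coupling to a Poisson-edge null model, showing that conditioning on~\eqref{eqDeg} changes $\Erw\log Z/n$ only by $o(1)$ using \textbf{DEG} and a local central-limit-type estimate.

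For the lower bound I would use the Aizenman--Sims--Starr cavity scheme: express the limiting free entropy as $\lim_n \Erw[\log Z_{\G^*_{n+1}} - \log Z_{\G^*_n}]$ and analyse this increment by explicitly adding a fresh variable of degree $\vd$ together with the fractional bit of a factor node needed to keep degrees consistent. The limiting value of this increment is computed in terms of the empirical law of single-bond messages $\vmu_{i,j,\pi}$, which by the convergence theory of local weak limits and \textbf{BAL} picks out some measure $\pi_*\in\PP_*(\Omega)$, yielding precisely $\cB(\pi_*)\leq \sup_\pi\cB(\pi)$. The main obstacle will be justifying this cavity computation rigorously in the fixed-degree model: unlike in the Poisson case, inserting a variable of prescribed degree breaks the degree constraint and the natural local modifications (swapping half-edges, rerouting hyperedges) are not independent of the rest of the graph. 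I expect to handle this by the coupling approach outlined in Section~\ref{Sec_strategy}, pairing $\G^*_n$ with $\G^*_{n+1}$ through a short sequence of edge-swap operations whose effect on $\log Z$ is controlled by \textbf{SYM} (bounding message distortions uniformly) and by an $L^2$ concentration estimate for $\log Z_{\G^*}$ obtained via the small-subgraph conditioning method, which is the other place where the stronger moment assumption in \textbf{DEG} is used.
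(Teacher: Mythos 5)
Your overall architecture matches the paper's: reduce the mutual information to $\Erw[\log Z(\G^*)]$, then sandwich the free entropy between an interpolation bound and an Aizenman--Sims--Starr bound, with couplings to handle the rigid degree sequences. However, you have the directions of the two bounds crossed, and as written neither half proves what you claim it proves. If the interpolation runs from a decoupled cavity model at $t=0$ (with free entropy $\log q+\cB(\pi)$) to $\G^*$ at $t=1$ and {\bf POS} gives a \emph{nonnegative} derivative, then you obtain $\frac1n\Erw\log Z_{\G^*}\geq \log q+\cB(\pi)$ for every $\pi$, i.e.\ a \emph{lower} bound by $\sup_\pi\cB(\pi)$ --- not an upper bound as you assert (an upper bound valid ``for every $\pi$'' would in any case only yield the infimum). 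Conversely, the Aizenman--Sims--Starr increment computation, which shows the increment equals $\cB(\pi_*)$ for \emph{some} $\pi_*\in\PP_*(\Omega)$ and is therefore at most $\sup_\pi\cB(\pi)$, is the \emph{upper} bound. This is exactly how the paper assigns the roles (Propositions~\ref{Prop_int} and~\ref{Prop_ASS} respectively); with your labelling the two inequalities point the same way and the sandwich never closes.

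There is a second, more substantive gap in the cavity step. To bring the increment $\Erw[\log Z_{\G^*_{n+1}}-\log Z_{\G^*_n}]$ into the Bethe form one must replace the joint law of the spins on the $O(1)$ cavities attached to the new variable/factor by a \emph{product} of their marginals. This factorization is false in general (it fails in the condensed phase), and neither {\bf BAL} nor ``the convergence theory of local weak limits'' supplies it. The paper obtains it by pinning a random $O(1)$-size set of variables (Lemma~\ref{Lemma_tpinning}), which forces $\delta$-symmetry of the Boltzmann measure, combined with the Nishimori identity and contiguity (Proposition~\ref{Prop_Nishi}, Lemma~\ref{Lemma_sigma}) to exchange the planted spins $\vsigma^*_{\vy_i}$ for independent Boltzmann samples. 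Without some version of this perturbation your increment does not reduce to $\cB(\pi_*)$ and the upper bound fails. Two smaller points: the paper first truncates to bounded degrees (Proposition~\ref{Prop_prune}) before any coupling is attempted, which your outline skips; and concentration of $\log Z_{\G^*}$ is obtained by an Azuma/switching argument on the configuration model rather than small-subgraph conditioning --- the latter is not obviously adequate here since you need concentration for the reweighted (teacher--student) measure, not the null model.
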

\noindent
The formula~\eqref{eqThm_main} is in line with predictions from~\cite{LenkaFlorent}.
Moreover, the results quoted in \Sec~\ref{Sec_intro} are immediate consequences of \Thm~\ref{Thm_main}.

\section{Proof strategy}\label{Sec_strategy}
\noindent
In this section we survey the proof of \Thm~\ref{Thm_main}.
Subsequently we discuss how the strategy compares to prior work, particularly~\cite{CKPZ}. 
Throughout we tacitly assume that {\bf DEG}, {\bf SYM}, {\bf BAL} and {\bf POS} are satisfied. 

\subsection{The partition function}\label{Sec_Z}
The starting point for computing the mutual information is to observe that this quantity is closely connected to the partition function of $\G^*$.
\begin{proposition}\label{Prop_IZ}
\Whp\ we have
\begin{align*}
I(\vsigma^*,\G^*\mid\fD)/n=\ln q+\frac{\Erw[\vd]}{\xi\Erw[\vk]} \Erw\brk{q^{-k_{\vpsi}} \sum_{\tau\in\Omega^{k_{\vpsi}}}\Lambda(\vpsi(\tau))}-\Erw[\log Z(\G^*)]/n+o(1).
\end{align*}
\end{proposition}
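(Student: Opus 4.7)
The plan is to pass through Bayes' rule. By the definition \eqref{eqG*} of $\G^*$ and the uniformity of $\vsigma^*$,
\[
\pr\brk{\vsigma^*=\sigma,\,\G^*=G\mid\fD} \;=\; \frac{q^{-n}\,\pr\brk{\G=G\mid\fD}\,\psi_G(\sigma)}{\Erw\brk{\psi_{\vG}(\sigma)\mid\fD,\vsigma^*=\sigma}}.
\]
The first task is to establish, via \textbf{SYM} and \textbf{DEG}, that $\Erw\brk{\psi_{\vG}(\sigma)\mid\fD}$ equals $\xi^{\vm}(1+o(1))$ uniformly over a set of $\sigma$ of probability $1-o(1)$. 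Once this is in hand, summing the joint density over $\sigma$ yields $\pr\brk{\G^*=G\mid\fD}\approx q^{-n}\xi^{-\vm}\pr\brk{\G=G\mid\fD}\,Z_G$, and therefore
\[
I(\vsigma^*,\G^*\mid\fD)/n \;=\; \log q + \Erw\brk{\log\psi_{\G^*}(\vsigma^*)\mid\fD}/n - \Erw\brk{\log Z_{\G^*}\mid\fD}/n + o(1).
\]

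To compute the middle term I would expand $\log\psi_{\G^*}(\vsigma^*)=\sum_{i=1}^{\vm}\log\psi_{a_i}(\vsigma^*_{\partial a_i})$. Under the teacher--student reweighting the joint conditional law of $(\psi_{a_i},\vsigma^*_{\partial a_i})$ given $\vk_i=k$, after integrating out the rest of the factor graph, is asymptotically proportional to $P_k(\psi)\,q^{-k}\,\psi(\omega)$ on $\Psi_k\times\Omega^k$, with normalising constant $\xi$ by \textbf{SYM}. Each factor therefore contributes $\xi^{-1}\Erw\brk{q^{-k_{\vpsi}}\sum_{\tau\in\Omega^{k_{\vpsi}}}\Lambda(\vpsi(\tau))}$ in expectation; summing over the $\vm\disteq\Po(n\Erw[\vd]/\Erw[\vk])$ factors and applying the law of large numbers to $\vm/n$ reproduces the middle term of the claim. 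The last term $\Erw\brk{\log Z_{\G^*}\mid\fD}/n$ concentrates around its unconditional counterpart $\Erw[\log Z(\G^*)]/n$ via McDiarmid applied to the configuration-model edge-exposure, the per-edge Lipschitz constant being controlled by the uniform bounds $\eps<\psi<1/\eps$ from \textbf{SYM} together with the degree moments from \textbf{DEG}.

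The main obstacle will be the uniform approximation $\Erw\brk{\psi_{\vG}(\sigma)\mid\fD}=\xi^{\vm}(1+o(1))$: the error has to survive taking logarithms and summing over $\vm=\Theta(n)$ factors. While \textbf{SYM} immediately gives $\Erw_{\partial a}[\psi(\sigma_{\partial a})]=\xi$ once $\partial a$ is a tuple of distinct uniform variable nodes and $\sigma$ has a balanced colour profile, the configuration-model neighborhood is degree-weighted and may contain multi-edges, which introduce dependencies across factors. I plan to handle this by restricting to the high-probability event that the total number of multi-edges is $O(1)$ --- a consequence of \textbf{DEG} --- and by a switching/coupling argument that replaces the true factor neighborhoods by i.i.d.\ uniform samples at total logarithmic cost $o(n)$; after this step the required estimate follows from the law of large numbers applied to the degree-weighted empirical colour distribution of $\vsigma^*$.
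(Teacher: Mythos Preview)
Your decomposition into $\log q+\Erw[\log\psi_{\G^*}(\vsigma^*)]/n-\Erw[\log Z_{\G^*}]/n+\text{(error)}$ and your treatment of the middle term are correct and match the paper's approach in \Sec~\ref{mip_intro}--\ref{mip_material}. The gap is in the step ``summing the joint density over $\sigma$ yields $\pr[\G^*=G\mid\fD]\approx q^{-n}\xi^{-\vm}\pr[\G=G\mid\fD]Z_G$''. This sum is
\[
\pr[\G^*=G\mid\fD]=q^{-n}\pr[\G=G\mid\fD]\,Z_G\cdot\Erw_{\tau\sim\mu_G}\!\brk{\frac{1}{\Erw[\psi_{\vG}(\tau)\mid\fD]}},
\]
so the approximation you want is equivalent to $\Erw_{\tau\sim\mu_G}[\xi^{\vm}/\bar\psi(\tau)]=e^{o(n)}$. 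Your switching argument shows $\bar\psi(\sigma)\approx\xi^{\vm}$ for $\sigma$ that are balanced under the \emph{prior} $\vsigma^*$; but here the average is under the \emph{Boltzmann distribution} $\mu_G$ of $G\sim\G^*$, and unbalanced $\tau$ have $\bar\psi(\tau)$ exponentially small, so a $1-o(1)$ bound on $\mu_G(\{\text{balanced }\tau\})$ is not enough to control the upper tail. You neither argue that $\mu_{\G^*}$ concentrates on balanced assignments nor invoke {\bf BAL} to bound $\bar\psi(\tau)$ from above.

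The paper sidesteps this entirely by not approximating $\pr[\G^*=G]$. Instead it writes the error \emph{exactly} as the conditional relative entropy $i^*_{\mathrm{err}}(t)=\frac1n\Erw[\KL{\vsigma^*\mid\G^*}{\mu_{\G^*}}]\ge 0$, applies Jensen to the inner expectation to get
\[
i^*_{\mathrm{err}}(t)\le\frac1n\Erw\brk{\log\bar\psi_t(\vsigma_{\G^*})}-\frac1n\Erw\brk{\log\bar\psi_t(\vsigma^*)},
\]
and then uses the Nishimori identity together with mutual contiguity of $\vsigma^*$ and $\hat\vsigma$ (\Prop~\ref{mcp_main_prop}) to conclude that the posterior sample $\vsigma_{\G^*}$ is balanced with probability $1-\eps$, uniformly. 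This is the missing mechanism: transferring the ``typicality'' of $\sigma$ from the prior to the posterior requires the Nishimori identity, not a direct combinatorial argument on the null model.
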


\noindent
Hence, \Prop~\ref{Prop_IZ} reduces our task to computing $\Erw[\log Z(\G^*)]$.
This is still a formidable challenge because the logarithm sits inside the expectation; hence, routine techniques such as moment calculations do not bite.
Instead we will combine two separate techniques.
The first is a coupling argument known as the Aizenman-Sims-Starr scheme.
This argument will show that $\Erw[\log Z(\G^*)]$ is upper bounded by $\sup_{\pi}\cB(\pi)$.
The second component, the interpolation method, will supply the matching lower bound.

What these techniques have in common is that they both boil down to `local' calculations.
That is, we need to assess the impact on the partition function $Z(\G^*)$ of a small number of local changes such as addition of a few factor or variable nodes to $\G^*$.
We will perform these computations by way of a probabilistic argument, namely by tracing how they affect the average weight of a sample from the Boltzmann distribution of $\G^*$.
%The key to this is a simple but powerful fact that goes by the name of Nishimori identity in physics.
The key is a simple but powerful fact that trades as the Nishimori identity.

\subsection{The Nishimori identity}\label{Sec_Nishi}
To formulate this identity we need to introduce a slightly modified version of the random factor graph model $\G^*$.
Recall from \eqref{eqG*} that $\G^*$ was obtained by first drawing $\vsigma^*$ uniformly at random and then reweighting the null model $\G$ according to the weight of $\vsigma^*$.
If we combine these two steps the net effect should be, at least roughly, that a specific $G$ comes up with probability proportional to $Z(G)$, as every $\sigma\in\Omega^{V_n}$ provides $G$ with a $\psi_G(\sigma)$
chance of being sampled.
Thus, $\G^*$ should be roughly equivalent to the random factor graph model $\hat\G$ defined by
\begin{align}\label{eqhatG}
\pr\brk{\hat\G=G\mid\fD}&\propto Z_G\pr\brk{\G=G\mid\fD}.
\end{align} 
Indeed, this equivalence turns out to be exact if we make one minimal change.
Namely, instead of drawing the ground truth $\vsigma^*$ uniformly at random, we draw a sample from the distribution
\begin{align}\label{eqhatsigma}
\pr\brk{\hat\vsigma=\sigma\mid\fD}\propto \Erw\brk{\psi_{\G}(\sigma)\mid\fD}&&(\sigma\in\Omega^{V_n}).
\end{align}
The following is an extension of~\cite[\Prop~3.10]{CKPZ} to the present, more general class of factor graph models with given degrees.

\begin{proposition}\label{Prop_Nishi}
We have
\begin{align}\label{eqNishi}
\pr\brk{\hat\G=G\mid\fD}\mu_G(\sigma)=\pr\brk{\hat\vsigma=\sigma\mid\fD}\pr\brk{\G^*=G\mid\fD,\vsigma^*=\sigma}.
\end{align}
Furthermore, $\hat\vsigma$ and $\vsigma^*$ as well as $\G^*,\hat\G$ are mutually contiguous and
$\Erw[\ln Z_{\G^*}]=\Erw[\ln Z_{\hat\G}]+o(n).$
\end{proposition}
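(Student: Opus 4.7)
The plan is to verify the three assertions of Proposition~\ref{Prop_Nishi} in turn, starting from the identity itself, then using it to bootstrap contiguity, and finally turning that into the $o(n)$ equality of log-partition functions.

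\textbf{Step 1: The Nishimori identity by direct computation.}
I would simply unfold the definitions \eqref{eqG*}, \eqref{eqhatG} and \eqref{eqhatsigma} and check that both sides of \eqref{eqNishi} equal $\psi_G(\sigma)\pr[\G=G\mid\fD]/\Erw[Z_\G\mid\fD]$. Explicitly, on the left
\begin{align*}
\pr[\hat\G=G\mid\fD]\,\mu_G(\sigma)=\frac{Z_G\,\pr[\G=G\mid\fD]}{\Erw[Z_\G\mid\fD]}\cdot\frac{\psi_G(\sigma)}{Z_G},
\end{align*}
while on the right, using $\pr[\vsigma^*=\sigma]=q^{-n}$, $\Erw[\psi_{\G}(\vsigma^*)\mid\fD,\vsigma^*{=}\sigma]=\Erw[\psi_{\G}(\sigma)\mid\fD]$ and the observation that the normaliser of $\hat\vsigma$ is $\sum_\sigma\Erw[\psi_\G(\sigma)\mid\fD]=\Erw[Z_\G\mid\fD]$, one gets the same expression. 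No graph considerations enter at this stage.

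\textbf{Step 2: Contiguity of $\hat\vsigma$ with $\vsigma^*$.}
Since $\vsigma^*$ is uniform, the Radon--Nikodym derivative equals $q^n\Erw[\psi_\G(\sigma)\mid\fD]/\Erw[Z_\G\mid\fD]$. I would show that this ratio is $\exp(o(n))$ uniformly in $\sigma$ over a set of measure $1-o(1)$ under $\vsigma^*$. The key inputs are \textbf{SYM}, which gives uniform bounds $\eps<\psi(\tau)<1/\eps$ together with $q^{-k}\sum_\tau\Erw[\vpsi_k(\tau)]=\xi$, and \textbf{BAL}, which states that $\mu\mapsto\sum_\tau\Erw[\vpsi_k(\tau)]\prod_i\mu(\tau_i)$ is maximised at the uniform distribution. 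Conditioning on $\fD$, the configuration-model structure turns $\Erw[\psi_\G(\sigma)\mid\fD]$ into a product over factor nodes of (near-independent) averages of $\Erw[\vpsi_{\vk_i}(\sigma_{\partial a_i})]$ over uniformly chosen neighbour tuples. A standard empirical-distribution / concentration argument driven by \textbf{DEG} then shows that for balanced $\sigma$ this product is $\xi^{\vm}\exp(o(n))$, and that the denominator $\Erw[Z_\G\mid\fD]=\sum_\sigma\Erw[\psi_\G(\sigma)\mid\fD]$ is also $q^n\xi^{\vm}\exp(o(n))$ because the typical $\sigma$ drawn uniformly is balanced. Hence the Radon--Nikodym derivative is subexponential on an event of uniform probability $1-o(1)$, giving mutual contiguity in the strong sense sufficient for subsequent use.

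\textbf{Step 3: Contiguity of $\hat\G$ with $\G^*$.}
Summing \eqref{eqNishi} over $\sigma\in\Omega^{V_n}$ represents $\hat\G$ as the mixture $\sum_\sigma\pr[\hat\vsigma=\sigma\mid\fD]\pr[\G^*\in\cdot\mid\fD,\vsigma^*=\sigma]$ while $\G^*$ itself is the analogous mixture with uniform mixing measure. Step~2 bounds the Radon--Nikodym derivative of the mixing measures; since the common kernel $\pr[\G^*=G\mid\fD,\vsigma^*=\sigma]$ is exactly the same, the same bounds transfer to the mixtures and yield contiguity of $\hat\G$ and $\G^*$.

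\textbf{Step 4: The $o(n)$ equivalence of $\Erw[\ln Z]$.}
From \textbf{SYM} one gets the deterministic bound $|\ln Z_G|\le n\ln q+\vm\ln(1/\eps)=O(n)$, so $\ln Z_{\G^*}/n$ and $\ln Z_{\hat\G}/n$ are uniformly integrable. Writing $\Erw[\ln Z_{\hat\G}]=\sum_\sigma\pr[\hat\vsigma=\sigma]\Erw[\ln Z_{\G^*}\mid\vsigma^*=\sigma]$ via Step~1 (or equivalently $\Erw[\ln Z_{\hat\G}]=\Erw[Z_\G\ln Z_\G]/\Erw[Z_\G]$), and $\Erw[\ln Z_{\G^*}]=q^{-n}\sum_\sigma\Erw[\ln Z_{\G^*}\mid\vsigma^*=\sigma]$, the difference is controlled by $\sup_\sigma|\pr[\hat\vsigma=\sigma]q^n-1|$ on the good set from Step~2, times the $O(n)$ deterministic bound on the good set, plus a negligible contribution from the bad set. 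The latter needs a mild tail bound $\Erw[\ln Z_{\G^*}\mid\vsigma^*=\sigma]=O(n)$ uniformly in $\sigma$, which again follows from \textbf{SYM}.

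\textbf{Expected main obstacle.} Step~1 is essentially a calculation, Step~3 a formality given Step~2, and Step~4 a routine uniform-integrability argument. The real work is in Step~2: in the Erd\H os--R\'enyi setup of \cite{CKPZ} one could simply write $\Erw[\psi_\G(\sigma)\mid\fD]$ as a genuine product over independent factor nodes picking neighbours independently, but under the configuration-model conditioning $\fD$ the factor arms must be matched to variable half-edges without replacement. Establishing the required $\exp(o(n))$ control on the Radon--Nikodym derivative for typical $\sigma$ therefore calls for a careful pairing / empirical-distribution argument (e.g.\ local CLTs for the joint type of neighbour tuples together with the $(2+\eps)$-moment assumption in \textbf{DEG}) to replace the straightforward factorisation available in the Erd\H os--R\'enyi case. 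Once this concentration is in place, the rest of the proposition unfolds cleanly.
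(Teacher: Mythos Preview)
Your Step~1 is fine and matches the paper exactly. The genuine issues are in Steps~2 and~4.

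\textbf{Step 2 does not yield contiguity.} A Radon--Nikodym derivative bound of the form $\exp(o(n))$ on a set of $\vsigma^*$-measure $1-o(1)$ is \emph{not} sufficient for mutual contiguity: take any sequence of events $E_n$ with $\pr[\vsigma^*\in E_n]=\exp(-a_n/2)$ where $a_n$ is your $o(n)$ bound; this tends to zero while $\pr[\hat\vsigma\in E_n]$ need not. What is needed is that the derivative be bounded between fixed positive constants $c,c^{-1}$ on a set that has probability $>1-\varepsilon$ under \emph{both} $\vsigma^*$ and $\hat\vsigma$. This is exactly what the paper proves (Proposition~\ref{mcp_main_prop}), and it requires genuine local limit theorems for the colour-frequency vector under both the product measure and the reweighted measure (Theorems~\ref{llt_global}--\ref{llt_local}, Proposition~\ref{mcp_llt}), not just concentration. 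You correctly flag ``local CLTs'' in your obstacle paragraph, but the conclusion you draw from them must be $\Theta(1)$, not $\exp(o(n))$.

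\textbf{Step 4 does not follow from Step 2, even in its corrected form.} Your difference bound reads
\[
\bigl|\Erw[\ln Z_{\hat\G}]-\Erw[\ln Z_{\G^*}]\bigr|\;\le\;\sup_{\sigma\in\text{good}}\bigl|\pr[\hat\vsigma=\sigma]q^n-1\bigr|\cdot O(n)+\text{(bad set)}.
\]
But the local limit theorems give only $\pr[\hat\vsigma=\sigma]q^n=\Theta(1)$, not $1+o(1)$; the ratio genuinely depends on the colour type of $\sigma$ through a non-degenerate Gaussian factor (see Proposition~\ref{mcp_llt}). So the first term is $O(n)$, not $o(n)$. The paper closes this gap by a different route (Section~\ref{conc}, Proposition~\ref{conc_tss_glob}): it first shows that the conditional quenched free entropy $\phi^*_{t,\sigma,\underline y}$ is asymptotically constant over all \emph{typical} coloured sequences via an explicit coupling/switching argument, and only then invokes contiguity to conclude that both $\hat\vsigma$ and $\vsigma^*$ put mass $1-o(1)$ on the typical set, whence both averages equal that common constant up to $o(1)$. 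Contiguity alone, without this concentration input, is not enough.
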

\noindent
The proof of \Prop~\ref{Prop_Nishi} relies on Bayes' formula combined with a somewhat subtle application of local limit theorems and other probabilistic tools.
The details can be found in \Sec~\ref{Sec_Nishiredux}.

\subsection{Degree pruning}\label{Sec_prune}
A further preparation is degree pruning.
Specifically, while in the random factor graph models $\G^*$ and $\hat\G$ may possess degrees as large as $n^{1/2-\eps}$, the following proposition shows that it suffices to prove the main result \eqref{eqThm_main} for bounded degree sequences.

\begin{proposition}\label{Prop_prune}
Assume that for any integer $L>0$ and for any $\vd,\vk$ such that $\vd,\vk\leq L$ the statement \eqref{eqThm_main} is true.
Then \eqref{eqThm_main} holds for all $\vd,\vk$ that satisfy {\bf DEG} and for which $\Erw \brk{\vd}, \Erw \brk{\vk} > 0$.
\end{proposition}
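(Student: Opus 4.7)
The plan is to reduce the general case to the bounded-degree case via truncation. Fix an integer cutoff $L > 0$ and let $\vd^{(L)} = \vd \wedge L$, $\vk^{(L)} = \vk \wedge L$. Let $\G^{*,L}$ denote the random factor graph produced by \textbf{G1}--\textbf{G2} and \eqref{eqG*} with these truncated degree variables (and the same weight distributions $P_k$ for $k \leq L$); since $\vd^{(L)},\vk^{(L)} \leq L$, the hypothesis of the proposition yields \eqref{eqThm_main} for $\G^{*,L}$. It therefore suffices to prove that both sides of \eqref{eqThm_main} converge, as $L\to\infty$ and uniformly in $n$, to the corresponding quantities for $(\vd,\vk)$.

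For the left-hand side, \Prop~\ref{Prop_IZ} reduces the task to comparing $\Erw[\log Z(\G^*)]/n$ with $\Erw[\log Z(\G^{*,L})]/n$, and \Prop~\ref{Prop_Nishi} allows us to work instead with the tilted models $\hat\G, \hat\G^{(L)}$, which sidestep the ground truth. I would couple these two by first sampling the full degree sequences $(\vd_i)_{i\leq n}$, $(\vk_j)_{j \leq \vm}$ together with a bipartite graph on the truncated half-edges ($\vd_i \wedge L$ at each variable node and $\vk_j \wedge L$ at each factor node), and then attaching the excess half-edges contributed by high-degree nodes to recover $\hat\G$. The total number of excess half-edges is $\sum_i (\vd_i - L)^+ + \sum_j (\vk_j - L)^+$, whose expectation $n\Erw[(\vd - L)^+] + \Erw[\vm]\Erw[(\vk - L)^+]$ is $o(n)$ as $L\to\infty$ by the $(2+\eps)$-moment bound in \textbf{DEG}. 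Each excess half-edge either enlarges an existing factor node or necessitates inserting a new one; by \textbf{SYM} any such local modification multiplies $Z$ by a factor in $[\eps, 1/\eps]$, shifting $\log Z$ by $O(1)$. Summing yields $\Erw|\log Z(\hat\G) - \log Z(\hat\G^{(L)})| = o(n)$, uniformly in $n$.

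For the right-hand side, the additive constant $(\Erw[\vd]/(\xi\Erw[\vk]))\Erw[q^{-k_{\vpsi}}\sum_{\tau}\Lambda(\vpsi(\tau))]$ depends continuously on the degree distribution, since $\Erw[\vd^{(L)}] \to \Erw[\vd]$ and $\Erw[\vk^{(L)}] \to \Erw[\vk]$ by monotone convergence. For the Bethe free entropy $\cB$ defined in \eqref{eqBFE}, the integrand admits a $\vd, \vk$-integrable bound derived from \textbf{SYM} (the prefactor $\xi^{-\vd}$ offsets the scale of the argument of $\Lambda$, and the typical magnitude of the product $\prod_{i=1}^{\vd} C_{i,\sigma}$ lies close to $\xi^{\vd}$ by the symmetry of $\pi\in\PP_*(\Omega)$). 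Together with the moment control from \textbf{DEG}, dominated convergence yields $\cB^{(L)}(\pi) \to \cB(\pi)$ uniformly in $\pi \in \PP_*(\Omega)$, and hence $\sup_{\pi}\cB^{(L)}(\pi) \to \sup_{\pi}\cB(\pi)$ as $L\to\infty$.

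The main technical obstacle is the coupling of the full and truncated models in the middle paragraph: the degree-sum matching event \eqref{eqDeg} holds with probability only $\Theta(n^{-1/2})$ in each model, and the Poisson numbers of factor nodes $\vm$ in the two models are not identical. A workable route is to condition on explicit realised degree sequences, observe that the $\Theta(n^{-1/2})$ normalisation factors cancel on comparison, and then attach the excess half-edges one at a time while locally updating the weight function on each affected factor node using \textbf{SYM}. The $(2+\eps)$-moment bound in \textbf{DEG} (rather than a bare second moment) is what guarantees concentration of $\sum_i(\vd_i - L)^+$ uniformly in $n$, and is therefore essential at precisely this step.
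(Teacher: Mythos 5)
Your overall strategy -- truncate the degrees at $L$, invoke the bounded-degree case, and then show that both sides of \eqref{eqThm_main} are continuous in the degree distribution -- is the same as the paper's, and your treatment of the right-hand side (uniform bounds from {\bf SYM} inside $\Lambda$, tail truncation via {\bf DEG}, dominated convergence for $\cB^{(L)}\to\cB$) matches the argument in \Sec~\ref{prune}. The gap is in the free-energy comparison. You propose to couple $\hat\G$ and $\hat\G^{(L)}$ by sampling a bipartite graph on the truncated half-edges and then ``attaching the excess half-edges,'' with each attachment shifting $\log Z$ by $O(1)$ via {\bf SYM}. That bound is valid for the effect of a local edit on the partition function of a \emph{fixed} graph, but it does not produce the correct \emph{distribution}: $\hat\G$ is tilted by the global factor $Z_G$ (equivalently $\G^*$ is tilted by $\psi_G(\vsigma^*)$), so conditioning on the truncated part of the graph and then attaching extra half-edges does not recover $\hat\G$ -- the law of the truncated part itself changes when the degree sequence changes. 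This is precisely the difficulty that \Prop s~\ref{prop_coupling_int} and~\ref{Prop_oplus} are built to handle, and even there each application accommodates only a single extra factor node, with disagreement probability $\tilde O(1/n)$ and a possible $O(\sqrt n\log n)$ symmetric difference; your modifications also include enlarging the \emph{arity} of existing factor nodes (a variable of degree $\vd_i>L$ or a factor of arity $\vk_j>L$ regains its excess half-edges), which changes the domain of the weight function and is not covered by those couplings at all.

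The paper avoids this by never coupling the two tilted graph distributions directly. Instead it conditions on the ground truth and the factor-side spin assignments, under which the planted graph is a uniformly random colour-consistent matching (the construction of \Sec~\ref{Sec_Noela}); the switching-method coupling of \Sec~\ref{conc_tss_glob_proof} then bounds $|\phi^*_{\acS}-\phi^*_{\acS'}|$ by the number of ``bad'' factors, which is controlled by the distance of the degree/assignment sequences in the metric $\Delta$ of \Sec~\ref{dd} -- and a truncated degree distribution is $\Delta$-close to the original precisely because $\sum_d d\,|p_{\mathrm d}^{(L)}(d)-p_{\mathrm d}(d)|\to0$ under {\bf DEG}. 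The typicality and contiguity results of \Sec s~\ref{dd}, \ref{mc} and~\ref{typ_ass} are what justify restricting to such typical sequences in the first place (this is also where the conditioning on \eqref{eqDeg} is genuinely resolved, via the local limit theorems of \Sec~\ref{ddp_valid}, rather than by a cancellation of the $\Theta(n^{-1/2})$ normalisations). To repair your argument you would either need to import this conditional-on-assignments coupling, or run an Aizenman--Sims--Starr-type telescoping using \Prop~\ref{prop_coupling_int} to add the $o(n)$ excess factor nodes one at a time with an $O(1)$ expected change in $\log Z$ per step -- neither of which follows from {\bf SYM} alone.
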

\noindent
The proof of \Prop~\ref{Prop_prune} is based on concentration inequalities and coupling arguments for bipartite graphs with given degree sequences. Hence, we may assume from here on that $\vd,\vk$ are bounded.

\subsection{Cavities and couplings}\label{Sec_cavs}
Two of the main steps towards the proof of \Thm~\ref{Thm_main}, the Aizenman-Sims-Starr scheme and the interpolation method, hinge on comparing random factor graphs with slightly different parameters.
For example, we will need to compare a random factor graph $\G^*$ with $n$ variable and $\Po(\Erw[\vd]n/\Erw[\vk])$ factor nodes and a factor graph with $n+1$ variable and the commensurate number of $\Po(\Erw[\vd](n+1)/\Erw[\vk])$ factor nodes.
In the classical case of binomial factor graphs as treated in~\cite{CKPZ} where factor nodes are drawn independently this coupling would be relatively straightforward.
Indeed, we could just add a variable node and a few extra factor nodes to the graph with $n$ variables.
However, in the present setting of given degrees matters are much more delicate.
For instance, how would you set up such a coupling for the $d$-regular stochastic block model from \Sec~\ref{Sec_sbm}?
Due to the given degrees the graph structure is too rigid to accommodate the necessary local changes.

To cope with this issue we first create a bit of wiggling room for ourselves by slightly reducing the number of factor nodes.
This idea has been used in prior work on factor graphs with rigid degree distributions such as~\cite{COEGHR}.
However, matters turn out to be rather more delicate here because we do not just work with purely random factor graphs, but with graphs drawn from the teacher-student model.
Thus, we need to take care to meticulously implement the weight shifts in accordance with \eqref{eqG*}.
Hence, for a small but fixed $\eps>0$ let $\vm_\eps=\Po((1-\eps)\Erw[\vd]n/\Erw[\vk])$ be a Poisson variable with a slightly smaller mean than $\vm$.
Because we assume that all degrees are bounded, with probability $1-\exp(-\Omega(n))$ we have $\sum_{i=1}^n\vd_i\geq\sum_{i=1}^{\vm_\eps}\vk_i$.
% \begin{align}\label{eqepsdegs}
% \sum_{i=1}^n\vd_i\geq\sum_{i=1}^{\vm_\eps}\vk_i.
% \end{align}
In fact, \whp\ the total variable degree exceeds the total degree of the first $\vm_\eps$ factor nodes by $\Omega(n)$.
Let $\G(n,\vm_\eps)$ be a random factor graph with variable nodes $x_1,\ldots,x_n$ and factor nodes $a_1,\ldots,a_{\vm_\eps}$ of degrees $\vk_1,\ldots,\vk_{\vm_{\eps}}$ drawn uniformly at random subject to the condition that the degree of each $x_i$ remains bounded by $\vd_i$.
Thus, some of the variable nodes will likely have a degree strictly smaller than their `target degree' $\vd_i$.
We refer to these variable degrees as {\em cavities}.
Further, given $\sigma\in\Omega^{V_n}$ let $\G^*(n,\vm_\eps,\sigma)$ be the random factor graph obtained as in \eqref{eqG*}, i.e., with $\fD_\varepsilon$ denoting the $\sigma$-algebra generated by the degrees and the total number of factors nodes of $\G(n,\vm_\eps)$ we let
\begin{align*}
\pr\brk{\G^*(n,\vm_\eps,\sigma)=G\mid\fD_\varepsilon}&\propto \pr\brk{\G(n,\vm_\eps)=G\mid\fD_\varepsilon}\psi_G(\sigma).
% \pr\brk{\G^*(n,\vm_\eps,\sigma)=G\mid\fD_\varepsilon}&=\frac{\pr\brk{\G(n,\vm_\eps)=G\mid\fD_\varepsilon}\psi_G(\sigma)}{\Erw[\psi_{\G(n,\vm_\eps)}(\sigma)\mid\fD_\varepsilon]}.
\end{align*}
The following proposition establishes that we can indeed think of $\G^*(n,\vm_{\eps}+1,\sigma)$ as being obtained from $\G^*(n,\vm_\eps,\sigma)$ by adding one extra factor node $a_{\vm_\eps+1}$.
Further, for two factor graphs $G,G'$ on the same set of nodes let $G\triangle G'$ be the symmetric difference of their edge sets.
\begin{proposition}\label{Prop_NoelaRough}
Assume that $|\sigma^{-1}(\omega)|=n/q+O(\sqrt n\log n)$ for all $\omega\in\Omega$.
Then there exists a coupling of $\G^*(n,\vm_{\eps},\sigma)$ and $\G^*(n,\vm_{\eps}+1,\sigma)$ such that
\begin{align*}
\pr\brk{\G^*(n,\vm_{\eps},\sigma)=\G^*(n,\vm_{\eps}+1,\sigma)-a_{\vm_\eps+1}\mid\fD_\varepsilon}&=1-\tilde O(1/n),\\
\pr\brk{\abs{\G^*(n,\vm_{\eps},\sigma)\triangle\G^*(n,\vm_{\eps}+1,\sigma)-a_{\vm_\eps+1}}>n^{2/3} \mid\fD_\varepsilon}&=1-\tilde O(1/n^2).
\end{align*}
\end{proposition}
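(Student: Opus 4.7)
The plan is to realise the coupling via a direct construction: sample the smaller graph and then attach the extra factor node with an appropriate conditional law, and estimate the residual marginal bias. Given $\fD_\eps$, both $\G^*(n,\vm_\eps,\sigma)$ and $\G^*(n,\vm_\eps+1,\sigma)$ are distributed on valid simple bipartite factor graphs proportionally to $\psi_G(\sigma)$. Let $\phi$ denote the map that deletes $a_{\vm_\eps+1}$; summing over the possible attachments shows that the pushforward $\phi(\G^*(n,\vm_\eps+1,\sigma))$ has density proportional to $\psi_G(\sigma)\,W(G)$, where
\begin{align*}
W(G)=\sum_{(y_1,\ldots,y_k)}\psi_{a_{\vm_\eps+1}}(\sigma_{y_1},\ldots,\sigma_{y_k})\vecone\{(y_1,\ldots,y_k)\text{ is a valid attachment in }G\}
\end{align*}
and $k=\vk_{\vm_\eps+1}$. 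The total-variation distance between $\G^*(n,\vm_\eps,\sigma)$ and $\phi(\G^*(n,\vm_\eps+1,\sigma))$ then equals $\tfrac12\Erw|W(\G^*(n,\vm_\eps,\sigma))/\Erw W-1|$, so the problem reduces to sharp concentration of $W$ under the Boltzmann law.

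To establish this concentration, use \textbf{SYM} (which bounds $\psi\in[\eps,1/\eps]$) together with the colour-balance hypothesis $|\sigma^{-1}(\omega)|=n/q+O(\sqrt n\log n)$. Expanding $W(G)$ by the colour class of the attachment yields
\begin{align*}
W(G)=\xi\,|C(G)|^{k}+\tilde O(n^{k-1/2}),
\end{align*}
where $C(G)=\{i:\deg_G(x_i)<\vd_i\}$ is the cavity set, so fluctuations of $W$ are driven by fluctuations of $|C(\G^*(n,\vm_\eps,\sigma))|$. Swapping any two edges changes $|C(G)|$ by at most four, so an Azuma inequality on the edge-switching Doob martingale---combined with the bounded range of $\psi_G(\sigma)$ to transfer concentration from the uniform to the Boltzmann measure, and with the slack $\eps n$ ensuring $|C(G)|=\Theta(n)$---yields $|C(G)|=\bar c+\tilde O(\sqrt n)$ with superpolynomially small failure probability. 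A second-moment refinement that exploits \textbf{SYM} to cancel the leading-order fluctuations of cavity-changing switches then upgrades this to the $\tilde O(1/n)$ relative concentration of $W$ needed for the first bound.

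The coupling is then built by sampling $\G\sim\G^*(n,\vm_\eps,\sigma)$ and drawing a neighbourhood $N$ for $a_{\vm_\eps+1}$ from the conditional law $\propto\psi_{a_{\vm_\eps+1}}(\sigma_N)\vecone\{N\text{ valid in }\G\}$, producing $\G^+$ with $\G=\G^+-a_{\vm_\eps+1}$ by construction. The law of $\G^+$ is tilted away from $\G^*(n,\vm_\eps+1,\sigma)$ by the factor $W(\G)/\Erw W$, and an optimal-transport-type correction aligns the marginal: on the good event $\{|W(\G)/\Erw W-1|\le\tilde O(1/n)\}$ no modification is required, which immediately gives the first bound. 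On the complementary rare event, one performs a short corrective sequence of edge switches whose length is controlled by the magnitude of the bias; the strong concentration of $W$ together with a Chernoff argument ensures that the resulting symmetric difference exceeds $n^{2/3}$ only with probability $\tilde O(1/n^2)$, yielding the second bound. The principal obstacle is the upgrade from the $\tilde O(n^{-1/2})$ Azuma concentration of $|C(G)|$ to the $\tilde O(1/n)$ relative concentration of $W$: a bounded-differences argument alone does not reach this level, and attaining it requires a delicate second-moment computation that uses \textbf{SYM} to force cancellation of the leading-order fluctuations of cavity-changing switches under the Boltzmann reweighting.
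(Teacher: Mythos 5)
Your pushforward/total-variation framework is sound and is, in spirit, the route the paper takes, but the execution has a concrete error that produces exactly the "principal obstacle" you flag, and that obstacle is in fact illusory. In the pairing model a new factor is wired to $k$ cavity \emph{clones}, not cavity \emph{variables}, and given $\fD_\eps$ the total number of cavity clones $\Delta=\sum_i d_i-\sum_j k_j$ is deterministic. So the fluctuating input to $W$ is not $|C(G)|$ but the colour profile $(\nu_\omega)_{\omega}$ of the cavity clones under $\sigma$, which satisfies $\sum_\omega\nu_\omega=\Delta$ exactly. Writing $\nu_\omega=\Delta/q+\delta_\omega$ with $\sum_\omega\delta_\omega=0$ and expanding $W=\sum_\tau\Erw[\PSI(\tau)]\prod_i\nu_{\tau_i}$ to first order in $\delta$, the linear coefficient is $\sum_\omega\delta_\omega(\Delta/q)^{k-1}\sum_{\tau:\tau_i=\omega}\Erw[\PSI(\tau)]$; by \textbf{SYM} the inner sum equals $q^{k-1}\xi$ independently of $\omega$, so the linear term is proportional to $\sum_\omega\delta_\omega=0$ and vanishes \emph{identically}. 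What remains is $O(\|\delta\|^2/\Delta^2)=\tilde O(1/n)$ once $\|\delta\|=\tilde O(\sqrt n)$, which is the standard concentration. In other words, no second-moment refinement and no cancellation of "cavity-changing switch" increments is needed --- the step you call the principal obstacle is a one-line algebraic cancellation, and your starting expansion $W=\xi|C(G)|^k+\tilde O(n^{k-1/2})$ manufactures a spurious $\tilde O(n^{-1/2})$ first-order fluctuation by tracking the wrong quantity.

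Two further points. The transfer of the Azuma bound for $\|\delta\|$ "from the uniform to the Boltzmann measure" is not free and you do not supply it; the paper circumvents this by working with the auxiliary \textbf{SHARP} construction (\Lem~\ref{lem_pebble_dist}), in which the cavity colour profile is drawn directly from a product law conditioned on matching $\rho_\chi$, and \Lem~\ref{lem_config_concentration} gives the $\tilde O(\sqrt n)$ concentration at a local-limit-theorem cost that is explicitly controlled. Claims~\ref{Claim_pebble_1}--\ref{Claim_pebble_2} then carry out the binomial-ratio expansion above, with the first-order coefficient showing up as $\sum_j\sum_\tau|\pr[\vy^{\sharp,+}_{m+1,j}=\tau]-1/q|$, which is killed by \textbf{SYM'}. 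Finally, your "optimal-transport-type correction" for the second bound is left as a gesture; the paper's \Lem~\ref{lem_coupling_pebble2} obtains $\tilde O(n^{-2})$ by a Chernoff bound on the count of each (weight function, arity, colour-neighbourhood) triple, ensuring that on the complementary event the two graphs can be paired so that at most $\sqrt n\log n$ factor nodes disagree. Your framework can be repaired, but as written neither bound of the proposition is actually proved.
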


There is a similar coupling that accommodates the addition of an extra variable node.
\begin{proposition}\label{Prop_NoelaRoughVar}
Assume that $|\sigma^{-1}(\omega)|=n/q+O(\sqrt n\log n)$ for all $\omega\in\Omega$.
Given the degree $\vgamma$ of $x_{n+1}$ in $\G^*(n+1,\vm_{\eps}+\vgamma,\sigma)$
then there exists a coupling of $\G^*(n,\vm_{\eps},\sigma)$ and $\G^*(n+1,\vm_{\eps}+\vgamma,\sigma)$ such that
\begin{align*}
\pr\brk{\G^*(n,\vm_{\eps},\sigma)=\G^*(n+1,\vm_{\eps}+\vgamma,\sigma)-x_{n+1}-\partial x_{n+1}\mid\fD_\varepsilon}&=1-\tilde O(1/n),\\
\pr\brk{\abs{\G^*(n,\vm_{\eps},\sigma)=\G^*(n+1,\vm_{\eps}+\vgamma,\sigma)-x_{n+1}-\partial x_{n+1}}> n^{2/3}\mid\fD_\varepsilon}&=1-\tilde O(1/n^2).
\end{align*}
\end{proposition}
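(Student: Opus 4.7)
The plan is to derive Proposition~\ref{Prop_NoelaRoughVar} from Proposition~\ref{Prop_NoelaRough} via a two-stage coupling. Since Proposition~\ref{Prop_prune} lets us assume bounded degrees, $\vgamma \le \vd_{n+1}$ takes values in a finite set; I would condition on $\vgamma = \gamma$ and produce the coupling separately for each fixed $\gamma$.

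\smallskip
\noindent\emph{Stage 1 (factor padding).} First I would iterate Proposition~\ref{Prop_NoelaRough} $\gamma$ times, bumping the factor count by one at each step, to couple $\G^*(n,\vm_\eps,\sigma)$ with $\G^*(n,\vm_\eps+\gamma,\sigma)$. A union bound over the $\gamma$ applications preserves the error estimates and yields a coupling in which $\G^*(n,\vm_\eps+\gamma,\sigma)$ is obtained from $\G^*(n,\vm_\eps,\sigma)$ by simply appending $a_{\vm_\eps+1},\ldots,a_{\vm_\eps+\gamma}$ with probability $1 - \tilde O(1/n)$, with symmetric difference $O(n^{2/3})$ off an event of probability $\tilde O(1/n^2)$.

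\smallskip
\noindent\emph{Stage 2 (variable insertion by half-edge rewiring).} Next I would couple $\G^*(n,\vm_\eps+\gamma,\sigma)$ with $\G^*(n+1,\vm_\eps+\gamma,\sigma)$ by introducing the variable node $x_{n+1}$ with forced neighborhood $\{a_{\vm_\eps+1},\ldots,a_{\vm_\eps+\gamma}\}$. Concretely, for each $i\in[\gamma]$ I pick one half-edge of $a_{\vm_\eps+i}$ uniformly among its $\vk_{\vm_\eps+i}$ half-edges and reattach its endpoint from the current variable cavity to a fresh cavity slot at $x_{n+1}$. Since both graphs are drawn from the bipartite configuration model with cavities tilted by $\psi_G(\sigma)$, the Radon--Nikodym factor between the rewired-and-tilted measure and the true distribution of $\G^*(n+1,\vm_\eps+\gamma,\sigma)$ decomposes into a cavity-count ratio, which is $1 + O(1/n)$ because the cavity pool has linear size by the choice $\vm_\eps \disteq \Po((1-\eps)\Erw[\vd]n/\Erw[\vk])$, and a Boltzmann-weight ratio for the $\gamma$ rewired half-edges, which is $1 + O(n^{-1/2}\log n)$ uniformly by \textbf{SYM} together with the balancedness hypothesis $|\sigma^{-1}(\omega)|=n/q+O(\sqrt n\log n)$.

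\smallskip
\noindent\emph{Main obstacle.} The chief difficulty is handling Stage~2 pathologies: two of the $\gamma$ rewired half-edges might land on the same slot at $x_{n+1}$, or the vacated cavity at an old endpoint might already have been removed from the cavity pool in a prior rewire, or the rewiring might create a multi-edge between $x_{n+1}$ and some $a_{\vm_\eps+i}$. Each such event has probability $\tilde O(1/n)$ by the same linear-cavity counting argument that underpins Proposition~\ref{Prop_NoelaRough}. Concatenating the Stage~1 and Stage~2 couplings and bounding the composite symmetric difference via the triangle inequality then delivers the claimed $\tilde O(1/n)$ coincidence probability and $\tilde O(1/n^2)$ bound for the $O(n^{2/3})$ symmetric-difference tail.
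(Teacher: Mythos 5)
Your Stage 1 is fine and matches what the paper does elsewhere (a bounded number of applications of \Prop~\ref{Prop_NoelaRough} with a union bound). The gap is in Stage 2, which is the actual content of the proposition. Detaching one half-edge of each $a_{\vm_\eps+i}$ from an old cavity $y$ and reattaching it to $x_{n+1}$ does not produce the correct joint law of weight functions and neighbourhoods: in $\G^*(n+1,\vm_\eps+\vgamma,\sigma)$ the weight function of a factor adjacent to $x_{n+1}$ is tilted by $\psi(\sigma_{x_{n+1}},\sigma_{y_2},\ldots)$, whereas the factor you rewire was sampled with weight tilted by $\psi(\sigma_{y},\sigma_{y_2},\ldots)$ for the detached cavity $y$. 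Since a uniformly random cavity has spin $\sigma_{x_{n+1}}$ only with probability about $1/q$, the Radon--Nikodym factor between these two tiltings is $\Theta(1)$ per rewired half-edge, not $1+O(n^{-1/2}\log n)$; {\bf SYM} controls column sums of $\psi$, not ratios $\psi(\omega,\cdot)/\psi(\omega',\cdot)$. Your proposal also never addresses the spin of $x_{n+1}$ at all: the target model draws $\sigma_{x_{n+1}}$ uniformly (cf.\ {\bf PLUS1}), while the marginal induced by the graph structure is only $1/q+\tilde O(n^{-1/2})$, so a naive coupling loses $\tilde O(n^{-1/2})$ right there --- an order of magnitude worse than the claimed $\tilde O(1/n)$.

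The paper's route through \Prop~\ref{Prop_oplus} is designed precisely to avoid both problems. It merges $x_{n+1}$ together with its $d_{n+1}$ adjacent factors into a single super-factor $a_0'$ of arity $\sum_i k_{m+i}-d_{n+1}$ whose weight function sums over the spin $\chi$ of $x_{n+1}$; one checks (via \eqref{eqLemma_a0'_1}) that this merged factor still satisfies {\bf SYM$'$}, so the single-factor coupling of \Prop~\ref{prop_coupling_int} applies and gives \Lem~\ref{Lemma_a0'} with error $\tilde O(1/n)$. Then \Lem~\ref{Lemma_a0'_2} compares the merged model with the explicit model $\check\G^*$: conditional on $x_{n+1}$ receiving the same spin in both, the two laws are \emph{identical}, the spins can be coupled to agree except with probability $\tilde O(n^{-1/2})$, and on the disagreement event \Prop~\ref{prop_coupling_int_m} still couples the graphs up to $\tilde O(n^{-1/2})$; multiplying the two $\tilde O(n^{-1/2})$ factors recovers $\tilde O(1/n)$. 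To repair your argument you would need to reproduce both of these ingredients, at which point you are essentially rewriting the paper's proof.
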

The orders $\tilde O(1/n),\tilde O(1/n^2)$ of the error terms in \Prop s~\ref{Prop_NoelaRough} and~\ref{Prop_NoelaRoughVar} are vital to facilitate the computation of the partition function.
On a technical level, the tools that we develop for proving these propositions, and particularly for dealing with the fragile combinatorics of the factor graph models with given degrees, constitute the main novelty of the paper.
This is where we most visibly add to and improve over the machinery developed in prior work.
The details can be found in \Sec~\ref{Sec_Noela}.

\subsection{Aizenman-Sims-Starr and interpolation}\label{Sec_ASS}

\begin{figure}
\begin{minipage}{0.99 \linewidth}
\begin{tikzpicture}[]
    \node[draw=black, shape=circle, fill=none, minimum width=0.6cm, minimum height = 0.6cm] (A) at (11, -0.25) {};
    \node[draw=black, shape=circle, fill=none, minimum width=0.6cm, minimum height = 0.6cm] (B) at (13, -0.25) {};
    \node[draw=black, shape=circle, fill=none, minimum width=0.6cm, minimum height = 0.6cm] (C) at (15, -0.25) {};
    
    \node[draw=black, fill=black!30, shape=rectangle, minimum width=0.6cm, minimum height = 0.6cm] (D) at (11, -2) {};
    \node[draw=black, fill=black!30, shape=rectangle, minimum width=0.6cm, minimum height = 0.6cm] (E) at (13, -2) {};
    \node[draw=black, fill=black!30, shape=rectangle, minimum width=0.6cm, minimum height = 0.6cm] (F) at (15, -2) {};
    \node[draw = none] at (13, -3.1)  {$t = 1$};
    
    \draw[-] (A)--(D);
    \draw[-] (A)--(E);
    \draw[-] (B)--(D);
    \draw[-] (B)--(F);
    \draw[-] (C)--(D);
    \draw[-] (C)--(E);
    \draw[-] (C)--(F);
    
    \node[draw=black, shape=circle, fill=none, minimum width=0.6cm, minimum height = 0.6cm] (A2) at (2, 0) {};
    \node[draw=black, shape=circle, fill=none, minimum width=0.6cm, minimum height = 0.6cm] (B2) at (4.5, 0) {};
    \node[draw=black, shape=circle, fill=none, minimum width=0.6cm, minimum height = 0.6cm] (C2) at (7, 0) {};
    
    \node[draw=black, fill=blue!30, shape=rectangle, minimum width=0.6cm, minimum height = 0.6cm] (D2) at (1, -1.25) {};
    \node[draw=black, fill=blue!30, shape=rectangle, minimum width=0.6cm, minimum height = 0.6cm] (E2) at (2.166, -1.25) {};
    \node[draw=black, fill=blue!30, shape=rectangle, minimum width=0.6cm, minimum height = 0.6cm] (F2) at (3.333, -1.25) {};
    \node[draw=black, fill=blue!30, shape=rectangle, minimum width=0.6cm, minimum height = 0.6cm] (G2) at (4.5, -1.25) {};
    \node[draw=black, fill=blue!30, shape=rectangle, minimum width=0.6cm, minimum height = 0.6cm] (H2) at (5.666, -1.25) {};
    \node[draw=black, fill=blue!30, shape=rectangle, minimum width=0.6cm, minimum height = 0.6cm] (I2) at (6.833, -1.25) {};
    \node[draw=black, fill=blue!30, shape=rectangle, minimum width=0.6cm, minimum height = 0.6cm] (J2) at (8, -1.25) {};
    \node[draw = none] at (4.5, -3.1)  {$t = 0$};

    \node[draw=black, dashed, shape=circle, fill=orange!60, minimum width=0.6cm, minimum height = 0.6cm] (B3) at (0.6, -2.5) {};
    \node[draw=black, dashed, shape=circle, fill=orange!60, minimum width=0.6cm, minimum height = 0.6cm] (C3) at (1.3, -2.5) {};
    
    \draw[dashed] (D2) -- (B3);
    \draw[dashed] (D2) -- (C3);
    
    \node[draw=black, dashed, shape=circle, fill=orange!60, minimum width=0.6cm, minimum height = 0.6cm] (D3) at (2.166, -2.5) {};
    
    \draw[dashed] (E2) -- (D3);

    \node[draw=black, dashed, shape=circle, fill=orange!60, minimum width=0.6cm, minimum height = 0.6cm] (E3) at (2.9, -2.5) {};
    \node[draw=black, dashed, shape=circle, fill=orange!60, minimum width=0.6cm, minimum height = 0.6cm] (F3) at (3.6, -2.5) {};
    
    \draw[dashed] (F2) -- (E3);
    \draw[dashed] (F2) -- (F3);
    
    \node[draw=black, dashed, shape=circle, fill=orange!60, minimum width=0.6cm, minimum height = 0.6cm] (G3) at (4.5, -2.5) {};
    
    \draw[dashed] (G2) -- (G3);

    \node[draw=black, dashed, shape=circle, fill=orange!60, minimum width=0.6cm, minimum height = 0.6cm] (H3) at (5.3, -2.5) {};
    \node[draw=black, dashed, shape=circle, fill=orange!60, minimum width=0.6cm, minimum height = 0.6cm] (I3) at (6.0, -2.5) {};
    
    \draw[dashed] (H2) -- (H3);
    \draw[dashed] (H2) -- (I3);
    
     \node[draw=black, dashed, shape=circle, fill=orange!60, minimum width=0.6cm, minimum height = 0.6cm] (J3) at (6.833, -2.5) {};
    
    \draw[dashed] (I2) -- (J3);
    
    \node[draw=black, dashed, shape=circle, fill=orange!60, minimum width=0.6cm, minimum height = 0.6cm] (K3) at (8, -2.5) {};
    
    \draw[dashed] (J2) -- (K3);

    %\node[draw = none](X) at (8, -1)  {};
    %\node[draw = none] (Y) at (10, -1)  {};
    %\draw[->] (Y)--(X);
    
    \draw[-] (A2)--(D2);
    \draw[-] (A2)--(E2);
    
    \draw[-] (B2)--(F2);
    \draw[-] (B2)--(G2);
    
    \draw[-] (C2)--(H2);
    \draw[-] (C2)--(I2);
    \draw[-] (C2)--(J2);
    \end{tikzpicture}
\end{minipage}
    \caption{Illustration of the interpolation method at 'times' $t=0$ and $t=1$. %On the left side we visualise the interpolation method. By starting from the original factor graph model at time zero, the graph evolves over time while decreasing dependencies. At time $1$, we finally arrive at a factor graph model showing no dependencies between different variables.
    %On the r.h.s. the Aizenman-Sims-Starr scheme is shown. By starting from a graph $\G^*_0$ lacking a few factor nodes, we create a coupling of $\G^*_0$ and $\G^*(n, \vec m_\eps(n), \SIGMA^*_n)$ by adding a few random factor nodes as well as a coupling between $\G^*_0$ and $\G^*(n+1, \vec m_\eps(n+1), \SIGMA^*_{n+1})$ by inserting random factor nodes as well a a new variable node coming along with connected factor nodes.
    }
    \label{fig:interpolaizenman}
    
\end{figure}
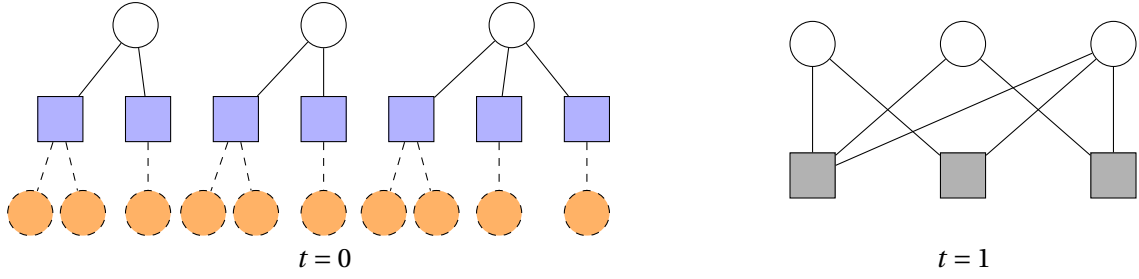

\Prop s~\ref{Prop_NoelaRough} and~\ref{Prop_NoelaRoughVar} in combination with a trick known as the Aizenman-Sims-Starr scheme yield the desired upper bound on the partition function.

\begin{proposition}\label{Prop_ASS}
We have $\Erw[\log Z(\G^*)]\leq n\sup_{\pi\in\PP_*(\Omega)}\cB(\pi)+o(n)$.
\end{proposition}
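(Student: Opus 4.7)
By \Prop~\ref{Prop_prune} we may assume that both $\vd$ and $\vk$ are bounded. The strategy is the Aizenman--Sims--Starr scheme: writing $\Phi_n=\Erw[\log Z(\G^*)]$ for the log-partition function of the teacher--student model on $n$ variable nodes, a Ces\`aro-type argument reduces the claimed bound $\Phi_n\leq n\sup_{\pi\in\PP_*(\Omega)}\cB(\pi)+o(n)$ to a matching upper bound on the one-step difference $\Delta_n=\Phi_{n+1}-\Phi_n$ as $n\to\infty$.

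To control $\Delta_n$ we work inside the cavity model $\G^*(n,\vm_\eps,\sigma)$ from \Sec~\ref{Sec_cavs}. The couplings in \Prop s~\ref{Prop_NoelaRough} and~\ref{Prop_NoelaRoughVar} allow us to view $\G^*(n{+}1,\vm_\eps+\vgamma,\sigma)$ as obtained from $\G^*(n,\vm_\eps,\sigma)$ by adding one fresh variable node $x_{n+1}$ together with its $\vgamma$ adjacent factor nodes, and $\G^*(n,\vm_\eps+1,\sigma)$ as obtained from $\G^*(n,\vm_\eps,\sigma)$ by adding one extra factor node. Using the elementary identity $Z(G+a)=Z(G)\,\langle\psi_a(\vsigma_{\partial a})\rangle_{\mu_G}$ -- and its analogue for the joint insertion of a fresh variable together with its incident factor nodes -- each such move contributes a Boltzmann average of a simple local weight. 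The sharp $\tilde O(1/n)$ and $\tilde O(1/n^2)$ bounds in the couplings guarantee that the telescoping identity for $\Delta_n$ incurs only $o(1)$ error, and that the $\eps$-Poisson slack introduces at most $O(\eps)$.

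The next step is to apply the Nishimori identity (\Prop~\ref{Prop_Nishi}) to each Boltzmann average: it lets us replace a function of $\vsigma\sim\mu_{\G^*}$ by the same function of spins drawn from the \emph{cavity marginals} $\mu_{\G^*,x}\in\cP(\Omega)$ of the variables touched by the inserted factor. Because an inserted factor node has degree distributed as $\hat\vk$ from \eqref{eqhatk} and attaches to variables chosen from a vanishingly small fraction of the graph, the neighbourhoods of distinct insertions are essentially disjoint, so the joint law of the involved cavity marginals factorises, up to $o(1)$ corrections, into the product measures that appear in \eqref{eqBFE}. Introduce the random measure $\pi^{(n)}\in\PP_*(\Omega)$ given by sampling a uniformly random variable node and returning its cavity marginal; the balance condition {\bf BAL} together with Nishimori ensures that the mean of $\pi^{(n)}$ is the barycenter of $\cP(\Omega)$. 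After this substitution, the two ratios from the previous paragraph match exactly the two terms in $\cB(\pi^{(n)})$.

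Assembling the above yields $\Delta_n=\Erw[\cB(\pi^{(n)})]+O(\eps)+o(1)\leq\sup_{\pi\in\PP_*(\Omega)}\cB(\pi)+O(\eps)$, and letting $\eps\to 0$ completes the proof. The main obstacle is the factorisation asserted in the previous paragraph: in a model with prescribed degrees the insertion of a factor node perturbs the rest of the graph in a nonlocal way, so routine independence is unavailable. It is precisely the refined couplings of \Sec~\ref{Sec_Noela}, with their quadratic-in-$1/n$ control on the symmetric difference, that are sharp enough to push these perturbations into the $o(1)$ regime after Nishimori.
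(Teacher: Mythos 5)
Your overall architecture coincides with the paper's: prune to bounded degrees, telescope via Aizenman--Sims--Starr, realise both $\G^*(n,\vm_\eps(n),\cdot)$ and $\G^*(n+1,\vm_\eps(n+1),\cdot)$ as perturbations of a common base graph via the couplings of \Sec~\ref{Sec_Noela}, and express each insertion's effect on $\log Z$ as a Boltzmann average of a local weight. However, there is a genuine gap in the step you yourself flag as the main obstacle. The factorisation of the joint law of the spins $\vsigma_{\vy_{i,j}}$ at the cavities touched by the inserted factor nodes is \emph{not} a consequence of the neighbourhoods of distinct insertions being essentially disjoint, nor of the $\tilde O(1/n^2)$ control on the symmetric difference in the couplings. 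Those couplings control the graph-level perturbation, but the obstruction is a property of the Boltzmann measure itself: in a sparse factor graph model $\mu_{\G^*}$ can exhibit long-range correlations (e.g.\ in the condensed phase), so spins at variables that are far apart, and a fortiori at variables with disjoint neighbourhoods, need not be approximately independent. Without approximate $k$-wise independence the one-step difference does not collapse to $\cB(\pi)$ for \emph{any} single $\pi\in\PP_*(\Omega)$, and the final inequality $\Delta_n\le\sup_\pi\cB(\pi)+O(\eps)$ does not follow.

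The paper closes this gap with the pinning perturbation: the model carries $\vtheta$ random unary factors that freeze a bounded random number of variables, and the pinning lemma (\Lem~\ref{Lemma_tpinning}) together with \Lem~\ref{lem_pairwise_symmetry} guarantees that, for a suitable choice of $\theta=\theta(\delta,\ell)$, the measure $\mu_{\G^*_{\eps,n}}$ is $\delta$-symmetric, i.e.\ the joint marginals on bounded sets of variables are within $\delta$ of the product of the one-point marginals on average. This, combined with the Nishimori identity, is exactly the content of \Lem~\ref{Lemma_sigma}, which licenses the replacement of $f(\vsigma^*_{\vy_{1,1}},\dots,\vsigma^*_{\vy_{1,\ell}})$ by the product of cavity marginals and hence the identification of the one-step difference with $\cB$ evaluated at the empirical distribution of cavity marginals. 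Your argument as written omits the pinning mechanism entirely, so you should either incorporate it or supply an alternative justification for the factorisation; appealing to the couplings alone will not do.
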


To prove \Prop~\ref{Prop_ASS} it suffices to establish the corresponding upper bound for $\G^*(n,\vm_\eps,\vsigma^*)$.
This is because similar but simpler arguments as in the proof of \Prop~\ref{Prop_NoelaRough} show that $\Erw[\log Z(\G^*)]=\Erw[\log Z(\G^*(n,\vm_\eps,\vsigma^*)]+O(\eps n)$.
Its proof can be found in \Sec~\ref{prune}.
Now, the Aizenman-Sims-Starr scheme 
for calculating the latter quantity is to write a telescoping sum
\begin{align*}
\Erw[\log Z(\G^*(n,\vm_\eps,\vsigma^*))]&=\sum_{N=1}^n\Erw[\log Z(\G^*(N+1,\vm_\eps(N+1),\vsigma^*_{N+1}))] -\Erw[\log Z(\G^*(N,\vm_\eps(N),\vsigma^*_N))].
\end{align*}
Hence, it suffices to bound the individual summands on the r.h.s., i.e., the differences
\begin{align}\label{eqProp_ASS_expl}
\Erw[\log Z(\G^*(n+1,\vm_\eps(n+1),\vsigma^*_{n+1}))] -\Erw[\log Z(\G^*(n,\vm_\eps(n),\vsigma^*_n))].
\end{align}
To this end we couple these two random factor graphs.
This is where \Prop s~\ref{Prop_NoelaRough} and~\ref{Prop_NoelaRoughVar} enter the fray.
Specifically, we think of both these factor graphs as being obtained from a smaller factor graph $\G_0^*$ that with variables nodes $x_1,\ldots,x_n$ and slightly fewer factor nodes than either of the two target factor graphs.
Then we obtain $\G^*(n,\vm_\eps(n),\vsigma_n^*)$ by adding a few random factors to $\G_0^*$.
Similarly, we obtain $\G^*(n+1,\vm_\eps(n+1),\vsigma^*_{n+1})$ from $\G_0^*$ by adding a few new random factor nodes as well as a new variable node $x_{n+1}$ along with a number of adjacent factor nodes.
Crucially, \Prop s~\ref{Prop_NoelaRough} and~\ref{Prop_NoelaRoughVar} provide the necessary accuracy to trace the impact of these manipulations on the partition function, and the Bethe functional emerges organically as an upper bound on \eqref{eqProp_ASS_expl}.

To obtain the matching lower bound we seize upon the interpolation method.
The basic idea is to set up a family of random factor graph models parametrised by time $t\in[0,1]$ such that the model at time $t=1$ coincides with $\G^*(n,\vm_\eps,\vsigma^*)$ while the model at time $t=0$ is so simple that its partition function can be read off easily.
In fact, the partition function of the $t=0$ model turns out to be $\sup_\pi\cB(\pi)$.
To derive the desired lower bound we prove that the derivative of the log-partition function remains non-negative as we increase $t$.
As in the Aizenman-Sims-Starr scheme, the computation of the derivative can be reduced to tracing the impact of local changes.
Hence, once more we bring \Prop~\ref{Prop_NoelaRough} to bear, this time in combination with the convexity assumption {\bf POS}, to prove the following.

\begin{proposition}\label{Prop_int}
We have $\Erw[\log Z(\G^*)]\geq n\sup_{\pi\in\PP_*(\Omega)}\cB(\pi)+o(n)$.
\end{proposition}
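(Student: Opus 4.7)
The plan is to apply the Franz--Leone--Guerra sparse interpolation, adapted to the teacher--student scheme with given degrees. Fix a probability measure $\pi \in \PP_*(\Omega)$; since the statement is a supremum over $\pi$, it suffices to prove $\Erw[\log Z(\G^*)] \geq n\cB(\pi) + o(n)$ for each such $\pi$. By Proposition~\ref{Prop_prune} I may assume throughout that $\vd$ and $\vk$ are bounded, so that the cavity construction underlying $\vm_\eps$ is applicable.

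First I would introduce an interpolating family $(\G_t^*)_{t\in[0,1]}$ of planted factor graph models on variable nodes $x_1,\ldots,x_n$ with degree targets $\vd_i$. At time $t$ the graph carries $\vm_\eps(t) \disteq \Po(t(1-\eps)n\Erw[\vd]/\Erw[\vk])$ interaction factor nodes $a_1,\ldots,a_{\vm_\eps(t)}$ of degrees $\vk_j$, drawn uniformly subject to the residual degree bounds as in $\G(n,\vm_\eps)$ and reweighted against $\vsigma^*$ via the rule \eqref{eqG*}. The remaining half-edges at each $x_i$ are closed off by independent cavity degree-one factor nodes whose weight functions are sampled so that, after Nishimori-planting against $\vsigma^*$, the message emitted to $x_i$ is distributed as $\vmu\sim\pi$. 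By construction, $\G_1^*$ coincides with $\G^*(n,\vm_\eps,\vsigma^*)$ up to $O(\eps n)$ corrections of the log-partition function that can be absorbed by the same argument that reduces $\G^*$ to $\G^*(n,\vm_\eps,\vsigma^*)$ in the proof of Proposition~\ref{Prop_ASS}, while $\G_0^*$ is a disjoint union of $n$ stars.

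The next step is to evaluate the boundary contributions and the $t$-derivative. At $t=0$ the partition function factorises over stars; applying Proposition~\ref{Prop_Nishi} to integrate out the planted cavity weights and using the size-biased identity \eqref{eqhatk} to convert leaf-to-center degree distributions yields
$$\frac{1}{n}\Erw[\log Z(\G_0^*)] = \frac{1}{q}\Erw\brk{\xi^{-\vd}\Lambda\bc{\sum_{\sigma\in\Omega}\prod_{i=1}^{\vd}\sum_{\tau\in\Omega^{\hat\vk_i}}\vecone\cbc{\tau_{\vh_{\hat\vk_i,i}}=\sigma}\vpsi_{\hat\vk_i,i}(\tau)\prod_{j\in[\hat\vk_i]\setminus\{\vh_{\hat\vk_i,i}\}}\vmu_{i,j,\pi}(\tau_j)}} + o(1),$$
which is precisely the first summand of $\cB(\pi)$. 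To compute $\frac{d}{dt}\Erw[\log Z(\G_t^*)]$ I would apply Proposition~\ref{Prop_NoelaRough}: an infinitesimal increment in $t$ corresponds, in expectation, to the insertion of one random interaction factor of arity $\hat\vk$ and the removal of $\hat\vk$ cavity factors. The resulting change in $\log Z$ is a difference of two $\Lambda$-expectations, and the Nishimori identity allows me to rewrite the Boltzmann marginals of the current graph as fresh samples from $\pi$.

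The heart of the proof, and its main obstacle, is to show that this derivative is bounded below by $-\frac{\Erw[\vd]}{\xi\Erw[\vk]}\Erw[(\vk-1)\Lambda(\sum_{\tau\in\Omega^{\vk}}\vpsi_{\vk}(\tau)\prod_j\vmu_{1,j,\pi}(\tau_j))]$, which matches the negative of the factor-node term in $\cB(\pi)$. After the Nishimori rewriting the required bound falls in the scope of assumption \textbf{POS}, applied with $\rho = \rho_t$ (the current Boltzmann message empirical measure) and $\rho'=\pi$; exchangeability and the fact that under Nishimori-planting $\rho_t$ and $\pi$ interact symmetrically reduce the requested inequality to a single instance of \textbf{POS}. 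Integrating the resulting differential inequality from $t=0$ to $t=1$ and then sending $\eps\to 0$ gives $\Erw[\log Z(\G^*)]\geq n\cB(\pi)+o(n)$, and taking the supremum over $\pi$ yields Proposition~\ref{Prop_int}. The technically delicate point is that the $\tilde O(1/n)$ precision of Proposition~\ref{Prop_NoelaRough} must be propagated uniformly through the $\Theta(n)$ interpolation steps, since a weaker coupling would inflate the accumulated error past $\Theta(n)$ and wash out the Bethe term we are trying to recover.
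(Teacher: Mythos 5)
Your proposal follows essentially the same route as the paper: a Guerra--Franz--Leone interpolation between the planted model and a forest of stars decorated with $\pi$-distributed unary "message" factors, with the $t$-derivative computed via Poisson increments and the one-factor coupling of \Prop~\ref{Prop_NoelaRough}, the $t=0$ boundary giving the variable term of $\cB(\pi)$, the linear correction $\Gamma_t$ supplying the factor term, and \textbf{POS} closing the differential inequality. Your error accounting is also right: the coupling error enters $\Delta_t$ as $o(1)$ uniformly in $t$, and since the derivative is $\Theta(1)\cdot(\Delta_t-\Delta_t'+\Delta_t'')$ per unit of $t$, integration costs only $o(1)$ per variable.

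The one substantive gap is the step "the Nishimori identity allows me to rewrite the Boltzmann marginals of the current graph as fresh samples from $\pi$," which is where you invoke \textbf{POS} with $\rho=\rho_t$. \textbf{POS} is stated for \emph{independent} samples $\vmu_{i,1,\rho}$, so to apply it you must replace the joint Boltzmann average $\bck{\vpsi_{\vk}(\vsigma_{\vy_1},\ldots,\vsigma_{\vy_{\vk}})}$ over the $\vk$ randomly chosen cavities by the product of the single-site marginals. Nishimori alone does not give this factorisation; the interpolating model must be perturbed by $\vtheta\in[T]$ random pinning factors so that the pinning lemma (\Lem~\ref{Lemma_tpinning}, together with \Lem~\ref{lem_pairwise_symmetry}) forces $\delta$-symmetry of $\mu_{\hat\G_{t,\eps,\pi,T}}$, and the resulting $O(\delta)$ error must then be removed at the end by letting $T\to\infty$ after $n\to\infty$. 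Without this perturbation the reduction to "a single instance of \textbf{POS}" does not go through; with it, your argument coincides with the paper's proof of \Prop~\ref{Lemma_interpolation}.
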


\noindent
Finally, combining \Prop~\ref{Prop_IZ}--\ref{Prop_int}, we obtain \Thm~\ref{Thm_main}.

\subsection{Discussion}\label{Sec_discussion}
There has been a great deal of interest in inference problems on random factor graphs recently.
The substantial literature on the stochastic block model alone, much of it devoted to corroborating the predictions from~\cite{Decelle}, is surveyed in~\cite{Abbe,MooreSurvey}.
The literature on applications to modern coding theory until about 2008 is surveyed in~\cite{RichardsonUrbanke}; important newer contributions include~\cite{KRU,KYMP}.
Further recent applications include compressed sensing~\cite{Donoho_2006, Donoho_2013}, group testing~\cite{Aldridge_2019, Coja_2020}, code-division multiple access~\cite{Guo_2008, Raymond_2007} and the patient zero problem~\cite{Altarelli_2014}.
Apart and beyond this rigorous literature, there is a vast body of work based on either physics techniques such as the cavity method or computer experiments.

The great variety of concrete problems studied individually underscores the potential of generic proof techniques or, even better, general theorems that rigorise these predictions wholesale.
A first contribution has been made by Coja-Oghlan, Krzalaka, Perkins and Zdeborov\'a~\cite{CKPZ}, who studied the teacher-student model on binomial random factor graph models.
While the general proof strategy that we pursue here is guided by that paper, the present factor graph models are more general by allowing prescribed degree sequences for both the variable and factor nodes.
From an application viewpoint this generality is highly desirable because, for example, the quality of an error correcting code or a group testing scheme can be boosted by optimising the degree distribution~\cite{RichardsonUrbanke}.
However, from a technical viewpoint this generality comes at the cost of losing (conditional) independence among the factor nodes.
This issue is well known in random graph theory, where random graphs with given degrees require far more intricate proofs than, e.g., the \Erdos--\Renyi\ model~\cite{Janson_2011}.
Here, these difficulties are exacerbated by the fact that we study not just the plain random graph, which serves as a our null model, but the reweighted random graph distribution induced by the teacher-student scheme.
In effect, many of the steps that were straightforwards in~\cite{CKPZ} become rather delicate due to stochastic dependencies.
The key tool that allows us to cope with these dependencies is \Prop~\ref{Prop_NoelaRough}.
Thus, while we follow the strategy from~\cite{CKPZ} of combining the Aizenman-Sims-Starr scheme with the interpolation method and although we adopt some of the technical ingredients from that work such as the `pinning lemma', the greater generality of the model leads us to crystallise and improve over the previous approach.

What are alternatives to the present strategy of combining the Aizenman-Sims-Starr scheme with the interpolation method?
A classical approach to inference problems on random graphs is the second moment method~\cite{Banks}.
Unfortunately, this approach does not generally allow for tight information-theoretic results.
The reason is that the precise formula for the mutual information or the information-theoretic threshold in, e.g., the stochastic block model comes in terms of the optimiser of the Bethe free entropy functional.
The distribution $\pi$ where the maximum is obtained mirrors the outcome of a complicated message passing process.
Intuitively, $\pi$ is an idealised version of the empirical distribution of Belief Propagation messages that whiz around the factor graph upon convergence when launched from either a uniform initialisation or from the completely polarised initialisation corresponding to the ground truth.
In some examples this fixed point can be characterised precisely and, unsurprisingly, turns out to be anything but trivial~\cite{Cond}.
But we cannot expect the expressiveness required for such a complicated object from a plain second moment computation.
A second conceptually elementary approach is to actually compute the message passing fixed point by hand, e.g., via the contraction method.
But due to the intricacy of the calculations this method has been pushed through in only a few special cases~\cite{Mossel_2015}.

Further powerful techniques include spatial coupling~\cite{GMU} and the adaptive interpolation method~\cite{BarbierChanMacris}.
Both potentially allow for precise results.
The basic idea behind spatial coupling is to convert the given model into a factor graph model with a superimposed geometric structure.
%Specifically, the variable and factor nodes are set out along a line and only connect to other variables and factors within a relatively small window.
A plus of spatial coupling is that it sometimes allows for better inference algorithms. %than the original model.
A disadvantage is that the construction has to be carried out case-by-case.
By comparison, the adaptive interpolation method has the advantage of being technically relatively clean.
However, at least on sparse models its combinatorial nuts and bolts appear to be roughly equivalent to the combination of Aizenman-Sims-Starr and the interpolation argument used here.
Furthermore, the latter approach has the merit of being closer in spirit to the physicists' cavity calculation.
In addition, at this time the adaptive interpolation method has not been extended to models with given general degree sequences. %such as the ones studied here. 

Further, there has been quite some work on dense random factor graph models where each variable appears in a constant fraction of factor nodes.
Examples are spiked matrix/tensor models~\cite{Barbier_2019} or models of neural networks such as the Hopfield model~\cite{Amit_1985, Mezard_2017}.
These methods are closer in nature to the classical Sherrington-Kirkpatrick model~\cite{PanchenkoBook}. %, for which proof techniques such as the interpolation method were originally developed~\cite{PanchenkoBook}.
It seems fair to say that more is known about dense models than sparse ones because certain central limit theorem-like simplifications arise.
In some cases, the Bethe variational principle reduces to a finite-dimensional or even scalar optimisation problem~\cite{Dia_2016, Lelarge}.
%An important practical application is a new powerful approach to compressed sensing~\cite{Donoho_2006, Donoho_2013}.

% Finally, there has been quite some work on dense random factor graph models where each variable appears in all the factor nodes, or at least in a constant fraction.
% Examples include spiked matrix or tensor models~\cite{Barbier_2019} or models of neural networks such as the Hopfield model~\cite{Amit_1985, Mezard_2017}.
% These methods are closer in nature to the classical Sherrington-Kirkpatrick spin glass model, for which some of the proof techniques such as the interpolation method were originally developed~\cite{PanchenkoBook}.
% By and large, it seems fair to say that more is known about dense models than sparse ones because certain central limit theorem-like simplifications arise.
% For instance, in some cases the Bethe variational principle reduces to a finite-dimensional or even scalar optimisation problem~\cite{Dia_2016, Lelarge}.
% An important practical application of a dense model is a new powerful approach to the compressed sensing problem~\cite{Donoho_2006, Donoho_2013}.

To conclude we note that the study of inference problems typically comes in two instalments: an information-theoretic view that asks for thresholds beyond which in principle sufficient information is available to form a non-trivial estimate of the ground truth and an algorithmic view interested in polynomial-time algorithms. While the two perspectives might appear disparate at first glance, infor\-mation-theoretic results on inference problems like in this paper in combination with tools such as spatial coupling have in the past led to efficient algorithms capable of attaining the information-theoretic thresholds \cite{Coja_2020, Donoho_2013}. We view this as an exciting avenue for future research.

\subsection{Organisation}\label{Sec_org}
In \Sec~\ref{Sec_groundwork} we introduce an extension of the random factor graph model from \Sec~\ref{Sec_main} that incorporates the bells and whistles required to facilitate the proofs of \Prop s~\ref{Prop_ASS} and~\ref{Prop_int}.
The section also contains the proofs of \Prop s~\ref{Prop_Nishi} and ~\ref{Prop_NoelaRough}.
\Sec~\ref{vom}--\ref{typ_ass} lay the foundation to prove \Prop~\ref{Prop_IZ} in \Sec~\ref{mi}.
Similarly, \Sec~\ref{conc} will be used in \Sec~\ref{prune} to prove \Prop~\ref{Prop_prune}.
Subsequently in \Sec~\ref{Sec_Prop_ASS} we prove \Prop~\ref{Prop_ASS}.
The proof of \Prop~\ref{Prop_int} follows in \Sec~\ref{Sec_Prop_int}.
In \Sec~\ref{Sec_applications} we prove the results stated in \Sec~\ref{Sec_intro} and also point out a few further applications of the theorems from \Sec~\ref{Sec_main}.
Two further extensions of our results can be found in \Sec~\ref{klcondp}.  

\section{Groundwork}\label{Sec_groundwork}

\subsection{A generalised model}\label{Sec_generalised}
To facilitate the various parts of the proof we introduce one unified random factor graph model and supply a few tools for analysing it.
The generic model has variable nodes $V_n=\{x_1,\ldots,x_n\}$ and factor nodes $F_m=\{a_1,\ldots,a_m\}$.
Each variable node comes with a target degree $d_i\geq0$.
The sequence $(d_1,\ldots,d_n)$ is denoted by $\sd$.
Similarly, each factor node $a_i$ comes with a target degree $k_i\geq0$ and we let $\sk=(k_1,\ldots,k_m)$.
The degrees are required to satisfy the condition
\begin{align}\label{eqdegCond}
\sum_{i=1}^n d_i\geq\sum_{i=1}^m k_i.
\end{align}
Every $i\in[m]$ comes with a finite set $\Psi_i$ of weight functions $\Omega^{k_i}\to(0,\infty)$, each of which is equipped with a probability measure $P_i$.
Let $\sP=(P_1,\ldots,P_m)$.

The random factor graph $\G(\sd,\sk,\sP,\theta)$ is now defined as follows.
Let $\vGamma$ be a random maximal matching of the complete bipartite graph with vertex classes
\begin{align*}
\bigcup_{i=1}^n\cbc{x_i}\times[d_i]&&\mbox{and}&&\bigcup_{i=1}^m\cbc{a_i}\times[k_i].
\end{align*}
Then the bipartite graph underlying $\G(\sd,\sk,\sP,\theta)$ is obtained from $\vGamma$ by contracting the vertex sets $\cbc{x_i}\times[d_i]$ and $\cbc{a_j}\times[k_j]$ for all $i\in[n]$ and all $j\in[m]$.
Thus, the construction is similar to the well known pairing model for random graphs with given degree sequences.
Strictly speaking, the result of this process is a bipartite multigraph.
We turn this multigraph into a factor graph by drawing for each $a_i$ a weight function $\psi_{a_i}$ from the distribution $P_{i}$ independently.
Furthermore, we add few unary factor nodes $p_1,\ldots,p_\theta$.
For each $p_i$  we let $\partial p_i=\{x_{i}\}$.
Moreover, with $\vomega_i\in\Omega$ drawn independently and uniformly, the weight function of $p_i$ reads
\begin{align*}
\PSI_{p_i}(\sigma)&=\vecone\cbc{\sigma=\vomega_i}.
\end{align*}
The random factor graph induces a Boltzmann distribution  and partition function defined via \eqref{eqZ}. 
Furthermore, $\G(\sd,\sk,\sP,\theta)$ induces the reweighted factor graph distribution $\hat\G(\sd,\sk,\sP,\theta)$ 
defined by 
\begin{align} \label{eq_hatG}
\pr\brk{\hat\G(\sd,\sk,\sP,\theta)\in\cA}&=\frac{\Erw[Z(\G(\sd,\sk,\sP)\vecone\{\G(\sd,\sk,\sP,\theta)\in\cA\}]}{\Erw[Z(\G(\sd,\sk,\sP,\theta)]}&&\mbox{for any event }\cA.
\end{align}
Further, given $\sigma\in\Omega^{V_n}$ we define $\G^*(\sd,\sk,\sP,\theta,\sigma)$ by
\begin{align} \label{eq_starG}
\pr\brk{\G^*(\sd,\sk,\sP,\theta,\sigma)\in\cA}&=\frac{\Erw[\psi_{\G(\sd,\sk,\sP,\theta)}(\sigma)\vecone\{\G(\sd,\sk,\sP,\theta)\in\cA\}]}{\Erw[\psi_{\G(\sd,\sk,\sP,\theta)}(\sigma)]}&&\mbox{for any event }\cA.
\end{align}
Finally, we obtain an induced distribution $\hat\vsigma(\sd,\sk,\sP,\theta)$ on assignments via
\begin{align} \label{eq_hatsigma}
\pr\brk{\hat\vsigma(\sd,\sk,\sP,\theta)=\sigma}&=\frac{\Erw[\psi_{\G(\sd,\sk,\sP,\theta)}(\sigma)]}{\Erw[Z(\G(\sd,\sk,\sP,\theta))]}.
\end{align}

\subsection{Getting started}
The factor graph model and the corresponding Boltzmann distribution facilitate delicate correlations between the spins of different vertices. To cope with them technically, we are in the lucky position that any finite probability space can be partitioned into finitely many sets (so-called \textit{pure states}) such that a given probability measure behaves like a product measure on these states.
\begin{lemma}[Regularity Lemma, \cite{COPS}]
\label{lem_regularitylemma}
For any finite set $\Omega$ and for all $\eps > 0$ there are $L > 0$ and $N > 0$ such that for all $n \geq N$ and all $\mu \in \cP( \Omega^n )$ we find a partition $S_1, \ldots, S_\ell$ of $\Omega^n$ into finitely many parts ($1 \leq \ell \leq L$) such that
\begin{itemize}
    \item $\sum_{i=1}^\ell \mu(S_i) \geq 1 - \eps$,
    \item for all $i$ we find $\mu(S_i) > 0$ and $\Erw \brk{ \dTV(\mu_{j,k}[\cdot \mid S_i] - \mu_{j}[\cdot \mid S_i] \otimes \mu_{k}[\cdot \mid S_i])} \leq \eps$.
\end{itemize}
\end{lemma}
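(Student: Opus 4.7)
The plan is to prove this via an iterative pinning-refinement argument combined with an energy increment, in the spirit of Szemer\'edi-style regularity lemmas adapted to product spaces.  I would first fix a potential function on partitions $\cS = \{S_1, \ldots, S_\ell\}$ of $\Omega^n$, namely
\begin{align*}
\Phi(\cS) = \sum_{i=1}^{\ell} \mu(S_i)\sum_{j=1}^{n} H\bc{\mu_{j}[\nix \mid S_i]},
\end{align*}
where $\mu_j[\nix\mid S_i]$ is the marginal of the conditional distribution on coordinate $j$, and $H$ is Shannon entropy.  Crucially, if $\cS'$ is a refinement of $\cS$ obtained by splitting some $S_i$ according to the value of $\sigma_j$ on $S_i$, then a straightforward chain-rule calculation shows that $\Phi(\cS') - \Phi(\cS) = \mu(S_i)\, I(\sigma_j; S_i\text{-refinement}\mid S_i)$, a nonnegative quantity.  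Moreover, $\Phi$ is bounded above by $n \log|\Omega|$, but after normalising by $n$ it lives in $[0, \log|\Omega|]$, so only boundedly many significant refinements can occur.

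The iteration is the pinning scheme.  Start with the trivial partition $\cS_0 = \{\Omega^n\}$.  At stage $t$, if the current $\cS_t$ already satisfies $\sum_i \mu(S_i)\, \Erw_{j,k}\brk{\dTV(\mu_{j,k}[\nix\mid S_i], \mu_j[\nix\mid S_i] \otimes \mu_k[\nix\mid S_i])} \le \eps$ for a uniformly random pair $(j,k)$, then we stop.  Otherwise, by averaging there exists $i$ with $\mu(S_i) \ge $ (something non-trivial) and with the expected TV excess on $S_i$ at least $\eps/2$; Pinsker's inequality then upgrades this to a lower bound on the average mutual information $\Erw_{j,k}\brk{I(\sigma_j; \sigma_k \mid S_i)} \gtrsim \eps^2$.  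Picking a coordinate $j$ that maximises $\sum_k I(\sigma_j; \sigma_k \mid S_i)$ and refining $S_i$ by pinning $\sigma_j$, I obtain a new partition $\cS_{t+1}$ with $\Phi(\cS_{t+1}) - \Phi(\cS_t) \gtrsim \eps^2 \mu(S_i) \cdot (1/n) \cdot$ (constant); rescaling by $n$ shows the normalised potential increases by $\Omega_\eps(1)$ per step.  Hence the procedure halts after $t \le L_0(\eps) := O(\log|\Omega|/\eps^{O(1)})$ steps, producing a partition of size at most $|\Omega|^{L_0(\eps)}$.

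Finally, to meet the $\mu(S_i)>0$ and $\sum \mu(S_i) \ge 1 - \eps$ requirements, I would discard or merge into a residual class all blocks with mass below $\eps / L_0(\eps)$; these together carry mass at most $\eps$, and we simply omit them from the final enumeration.  The remaining blocks still satisfy the expected-TV bound up to an $O(\eps)$ slack, which can be absorbed by starting the iteration from a slightly tighter threshold.  Choosing $N$ and $L$ as functions of $\eps$ and $|\Omega|$ in accordance with the bookkeeping above yields the lemma.

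The main obstacle is quantifying the per-step entropy gain uniformly in $n$.  The subtlety is that the natural bound from Pinsker controls $I(\sigma_j;\sigma_k \mid S_i)$ by $\eps^2$, but we need a coordinate $j$ such that pinning on $j$ alone produces a \emph{uniform} increment in the normalised potential $\Phi/n$.  Averaging over $k$ and choosing the best $j$ is what closes this gap, but one must be careful that the Pinsker-type lower bound on pairwise mutual information survives the averaging over $(j,k)$ and the restriction to blocks with non-negligible mass.  Verifying this delicate quantitative step --- and confirming that the resulting $L_0(\eps)$ depends only on $\eps$ and $|\Omega|$ and not on $n$ --- is where the proof actually has content; the rest is bookkeeping.
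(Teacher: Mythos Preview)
The paper does not prove this lemma itself; it is quoted from \cite{COPS}. Your energy-increment-via-pinning argument is essentially the standard proof and is the approach taken there, so the overall strategy is correct.

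Two small points. First, the sign: refining $S_i$ by pinning $\sigma_j$ \emph{decreases} $\Phi$, since conditioning reduces entropy; one computes $\Phi(\cS') - \Phi(\cS) = -\mu(S_i)\sum_k I(\sigma_j;\sigma_k\mid S_i)$. This is cosmetic---just track decrements of a potential bounded below by zero. Second, and more substantively, to get an increment that is uniform in $n$ without having to exhibit a single block of both large mass and large irregularity, refine \emph{every} block at once, each by its own best coordinate. Pinsker together with Cauchy--Schwarz gives $\frac{1}{n}\max_j\sum_k I(\sigma_j;\sigma_k\mid S_i)\ge 2T_i^2$ where $T_i$ is the average pairwise total variation on $S_i$, and a second application of Cauchy--Schwarz gives $\sum_i\mu(S_i)T_i^2\ge\bigl(\sum_i\mu(S_i)T_i\bigr)^2>\eps^2$. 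Hence $\Phi/n$ drops by at least $2\eps^2$ per round regardless of $n$, and the iteration terminates after at most $\log|\Omega|/(2\eps^2)$ rounds, yielding $L\le|\Omega|^{\log|\Omega|/(2\eps^2)}$. This closes exactly the gap you identify as the main obstacle.
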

The regularity lemma itself deals with pairwise interactions between vertex spins. It turns out, that this pairwise approximate independence generalizes to an approximate independence between any bounded number of vertex spins.
\begin{lemma}[Symmetry, \cite{Will2}]
\label{lem_pairwise_symmetry}
For any finite set $\Omega$ and any measure $\mu \in \cP( \Omega^n )$ we find that for any $k \geq 2$
\begin{align*}
    \Erw \brk{ \dTV \bc{ \mu_{i,j}, \mu_i \otimes \mu_j }} = o(1) \Longrightarrow \Erw \brk{ \dTV \bc{ \mu_{i_1, \ldots i_k}, \bigotimes_{i=1}^k \mu_i } } = o(1).
\end{align*}
\end{lemma}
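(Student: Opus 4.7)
I would prove the lemma by induction on $k \geq 2$, the case $k=2$ being the hypothesis. For the inductive step, apply the triangle inequality together with the observation that tensoring with a common factor does not change total variation:
\begin{align*}
\dTV\bc{\mu_{i_1,\ldots,i_k}, \bigotimes_{j=1}^k \mu_{i_j}}
\leq \dTV\bc{\mu_{i_1,\ldots,i_k}, \mu_{i_1,\ldots,i_{k-1}} \otimes \mu_{i_k}} + \dTV\bc{\mu_{i_1,\ldots,i_{k-1}}, \bigotimes_{j=1}^{k-1}\mu_{i_j}}.
\end{align*}
Averaging over independent uniform $\vi_1,\ldots,\vi_k \in [n]$, the second summand is $o(1)$ by the inductive hypothesis.

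For the first summand, I would pass to conditional marginals. Since
\begin{align*}
\dTV\bc{\mu_{i_1,\ldots,i_k}, \mu_{i_1,\ldots,i_{k-1}} \otimes \mu_{i_k}} = \Erw_{\vsigma\sim\mu}\brk{\dTV\bc{\mu_{i_k}(\cdot\mid \vsigma_{i_1},\ldots,\vsigma_{i_{k-1}}), \mu_{i_k}}},
\end{align*}
a further telescoping yields
\begin{align*}
\dTV\bc{\mu_{i_k}(\cdot\mid \vsigma_{i_1},\ldots,\vsigma_{i_{k-1}}), \mu_{i_k}}
\leq \sum_{r=1}^{k-1} \dTV\bc{\mu_{i_k}(\cdot\mid \vsigma_{i_1},\ldots,\vsigma_{i_r}), \mu_{i_k}(\cdot\mid \vsigma_{i_1},\ldots,\vsigma_{i_{r-1}})},
\end{align*}
so it suffices to control each conditional pairwise term on average over the uniform indices and over $\vsigma \sim \mu$.

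The main obstacle is that pairwise near-independence under $\mu$ is not automatically preserved after conditioning on finitely many fixed coordinate values, so the heart of the argument is bootstrapping the pairwise hypothesis to cope with conditioning. To do this I would invoke the Regularity Lemma (Lemma~\ref{lem_regularitylemma}) with a small parameter $\eps'$, producing a partition of $\Omega^n$ into pure states $S_1,\ldots,S_\ell$ on each of which the conditional measure $\mu[\cdot\mid S_r]$ retains pairwise near-factorisation on average. Within each pure state the conditional measures behave almost like products, so each conditional pairwise term is controlled by the pairwise hypothesis applied to $\mu[\cdot\mid S_r]$, giving a bound of order $O_k(\eps')$ after summing over the $k-1$ telescoping terms and over the $\ell$ pure states (whose masses sum to at least $1-\eps'$). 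Sending $\eps'\to 0$ then closes the induction. A delicate technical point is guaranteeing that the conditional probabilities appearing in the telescope remain bounded below so that the pairwise bound transfers faithfully from a pure state to its further conditioning; the finiteness of $\Omega$ together with the uniform bounds $\eps<\psi<1/\eps$ from condition \textbf{SYM} (inherited by the Boltzmann distributions we apply this to) is what should make this manoeuvre legitimate.
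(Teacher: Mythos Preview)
The paper does not supply its own proof of this lemma; it is simply quoted from \cite{Will2}, so there is no in-paper argument to compare against.

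Your proposal, however, contains a genuine gap at the crucial step. The induction set-up, the triangle inequality, the identity $\dTV(\mu_{i_1,\ldots,i_k},\mu_{i_1,\ldots,i_{k-1}}\otimes\mu_{i_k})=\Erw_{\vsigma\sim\mu}\brk{\dTV(\mu_{i_k}(\cdot\mid\vsigma_{i_1},\ldots,\vsigma_{i_{k-1}}),\mu_{i_k})}$, and the subsequent telescope are all correct. The problem is the claim that the Regularity Lemma disposes of the telescoped conditional pairwise terms. Lemma~\ref{lem_regularitylemma} only guarantees that $\mu[\,\cdot\mid S_r]$ is \emph{pairwise} $\eps'$-symmetric on each pure state $S_r$; it does \emph{not} say that $\mu[\,\cdot\mid S_r]$ is close to a product measure. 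Pairwise near-independence is strictly weaker: three bits with $\sigma_1\oplus\sigma_2\oplus\sigma_3=0$ are exactly pairwise independent yet far from a product. Consequently, inside a pure state, further conditioning on the coordinate values $\vsigma_{i_1},\ldots,\vsigma_{i_{r-1}}$ can again create strong correlations between $\vsigma_{i_r}$ and $\vsigma_{i_k}$, so the conditional pairwise terms in your telescope are \emph{not} controlled by the pairwise hypothesis for $\mu[\,\cdot\mid S_r]$. You are left facing exactly the obstacle you set out to overcome, now one level deeper.

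Your attempted rescue via {\bf SYM} does not help, for two reasons. First, the lemma is stated for an arbitrary $\mu\in\cP(\Omega^n)$, with no Boltzmann structure assumed, so {\bf SYM} is simply unavailable. Second, even for the Boltzmann measures of this paper, the bounds $\eps<\psi<1/\eps$ do not prevent conditioning on finitely many coordinates from producing strong dependencies among the rest. The argument in \cite{Will2} handles the cross term $\dTV(\mu_{i_1,\ldots,i_k},\mu_{i_1,\ldots,i_{k-1}}\otimes\mu_{i_k})$ directly, without descending into conditional-on-values pairwise terms and without invoking the Regularity Lemma; you should consult that reference for the missing idea.
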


We are left to find a partition of $\Omega^n$ into pure-states. It turns out that the \textit{pinning operation} (that is, assigning specific values to a small number of variables) yields a regular partition.

\begin{lemma}[Pinning Lemma, Lemma 3.5 of \cite{CKPZ}] \label{Lemma_tpinning}
Let $\Omega$ be a finite set. For all $\eps > 0$ there is a number $T = T(\eps, \Omega)$ such that for any $n > T$ and any probability measure $\mu \in \cP(\Omega^n)$ we find the following. We create a random probability measure $\check \mu \in \cP(\Omega^n)$ as follows. 
\begin{itemize}
    \item Draw a sample $\check \SIGMA$ from $\mu$.
    \item Independently, choose $\vec \Theta in (0, T)$ uniformly at random.
    \item Create a random subset $U$ of $[n]$ by including each $i \in [n]$ independently with probability $\vec \Theta / n$.
    \item Finally, define $$ \check \mu (\sigma)  = \frac{\mu (\sigma) \vecone \cbc{ \forall i \in U: \check \SIGMA_i = \sigma_i }}{ \mu( \cbc{ \tau \in \Omega^n : \forall i \in U: \tau_i = \check \SIGMA_i } ) }.$$
\end{itemize}

Then, with probability at least $1 - \eps$ we find $$ \Erw_{i, j} \brk{ \dTV \bc{ \check \mu_{i,j}, \check \mu_i \otimes \check \mu_j }} < \eps. $$
\end{lemma}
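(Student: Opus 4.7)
The plan is to reduce the claimed total variation bound to a statement about mutual information, which is in turn controlled by an entropy–telescoping argument. Throughout, let $\SIGMA\sim\mu$ and observe that the random probability measure $\check\mu$ has the same distribution as $\mu[\,\cdot\mid\SIGMA_U=\check\SIGMA_U]$, where $U\subseteq[n]$ is obtained by including each index independently with probability $\vec\Theta/n$. By Pinsker's inequality combined with Jensen,
\begin{align*}
\Erw_{i,j}\brk{\dTV\bc{\check\mu_{i,j},\check\mu_i\otimes\check\mu_j}}
\;\leq\;\sqrt{\tfrac12\,\Erw_{i,j}\brk{I_{\check\mu}(\SIGMA_i;\SIGMA_j)}}.
\end{align*}
Since $I(\SIGMA_i;\SIGMA_j)\leq I(\SIGMA_i;\SIGMA_{[n]\setminus\{i\}})$, it is therefore enough to prove that with probability at least $1-\eps$ over the random pinning, $\Erw_i\brk{I_{\check\mu}(\SIGMA_i;\SIGMA_{[n]\setminus\{i\}})}$ is smaller than an appropriate power of $\eps$.

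The second step is the core entropy calculation. Define $f(t)=\Erw[H(\SIGMA\mid\SIGMA_{U_t})]$, where $U_t$ is the random subset produced by including each index with probability $t/n$; the outer expectation averages over both $U_t$ and $\SIGMA$. Clearly $0\leq f(t)\leq n\log|\Omega|$ and $f$ is monotonically nonincreasing. A direct computation identifies the one-step drop with expected single-coordinate mutual information:
\begin{align*}
f(t)-f(t+1)\;=\;\Erw\brk{I_{\mu[\,\cdot\,\mid\SIGMA_{U_t}]}\bc{\SIGMA_\vi;\SIGMA_{[n]\setminus U_t\setminus\{\vi\}}}},
\end{align*}
where $\vi$ is a uniformly random index not in $U_t$ (a clean continuous version of this identity uses the coupling between $U_t$ and $U_{t+\mathrm dt}$, but discretely summing is enough). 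Telescoping yields
\begin{align*}
\sum_{t=0}^{T-1}\bc{f(t)-f(t+1)}\;\leq\;n\log|\Omega|,
\end{align*}
so the average over $t\in\{0,\ldots,T-1\}$ is at most $n\log|\Omega|/T$. Dividing by $n$ and choosing $T=T(\eps,|\Omega|)$ sufficiently large forces the expected per-coordinate mutual information, averaged over the uniformly chosen $\vec\Theta\in(0,T)$, below any prescribed $\delta>0$.

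The final step is a Markov argument: since the expectation of $\Erw_i\brk{I_{\check\mu}(\SIGMA_i;\SIGMA_{[n]\setminus\{i\}})}$ over $(\vec\Theta,\check\SIGMA,U)$ is at most $\delta$, the probability that it exceeds $\delta^{1/2}$ is at most $\delta^{1/2}$. Combined with the Pinsker bound above and choosing $\delta$ a suitable polynomial in $\eps$, we obtain the claimed bound $\Erw_{i,j}\brk{\dTV(\check\mu_{i,j},\check\mu_i\otimes\check\mu_j)}<\eps$ with probability at least $1-\eps$.

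The main obstacle is the bookkeeping in the entropy–telescoping step: one must carefully verify that pinning one additional randomly chosen coordinate reduces the expected conditional entropy by precisely the expected single-coordinate mutual information $I(\SIGMA_\vi;\SIGMA_{[n]\setminus U_t\setminus\{\vi\}}\mid\SIGMA_{U_t})$, and that this identity survives the averaging over the continuous parameter $\vec\Theta$ used in the statement. A minor but nontrivial side issue is moving from single-coordinate mutual information back to pairwise total variation: this requires the sub-additivity $I(\SIGMA_i;\SIGMA_j\mid\text{pinned})\leq I(\SIGMA_i;\SIGMA_{[n]\setminus\{i\}}\mid\text{pinned})$ together with a Jensen step, and one has to be a bit careful to interchange the average over index pairs with the Pinsker square root.
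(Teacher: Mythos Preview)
The paper does not supply its own proof of this lemma; it is simply quoted verbatim as Lemma~3.5 of \cite{CKPZ}, with no argument given. Your sketch---Pinsker to pass from total variation to mutual information, the entropy--telescoping identity $f(t)-f(t+1)=\Erw\brk{I(\SIGMA_{\vi};\SIGMA_{[n]\setminus U_t\setminus\{\vi\}}\mid\SIGMA_{U_t})}$ to bound the average single-coordinate mutual information by $\log|\Omega|/T$, then Markov to go from expectation to high probability---is exactly the standard proof that appears in \cite{CKPZ} (and in closely related forms in earlier work of Montanari and of Raghavendra--Tan). So your approach is correct and coincides with what the cited reference does; the only caveats you already flag yourself, namely the care needed in matching the continuous pinning parameter $\vec\Theta\in(0,T)$ to the discrete telescoping and in applying Jensen before the square root.
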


The following lemma evinces that if the free energy in $\G^*$ is larger than the first moment bound, the free energy in $\G$ is strictly smaller than this bound.
\begin{lemma} \label{lem_free_energy_null_star}
We have
\begin{align*}
    &\Erw \brk{\log Z(\G^*(\sd,\sk,\sP,\theta,\vsigma^*))} = \log \Erw \brk{Z(\G(\sd,\sk,\sP,\theta))} + o(n) \Leftrightarrow \\
    & \qquad \qquad \Erw \brk{\log Z(\G(\sd,\sk,\sP,\theta))} = \log \Erw \brk{Z(\G(\sd,\sk,\sP,\theta))} + o(n)
\end{align*}
\end{lemma}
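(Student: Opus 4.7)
The plan is to translate the equivalence into a statement about the normalized partition function $W := Z(\G(\sd,\sk,\sP,\theta))/\Erw[Z(\G(\sd,\sk,\sP,\theta))]$, which has mean $1$ by construction. Applying \Prop~\ref{Prop_Nishi} gives
$$\Erw[\log Z(\G^*(\sd,\sk,\sP,\theta,\vsigma^*))] = \Erw[\log Z(\hat\G(\sd,\sk,\sP,\theta))] + o(n),$$
while the $Z$-tilt definition \eqref{eq_hatG} yields $\Erw[\log Z(\hat\G)] - \log\Erw[Z(\G)] = \Erw[W\log W]$. Similarly, $\log\Erw[Z(\G)] - \Erw[\log Z(\G)] = -\Erw[\log W]$. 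Both quantities are non-negative by Jensen's inequality, so the lemma reduces to the statement that they vanish (to order $o(n)$) simultaneously.

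The crucial tool will be concentration of $\log Z(\G)$ about its mean. Since \Prop~\ref{Prop_prune} already reduces matters to bounded $\vd, \vk$, assumption \SYM\ supplies the deterministic envelope $\varepsilon^{O(n)} \leq Z(\G) \leq \varepsilon^{-O(n)} q^n$ as well as bounded increments for the Doob martingale obtained by first revealing the pairing $\vGamma$ edge by edge and then revealing the weight functions $\psi_{a_i}$. Azuma--Hoeffding then delivers $\pr[|\log W - \Erw[\log W]| > t] \leq 2\exp(-t^2/(Cn))$ for a constant $C$; in particular $\log W$ lies within $O(\sqrt{n}\log n)$ of $\Erw[\log W]$ with probability $1 - n^{-\omega(1)}$.

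Both directions of the equivalence now follow by symmetric truncation arguments. For the implication $-\Erw[\log W] = o(n) \Rightarrow \Erw[W\log W] = o(n)$, concentration places $\log W$ in $[-\delta n, \delta n]$ for any fixed $\delta > 0$ with probability $1 - e^{-\Omega(n)}$, on which $|W\log W| \leq W \cdot \delta n$ contributes at most $\delta n$ to the expectation; the complementary event is controlled by combining the deterministic bound $|W \log W| \leq n \cdot e^{O(n)}$ with the Azuma tail. The reverse implication runs the same argument after first applying Azuma--Hoeffding to $\log Z(\G^*(\sd,\sk,\sP,\theta,\vsigma^*))$, viewed as a function of the pairing and of the uniform ground truth $\vsigma^*$, and then transporting the resulting concentration back to $W$ via the mutual contiguity asserted in \Prop~\ref{Prop_Nishi}.

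The main technical obstacle I foresee is reconciling the deterministic $e^{O(n)}$ envelope on $W$ with Azuma's sub-Gaussian tail $e^{-t^2/(Cn)}$: a single truncation at width $t$ balances these only at $t = \Omega(n)$, which leaves a residual of order $n$ rather than $o(n)$. I plan to close this gap through a two-scale truncation, first using $\Erw[W] = 1$ together with Markov's inequality to confine $W$ to a polynomial-in-$n$ range, and then applying Azuma at scale $\sqrt{n} \log n$, where it is sharp enough to bring both $\Erw[W\log W]$ and $-\Erw[\log W]$ down to $o(n)$.
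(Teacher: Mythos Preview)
Your reduction to $W=Z(\G)/\Erw[Z(\G)]$ and the identities $\Erw[W\log W]=n(\hat\phi-\phi_{\mathrm a})$, $-\Erw[\log W]=n(\phi_{\mathrm a}-\bar\phi)$ are exactly right and match the paper's Fact~\ref{klcondp_prel_qfed_inequ}. The gap is in how you obtain concentration under the \emph{tilted} measure. You propose to apply Azuma--Hoeffding to $\log Z(\G^*(\sd,\sk,\sP,\theta,\vsigma^*))$ ``viewed as a function of the pairing and of the uniform ground truth''. But conditional on $\vsigma^*$, the pairing in $\G^*$ is \emph{not} a product of independent coordinates: it is the uniform pairing reweighted by $\psi_\G(\vsigma^*)$, so revealing one edge changes the conditional law of all the others, and there is no bounded-difference martingale in the obvious filtration. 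The paper handles this by the {\bf SHARP} decomposition (\Lem~\ref{lem_pebble_dist}), which rewrites $\G^*$ as a function of genuinely independent pieces, and then proves concentration in \Sec~\ref{conc} (Propositions~\ref{conc_tss_loc}, \ref{conc_tss_glob}, \Lem~\ref{condp_nishi_conc}). Mutual contiguity from \Prop~\ref{Prop_Nishi} does not help here either: contiguity transports ``with high probability'' statements, not sub-Gaussian tails.

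Your two-scale truncation also does not close the obstacle you correctly identified. Markov's inequality gives $\pr[W>M]\le 1/M$, but what you need to control is $\Erw[W\,;\,W>M]=\pr_{\hat\G}[W>M]$, and Markov says nothing about that; on the event $\{W>M\}$ the deterministic envelope $W\le e^{O(n)}$ still dominates, and the Azuma tail $e^{-t^2/(Cn)}$ under $\G$ is too weak to beat the size-bias. The paper's route in \Lem~\ref{condp_rs_phase} is different: for the direction $\hat\phi=\phi_{\mathrm a}\Rightarrow\bar\phi=\phi_{\mathrm a}$ it first uses concentration under $\hat\G$ to show that $Z^\circ:=Z\cdot\vecone\{|\phi(\G)-\phi_{\mathrm a}|<\eps_n\}$ carries almost all of $\Erw[Z]$, then applies \emph{Paley--Zygmund} to $Z^\circ$ (which now has a controlled second moment) to get $\pr[\phi(\G)\ge\phi_{\mathrm a}-o(1)]\ge e^{-o(n)}$, and finally matches this against Azuma for $\G$. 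The other direction is done by contrapositive, again relying on concentration under $\hat\G$. So the missing ingredient in both halves of your plan is precisely concentration of $\phi(\hat\G)$, and that is where the real work lies.
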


\Lem~\ref{lem_free_energy_null_star} is an immediate consequence of \Lem~\ref{condp_rs_phase}.

Throughout this paper, we will use the standard Landau notation and introduce $\tilde O(\cdot)$ to hide logarithmic factors. Moreover, if $(\cE_n)_n$ denotes a sequence of events we say that $(\cE_n)_n$ holds \textit{with high probability} (\whp) if $\lim_{n \to \infty} \Pr \brk{\cE_n} = 1$.
The proofs in the subsequent sections require the weight functions to be bounded and not too small, which we ensure by imposing the condition that they take values in $(\varepsilon,2)$ which can be safely assumed by {\bf SYM}.
%This assumption does not pose a restriction on the problems that our main results apply to since we can simply rescale any $(0,\infty)$-function to $(0,2)$.

\subsection{Adding factor nodes}\label{Sec_Noela}
Let $\sd=(d_1,\ldots,d_n)$, $\sk=(k_1,\ldots,k_m)$ and $(\Psi_1,P_1),\ldots,(\Psi_m,P_m)$ be as before.
The aim in this section is to compare the random factor graph model with these parameters with a model with one extra factor node.
Hence, let $\sk^+=(k_1,\ldots,k_m,k_{m+1})$ be a degree sequence obtained from $\sk$ by adding one more entry.
Additionally, let $(\Psi_{m+1},P_{m+1})$ be a set of possible weight functions for the new factor node together with a probability distribution on that set.
The aim of the following proposition is to show that $\G^*(\sd,\sk^+,\sP, \theta, \sigma)$ can essentially be obtained by first creating $\G^*(\sd,\sk,\sP, \theta, \sigma)$ and then adding one extra factor node.
While such a description is trivially valid in the realm of binomial factor graph models, in the present setting of given degree sequences matters turn out to be quite delicate.
In particular, we need to assume the following. 
\begin{description}
\item[SYM$'$] There exist reals $\eps,\xi>0$ such that for every $\psi\in\bigcup_{1\leq i\leq m+1}\Psi_i$, $j\in[k_\psi]$, $\omega\in\Omega$ we have
\begin{align*}
q^{1-k_\psi}\sum_{\sigma\in\Omega^{k_\psi}}\vecone\cbc{\sigma_j=\omega}\psi(\sigma)&=\xi,&\min_{\sigma\in\Omega^{k_\psi}}\psi(\sigma)>\eps.
\end{align*} 
\end{description}
In particular, {\bf SYM} holds for $P_i$, $i\in[m+1]$.
The following proposition constitutes one of the key tools that will be required in the following sections.
\begin{proposition} \label{prop_coupling_int}
For any fixed $C>0,\eps>0$ the following is true.
Suppose that all degrees satisfy $d_i\leq C$ for $i\in[n]$, $k_j \leq C$ for $j \in [m]$, that
\begin{align*}
\sum_{i=1}^n d_i - \sum_{i=1}^m k_i \geq \eps n,
\end{align*}
and that {\bf SYM$'$} is satisfied. 
Moreover, assume that $\sigma \in \Omega^{V_n}$ is such that for all $\omega \in \Omega$ we have
\begin{align*}	\abs{\sum_{i=1}^n d_i \bc{\vecone \cbc{\sigma_i = \omega}-1/q}} = O(\sqrt n\log n) \qquad \text{and} \qquad \abs{\sum_{i=1}^n \vecone \cbc{\sigma_i = \omega} - \frac{n}{q}} = O(\sqrt n\log n).
\end{align*}
Then there exists a coupling of $\G^*(\sd,\sk,P,\theta,\sigma)$ and $\G^*(\sd,\sk^+,P,\theta,\sigma)$ such that
\begin{align*}
\Pr \brk{\G^*(\sd,\sk,P,\theta,\sigma) = \G^*(\sd,\sk^+,P,\theta,\sigma)-a_{m+1}} &= 1 - \tilde O(n^{-1}), \\
\Pr \brk{\abs{\G^*(\sd,\sk,P,\theta,\sigma) \triangle \G^*(\sd,\sk^+,P,\theta,\sigma)} < \sqrt{n}\log n} &= 1 - O(n^{-2}).
\end{align*}
\end{proposition}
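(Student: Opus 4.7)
The strategy is to explicitly decompose $\G^*(\sd,\sk^+,P,\theta,\sigma)$ by exposing the added factor node $a_{m+1}$ first, and then to compare the marginal distribution of the remaining graph to $\G^*(\sd,\sk,P,\theta,\sigma)$. Via the pairing model construction from \Sec~\ref{Sec_generalised}, every realisation admits a unique decomposition $G^- + a_{m+1}$ in which the $k_{m+1}$ half-edges of $a_{m+1}$ attach to cavity half-edges of $G^-$, and one obtains
$$
\pr\brk{\G^*(\sd,\sk^+,P,\theta,\sigma) = G^- + a_{m+1}}\;\propto\;\psi_{G^-}(\sigma)\,\psi_{a_{m+1}}(\sigma_{\partial a_{m+1}})\prod_{i=1}^n c_i(G^-)^{\underline{n_i}},
$$
where $c_i(G^-)$ is the number of cavity half-edges at $x_i$ and $n_i$ is the multiplicity of $x_i$ in the neighbour sequence of $a_{m+1}$. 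Summing this expression over the choice of $a_{m+1}$ (its weight function $\vpsi_{m+1}\sim P_{m+1}$ and its ordered neighbour sequence) yields the marginal distribution of $G^-$, which is to be compared to $\pr\brk{\G^*(\sd,\sk,P,\theta,\sigma) = G^-}\propto\psi_{G^-}(\sigma)$.

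The core task is then to show that the resulting ratio equals $1+\tilde O(n^{-1})$ uniformly on a high-probability set of graphs $G^-$. By {\bf SYM$'$}, the conditional expectation of $\vpsi_{m+1}(\sigma_{j_1},\ldots,\sigma_{j_{k_{m+1}}})$ over $\vpsi_{m+1}\sim P_{m+1}$ equals $\xi$ once any single spin is fixed, regardless of the actual spins; iterating this shows that the leading-order contribution of the sum is $\xi^{k_{m+1}} T^{\underline{k_{m+1}}}$ with $T=\sum d_i-\sum k_j=\Theta(n)$. The sub-leading fluctuations are governed by deviations of $\sum_i c_i(G^-)\vecone\{\sigma_i=\omega\}$ from $T/q$. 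Here both balance hypotheses on $\sigma$ intervene: the vanilla count-balance controls the uniform part, while the degree-weighted balance controls the $d_i$-weighted part, since $c_i(G^-)$ concentrates around an affine function of $d_i$ under the pairing model. Together they yield the desired $\tilde O(n^{-1})$ bound on the ratio.

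The coupling is then constructed as follows. First sample $G\sim\G^*(\sd,\sk,P,\theta,\sigma)$. Then sample $\vpsi_{m+1}\sim P_{m+1}$ and a neighbour sequence $(j_1,\ldots,j_{k_{m+1}})$ from the conditional distribution derived above, and set $\G^*(\sd,\sk^+,P,\theta,\sigma):=G+a_{m+1}$. By the preceding step this reproduces the correct distribution of $\G^*(\sd,\sk^+,P,\theta,\sigma)$ up to a TV error of $\tilde O(n^{-1})$, and a maximal coupling identifies the two models with symmetric difference exactly $\{a_{m+1}\}$ with probability $1-\tilde O(n^{-1})$. On the rare exceptional event, one switches to a sub-coupling at the level of pairings: since $\sum d_i-\sum k_j\geq\eps n$ provides $\Theta(n)$ cavities within which to perform switchings, any two samples can be coupled so that the number of differing half-edges is stochastically dominated by a random variable with bounded mean and variance, giving a symmetric difference of at most $\sqrt n\log n$ outside an event of probability $O(n^{-2})$ by a Bernstein-type tail bound.

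The main obstacle is the fluctuation estimate: while {\bf SYM$'$} alone yields the correct mean of the correction factor, the improvement from the naive bound $O(n^{-1/2})$ to the required $\tilde O(n^{-1})$ demands that the first-order fluctuations cancel after averaging over the random cavity structure of $G^-$. This cancellation is precisely the reason both forms of balance on $\sigma$ are imposed, and handling it in a way robust to the fragile combinatorics of the pairing model is the delicate technical point the proof must address.
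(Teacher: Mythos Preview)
Your approach differs substantially from the paper's. You propose to marginalise out $a_{m+1}$ on the variable side, obtaining a correction factor $S(G^-)$ (a sum over attachments of $a_{m+1}$ to the cavities of $G^-$) and then arguing that $S(G^-)$ is constant to order $\tilde O(n^{-1})$. The paper instead introduces an auxiliary representation of $\G^*$ on the \emph{factor} side (the {\bf SHARP1--3} construction, \Lem~\ref{lem_pebble_dist}): one first draws a colour assignment $\vy^\sharp$ on the factor-node clones (including dummy unary factors filling the cavities) from a product measure \emph{conditioned} on matching the variable-side colour frequencies $\rho_\chi$; then one draws weight functions reweighted by their value at $\vy^\sharp$; and finally a uniformly random colour-preserving bijection. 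In this representation the two models differ only in the distribution of $\vy^\sharp_{\cA}$ versus $\vy^{\sharp,+}_{\cA}$, and the paper bounds that ratio pointwise via Stirling on multinomial coefficients (Claims~\ref{Claim_pebble_1}--\ref{Claim_pebble_2}). The role of {\bf SYM$'$} there is that it forces the marginal colour law on each clone of the new factor to be \emph{exactly} $1/q$, which kills the $\tilde O(n^{-1/2})$ term in Claim~\ref{Claim_pebble_2} and leaves only $\tilde O(n^{-1})$.

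Your route is viable in principle, but there is a genuine gap at the step you yourself flag as ``the main obstacle''. To push the ratio from $\tilde O(n^{-1/2})$ to $\tilde O(n^{-1})$ you need the cavity colour counts $N_\omega=\sum_i c_i(G^-)\vecone\{\sigma_i=\omega\}$ to satisfy $N_\omega=T/q+\tilde O(\sqrt n)$ with probability $1-\tilde O(n^{-1})$ \emph{under the reweighted measure $\G^*(\sd,\sk,P,\theta,\sigma)$}, not the null pairing model. Your sentence about $c_i(G^-)$ concentrating ``under the pairing model'' addresses the wrong measure; under $\G^*$ the matching is biased by $\psi_G(\sigma)$, and controlling the matched-clone colour counts is exactly what the paper's \Lem~\ref{lem_config_concentration} does---and it does so precisely via the {\bf SHARP} representation, which turns the factor-side colour assignment into a conditioned product measure where Azuma--Hoeffding applies. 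Without an equivalent device you cannot certify the $\tilde O(\sqrt n)$ fluctuation, and hence cannot bound the second-order term in your expansion of $S(G^-)$. Two smaller points: the first-order term in that expansion vanishes by {\bf SYM$'$} alone (it makes $\sum_{\tau:\tau_j=\omega}\bar\psi_{m+1}(\tau)$ independent of $\omega$, so the linear term is proportional to $\sum_\omega(N_\omega-T/q)=0$); the balance hypotheses on $\sigma$ enter only to fix the centre at $T/q$. And your leading term should be $\xi\,T^{k_{m+1}}$, not $\xi^{k_{m+1}}T^{\underline{k_{m+1}}}$.

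For the second bound the paper does not use switchings plus a Bernstein tail. Since degrees are bounded and $|\Omega|<\infty$, there are only finitely many (weight function, arity, neighbour-spin tuple) types, each occurring $\Theta(n)$ times; one couples the two models by matching factors of identical type and applies Chernoff to the type counts (\Lem~\ref{lem_coupling_pebble2}).
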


We also need an estimate of the total variation distance of the two random factor graph models when {\bf SYM} is not assumed for the last factor node.

\begin{description}
\item[SYM$''$] There exist reals $\eps,\xi>0$ such that for every $\psi\in\bigcup_{1\leq i\leq m}\Psi_i$, $j\in[k_\psi]$, $\omega\in\Omega$ we have
\begin{align*}
q^{1-k_\psi}\sum_{\sigma\in\Omega^{k_\psi}}\vecone\cbc{\sigma_j=\omega}\psi(\sigma)&=\xi,&\min_{\sigma\in\Omega^{k_\psi}}\psi(\sigma)>\eps.
\end{align*} 
\end{description}

\begin{proposition} \label{prop_coupling_int_m}
For any fixed $C>0,\eps>0$ the following is true.
Suppose that all degrees satisfy $d_i\leq C$ for $i\in[n]$, $k_j \leq C$ for $j \in [m]$, that
\begin{align*}
\sum_{i=1}^n d_i - \sum_{i=1}^m k_i \geq \eps n,
\end{align*}
and that {\bf SYM$''$} is satisfied. 
Moreover, assume that $\sigma \in \Omega^{V_n}$ is such that for all $\omega \in \Omega$ we have
\begin{align*}	\abs{\sum_{i=1}^n d_i \bc{\vecone \cbc{\sigma_i = \omega}-1/q}} = O(\sqrt n\log n) \qquad \text{and} \qquad \abs{\sum_{i=1}^n \vecone \cbc{\sigma_i = \omega} - \frac{n}{q}} = O(\sqrt n\log n).
\end{align*}
Then there exists a coupling of $\G^*(\sd,\sk,P,\theta,\sigma)$ and $\G^*(\sd,\sk^+,P,\theta,\sigma)$ such that
\begin{align*}
\Pr \brk{\G^*(\sd,\sk,P,\theta,\sigma) = \G^*(\sd,\sk^+,P,\theta,\sigma)-a_{m+1}} &= 1 - \tilde O(n^{-1/2}), \\
\Pr \brk{\abs{\G^*(\sd,\sk,P,\theta,\sigma) \triangle \G^*(\sd,\sk^+,P,\theta,\sigma)} < \sqrt{n}\log n} &= 1 - O(n^{-2}).
\end{align*}
\end{proposition}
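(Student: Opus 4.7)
The strategy is to mimic the construction of Proposition~\ref{prop_coupling_int} and to isolate the places where the stronger symmetry hypothesis {\bf SYM$'$} (which bundles SYM for all $m+1$ factor weight families) is actually invoked. In that proof, {\bf SYM$'$} plays two distinct roles: first, it makes the reweighting by the first $m$ factor nodes preserve the marginal balance of the variable side, which yields a good coupling of the bipartite skeleton; and second, it guarantees that the extra factor $a_{m+1}$ can be attached to cavities essentially uniformly because $\sum_{\tau}\vecone\{\tau_j=\omega\}\psi_{m+1}(\tau)$ is independent of $\omega$. Weakening to {\bf SYM$''$} removes only the second use, so one expects the same conclusion up to a controlled loss.

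First, I would set up the coupling between $\G^*(\sd,\sk,P,\theta,\sigma)$ and the sub--factor graph on the first $m$ factor nodes of $\G^*(\sd,\sk^+,P,\theta,\sigma)$ by running the coupling of \Prop~\ref{prop_coupling_int} verbatim on those $m$ factors. Because {\bf SYM$''$} is precisely SYM restricted to $\Psi_1,\dots,\Psi_m$, all steps of the original coupling go through, yielding an $\tilde O(n^{-1})$ error on this part, well inside the required tolerance. Next, conditional on the first $m$ factors, write the distribution of $(\partial a_{m+1},\psi_{a_{m+1}})$ under $\G^*(\sd,\sk^+,P,\theta,\sigma)$ via Bayes as
\begin{align*}
\Pr\brk{\partial a_{m+1}=y,\psi_{a_{m+1}}=\psi\mid\text{rest}}
\;\propto\;\Pr\brk{\partial a_{m+1}=y\mid\text{cavities}}\,P_{m+1}(\psi)\,\psi(\sigma_y),
\end{align*}
where $y$ ranges over $k_{m+1}$-tuples of remaining variable half-edges. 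The auxiliary model used in the coupling drops the factor $\psi(\sigma_y)$; hence the total variation between the two conditional laws equals the total variation between the biased and unbiased distributions induced by $\psi(\sigma_y)$.

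Second, I would bound this total variation by $\tilde O(n^{-1/2})$. The key calculation is that, because $\psi\in(\eps,1/\eps)$ is bounded and the cavity endpoints inherit the $\sigma$-balance $|\sigma^{-1}(\omega)|=n/q+O(\sqrt n\log n)$ from the hypothesis (and from the coupling of the first $m$ factors, which SYM$''$ controls up to $O(\sqrt n\log n)$), one obtains
\begin{align*}
\sum_{y}\Erw\brk{\vpsi_{m+1}(\sigma_y)}=\Big(1+\tilde O(n^{-1/2})\Big)\cdot\#\{y\}\cdot q^{-k_{m+1}}\sum_{\tau\in\Omega^{k_{m+1}}}\Erw\brk{\vpsi_{m+1}(\tau)},
\end{align*}
and the same relative error appears for any fixed $y$ after averaging over which half-edges at a given variable are chosen. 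Since $k_{m+1}\le C$ is bounded, the compounded multiplicative error remains $\tilde O(n^{-1/2})$, and the total variation bound follows by a standard comparison between proportional probability distributions with density ratio $1+\tilde O(n^{-1/2})$.

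Third, for the $\sqrt n\log n$ bound on the symmetric difference I would reuse the switching argument from \Prop~\ref{prop_coupling_int}: conditional on the rare event that the coupling fails, the local rearrangement needed to absorb the $k_{m+1}\le C$ new half-edges of $a_{m+1}$ propagates through only a small number of switchings, each changing $O(1)$ edges, and a Chernoff-style estimate bounds by $O(n^{-2})$ the probability of exceeding $\sqrt n\log n$ affected edges. The main obstacle in the whole proof is the accounting of the two independent sources of imbalance that accumulate to give the $\tilde O(n^{-1/2})$ loss: the inherent $O(\sqrt n\log n)$ imbalance of the planted assignment $\sigma$, and the imbalance of cavity endpoints after conditioning on the first $m$ factor nodes. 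Both must be tracked simultaneously when comparing normalising constants because {\bf SYM$''$} (unlike {\bf SYM$'$}) does not let the $(m+1)$-th factor cancel these fluctuations analytically; the weaker rate $\tilde O(n^{-1/2})$ in the conclusion is precisely the price of this absent cancellation.
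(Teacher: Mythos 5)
There is a genuine gap, and it sits in your very first step. You claim that the $m$-factor model and the restriction of $\G^*(\sd,\sk^+,\sP,\theta,\sigma)$ to its first $m$ factor nodes can be coupled with error $\tilde O(n^{-1})$ "because {\bf SYM$''$} is precisely SYM restricted to $\Psi_1,\dots,\Psi_m$, so all steps of the original coupling go through." That is exactly backwards. In the proof of \Prop~\ref{prop_coupling_int}, the $\tilde O(n^{-1})$ rate for this marginal comes from Claim~\ref{Claim_pebble_2}, which gives
\begin{align*}
\frac{\pr[\vy_{\cA}^\sharp=y]}{\pr[\vy_{\cA}^{\sharp,+}=y]}
=1+\tilde O(n^{-1/2})\sum_{j=1}^{k_{m+1}}\sum_{\tau\in\Omega}\abs{\pr\brk{\vy^{\sharp,+}_{m+1,j}=\tau}-1/q}+\tilde O(n^{-1}),
\end{align*}
i.e.\ the error in the law of the half-edge colouring of the \emph{first $m$ factors} explicitly depends on the colour marginals of the \emph{new} factor $a_{m+1}$. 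Under {\bf SYM$'$} that sum vanishes and one gets $\tilde O(n^{-1})$; under {\bf SYM$''$} it is merely bounded, and the ratio degrades to $1+\tilde O(n^{-1/2})$. So the marginal of $\G^*(\sd,\sk^+,\sP,\theta,\sigma)-a_{m+1}$ is genuinely $\tilde O(n^{-1/2})$-far in total variation from $\G^*(\sd,\sk,\sP,\theta,\sigma)$ — this is precisely why the proposition is stated with the weaker rate. Your step (a), if correct, would prove a stronger statement than the proposition; it is not correct, because the presence of a colour-biased factor $a_{m+1}$ tilts the colour budget available to the other $m$ factors by an amount that only the $O(\sqrt n\log n)$-scale binomial comparison of Claim~\ref{Claim_pebble_1} can absorb, at cost $\tilde O(n^{-1/2})$.

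Your second paragraph then bounds the wrong quantity. The total variation between the $\psi(\sigma_y)$-biased attachment of $a_{m+1}$ and an unbiased attachment is \emph{not} $\tilde O(n^{-1/2})$ in general: a weight function that prefers certain colour patterns shifts the attachment law by a constant in total variation, and the balance of $\sigma$ only controls the normalising constant, not the shape of the distribution. More importantly, this comparison is not needed: the first display of the proposition concerns only whether the two graphs agree after deleting $a_{m+1}$, so the law of the attachment itself never enters. What does enter is the back-reaction of $a_{m+1}$ on the remaining colour frequencies, which is the computation of Claims~\ref{Claim_pebble_1} and~\ref{Claim_pebble_2}. The paper's proof is a one-line application of those claims with the observation that the middle term no longer cancels; your proposal reaches the right final rate but by misattributing where it arises, and the step you rely on for the main coupling statement is false as justified. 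The third paragraph (Chernoff bound for the symmetric difference, as in \Lem~\ref{lem_coupling_pebble2}) is fine.
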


\noindent
A key feature of \Prop~\ref{prop_coupling_int_m} is that we do not need to assume {\bf SYM$'$} for the new factor node $a_{m+1}$.
To prove \Prop~\ref{prop_coupling_int} we introduce a more accessible construction of the graph $\vG^*(\sd, \sk, \sP,\theta, \sigma)$.
Let
\begin{align}\label{eqNoela_1}
	\Delta &= \sum_{i=1}^n d_i - \sum_{i=1}^m k_i \geq \eps n,&\Delta^+&=\Delta-k_{m+1}.
\end{align}
Additionally, for each $i \in [\Delta]$ we introduce a unary factor node $b_{i}$ whose weight function is just the constant $1$.
Hence, the overall number of factor nodes becomes $m+\Delta$.
Like in the pairing model of random graphs with given degree sequences we further introduce sets
\begin{align*}
\cX&=\bigcup_{i=1}^n\cbc{x_i}\times[d_i],&\cA&=\bigcup_{i=1}^m\cbc{a_i}\times[k_i],&\cA^+&=\bigcup_{i=1}^{m+1}\cbc{a_i}\times[k_i],&
\cD&=\cbc{b_1,\ldots,b_\Delta},&\cD^+&=\cbc{b_1,\ldots,b_{\Delta^+}}
\end{align*}
of clones of variable and factor nodes.
Moreover, given the assignment $\sigma \in \Omega^n$ let $\chi \in \Omega^{\cX}$ be the induced assignment on the variable clones.

We now consider the following experiment whose outcome is a factor graph $\vG^\sharp(\sd, \sk, \sP,\sigma)$.
\begin{description}
\item[SHARP1] Generate a random assignment $\vy^\sharp\in\Omega^{\cA\cup\cD}$ as follows.
Draw $\vy^\flat$ from the distribution
\begin{align*}
    \pr\brk{\vy^\flat = y}& = \prod_{i=1}^m\frac{ \Erw\brk{\vec \psi_{a_i}(y_{a_i})}}{\sum_{y' \in \Omega^{k_i}}\Erw\brk{\vec \psi_{a_i}(y')}} q^{-\Delta}&&(y\in\Omega^{\cA\cup\cD}),&&\mbox{and then choose}\\
    \pr\brk{\vy^\sharp = y} &= \pr\brk{\vy^\flat = y \Big\vert \rho_{\vy^\flat} = \rho_\chi},
\end{align*}
where $\rho_\tau$ denotes the empirical distribution of spins under configuration $\tau\in\Omega^{\cX}$ and $y_{a_i}$ denotes the restriction of $y$ to $\{a_i\}\times[k_i]$.
\item[SHARP2] 
Given $\vy^\sharp = y$, for $i \in [m]$ independently, choose weight functions according to
\begin{align*}
    \pr\brk{\vpsi_{a_i}^\sharp \in \cE\mid\vy^\sharp=y} = \frac{\Erw\brk{\vec \psi_{a_i}(y_{a_i}) \vecone\cbc{\vec \psi_{a_i} \in \cE}}}{\Erw\brk{\vec \psi_{a_i}(y_{a_i}) }},
\end{align*}
where $y_{a_i}$ denotes the restriction of $y$ to $\{a_i\}\times[k_i]$.
\item[SHARP3] 
Finally, choose a bijection $\vg^\sharp:\cX\to \cA\cup\cD$ uniformly from the set of all bijections $g$ such that $y \circ g = \chi$; thus, for any such $g$ we have
\begin{align*}
    \pr\brk{\vg^\sharp = g\mid\vy^\sharp=y} = \prod_{z=1}^q\frac{1}{\abs{y^{-1}(z)}!}.
\end{align*}
\end{description}
We denote the result of this procedure by $\vG^\sharp(\sd, \sk, \sP,\sigma)$.
From this graph we obtain $\vG^\sharp(\sd, \sk, \sP,\theta, \sigma)$ by adding unary factor nodes $p_1, \ldots, p_\theta$ adjacent to $x_1, \ldots, x_\theta$ with weight functions $\tau \mapsto \vecone\cbc{\tau = \sigma_i}$.
Analogously we define $\chi^+,\vy^{\flat,+},\vy^{\sharp,+},\PSI_{a_i}^{\sharp,+},\vg^{\sharp,+}$ for the degree sequence $(k_1,\ldots,k_{m+1})$.
These give rise to the factor graph $\vG^{\sharp}(\sd,\sk^+,\sP,\theta,\sigma)$.
\begin{lemma} \label{lem_pebble_dist}
The random factor graphs $\G^*(\sd,\sk,\sP,\theta,\sigma)$ and $\G^{\sharp}(\sd,\sk,\sP,\theta,\sigma)$ are identically distributed.
So are   $\G^*(\sd,\sk^+,\sP,\theta,\sigma)$ and $\G^{\sharp}(\sd,\sk^+,\sP,\theta,\sigma)$.
\end{lemma}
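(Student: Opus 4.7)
The plan is to verify the distributional equality by explicit computation: I will unfold the reweighting \eqref{eq_starG} defining $\G^*(\sd,\sk,\sP,\theta,\sigma)$ through the pairing-model representation of the null model and match the result to the three stages \textbf{SHARP1}--\textbf{SHARP3} of $\G^\sharp$. Since the unary pinning factors $p_1,\ldots,p_\theta$ with random pins $\vomega_1,\ldots,\vomega_\theta$ are appended deterministically in both constructions as a final step, it is enough to argue the equality for the factor graph produced before pinning.

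I will realise $\G(\sd,\sk,\sP)$ as a uniformly random bijection $\vg\colon\cX\to\cA\cup\cD$ between the variable clones and the factor clones, together with independent weight functions $\vpsi_{a_i}\sim P_i$; here $\cD$ is the set of $\Delta=\sum_{i=1}^n d_i-\sum_{j=1}^m k_j$ trivial unary clones from \eqref{eqNoela_1} introduced to absorb the slack between the two sides. Writing $\chi\in\Omega^{\cX}$ for the lift of $\sigma$ to the variable clones and $\vy:=\chi\circ\vg^{-1}\in\Omega^{\cA\cup\cD}$ for the induced factor-clone assignment, the reweighting factor $\psi_G(\sigma)=\prod_{i=1}^{m}\vpsi_{a_i}(\vy_{a_i})$ depends on the pairing only through $\vy$. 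Consequently, the law of $\G^*$ decomposes into (i) a marginal for $\vy$, (ii) a conditional law for the weight functions given $\vy$, and (iii) a uniform conditional distribution over the $\prod_{\omega\in\Omega}\abs{\chi^{-1}(\omega)}!$ bijections $g$ with $\chi\circ g^{-1}=y$, a constraint that forces $\rho_y=\rho_\chi$.

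The comparison with \textbf{SHARP1}--\textbf{SHARP3} then reduces to three checks. For the marginal law of $\vy$, condition \textbf{SYM$'$} gives $\sum_{y\in\Omega^{k_i}}\psi(y)=q^{k_i}\xi$ for every $\psi\in\Psi_i$, which identifies the normalising constant in the definition of $\vy^\flat$ and yields
\begin{align*}
\pr\brk{\vy^\sharp=y}\propto\vecone\cbc{\rho_y=\rho_\chi}\prod_{i=1}^{m}\Erw\brk{\vpsi_{a_i}(y_{a_i})},
\end{align*}
matching the $\G^*$-marginal because the combinatorial prefactor $\prod_\omega\abs{\chi^{-1}(\omega)}!$ is $y$-independent on $\cbc{\rho_y=\rho_\chi}$. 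Conditional on $\vy=y$, the weight functions in $\G^*$ become independent with $\vpsi_{a_i}$ of law proportional to $P_i(\psi)\psi(y_{a_i})$, which is exactly \textbf{SHARP2}. Finally, conditional on $(\vy,(\vpsi_{a_i})_i)$, the pairing $\vg$ is uniform over the $\prod_\omega\abs{y^{-1}(\omega)}!$ consistent bijections, matching \textbf{SHARP3}.

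The main obstacle will be the careful tracking of pairing multiplicities and the cancellation of the $q$- and $\xi$-factors in the various normalising constants; \textbf{SYM$'$} provides exactly the identity needed for these cancellations. The second assertion for $\G^\sharp(\sd,\sk^+,\sP,\theta,\sigma)$ follows from the identical calculation because augmenting $\sk$ to $\sk^+$ leaves $\cX$, $\chi$, and $\cA\cup\cD$ unchanged and merely transfers $k_{m+1}$ clones from $\cD^+$ to the new block $\cbc{a_{m+1}}\times[k_{m+1}]$.
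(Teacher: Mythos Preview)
Your approach is correct and essentially the same as the paper's: both arguments compute the joint law of the pairing $g$ and the weight functions under the two constructions and verify that they coincide, with the key step being the observation that the number of bijections $g$ with $\chi\circ g^{-1}=y$ equals $\prod_{\omega}\abs{\chi^{-1}(\omega)}!$ whenever $\rho_y=\rho_\chi$ (and zero otherwise). The paper carries this out by writing down both joint probabilities on an atomic event $\{g\}\times\prod_i\cE_i$ and reducing to a single combinatorial identity for the normalisers; you instead factor the law of $\G^*$ into the marginal of $\vy$, the conditional of $(\vpsi_{a_i})_i$ given $\vy$, and the conditional of $\vg$ given both, and match each piece to \textbf{SHARP1}--\textbf{SHARP3}. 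These are two presentations of the same computation.

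One small remark: your invocation of \textbf{SYM$'$} to evaluate $\sum_{y'\in\Omega^{k_i}}\Erw[\vpsi_{a_i}(y')]=q^{k_i}\xi$ is unnecessary. The lemma is a purely distributional identity and holds without any symmetry assumption on the weights; the denominator in the definition of $\vy^\flat$ is simply a $y$-independent constant that cancels in the conditioning regardless of its value, and the paper's proof indeed never appeals to \textbf{SYM$'$}. Your argument goes through unchanged if you drop that remark.
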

\begin{proof}
It suffices to prove the second statement.
Hence, let {$g:\cX\to\cA^+\cup\cD^+$} be a bijection and write {$y = \chi \circ g^{-1}$} for the induced assignment on {$\cA^+\cup\cD^+$}.
\begin{align*}
    \pr&\brk{\vG^\sharp(\sd, \sk^+, \sP,\theta,\sigma) \in \cbc{g}\times \prod_{i=1}^{m+1} \cE_i }=
		\pr\brk{\vy^{\sharp,+} = y} \pr\brk{\vg^{\sharp,+ } = g \vert \vy^{\sharp,+} = y }\prod_{i=1}^{m+1}  \pr\brk{\vec \psi_{a_i}^{\sharp,+} \in \cE_i \vert \vy^{\sharp,+} = y}\\
   		&=\frac{1}{\pr\brk{\rho_{\vy^{\flat,+}}= \rho_\chi}\prod_{\tau=1}^q\abs{y^{-1}(\tau)}!}\bc{\prod_{i=1}^{m+1}\frac{ \Erw\brk{\vec \psi_{a_i}(y_{a_i})}}{\sum_{\tau\in\Omega^{k_i}}
				\Erw\brk{\vec \psi_{a_i}(\tau)}}}\bc{\prod_{i=1}^{m+1} \frac{\Erw\brk{\vec \psi_{a_i}(y_{a_i})
					\vecone\cbc{\vpsi_{a_i} \in \cE_i}}}{\Erw\brk{\vec \psi_{a_i}(y_{a_i}) }}}.
\end{align*}
Moreover, with $f$ ranging over all bijections {$\cX\to\cA^+\cup\cD^+$}, 
\begin{align*}
   \pr\brk{\vG^*(\sd, \sk^+, \sP,\theta,\sigma) \in \cbc{g}\times \prod_{i=1}^{m+1} \cE_i } &=  \frac{\Erw\brk{\vec\psi_{\vG(\sd, \sk^+, \sP,\theta)}(\sigma)\vecone\cbc{\vG(\sd, \sk^+, \sP,\theta) \in  \cbc{g}\times \prod_{i=1}^{m+1} \cE_i }}}{\Erw\brk{\vec\psi_{\vG(\sd, \sk^+, \sP,\theta)}(\sigma)}}\\
   &=  \frac{\prod_{i=1}^{m+1} \Erw\brk{\vec \psi_{a_i}(y_{a_i})\vecone\cbc{\vec \psi_{a_i} \in \cE_i}}}{\bc{\sum_{i=1}^n d_i}!\sum_{f}\pr\brk{\vg = f}\prod_{i=1}^{m+1} \Erw\brk{\vec \psi_{a_i}({(\chi \circ f^{-1})_{a_i})}}}\\
   &= \bc{ \prod_{i=1}^{m+1} \frac{\Erw\brk{\vec \psi_{a_i}(y_{a_i}) \vecone\cbc{\vec \psi_{k_i} \in \cE_i}}}{\Erw\brk{\vec \psi_{a_i}(y_{a_i}) }}} \frac{\prod_{i=1}^{m+1} \Erw\brk{\vec \psi_{a_i}(y_{a_i})}}{\sum_{f}\prod_{i=1}^{m+1}\Erw\brk{\vec \psi_{a_i}({(\chi \circ f^{-1})_{a_i}})}}.
\end{align*}
It thus remains to show that
\begin{align}\label{eqNoelasDistr1}
    \pr\brk{\rho_{\vy^{\flat,+}}= \rho_\chi}\prod_{i=1}^{m+1}\bc{\sum_{\tau\in \Omega^{k_i}}\Erw\brk{\vec \psi_{a_i}(\tau)}}\prod_{\tau\in\Omega}\abs{y^{-1}(\tau)}! = \sum_{f}\prod_{i=1}^{m+1}\Erw\brk{\vec \psi_{a_i}({(\chi \circ f^{-1})_{a_i}})}.
\end{align}
On the right hand side we may alternatively sum over all possible images {$\chi \circ f^{-1}$} that arise from bijections $f$.
Observe that each different {$\chi \circ f^{-1}$} can arise from exactly $\prod_{z=1}^q \abs{\chi^{-1}(z)}!$ many different $f$, as permuting the images of clones within a color class does not change the induced image on the {factor} side.
Moreover, we can only see {$\chi \circ f^{-1}$ with $\rho_{\chi \circ f^{-1}} = \rho_{\chi}$ which means $\abs{\chi^{-1}(z)} = \abs{(\chi \circ f^{-1})^{-1}(z)}$} for all $z \in \Omega$.
Therefore,
\begin{align*}
 {\sum_{f}\prod_{i=1}^{m+1}\Erw\brk{\vec \psi_{a_i}((\chi \circ f^{-1})_{a_i})}} &= \prod_{z=1}^q \abs{y^{-1}(z)}! {\sum_{y': \rho_{y'} = \rho_\chi}\prod_{i=1}^{m+1}\Erw\brk{\vec \psi_{a_i}(y'_{a_i})}}.
\end{align*}
Further, by the definition of $\vy^{\flat,+}$,
\begin{align*}
   \pr\brk{\rho_{\vy^{\flat,+}}= \rho_\chi}\prod_{i=1}^{m+1}\bc{\sum_{\tau \in \Omega^{k_i}}\Erw\brk{\vec \psi_{a_i}(\tau)}} &= \sum_{y': \rho_{y'} = \rho_\chi}\prod_{i=1}^{m+1}\Erw\brk{\vec \psi_{a_i}(y'_{a_i})}, 
\end{align*}
which establishes \eqref{eqNoelasDistr1} and thus the lemma.
\end{proof}

We prove \Prop~\ref{prop_coupling_int} by showing that the assignments observed on the factor nodes can be coupled so that they agree with probability $1-\tilde O(1/n)$.
Let $\vy^{\sharp}$ and $\vy^{\sharp,+}$ denote the assignments drawn as per {\bf SHARP1} for the two graphs. 
Furthermore, let $\cA$ denote the set of clones of $a_1,\ldots,a_m$ and let $\vy^{\sharp}_{\cA}, \vy^{\sharp,+}_{\cA}$ signify the restrictions of $\vy^{\sharp},\vy^{\sharp,+}$ to $\cA$.
Moreover, let us call $y\in\Omega^{\cA}$ {\em extendible} if
\begin{align*}
\sum_{\alpha\in\cA}\vecone\{y_\alpha=\tau\}\leq\rho_\chi(\tau)\sum_{i=1}^nd_i\qquad\mbox{for all }\tau\in\Omega.
\end{align*}
Thus, the extendible $y$ are the conceivable outcomes of $\vy^{\sharp}_{\cA},\vy^{\sharp,+}_{\cA}$.

As a first step we deal with ``atypical'' extendible $y$.
To this end we finally introduce for $i\in[m]$
\begin{align*}
\vY^{\sharp/\flat}_i(\tau)&=\sum_{j=1}^{k_i}\vecone\cbc{\vy^{\sharp/\flat}_{(a_i,j)} = \tau},& A&=\sum_{i=1}^{m}k_i.
\end{align*}
Thus, $\vY^{\sharp/\flat}_i(\tau)$ counts occurrences of $\tau$ among the clones of factor node $a_i$ under $\vy^{\sharp/\flat}$ from {\bf SHARP1}. 
\begin{lemma} \label{lem_config_concentration}
Assume the assumptions of \Prop~\ref{prop_coupling_int} to hold. We have 
\begin{align*}
    \pr\brk{ \abs{\sum_{i=1}^{m}\vY^{\sharp}_i(\tau) - \frac{A}{q}} >\sqrt{A}\ln A\ln\ln A}\leq n^{-3}.    
\end{align*}
\end{lemma}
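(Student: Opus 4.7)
The plan is to establish the stated concentration under the \emph{unconditioned} distribution $\vy^\flat$ and then transfer it to $\vy^\sharp=\vy^\flat\mid\cbc{\rho_{\vy^\flat}=\rho_\chi}$ by dividing by the (polynomially small) probability of the conditioning event. Under $\vy^\flat$ the joint distribution factorises across the factor nodes $a_1,\ldots,a_m$, so the random vectors $(\vY^\flat_i(\tau))_{\tau\in\Omega}$ for $i\in[m]$ are mutually independent. Assumption \textbf{SYM$'$} forces every single clone to be marginally uniform on $\Omega$: indeed
\begin{align*}
\Pr\brk{\vy^\flat_{(a_i,j)}=\omega}=\frac{\Erw\brk{\sum_{\sigma\in\Omega^{k_i}:\sigma_j=\omega}\vpsi_{a_i}(\sigma)}}{\Erw\brk{\sum_{\sigma\in\Omega^{k_i}}\vpsi_{a_i}(\sigma)}}=\frac{\xi q^{k_i-1}}{\xi q^{k_i}}=\frac1q,
\end{align*}
so $\Erw[\vY^\flat_i(\tau)]=k_i/q$ and thus $\Erw[\sum_i\vY^\flat_i(\tau)]=A/q$. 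Since $0\leq\vY^\flat_i(\tau)\leq k_i\leq C$, Hoeffding's inequality for a sum of independent bounded summands yields, for $t=\sqrt A\,\ln A\ln\ln A$,
\begin{align*}
\Pr\brk{\abs{\sum_{i=1}^m\vY^\flat_i(\tau)-\tfrac Aq}>t}\leq 2\exp\bc{-\frac{2t^2}{C\cdot A}}=\exp\bc{-\Omega(\ln^2 A\,(\ln\ln A)^2)},
\end{align*}
which easily beats any inverse polynomial in $n$ (the case of bounded $A$ is trivial because then the deviation event is vacuous for $n$ large).

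Next I would lower-bound $\Pr\brk{\rho_{\vy^\flat}=\rho_\chi}$ by a multivariate local limit theorem. The integer vector $(A+\Delta)\rho_{\vy^\flat}\in\ZZ^q$ is the sum of $m+\Delta$ independent pieces: the vectors $(\vY^\flat_i(\tau))_\tau$ for $i\in[m]$ and, crucially, $\Delta\geq\varepsilon n$ i.i.d.\ uniformly chosen standard basis vectors $e_{\vy^\flat_{b_j}}$ contributed by the unary factors $b_1,\ldots,b_\Delta$. The i.i.d.\ basis-vector contributions alone already render the covariance non-degenerate on the hyperplane $\cbc{v\in\RR^q:\sum_\tau v_\tau=A+\Delta}$ and ensure full support on the reachable integer lattice therein. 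Combined with the hypothesis on $\sigma$, which gives $\rho_\chi=\vecone/q+\tilde O(n^{-1/2})$ so that $\rho_\chi$ lies within $O(\log n)$ standard deviations of the mean, a standard Gnedenko-type multivariate local CLT yields
\begin{align*}
\Pr\brk{\rho_{\vy^\flat}=\rho_\chi}=\Omega(n^{-(q-1)/2}).
\end{align*}
Writing
\begin{align*}
\Pr\brk{\abs{\sum_i\vY^\sharp_i(\tau)-A/q}>t}=\frac{\Pr\brk{\abs{\sum_i\vY^\flat_i(\tau)-A/q}>t,\ \rho_{\vy^\flat}=\rho_\chi}}{\Pr\brk{\rho_{\vy^\flat}=\rho_\chi}}
\end{align*}
and combining the two estimates produces a ratio of order $n^{(q-1)/2}\exp(-\Omega(\ln^2 n\,(\ln\ln n)^2))\ll n^{-3}$.

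The main obstacle is the local limit theorem in the second step. A priori the vectors $(\vY^\flat_i(\tau))_\tau$ could be supported on a sublattice depending on the structure of the weight functions $\vpsi_{a_i}$ (for instance if some $\vpsi_{a_i}$ forces certain spin counts to have fixed parity), but this is harmless because the $\Theta(n)$ uniform standard-basis-vector contributions from the unary nodes dominate the Fourier-analytic estimate: their characteristic function equals $q^{-1}\sum_{\omega\in\Omega}e^{\mathrm i\scal{\theta}{e_\omega}}$, whose modulus is bounded strictly below $1$ off the dual lattice, yielding exponential decay of the full characteristic function outside a neighbourhood of the origin and hence the standard $n^{-(q-1)/2}$ LLT rate after Fourier inversion on the feasibility hyperplane. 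Once this LLT lower bound is in hand, combining with the Hoeffding estimate above is completely routine.
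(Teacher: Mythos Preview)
Your strategy is identical to the paper's: Hoeffding on the independent bounded summands $\vY^\flat_i(\tau)$ for the numerator, a lower bound on $\Pr[\rho_{\vy^\flat}=\rho_\chi]$ for the denominator, and divide. There is one quantitative slip. You correctly observe that $\rho_\chi$ lies $O(\log n)$ standard deviations from the mean (the hypothesis on $\sigma$ only controls the absolute deviation to $O(\sqrt n\log n)$), yet then claim the local limit theorem yields $\Omega(n^{-(q-1)/2})$. At that distance from the mean the Gaussian density factor is $\exp(-\Theta(\log^2 n))=n^{-\Theta(\log n)}$, so the correct lower bound is only $\Omega(n^{-(q-1)/2}n^{-c\log n})$ --- which is exactly what the paper records (obtained there via Stirling rather than a full LLT, which also sidesteps your lattice discussion). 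The overstatement is harmless for the conclusion: your Hoeffding bound $\exp(-\Omega(\ln^2 A\,(\ln\ln A)^2))$ still dominates the extra $n^{c\log n}$ by a factor of $(\ln\ln n)^2$ in the exponent, so the ratio remains $o(n^{-3})$.
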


\begin{proof}
Due to \SYM'\ 
we have $\Erw\brk{\vY^\flat_i(\tau)} = k_i/q$ for all $\tau\in\Omega$.
Therefore, Stirling's formula yields that there exists $c>0$ such that 
\begin{align*}
    \pr\brk{\rho_{\vy^\flat} = \rho_\chi} = \Omega\bc{ n^{-(q-1)/2}n^{-c\ln n}}.
\end{align*}
Hence, 
\begin{align}\nonumber
\pr&\brk{ \abs{\sum_{i=1}^{m}\vY^{\sharp}_i(\tau) - \frac{A}{q}} > \sqrt{A}\ln A\ln\ln A}\leq\frac{\pr\brk{ \abs{\sum_{i=1}^{m}\vY^\flat_i(\tau) - \frac{A}{q}} > \sqrt{A}\ln A\ln\ln A}}{\pr\brk{\rho_{\vy^\flat} = \rho_\chi}}\\
    &\leq O\bc{n^{(q-1)/2}n^{c\ln n}} \pr\brk{ \abs{\sum_{i=1}^{m}\vY^\flat_i(\tau) - \frac{A}{q}} > \sqrt{A}\ln A\ln\ln A}.
\label{eqNoelaChernoff1}\end{align}
Moreover, the fact that the factor degrees are bounded and the Azuma--Hoeffding inequality imply that
\begin{align}\label{eqNoelaChernoff2}
\pr\brk{ \abs{\sum_{i=1}^{m}\vY^\flat_a(z) - \frac{A}{q}} > \sqrt{A}\ln A\ln\ln A} \leq 2 \exp\bc{-\frac{2}{C^2}\log^2m\bc{\log\log m}^2}=O(n^{- \log n (\log \log n)}).
\end{align}
Combining \eqref{eqNoelaChernoff1} and \eqref{eqNoelaChernoff2} completes the proof.
\end{proof}

Let $\cY$ be the set of all extendible $y \in \Omega^{\cA}$ such that for all $\tau \in \Omega$,
\begin{align} \label{eq_config_concentration}
    \abs{\sum_{i=1}^{m}\sum_{j=1}^{k_i}\vecone\cbc{y_{(a_i,j)}= \tau} - \frac{A}{q}} \leq \sqrt{A}\ln\bc{A}\ln\ln\bc{A}.
\end{align}
\begin{lemma}\label{lem_coupling_pebble}
Suppose that {\bf SYM$'$} is satisfied.
There is a coupling of $\vy^{\sharp},\vy^{\sharp,+}$ and of $\vG^\sharp(\sd, \sk, \sP,\theta,\sigma)$, $\vG^\sharp( \sd, \sk^+, \sP,\theta,\sigma)$ such that
\begin{align*}
    \pr\brk{\vG^\sharp( \sd, \sk, \sP,\theta,\sigma) = \vG^\sharp( \sd, \sk^+, \sP,\sigma)\mid \vy^{\sharp}_{\cA},\vy^{\sharp,+}_{\cA}\in\cY}= 1-\tilde{O}\bc{\frac{1}{n}}.
\end{align*}
\end{lemma}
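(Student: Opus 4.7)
The plan is to couple $\vy^{\sharp}$ and $\vy^{\sharp,+}$ so that they agree on $\cA$, and then to use the common conditional laws furnished by SHARP2 and SHARP3 to align the weights and the bijections. The crux is to show that
$\dTV\bc{\vy^{\sharp}_\cA, \vy^{\sharp,+}_\cA \mid \cY} = \tilde O(1/n).$

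\textbf{Reduction to $\cA$.} Once $\vy^{\sharp}_\cA = \vy^{\sharp,+}_\cA = y$, SHARP2 draws $\vpsi_{a_i}^\sharp$ and $\vpsi_{a_i}^{\sharp,+}$ for $i \in [m]$ from the same conditional distribution (depending only on $y_{a_i}$), so they can be identified. Likewise, marginalising the uniform bijection $\vg^{\sharp,+}$ over its restriction to $\{a_{m+1}\} \times [k_{m+1}] \cup \cD^+$ shows that the restriction of $\vg^{\sharp,+}$ to $(\vg^{\sharp,+})^{-1}(\cA)$ has, colour by colour, the same uniform-injection distribution as the analogous restriction of $\vg^\sharp$; coupling them to coincide then produces $\vG^\sharp(\sd, \sk, \sP, \theta, \sigma) = \vG^\sharp(\sd, \sk^+, \sP, \theta, \sigma) - a_{m+1}$ whenever the $\cA$-coupling succeeds.

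\textbf{Density ratio and cancellation.} Conditioning the product laws $\vy^\flat, \vy^{\flat,+}$ on $\rho_{\vy^\flat} = \rho_\chi$, the ratio $\Pr[\vy^{\sharp,+}_\cA = y]/\Pr[\vy^{\sharp}_\cA = y]$ equals, up to a constant independent of $y$,
\[
R(y) := \sum_{y^+ \in \Omega^{k_{m+1}}} \Erw\brk{\vpsi_{a_{m+1}}(y^+)} \cdot \frac{\Delta^+!}{\Delta!} \cdot \prod_\tau \frac{(n_\tau - Y_\tau)!}{(n_\tau - Y_\tau - Z_\tau(y^+))!},
\]
where $Y_\tau = \abs{\cbc{\alpha \in \cA : y_\alpha = \tau}}$, $n_\tau = \abs{\chi^{-1}(\tau)}$, $Z_\tau(y^+)$ is the $\tau$-count of $y^+$, and $\Delta^+ = \Delta - k_{m+1}$. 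For $y \in \cY$ the hypotheses on $\sigma$ together with \eqref{eq_config_concentration} give $n_\tau - Y_\tau = \Delta/q + \delta_\tau$ with $\abs{\delta_\tau} = O(\sqrt n \log^2 n)$ and $\sum_\tau \delta_\tau = 0$. Expanding the falling factorials and using $k_{m+1} = O(1)$,
\[
\prod_\tau \prod_{j=0}^{Z_\tau - 1}\bc{\frac{\Delta}{q} + \delta_\tau - j} = \bc{\frac{\Delta}{q}}^{k_{m+1}} \bc{1 + \frac{q}{\Delta} \sum_\tau Z_\tau \delta_\tau + \tilde O(1/n)},
\]
while $\Delta^+!/\Delta! = \Delta^{-k_{m+1}}(1 + O(1/n))$. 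Crucially, SYM$'$ forces $\sum_{y^+ : y^+_j = \tau} \psi(y^+) = q^{k_{m+1}-1}\xi$ for every $\psi \in \Psi_{m+1}$ and every $j, \tau$, whence $\sum_{y^+}\Erw\brk{\vpsi_{a_{m+1}}(y^+)} Z_\tau(y^+) = (k_{m+1}/q) \sum_{y^+}\Erw\brk{\vpsi_{a_{m+1}}(y^+)}$; combined with $\sum_\tau \delta_\tau = 0$, the linear-in-$\delta$ term vanishes after averaging over $y^+$. Therefore $R(y) = R_0 (1 + \tilde O(1/n))$ uniformly on $\cY$, which is equivalent to the required total variation estimate and hence, via the reduction above, to the lemma.

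\textbf{Main obstacle.} A naive Taylor expansion of the falling factorials only yields $\tilde O(1/\sqrt n)$ because $\delta_\tau = \tilde O(\sqrt n)$. Recovering the extra factor of $\sqrt n$ hinges on the cancellation of the first-order perturbation, which in turn requires both the conservation identity $\sum_\tau \delta_\tau = 0$ and the uniform single-clone marginal of $\vpsi_{a_{m+1}}$ forced by SYM$'$. This cancellation is also what fails under the weaker SYM$''$ hypothesis of \Prop~\ref{prop_coupling_int_m}, explaining the degraded $\tilde O(1/\sqrt n)$ rate obtained there.
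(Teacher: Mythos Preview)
Your proposal is correct and follows the same overall strategy as the paper: bound $\dTV(\vy^{\sharp}_{\cA},\vy^{\sharp,+}_{\cA})$ on $\cY$ and then extend through {\bf SHARP2--3}. The only noteworthy difference is in how the density ratio is handled. The paper packages the multinomial comparison into Claim~\ref{Claim_pebble_1} (a Stirling estimate for $\binom{\Delta^+}{\alpha\Delta^+}\binom{\Delta^+}{\beta\Delta^+}^{-1}$) and then, in Claim~\ref{Claim_pebble_2}, isolates the error as $\tilde O(n^{-1/2})\sum_{j,\tau}|\pr[\vy^{\sharp,+}_{m+1,j}=\tau]-1/q|+\tilde O(n^{-1})$, so that {\bf SYM$'$} kills the first term at a single stroke. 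You instead expand the falling factorials $\prod_\tau\prod_{j=0}^{Z_\tau-1}(\Delta/q+\delta_\tau-j)$ directly and exhibit the same cancellation by combining $\sum_\tau\delta_\tau=0$ with the uniform clone marginals of $\vpsi_{a_{m+1}}$. Your route is a bit more elementary; the paper's formulation has the advantage that Claim~\ref{Claim_pebble_2} immediately yields the degraded $\tilde O(n^{-1/2})$ bound under {\bf SYM$''$} (Proposition~\ref{prop_coupling_int_m}) without any extra work, which you correctly note in your final paragraph.
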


We prove \Lem~\ref{lem_coupling_pebble} in several steps.
The first step is to calculate the following ratio.

\begin{claim}\label{Claim_pebble_1}
Suppose that $\alpha,\beta\in\cP(\Omega)$ satisfy $\dTV(\alpha,q^{-1}\vecone),\dTV(\beta,q^{-1}\vecone)=O(n^{-1/2}\log n\log\log n)$ and $\dTV(\alpha,\beta)=O(1/n)$.
Then
\begin{align*}
\binom{\Delta^+}{\alpha\Delta^+}\binom{\Delta^+}{\beta\Delta^+}^{-1}=1+q\Delta^+\sum_{\tau\in\Omega}(\alpha(\tau)-1/q)(\alpha(\tau)-\beta(\tau))+\tilde O(1/n).
\end{align*}
\end{claim}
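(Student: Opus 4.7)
The plan is to take logarithms and apply Stirling's formula, then carefully Taylor-expand around the near-uniform point $\alpha \approx 1/q$, using the constraint $\sum_\tau (\beta(\tau)-\alpha(\tau)) = 0$ to cancel the leading-order "drift" terms. Concretely, write $f(\mu) = \ln\binom{\Delta^+}{\mu\Delta^+}$ for $\mu \in \{\alpha,\beta\}$. Stirling's formula $\ln N! = N\ln N - N + \tfrac12\ln(2\pi N) + O(1/N)$ applied to $\Delta^+!$ and each $(\mu(\tau)\Delta^+)!$ gives, after the cancellations that use $\sum_\tau \mu(\tau)=1$,
\begin{align*}
f(\mu) = -\Delta^+ H(\mu) - \tfrac{q-1}{2}\ln(2\pi\Delta^+) - \tfrac12\sum_{\tau}\ln\mu(\tau) + O(1/n),
\end{align*}
where $H(\mu) = \sum_\tau \mu(\tau)\ln\mu(\tau)$. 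Since $\alpha(\tau),\beta(\tau) = 1/q + \tilde O(n^{-1/2})$ and $\alpha(\tau)-\beta(\tau) = O(1/n)$, the Stirling correction terms cancel between $f(\alpha)$ and $f(\beta)$ up to $O(1/n)$, so
\begin{align*}
f(\alpha) - f(\beta) = \Delta^+\bigl(H(\beta) - H(\alpha)\bigr) + O(1/n).
\end{align*}

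Next I would Taylor-expand $H$ to second order. Writing $\delta(\tau) = \beta(\tau) - \alpha(\tau)$ (so $\sum_\tau \delta(\tau) = 0$),
\begin{align*}
H(\beta) - H(\alpha) = \sum_\tau \ln\alpha(\tau)\,\delta(\tau) + \tfrac12\sum_\tau \frac{\delta(\tau)^2}{\alpha(\tau)} + O\Bigl(\sum_\tau |\delta(\tau)|^3\Bigr),
\end{align*}
where the $+1$ in $\nabla H$ vanishes against $\sum_\tau \delta(\tau)=0$. The second-order term contributes $\Delta^+ \cdot O(1/n^2) = O(1/n)$ after multiplying by $\Delta^+$, and the cubic remainder is even smaller. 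For the linear term I would substitute $\ln\alpha(\tau) = -\ln q + \ln\bigl(1 + q(\alpha(\tau)-1/q)\bigr)$, note that $-\ln q\cdot\sum_\tau\delta(\tau)=0$, and expand $\ln(1+x) = x - x^2/2 + O(x^3)$:
\begin{align*}
\sum_\tau \ln\alpha(\tau)\,\delta(\tau) = q\sum_\tau (\alpha(\tau)-1/q)\,\delta(\tau) - \tfrac{q^2}{2}\sum_\tau (\alpha(\tau)-1/q)^2\delta(\tau) + O\bigl((\alpha-1/q)^3\delta\bigr).
\end{align*}
Using the hypotheses $\alpha(\tau)-1/q = \tilde O(n^{-1/2})$ and $\delta(\tau) = O(1/n)$, the second and third terms contribute $\Delta^+\cdot \tilde O(n^{-2}) = \tilde O(1/n)$ after multiplying by $\Delta^+$. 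Since $\delta = -(\alpha-\beta)$, collecting everything yields
\begin{align*}
f(\alpha) - f(\beta) = q\Delta^+\sum_\tau (\alpha(\tau)-1/q)(\alpha(\tau)-\beta(\tau)) + \tilde O(1/n).
\end{align*}

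Finally I would exponentiate. The main term on the right is $\tilde O(n \cdot n^{-1/2} \cdot n^{-1}) = \tilde O(n^{-1/2})$, so $f(\alpha)-f(\beta) = \tilde O(n^{-1/2}) = o(1)$, and $(f(\alpha)-f(\beta))^2 = \tilde O(1/n)$. Hence $\exp(f(\alpha)-f(\beta)) = 1 + (f(\alpha)-f(\beta)) + \tilde O(1/n)$, which gives the claim. The only real obstacle is the bookkeeping of error bars: one has to keep track of the regimes $\alpha-1/q = \tilde O(n^{-1/2})$ and $\alpha-\beta=O(1/n)$ separately throughout, since mixing them up (e.g., bounding $\alpha-\beta$ by $\tilde O(n^{-1/2})$) would produce a weaker error of $\tilde O(n^{-1/2})$ instead of the required $\tilde O(1/n)$.
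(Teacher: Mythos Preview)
Your approach is correct and essentially identical to the paper's: apply Stirling to each multinomial, Taylor-expand the entropies, isolate the bilinear term $q\Delta^+\sum_\tau(\alpha-1/q)(\alpha-\beta)$, and exponentiate using that this term is $\tilde O(n^{-1/2})$. The only cosmetic difference is that the paper expands both $-\sum_\tau\alpha\ln\alpha$ and $-\sum_\tau\beta\ln\beta$ separately around the uniform point $1/q$ and then subtracts (producing a difference-of-squares identity $(\alpha-1/q)^2-(\beta-1/q)^2$), whereas you expand $H(\beta)-H(\alpha)$ directly at $\alpha$ and then expand $\ln\alpha(\tau)$ around $-\ln q$; the error bookkeeping is the same either way.

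One small algebraic slip: substituting $\delta=-(\alpha-\beta)$ into $q\Delta^+\sum_\tau(\alpha-1/q)\delta$ yields $-q\Delta^+\sum_\tau(\alpha-1/q)(\alpha-\beta)$, not $+q\Delta^+\ldots$. The paper's own derivation carries the same sign slip (in its difference-of-squares step), and since the claim is only invoked inside absolute-value estimates downstream, the sign is immaterial.
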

\begin{proof}
By Stirling's formula,
\begin{align}\label{eq_pebble_7}
\binom{\Delta^+}{\alpha\Delta^+}=\bc{2\pi \Delta^+}^{-(q-1)/2} \exp\bc{-\Delta^+\sum_{\tau \in \Omega}\alpha(\tau) \log\bc{\alpha(\tau)} - \frac{1}{2}\sum_{\tau \in \Omega}\log\bc{\alpha(\tau)} + O\bc{\frac{1}{n}}}.
\end{align}
Moreover, applying Taylor's formula to the entropy function, we obtain
\begin{align}\label{eq_pebble_8}
&-\sum_{\tau\in\Omega} \alpha(\tau)\ln\alpha(\tau) = \log q -  \frac{q}{2} \sum_{\tau\in\Omega} \bc{\alpha(\tau)-\frac{1}{q}}^2  + \frac{q^2}{6} \sum_{\tau \in \Omega} \bc{\alpha(\tau)-\frac{1}{q}}^3  +  \tilde{O}\bc{\frac{1}{n^2}}.
\end{align}
Of course, estimates similar to \eqref{eq_pebble_7} and \eqref{eq_pebble_8} apply to $\binom{\Delta^+}{\beta\Delta^+}$.
Combining them, we obtain
\begin{align}\nonumber 
\binom{\Delta^+}{\alpha\Delta^+}\binom{\Delta^+}{\beta\Delta^+}^{-1}&=\exp\bigg[-\Delta^+\bc{\frac{q}{2}\sum_{\tau\in\Omega}(\alpha(\tau)-1/q)^2-(\beta(\tau)-1/q)^2+\frac{q}{3}\bc{\alpha(\tau)-1/q}^3-(\beta(\tau)-1/q)^3}\\
		&\qquad-\frac{1}{2}\sum_{\tau\in\Omega}\ln\frac{\alpha(\tau)}{\beta(\tau)}+O(1/n)\bigg].\label{eq_pebble_8a}
\end{align}
Furthermore,
\begin{align}\nonumber
\sum_{\tau\in\Omega}(\alpha(\tau)-1/q)^2-(\beta(\tau)-1/q)^2&=\sum_{\tau\in\Omega}(\alpha(\tau)-\beta(\tau))^2+2(\alpha(\tau)-1/q)(\beta(\tau)-\alpha(\tau))\\
	&=O(1/n^2)+2\sum_{\tau\in\Omega}(\alpha(\tau)-1/q)(\beta(\tau)-\alpha(\tau)),\label{eq_pebble_8b}\\
\sum_{\tau\in\Omega}(\alpha(\tau)-1/q)^3-(\beta(\tau)-1/q)^3&=-\sum_{\tau\in\Omega}(\beta(\tau)-\alpha(\tau))^3+3(\alpha(\tau)-1/q)^2(\beta(\tau)-\alpha(\tau))+3(\alpha(\tau)-1/q)(\beta(\tau)-\alpha(\tau))^2\nonumber\\
&=\tilde O(1/n^2)
\label{eq_pebble_8c}\end{align}
Plugging \eqref{eq_pebble_8b} and \eqref{eq_pebble_8c} into \eqref{eq_pebble_8a}, we obtain
\begin{align*}
\binom{\Delta^+}{\alpha\Delta^+}\binom{\Delta^+}{\beta\Delta^+}^{-1}&=\exp\bc{q\Delta^+\sum_{\tau\in\Omega}(\alpha(\tau)-1/q)(\alpha(\tau)-\beta(\tau))+\tilde O(1/n)}.
\end{align*}
Expanding the exponential series completes the proof.
\end{proof}
\begin{claim}\label{Claim_pebble_2}
For $y\in\cY$ we have $\frac{\pr[\vy_{\cA}^\sharp=y]}{\pr[\vy_{\cA}^{\sharp,+}=y]}=1+\tilde O(n^{-1/2})\sum_{j=1}^{k_{m+1}}\sum_{\tau\in\Omega}\abs{\pr\brk{\vy^{\sharp,+}_{m+1,j}=\tau}-1/q}+\tilde O(n^{-1})$.
\end{claim}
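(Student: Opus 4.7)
The plan is to express both marginals via the unconditioned distributions $\vy^\flat,\vy^{\flat,+}$ from \textbf{SHARP1} and the conditioning event $\{\rho_{\vy^\flat}=\rho_\chi\}$, and then to expand the resulting ratio of multinomial coefficients as in Claim~\ref{Claim_pebble_1}. Since the $\cA$-marginals of $\vy^\flat$ and $\vy^{\flat,+}$ coincide, Bayes' rule reduces the ratio in question to
\begin{align*}
\frac{\pr\brk{\vy^\sharp_\cA=y}}{\pr\brk{\vy^{\sharp,+}_\cA=y}}=\frac{\pr\brk{\rho_{\vy^\flat}=\rho_\chi\mid\vy^\flat_\cA=y}}{\pr\brk{\rho_{\vy^{\flat,+}}=\rho_\chi\mid\vy^{\flat,+}_\cA=y}}\cdot\frac{\pr\brk{\rho_{\vy^{\flat,+}}=\rho_\chi}}{\pr\brk{\rho_{\vy^\flat}=\rho_\chi}}.
\end{align*}
Because the $\cD$-clones are independent and uniform, the numerator conditional equals $\binom{\Delta}{r_\cdot(y)}q^{-\Delta}$ with $r_\tau(y):=N\rho_\chi(\tau)-n_\tau(y)$, while the denominator conditional resolves as $\sum_{y_+\in\Omega^{k_{m+1}}}\pr\brk{\vy^{\flat,+}_{a_{m+1}}=y_+}\binom{\Delta^+}{r_\cdot(y,y_+)}q^{-\Delta^+}$ with $r_\tau(y,y_+)=r_\tau(y)-n_\tau(y_+)$.

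Next I would expand this multinomial ratio. Since $y\in\cY$ and the assumptions on $\sigma$ give $r_\tau(y)=\Delta/q+\delta_\tau$ with $|\delta_\tau|=\tilde O(\sqrt n)$ and $\sum_\tau\delta_\tau=0$, Stirling as in Claim~\ref{Claim_pebble_1} yields
\begin{align*}
\frac{\binom{\Delta}{r_\cdot(y)}q^{-\Delta}}{\binom{\Delta^+}{r_\cdot(y,y_+)}q^{-\Delta^+}}=\prod_{\tau\in\Omega}\bc{1+\tfrac{q\delta_\tau}{\Delta}}^{-n_\tau(y_+)}(1+O(1/n))=1-\sum_{\tau\in\Omega}n_\tau(y_+)\tfrac{q\delta_\tau}{\Delta}+\tilde O(1/n),
\end{align*}
using $n_\tau(y_+)=O(1)$ and $|q\delta_\tau/\Delta|=\tilde O(1/\sqrt n)$. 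Averaging over $y_+$ against $\pr\brk{\vy^{\flat,+}_{a_{m+1}}=y_+}$ and writing $n_\tau(y_+)=\sum_{j=1}^{k_{m+1}}\vecone\{y_{+,j}=\tau\}$ produces a linear-in-$\delta$ contribution $-\sum_{j,\tau}(q\delta_\tau/\Delta)\pr\brk{\vy^{\flat,+}_{a_{m+1},j}=\tau}$. Crucially, $\sum_\tau\delta_\tau=0$ lets us subtract the uniform piece $\sum_{j,\tau}(q\delta_\tau/\Delta)\cdot(1/q)=0$, collapsing this contribution to $-\sum_{j,\tau}(q\delta_\tau/\Delta)(\pr\brk{\vy^{\flat,+}_{a_{m+1},j}=\tau}-1/q)$, of magnitude at most $\tilde O(1/\sqrt n)\sum_{j,\tau}|\pr\brk{\vy^{\flat,+}_{a_{m+1},j}=\tau}-1/q|$.

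The $y$-independent factor $\pr\brk{\rho_{\vy^{\flat,+}}=\rho_\chi}/\pr\brk{\rho_{\vy^\flat}=\rho_\chi}$ is then pinned down by the normalisations $\sum_y\pr\brk{\vy^\sharp_\cA=y}=\sum_y\pr\brk{\vy^{\sharp,+}_\cA=y}=1$: invoking \Lem~\ref{lem_config_concentration} to discard $y\notin\cY$ at cost $O(n^{-3})$ and averaging the above correction under $\vy^{\sharp,+}_\cA$ forces this factor to absorb the average of the $y$-dependent correction while preserving the claimed form. Finally, the unconditioned marginal $\pr\brk{\vy^{\flat,+}_{a_{m+1},j}=\tau}$ may be swapped for $\pr\brk{\vy^{\sharp,+}_{m+1,j}=\tau}$: a standard local limit argument for the multinomial $\rho_{\vy^{\flat,+}}$ shows that conditioning on $\rho_{\vy^{\flat,+}}=\rho_\chi$ shifts the marginal of a single $O(1)$-bounded coordinate by at most $\tilde O(1/\sqrt n)$, and multiplied by the $\tilde O(1/\sqrt n)$ prefactor this fits inside the $\tilde O(1/n)$ error.

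The hard part will be the careful bookkeeping of these error terms: a naive Stirling expansion produces only $1+\tilde O(1/\sqrt n)$ unconditionally, and the sharper form of the claim hinges on exposing the cancellation $\sum_\tau\delta_\tau=0$ so that the surviving leading correction is proportional to the actual bias of the $a_{m+1}$-marginal rather than to $\delta_\tau$ itself. This structural feature is precisely what will later enable \Prop~\ref{prop_coupling_int_m} to dispense with the symmetry assumption on the new factor node.
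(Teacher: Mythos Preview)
Your proposal is correct and follows essentially the same route as the paper: both factor the ratio via Bayes' rule into a $y$-dependent piece (a ratio of multinomial point probabilities on the dummy clones) and the $y$-independent factor $\pr[\rho_{\vy^{\flat,+}}=\rho_\chi]/\pr[\rho_{\vy^\flat}=\rho_\chi]$, expand the former to first order in the deviations $\delta_\tau$, and exploit $\sum_\tau\delta_\tau=0$ to collapse the leading correction to the bias of the $a_{m+1}$-marginal. The one noteworthy difference is your treatment of the $y$-independent factor: the paper computes it directly by a second decomposition over $y'\in\Omega^{k_{m+1}}$ and a second invocation of Claim~\ref{Claim_pebble_1} (see \eqref{different_plus1}--\eqref{eq_pebble_667}), whereas you pin it down by the normalisation $\sum_y\pr[\vy^\sharp_\cA=y]=\sum_y\pr[\vy^{\sharp,+}_\cA=y]=1$ together with \Lem~\ref{lem_config_concentration}. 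Your shortcut is valid and slightly cleaner; the paper's direct computation has the minor advantage of yielding the constant explicitly rather than merely bounding it. Your final swap of $\vy^{\flat,+}_{a_{m+1},j}$ for $\vy^{\sharp,+}_{m+1,j}$ is the same move the paper makes implicitly between \eqref{eq_peblle_666} and \eqref{eq_pebble_667}, and the justification you sketch (conditioning on the global colour profile perturbs a bounded local marginal by $\tilde O(n^{-1/2})$, which is absorbed by the $\tilde O(n^{-1/2})$ prefactor) is adequate.
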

\begin{proof} 
For any $y\in\cY$ we have, by the definition of $\vy^\sharp$ from {\bf SHARP1},
\begin{align}\nonumber
\pr\brk{\vy^{\sharp}_{\cA} = y } &=\frac{\pr\brk{\vy^{\flat}_{\cA}=y,\rho_{\vy^{\flat}}=\rho_\chi}}{\pr\brk{\rho_{\vy^{\flat}=\rho_\chi}}}
= \sum_{y' \in \Omega^{k_{m+1}}}
\frac{\pr\brk{\vy^{\flat}_{\cA}=y,\,\forall i\in[k_{m+1}]:\vy^{\flat}_{b_i}=y'_i,\rho_{\vy^{\flat}} = \rho_{\chi}}}
		{\pr\brk{\rho_{\vy^\flat}=\rho_\chi}}\\
    &= \frac{\pr\brk{\vy^{\flat}_{\cA}=y}}{\pr\brk{\rho_{\vy^{\flat}}=\rho_\chi}}
			\sum_{y' \in \Omega^{k_{m+1}}} \pr\brk{\forall i\in[k_{m+1}]:\vy^{\flat}_{b_i}=y_i'}
				\pr\brk{\rho_{(y,y',\vy^\flat_{\Delta^+})} = \rho_{\chi}} \label{eq_pebble_1}.
\end{align}
Analogously,
\begin{align}\label{eq_pebble_2}
\pr\brk{\vy^{\sharp,+}_{\cA} = y} &=\frac{\pr\brk{\vy^{\flat, +}_{\cA}=y}}{\pr\brk{\rho_{\vy^{\flat,+}}=\rho_\chi}}
	\sum_{y' \in \Omega^{k_{m+1}}} \pr\brk{\forall i\in[k_{m+1}]:\vy^{\flat,+}_{b_{a_{m+1}}}=y'_i }  \pr\brk{\rho_{(y,y',\vy^{\flat,+}_{\Delta^+})} = \rho_{\chi}}.
\end{align}
Set 
\begin{align*}
    \alpha_{y,y'}(\tau)&=\frac1{\Delta^+}\brk{\sum_{i=1}^n d_i \vecone\cbc{\sigma_i = \tau} - \sum_{i=1}^{m}\sum_{j=1}^{k_i} \vecone\cbc{y_{(a_i,j)} = \tau} - \sum_{j=1}^{k_{m+1}}\vecone\cbc{y'_{j}=\tau}},\\
   \alpha_y'(\tau)&=\frac1{\Delta^+}\brk{\sum_{i=1}^n d_i \vecone\cbc{\sigma_i = \tau} - \sum_{i=1}^{m}\sum_{j=1}^{k_i} \vecone\cbc{y_{(a_i,j)} = \tau} - \frac{k_{m+1}}{q}},\\
    \alpha_{y'}''(\tau)&=\frac1{\Delta^+} \sum_{j=1}^{k_{m+1}}\bc{\frac{1}{q} - \vecone\cbc{y'_{j}=\tau}}
\end{align*}
so that $ \alpha_{y,y'}(\tau) =  \alpha_{y}'(\tau) + \alpha_{y'}''(\tau)$.
Then
\begin{align*}
  \pr\brk{\rho_{(y,y',\vy^\flat_{\Delta^+})} = \rho_{\chi}}  = \pr\brk{\Mult\bc{\Delta^+, \frac{1}{q}, \ldots, \frac{1}{q}} = \Delta^+\alpha_{y,y'}} = q^{-\Delta^+}\binom{\Delta^+}{\bc{\alpha_{y,y'}(\tau)\Delta^+}_{\tau \in \Omega}}.
\end{align*}
Moreover, because $y \in \cY$ we have
\begin{align*}
  \alpha_{y}'(\tau) = 1/q + O\bc{\frac{\log n \log \log n}{\sqrt{n}}} \hspace{1 cm} \text{and} \hspace{1 cm}   \alpha_{y'}''(\tau) = O\bc{\frac{1}{n}}.
\end{align*}
Claim~\ref{Claim_pebble_1}, \eqref{eq_pebble_1} and \eqref{eq_pebble_2} therefore yield
\begin{align}\label{eq_peblle_666}
\frac{\pr\brk{\vy^{\sharp}_{\cA} = y }}{    \pr\brk{\vy^{\sharp,+}_{\cA} = y }}
	&=\frac{\pr\brk{\rho_{\vy^{\flat,+}}=\rho_{\chi}}}{\pr\brk{\rho_{\vy^{\flat}}=\rho_{\chi}}}\bc{1+\tilde O(1/n)+\tilde O(n^{-1/2})\sum_{\tau\in\Omega}\sum_{j=1}^{k_{m+1}}\abs{\pr\brk{\vy^{\flat,+}_{m+1,j}=\tau}-1/q}}.
\end{align}

We finally need to compare $\pr\brk{\rho_{\vy^{\flat}}=\rho_\chi}$ and $\pr\brk{\rho_{\vy^{\flat,+}}=\rho_\chi}$. This can be done in a similar way to the previous calculation.
For $y' \in \Omega^{k_{m+1}}$ and $\tau \in\Omega$, write
\begin{align*}
 \alpha_{y'}^-(\tau) = \frac{\sum_{i=1}^n d_i \vecone\cbc{\sigma_i = \tau} - \sum_{h=1}^{k_{m+1}} \vecone\cbc{y'_h=\tau}}{\sum_{i=1}^nd_i-k_{m+1}}.
\end{align*}
Moreover, let $\vy^{\flat,-}$ be the vector $\vy^{\flat}$ with the components corresponding to $b_{1},\ldots,b_{k_{m+1}}$ removed.
Then
\begin{align}\label{different_plus1}
\frac{\pr\brk{\rho_{\vy^{\flat}}=\rho_\chi}}{\pr\brk{\rho_{\vy^{\flat+}}=\rho_\chi}} &=
\frac{\sum_{y'' \in \Omega^{k_{m+1}}}\Erw\brk{\PSI_{m+1}(y'')}}{q^{k_{m+1}}} \cdot
 \frac{\sum_{y' \in \Omega^{k_{m+1}}}\pr\brk{\rho_{\vy^{\flat, -}}=\bc{\alpha_{y'}^-(\tau)}_{\tau \in \Omega}}}
	{\sum_{y' \in \Omega^{k_{m+1}}}\pr\brk{\rho_{\vy^{\flat,-}}=\bc{\alpha_{y'}^-(\tau)}_{\tau \in \Omega}}\Erw\brk{\PSI_{m+1}(y')} }.
\end{align}
We next compare the probabilities to hit certain colour statistics if factor node $m+1$ is excluded. As before,
\begin{align}\label{eq_pebble_3}
   & \pr\brk{\rho_{\vy^{\flat,-}}=\bc{\alpha_{y'}^-(\tau)}_{\tau \in \Omega}} = \sum_{y^+ \in \Omega^{\cA}}\pr\brk{\Mult\bc{\Delta^+, \frac{1}{q}, \ldots, \frac{1}{q}} = \Delta^+ \alpha_{y^+,y'}} \pr\brk{\vy_{\cA}^{\flat} = y^+}.
\end{align}
To estimate \eqref{eq_pebble_3} we notice that for any $y^+ \in\Omega^{\cA}$ with
\begin{align}\label{eq_pebble_3a}
\abs{\sum_{i=1}^{m}\sum_{h=1}^{k_i} \vecone\cbc{y^+_{(a_i,h)}=\tau} - |\cA|/q} > \sqrt{|\cA|}\ln|\cA|\ln\ln |\cA|
\end{align}
and any $y' \in\Omega^{k_{m+1}}$, we find
\begin{align}\label{eq_pebble_4}
   \abs{ \Delta^+\alpha_{y^+,y'}(\tau) - \frac{\Delta^+}{q}} = \Omega\bc{\sqrt{|\cA|} \log |\cA| \log\log |\cA|}
\end{align}
Further, if \eqref{eq_pebble_4} is satisfied, then the Chernoff bound implies that there is a constant $\delta > 0$ such that
\begin{align}
    &\pr\brk{\Mult\bc{\Delta^+, \frac{1}{q}, \ldots, \frac{1}{q}} = \Delta^+\alpha_{y^+,y'}}\pr\brk{\vy^{\flat}_{\cA} = y^+} = O\bc{n^{-\delta \log n (\log\log n)^2}}\pr\brk{\vy^{\flat}_{\cA} = y^+}.\nonumber
\end{align}
Hence,
\begin{align*}
\frac{\pr\brk{\rho_{\vy^{\flat}}=\rho_\chi}}{\pr\brk{\rho_{\vy^{\flat, +}}=\rho_\chi}} &=\frac{q^{-k_{m+1}}
		\sum_{y'\in\Omega^{k_{m+1}}}\sum_{y^+\in\cY}\pr\brk{\vy^{\flat}_{\cA}=y^+}\binom{\Delta^+}{\Delta^+\alpha_{y^+,y'}}}
			{\sum_{y'\in\Omega^{k_{m+1}}}\pr\brk{\vy^{\flat,+}_{m+1}=y'}\sum_{y^+\in\cY}\pr\brk{\vy^{\flat}_{\cA}=y^+}\binom{\Delta^+}{\Delta^+\alpha_{y^+,y'}}}+\tilde O(1/n)
\end{align*}
Thus, Claim~\ref{Claim_pebble_1} yields
\begin{align}\label{eq_pebble_667}
\frac{\pr\brk{\rho_{\vy^{\flat}}=\rho_\chi}}{\pr\brk{\rho_{\vy^{\flat, +}}=\rho_\chi}} &= 
1+\tilde O(n^{-1/2})\sum_{j=1}^{k_{m+1}}\sum_{\tau\in\Omega}\abs{\pr\brk{\vy^{\sharp,+}_{m+1,j}=\tau}-1/q}+\tilde O(n^{-1}).
\end{align}
Combining \eqref{eq_pebble_1}, \eqref{eq_pebble_2}, \eqref{eq_peblle_666} and \eqref{eq_pebble_667}, we obtain the assertion.
\end{proof}

\begin{proof}[Proof of \Lem~\ref{lem_coupling_pebble}]
Claim~\ref{Claim_pebble_2} and assumption {\bf SYM'} yield
\begin{align*}
\dTV(\vy^{\sharp}_{\cA},\vy^{\sharp,+}_{\cA})=\tilde{O}(1/n).
\end{align*}
The coupling lemma for the total variation distance (see i.e. \cite{Levin_2017}) therefore yields a coupling under which $\vy^{\sharp}_{\cA}$ and $\vy^{\sharp,+}_{\cA}$ differ with probability $\tilde{O}(1/n)$.
The construction {\bf SHARP2--3} finally extends this coupling to the desired coupling of
$\vG^\sharp(\sd, \sk, \sP,\theta,\sigma)$ and $\vG^\sharp(\sd, \sk^+, \sP,\theta,\sigma)$.
\end{proof}

\begin{lemma} \label{lem_coupling_pebble2}
There is a coupling of $\vy^{\sharp},\vy^{\sharp,+}$ and of $\vG^\sharp(\sd, \sk, \sP,\theta,\sigma)$, $\vG^\sharp( \sd, \sk^+, \sP,\theta,\sigma)$ such that
\begin{align*}
    \pr\brk{\abs{\vG^\sharp( \sd, \sk, \sP,\theta,\sigma) \triangle \vG^\sharp( \sd, \sk^+, \sP,\sigma)}>\sqrt n\log n\mid \vy^{\sharp}_{\cA},\vy^{\sharp,+}_{\cA}\in\cY}= O\bc{n^{-2}}.
\end{align*}
\end{lemma}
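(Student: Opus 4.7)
My approach is to extend the coupling from \Lem~\ref{lem_coupling_pebble} to a joint coupling of the full configurations $\vy^\sharp, \vy^{\sharp,+}$ and of the bijections $\vg^\sharp, \vg^{\sharp,+}$, and then control the symmetric difference of the resulting graphs via the Hamming distance of the coupled assignments. The key observation is that under a carefully constructed coupling, $|\vG^\sharp\triangle\vG^{\sharp,+}|$ is dominated by (a constant times) the Hamming distance $H(\vy^\sharp,\vy^{\sharp,+})$ plus $O(k_{m+1})$ edges accounting for the new factor $a_{m+1}$, after which the tail bound follows from the concentration provided by \Lem~\ref{lem_config_concentration}.

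\textbf{Step 1 (couple the assignments).} I first reuse the coupling from \Lem~\ref{lem_coupling_pebble} on $\cA$, and extend it to the cavity clones. Conditional on $\vy^\sharp_\cA$, the distribution of $\vy^\sharp_\cD$ is uniform over $\Omega^\cD$ subject to completing the colour profile to $\rho_\chi$, and similarly for $\vy^{\sharp,+}_{\cD^+}$. Since $\cD^+\subset\cD$ with $|\cD\setminus\cD^+|=k_{m+1}$ and since the colour profiles required on $\cD^+$ differ by only $O(1)+|\vy^\sharp_\cA\triangle\vy^{\sharp,+}_\cA|$ positions, I can apply a maximum coupling component by component to ensure that $\vy^\sharp$ and $\vy^{\sharp,+}$ agree on $\cD^+$ up to $O(1)+H(\vy^\sharp_\cA,\vy^{\sharp,+}_\cA)$ positions. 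Weight functions are coupled to agree conditional on the coupled assignments.

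\textbf{Step 2 (couple the bijections).} For each colour $\tau\in\Omega$, let $S_\tau,S'_\tau$ denote the colour-$\tau$ factor clones in the two models; both have size $|\chi^{-1}(\tau)|$. I use a block coupling: draw a uniformly random subset $\vA_\tau\subseteq\chi^{-1}(\tau)$ of size $r_\tau:=|S_\tau\setminus S'_\tau|=|S'_\tau\setminus S_\tau|$, sample a uniform bijection from $\chi^{-1}(\tau)\setminus\vA_\tau$ to $S_\tau\cap S'_\tau$ used identically in both models, and sample independent uniform bijections from $\vA_\tau$ to $S_\tau\setminus S'_\tau$ and to $S'_\tau\setminus S_\tau$ for the two graphs respectively. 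By the symmetry of uniform random bijections the marginals are correct, and $\vg^\sharp,\vg^{\sharp,+}$ agree on all but $r_\tau$ variable clones of colour $\tau$. Since $\sum_\tau r_\tau\le H(\vy^\sharp,\vy^{\sharp,+})+2k_{m+1}$, the symmetric difference satisfies
\begin{align*}
|\vG^\sharp(\sd,\sk,\sP,\theta,\sigma)\triangle \vG^\sharp(\sd,\sk^+,\sP,\theta,\sigma)|\le C\,(H(\vy^\sharp,\vy^{\sharp,+})+k_{m+1})
\end{align*}
for an absolute constant $C$.

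\textbf{Step 3 (tail bound for the Hamming distance).} It remains to show that, conditional on $\vy^\sharp_\cA,\vy^{\sharp,+}_\cA\in\cY$, one has $H(\vy^\sharp,\vy^{\sharp,+})\le \sqrt n\log n/(2C)$ with probability $1-O(n^{-2})$. I would argue this by re-expressing $H$ as a sum over factor nodes of the discrepancy $|\vY^\sharp_i(\tau)-\vY^{\sharp,+}_i(\tau)|$ in the colour counts, and applying an Azuma--Hoeffding argument in the spirit of \Lem~\ref{lem_config_concentration}: both $\vy^\sharp$ and $\vy^{\sharp,+}$ are obtained by conditioning the product measures $\vy^\flat,\vy^{\flat,+}$ on the colour-profile event, and conditional on $\cY$ the conditioning event has probability at least $n^{-c\log n}$ times what the unconditional Azuma--Hoeffding bound gives, so a deviation of $\sqrt n\log n$ in $H$ translates to an unconditional Azuma bound of $\exp(-\Omega(\log^2 n))$, easily beating $O(n^{-2})$.

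\textbf{Main obstacle.} The delicate part is Step~3: \Lem~\ref{lem_coupling_pebble} only yields agreement probability $1-\tilde O(n^{-1})$, which is too weak directly, so the refinement must come from combining the explicit total-variation estimate of Claim~\ref{Claim_pebble_2} with the concentration of colour counts on $\cY$. In particular, one needs to verify that the extra logarithmic loss from absorbing $1/\pr[\rho_{\vy^\flat}=\rho_\chi]$ (as in the proof of \Lem~\ref{lem_config_concentration}) is still compatible with an $O(n^{-2})$ tail for deviations of size $\sqrt n\log n$, which is exactly the threshold chosen in the definition of $\cY$.
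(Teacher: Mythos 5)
Your Steps 1--2 are fine in principle, but Step 3 contains a genuine gap, and it is precisely the step that carries the whole argument. The quantities that \Lem~\ref{lem_config_concentration} (and the Azuma--Hoeffding argument behind it) control are the \emph{aggregate} colour counts $\sum_{i}\vY^{\sharp}_i(\tau)$ and $\sum_i\vY^{\sharp,+}_i(\tau)$, and these tell you nothing about the Hamming distance $H(\vy^\sharp,\vy^{\sharp,+})$ under a coupling: two \emph{independent} samples from the same conditioned product measure already have aggregate colour counts agreeing to within $O(\sqrt n\log n)$, yet their Hamming distance is $\Theta(n)$, and the same is true of the per-factor discrepancies $\sum_i\sum_\tau|\vY^\sharp_i(\tau)-\vY^{\sharp,+}_i(\tau)|$ that you propose to sum. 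Likewise, the total-variation estimate of Claim~\ref{Claim_pebble_2} only yields a maximal coupling under which the assignments are \emph{equal} with probability $1-\tilde O(1/n)$; on the residual event of probability $\tilde O(1/n)\gg n^{-2}$ it gives no control whatsoever on how far apart they are, so neither ingredient you cite produces the bound $H\le\sqrt n\log n/(2C)$ with probability $1-O(n^{-2})$. The missing idea is an exchangeability, not a concentration, statement: factor nodes sharing the same triple (arity, weight function, neighbourhood-spin tuple) are interchangeable in the construction {\bf SHARP1--3}, so one should couple the two graphs by matching factors of identical type rather than by matching half-edge assignments position by position.

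This is exactly what the paper does: since the degrees are bounded, $\Omega$ is finite and the $\Psi_k$ are finite, there are only finitely many types $(\psi,k,\tau)$, each occurring with probability bounded below; a Chernoff bound shows that in both graphs every type occurs at least $\d nq_{\psi,k,\tau}/\k-\sqrt n\log n$ times with probability $1-O(n^{-2})$, and then the graphs can be coupled so that all but $O(\sqrt n\log n)$ factor nodes (namely the excess of each type) are identified, together with their attachments. If you repair Step~3 by first coupling the empirical \emph{type counts} (which do concentrate at the $\sqrt n\log n$ scale) and only then coupling the assignments conditionally on those counts, your argument essentially collapses into the paper's proof; as written, however, the claimed reduction of the symmetric difference to aggregate colour-count concentration does not go through.
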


\begin{proof}
Let us denote by $\cbc{\psi \in \Psi, k \in \NN_{\geq 2}, \tau \in \Omega^k}$ the set of all possible triples of weight function, arity and neighbourhood spins for a factor node. Further, let $q_{\psi, k, \tau}$ denote the probability to observe such a triple. Since each factor node's arity is bounded, $\Omega$ is a finite set and there exist only finitely many different weight functions, the number of distinct weight function, arity and neighbourhood triples is also finite. Thus $q_{\psi, k, \sigma}>\eps$ for some arbitrarily small $\eps>0$. Since there are $m = \Theta(n)$ many factor nodes, the Chernoff bound for the binomial distribution ensures that each distinct $(\psi, k, \tau)$ occurs in both $\vG^\sharp(\sd, \sk, \sP,\theta,\sigma)$ and $\vG^\sharp(\sd, \sk^+, \sP,\theta,\sigma)$ for any choice of $\sd, \sk^+, \sP,\theta,\sigma$ at least 
\begin{align*}
    \d n q_{\psi, k, \tau}/ \k - \sqrt{n}\log n
\end{align*}
often with probability $1-O(n^{-2})$. Therefore, we can couple $\vG^\sharp(\sd, \sk, \sP,\theta,\sigma)$ and $\vG^\sharp(\sd, \sk^+, \sP,\theta,\sigma)$ in such a way that they differ in at most $\sqrt{n} \log n$ factor nodes with probability $1-O(n^{-2})$ whence the lemma follows.
\end{proof}

\begin{proof}[Proof of \Prop~\ref{prop_coupling_int}]
The proposition is an immediate consequence of \Lem s~\ref{lem_pebble_dist}-\ref{lem_coupling_pebble2}.
\end{proof}

\begin{proof}[Proof of \Prop~\ref{prop_coupling_int_m}]
By \Lem~\ref{lem_pebble_dist}, \ref{lem_config_concentration}, Claims \ref{Claim_pebble_1} and \ref{Claim_pebble_2} and assuming \SYM'', we have
\begin{align*}
    \frac{\pr[\vy_{\cA}^\sharp=y]}{\pr[\vy_{\cA}^{\sharp,+}=y]}=1+\tilde O(n^{-1/2})\sum_{j=1}^{k_{m+1}}\sum_{\tau\in\Omega}\abs{\pr\brk{\vy^{\sharp,+}_{m+1,j}=\tau}-1/q}+\tilde O(n^{-1}) = 1+\tilde O(n^{-1/2})
\end{align*}
Thus, $\dTV(\vy^{\sharp}_{\cA},\vy^{\sharp,+}_{\cA})=\tilde O(n^{-1/2})$ and the construction {\bf SHARP2--3} extends this coupling to the desired coupling of
$\vG^\sharp(\sd, \sk, \sP,\theta,\sigma)$ and $\vG^\sharp(\sd, \sk^+, \sP,\theta,\sigma)(\sigma)$.
\Lem~\ref{lem_coupling_pebble2} establishes the second statement of the lemma.
\end{proof}

\begin{proof}[Proof of \Prop~\ref{Prop_NoelaRough}]
The proposition is a special case of \Prop~\ref{prop_coupling_int}.
\end{proof}

\subsection{Adding a variable}
We add a variable node with its adjacent factor nodes to $\G^*(\sd, \sk, \sP, \theta, \sigma)$ as follows.
Let $\sd^+$ be the sequence $\sd$ extended by the degree of a new the variable node $x_{n+1}$.
Similarly, let $\sk^+$ be the sequence $\sk$ with the degrees of the factor nodes $a_1'=a_{m+1}, \dots,a_{d_{n+1}}'= a_{m+d_{n+1}}$ appended.
Also let $h_i\in[k_{m+i}]$ for each $i\in[d_{n+1}]$ and let $\psi_{a_i'}$ signify the weight function of $a_i'$.
Furthermore, let $\check \G^*(\sd^+,\sk^+,\sP,\sigma)$ be the random factor graph that results from the following experiment. %\mhk{Muss es in PLUS2 nicht $\check \G^*$ sein?}
\begin{description}
\item[PLUS1] choose $\sigma_{x_{n+1}}\in\Omega$ uniformly at random.
\item[PLUS2] draw a random factor graph $\G^*(\sd^+,\sk^+,\sP,(\sigma,\sigma_{x_{n+1}}))$ given that the clones $x_{n+1}\times[d_{n+1}]$ are connected to $(a'_{1},h_1),\ldots (a'_{d_{n+1}},h_{d_{n+1}})$ in this order.
\end{description}
As in the previous subsection, we ask how the factor graph $\check \G^*(\sd^+,\sk^+,\sP,\sigma)-x_{n+1}-a_1'-\ldots-a_{d_{n+1}}'$ obtained by removing $x_{n+1},a_1',\ldots,a_{d_{n+1}}'$ compares to $\G^*(\sd,\sk,\sP,\sigma)$.
We need the following assumption.
\begin{description}
\item[SYM$'''$] There exist reals $\eps,\xi>0$ such that for every $\psi\in\bigcup_{1\leq i\leq m+d_{n+1}}\Psi_i$, $j\in[k_\psi]$, $\omega\in\Omega$ we have
\begin{align*}
q^{1-k_\psi}\sum_{\sigma\in\Omega^{k_\psi}}\vecone\cbc{\sigma_j=\omega}\psi(\sigma)&=\xi,&\min_{\sigma\in\Omega^{k_\psi}}\psi(\sigma)>\eps.
\end{align*} 
\end{description}

\begin{proposition}\label{Prop_oplus}
For any fixed $C>0,\eps>0$ the following is true.
Suppose that all degrees satisfy $d_i\leq C$ for $i\in[n]$, $k_j \leq C$ for $j \in [m]$, that
\begin{align*}
\sum_{i=1}^n d_i - \sum_{i=1}^m k_i \geq \eps n,
\end{align*}
and that {\bf SYM$'''$} is satisfied. 
Moreover, assume that $\sigma \in \Omega^{V_n}$ is such that for all $\omega \in \Omega$ we have
\begin{align*}	\abs{\sum_{i=1}^n d_i \bc{\vecone \cbc{\sigma_i = \omega}-1/q}} = O(\sqrt n\log n) \qquad \text{and} \qquad \abs{\sum_{i=1}^n \vecone \cbc{\sigma_i = \omega} - \frac{n}{q}} = O(\sqrt n\log n).
\end{align*}
Then there is a coupling of $\G^*(\sd, \sk, \sP, \theta, \sigma)$ and $\check \G^*(\sd^+, \sk^+, \sP, \theta, \sigma,\sigma_{x_{n+1}})$ such that
\begin{align*}
    &\Pr \brk{\G^*(\sd, \sk, \sP, \theta, \sigma) = \check \G^*(\sd^+, \sk^+, \sP, \theta, \sigma,\sigma_{x_{n+1}}) - x_{n+1} - \sum_{i=1}^{d_{n+1}} a_{i}'} = 1 - \tilde O \bc{n^{-1}}, \\
    &\Pr \brk{\abs{\G^*(\sd, \sk, \sP, \theta, \sigma) \triangle \check \G^*(\sd^+, \sk^+, \sP, \theta, \sigma,\sigma_{x_{n+1}})}\leq\sqrt n\log n} = 1 - \tilde O \bc{n^{-2}}.
\end{align*}
\end{proposition}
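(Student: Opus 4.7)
\smallskip
\noindent
The plan is to mimic the proof of \Prop~\ref{prop_coupling_int} by adapting the SHARP construction from \Lem~\ref{lem_pebble_dist} to handle the simultaneous addition of the new variable $x_{n+1}$ together with its $d_{n+1}$ adjacent factor nodes. Concretely, I would express $\check\G^*(\sd^+,\sk^+,\sP,\theta,\sigma,\sigma_{x_{n+1}})$ via a modified SHARP procedure. Draw $\vy^{\check\sharp} \in \Omega^{\cA^+\cup\cD^+}$ on the extended factor-clone set according to the analogue of \textbf{SHARP1}, but additionally conditional on $\vy^{\check\sharp}_{(a_i',h_i)}=\sigma_{x_{n+1}}$ for each $i\in[d_{n+1}]$ and on the overall empirical statistic matching the extended ground-truth clone assignment $\chi^+$ induced by $(\sigma,\sigma_{x_{n+1}})$. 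Subsequently draw weight functions and the bijection as in \textbf{SHARP2}--\textbf{SHARP3}, with the bijection required to map the $x_{n+1}$-clones to the prescribed positions $(a_i',h_i)$. An argument analogous to the proof of \Lem~\ref{lem_pebble_dist} then shows that this indeed yields the correct distribution of $\check\G^*$.

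\smallskip
\noindent
Having set this up, I would reduce the desired coupling to a comparison of the two assignments $\vy^{\sharp}|_{\cA}$ (for the unperturbed graph $\G^*(\sd,\sk,\sP,\theta,\sigma)$) and $\vy^{\check\sharp}|_{\cA}$ on the original factor-clone set $\cA$. Following the template of Claims~\ref{Claim_pebble_1} and~\ref{Claim_pebble_2}, I would express $\pr[\vy^{\check\sharp}|_{\cA}=y]$ as a sum over extensions to the clones that remain unconditioned (namely the cavity clones $\cD^+$ and the positions in the new factor nodes other than the fixed $(a_i',h_i)$), with the normalising probability that the resulting empirical statistic equals $\rho_{\chi^+}$. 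A companion formula holds for $\pr[\vy^{\sharp}|_{\cA}=y]$. Stirling's formula (Claim~\ref{Claim_pebble_1}) then yields the ratio of multinomial coefficients up to $\tilde O(1/n)$, and assumption \textbf{SYM$'''$} together with the near-uniformity of $(\sigma,\sigma_{x_{n+1}})$ inherited from that of $\sigma$ allow us to control the weight-function expectations on the new factor nodes and the normalising probabilities. This establishes $\dTV(\vy^{\sharp}|_{\cA}, \vy^{\check\sharp}|_{\cA}) = \tilde O(1/n)$ on the event analogous to~\eqref{eq_config_concentration}, which carries probability $1-\tilde O(n^{-2})$ by \Lem~\ref{lem_config_concentration}.

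\smallskip
\noindent
The coupling lemma for total variation then yields a coupling of $\vy^{\sharp}|_{\cA}$ with $\vy^{\check\sharp}|_{\cA}$ that agree with probability $1-\tilde O(1/n)$, and extending via \textbf{SHARP2}--\textbf{SHARP3} establishes the first statement of the proposition. The second statement, on the symmetric difference being at most $\sqrt n\log n$ with probability $1-\tilde O(n^{-2})$, follows from a Chernoff bound on the counts of weight-function, arity and neighbourhood-spin triples exactly as in the proof of \Lem~\ref{lem_coupling_pebble2}, using that only $d_{n+1}\leq C$ additional factor nodes are introduced. The main obstacle is the careful bookkeeping of the several fixed positions $\{(a_i',h_i)\}_{i\in[d_{n+1}]}$ through the Stirling-type expansion; however, since $d_{n+1}\leq C$ is bounded the number of such fixed positions is $O(1)$, so their total contribution to the ratio of multinomial coefficients remains $\tilde O(1/n)$, exactly matching the error required by the proposition.
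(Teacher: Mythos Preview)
Your direct extension of the SHARP machinery contains a subtle gap. Once you condition on $\vy^{\check\sharp}_{(a_i',h_i)}=\sigma_{x_{n+1}}$ for a \emph{fixed} value of $\sigma_{x_{n+1}}$, the remaining free clones $(a_i',j)$, $j\neq h_i$, of each new factor no longer have uniform marginals: {\bf SYM$'''$} guarantees that $\sum_{\tau:\tau_j=\omega}\Erw[\psi_{a_i'}(\tau)]$ is independent of $\omega$, but not that $\sum_{\tau:\tau_j=\omega,\,\tau_{h_i}=\sigma_{x_{n+1}}}\Erw[\psi_{a_i'}(\tau)]$ is. (In the stochastic block model, for instance, the latter equals $\eul^{-\beta}$ or $1$ according as $\omega=\sigma_{x_{n+1}}$ or not.) Consequently, when you run the analogue of Claim~\ref{Claim_pebble_2}, the error term $\tilde O(n^{-1/2})\sum_{j}\sum_\tau\bigl|\pr[\vy^{\check\flat}_{(a_i',j)}=\tau]-1/q\bigr|$ is of order $\tilde O(n^{-1/2})$, not $\tilde O(n^{-1})$, because the inner sum is $\Theta(1)$. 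Your appeal to ``near-uniformity of $(\sigma,\sigma_{x_{n+1}})$'' addresses the variable-side statistics but not these factor-clone marginals.

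The paper sidesteps this by first replacing $x_{n+1}$ together with $a_1',\dots,a_{d_{n+1}}'$ by a \emph{single} factor node $a_0'$ of degree $k_0'=\sum_i k_{m+i}-d_{n+1}$ with weight function
\[
\psi_{a_0'}(\tau)=\sum_{\chi\in\Omega}\prod_{i=1}^{d_{n+1}}\psi_{a_i'}(\chi,\tau_{i,\cdot}),
\]
i.e.\ $\sigma_{x_{n+1}}$ is summed out. A short computation (equation~\eqref{eqLemma_a0'_1}) shows that $\psi_{a_0'}$ does satisfy {\bf SYM$'$}, so \Prop~\ref{prop_coupling_int} applies verbatim and yields the coupling of the intermediate model $\tilde\G^*$ with $\G^*(\sd,\sk,\sP,\theta,\sigma)$ at the required $\tilde O(n^{-1})$ precision (\Lem~\ref{Lemma_a0'}). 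The remaining step, coupling $\tilde\G^*$ with $\check\G^*$, is handled by observing that the two models coincide conditionally on $\tilde\SIGMA_{x_{n+1}}=\check\SIGMA_{x_{n+1}}$, that $\tilde\SIGMA_{x_{n+1}}$ is uniform up to $\tilde O(n^{-1/2})$, and that on the complementary event of probability $\tilde O(n^{-1/2})$ one may invoke \Prop~\ref{prop_coupling_int_m} (which only needs {\bf SYM$''$}) for another factor $\tilde O(n^{-1/2})$; the product is $\tilde O(n^{-1})$ (\Lem~\ref{Lemma_a0'_2}). Your approach can be repaired along the same lines---either by averaging over $\sigma_{x_{n+1}}$ \emph{before} running the Stirling expansion (which is exactly the super-factor trick), or by inserting the two-case argument---but as written the claimed $\tilde O(n^{-1})$ bound does not follow.
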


The proof of \Prop~\ref{Prop_oplus} is based on the arguments from the previous section.
Specifically, we introduce an auxiliary factor graph model in which the new variable $x_{n+1}$ and the new factor nodes $a_i'$ are replaced by a single factor node $a_0'$ of degree $k_0'=\sum_{i=1}^{d_{n+1}}k_{m+i}-d_{n+1}$.
Moreover, the weight function of $a_0'$ is defined as
\begin{align*}
\psi_{a_0'}(\tau)&=\sum_{\chi\in\Omega}\prod_{i=1}^{d_{n+1}}\sum_{\tau\in\Omega^{k_{m+i}}}\vecone\{\tau_{h_i}=\chi\}\psi_{a_i'}(\tau).
\end{align*} 
Let $\tilde \G^*= G(\sd,(\sk,k_0'),\sP,\sigma)$ be the random factor graph with the additional factor node $a_0'$.

\begin{lemma}\label{Lemma_a0'}
Under the assumptions of \Prop~\ref{Prop_oplus} there exists a coupling of $\tilde \G^*$ and $\G^*(\sd,\sk,\sP,\sigma)$ such that
\begin{align*}
\pr\brk{\tilde \G^*-a_0'=\G^*(\sd,\sk,\sP,\sigma)}&=1-\tilde O(1/n),\\
\pr\brk{\abs{(\tilde \G^*-a_0')\triangle\G^*(\sd,\sk,\sP,\sigma)}>\sqrt n\log(n)/2}&=O(n^{-2}).
\end{align*}
\end{lemma}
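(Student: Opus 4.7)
\medskip

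\noindent\textbf{Proof plan for Lemma~\ref{Lemma_a0'}.}
The plan is to recognise the lemma as an almost direct application of Proposition~\ref{prop_coupling_int} to the extended degree sequence $\sk^+=(\sk,k_0')$ with the new factor node $a_0'$ playing the role of $a_{m+1}$. The only non-routine step is to verify that the auxiliary weight function $\psi_{a_0'}$ inherits the symmetry hypothesis {\bf SYM$'$} from the hypothesis {\bf SYM$'''$} on the factor nodes $a_1',\ldots,a_{d_{n+1}}'$, since without this we would only get the weaker $1-\tilde O(n^{-1/2})$ coupling of Proposition~\ref{prop_coupling_int_m}.

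First I would verify \textbf{SYM$'$} for $\psi_{a_0'}$. Unpack an input $\tau\in\Omega^{k_0'}$ as a concatenation $\tau=(\tau^{(1)},\ldots,\tau^{(d_{n+1})})$ with $\tau^{(i)}\in\Omega^{k_{m+i}-1}$, and for each block $i$ write $\hat\tau^{(i,\chi)}\in\Omega^{k_{m+i}}$ for the tuple obtained by inserting $\chi$ at position $h_i$. Fix a coordinate $j\in[k_0']$ and $\omega\in\Omega$, and say $j$ lies in block $i^*$ at inner coordinate $j'$. Then
\begin{align*}
\sum_{\tau\in\Omega^{k_0'},\,\tau_j=\omega}\!\psi_{a_0'}(\tau)
&=\sum_{\chi\in\Omega}\prod_{i\neq i^*}\Bigl(\sum_{\sigma\in\Omega^{k_{m+i}}:\sigma_{h_i}=\chi}\!\psi_{a_i'}(\sigma)\Bigr)\cdot\sum_{\sigma\in\Omega^{k_{m+i^*}}:\sigma_{h_{i^*}}=\chi,\,\sigma_{j''}=\omega}\!\psi_{a_{i^*}'}(\sigma),
\end{align*}
where $j''$ is the position in $\hat\tau^{(i^*,\chi)}$ corresponding to $j'$. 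By \textbf{SYM$'''$} each inner bracket for $i\neq i^*$ equals $q^{k_{m+i}-1}\xi$ independently of $\chi$, and summing the final factor over $\chi$ collapses to $\sum_{\sigma:\sigma_{j''}=\omega}\psi_{a_{i^*}'}(\sigma)=q^{k_{m+i^*}-1}\xi$, again by \textbf{SYM$'''$}. Multiplying gives $q^{k_0'}\xi^{d_{n+1}}=q^{k_0'-1}\xi'$ with $\xi'=q\xi^{d_{n+1}}$, independent of $(j,\omega)$. The strict lower bound $\psi_{a_0'}\geq q\eps^{d_{n+1}}>0$ follows because each of the $q$ terms in the defining sum is a product of $d_{n+1}$ values of weight functions that are themselves $>\eps$ by \textbf{SYM$'''$}, and $d_{n+1}\leq C$ is bounded.

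Next I would apply Proposition~\ref{prop_coupling_int} to the pair $(\sd,\sk)$ and $(\sd,\sk^+)$. The degree cap survives since $k_0'=\sum_{i=1}^{d_{n+1}}(k_{m+i}-1)\leq C^2$, the slack condition $\sum_i d_i-\sum_i k_i-k_0'\geq(\eps/2)n$ holds for large $n$ because we subtract only a bounded quantity, and the balance of $\sigma$ is inherited from the hypothesis of Proposition~\ref{Prop_oplus}. The proposition then furnishes a coupling of $\G^*(\sd,\sk,\sP,\sigma)$ and $\tilde\G^*=\G^*(\sd,\sk^+,\sP,\sigma)$ under which $\tilde\G^*-a_0'=\G^*(\sd,\sk,\sP,\sigma)$ with probability $1-\tilde O(1/n)$, and the symmetric difference is less than $\sqrt n\log n$ with probability $1-O(n^{-2})$. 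Finally, to obtain the slightly tighter $\sqrt n\log(n)/2$ bound in the lemma, I would just rerun the Chernoff argument of Lemma~\ref{lem_coupling_pebble2} with the constant inside the exponent adjusted, which only costs a $O(n^{-2})$ factor.

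The main obstacle I foresee is the verification of \textbf{SYM$'$} for $\psi_{a_0'}$, since a priori a single-coordinate symmetry of the $\psi_{a_i'}$ does not immediately yield symmetry after fixing two coordinates of $\psi_{a_{i^*}'}$ simultaneously. What saves us is that we sum over $\chi\in\Omega$ at the end; this extra sum converts the double marginal back into a single marginal to which \textbf{SYM$'''$} directly applies. Once this is in place, the rest is bookkeeping.
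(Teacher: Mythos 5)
Your proposal is correct and follows essentially the same route as the paper: it verifies that the composite weight function $\psi_{a_0'}$ satisfies the single\mbox{-}coordinate marginal condition of {\bf SYM$'$} (your computation is exactly the paper's identity \eqref{eqLemma_a0'_1}, including the key observation that the sum over $\chi$ collapses the double marginal back to a single one), and then invokes \Prop~\ref{prop_coupling_int} with $a_0'$ in the role of $a_{m+1}$. The remaining points you raise (bounded $k_0'$, preserved slack, the constant-factor tightening via the Chernoff bound in \Lem~\ref{lem_coupling_pebble2}) are handled the same way, so no further comment is needed.
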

\begin{proof}
We reiterate the argument from \Sec~\ref{Sec_Noela} for the $\tilde \G^*$ model.
The assumption {\bf SYM$'''$} ensures that the random factor graph model $\tilde \G^*$ satisfies the assumption {\bf SYM$'$} from \Sec~\ref{Sec_Noela}.
Indeed, for the factor nodes $a_1,\ldots,a_m$ this is an immediate consequence of {\bf SYM$'''$}.
Moreover, with respect to $a_0'$ we fix $i\in[d_{n+1}]$, $j\in[k_{m+i}]\setminus\{h_i\}$ and $\omega\in\Omega$.
Then
\begin{align}\label{eqLemma_a0'_1}
\sum_{\chi\in\Omega}\prod_{t=1}^{d_{n+1}}\sum_{\tau\in\Omega^{k_{m+t}}}\vecone\{\tau_{h_t}=\chi\wedge(t\neq i\vee\tau_j=\omega)\}\psi_{a_i'}(\tau)=q^{\sum_{\ell=1}^{d_{n+1}}k_{m+\ell}-d_{n+1}}\xi^{d_{n+1}}.
\end{align}
In particular, the expression on the r.h.s.\ is independent of $\omega$.
Applying {\bf SYM$'$} for $a_1,\ldots,a_m$ and \eqref{eqLemma_a0'_1} for $a_0'$ and reiterating the proof of \Prop~\ref{prop_coupling_int}, we obtain the assertion.

Indeed, {\bf \SYM$'''$}, \eqref{eqLemma_a0'_1} and Claims \ref{Claim_pebble_1}, \ref{Claim_pebble_2} are everything we need to prove the statement. By Claims \ref{Claim_pebble_1} and \ref{Claim_pebble_2} such a coupling exists if we manage to prove that for each $\tau \in \Omega$ the probability of observing color $\tau$ at any variable $x_i$ for $i \in [n+1]$ connected to $a_0'$ under $\sigma$ is $1/q$.
If $i \in [n]$ this is an immediate consequence of {\bf \SYM$'''$}. If $i = n+1$, this follows from \eqref{eqLemma_a0'_1}.
\end{proof}

\begin{remark} \label{remark_v}
Because the factor graphs are random, we may assume without loss that the distributions $\PSI_k$ are invariant under permutations of the arguments, that is for any permutation $\kappa$ of $[k]$ and for any $\psi \in \Psi_k$, the weight function $\psi^{\kappa}(\sigma)=\psi(\sigma_{\kappa_1}, \ldots, \sigma_{\kappa_k})$ satisfies $\Pr \brk{\PSI_k = \psi} = \Pr \brk{\PSI_k = \psi^{\kappa}}$.
\end{remark}

\begin{lemma}\label{Lemma_a0'_2}
Under the assumptions of \Prop~\ref{Prop_oplus} there exists a coupling of $\tilde \G^*$ and $\check \G^*$ such that
\begin{align*}
\pr\brk{\tilde \G^*=\check \G^*}&=1-\tilde O(1/n),\\
\pr\brk{\abs{\tilde \G^*\triangle \check \G^*}>\sqrt n\log(n)/2}&=O(n^{-2}).
\end{align*}
\end{lemma}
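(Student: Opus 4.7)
The strategy is to set up a natural bijection between the two models and then invoke the coupling machinery of \Sec~\ref{Sec_Noela}. The crucial observation is that $\psi_{a_0'}$ was defined precisely so that summing over $\sigma_{x_{n+1}}$ in the Nishimori re-weighting for $\check\G^*$ produces the re-weighting of $\tilde\G^*$: for any compatible $H$,
\begin{align*}
\sum_{\sigma_{x_{n+1}}\in\Omega}\psi_H(\sigma,\sigma_{x_{n+1}})=\psi_{H_0}(\sigma)\,\psi_{a_0'}(\sigma_{\partial a_0'}),
\end{align*}
where $H_0$ denotes $H$ with $x_{n+1},a_1',\ldots,a_{d_{n+1}}'$ deleted. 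Thus $\tilde\G^*$ is literally the $\sigma_{x_{n+1}}$-marginal of $\check\G^*$ on the weighted level.

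First I would introduce the contraction map $\Phi$ sending a factor graph $H$ in the support of $\check\G^*$ (the clones of $x_{n+1}$ matched with $(a_i',h_i)$ in the prescribed order) to $\Phi(H)$ in the support of $\tilde\G^*$ by deleting $x_{n+1}$ and merging $a_1',\ldots,a_{d_{n+1}}'$ into the single factor node $a_0'$, concatenating their non-$h_i$ positions in the canonical order; by \Rem~\ref{remark_v} the specific ordering is immaterial. Then $\Phi$ is a bijection onto the support of $\tilde\G^*$. Next, couple the two models by drawing $(\psi_{a_j})_{j\le m}$ and $(\psi_{a_i'})_{i\le d_{n+1}}$ jointly, using them verbatim in $\check\G^*$ and obtaining $\psi_{a_0'}$ in $\tilde\G^*$ from the defining identity; couple the pairings through $\Phi$, which makes sense because, conditional on the event $\cE$ that $x_{n+1}$'s clones are matched with the $(a_i',h_i)$, the remaining pairing of $\check\G^*$ lives on exactly the same half-edge sets as $\tilde\G^*$'s pairing (both have the same $\Delta-k_0'$ padding clones and the same free half-edges).

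Under this coupling a direct computation, together with the displayed identity, shows that the conditional law of $\check\G^*=H$ given the pairing and weight functions equals $q^{-1}$ times the corresponding quantity for $\tilde\G^*=\Phi(H)$, so that after summing over $\sigma_{x_{n+1}}$ the two expressions agree up to the ratio of their normalisations. This ratio is the source of the $\tilde O(1/n)$ loss: the $\check\G^*$-normalisation is a conditional expectation that depends on $\sigma_{x_{n+1}}$, whereas the $\tilde\G^*$-normalisation does not. I would control this ratio by reusing the SHARP-style reformulation of \Sec~\ref{Sec_Noela}, namely Claims~\ref{Claim_pebble_1} and~\ref{Claim_pebble_2} (Stirling/Taylor on the multinomial colour statistics of clone assignments plus the Chernoff tail in \Lem~\ref{lem_config_concentration}), and exploiting {\bf SYM$'''$} together with the balanced-$\sigma$ hypothesis to show the normalisations differ by a multiplicative factor $1+\tilde O(1/n)$. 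The coupling lemma for total variation distance then yields a coupling with $\pr[\tilde\G^*=\check\G^*]=1-\tilde O(1/n)$. The second estimate on the symmetric difference follows from a Chernoff bound on the number of weight/arity triples seen among the $\Theta(n)$ factor nodes, in exact analogy with \Lem~\ref{lem_coupling_pebble2}. I expect the main technical obstacle to be the bookkeeping in the normalisation comparison, where the teacher's reweighting by the fixed $\sigma$ interacts subtly with the constraints imposed by $\cE$; however, this parallels the analysis already carried out for \Prop s~\ref{prop_coupling_int} and~\ref{prop_coupling_int_m} and should go through unchanged.
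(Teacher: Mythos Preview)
Your bijection $\Phi$ and the identity $\sum_{\sigma_{x_{n+1}}}\psi_H(\sigma,\sigma_{x_{n+1}})=\psi_{\Phi(H)}(\sigma)$ are correct and essentially match the paper's \eqref{eq_lema0_eq1}--\eqref{eq_lema0_eq4}, which establish that $\check\G^*$ and $\tilde\G^*$ have the \emph{same} law conditional on $\check\SIGMA_{x_{n+1}}=\tilde\SIGMA_{x_{n+1}}=\omega$ (where $\tilde\SIGMA_{x_{n+1}}$ is the natural posterior spin attached to $\tilde\G^*$). The gap is in your normalisation claim: the $\omega$-dependent normalisation $Z_\omega=\Erw[\psi_{\check\G}(\sigma,\omega)]$ does \emph{not} satisfy $Z_\omega/Z_{\omega'}=1+\tilde O(1/n)$. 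Once $\sigma_{x_{n+1}}=\omega$ is fixed in position $h_i$, the residual $(k_{m+i}-1)$-ary weight functions $\tau\mapsto\psi_{a_i'}(\tau_1,\ldots,\omega,\ldots,\tau_{k_{m+i}})$ no longer satisfy {\bf SYM}, so Claim~\ref{Claim_pebble_2} only delivers a $\tilde O(n^{-1/2})$ error. The paper confirms this explicitly: it shows $\Pr[\tilde\SIGMA_{x_{n+1}}=\omega]=1/q+\tilde O(n^{-1/2})$, which is equivalent to $Z_\omega/\sum_{\omega'}Z_{\omega'}=1/q+\tilde O(n^{-1/2})$. Plugging this into your direct TV computation yields only $\dTV(\Phi_*\check\G^*,\tilde\G^*)\le\tilde O(n^{-1/2})$, which is too weak.

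The paper recovers $\tilde O(1/n)$ by a two-layer coupling that your proposal is missing. First couple $\check\SIGMA_{x_{n+1}}$ (uniform) with $\tilde\SIGMA_{x_{n+1}}$ (the posterior spin); they agree with probability $1-\tilde O(n^{-1/2})$, and on that event the conditional graph laws coincide exactly by the identity you wrote down. On the complementary event $\{\tilde\SIGMA_{x_{n+1}}\neq\check\SIGMA_{x_{n+1}}\}$ of probability $\tilde O(n^{-1/2})$, invoke \Prop~\ref{prop_coupling_int_m}: conditional on two different spins $\omega\neq\omega'$, the resulting graph laws are still within $\tilde O(n^{-1/2})$ in TV (this is precisely the setting where the extra factor fails {\bf SYM}, hence the weaker proposition). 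The product $\tilde O(n^{-1/2})\cdot\tilde O(n^{-1/2})=\tilde O(1/n)$ gives the claimed bound. Your second estimate via a Chernoff argument on weight/arity frequencies is fine and matches the paper's use of \Lem~\ref{lem_coupling_pebble2}.
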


\begin{proof}
In the first step, we claim that the distributions of $\check \G^*$ and $\tilde \G^*$ are identical conditioned on vertex $x_{n+1}$ having an identical spin, i.e. let $\omega \in \Omega$, then
\begin{align}
    \label{eq_lema0_eq1}\Pr \brk{\check \G^* = g | \check \SIGMA^*_{x_{n+1}} = \omega} = \Pr \brk{\tilde \G^* = g | \tilde \SIGMA^*_{x_{n+1}} = \omega}.
\end{align}
Indeed, by the definition of $\check \G^*$ and $\tilde \G^*$ and Bayes theorem we find for any assignment $\sigma \in \Omega^{V_n}$
\begin{align}
    \label{eq_lema0_eq2} \Pr \brk{\check \G^* = g | \check \SIGMA_{x_{n+1}} = \omega} &= \frac{\Pr\brk{\check \G = g} \psi_g(\sigma, \omega)}{\Erw \brk{\psi_{\check \G}(\sigma, \omega)}}, \\
    \Pr \brk{\tilde \G^* = g | \tilde \SIGMA_{x_{n+1}} = \omega} &= \frac{\Pr \brk{\tilde \G = g} \psi_g(\sigma) \Pr \label{eq_lema0_eq3} \bc{\tilde \SIGMA_{x_{n+1}} = \omega |\tilde \G^* = g}}{\Erw \brk{\psi_{\tilde \G}(\sigma)} \Pr \brk{\tilde \SIGMA_{x_{n+1}} = \omega}} = \frac{\Pr \brk{\tilde \G = g} \psi_g(\sigma, \omega)}{\Erw \brk{\psi_{\tilde \G}(\sigma)} \Pr \brk{\tilde \SIGMA_{x_{n+1}} = \omega}}
\end{align}
Moreover,
\begin{align}
    \label{eq_lema0_eq4} \Pr \brk{\tilde \SIGMA_{x_{n+1}} = \omega} = \Erw \brk{\frac{\psi_{\tilde \G^*}(\sigma, \omega)}{\sum_{\chi \in \Omega}\psi_{\tilde \G^*}(\sigma, \chi)}} = \frac{\Erw \brk{\psi_{\tilde \G}(\sigma) \frac{\psi_{\tilde \G}(\sigma, \omega)}{\sum_{\chi \in \Omega}\psi_{\tilde \G}(\sigma, \chi)}}}{\Erw \brk{\psi_{\tilde \G}(\sigma)}} = \frac{\Erw \brk{\psi_{\tilde \G}(\sigma,\omega)}}{\Erw \brk{\psi_{\tilde \G}(\sigma)}} = \frac{\Erw \brk{\psi_{\check \G}(\sigma,\omega)}}{\Erw \brk{\psi_{\tilde \G}(\sigma)}}
\end{align}
Therefore, \eqref{eq_lema0_eq1} follows from \eqref{eq_lema0_eq2} -- \eqref{eq_lema0_eq4} and the fact that by definition $\Pr\brk{\check \G = g}=\Pr\brk{\tilde \G = g}$.
We now need to get a handle on the distribution of $\tilde \SIGMA^*_{x_{n+1}}$ and $\check \SIGMA^*_{x_{n+1}}$. Clearly, we find by construction $\Pr \brk{\check \SIGMA_{x_{n+1}} = \omega}=1/q$. 
We claim that
\begin{align}
    \label{eq_lema0_eq5} \Pr \brk{\tilde \SIGMA_{x_{n+1}} = \omega} = 1/q + \tilde O (n^{-1/2}).
\end{align}
By assumption there is $(\eps_\omega)_{\omega \in \Omega}$ such that $\eps_\omega = \tilde O(n^{-1/2})$ for all $\omega \in \Omega$ with the property that the marginal distribution on a cavity with color $\omega$ is $1/q + \eps_{\omega}.$ 
It turns out that this is enough to prove the claim. By Remark~\ref{remark_v} without loss of generality, suppose that $h_i = 1$ for all $i \in d_{n+1}$. Then,
\begin{align}
    \notag \Pr \brk{\tilde \SIGMA_{x_{n+1}} = \omega} &\propto \prod_{i=1}^{d_{n+1}} \sum_{\sigma \in \Omega^{k_{m+i}}} \vecone \cbc{\sigma_1 = \omega} \psi_{a'_i}(\sigma) \prod_{j=2}^{k_{m+i}} \frac 1q + \eps_{\sigma_j} \\
    &= \notag \prod_{i=1}^{d_{n+1}} \sum_{\sigma \in \Omega^{k_{m+i}}} \vecone \cbc{\sigma_1 = \omega} \psi_{a'_i}(\sigma) \bc{q^{-k_{m+i}+1} + q^{-k_{m+i}+2} \sum_{j=2}^{k_{m+i}} \eps_{\sigma_j} + O \bc{\norm{\eps}^2}} \\
    &= \notag \prod_{i=1}^{d_{n+1}} \xi + \sum_{\sigma \in \Omega^{k_{m+i}}} \vecone \cbc{\sigma_1 = \omega} \psi_{a'_i}(\sigma) q^{-k_{m+i}+2} \sum_{j=2}^{k_{m+i}} \eps_{\sigma_j} + O \bc{\norm{\eps}^2} \\
    &= \label{eq_lema0_eq6} \xi^{d_{n+1}} + \sum_{i=1}^{d_{n+1}} \sum_{\sigma \in \Omega^{k_{m+i}}} \vecone \cbc{\sigma_i=\omega} \psi_{a'_i} \sum_{j=2}^{k_{m+i}} \eps_{\sigma_j} + O \bc{\norm{\eps}^2}
\end{align}
Since $\xi^{d_{n+1}}$ does not depend on $\omega$ and since $d_{n+1},k_{m+i} , \dots, k_{ m + d_{n+1}}, \PSI_{a'_1}, \dots, \PSI_{a'_{d_{n+1}}}$ are all bounded, we find that \eqref{eq_lema0_eq6} implies \eqref{eq_lema0_eq5}.

Thus, we are left to consider two cases.
\begin{description}
\item[Case $\tilde \SIGMA_{x_{n+1}} = \check \SIGMA_{x_{n+1}}$] This occurs with probability $1 - \tilde O(n^{-1/2})$. In this case, we find $$ \Pr \brk{\check \G^* = g \mid \tilde \SIGMA_{x_{n+1}} = \check \SIGMA_{x_{n+1}}} =  \Pr \brk{\tilde \G^* = g \mid \tilde \SIGMA_{x_{n+1}} = \check \SIGMA_{x_{n+1}}} $$ trivially by the above.
\item[Case $\tilde \SIGMA_{x_{n+1}} \neq \check \SIGMA_{x_{n+1}}$] By \Prop~\ref{prop_coupling_int_m} there is a coupling of $\check \G^*$ and $\tilde \G^*$ such that $$ \Pr \brk{ \check \G^* \neq \tilde \G^* \mid \tilde \SIGMA_{x_{n+1}} \neq \check \SIGMA_{x_{n+1}}} = \tilde O(n^{-1/2}).$$ 
\end{description}
Hence,
\begin{align*}
    \pr\brk{\tilde \G^* \neq \check \G^*} = \pr\brk{\tilde \G^* \neq \check \G^*| \tilde \SIGMA = \check \SIGMA} \Pr \brk{\tilde \SIGMA = \check \SIGMA} + \pr\brk{\tilde \G^* \neq \check \G^*| \tilde \SIGMA \neq \check \SIGMA} \Pr \brk{\tilde \SIGMA \neq \check \SIGMA} = \tilde O \bc{1/n},
\end{align*}
implying the first part of the lemma. The second part of the lemma follows from \Lem~\ref{lem_coupling_pebble2}.
\end{proof}

\begin{proof}[Proof of \Prop~\ref{Prop_NoelaRoughVar}]
The proposition is a special case of \Prop~\ref{Prop_oplus}.
\end{proof}

\subsection{Nishimori redux}\label{Sec_Nishiredux}
The general models from \Sec~\ref{Sec_groundwork} satisfy the following 'Nishimori identity'.
\begin{proposition}\label{Lemma_genNishi}
For any event $\cA$ and for any $\sigma\in\Omega^{V_n}$ we have
\begin{align*}
\Erw\brk{\vecone\{\hat\G(\sd,\sk,\sP,\theta)\in\cA\}\mu_{\hat\G(\sd,\sk,\sP,\theta)}(\sigma)}=
\pr\brk{\hat\vsigma(\sd,\sk,\sP,\theta)=\sigma}\pr\brk{\G^*(\sd,\sk,\sP,\theta,\sigma)\in\cA}.
\end{align*}
Furthermore, $\hat\vsigma(\sd,\sk,\sP,\theta)$ and $\vsigma^*$ are mutually contiguous, as are $\hat\G(\sd,\sk,\sP,\theta)$ and $\G^*(\sd,\sk,\sP,\theta,\vsigma^*)$.
\end{proposition}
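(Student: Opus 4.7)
The plan is to split the claim into three parts: (i) the Nishimori identity as a direct algebraic manipulation of the three definitions \eqref{eq_hatG}, \eqref{eq_starG} and \eqref{eq_hatsigma}; (ii) mutual contiguity of $\hat\vsigma(\sd,\sk,\sP,\theta)$ with the uniform $\vsigma^*$; (iii) mutual contiguity of $\hat\G(\sd,\sk,\sP,\theta)$ with $\G^*(\sd,\sk,\sP,\theta,\vsigma^*)$.

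For (i), I would just expand both sides. The left-hand side becomes
\[
\sum_{G\in\cA}\frac{Z_G\,\pr[\G(\sd,\sk,\sP,\theta)=G]}{\Erw[Z_{\G(\sd,\sk,\sP,\theta)}]}\cdot\frac{\psi_G(\sigma)}{Z_G},
\]
while the right-hand side, using that $\Erw[Z_{\G(\sd,\sk,\sP,\theta)}]=\sum_\sigma\Erw[\psi_{\G(\sd,\sk,\sP,\theta)}(\sigma)]$, becomes
\[
\frac{\Erw[\psi_{\G(\sd,\sk,\sP,\theta)}(\sigma)]}{\Erw[Z_{\G(\sd,\sk,\sP,\theta)}]}\cdot\frac{\sum_{G\in\cA}\pr[\G(\sd,\sk,\sP,\theta)=G]\psi_G(\sigma)}{\Erw[\psi_{\G(\sd,\sk,\sP,\theta)}(\sigma)]}.
\]
Both simplify to $\Erw[Z_{\G}]^{-1}\sum_{G\in\cA}\pr[\G=G]\psi_G(\sigma)$, so the identity is immediate; this mirrors the binomial case in \cite[Prop.~3.10]{CKPZ} and requires no graph-specific arguments.

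For (ii), I would analyse the Radon--Nikodym derivative
\[
\frac{\pr[\hat\vsigma(\sd,\sk,\sP,\theta)=\sigma]}{\pr[\vsigma^*=\sigma]}=\frac{q^n\,\Erw[\psi_{\G(\sd,\sk,\sP,\theta)}(\sigma)]}{\Erw[Z_{\G(\sd,\sk,\sP,\theta)}]}.
\]
By SYM, $\Erw[\psi_{\G}(\sigma)]$ depends on $\sigma$ only through its empirical spin distribution $\rho_\sigma$ (plus a negligible contribution from the finitely many pinned coordinates), and by BAL the map $\mu\mapsto\sum_\tau\Erw[\vpsi_k(\tau)]\prod_i\mu(\tau_i)$ is concave with maximum at the uniform $\bar\mu=q^{-1}\vecone$. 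Combining this with a concentration bound (Azuma--Hoeffding on the configuration model construction, exploiting the bounded degrees that DEG provides up to the pruning Proposition~\ref{Prop_prune}) shows that $\Erw[\psi_\G(\sigma)]$ lies in $[c_1,c_2]$ times a common value $Z^*$ for all $\sigma$ with $\|\rho_\sigma-\bar\mu\|_\infty=O(\sqrt{\log n/n})$. Since a uniformly random $\vsigma^*$ is concentrated on such balanced configurations, the likelihood ratio is bounded between two constants with probability $1-o(1)$, yielding $\hat\vsigma\trianglelefteq\vsigma^*$ and $\vsigma^*\trianglelefteq\hat\vsigma$ in the standard Le Cam sense.

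For (iii), I would integrate the identity from (i) against $\sigma$ to get $\pr[\hat\G\in\cA]=\sum_\sigma\pr[\hat\vsigma=\sigma]\pr[\G^*(\sd,\sk,\sP,\theta,\sigma)\in\cA]$, i.e., $\hat\G$ equals $\G^*(\sd,\sk,\sP,\theta,\hat\vsigma)$ in distribution. Combined with the contiguity from (ii) and the continuity of $\sigma\mapsto\pr[\G^*(\cdot,\sigma)\in\cA]$ (again via the balancedness argument), this transfers contiguity from the ground-truth level to the factor-graph level.

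The main obstacle is in (ii): establishing the two-sided bound on $\Erw[\psi_\G(\sigma)]/Z^*$ uniformly over balanced $\sigma$. In the binomial model of \cite{CKPZ} this follows from independence of the factor nodes and a direct computation. In the present configuration-type model the factors are coupled through the degree constraints, so the concentration of $\Erw[\psi_\G(\sigma)]$ must be obtained via a local limit theorem for the degree-matching event \eqref{eqdegCond} together with the coupling machinery developed in Propositions~\ref{prop_coupling_int}--\ref{prop_coupling_int_m}. Once this concentration is in place, the contiguity statements follow by standard arguments.
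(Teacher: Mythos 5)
Your overall route coincides with the paper's: part (i) is indeed a two-line expansion of the definitions \eqref{eq_hatG}--\eqref{eq_hatsigma} (both sides reduce to $\Erw[Z_{\G}]^{-1}\Erw[\psi_{\G}(\sigma)\vecone\{\G\in\cA\}]$), part (iii) is the observation that contiguity is preserved when the same conditional kernel $\sigma\mapsto\G^*(\sd,\sk,\sP,\theta,\sigma)$ is applied to both ground-truth laws (no continuity of $\sigma\mapsto\pr[\G^*(\cdot,\sigma)\in\cA]$ is needed for this; the kernel argument alone suffices), and part (ii) is where all the work sits, exactly as in \Sec~\ref{mc}.

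Two points in part (ii) need repair. First, Azuma--Hoeffding is not the right tool for the central estimate: $\Erw[\psi_{\G(\sd,\sk,\sP,\theta)}(\sigma)]$ is a deterministic number for each fixed $\sigma$, so there is nothing to concentrate; what is required is a uniform two-sided \emph{asymptotic evaluation} of this expectation over balanced $\sigma$, and in the configuration model this is a ratio of point probabilities of lattice-valued sums (the probability that the factor-side half-edge colours hit the exact clone-level colour profile $\rho_\chi$ of $\sigma$ --- note it is the degree-weighted profile on the clones, not the plain empirical distribution of $\sigma$, that matters). The paper obtains this via the local limit theorems of \Sec s~\ref{llt} and~\ref{mc} (\Prop s~\ref{mcp_first_moment} and~\ref{mcp_llt}), after the key structural reduction that both $\hat\vsigma$ and $\vsigma^*$ are \emph{uniform conditionally on their clone-level colour frequencies}, so the whole likelihood ratio collapses to a ratio of two $(q-1)$-dimensional lattice distributions near their common mean. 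You flag the LLT as the main obstacle at the end, which is right, but the body of (ii) should be rewritten around it rather than around a martingale bound. Second, for \emph{mutual} contiguity you must show that the balanced event carries mass $1-o(1)$ under $\hat\vsigma$ as well as under $\vsigma^*$: a bounded likelihood ratio on a set $B$ with $\pr[\vsigma^*\in B]=1-o(1)$ only gives $\pr[\hat\vsigma\in B]=\Omega(1)$, which is not enough for the direction $\vsigma^*\trianglelefteq\hat\vsigma$. The tail bound for $\hat\vsigma$ (\Prop~\ref{mcp_tails} in the paper) is itself a consequence of the first-moment asymptotics, so it comes for free once the LLT machinery is in place, but it must be stated and used.
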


The proof of the proposition can be found in \Sec~\ref{mc}.

\begin{proof}[Proof of \Prop~\ref{Prop_Nishi}]
This is an immediate consequence of \Prop~\ref{Lemma_genNishi}.
\end{proof}

\begin{lemma} \label{lem_cav1}
Let $\cC_x=\cbc{i \in [n]: \exists h \in [d_i]: (i,h) \in \cC}$ be the set of cavity variables and let $\vx \in \cC_x$ denote a randomly chosen cavity where $\pr\brk{\vx=i} = \abs{\cbc{h \in [d_i]: (i,h) \in \cC}}/|\cC|$. Moreover, abbreviate $\G^\ast_{\eps, n} = \G^\ast(\underline{d}, \underline{k}, \underline{P}, \theta, \sigma)$. Under the assumptions of Proposition \ref{prop_coupling_int_m}, for any $\omega \in \Omega$, there exists a sequence $\eps_\omega = \tilde{O}(n^{-1/2})$ such that
\begin{align*}
\pr\brk{\mu_{\G^\ast_{\eps, n}, \vx}(\omega) = \frac{1}{q} + \eps_\omega} = 1-o(1).
\end{align*}
\end{lemma}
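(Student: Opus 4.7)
The plan is to extract the cavity marginal $\mu_{\G^\ast_{\eps,n},\vx}(\omega)$ from a partition-function ratio, control that ratio via Proposition~\ref{prop_coupling_int_m}, and upgrade the resulting expectation estimate to a high-probability statement by a pinning-based concentration argument. Fix $\omega\in\Omega$ and a small $t>0$, and consider the augmented model $\G^{+}=\G^\ast(\underline d,\underline k^{+},\underline P^{+},\theta,\sigma)$ with $\underline k^{+}=(\underline k,1)$ and $\underline P^{+}$ concentrating on the weight function $\psi_{\omega,t}(\tau)=1+t\vecone\{\tau=\omega\}$. Because the existing factors still satisfy \textbf{SYM$''$}, Proposition~\ref{prop_coupling_int_m} produces a coupling under which $\G^{+}$ equals $\G^\ast_{\eps,n}$ augmented by a single factor $a_{m+1}$ attached to a uniformly chosen cavity clone $\vx$, failing with probability only $\tilde O(n^{-1/2})$.

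On the coupling event, $Z(\G^{+})/Z(\G^\ast_{\eps,n})=1+t\,\mu_{\G^\ast_{\eps,n},\vx}(\omega)$, which pins the cavity marginal down as a deterministic function of the partition-function ratio. Taking expectations and unpacking $\G^\ast$ via \eqref{eq_starG} as a reweighting of the null model by $\psi_\G(\sigma)$ reduces the averaged ratio to $\Erw[\psi_\G(\sigma)\psi_{\omega,t}(\sigma_\vx)]/\Erw[\psi_\G(\sigma)]$; the \textbf{SYM$''$} identities for the existing factors together with the balance hypothesis $|\sum_i d_i(\vecone\{\sigma_i=\omega\}-1/q)|=O(\sqrt n\log n)$ then evaluate this ratio to $1+t/q+\tilde O(n^{-1/2})$, giving $\Erw[\mu_{\G^\ast_{\eps,n},\vx}(\omega)]=1/q+\tilde O(n^{-1/2})$.

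For the high-probability upgrade, I would invoke the pinning lemma (Lemma~\ref{Lemma_tpinning}) and the symmetry propagation of Lemma~\ref{lem_pairwise_symmetry} to certify that, after an $o(n)$-vertex pinning, the Boltzmann distribution decomposes into pure states on which the single-vertex marginals are approximately independent. A bounded-differences estimate along the clone-by-clone construction of the matching $\vGamma$—observing that swapping one pair of matched clones perturbs $\mu_{\G^\ast_{\eps,n},\vx}(\omega)$ by only $O(1/n)$ thanks to the uniform degree bound $C$—then yields concentration at scale $\tilde O(n^{-1/2})$ around the mean, which together with the expectation estimate closes the proof. The hardest step I expect is keeping the $\tilde O(n^{-1/2})$ rate through the planted reweighting: the factor $\psi_\G(\sigma)$ can amplify even tiny color imbalances of $\sigma$ restricted to cavity clones, so the balance assumption must be propagated from the full variable set to the cavity subset with matching precision.
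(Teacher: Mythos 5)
Your plan diverges from the paper's proof, and two of its steps do not go through. First, the extraction of the marginal from a partition-function ratio: \Prop~\ref{prop_coupling_int_m} is a statement about the \emph{graphs}, not about $Z$. On the coupling event you do get $Z(\G^+)/Z(\G^\ast_{\eps,n})=1+t\,\mu_{\G^\ast_{\eps,n},\vx}(\omega)$, but on the failure event (probability $\tilde O(n^{-1/2})$) the two graphs may differ in up to $\sqrt n\log n$ factors, so the ratio can be $\exp(\Theta(\sqrt n\log n))$ and its expectation is not controlled to the precision you need. Moreover, in the planted model the new unary factor does \emph{not} attach to a uniformly random cavity: it attaches to $y$ with probability proportional to $1+t\vecone\{\sigma_y=\omega\}$, so what you extract is a $\sigma$-tilted average of the marginals, and separating the order-$t$ term from the order-$t^2$ cross term (the planted overlap, which need not vanish) against the $\tilde O(n^{-1/2})$ coupling error forces a choice like $t\asymp n^{-1/4}$ and only yields precision $\tilde O(n^{-1/4})$. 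The issue you flag at the end --- transferring the colour balance of $\sigma$ from all variables to the cavity clones --- is also not optional; it is exactly what \Lem~\ref{lem_config_concentration} and \Lem~\ref{lem_cav2} supply.

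Second, and more seriously, the concentration step fails: $\mu_{\G,\vx}(\omega)$ (and even its average over cavities) is \emph{not} an $O(1/n)$-Lipschitz function of the individual clone matchings. Swapping one matched pair multiplies the Boltzmann weight of each configuration by a factor in $[\eps^2,\eps^{-2}]$, which can move the measure by a constant in total variation and hence shift a marginal by a constant; only $\log Z$ enjoys the $O(1/n)$ bounded-differences property that Azuma needs. Pinning and \Lem~\ref{lem_pairwise_symmetry} give approximate pairwise independence of marginals, not closeness of marginals to $1/q$. The paper's route avoids all of this: by the Nishimori identity (\Prop~\ref{Prop_Nishi}) the pair (graph, Boltzmann sample) is distributed as (planted graph, ground truth), and the cavity colour frequencies of the ground truth concentrate at scale $\tilde O(\sqrt n)$ directly from the $\vy^\sharp$ construction and \Lem~\ref{lem_config_concentration}; contiguity then transfers this to $\hat\vsigma$, and taking expectations of the Boltzmann measure of the bad event gives the claim for a uniformly random cavity $\vx$. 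You would need to rebuild your argument around that identity rather than around martingale concentration of marginals.
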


\begin{proof}
By Lemma \ref{lem_config_concentration}, we have $\pr\brk{\vy^\sharp_{\cA} \in \cY} = 1- O(n^{-3})$ and therefore for any $\omega \in \Omega$,
\begin{align*}
\pr\brk{\abs{\sum_{i=1}^\Delta \vecone\cbc{\vy^\sharp_{A+i} = \omega} - \frac{\Delta}{q}} > \sqrt{\Delta} \log \Delta \log\log \Delta } = O\bc{n^{-3}}.
\end{align*}
Therefore, with probability $1-O(n^{-3})$, $\pr\brk{\abs{(\SIGMA^\ast\vert_{\cC_x})^{-1}(\omega) - \Delta/q} >\sqrt{\Delta} \log \Delta \log\log \Delta} = O(n^{-3})$. By contiguity, this in turn implies that
\begin{align*}
\pr\brk{\abs{(\hat{\SIGMA}\vert_{\cC_x})^{-1}(\omega) - \Delta/q} >\sqrt{\Delta} \log \Delta \log\log \Delta} = o(1).
\end{align*}
and therefore by \Prop~\ref{Prop_Nishi} we find
\begin{align*}
\Erw\brk{\mu_{\G^\ast(\SIGMA^\ast), \cC}\bc{\cbc{\sigma \in [q]^{\cC}: \abs{(\sigma^{-1}(\omega) - \Delta/q} >\sqrt{\Delta} \log \Delta \log\log \Delta}} } = o(1).
\end{align*}
\end{proof}

Recall $\vY^\sharp_i(\tau) = \sum_{j=1}^{k_i}\vecone\cbc{\vy_{(a_i,j)}^\sharp = \tau}$ for $i \in [m]$ and $A = \sum_{i=1}^mk_i$ from above. We now correspondingly denote by
\begin{align*}
\vC^\sharp(\tau) = \sum_{j=1}^\Delta \vecone\cbc{\vy_{b_j}^\sharp= \tau}
\end{align*} 
the number of cavities of each colour $\tau$.
The following lemma provides that the spin distribution on the cavities is close to uniform.

\begin{lemma} \label{lem_cav2}
For all $\tau \in [q]$, we have
\begin{align*}
\pr\brk{\abs{\vC^\sharp(\tau) - \frac{\Delta}{q}} = O\bc{\sqrt{n}\log n \log\log n}} = O(n^{-3}).
\end{align*}
\end{lemma}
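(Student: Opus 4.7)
The plan is to reduce this cavity-side concentration statement to the factor-side concentration already established in \Lem~\ref{lem_config_concentration}, by exploiting the hard constraint that $\vy^\sharp$ reproduces the empirical colour profile $\rho_\chi$.

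First I would observe that, by construction of $\vy^\sharp$ in \textbf{SHARP1}, the vector $\vy^\sharp\in\Omega^{\cA\cup\cD}$ always satisfies $\rho_{\vy^\sharp}=\rho_\chi$, since it is obtained from $\vy^\flat$ by conditioning on exactly this event. Summing over all clones yields the deterministic identity
\begin{align*}
\sum_{i=1}^{m}\vY^\sharp_i(\tau)+\vC^\sharp(\tau)=\sum_{\alpha\in\cA\cup\cD}\vecone\{\vy^\sharp_\alpha=\tau\}=\sum_{i=1}^{n}d_i\vecone\{\sigma_i=\tau\}=|\chi^{-1}(\tau)|
\end{align*}
for every $\tau\in\Omega$, so that $\vC^\sharp(\tau)=|\chi^{-1}(\tau)|-\sum_{i=1}^{m}\vY^\sharp_i(\tau)$.

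Next I would invoke the two inputs that control the right-hand side. The hypothesis on $\sigma$ in the ambient proposition (cf.~the setup preceding \Lem~\ref{lem_config_concentration}) gives $|\chi^{-1}(\tau)|=q^{-1}\sum_{i=1}^n d_i+O(\sqrt n\log n)$ deterministically, while \Lem~\ref{lem_config_concentration} yields
\begin{align*}
\Pr\brk{\abs{\sum_{i=1}^{m}\vY^\sharp_i(\tau)-\frac{A}{q}}>\sqrt A\log A\log\log A}\le n^{-3}.
\end{align*}
Since all factor degrees are bounded by $C$, we have $A=\Theta(n)$, so $\sqrt A\log A\log\log A=O(\sqrt n\log n\log\log n)$. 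Subtracting and using $\sum_{i=1}^{n}d_i=A+\Delta$ gives, on the high-probability event,
\begin{align*}
\vC^\sharp(\tau)=\frac{A+\Delta}{q}+O(\sqrt n\log n)-\frac{A}{q}+O(\sqrt n\log n\log\log n)=\frac{\Delta}{q}+O(\sqrt n\log n\log\log n),
\end{align*}
which is the claimed bound with failure probability $O(n^{-3})$.

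I do not foresee a genuine obstacle here: the only subtlety is that the conditioning in \textbf{SHARP1} forces a hard constraint between the factor-side and cavity-side colour counts, and once this is articulated the statement follows mechanically from \Lem~\ref{lem_config_concentration} and the assumed regularity of $\sigma$. If the lemma were needed uniformly in $\tau\in[q]$, a union bound over the $q=O(1)$ colours costs only a constant factor and preserves the $O(n^{-3})$ rate.
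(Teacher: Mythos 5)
Your proposal is correct and follows essentially the same route as the paper: both use the deterministic colour-balance identity $\sum_{i=1}^m\vY^\sharp_i(\tau)+\vC^\sharp(\tau)=\sum_{i=1}^n d_i\vecone\{\sigma_i=\tau\}$, the assumed regularity of $\sigma$, and \Lem~\ref{lem_config_concentration} to conclude. (The statement of the lemma as printed has the event and its complement swapped; like the paper's own proof, you correctly prove the intended concentration bound with failure probability $O(n^{-3})$.)
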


\begin{proof}
By Lemma \ref{lem_config_concentration}, we have $\pr\brk{\vy^\sharp_{\cA} \in \cY} = 1- O(n^{-3})$. Moreover, $\sum_{i=1}^n d_i\vecone\cbc{\sigma_i = \tau} = \sum_{i=1}^nd_i/q + O(\sqrt{n}\log n)$ and therefore
\begin{align*}
\sum_{i=1}^m \vY_i^\sharp(\tau) + C^\sharp(\tau) = \sum_{i=1}^n d_i\vecone\cbc{\sigma_i = \tau} = \sum_{i=1}^nd_i/q + O(\sqrt{n}\log n).
\end{align*} 
Rearranging, we see that 
\begin{align*}
C^\sharp(\tau) =  \sum_{i=1}^nd_i/q - A/q + O(\sqrt{n}\log n) + O\bc{\sqrt{A}\log A \log\log A} = \Delta/q + O(\sqrt{n}\log n \log \log n)
\end{align*}
with probability at least $1-O(n^{-3})$.
\end{proof}

\section{Variation of measures} \label{vom}

This section is entirely self-contained.
Fix a number $q\in\mathbb Z_{>1}$ of colours and a nonempty index set $\vmIndSp\subseteq\ZZ_{\ge 0}$.
Fix a degree $d_\vmInd\in\ZZ_{\ge 0}$ for each $\vmInd\in\vmIndSp$ such that $\cD\setminus\{0\}\neq\emptyset$ holds for the set $\cD=\{d_\vmInd:\vmInd\in\vmIndSp\}$ of degrees.
Further, for each $\vmInd\in\vmIndSp$ fix a measure $\mu_\vmInd\in\cP([q]^{d_\vmInd})$ satisfying the assumption {\bf SPAN}, i.e.~for all $\vmCol\in[q]$ we have $\vmCol 1_{[d_\vmInd]}\in\vmAssSp_{\vmInd}$ where $\vmAssSp_\vmInd\subseteq[q]^{d_\vmInd}$ denotes the support of $\mu_\vmInd$ and using the shorthand $1_{[d_\vmInd]}=(1)_{h\in[d_\vmInd]}$.
Analogously, the family $(d_\vmInd,\mu_\vmInd)_{\vmInd\in\vmIndSp}$ satisfies {\bf SPAN} iff $\mu_\vmInd\in\cP([q]^{d_\vmInd})$ satisfies {\bf SPAN} for all $\vmInd\in\vmIndSp$.
For $\vmIndP\in\cP(\vmIndSp)$ let $\vmRInd_{\vmIndP}$ have law $\vmIndP$, further $\vd_{\vmIndP}=d_{\vmRInd_{\vmIndP}}$ and
\begin{align*}
\vmIndPSp=\vmIndPSp(\vmIndSp)=\left\{\vmIndP\in\cP(\vmIndSp):\Erw\left[\vd_{\vmIndP}\right]\in\RR_{>0}\right\}
\end{align*}
denote the measures with finite positive degree expectation, i.e.~exactly the measures $\vmIndP$ for which $\hat \vmIndP\in\cP(\vmIndSp)$ given by the Radon-Nikodym derivative $\Erw[\bm d_{\vmIndP}]^{-1}d_\vmInd$ with respect to $\vmIndP$ is well-defined.
For $\vmInd\in\vmIndSp$ and $\vmPrior\in\cP([q])$ let $\mu_{\vmPrior,\vmInd}\in\cP(\vmAssSp_\vmInd)$ be given by
\begin{align*}
\mu_{\vmPrior,\vmInd}(\vmAss)=Z_{\vmPrior,\vmInd}^{-1}\mu_\vmInd(\vmAss)\prod_{h\in[d_\vmInd]}\vmPrior(\vmAss_h)\textrm{, }Z_{\vmPrior,\vmInd}=\sum_{\vmAss}\mu_\vmInd(\vmAss)\prod_{h\in[d_\vmInd]}\vmPrior(\vmAss_h)\textrm{, }
\end{align*}
for $\vmAss\in\vmAssSp_\vmInd$.
Let $\mu_{\vmPrior,\vmInd}\vert_h\in\cP([q])$, $h\in[d_\vmInd]$, denote the marginal on the $h$-th coordinate and further $\mu_{\vmPrior,\vmInd}\vert_*=\sum_hd_\vmInd^{-1}\mu_{\vmPrior,\vmInd}\vert_h$ if $d_\vmInd>0$.
The central quantity of this section is
\begin{align*}
\iota_{\vmIndP}:\cP([q])\rightarrow\cP([q])\textrm{, }
\vmPrior\mapsto\Erw\left[\mu_{\vmPrior,\vmRInd_{\hat\vmIndP}}\vert_*\right]
\end{align*}
for $\vmIndP\in\vmIndPSp$.
Notice that $\iota_{\vmIndP}$ is well-defined since we always have $\pr[\bm d_{\hat\vmIndP}=0]=0$.
\begin{proposition}\label{pp_mr_homeo}
Let $(d_\vmInd,\mu_\vmInd)_{\vmInd\in\vmIndSp}$ be a family satisfying {\bf SPAN}.
Then for any choice of $\vmIndP\in\vmIndPSp$ the map $\iota_{\vmIndP}$ is a homeomorphism.
\end{proposition}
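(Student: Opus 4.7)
The plan is to reduce to bijectivity by compactness: $\cP([q])$ is a closed simplex in $\RR^q$, hence compact and Hausdorff, so any continuous bijection $\cP([q])\to\cP([q])$ is automatically a homeomorphism. Continuity of $\iota_{\vmIndP}$ is routine: for each $\vmInd$ the map $\vmPrior\mapsto\mu_{\vmPrior,\vmInd}\vert_*$ is a rational function in $\vmPrior$ whose denominator $Z_{\vmPrior,\vmInd}\ge\sum_\omega\mu_\vmInd(\omega 1_{[d_\vmInd]})\vmPrior(\omega)^{d_\vmInd}$ is strictly positive on $\cP([q])$ thanks to \textbf{SPAN}, and since marginals are bounded by $1$ the expectation over $\vmRInd_{\hat\vmIndP}$ preserves continuity by bounded convergence.

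For injectivity I would view $\iota_{\vmIndP}$ as essentially a gradient. Parametrising the relative interior of $\cP([q])$ by $u(\omega)=\log\vmPrior(\omega)$ and setting
\begin{equation*}
F(u)=\Erw\!\bigl[\log Z_{\exp(u),\,\vmRInd_{\vmIndP}}\bigr],
\end{equation*}
a short computation yields $\partial_{u(\omega)}F(u)=\Erw[\vd_{\vmIndP}]\,\iota_{\vmIndP}(\vmPrior)(\omega)$. Each $\log Z_{\vmPrior,\vmInd}$ is the logarithm of a sum of exponentials of linear forms in $u$ and is therefore convex; strict convexity of $F$ modulo the constant direction $u\mapsto u+c\mathbf 1$ follows from the existence of some $\vmInd\in\supp\vmIndP$ with $d_\vmInd>0$ (guaranteed by $\Erw[\vd_{\vmIndP}]>0$) combined with \textbf{SPAN}, which places positive mass on every constant configuration $\omega 1_{[d_\vmInd]}$ and thereby forces the Hessian to degenerate \emph{only} in the constant direction. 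Since the quotient of $\RR^q$ by constants is in bijection with the relative interior of $\cP([q])$ via $\vmPrior=e^u/\sum_\omega e^{u(\omega)}$, strict convexity of $F$ on this quotient makes $\nabla F$ injective, hence $\iota_{\vmIndP}$ is injective on the interior. Because $\vmPrior(\omega)=0$ forces $\iota_{\vmIndP}(\vmPrior)(\omega)=0$ and conversely (again by \textbf{SPAN}), $\iota_{\vmIndP}$ sends the relative interior of each face of $\cP([q])$ into itself, and an induction on $\dim\supp\vmPrior$ extends injectivity to all of $\cP([q])$.

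Surjectivity then follows from the same convexity: $D\iota_{\vmIndP}$ is non-degenerate on every tangent space, so by the inverse function theorem $\iota_{\vmIndP}$ is an open map on the relative interior of each face. It is also proper there, since if $\vmPrior_n$ drifts to a sub-face some coordinate $\vmPrior_n(\omega)\to 0$ and the support-preservation above forces $\iota_{\vmIndP}(\vmPrior_n)(\omega)\to 0$ in lockstep. On each face the image of $\iota_{\vmIndP}$ is therefore nonempty, open and closed in the relative interior, hence fills the whole relative interior, so $\iota_{\vmIndP}$ is surjective on $\cP([q])$.

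The main obstacle is establishing strict convexity of $F$ modulo constants on each face and, tied to it, verifying that \textbf{SPAN} is exactly the right hypothesis: it simultaneously delivers strict positive-definiteness of the Hessian on the tangent space and guarantees the support-preservation that drives the face-by-face induction for both injectivity and surjectivity. Without \textbf{SPAN}, two different priors supported on disjoint color subsets could easily collide under $\iota_{\vmIndP}$, breaking the argument.
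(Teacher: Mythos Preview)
Your approach is correct in spirit and genuinely different from the paper's. The paper proves bijectivity via a variational route: it characterises $(\mu_{\vmPrior,\vmInd})_\vmInd$ as the unique minimiser of the conditional relative entropy $f_{\vmIndP,\rho}$ on the fibre $\rho=\iota_{\vmIndP}(\vmPrior)$, first for a single index, then for two, then for finite and countable supports. You instead recognise $\iota_{\vmIndP}$ as the gradient (in exponential coordinates) of the log-partition function $F(u)=\Erw[\log Z_{e^u,\vmRInd_{\vmIndP}}]$, and get injectivity from strict convexity of $F$ modulo constants---which \textbf{SPAN} does indeed deliver, since the constant configurations $\omega 1_{[d_\vmInd]}$ force the Hessian kernel to be exactly $\RR\mathbf 1$. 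Your support-preservation observation and face-by-face induction match the paper's boundary extension. The payoff of the paper's longer argument is that it yields \Prop~\ref{pp_mr_min} (the variational characterisation of $(\mu_{\vmPrior,\vmInd})_\vmInd$) as an integral part of the construction, whereas you would need a separate argument for that; conversely, your route is considerably shorter for bijectivity alone.

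One technical gap: you invoke the inverse function theorem, which requires $\iota_{\vmIndP}$ to be $C^1$, i.e.\ the Hessian of $F$ to exist. Differentiating under the expectation for the Hessian needs $\Erw[\vd_{\vmIndP}^2]<\infty$, but $\vmIndPSp$ only assumes a finite \emph{first} moment. The fix is painless: once injectivity is in hand, replace the inverse function theorem by Brouwer's invariance of domain, which needs only continuity and injectivity to conclude that $\iota_{\vmIndP}$ is open on the relative interior of each face. Your properness argument (via continuity plus support preservation) then closes the image, and connectedness finishes surjectivity exactly as you outlined.
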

For $\vmIndP\in\vmIndPSp$ let $\cM_{\vmIndP}=\prod_{\vmInd}\cP(\vmAssSp_\vmInd)$ denote the set of all families of measures that are absolutely continuous with respect to $(\mu_\vmInd)_\vmInd$, for all $\vmInd$ in the support of $\vmIndP$.
For given assignment distributions $\nu\in\cM_{\vmIndP}$ let
\begin{align*}
\rho_{\vmIndP}(\nu)=\rho_{\vmIndP,\nu}=\Erw\left[\nu_{{\vmRInd}_{\hat\vmIndP}}\vert_*\right]\in\cP([q])
\end{align*}
denote their (expected) relative colour frequencies.
Further, for $\rho\in\cP([q])$ let $\cM_{\vmIndP,\rho}=\rho_{\vmIndP}^{-1}(\rho)\subseteq\cM_{\vmIndP}$ denote the assignment distributions $\nu$ that are absolutely continuous with respect to $\mu=(\mu_\vmInd)_\vmInd$ with colour frequencies $\rho$.
Let $\vmColSp\subseteq[q]$ denote the support of $\rho$ and $\cP^\circ(\vmColSp)\subseteq\cP([q])$ the laws with support $\vmColSp$.
Further, let $\mu_{\vmColSp,\vmInd}\in\cP^\circ(\vmAssSp_\vmInd\cap\vmColSp^{d_\vmInd})$ denote the law of $\vmRAss_\vmInd|\vmRAss_\vmInd\in\vmColSp^{d_\vmInd}$ with $\vmRAss_\vmInd$ being a sample from $\mu_\vmInd$ for given $\vmInd\in\vmIndSp$.
Finally, the conditional relative entropy on the fibre given by $\rho$ is
\begin{align*}
f_{\vmIndP,\rho}:\cM_{\vmIndP,\rho}\rightarrow\RR_{\ge 0}\cup\{\infty\}\textrm{, }
\nu\mapsto\Erw\left[\KL{\nu_{\vmRInd_{\vmIndP}}}{\mu_{\vmColSp,\vmRInd_{\vmIndP}}}\right]\textrm{.}
\end{align*}
\begin{proposition}\label{pp_mr_min}
Assume that $(d_\vmInd,\mu_\vmInd)_{\vmInd\in\vmIndSp}$ satisfies {\bf SPAN}.
Then for any choice of $\vmIndP\in\vmIndPSp$ and $\rho\in\cP([q])$ with $\vmPrior=\iota^{-1}_{\vmIndP}(\rho)$ the assignment distribution family $(\mu_{\vmPrior,\vmInd})_\vmInd\in\cM_{\vmIndP,\rho}$ is the unique minimiser of $f_{\vmIndP,\rho}$.
\end{proposition}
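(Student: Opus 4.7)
The plan is to apply the Gibbs variational principle: since $\mu_{\vmPrior,\vmInd}$ is the exponential tilt of $\mu_\vmInd$ by the single-coordinate potential $\log\vmPrior$, it must be the relative-entropy minimiser once the coordinate marginals are fixed, and the tilt $\vmPrior=\iota_\vmIndP^{-1}(\rho)$ provided by Proposition~\ref{pp_mr_homeo} is exactly the one that matches the constraint $\rho_\vmIndP(\nu)=\rho$. Before carrying out the computation I would verify that $\vmPrior\in\cP^\circ(\vmColSp)$: if $\vmPrior(\omega)=0$ for some $\omega\in\vmColSp$ then $\mu_{\vmPrior,\vmInd}$ avoids $\omega$ in every coordinate, forcing $\iota_\vmIndP(\vmPrior)(\omega)=0$ and contradicting $\iota_\vmIndP(\vmPrior)=\rho$; the reverse inclusion of zero-sets follows from the injectivity of $\iota_\vmIndP$. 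Consequently $\mu_{\vmPrior,\vmInd}$ and $\mu_{\vmColSp,\vmInd}$ are supported on the common set $\vmAssSp_\vmInd\cap\vmColSp^{d_\vmInd}$, which is nonempty thanks to {\bf SPAN}.

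The core calculation is the tilting identity
\[
\log\frac{\mu_{\vmPrior,\vmInd}(\vmAss)}{\mu_{\vmColSp,\vmInd}(\vmAss)} = \sum_{h=1}^{d_\vmInd}\log\vmPrior(\vmAss_h) + \log\frac{\mu_\vmInd(\vmColSp^{d_\vmInd})}{Z_{\vmPrior,\vmInd}} \qquad(\vmAss\in\vmAssSp_\vmInd\cap\vmColSp^{d_\vmInd}),
\]
which integrated against $\nu_\vmInd$ yields, for every $\vmInd$ in the support of $\vmIndP$,
\[
\KL{\nu_\vmInd}{\mu_{\vmColSp,\vmInd}} = \KL{\nu_\vmInd}{\mu_{\vmPrior,\vmInd}} + \sum_{h=1}^{d_\vmInd}\Erw_{\nu_\vmInd}\!\left[\log\vmPrior(\vmAss_h)\right] + c_{\vmPrior,\vmInd}.
\]
Averaging over $\vmRInd_\vmIndP$ and using the identity $\sum_\vmInd\vmIndP(\vmInd)\sum_h\nu_\vmInd(\vmAss_h=\omega)=\Erw[\vd_\vmIndP]\rho_\vmIndP(\nu)(\omega)$, the middle term collapses on the fibre $\rho_\vmIndP(\nu)=\rho$ to the $\nu$-independent quantity $\Erw[\vd_\vmIndP]\sum_\omega\rho(\omega)\log\vmPrior(\omega)$, giving
\[
f_{\vmIndP,\rho}(\nu) = \Erw\!\left[\KL{\nu_{\vmRInd_\vmIndP}}{\mu_{\vmPrior,\vmRInd_\vmIndP}}\right] + K(\vmPrior,\rho,\vmIndP) \qquad (\nu\in\cM_{\vmIndP,\rho}).
\]

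By the strict positivity of Kullback-Leibler divergence, the right-hand side is minimised precisely when $\nu_\vmInd=\mu_{\vmPrior,\vmInd}$ for $\vmIndP$-almost every $\vmInd$, which yields uniqueness; existence reduces to checking that $(\mu_{\vmPrior,\vmInd})_\vmInd\in\cM_{\vmIndP,\rho}$, which is exactly the defining equation $\iota_\vmIndP(\vmPrior)=\rho$. The most delicate point I anticipate is the support bookkeeping handled in the first paragraph: the tilting identity is only genuine when $\mu_{\vmPrior,\vmInd}$ and $\mu_{\vmColSp,\vmInd}$ are mutually absolutely continuous, and this is exactly what Proposition~\ref{pp_mr_homeo} together with the hypothesis {\bf SPAN} guarantees.
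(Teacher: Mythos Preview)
Your proof is correct and considerably more direct than the paper's. The paper proves Propositions~\ref{pp_mr_homeo} and~\ref{pp_mr_min} in tandem through an inductive buildup over $|\mathrm{supp}\,\vmIndP|$ (Sections~\ref{pp_onepoint}--\ref{pp_zero_proof}): for one-point, two-point, finite, and finally infinite supports it first establishes existence of a minimiser of $f_{\vmIndP,\rho}$ via compactness, then reads off the tilted form from the first-order (Lagrange) conditions, and only at the very end of Section~\ref{pp_infinite} does it verify that $(\mu_{\vmPrior,\vmInd})_\vmInd$ is the minimiser by the tangent-line inequality for strictly convex functions. Your Pythagorean decomposition
\[
f_{\vmIndP,\rho}(\nu)=\Erw\!\left[\KL{\nu_{\vmRInd_\vmIndP}}{\mu_{\vmPrior,\vmRInd_\vmIndP}}\right]+K(\vmPrior,\rho,\vmIndP)
\]
is the integrated form of that tangent-line argument and dispenses with the entire inductive scaffolding in one stroke. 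The trade-off is that you take Proposition~\ref{pp_mr_homeo} (in particular surjectivity of $\iota_\vmIndP$) as a black box, whereas the paper extracts surjectivity from the very compactness-and-KKT argument you bypass; since the statement of Proposition~\ref{pp_mr_min} already presupposes $\iota_\vmIndP^{-1}(\rho)$ exists, this is entirely legitimate.

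One small correction: the reverse support inclusion $\mathrm{supp}\,\vmPrior\subseteq\vmColSp$ does not really need injectivity of $\iota_\vmIndP$. If $\vmPrior(\omega)>0$ then by {\bf SPAN} we have $\mu_{\vmPrior,\vmInd}(\omega 1_{[d_\vmInd]})>0$, hence $\mu_{\vmPrior,\vmInd}\vert_*(\omega)>0$ and $\iota_\vmIndP(\vmPrior)(\omega)>0$ directly. With this in hand your mutual absolute continuity claim is clean, and the finiteness of $K$ follows from $|c_{\vmPrior,\vmInd}|\le d_\vmInd\cdot|\log\min_{\omega\in\vmColSp}\vmPrior(\omega)|$ together with $\Erw[\vd_\vmIndP]<\infty$.
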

For the last assertion we equip $\vmIndPSp$ with the metric given by
\begin{align*}
\vmIndPD(\vmIndP,\vmIndP')=\sum_{\vmInd\in\vmIndSp}(d_\vmInd+1)\left|\vmIndP(\vmInd)-\vmIndP'(\vmInd)\right|
\end{align*}
for $\vmIndP$, $\vmIndP'\in\vmIndPSp$, where adding one is required since $d_\vmInd=d_{\vmInd'}=0$ is possible for $\vmInd\neq\vmInd'$.
\begin{proposition}\label{pp_mr_cont}
If {\bf SPAN} holds, then $\iota$, $\iota^{-1}:\vmIndPSp\times\cP([q])\rightarrow\cP([q])$ are continuous.
\end{proposition}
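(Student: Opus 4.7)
The plan is to prove continuity of $\iota$ first, directly from the definitions, and then to derive continuity of $\iota^{-1}$ by combining this with the bijectivity statement of Proposition~\ref{pp_mr_homeo} together with the compactness of the standard simplex $\cP([q])$.

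For the continuity of $\iota$, I would fix a target point $(\vmIndP, \vmPrior) \in \vmIndPSp \times \cP([q])$, pick a sequence $(\vmIndP_n, \vmPrior_n) \to (\vmIndP, \vmPrior)$, and exploit the decomposition
$$\iota_{\vmIndP_n}(\vmPrior_n) - \iota_\vmIndP(\vmPrior) = \sum_{\vmInd} \hat{\vmIndP}_n(\vmInd)\bigl[\mu_{\vmPrior_n,\vmInd}\vert_{*} - \mu_{\vmPrior,\vmInd}\vert_{*}\bigr] + \sum_{\vmInd} \bigl[\hat{\vmIndP}_n(\vmInd) - \hat{\vmIndP}(\vmInd)\bigr]\mu_{\vmPrior,\vmInd}\vert_{*}.$$
To handle the second term, I would observe that the $d_\vmInd$-factor defining the size-biased measure $\hat{\vmIndP}$ and the weight $(d_\vmInd + 1)$ appearing in $\vmIndPD$ together imply $\hat{\vmIndP}_n \to \hat{\vmIndP}$ in total variation (using also $\Erw[\vd_{\vmIndP}]>0$), whence this piece is bounded by $\TV{\hat{\vmIndP}_n - \hat{\vmIndP}} \to 0$. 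For the first term, the assumption {\bf SPAN} guarantees $Z_{\vmPrior,\vmInd} > 0$ throughout $\cP([q])$ for each fixed $\vmInd$, so that $\vmPrior \mapsto \mu_{\vmPrior,\vmInd}\vert_{*}$ is continuous at $\vmPrior$; I would then combine this pointwise continuity with a Scheff\'e-style uniform integrability argument: for $\eps > 0$ pick a finite $F \subseteq \vmIndSp$ carrying $\hat{\vmIndP}$-mass $\geq 1 - \eps$, bound the tail $\vmInd \notin F$ by $\sum_{\vmInd \notin F}\hat{\vmIndP}_n(\vmInd) \leq \eps + \TV{\hat{\vmIndP}_n - \hat{\vmIndP}}$, and dispatch the finitely many $\vmInd \in F$ by pointwise convergence.

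For the continuity of $\iota^{-1}$, given $(\vmIndP_n, \rho_n) \to (\vmIndP, \rho)$ and writing $\vmPrior_n = \iota^{-1}_{\vmIndP_n}(\rho_n)$ and $\vmPrior = \iota^{-1}_{\vmIndP}(\rho)$, I would argue via subsequences: since $\cP([q])$ is compact, every subsequence of $(\vmPrior_n)$ admits a further subsequence converging to some $\vmPrior^* \in \cP([q])$. Continuity of $\iota$ along that subsubsequence yields $\iota_{\vmIndP}(\vmPrior^*) = \lim_j \iota_{\vmIndP_{n_{k_j}}}(\vmPrior_{n_{k_j}}) = \lim_j \rho_{n_{k_j}} = \rho$, and the injectivity of $\iota_{\vmIndP}$ (from Proposition~\ref{pp_mr_homeo}) forces $\vmPrior^* = \vmPrior$. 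Hence every subsequence of $(\vmPrior_n)$ has a subsubsequence tending to the same limit $\vmPrior$, which yields $\vmPrior_n \to \vmPrior$.

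The main obstacle I anticipate lies in the very first step: when $\vmIndSp$ is infinite, the series defining $\iota_{\vmIndP}$ must be handled carefully, and interchanging the limit with this infinite sum is precisely where the weight $(d_\vmInd + 1)$ in $\vmIndPD$ is indispensable. Once this uniform-integrability bookkeeping is in place, the argument reduces to the pointwise continuity of $\mu_{\vmPrior,\vmInd}\vert_{*}$ in $\vmPrior$ (granted by SPAN) and the compactness-based subsequence trick.
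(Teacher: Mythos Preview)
Your argument is correct. For the continuity of $\iota$ your decomposition and the tail--truncation argument match the paper's proof almost verbatim: the paper also bounds $\|\rho_n-\rho\|_1$ by a term controlled by $\|\hat\vmIndP_n-\hat\vmIndP\|_1$ plus an expectation $\Erw[\|\mu_{\vmPrior_n,\vmRInd_{\hat\vmIndP}}\vert_*-\mu_{\vmPrior,\vmRInd_{\hat\vmIndP}}\vert_*\|_1]$, truncates the latter to a finite set of indices, and uses pointwise continuity of $\vmPrior\mapsto\mu_{\vmPrior,\vmInd}\vert_*$ on that finite set.

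For $\iota^{-1}$ the two arguments diverge. The paper first establishes, as a separate preliminary, that $\iota_\vmIndP$ is a homeomorphism of $\cP([q])$ for each fixed $\vmIndP$; it then uses this open-map property directly: given $\varepsilon>0$ it picks $\delta>0$ with $\cB_\delta(\rho)\subseteq\iota_\vmIndP(\cB_\varepsilon(\vmPrior))$, shows $\|\iota_\vmIndP(\vmPrior_n)-\rho\|_1<\delta$ for large $n$ (again via the $\|\hat\vmIndP_n-\hat\vmIndP\|_1$ bound), and concludes $\vmPrior_n\in\cB_\varepsilon(\vmPrior)$. Your subsequence argument is a cleaner shortcut: it only needs compactness of $\cP([q])$, the joint continuity of $\iota$ just proved, and the mere injectivity of $\iota_\vmIndP$ from Proposition~\ref{pp_mr_homeo}, without having to invoke the homeomorphism statement as an open-map result. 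Both routes are valid; yours is slightly more economical in that it avoids the explicit $\varepsilon$--$\delta$ transfer through $\iota_\vmIndP$.
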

\subsection{Proof strategy}
In the first part of the proof we derive Proposition \ref{pp_mr_homeo} and Proposition \ref{pp_mr_min} without the continuity results, which we postpone together with the proof of Proposition \ref{pp_mr_cont} to Section \ref{pp_cont}.
In the first part we start with the assumption that $d_\vmInd>0$ for all $\vmInd$ in the support of $\vmIndP\in\vmIndPSp$ and then discuss isolated vertices in Section \ref{pp_zero_proof}.
Further, we first restrict to $\iota_{\vmIndP}:\cP^\circ([q])\rightarrow\cP^\circ([q])$ and then extend the results to the boundary in Section \ref{pp_boundary_proof}.
In this restricted setup Section \ref{pp_onepoint} covers the case $|\vmIndSp|=1$, Section \ref{pp_twopoint} is dedicated to the case $|\vmIndSp|=2$, in Section \ref{pp_finite} we discuss finite index sets $\vmIndSp$ and in Section \ref{pp_infinite} we finally extend the results to countable sets $\vmIndSp$.
\subsection{One point masses}\label{pp_onepoint}
Fix $\vmInd\in\vmIndSp$ such that $d=d_\vmInd>0$. With $\rho\in\cP^\circ([q])$ our atoms are given by
\begin{align*}
\iota_\vmInd&:\cP^\circ([q])\rightarrow\cP^\circ([q])\textrm{, }\vmPrior\mapsto\mu_{\vmPrior,\vmInd}\vert_*\textrm{, }\\
\rho_\vmInd&:\cP(\vmAssSp_\vmInd)\rightarrow\cP([q])\textrm{, }\nu\mapsto\nu\vert_*\textrm{, }\\
f_{\vmInd,\rho}&:\cM_{\vmInd,\rho}\rightarrow\mathbb R_{\ge 0}\textrm{, }\nu\mapsto\KL{\nu}{\mu_\vmInd}\textrm{, }\\
\cM_{\vmInd,\rho}&=\left\{\nu\in\cP(\vmAssSp_\vmInd):\nu\vert_*=\rho\right\}\textrm{.}
\end{align*}
Notice that for $\vmPrior\in\cP^\circ([q])$ we indeed have $\mu_{\vmPrior,\vmInd}\vert_*\in\cP^\circ([q])$ since $\vmCol 1_{[d]}\in\vmAssSp_\vmInd$ for all $\vmCol\in[q]$.
We split $\iota_\vmInd$ into $\iota_{1,\vmInd}:\cP^\circ([q])\rightarrow\cM_\vmInd$, $\vmPrior\mapsto\mu_{\vmPrior,\vmInd}$, with $\cM_\vmInd=\im(\iota_{1,\vmInd})$ and $\iota_{2,\vmInd}:\cM_\vmInd\rightarrow\cP^\circ([q])$, $\nu\mapsto\nu\vert_*$, i.e.~$\iota_{2,\vmInd}$ is the restriction of $\rho_\vmInd$ to $\cM_\vmInd$.
Notice that $\cM_\vmInd\subseteq\cP^\circ(\vmAssSp_\vmInd)$, which we equip with $\|\cdot\|_1$ inherited from $\mathbb R^{\vmAssSp_\vmInd}$. Further, for any $\vmPrior\in\cP^\circ([q])$ and $\vmCol\in[q]$ we have
\begin{align*}
\frac{\vmPrior(\vmCol)}{\vmPrior(q)}=\left(\frac{\mu_{\vmPrior,\vmInd}(\vmCol 1_{[d]})\mu_\vmInd(q1_{[d]})}{\mu_{\vmPrior,\vmInd}(q1_{[d]})\mu_\vmInd(\vmCol 1_{[d]})}\right)^{1/d}\textrm{,}
\end{align*}
so uniqueness of $\vmPrior$ for given $\mu_{\vmPrior,\vmInd}$ follows with a normalization argument and hence $\iota_{1,\vmInd}$ is a bijection.
Next, we show that for any $\rho\in\cP^\circ([q])$ there exists a unique minimiser $\mu_{\min,\rho}\in\cM_{\vmInd,\rho}\cap\cP^\circ(\vmAssSp_\vmInd)$ of $f_{\vmInd,\rho}$.
First, notice that $\cM_{\vmInd,\rho}\neq\emptyset$ since $\mu_{\mathrm{b},\rho}\in\cM_{\vmInd,\rho}$ with $\mu_{\mathrm{b},\rho}=\sum_{\vmCol\in[q]}\rho(\vmCol)e_{\vmAssSp_\vmInd,\vmCol 1_{[d]}}$ and using the shorthand
$e_{\vmAssSp_\vmInd,\vmCol 1_{[d]}}=(\vecone\{\vmAss=\vmCol 1_{[d]}\})_{\vmAss\in\vmAssSp_\vmInd}\in\RR^{\vmAssSp_\vmInd}$ for standard basis vectors.
From this we obtain that $\cM_{\vmInd,\rho}=\mu_{\mathrm{b},\rho}+\cV_{\vmInd,\rho}$ with
\begin{align*}
\cV_{\vmInd,\rho}=\left\{v\in\cV_\vmInd:\mu_{\mathrm{b},\rho}+v\ge 0_{\vmAssSp_\vmInd}\right\}\textrm{, }
\cV_\vmInd=\left\{v\in\mathbb R^{\vmAssSp_\vmInd}:\sum_\vmAss v(\vmAss)=0\textrm{, }v\vert_*=0_{[q]}\right\}\textrm{.}
\end{align*}
In the following sections we use the shorthand $0$ for $0_{\vmAssSp_\vmInd}$ by an abuse of notation since the definition of $\cV_\vmInd$ determines the underlying space.
Since $\cV_\vmInd$ is a linear subspace of $\mathbb R^{\vmAssSp_\vmInd}$, the sets $\cV_{\vmInd,\rho}$ and $\cM_{\vmInd,\rho}$ are polytopes, so in particular they are convex and compact. Further, notice that for any $v\in\mathbb R^{\vmAssSp_\vmInd}$ and $h\in[d]$ we have $\sum_\vmAss v(\vmAss)=\sum_\vmCol v\vert_h(\vmCol)$ and hence $\sum_\vmAss v(\vmAss)=\sum_\vmCol v\vert_*(\vmCol)$, which suggests that $\cV_\vmInd=\{v\in\mathbb R^{\vmAssSp_\vmInd}:v\vert_*=0_{[q]}\}$, i.e.~$\cV_\vmInd$ is the kernel of the linear map $v\vert_*=Wv$ given by the matrix $W=(|\vmAss^{-1}(\vmCol)|/d)_{\vmAss\in\vmAssSp_\vmInd,\vmCol\in[q]}$.
The fact that the column vectors $(W_{\vmCol^*1_{[d]},\vmCol})_{\vmCol\in[q]}=e_{[q],\vmCol}$ are exactly the unit vectors for $\vmCol^*\in[q]$ shows that $W$ is surjective and thereby the kernel has dimension $|\vmAssSp_\vmInd|-q$.
Hence, if $\vmAssSp_\vmInd=\{\vmCol 1_{[d]}:\vmCol\in[q]\}$, then $W$ is bijective, further $\cM_{\vmInd,\rho}=\{\mu_{\mathrm{b},\rho}\}$ and $\mu_{\min,\rho}=\mu_{\mathrm{b},\rho}\in\cP^\circ(\vmAssSp_\vmInd)$ is the unique minimizer of $f_{\vmInd,\rho}$.
Otherwise, for $\varepsilon\in(0,1)$ let $\mu_{\varepsilon,\rho}\in\mathbb R^{\vmAssSp_\vmInd}$ be given by
\begin{align*}
\mu_{\varepsilon,\rho}(\vmAss)=\varepsilon\mu_\vmInd(\vmAss)+\sum_{\vmCol\in[q]}\alpha(\vmCol)e_{\vmAssSp_\vmInd,\vmCol 1_{[d]}}(\vmAss)\textrm{, }\alpha(\vmCol)=\rho(\vmCol)-\varepsilon\mu_\vmInd\vert_*(\vmCol)\textrm{,}
\end{align*}
for $\vmAss\in\vmAssSp_\vmInd$. Notice that $\mu_{\varepsilon,\rho}\vert_*=\rho$, $\varepsilon+\sum_\vmCol\alpha(\vmCol)=1$, and that $\alpha\ge 0$ for $\varepsilon$ sufficiently small since $\rho\in\cP^\circ([q])$, which gives $\mu_{\varepsilon,\rho}\in\cM_{\vmInd,\rho}\cap\cP^\circ(\vmAssSp_\vmInd)$, meaning that $\mu_{\varepsilon,\rho}$ is in the relative interior of $\cM_{\vmInd,\rho}$.
For any $\nu\in\cM_{\vmInd,\rho}\setminus\cP^\circ(\vmAssSp_\vmInd)$ the derivative of $f_{\vmInd,\rho}$ at $\nu$ in the direction $(\mu_{\varepsilon,\rho}-\nu)\in\cV_\vmInd$ is $-\infty$ by the properties of the relative entropy, hence any minimizer of $f_{\vmInd,\rho}$ has to be in $\cM_{\vmInd,\rho}\cap\cP^\circ(\vmAssSp_\vmInd)$.
On the other hand, the minimizer $\mu_{\min,\rho}$ exists since $\cM_{\vmInd,\rho}$ is compact and is unique since $f_{\vmInd,\rho}$ is strictly convex (with convex domain). Hence, let $\cM_{\min}=\{\mu_{\min,\rho}:\rho\in\cP^\circ([q])\}\subseteq\cP^\circ(\vmAssSp_\vmInd)$.
Notice that for any $\nu\in\cM_{\min}$ and $\vmAss^*\in\vmAssSp_\vmInd\setminus\{\vmCol 1_{[d]}:\vmCol\in[q]\}$ we know that $\nu$ is a stationary point of $f_{\vmInd,\nu\vert_*}$ (since $\vmAss^*$ exists), hence by evaluating the first derivative of $f_{\vmInd,\nu\vert_*}$ at $\nu$ in the direction $v=de_{\vmAssSp_\vmInd,\vmAss^*}-\sum_\vmCol|\vmAss^{*-1}(\vmCol)|e_{\vmAssSp_\vmInd,\vmCol 1_{[d]}}\in\cV_\vmInd$ we obtain
\begin{align*}
\sum_\vmAss\left(\ln\left(\frac{\nu(\vmAss)}{\mu_\vmInd(\vmAss)}\right)+1\right)v(\vmAss)=0\textrm{.}
\end{align*}
Rearranging yields $\nu(\vmAss^*)=\mu_\vmInd(\vmAss^*)\prod_\vmCol w(\vmCol)^{|\vmAss^{*-1}(\vmCol)|}=\mu_\vmInd(\vmAss^*)\prod_{h\in[d]}w(\vmAss^*_h)$ with
\begin{align*}
w(\vmCol)=\left(\frac{\nu(\vmCol 1_{[d]})}{\mu_\vmInd(\vmCol 1_{[d]})}\right)^{1/d}\in\mathbb R_{>0}\textrm{, }\vmCol\in[q]\textrm{.}
\end{align*}
Further, this equation trivially holds for any choice of $\nu$ and $\vmCol\in[q]$ with $\vmAss^*=\vmCol 1_{[d]}$.
Hence with $\vmPrior\in\cP^\circ([q])$ given by
$\vmPrior\propto w$, i.e.~$\vmPrior=Z^{-1}w$ with $Z=\sum_\vmCol w(\vmCol)$, a normalization argument applied to $\nu\in\cM_{\min}$ shows that $\nu=\mu_{\vmPrior,\vmInd}$, so $\cM_{\min}\subseteq\cM_\vmInd$.
Conversely, for any $\nu=\mu_{\vmPrior,\vmInd}\in\cM_\vmInd$, $\vmPrior\in\cP^\circ([q])$, we have $\nu\in\cM_{\vmInd,\nu\vert_*}$.
If $|\cM_{\vmInd,\nu\vert_*}|=1$, then $\nu$ is the unique minimizer and hence $\nu\in\cM_{\min}$, otherwise $\nu$ is in the relative interior of $\cM_{\vmInd,\nu\vert_*}$ and evaluating the first derivatives of $f_{\vmInd,\nu\vert_*}$ at $\nu$ in any direction $v\in\cV_\vmInd$ yields
\begin{align*}
\sum_\vmAss\left(\ln\left(\frac{\nu(\vmAss)}{\mu_\vmInd(\vmAss)}\right)+1\right)v(\vmAss)
=\sum_\vmAss\left(\sum_\vmCol|\vmAss^{-1}(\vmCol)|\ln(\vmPrior(\vmCol))-\ln(Z_{\vmPrior,\vmInd})\right)v(\vmAss)
=0\textrm{.}
\end{align*}
This shows that $\cM_{\min}=\cM_\vmInd$ and hence $\iota_{2,\vmInd}$ is bijective with inverse $\rho\mapsto\mu_{\min,\rho}$, which completes the proof.
\subsection{Two point masses}\label{pp_twopoint}
Assume that $\vmIndP$ is supported on two indices $\vmInd_1$, $\vmInd_2\in\vmIndSp$.
We let $d_1=d_{\vmInd_1}$, $d_2=d_{\vmInd_2}$ for transparency and use analogous shorthands for all related quantities throughout this section.
Further, we will continue to use the atoms introduced in Section \ref{pp_onepoint}.
For given $\rho\in\cP^\circ([q])$ let $\vmPrior_1=\iota_1^{-1}(\rho)$, $\vmPrior_2=\iota_2^{-1}(\rho)\in\cP^\circ([q])$ and notice that $\mu_\rho=(\mu_{\rho,1},\mu_{\rho,2})\in\cM_{\vmIndP,\rho}$ with $\mu_{\rho,1}=\mu_{\vmPrior_1,\vmInd_1}\in\cP^\circ(\vmAssSp_1)$ and $\mu_{\rho,2}=\mu_{\vmPrior_2,\vmInd_2}\in\cP^\circ(\vmAssSp_2)$.
Further, we have $\cM_{\vmIndP,\rho}=\mu_\rho+\cV_{\vmIndP,\rho}$ with
\begin{align*}
\cV_{\vmIndP,\rho}=\{v\in\cV_{\vmIndP}:\mu_\rho+v\ge 0\}\textrm{, }
\cV_{\vmIndP}=\left\{v\in\mathbb R^{\vmAssSp_1}\times\mathbb R^{\vmAssSp_2}:\sum_{\vmAss\in\vmAssSp_1}v_1(\vmAss)=\sum_{\vmAss\in\vmAssSp_2}v_2(\vmAss)=0\textrm{, }\rho_{\vmIndP,v}=0_{[q]}\right\}\textrm{.}
\end{align*}
Hence, the set $\cM_{\vmIndP,\rho}$ is a polytope.
As in Section \ref{pp_onepoint}, notice that $\sum_\vmAss v_1(\vmAss)=0$ implies $\sum_\vmCol v_1\vert_*(\vmCol)=0$, so using $\sum_\vmCol\rho_{\vmIndP,v}(\vmCol)=0$ we obtain $\sum_\vmCol v_2\vert_*(\vmCol)=0$ which suggests $\sum_\vmAss v_2(\vmAss)=0$ and further yields
\begin{align*}
\cV_{\vmIndP}=\left\{v\in\mathbb R^{\vmAssSp_1}\times\mathbb R^{\vmAssSp_2}:\sum_{\vmAss\in\vmAssSp_1}v_1(\vmAss)=0\textrm{, }\rho_{\vmIndP,v}=0_{[q]}\right\}\textrm{.}
\end{align*}
The linear map $W$ whose kernel is $\cV_{\vmIndP}$ is given by
\begin{align*}
W=\begin{pmatrix}
1_{\vmAssSp_1}^{\matTr} & 0_{\vmAssSp_2}^{\matTr}\\
\vmIndP(\vmInd_1)W_1 & \vmIndP(\vmInd_2)W_2
\end{pmatrix}\textrm{,}
\end{align*}
using $\matTr$ to denote the transpose and where $W_1$, $W_2$ are the matrices from Section \ref{pp_onepoint} corresponding to $v_1\vert_*$ and $v_2\vert_*$. To see that $W$ is surjective fix $w\in\mathbb R^{[q]_0}$, let $v_1\in\mathbb R^{\vmAssSp_1}$ be any choice with $\sum_\vmAss v_1(\vmAss)=w(0)$ and use surjectivity of $\vmIndP(\vmInd_2)W_2$ to determine a preimage $v_2$ of $w_{[q]}-\vmIndP(\vmInd_1)v_1\vert_*$.
Hence, the $(|\vmAssSp_1|+|\vmAssSp_2|-q-1)$ dimensional kernel $\cV_{\vmIndP}$ of $W$ is never trivial since $q>1$. As in Section \ref{pp_onepoint} for any boundary point $\nu\in\cM_{\vmIndP,\rho}\setminus(\cP^\circ(\vmAssSp_1)\times\cP^\circ(\vmAssSp_2))$ the derivative of $f_{\vmIndP,\rho}$ at $\nu$ in the direction $\mu_\rho-\nu$ is $-\infty$, hence we have $\mu_{\min,\rho}\in\cM_{\vmIndP,\rho}\cap(\cP^\circ(\vmAssSp_1)\times\cP^\circ(\vmAssSp_2))$ for the unique minimizer $\mu_{\min,\rho}$ of the strictly convex map $f_{\vmIndP,\rho}$ (with convex and compact domain $\cM_{\vmIndP,\rho}$).
Further, since we have at least one degree of freedom, the point $\mu_{\min,\rho}$ is a stationary point of $f_{\vmIndP,\rho}$ and in particular the first derivatives of $f_{\vmIndP,\rho}$ at $\mu_{\min,\rho}$ in the directions $v\in\cV_{\vmIndP}$ vanish.
Now, with $\mu_{\min,\rho}=(\mu_{\min,\rho,1},\mu_{\min,\rho,2})$ minimizing $f_{\vmIndP,\rho}$ the component $\mu_{\min,\rho,1}$ obviously needs to be the unique minimizer of $f_{\vmInd_1,\mu_{\min,\rho,1}\vert_*}$ and $\mu_{\min,\rho,2}$ the unique minimizer of $f_{\vmInd_2,\mu_{\min,\rho,2}\vert_*}$.
Since $\mu_{\min,\rho}$ is in the relative interior of $\cM_{\vmIndP,\rho}$ we know that $\mu_{\min,\rho,1}\vert_*$, $\mu_{\min,\rho,2}\vert_*\in\cP^\circ([q])$ and can hence use Section \ref{pp_onepoint} to obtain $\vmPrior_1$, $\vmPrior_2\in\cP^\circ([q])$ with $\mu_{\min,\rho,1}=\mu_{\vmPrior_1,\vmInd_1}$ and $\mu_{\min,\rho,2}=\mu_{\vmPrior_2,\vmInd_2}$.
Now, fix $\vmCol\in[q]^2$ with $\vmCol_1\neq \vmCol_2$ and let $v\in\cV_{\vmIndP}$ be given by
\begin{align*}
v_1(\vmCol_11_{[d_1]})=d_2\vmIndP(\vmInd_2)\textrm{, }
v_1(\vmCol_21_{[d_1]})=-d_2\vmIndP(\vmInd_2)\textrm{, }
v_2(\vmCol_11_{[d_2]})=-d_1\vmIndP(\vmInd_1)\textrm{, }
v_2(\vmCol_21_{[d_2]})=d_1\vmIndP(\vmInd_1)\textrm{, }
\end{align*}
and $v_1(\vmAss)=0$, $v_2(\vmAss)=0$ otherwise. The derivative of $f_{\vmIndP,\rho}$ at $\nu=\mu_{\min,\rho}$ in the direction $v$ then yields
\begin{align*}
0&=\vmIndP(\vmInd_1)\sum_{j}\ln\left(\frac{\nu_1(\vmCol_j1_{[d_1]})}{\mu_1(\vmCol_j1_{[d_1]})}\right)v_1(\vmCol_j1_{[d_1]})
+\vmIndP(\vmInd_2)\sum_{j}\ln\left(\frac{\nu_2(\vmCol_j1_{[d_2]})}{\mu_2(\vmCol_j1_{[d_2]})}\right)v_2(\vmCol_j1_{[d_2]})\textrm{.}
\end{align*}
Rearranging gives $\vmPrior_1(\vmCol_1)/\vmPrior_1(\vmCol_2)=\vmPrior_2(\vmCol_1)/\vmPrior_2(\vmCol_2)$. Since this result holds for all $\vmCol\in[q]^2$ a normalization argument suggests that $\vmPrior_1=\vmPrior_2$, which shows that $\iota_{\vmIndP}$ is surjective.
Further, for fixed $\vmPrior\in\cP^\circ([q])$ we can evaluate the directional derivatives of $f_{\vmIndP,\iota_{\vmIndP}(\vmPrior)}$ at $\mu_\vmPrior=(\mu_{\vmPrior,\vmInd_1},\mu_{\vmPrior,\vmInd_2})$ directly to see that $\mu_\vmPrior$ is indeed a stationary point.
This establishes a one-to-one correspondence between $\mu_\vmPrior$ and $\iota_{\vmIndP}(\vmPrior)$, but since we have seen that $\vmPrior$ can be uniquely reconstructed from any of the $\mu_{\vmPrior,1}$, $\mu_{\vmPrior,2}$ this completes the proof.
\subsection{Finite supports}\label{pp_finite}
The arguments in Section \ref{pp_twopoint} directly extend to the case where $\vmIndP$ has finite support.
In particular $f_{\vmIndP,\rho}$ is strictly convex with convex and compact domain, which establishes the existence of a unique minimizer $\mu_{\min,\rho}$ for any $\rho\in\cP^\circ([q])$.
Analogous arguments to the ones above show that $\mu_{\min,\rho}\in\prod_\vmInd\cP^\circ(\vmAssSp_\vmInd)$ with $\vmInd$ in the support of $\vmIndP$.
From this we obtain $\vmPrior_\vmInd\in\cP^\circ([q])$ with $\mu_{\min,\rho,\vmInd}=\mu_{\vmPrior_\vmInd,\vmInd}$ since the components $\mu_{\min,\rho,\vmInd}$ also have to be minimizers for $\mu_{\min,\rho,\vmInd}\vert_*$ as discussed in Section \ref{pp_twopoint}.
But now, for any two distinct $\vmInd_1$, $\vmInd_2$ with $\vmIndP_{12}$ denoting the law of $\vmRInd_{\vmIndP}|\vmRInd_{\vmIndP}\in\{\vmInd_1,\vmInd_2\}$ and $\vmIndP^{\mathrm{c}}_{12}$ denoting the law of $\vmRInd_{\vmIndP}|\vmRInd_{\vmIndP}\not\in\{\vmInd_1,\vmInd_2\}$, further
$\mu_{12}=(\mu_{\vmPrior_1,\vmInd_1},\mu_{\vmPrior_2,\vmInd_2})$, $\rho_{12}=\rho_{\vmIndP_{12},\mu_{12}}$, $\mu^{\mathrm{c}}_{12}=(\mu_{\vmPrior_\vmInd,\vmInd})_{\vmInd\not\in\{\vmInd_1,\vmInd_2\}}$ and $\rho^{\mathrm{c}}_{12}=\rho_{\vmIndP^{\mathrm{c}}_{12},\mu^{\mathrm{c}}_{12}}$ we obtain $\rho=\pr[\vmRInd_{\hat \vmIndP}\in\{\vmInd_1,\vmInd_2\}]\rho_{12}+\pr[\vmRInd_{\hat \vmIndP}\not\in\{\vmInd_1,\vmInd_2\}]\rho^{\mathrm{c}}_{12}$ and further
\begin{align*}
f_{\vmIndP,\rho}((\mu_{\vmPrior_\vmInd,\vmInd})_\vmInd)=\pr[\vmRInd_{\vmIndP}\in\{\vmInd_1,\vmInd_2\}]f_{\vmIndP_{12},\rho_{12}}(\mu_{12})+\pr[\vmRInd_{\vmIndP}\not\in\{\vmInd_1,\vmInd_2\}]f_{\vmIndP^{\mathrm{c}}_{12},\rho^{\mathrm{c}}_{12}}(\mu^{\mathrm{c}}_{12})\textrm{.}
\end{align*}
But then we necessarily have $\vmPrior_{\vmInd_1}=\vmPrior_{\vmInd_2}$ since otherwise we could use Section \ref{pp_twopoint} to obtain the unique minimizer of $f_{\vmIndP_{12},\rho_{12}}$ and use it to replace $\mu_{12}$, thereby effectively decreasing $f_{\vmIndP,\rho}$ without changing $\rho_{12}$ and hence also $\rho$.
This shows that $\mu_{\min,\rho}=(\mu_{\vmPrior,\vmInd})_\vmInd$ for some $\vmPrior\in\cP^\circ([q])$
and thereby $\iota_{\vmIndP}$ is surjective.
To see injectivity we follow Section \ref{pp_twopoint} and show that $(\mu_{\vmPrior,\vmInd})_\vmInd$ is a stationary point of $f_{\vmIndP,\iota_{\vmIndP}(\vmPrior)}$ by evaluating the directional derivatives of $f_{\vmIndP,\iota_{\vmIndP}(\vmPrior)}$ at $(\mu_{\vmPrior,\vmInd})_\vmInd$, and thereby is the unique minimizer.
\subsection{Infinite supports}\label{pp_infinite}
Fix $\vmIndP$ with countably infinite support and $\rho\in\cP^\circ([q])$.
Without loss of generality we may assume $\vmIndSp=\mathbb Z_{>0}$.
With $\varepsilon_n=\pr[\vmRInd_{\hat \vmIndP}\not\in[n]]$, $\vmIndP_{n}^{\mathrm{c}}$ denoting the law of $\vmRInd_{\vmIndP}|\vmRInd_{\vmIndP}\not\in[n]$, further $\mu^{\mathrm{c}}_n=(\mu_\vmInd)_{\vmInd\not\in[n]}$ and $\rho^{\mathrm{c}}_n=\rho_{\vmIndP_{n}^{\mathrm{c}},\mu^{\mathrm{c}}_n}$ we have
\begin{align*}
\rho_n=\frac{1}{1-\varepsilon_n}\rho-\frac{\varepsilon_n}{1-\varepsilon_n}\rho^{\mathrm{c}}_n\in\cP^\circ([q])
\end{align*}
for $n$ sufficiently large since $\rho_n\rightarrow\rho$ for $n\rightarrow\infty$.
For $\vmInd\in[n]$ let $\vmPrior_\vmInd=\iota^{-1}_\vmInd(\rho_n)$ and $\vmPrior_\vmInd=u_{[q]}$ otherwise, where $u_{[q]}=q^{-1}1_{[q]}\in\cP^\circ([q])$ denotes the uniform distribution over $[q]$.
Then with $\nu=(\mu_{\vmPrior_\vmInd,\vmInd})_\vmInd$ we have $\rho=\rho_{\vmIndP,\nu}$ and further
\begin{align*}
f_{\vmIndP,\rho}(\nu)
&=\Erw\left[\vecone\{\vmRInd_{\vmIndP}\in[n]\}\KL{\nu_{\vmRInd_{\vmIndP}}}{\mu_{\vmRInd_{\vmIndP}}}\right]\in\mathbb R_{>0}\textrm{.}
\end{align*}
This shows that $\cM_{\vmIndP,\rho}^\circ=\{\nu\in\cM_{\vmIndP,\rho}:f_{\vmIndP,\rho}(\nu)<\infty\}$ is non-empty. Since $f_{\vmIndP,\rho}$ is convex $\cM^\circ_{\vmIndP,\rho}$ is convex and $f_{\vmIndP,\rho}$ is strictly convex on $\cM^\circ_{\vmIndP,\rho}$ which shows uniqueness of the minimizer $\mu_{\min,\rho}$ given its existence.

With the discussion above and analogous to Section \ref{pp_finite} we consider $\cM_{\vmIndP,\rho}=\nu^*+\cV_{\vmIndP,\rho}$ with $\nu^*\in\cM_{\vmIndP,\rho}^\circ$, $\cV_{\vmIndP,\rho}=\{v\in\cV_{\vmIndP}:\nu^*+v\ge 0\}$ and
\begin{align*}
\cV_{\vmIndP}=\left\{v\in\prod_\vmInd\mathbb R^{\vmAssSp_\vmInd}:\forall \vmInd\,\sum_\vmAss v_\vmInd(\vmAss)=0\textrm{, }\rho_{\vmIndP,v}=0_{[q]}\right\}
\end{align*}
as (infinite dimensional) polytope. Notice that $v\mapsto\rho_{\vmIndP,v}$ is continuous with respect to the product topology since it is continuous for the restriction to finite domains and we have uniform tail bounds since $v_\vmInd\vert_*$ is uniformly bounded. This shows that $\cM_{\vmIndP,\rho}\subseteq\cM_{\vmIndP}$ is closed and hence compact (and metrizable) since $\cM_{\vmIndP}$ is.
Now, fix a minimizing sequence $\nu_n\in\cM_{\vmIndP,\rho}$, $n\in\mathbb Z_{>0}$, of $f_{\vmIndP,\rho}$.
Using sequential compactness of $\cM_{\vmIndP}$ we find a converging subsequence of $(\nu_n)_n$ with limit $\nu\in\cM_{\vmIndP,\rho}$ and restrict to this subsequence without loss of generality.
This shows that $f_{\vmIndP,\rho}(\nu)\ge\inf_{\nu'} f_{\vmIndP,\rho}(\nu')$ is well-defined.
Now, assume that $f_{\vmIndP,\rho}(\nu)>\inf_{\nu'}f_{\vmIndP,\rho}(\nu')$.
Then there exists $n^*$ such that the contribution to $f_{\vmIndP,\rho}(\nu)$ for $\vmRInd_{\vmIndP}\in[n^*]$ is greater than $\inf_{\nu'}f_{\vmIndP,\rho}(\nu')$.
But the contributions to $f_{\vmIndP,\rho}(\nu_n)$ for $\vmRInd_{\vmIndP}\in[n^*]$ converge to the contribution to $f_{\vmIndP,\rho}(\nu)$ for $\vmRInd_{\vmIndP}\in[n^*]$ due to continuity, hence for all sufficiently large $n$ these contributions are bounded away from $\inf_{\nu'}f_{\vmIndP,\rho}(\nu')$ and thereby $f_{\vmIndP,\rho}(\nu_n)$ is bounded away from $\inf_{\nu'}f_{\vmIndP,\rho}(\nu')$ since the tails are non-negative, which is a contradiction to $\nu_n$ being a minimizing sequence.
Hence $\nu$ is a minimizer of $f_{\vmIndP,\rho}$, which establishes that $\nu=\mu_{\min,\rho}\in\cM^\circ_{\vmIndP,\rho}$ is the unique minimizer.
Since $\rho\in\cP^\circ([q])$ is fully supported we know that the colour frequencies of $\mu_{\min,\rho}$ conditional to $\vmRInd_{\vmIndP}\in[n]$ are fully supported for $n$ sufficiently large.
But then the decomposition of $\rho$ and $f_{\vmIndP,\rho}$ with respect to $[n]$ and analogous to Section \ref{pp_finite} allows to use the finite support results for $[n]$ to obtain $\vmPrior\in\cP^\circ([q])$ such that $\mu_{\min,\rho,\vmInd}=\mu_{\vmPrior,\vmInd}$ for $\vmInd\in[n]$, due to local optimality of the minimizer $\mu_{\min,\rho}$ as discussed before. Since this argument holds for any $n$ sufficiently large, we obtain $\vmPrior_n\in\cP^\circ([q])$ for any such choice and further $\vmPrior_n=\vmPrior$ since $\mu_{\vmPrior_n,\vmInd}=\mu_{\min,\rho,\vmInd}=\mu_{\vmPrior_{n'},\vmInd}$ for any $\vmInd\in[n]\subseteq[n']$ and $n\le n'$. This shows that $\mu_{\min,\rho}=(\mu_{\vmPrior,\vmInd})_\vmInd$ and further that $\iota_{\vmIndP}$ is surjective.

To see injectivity fix $\vmPrior\in\cP^\circ([q])$ and let $\rho=\iota_{\vmIndP}(\vmPrior)\in\cP^\circ([q])$, $\nu^*=(\mu_{\vmPrior,\vmInd})_\vmInd$.
Notice that
\begin{align*}
f_{\vmIndP,\rho}(\nu^*)
&=\Erw\left[\KL{\nu^*_{\vmRInd_{\vmIndP}}}{\mu_{\vmRInd_{\vmIndP}}}\right]
=\Erw\left[-\ln\left(Z_{\vmPrior,\vmRInd_{\vmIndP}}\right)-\bm d_{\vmIndP}\CE{\nu^*_{\vmRInd_{\vmIndP}}\vert_*}{\vmPrior}\right]\\
&\le\Erw\left[-\ln\left(Z_{\vmPrior,\vmRInd_{\vmIndP}}\right)\right]
\le-\ln\left(\min_\vmCol\vmPrior(\vmCol)\right)\Erw\left[\bm d_{\vmIndP}\right]\in\mathbb R_{>0}\textrm{,}
\end{align*}
i.e.~$\nu^*\in\cM^\circ_{\vmIndP,\rho}$.
With $\cM_{\vmIndP,\rho}=\nu^*+\cV_{\vmIndP,\rho}$ as before and for any $v\in\cV_{\vmIndP}$ we have
\begin{align*}
\Erw\left[\sum_\vmAss\left(\ln\left(\frac{\nu^*_{\vmRInd_{\vmIndP}}(\vmAss)}{\mu_{\vmRInd_{\vmIndP}}(\vmAss)}\right)+1\right)v_{\vmRInd_{\vmIndP}}(\vmAss)\right]
&=\Erw\left[\sum_{\vmAss,h}v_{\vmRInd_{\vmIndP}}(\vmAss)\ln(\vmPrior(\vmAss_h))\right]
=\Erw\left[\sum_{\vmCol\in[q]}\ln(\vmPrior(\vmCol))\bm d_{\vmIndP}v_{\vmRInd_{\vmIndP}}\vert_*(\vmCol)\right]\\
&=\sum_{\vmCol\in[q]}\ln(\vmPrior(\vmCol))\Erw[\bm d_{\vmIndP}]\rho_{\vmIndP,v}(\vmCol)=0\textrm{.}
\end{align*}
For any $\nu\in\cM_{\vmIndP,\rho}$ and $\vmInd$ in the support of $\vmIndP$ with $\nu_\vmInd\neq\nu^*_\vmInd$ the strict convexity of the relative entropy yields that
\begin{align*}
\KL{\nu_\vmInd}{\mu_\vmInd}>\KL{\nu^*_\vmInd}{\mu_\vmInd}+\sum_\vmAss\left(\ln\left(\frac{\nu^*_\vmInd(\vmAss)}{\mu_\vmInd(\vmAss)}\right)+1\right)(\nu_\vmInd(\vmAss)-\nu^*_\vmInd(\vmAss))\textrm{,}
\end{align*}
i.e.~the relative entropy is strictly above its tangent at $\nu^*$. Combining these arguments gives $f_{\vmIndP,\rho}(\nu)>f_{\vmIndP,\rho}(\nu^*)$ for any $\nu\in\cM_{\vmIndP,\rho}\setminus\{\nu^*\}$, i.e.~$\nu^*=\mu_{\min,\rho}$ which completes the proof (since $\vmPrior$ can be reconstructed from $\nu^*$).
\subsection{Extension to the boundary}\label{pp_boundary_proof}
Let $\vmColSp\subseteq[q]$ be non-empty.
Notice that $\iota_{\vmIndP}(\cP^\circ(\vmColSp))\subseteq\cP^\circ(\vmColSp)$ since we have $\mu_\vmInd(\vmCol 1_{[d_\vmInd]})>0$ for all $\vmCol\in[q]$.
This shows that the restriction $\iota_{\vmIndP}:\cP^\circ(\vmColSp)\rightarrow\cP^\circ(\vmColSp)$ is a bijection for $|\vmColSp|=1$ since then both the domain and the image have size $1$.
Otherwise, we use the results from the preceeding sections with $q_{\vmColSp}=|\vmColSp|$, $\mu_\vmInd$ replaced by $\mu_{\vmColSp,\vmInd}$ (which still satisfies $\mu_{\vmColSp,\vmInd}(\vmCol 1_{[d_\vmInd]})>0$ for $\vmCol\in\vmColSp$) and $P$ to see that the corresponding map $\iota_{\vmColSp,\vmIndP}:\cP^\circ(\vmColSp)\rightarrow\cP^\circ(\vmColSp)$ is a bijection and for any $\rho\in\cP^\circ(\vmColSp)$ with $\vmPrior=\iota^{-1}_{\vmColSp,\vmIndP}(\rho)$ the assignment distribution $(\mu_{\vmPrior,\vmInd})_\vmInd$ is the unique minimizer of the corresponding map $f_{\vmColSp,\vmIndP,\rho}$.
However, for any $\vmPrior\in\cP^\circ(\vmColSp)$ we have $\mu_{\vmPrior,\vmInd}=\mu_{\vmColSp,\vmPrior,\vmInd}$ and thereby $\iota_{\vmColSp,\vmIndP}=\iota_{\vmIndP}$, $f_{\vmColSp,\vmIndP,\rho}=f_{\vmIndP,\rho}$ on $\cP^\circ(\vmColSp)$ (up to relabeling colours).
This shows that $\iota_{\vmIndP}:\cP([q])\rightarrow\cP([q])$ is a bijection and $(\mu_{\vmPrior,\vmInd})_\vmInd$ is the unique minimizer of $f_{\vmIndP,\iota_{\vmIndP}(\vmPrior)}$.
Finally, notice that the choice of $\mu_{\vmColSp,\vmInd}$ over $\mu_\vmInd$ in the definition of $f_{\vmIndP,\rho}$ for $\rho\in\cP^\circ(\vmColSp)$ is only relevant for the case where the support of $\vmIndP$ is infinite since these two versions of $f_{\vmIndP,\rho}$ only differ by an additive constant whenever the alternative definition of $f_{\vmIndP,\rho}$ is finite.
\subsection{Including zero}\label{pp_zero_proof}
Assume that $\vmIndP\in\vmIndPSp$ is such that $\pr[\bm d_{\vmIndP}=0]>0$.
By the definition of $\vmIndPSp$ we have $\pr[\bm d_{\vmIndP}=0]<1$. Further, we have $\cP([q]^0)=\{\mu_0\}$, i.e.~the one-point mass $\mu_0$ on the empty assignment is the only possible choice for $d_\vmInd=0$.
Now, let $\vmIndP^\circ$ be the law of $\vmRInd_{\vmIndP}|\bm d_{\vmIndP}>0$ and notice that $\hat \vmIndP^\circ=\hat \vmIndP$ which immediately gives $\iota_{\vmIndP}=\iota_{\vmIndP^\circ}$.
Further, since $\cP([q]^0)$ carries only one element the contributions to $f_{\vmIndP,\rho}$, $\rho\in\cP([q])$, for $d_\vmInd=0$ are $0$ and further $f_{\vmIndP,\rho}=\pr[\bm d_{\vmIndP}>0]f_{\vmIndP^\circ,\rho}$ since $f_{\vmIndP,\rho}$ only formally depends on the coordinates $\vmInd$ with $d_\vmInd=0$.
Thereby the results of the preceeding sections for $\vmIndP^\circ$ directly translate to $\vmIndP$.
\subsection{Continuity}\label{pp_cont}
First, we discuss continuity for fixed $\vmIndP\in\vmIndPSp$.
For this purpose we consider the decomposition $\iota_{\vmIndP}=\iota_3\circ\iota_2\circ\iota_1$ with
\begin{align*}
\iota_1&:\cP([q])\rightarrow\prod_{\vmInd\in\vmIndSp}\cP(\vmAssSp_\vmInd)\textrm{, }\vmPrior\mapsto(\mu_{\vmPrior,\vmInd})_{\vmInd\in\vmIndSp}\textrm{,}\\
\iota_2&:\prod_{\vmInd\in\vmIndSp}\cP(\vmAssSp_\vmInd)\rightarrow\cP([q])^{\vmIndSp}\textrm{, }\nu\mapsto(\nu_\vmInd\vert_*)_{\vmInd\in\vmIndSp}\textrm{,}\\
\iota_3&:\cP([q])^{\vmIndSp}\rightarrow\cP([q])\textrm{, }\rho\mapsto\Erw\left[\rho_{{\vmRInd}_{\hat \vmIndP}}\right]\textrm{.}
\end{align*}
We consider both $\im(\iota_1)$ and $\im(\iota_2)$ equipped with the inherited product topology, which is metrizable since $\vmIndSp$ is countable and therefore all topological spaces in question are compact and metrizable.
Thanks to the properties of the product topology both $\iota_1$ and $\iota_2$ are continuous, i.e.~since $\vmPrior\mapsto\mu_{\vmPrior,\vmInd}$, $\nu\mapsto\nu\vert_*$ and the projections are continuous.
This suggests that the restrictions $\iota_1:\cP([q])\rightarrow\im(\iota_1)$ and $\iota_2:\im(\iota_1)\rightarrow\iota_2(\im(\iota_1))$ are homeomorphisms, since they are continuous bijections of compact metrizable spaces and where Section \ref{pp_onepoint} is already sufficient to obtain bijectivity (with Section \ref{pp_boundary_proof} and Section \ref{pp_zero_proof}).
Continuity of $\iota_3$ was discussed in Section \ref{pp_infinite}, which concludes the proof that $\iota_{\vmIndP}$ is continuous.
But since $\iota_{\vmIndP}$ is then a continuous bijection of compact metric spaces it is a homeomorphism.

Next, we show that $\iota:\vmIndPSp\times\cP([q])\rightarrow\cP([q])$ is continuous.
For this purpose let $\vmIndP$, $\vmIndP_{n}\in\vmIndPSp$ and $\vmPrior$, $\vmPrior_n\in\cP([q])$, $n\in\mathbb Z_{>0}$, with $(\vmIndP_{n},\vmPrior_n)\rightarrow(\vmIndP,\vmPrior)$ for $n\rightarrow\infty$.
Further let $\rho_n=\iota(\vmIndP_{n},\vmPrior_n)$ and $\rho=\iota(\vmIndP,\vmPrior)$.
Standard arguments show that $\vmIndP_{n}\rightarrow \vmIndP$ with respect to the metric $\vmIndPD$ yields $\Erw[\bm d_{\vmIndP_{n}}]\rightarrow\Erw[\bm d_{\vmIndP}]$ and further $\hat\vmIndP_{n}\rightarrow\hat \vmIndP$ in $\|\cdot\|_1$. For any $\varepsilon\in(0,1)$ and $n$ sufficiently large such that $\pr[\vmRInd_{\hat \vmIndP}>n]<\varepsilon$ this gives
\begin{align*}
\|\rho_n-\rho\|_1
&\le\Erw\left[\left\|\mu_{\vmPrior_n,\vmRInd_{\hat \vmIndP}}\vert_*-\mu_{\vmPrior,\vmRInd_{\hat \vmIndP}}\vert_*\right\|_1\right]+
q\|\hat\vmIndP_{n}-\hat \vmIndP\|_1\\
&\le\Erw\left[\left\|\mu_{\vmPrior_n,\vmRInd_{\hat \vmIndP}}\vert_*-\mu_{\vmPrior,\vmRInd_{\hat \vmIndP}}\vert_*\right\|_1
\vecone\{\vmRInd_{\hat \vmIndP}\in[n]\}\right]+2\varepsilon+q\|\hat\vmIndP_{n}-\hat \vmIndP\|_1\rightarrow 2\varepsilon
\end{align*}
which shows that $\iota$ is continuous.
For the reverse direction fix a sequence $(\vmIndP_{n},\rho_n)\rightarrow(\vmIndP,\rho)$ using $\vmPrior_n=\iota^{-1}(\vmIndP_{n},\rho_n)$ and $\vmPrior=\iota^{-1}(\vmIndP,\rho)$.
Let $\varepsilon\in(0,1)$, $\cR=\iota_{\vmIndP}(\cB_{\varepsilon}(\vmPrior))$, and using that $\iota_{\vmIndP}$ is a homeomorphism let $\delta\in(0,1)$ be sufficiently small such that $\cB_\delta(\rho)\subseteq\cR$. For $n$ sufficiently large we have $\|\hat\vmIndP_{n}-\hat \vmIndP\|_1<\delta/(2q)$ and $\rho_n\in\cB_{\delta/2}(\rho)$, so
\begin{align*}
\|\iota_{\vmIndP}(\vmPrior_n)-\rho\|_1
<\frac{1}{2}\delta+q\frac{1}{2q}\delta=\delta\textrm{,}
\end{align*}
thereby $\iota_{\vmIndP}(\vmPrior_n)\in\cB_\delta(\rho)\subseteq\cR$ and hence $\vmPrior_n\in\cB_{\varepsilon}(\vmPrior)$. This completes the proof.
\section{Local Limit Theorem} \label{llt}

This section is mostly self-contained and only depends on the results obtained in Section \ref{vom} as well as Chapter 5 in \cite{bhattacharya2010}.
We start with the setup in Section \ref{vom}, i.e.~we fix a number $q\in\ZZ_{>1}$ of colours and a family $(d_\vmInd,\mu_\vmInd)_{\vmInd\in\vmIndSp}$ satisfying {\bf SPAN} with $\cD\setminus\{0\}\neq\emptyset$.
Following Chapter 5 in \cite{bhattacharya2010} let $\lltLat\subseteq\ZZ^{q-1}$ denote the lattice spanned by $(|\vmAss^{-1}(\vmCol)|)_{\vmCol\in[q-1]}\in\ZZ^{q-1}$ for $\vmAss\in\vmAssSp_\vmInd$ and $\vmInd\in\vmIndSp$, i.e.~the set of all points obtained from (finite) linear combinations with integer coefficients, and notice that $\lltLat$ indeed has full rank due to {\bf SPAN}.
Hence, Theorem 21.1 in \cite{bhattacharya2010} ensures the existence of a lattice basis $b_\vmCol$, $\vmCol\in[q-1]$, i.e.~$\lltLat=\sum_{\vmCol\in[q-1]}\ZZ b_\vmCol$.
Using the proof of this theorem and {\bf SPAN}, we notice that there exists a unique choice of the $b_\vmCol$ such that $(b_\vmCol)_\vmCol$ is lower triangular with positive diagonal $h=(b_{\vmCol,\vmCol})_{\vmCol}\in\ZZ_{>0}^{q-1}$.
The set of boxes $\lltBox_{\lltCFT}=\lltCFT+\prod_{\vmCol\in[q-1]}[-h_\vmCol/2,h_\vmCol/2)$ centered at $\lltCFT\in\lltLat$ is a partition of $\RR^{q-1}$ with $\lltBox_{\lltCFT}\cap\lltLat=\lltCFT$.
Further, since each $b_\vmCol$ is obtained from a finite linear combination of colour frequencies there exists a (not necessarily unique) \emph{finite} subset $\vmIndSp^\circ\subseteq\vmIndSp$ that spans $\lltLat$ in the sense above.

For given $\vmPrior\in\cP([q])$, a number $n\in\ZZ_{>0}$ of vertices, indices $\svmInd=(\vmInd_{\lltVL})_{\lltVL\in[n]}\subseteq\vmIndSp$ and $\lltVL\in[n]$
we use the shorthands $d_\lltVL=d_{\vmInd_\lltVL}$, $\vmAssSp_{\lltVL}=\vmAssSp_{\vmInd_\lltVL}$, further $Z_{\vmPrior,\lltVL}=Z_{\vmPrior,\vmInd_\lltVL}$ and $\mu_{\vmPrior,\lltVL}=\mu_{\vmPrior,\vmInd_\lltVL}$ for brevity, further let $\mu_{\lltVL}=\mu_{u_{[q]},\lltVL}$ and also omit the subscript if $\vmPrior=u_{[q]}$ for derived quantities since this corresponds to no variation.
Let $\lltRVL_n\sim u_{[n]}$ denote the uniformly random vertex, $\vmRInd_{\svmInd}=\vmInd_{\lltRVL_n}$ and $\vd_{\svmInd}=d_{\lltRVL_n}$.
The main results of this section apply to sequences $(\vmIndSp_n)_{n\in\ZZ_{>0}}$ of families
$\vmIndSp_n\subseteq\vmIndSp^n$ satisfying the following assumptions.
\begin{description}
\item[GEN] There exists $\epsGen\in(0,1)$ and a subset $\vmIndSp^\circ\subseteq\vmIndSp$ that spans $\lltLat$ such that
$\pr[\vmRInd_{\svmInd}=\vmInd]\ge\epsGen$ for all $\vmInd\in\vmIndSp^\circ$, $\svmInd\in\vmIndSp_n$ and $n\in\ZZ_{>0}$.
\item[VAR] There exists $\secB\in\RR_{>0}$ such that
$\Erw[\vd_{\svmInd}^2]\le\secB$ for all $\svmInd\in\vmIndSp_n$ and $n\in\ZZ_{>0}$.
\item[SKEW] There exists a sequence $\thiB_n\in\RR_{>1}$, $n\in\ZZ_{>0}$ such that
$\thiB_n=o(\sqrt{n/\log(n)^3})$ and $\Erw[\vd_{\svmInd}^3]\le\thiB_n$ for all $\svmInd\in\vmIndSp_n$ and $n\in\ZZ_{>0}$.
\end{description}
We may assume without loss of generality that $\vmIndSp^\circ$ is minimal and in particular $d_{\vmInd}>0$ for all $\vmInd\in\vmIndSp^\circ$.
Alternatively, we could define $\vmIndSp_n$ to be maximal for given $\epsGen$, $\vmIndSp^\circ$, $\secB$ and $\thiB_n$.
Further, notice that the empty set satisfies all assumptions.

Now, for given $\vmPrior\in\cP([q])$, $n\in\ZZ_{>0}$ and $\svmInd\in\vmIndSp^n$ let $\lltRHEA_{\vmPrior,\svmInd}$ denote a sample from
$\bigotimes_{\lltVL\in[n]}\mu_{\vmPrior,\lltVL}$, i.e.~$\lltRHEA_{\vmPrior,\svmInd}=(\vmRAss_{\vmPrior,\svmInd,i})_{i\in[n]}$ with the components being independent, and let $\lltRCFA_{\vmPrior,\svmInd}$ be the corresponding absolute colour frequencies, i.e.~for $\vmCol\in[q]$ given by
\begin{flalign*}
\lltRCFA_{\vmPrior,\svmInd}(\vmCol)=\sum_{\lltVL\in[n],h\in[d_\lltVL]}\vecone\{\vmRAss_{\vmPrior,\svmInd,\lltVL,h}=\vmCol\}=\sum_{\lltVL\in[n]}|\vmRAss_{\vmPrior,\svmInd,\lltVL}^{-1}(\vmCol)|
\end{flalign*}
and further let $\lltECFA_{\vmPrior,\svmInd}=\Erw[\lltRCFA_{\vmPrior,\svmInd}]$ denote the expectation.
Notice that $\Erw[\vd_{\svmInd}]n$ is the total degree and in particular
$\sum_{\vmCol\in[q]}\lltRCFA_{\vmPrior,\svmInd}(\vmCol)=\Erw[\vd_{\svmInd}]n$.
The first result allows to control the tails of the colour frequencies.
\begin{proposition}\label{llt_tails}
Let $(\vmIndSp_n)_{n\in\ZZ_{>0}}\subseteq\vmIndSp^n$ satisfy {\bf GEN} and {\bf VAR}.
Then there exist constants $c$, $c'\in\RR_{>0}$ such that for all $n\in\ZZ_{>0}$, all $\svmInd\in\vmIndSp$ and all $r\in\RR_{>0}$ we have
\begin{flalign*}
\pr\left[\|\lltRCFA_\svmInd-\lltECFA\|_1\ge \sqrt{\Erw[\vd_{\svmInd}]n}r\right]\le c'\exp(-cr^2)\textrm{.}
\end{flalign*}
\end{proposition}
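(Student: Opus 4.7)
The plan is to prove Proposition \ref{llt_tails} by a straightforward application of Hoeffding's inequality to the independent (but not identically distributed) contributions of the vertices, combined with a union bound over the $q$ coordinates and a uniform lower bound on $\Erw[\vd_\svmInd]$ derived from assumption {\bf GEN}.

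First, I would fix $\vmCol\in[q]$ and write
\begin{align*}
\lltRCFA_\svmInd(\vmCol)-\lltECFA(\vmCol)=\sum_{\lltVL\in[n]}\bigl(|\vmRAss_{\svmInd,\lltVL}^{-1}(\vmCol)|-\Erw|\vmRAss_{\svmInd,\lltVL}^{-1}(\vmCol)|\bigr),
\end{align*}
which is a sum of independent centred random variables, each bounded in absolute value by $d_\lltVL$ because $|\vmRAss_{\svmInd,\lltVL}^{-1}(\vmCol)|\in[0,d_\lltVL]$. Hoeffding's inequality then yields
\begin{align*}
\pr\bigl[|\lltRCFA_\svmInd(\vmCol)-\lltECFA(\vmCol)|\ge t\bigr]\le 2\exp\Bigl(-\tfrac{2t^2}{\sum_{\lltVL\in[n]} d_\lltVL^2}\Bigr)\le 2\exp\Bigl(-\tfrac{2t^2}{\secB\, n}\Bigr),
\end{align*}
where the last step uses {\bf VAR}, namely $\sum_\lltVL d_\lltVL^2=\Erw[\vd_\svmInd^2]\,n\le\secB\, n$.

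Second, I would exploit {\bf GEN} to show $\Erw[\vd_\svmInd]\ge\delta$ for some absolute $\delta>0$. Indeed, with $\vmIndSp^\circ$ spanning the lattice $\lltLat$ (which by {\bf SPAN} has full rank), minimality of $\vmIndSp^\circ$ forces $d_\vmInd>0$ for each $\vmInd\in\vmIndSp^\circ$, so that $\Erw[\vd_\svmInd]\ge\epsGen\min_{\vmInd\in\vmIndSp^\circ}d_\vmInd=:\delta>0$ uniformly in $\svmInd$ and $n$. Substituting $t=q^{-1}\sqrt{\Erw[\vd_\svmInd]n}\,r$ and taking a union bound over $\vmCol\in[q]$ gives
\begin{align*}
\pr\bigl[\|\lltRCFA_\svmInd-\lltECFA\|_1\ge\sqrt{\Erw[\vd_\svmInd]n}\,r\bigr]\le 2q\exp\Bigl(-\tfrac{2\delta r^2}{q^2\secB}\Bigr),
\end{align*}
which is of the promised form with $c=2\delta/(q^2\secB)$ and $c'=2q$.

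There is no real obstacle here: neither assumption {\bf SKEW} nor the lattice structure is needed for this particular statement; they will come in for the actual local central limit theorem that follows. The only subtle point is extracting the uniform lower bound $\Erw[\vd_\svmInd]\ge\delta$, but this is exactly what the combination of {\bf GEN} with $d_\vmInd>0$ for $\vmInd\in\vmIndSp^\circ$ is designed to furnish. If desired, a sharper (Bernstein-type) bound would replace $\secB$ by $\Erw[\vd_\svmInd]$ in the denominator, yielding a purely $r$-dependent exponent without needing the uniform lower bound; but the Hoeffding route above is already sufficient.
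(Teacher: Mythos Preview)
Your proof is correct and essentially identical to the paper's own argument: apply Hoeffding coordinatewise, use {\bf VAR} to bound $\sum_\lltVL d_\lltVL^2\le\secB n$, pass from $\|\cdot\|_1$ to $\|\cdot\|_\infty$ via a union bound over the $q$ colours, and absorb the factor $\Erw[\vd_\svmInd]$ in the numerator using the uniform lower bound $\Erw[\vd_\svmInd]\ge\epsGen$ supplied by {\bf GEN}. The paper arrives at $c'=2q$ and $c=2\epsGen/(q^2\secB)$, which matches your constants once one notes $\min_{\vmInd\in\vmIndSp^\circ}d_\vmInd\ge 1$.
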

For $\svmInd\in\vmIndSp^n$ let $\vmIndP_{\svmInd}\in\cP(\vmIndSp)$ denote the law of $\vmRInd_{\svmInd}$ and notice that $\vd_{\svmInd}=\vd_{\vmIndP_{\svmInd}}$ is consistent.
If $\Erw[\vd_{\svmInd}]$ is positive let $\lltRCFR_{\vmPrior,\svmInd}=\frac{1}{\Erw[\vd_{\svmInd}]n}\lltRCFA_{\vmPrior,\svmInd}\in\cP([q])$ denote the relative colour frequencies for given $\vmPrior\in\cP([q])$.
With Chapter 5 in \cite{bhattacharya2010} (Section 3.5 in \cite{durrett2010}) it is immediate that $\pr[\lltRCFA_{\vmPrior,\svmInd,[q-1]}\in\lltLat]=1$ for any $\vmPrior\in\cP([q])$ and $\svmInd\in\vmIndSp^n$,
where we use the shorthand $\lltRCFA_{\vmPrior,\svmInd,[q-1]}=(\lltRCFA_{\vmPrior,\svmInd,\vmCol})_{\vmCol\in[q-1]}$ here and in the remainder.
Using $\sum_{\vmCol\in[q]}\lltRCFA_{\vmPrior,\svmInd}(\vmCol)=\Erw[\vd_{\svmInd}]n$ we extend the lattice $\lltLat$ to $\lltLatA_{\svmInd}\in\ZZ^q$, hence $\pr[\lltRCFA_{\vmPrior,\svmInd}\in\lltLatA_\svmInd]=1$, scale and truncate it to obtain $\pr[\lltRCFR_{\vmPrior,\svmInd}\in\lltLatR_\svmInd]=1$ with $\lltLatR_\svmInd=\frac{1}{\Erw[\vd_{\svmInd}]n}\lltLatA_\svmInd\cap\cP([q])$ and conclude with $\lltLatP_\svmInd=\iota^{-1}_\svmInd(\lltLatR_\svmInd)$ using the shorthand $\iota_\svmInd=\iota_{\vmIndP_\svmInd}$ for the homeomorphism introduced in Section \ref{vom} (notice that indeed $\vmIndP_\svmInd\in\vmIndPSp$).
Finally, let $\Sigma_{\svmInd,\vmPrior}=\frac{1}{\Erw[\vd_{\svmInd}]n}\Cov(\lltRCFA_{\vmPrior,\svmInd,[q-1]})$.
The following theorem determines the local limits in the large deviation regime.
\begin{theorem}\label{llt_global}
Fix a compact set $\cP^*\subseteq\cP^\circ([q])$ and a sequence $(\vmIndSp_n)_n\subseteq\vmIndSp^n$ satisfying {\bf SPAN}, {\bf GEN}, {\bf VAR} and {\bf SKEW}.
Then uniformly for all $\svmInd\in\vmIndSp_n$ and $\vmPrior\in\lltLatP_\svmInd\cap\cP^*$ the covariance matrix $\Sigma_{\svmInd,\vmPrior}$ is positive definite with $\|\Sigma_{\svmInd,\vmPrior}^{-1}\|_2^{-1},\|\Sigma_{\svmInd,\vmPrior}\|_2=\Theta(1)$ and further using $\lltCFR=\iota_{\svmInd}(\vmPrior)$ we have
\begin{flalign*}
\pr[\lltRCFR_\svmInd=\lltCFR]&=\left(1+O\left(\thiB_n\sqrt{\frac{\ln(n)^3}{n}}\right)\right)
\frac{\prod_\vmCol h(\vmCol)}{\sqrt{\Erw[\vd_\svmInd]n}^{q-1}}\frac{\exp\left(-n\Erw\left[\KL{\mu_{\vmPrior,\vmRInd_\svmInd}}{\mu_{\vmRInd_\svmInd}}\right]\right)}{\sqrt{(2\pi)^{q-1}\det(\Sigma_{\svmInd,\vmPrior})}}\textrm{.}
\end{flalign*}
\end{theorem}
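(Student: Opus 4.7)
The plan is to reduce Theorem~\ref{llt_global} to a classical multivariate local limit theorem via an exponential change of measure. For fixed $\lltCFR\in\lltLatR_\svmInd$ and the corresponding $\vmPrior=\iota_\svmInd^{-1}(\lltCFR)$ supplied by Proposition~\ref{pp_mr_homeo}, comparing $\bigotimes_\lltVL\mu_\lltVL$ with the tilted product $\bigotimes_\lltVL\mu_{\vmPrior,\lltVL}$ and summing over assignments with a given colour profile gives
\begin{align*}
\pr[\lltRCFA_\svmInd=\lltCFA]=\bc{\prod_{\lltVL\in[n]} Z_{\vmPrior,\lltVL}}\prod_{\omega\in[q]}\vmPrior(\omega)^{-\lltCFA(\omega)}\cdot\pr[\lltRCFA_{\vmPrior,\svmInd}=\lltCFA].
\end{align*}
A direct expansion of the partition functions shows that, evaluated at $\lltCFA=\lltECFA_{\vmPrior,\svmInd}=\Erw[\vd_\svmInd]n\lltCFR$, the prefactor reduces to $\exp(-n\Erw[\KL{\mu_{\vmPrior,\vmRInd_\svmInd}}{\mu_{\vmRInd_\svmInd}}])$. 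The crucial property of the tilt is that the tilted marginal colour frequencies are exactly $\lltCFR$, i.e.\ $\iota_\svmInd(\vmPrior)=\lltCFR$, so the problem becomes that of estimating the probability that a sum of $n$ independent lattice vectors equals its mean.

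With this reduction in hand, the second step is to apply the multivariate local limit theorem for sums of independent lattice vectors from Chapter~5 of \cite{bhattacharya2010} to the first $q-1$ coordinates of $\lltRCFA_{\vmPrior,\svmInd}$, which take values in the lattice $\lltLat$. The minimal lower-triangular basis with positive diagonal $h$ supplies the lattice-density factor $\prod_\vmCol h(\vmCol)$; evaluation at the mean kills the Gaussian exponential, leaving the factor $1/\sqrt{(2\pi\Erw[\vd_\svmInd]n)^{q-1}\det\Sigma_{\svmInd,\vmPrior}}$; and the Lyapunov-type relative error produced by Bhattacharya--Rao is controlled by the normalised third absolute moment, bounded via \textbf{SKEW} and \textbf{VAR} by $\thiB_n/\sqrt n$ up to constants, with the polylogarithmic loss arising from a uniform sweep over $\cP^*$ using a sufficiently fine net.

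The main obstacle is making everything uniform in $\svmInd\in\vmIndSp_n$ and in $\vmPrior$ as $\lltCFR$ ranges over the compact set $\cP^*\subseteq\cP^\circ([q])$. Compactness of $\cP^*$ together with the continuity result of Proposition~\ref{pp_mr_cont} keeps $\vmPrior$ bounded away from the boundary of $\cP([q])$ uniformly, so that each $\mu_{\vmPrior,\lltVL}$ is bounded away from $0$ and $\infty$ on its support. Combining this with \textbf{GEN}, which furnishes a finite $\vmIndSp^\circ$ that spans $\lltLat$ and whose indices each carry mass at least $\epsGen$ under $\vmRInd_\svmInd$, yields a uniform lower bound on the smallest eigenvalue of $\Sigma_{\svmInd,\vmPrior}$: the $\Theta(n)$ vertices of the types in $\vmIndSp^\circ$ contribute summands whose per-vertex covariances span $\RR^{q-1}$ with amplitudes bounded below, while \textbf{VAR} supplies the matching upper bound $\|\Sigma_{\svmInd,\vmPrior}\|_2=O(1)$. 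The genuine technical hurdle I anticipate is a uniform characteristic-function estimate: the same spanning property of $\vmIndSp^\circ$ needs to push the joint characteristic function of the single-vertex colour vector strictly away from $1$ on any compact set of frequencies bounded away from the origin within the fundamental domain of $\lltLat$, uniformly in $\svmInd$ and in $\vmPrior$. Once this uniform Cramer-type gap is established, the Fourier-inversion proof of the local limit theorem in \cite{bhattacharya2010} applies verbatim; Proposition~\ref{llt_tails} then lets us restrict the inversion integral to a box of side $\sqrt n\ln n$ about the mean, which exactly accounts for the $\ln(n)^{3/2}$ loss in the stated error term.
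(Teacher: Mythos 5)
Your proposal is correct and follows essentially the same route as the paper: the exponential tilt with $\vmPrior=\iota_\svmInd^{-1}(\lltCFR)$ so that the tilted frequencies are centred at $\lltCFR$ (the paper's Lemma~\ref{llt_global_measure_switch}), followed by a hand-run Fourier inversion in the spirit of Chapter~5 of \cite{bhattacharya2010}, with \textbf{GEN} supplying the uniform Cram\'er-type gap and the eigenvalue lower bound, \textbf{VAR}/\textbf{SKEW} the moment bounds, and the truncation of the inversion integral at radius $\Theta(\sqrt{\ln(n)/n})$ producing the $\thiB_n\sqrt{\ln(n)^3/n}$ error. The only small slip is attributing the truncation to Proposition~\ref{llt_tails}; in the paper it comes from the subgaussian bound on the characteristic function itself (Corollary~\ref{llt_global_subgaussian}), but this does not affect the argument.
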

For frequencies close to the expectation $\lltECFR_\svmInd$, with $\lltECFR_{\vmPrior,\svmInd}=\Erw[\lltRCFR_{\vmPrior,\svmInd}]$, Theorem \ref{llt_global} can be simplified to remove the dependency on $\vmPrior$.
\begin{theorem}\label{llt_local}
Fix a sequence $r_n\in\RR_{>0}$ with $r_n^2n=\Omega(1)$, $n\thiB_nr_n^3=o(1)$ and a family $(\vmIndSp_n)_n\subseteq\vmIndSp^n$ satisfying {\bf SPAN}, {\bf GEN}, {\bf VAR} and {\bf SKEW}.
Then uniformly for all $\svmInd\in\vmIndSp_n$ and $\lltCFR\in\lltLatR_\svmInd$ with $\|\lltCFR-\lltECFR_\svmInd\|_2<r_n$ we have
\begin{flalign*}
\pr[\lltRCFR_\svmInd=\lltCFR]&=\left(1+O\left(\left(\sqrt{\frac{\ln(n)}{n}}^3+r_n^3\right)\thiB_nn\right)\right)
\frac{\prod_\vmCol h(\vmCol)}{\sqrt{\Erw[\vd_\svmInd]n}^{q-1}}\phi_\svmInd\left(\sqrt{\Erw[\vd_\svmInd]n}(\lltCFR-\lltECFR_\svmInd)_{[q-1]}\right)\textrm{,}
\end{flalign*}
where $\phi_\svmInd$ denotes the density of the normal distribution
$\cN(0_{[q-1]},\Sigma_\svmInd)$.
\end{theorem}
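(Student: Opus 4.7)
The plan is to deduce Theorem~\ref{llt_local} from Theorem~\ref{llt_global} by setting $\vmPrior = \iota_\svmInd^{-1}(\lltCFR)$ and Taylor-expanding the rate function at the uniform prior $u_{[q]}$. First I would verify that $\iota_\svmInd(u_{[q]}) = \lltECFR_\svmInd$, which follows from the identity $\mu_{u_{[q]},\vmInd} = \mu_\vmInd$ and a direct computation of the degree-biased expectation. Invoking Proposition~\ref{pp_mr_cont}, applied uniformly to the family $(\iota_\svmInd)_\svmInd$ (the uniformity being supplied by {\bf VAR} and {\bf GEN}, which respectively bound the modulus of continuity and prevent degeneracies), any $\lltCFR$ with $\|\lltCFR - \lltECFR_\svmInd\|_2 < r_n$ corresponds to some $\vmPrior$ with $\|\vmPrior - u_{[q]}\|_2 = O(r_n)$. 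For $n$ sufficiently large these priors sit in a fixed compact ball $\cP^* \subset \cP^\circ([q])$, which enables a uniform application of Theorem~\ref{llt_global}.

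The core analytic step is a third-order Taylor expansion of
\begin{align*}
F_\svmInd(\vmPrior) = \Erw\brk{\KL{\mu_{\vmPrior,\vmRInd_\svmInd}}{\mu_{\vmRInd_\svmInd}}}
\end{align*}
around $\vmPrior = u_{[q]}$. The constant term vanishes because $\mu_{u_{[q]},\vmInd} = \mu_\vmInd$, and the linear term vanishes because the non-negative functional $F_\svmInd$ attains its minimum at $u_{[q]}$. The crucial identity to establish is that the Hessian $D^2 F_\svmInd(u_{[q]})$, transported through the Jacobian of $\iota_\svmInd$ at $u_{[q]}$, matches $(\Erw[\vd_\svmInd]\Sigma_\svmInd)^{-1}$ on the $(q-1)$-dimensional chart used in the formulation. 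This is the classical duality between relative entropy and the cumulant generating function, and can be checked by direct differentiation using the explicit form of $\mu_{\vmPrior,\vmInd}$ together with the definition $\lltRCFA_\svmInd(\vmCol) = \sum_\lltVL |\vmRAss_{\svmInd,\lltVL}^{-1}(\vmCol)|$. The third-order remainder is bounded by $O(\|\vmPrior - u_{[q]}\|_2^3 \Erw[\vd_\svmInd^3]) = O(r_n^3 \thiB_n)$, since the higher derivatives of $\vmPrior \mapsto \KL{\mu_{\vmPrior,\vmInd}}{\mu_\vmInd}$ depend polynomially on $d_\vmInd$; multiplying by the factor of $n$ in the exponent produces the error $n r_n^3 \thiB_n$.

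It remains to match the Gaussian prefactor: the covariance $\Sigma_{\svmInd,\vmPrior}$ from Theorem~\ref{llt_global} must be replaced by $\Sigma_\svmInd = \Sigma_{\svmInd,u_{[q]}}$. Since $\Sigma_{\svmInd,\vmPrior}$ depends smoothly on $\vmPrior$ with uniformly bounded derivatives on $\cP^*$, one obtains $\sqrt{\det \Sigma_{\svmInd,\vmPrior}} = \sqrt{\det \Sigma_\svmInd}\,(1 + O(r_n))$. Because $r_n^2 n = \Omega(1)$ and $\thiB_n \ge 1$, this $O(r_n)$ error is dominated by $O(n r_n^3 \thiB_n)$, so that together with the inherited error $\thiB_n \sqrt{\ln(n)^3/n} = (\sqrt{\ln(n)/n})^3 \thiB_n n$ from Theorem~\ref{llt_global} it yields the claimed asymptotic. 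The main obstacle will be verifying the Hessian--covariance identity with uniform non-degeneracy in $\svmInd$; this rests on assumption {\bf GEN}, which prevents the Hessian at $u_{[q]}$ from collapsing along any direction tangent to $\cP([q])$ and thereby keeps $\|\Sigma_\svmInd^{-1}\|_2$ uniformly bounded.
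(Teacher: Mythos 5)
Your proposal is correct and follows essentially the same route as the paper: deduce the local statement from Theorem~\ref{llt_global} by a third-order Taylor expansion of the rate function around $u_{[q]}$, identify the Hessian transported through the Jacobian of $\iota_\svmInd$ with $\Erw[\vd_\svmInd]\Sigma_\svmInd^{-1}$, bound the cubic remainder by $O(n\thiB_n r_n^3)$ via the third-moment assumption, and absorb the $O(\thiB_n r_n)$ drift of the determinant using $r_n^2n=\Omega(1)$. The one caveat is that the quantitative bound $\|\vmPrior-u_{[q]}\|_2=O(r_n)$ does not follow from the mere continuity in Proposition~\ref{pp_mr_cont} but from the uniform two-sided eigenvalue bounds on the Jacobian $q\Sigma^\circ_{\svmInd,u_{[q]}}$ of $\iota_\svmInd$ at $u_{[q]}$ -- which is exactly the non-degeneracy you flag at the end, so the needed ingredient is already present in your argument.
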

\subsection{Proof of Proposition \ref{llt_tails}}
Notice that $\lltRCFA_\svmInd(\vmCol)$ is a sum of independent bounded random variables $|\vmRAss_{\svmInd,\lltVL}^{-1}(\vmCol)|\in[d_{\lltVL}]_0$ with $\lltVL\in[n]$, $\vmCol\in[q]$, and hence Hoeffding's inequality with the usual transition to $\|\cdot\|_\infty$ and $\|\cdot\|_1$ yields
\begin{flalign*}
\pr\left[\|\lltRCFA_\svmInd-\lltECFA_\svmInd\|_1\ge \sqrt{\Erw[\vd_\svmInd]n}r\right]\le 2q\exp\left(-\frac{2\Erw[\vd_\svmInd]nr^2}{q^2\sum_\lltVL d_\lltVL^2}\right)\le 2q\exp\left(-\frac{2\epsGen}{q^2\secB}r^2\right)\textrm{.}
\end{flalign*}
\subsection{Proof of Theorem \ref{llt_global}}
The core idea of the proof is to determine the asymptotics of the point probabilities for colour frequencies $\lltCFR\in\lltLatR_\svmInd$ by replacing the original law $\bigotimes_\lltVL\mu_\lltVL$ with the frequency specific law $\bigotimes_\lltVL\mu_{\vmPrior,\lltVL}$, $\vmPrior=\iota_\svmInd^{-1}(\lltCFR)$, that is centered around $\lltCFR$.
The first result introduces the variation of measure.
For this purpose let $\lltCFA_\lltHEA\in\ZZ_{\ge 0}^q$ and $\lltCFR_\lltHEA\in\cP([q])$ denote the absolute and relative colour frequencies of an assignment $\lltHEA\in\prod_{\lltVL\in[n]}[q]^{d_\lltVL}$ for $\svmInd\in\vmIndSp^n$ with $\Erw[\vd_\svmInd]>0$ or equivalently $\vmIndP_\svmInd\in\vmIndPSp$.
\begin{lemma}\label{llt_global_measure_switch}
Assume that $\vmIndSp$ satisfies {\bf SPAN}.
Then for all $n\in\ZZ_{>0}$, all $\svmInd\in\vmIndSp^n$ with $\vmIndP_\svmInd\in\vmIndPSp$ and $\lltHEA\in\prod_{\lltVL\in[n]}[q]^{d_\lltVL}$, using $\vmPrior=\iota^{-1}_{\svmInd}(\lltCFR_\lltHEA)$ we have
\begin{flalign*}
\pr[\lltRHEA_\svmInd=\lltHEA]&=\exp\left(-n\Erw\left[\KL{\mu_{\vmPrior,\vmRInd_\svmInd}}{\mu_{\vmRInd_\svmInd}}\right]\right)\pr[\lltRHEA_{\vmPrior,\svmInd}=\lltHEA]\textrm{.}
\end{flalign*}
\end{lemma}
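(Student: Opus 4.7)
The plan is to establish this identity by direct computation of the Radon-Nikodym derivative between the two product measures, and then to match the resulting exponent with the expected relative entropy using the defining property of $\vmPrior = \iota_\svmInd^{-1}(\lltCFR_\lltHEA)$.

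First I would write the density ratio componentwise. Since $\lltRHEA_\svmInd$ and $\lltRHEA_{\vmPrior,\svmInd}$ are products over $\lltVL\in[n]$, the ratio factorises as
\begin{align*}
\frac{\pr[\lltRHEA_\svmInd=\lltHEA]}{\pr[\lltRHEA_{\vmPrior,\svmInd}=\lltHEA]}=\prod_{\lltVL\in[n]}\frac{\mu_\lltVL(\lltHEA_\lltVL)}{\mu_{\vmPrior,\lltVL}(\lltHEA_\lltVL)}=\prod_{\lltVL\in[n]}\frac{Z_{\vmPrior,\lltVL}}{\prod_{h\in[d_\lltVL]}\vmPrior(\lltHEA_{\lltVL,h})},
\end{align*}
where the second equality uses the explicit form of $\mu_{\vmPrior,\lltVL}$ from Section \ref{vom}. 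The denominator can be regrouped by colour as $\prod_{\vmCol\in[q]}\vmPrior(\vmCol)^{\lltCFA_\lltHEA(\vmCol)}$, since $\lltCFA_\lltHEA(\vmCol)$ counts precisely how often $\vmCol$ appears across all coordinates of $\lltHEA$.

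Next I would expand the exponent on the right-hand side. A direct computation using the Radon-Nikodym derivative $\mu_{\vmPrior,\vmInd}/\mu_\vmInd=Z_{\vmPrior,\vmInd}^{-1}\prod_h\vmPrior(\cdot)$ gives
\begin{align*}
\KL{\mu_{\vmPrior,\vmInd}}{\mu_\vmInd}=-\ln Z_{\vmPrior,\vmInd}+d_\vmInd\sum_{\vmCol\in[q]}\mu_{\vmPrior,\vmInd}\vert_*(\vmCol)\ln\vmPrior(\vmCol),
\end{align*}
so that
\begin{align*}
n\Erw\left[\KL{\mu_{\vmPrior,\vmRInd_\svmInd}}{\mu_{\vmRInd_\svmInd}}\right]=-\sum_{\lltVL\in[n]}\ln Z_{\vmPrior,\lltVL}+\sum_{\vmCol\in[q]}\ln\vmPrior(\vmCol)\sum_{\lltVL\in[n]}d_\lltVL\,\mu_{\vmPrior,\lltVL}\vert_*(\vmCol).
\end{align*}

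The key observation is then that the inner sum equals $\lltCFA_\lltHEA(\vmCol)$. Indeed, writing it as an expectation gives $n\Erw[\vd_\svmInd]\,\iota_\svmInd(\vmPrior)(\vmCol)$ through the size-biased measure $\hat\vmIndP_\svmInd$, and since $\vmPrior=\iota_\svmInd^{-1}(\lltCFR_\lltHEA)$ by assumption, this is $n\Erw[\vd_\svmInd]\lltCFR_\lltHEA(\vmCol)=\lltCFA_\lltHEA(\vmCol)$. Substituting this back and exponentiating matches the ratio computed in the first step exactly, completing the proof.

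There is no real obstacle here; the only conceptually important step is recognising that the defining equation $\iota_\svmInd(\vmPrior)=\lltCFR_\lltHEA$ is precisely the balance condition that aligns the expected marginals of the tilted product measure with the empirical colour frequencies of $\lltHEA$. This is why the Boltzmann tilt $\vmPrior$ from Section \ref{vom} is the right choice: it is the unique minimiser of $f_{\vmIndP_\svmInd,\lltCFR_\lltHEA}$ by Proposition \ref{pp_mr_min}, and hence the natural exponential change of measure realising the prescribed frequencies.
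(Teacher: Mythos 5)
Your proof is correct and follows essentially the same route as the paper's: factorise the ratio of the two product measures, regroup the $\vmPrior$-factors by colour into $\prod_{\vmCol}\vmPrior(\vmCol)^{\lltCFA_\lltHEA(\vmCol)}$, and use $\iota_\svmInd(\vmPrior)=\lltCFR_\lltHEA$ to identify the resulting constant with $\exp(-n\Erw[\KL{\mu_{\vmPrior,\vmRInd_\svmInd}}{\mu_{\vmRInd_\svmInd}}])$. The only cosmetic difference is that the paper first disposes of the degenerate case $\pr[\lltRHEA_\svmInd=\lltHEA]=0$ and restricts to the support $\vmColSp$ of $\vmPrior$ before forming the quotient, whereas you write the ratio directly; you should add that one-line remark so the division is licit.
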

The gist in the proof of Lemma \ref{llt_global_measure_switch}, which is postponed to Section \ref{llt_global_measure_switch_proof}, is that $\lltECFA_{\vmPrior,\svmInd}=\lltCFA_{\lltHEA}$ by the design of $\iota_\svmInd$.
Lemma \ref{llt_global_measure_switch} directly implies that 
\begin{align*}
\pr[\lltRCFR_\svmInd=\lltCFR_\lltHEA]
=\exp\left(-n\Erw\left[\KL{\mu_{\vmPrior,\vmRInd_\svmInd}}{\mu_{\vmRInd_\svmInd}}\right]\right)
\pr[\lltRCFR_{\vmPrior,\svmInd}=\lltCFR_\lltHEA]
\end{align*}
and hence Theorem \ref{llt_global} is an immediate consequence from the following proposition which reflects the asymptotic point probability of \emph{exactly} the expectation.
\begin{proposition}\label{llt_global_prop}
With $\cP^*\subseteq\cP^\circ([q])$ and $(\vmIndSp_n)_n\subseteq\vmIndSp^n$ from Theorem \ref{llt_global}
and uniformly for all $\svmInd\in\vmIndSp_n$ and $\vmPrior\in\lltLatP_\svmInd\cap\cP^*$ the covariance matrix $\Sigma_{\svmInd,\vmPrior}$ is positive definite with $\|\Sigma_{\svmInd,\vmPrior}^{-1}\|_2^{-1},\|\Sigma_{\svmInd,\vmPrior}\|_2=\Theta(1)$ and further
\begin{flalign*}
\pr[\lltRCFR_{\vmPrior,\svmInd}=\lltECFR_{\vmPrior,\svmInd}]&=\left(1+O\left(\thiB_n\sqrt{\frac{\ln(n)^3}{n}}\right)\right)
\frac{\prod_\vmCol h(\vmCol)}{\sqrt{\Erw[\vd_\svmInd]n}^{q-1}}\frac{1}{\sqrt{(2\pi)^{q-1}\det(\Sigma_{\svmInd,\vmPrior})}}\textrm{.}
\end{flalign*}
\end{proposition}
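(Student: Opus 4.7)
The plan is to apply Fourier inversion on the lattice $\lltLat$. Writing $\lltRCFA_{\vmPrior,\svmInd,[q-1]} = \sum_{\lltVL \in [n]} \vY_\lltVL$ with independent summands $\vY_\lltVL = (|\vmRAss_{\vmPrior,\svmInd,\lltVL}^{-1}(\vmCol)|)_{\vmCol \in [q-1]}$, and observing that $\vmPrior \in \lltLatP_\svmInd$ places $\lltECFA_{\vmPrior,\svmInd,[q-1]}$ in (an affine translate of) the lattice $\lltLat$ whose fundamental cell has volume $\prod_\vmCol h(\vmCol)$, the discrete Fourier inversion formula from Chapter 5 of \cite{bhattacharya2010} gives
\[
\pr\brk{\lltRCFA_{\vmPrior,\svmInd,[q-1]} = \lltECFA_{\vmPrior,\svmInd,[q-1]}} = \frac{\prod_\vmCol h(\vmCol)}{(2\pi)^{q-1}} \int_{D^*} \varphi_\svmInd(t)\, e^{-i t \cdot \lltECFA_{\vmPrior,\svmInd,[q-1]}}\, dt,
\]
where $D^*$ is a fundamental domain of the dual lattice $\lltLat^*$ and $\varphi_\svmInd(t) = \prod_\lltVL \Erw[\exp(i t \cdot \vY_\lltVL)]$. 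I would split $D^* = B_\eps \cup (D^* \setminus B_\eps)$ with $B_\eps = \{\|t\| \leq \eps_n\}$ for $\eps_n = C(n \Erw[\vd_\svmInd])^{-1/2}\sqrt{\ln n}$.

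\textbf{Spectral bounds on $\Sigma_{\svmInd,\vmPrior}$.} The upper bound $\|\Sigma_{\svmInd,\vmPrior}\|_2 = O(1)$ is immediate from \textbf{VAR} since $\|\Cov(\vY_\lltVL)\|_2 = O(d_\lltVL^2)$. For the lower bound, \textbf{GEN} provides at least $\epsGen n$ indices with $\vmInd_\lltVL \in \vmIndSp^\circ$, and compactness of $\cP^* \subset \cP^\circ([q])$ guarantees a uniform positive lower bound on $\mu_{\vmPrior,\vmInd_\lltVL}(\vmAss)$ for every $\vmAss \in \vmAssSp_{\vmInd_\lltVL}$. Since $\bigcup_{\vmInd \in \vmIndSp^\circ} \vmAssSp_\vmInd$ spans $\lltLat$, no unit vector $u \in \RR^{q-1}$ is orthogonal to every support point across $\vmIndSp^\circ$; a short convex argument then yields $\Var_{\mu_{\vmPrior,\vmInd}}(u \cdot \vY) \geq c > 0$ for some $\vmInd \in \vmIndSp^\circ$ uniformly in $\|u\|=1$, establishing $\|\Sigma_{\svmInd,\vmPrior}^{-1}\|_2^{-1} = \Omega(1)$.

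\textbf{Central region via Edgeworth expansion.} On $B_\eps$, standard third-order Taylor expansion yields
\[
\log \varphi_\svmInd(t) = i t \cdot \lltECFA_{\vmPrior,\svmInd,[q-1]} - \tfrac{1}{2}\, t^\matTr (n \Erw[\vd_\svmInd] \Sigma_{\svmInd,\vmPrior})\, t + O\bc{n \thiB_n \|t\|^3},
\]
with the cubic remainder controlled by \textbf{SKEW}. Inserting this and applying the Gaussian identity $\int_{\RR^{q-1}} e^{-\frac{1}{2} t^\matTr A t}\, dt = \sqrt{(2\pi)^{q-1}/\det A}$ with $A = n\Erw[\vd_\svmInd]\Sigma_{\svmInd,\vmPrior}$ produces the advertised main term. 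The error from extending the Gaussian integral from $B_\eps$ to $\RR^{q-1}$ is super-polynomially small by the spectral lower bound, while the cubic correction contributes a multiplicative factor $1 + O(n \thiB_n \eps_n^3) = 1 + O(\thiB_n \sqrt{\ln(n)^3/n})$ after integrating against the Gaussian envelope.

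\textbf{Tail control and main obstacle.} For $t \in D^* \setminus B_\eps$, I would bound $|\varphi_\svmInd(t)| \leq \prod_{\vmInd \in \vmIndSp^\circ} |\Erw_{\mu_{\vmPrior,\vmInd}}[\exp(i t \cdot \vY)]|^{\epsGen n/|\vmIndSp^\circ|}$. Since $\vmIndSp^\circ$ spans $\lltLat$ exactly, every non-zero $t$ in $D^*$ makes at least one factor strictly less than one; compactness of a closed fundamental domain together with the uniform positive lower bound on $\mu_{\vmPrior,\vmInd}$ for $\vmPrior \in \cP^*$ then delivers a $\delta > 0$ independent of $\svmInd$ and $\vmPrior$ with $|\varphi_\svmInd(t)| \leq (1-\delta)^{\Omega(n)}$ on $D^* \setminus B_\eps$, making the tail exponentially small. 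I expect the main obstacle to be making all constants uniform simultaneously in $\svmInd \in \vmIndSp_n$ and $\vmPrior \in \cP^* \cap \lltLatP_\svmInd$: every estimate must trace back only to $\epsGen$, $\secB$, $\thiB_n$, and the distance of $\cP^*$ from the boundary of $\cP([q])$, not to the specific realization, and this uniformity is precisely what the \textbf{GEN}, \textbf{VAR}, \textbf{SKEW} hypotheses are tailor-made to provide.
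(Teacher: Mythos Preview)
Your approach is the same as the paper's: lattice Fourier inversion, a split of the fundamental domain $\lltBox^*$ into a shrinking central ball of radius $\varepsilon_n\asymp\sqrt{\ln(n)/n}$ and its complement, a third-order Taylor expansion of the log-characteristic function in the center controlled by {\bf SKEW}, and exponential decay in the tail via the spanning set $\vmIndSp^\circ$ from {\bf GEN}. The paper runs precisely this scheme through Lemmas~\ref{llt_global_tails}--\ref{llt_global_second_order}.

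One step needs repair. Your compactness argument for the tail only yields a uniform $\delta>0$ with $|\varphi_\svmInd(t)|\le(1-\delta)^{\Omega(n)}$ on $\lltBox^*\setminus B_r(0)$ for a \emph{fixed} $r>0$; since $\eps_n\to 0$, the annulus $\{\eps_n\le\|t\|<r\}$ is not covered, and there $\delta$ must depend on $t$. The paper handles this explicitly in two stages: Lemma~\ref{llt_global_tails} gives your constant-distance bound, and then Corollary~\ref{llt_global_subgaussian} patches the intermediate annulus by combining that bound with the quadratic approximation from Lemma~\ref{llt_global_atoms} near the origin, obtaining the global subgaussian estimate $|\lltCFInt_{\svmInd,\vmPrior}(\lltCFAng)|\le\exp(-c\|\lltCFAng\|_2^2 n)$ on all of $\lltBox^*$. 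With $\eps_n=c^*\sqrt{\ln(n)/n}$ this makes the contribution of $\lltBox^*\setminus B_{\eps_n}$ of order $n^{-cc^{*2}}$, negligible against the main term once $c^*$ is large.

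A minor point on the spectral lower bound: the paper does not argue through $\vmIndSp^\circ$ spanning $\lltLat$ but uses {\bf SPAN} directly. For any single $\vmInd$ with $d_\vmInd>0$, the two atoms $\vmCol^*1_{[d_\vmInd]}$ and $q\,1_{[d_\vmInd]}$ in the support of $\mu_{\vmPrior,\vmInd}$ already force $v^\matTr\Sigma^*_{\vmPrior,\vmInd}v\ge\mu_\vmInd(\vmCol^*1_{[d]})\mu_\vmInd(q1_{[d]})\vmPrior(\vmCol^*)^d\vmPrior(q)^d/(q-1)$ uniformly over unit $v$ and $\vmPrior\in\cP^*$ (Lemma~\ref{llt_global_atoms}); {\bf GEN} then transports this to $\Sigma_{\svmInd,\vmPrior}$. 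This is more direct than your route, which would need to track which $\vmInd\in\vmIndSp^\circ$ witnesses a given direction $u$.
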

We show this result in Section \ref{llt_global_prop_proof} using the characteristic function inversion formula in Chapter 5 of \cite{bhattacharya2010}, or equivalently a vanilla version of the saddle point method.
\subsection{Proof of Lemma \ref{llt_global_measure_switch}}\label{llt_global_measure_switch_proof}
First, notice that $\pr[\lltRHEA_\svmInd=\lltHEA]=0$ if and only if $\pr[\lltRHEA_{\vmPrior,\svmInd}=\lltHEA]$ since $\vmPrior\in\cP([q])$ and $\iota_{\svmInd}(\vmPrior)\in\cP([q])$ always have the same support.
For the non-trivial case with $\vmColSp\subseteq[q]$ denoting the support of $\vmPrior$ we have
\begin{flalign*}
\pr[\lltRHEA_\svmInd=\lltHEA]
&=\prod_{\lltVL}\mu_{\lltVL}(\vmAss_\lltVL)
=c\prod_{\vmCol\in\vmColSp}\vmPrior(\vmCol)^{\lltCFA(\vmCol)}\prod_{\lltVL}\frac{\mu_{\lltVL}(\vmAss_\lltVL)}{Z_{\vmPrior,\lltVL}}
=c\prod_{\lltVL,h}\vmPrior(\vmAss_{\lltVL,h})\prod_{\lltVL}\frac{\mu_{\lltVL}(\vmAss_\lltVL)}{Z_{\vmPrior,\lltVL}}
=c\pr[\lltRHEA_{\vmPrior,\svmInd}=\lltHEA]\textrm{,}\\
c&=\left(\prod_\lltVL Z_{\vmPrior,\lltVL}\right)\left(\prod_{\vmCol\in\vmColSp}\vmPrior(\vmCol)^{-\lltCFA_\lltHEA(\vmCol)}\right)
=\left(\prod_\vmInd Z_{\vmPrior,\vmInd}^{n\pr[\vmRInd_\svmInd=\vmInd]}\right)\left(\prod_{\vmCol\in\vmColSp}\vmPrior(\vmCol)^{-\lltCFA_\lltHEA(\vmCol)}\right)\textrm{.}
\end{flalign*}
For $\vmInd\in\vmIndSp$ let $\vmRAss^*_{\vmPrior,\vmInd}$ be a sample from $\mu_{\vmPrior,\vmInd}$ and $\lltRCFA^*_{\vmPrior,\vmInd}=(|\vmRAss^{*-1}_{\vmPrior,\vmInd}(\vmCol)|)_{\vmCol\in[q]}$ the colour frequencies.
Now, we use $\lltCFR_\lltHEA=\iota_\svmInd(\vmPrior)=\lltECFR_{\vmPrior,\svmInd}$ to obtain 
$\ln(c)=-n\Erw[\alpha(\vmRInd_\svmInd)]$ with
\begin{align*}
\alpha(\vmInd)
&=\Erw\left[\sum_{\vmCol\in\vmColSp}\lltRCFA^*_{\vmPrior,\vmInd}(\vmCol)\log(\vmPrior(\vmCol))-\log(Z_{\vmPrior,\vmInd})\right]
=\Erw\left[\log\left(\frac{\prod_{h\in[d_\vmInd]}\vmPrior(\vmRAss^*_{\vmPrior,\vmInd,h})}{Z_{\vmPrior,\vmInd}}\right)\right]
=\KL{\mu_{\vmPrior,\vmInd}}{\mu_\vmInd}\textrm{.}
\end{align*}
\subsection{Proof of Proposition \ref{llt_global_prop}}\label{llt_global_prop_proof}
We start with some standard results based on Section 21 in \cite{bhattacharya2010}.
Notice that due to {\bf GEN} the lattice $\lltLat$ is the \emph{minimal lattice} for $\lltRCFA_{\vmPrior,\svmInd,[q-1]}$.
Slightly deviating from \cite{bhattacharya2010} we let the \emph{dual basis} be given by $(b^*_\vmCol)_\vmCol=B^*=2\pi (B^{-1})^{\matTr}$ with $B=(b_\vmCol)_\vmCol$, so $B^*$ is upper triangular, further $B^{*\matTr}B=2\pi I_{[q-1]}$ with $I_{[q-1]}$ denoting the identity, and $b^*_{\vmCol}(\vmCol)=2\pi/h(\vmCol)$ for $\vmCol\in[q-1]$ which yields the \emph{fundamental domain} $\lltBox^*=\prod_\vmCol[-\pi/h(\vmCol),\pi/h(\vmCol)]$. Hence, translating the point probability for $\lltRCFR_{\vmPrior,\svmInd}$ to $\lltRCFA_{\vmPrior,\svmInd}$, reducing it to $\lltRCFA_{\vmPrior,\svmInd,[q-1]}$ and using the \emph{inversion formula in the lattice case} gives
\begin{align*}
\pr[\lltRCFR_{\vmPrior,\svmInd}=\lltECFR_{\vmPrior,\svmInd}]
&=\frac{\prod_\vmCol h(\vmCol)}{(2\pi)^{q-1}}\int_{\lltBox^*}\lltCFInt_{\svmInd,\vmPrior}(\lltCFAng)\intD\lltCFAng\textrm{, }\\
\lltCFInt_{\svmInd,\vmPrior}(\lltCFAng)
&=\exp\left(-\cpxI\lltCFAng^\matTr\lltECFA_{\svmInd,\vmPrior,[q-1]}\right)\Erw\left[\exp\left(\cpxI\lltCFAng^\matTr\lltRCFA_{\svmInd,\vmPrior,[q-1]}\right)\right]\textrm{.}
\end{align*}
With the shorthand $\lltECFA^*_{\vmPrior,\svmInd}=\Erw[\lltRCFA^*_{\vmPrior,\svmInd}]$ for the expectation of $\lltRCFA^*_{\vmPrior,\svmInd}$ from Section \ref{llt_global_measure_switch_proof} and since $\lltRCFA_{\vmPrior,\svmInd}$ is a sum of independent random vectors we have
\begin{align*}
\lltCFInt_{\svmInd,\vmPrior}(\lltCFAng)&=\prod_{\vmInd\in\vmIndSp}\lltCFInt^*_{\vmPrior,\vmInd}(\lltCFAng)^{n\pr[\vmRInd_\svmInd=\vmInd]}\textrm{, }
\lltCFInt^*_{\vmPrior,\vmInd}(\lltCFAng)=\Erw\left[\exp\left(\cpxI\lltCFAng^\matTr\left(\lltRCFA^*_{\vmPrior,\vmInd}-\lltECFA^*_{\vmPrior,\vmInd}\right)_{[q-1]}\right)\right]\textrm{.}
\end{align*}
Now, we follow the standard scheme in that we first bound the tails at constant distance, then establish subgaussian tails and finally use a normal approximation to obtain the material contribution, with some careful bookkeeping along the way to obtain suitable error bounds.
\begin{lemma}\label{llt_global_tails}
For all $r\in\RR_{>0}$ there exists a constant $c\in\RR_{>0}$ such that for all $n\in\ZZ_{>0}$, $\svmInd\in\vmIndSp_n$, $\vmPrior\in\cP^*$ and $\lltCFAng\in\lltBox^*\setminus\cB_r(0_{[q-1]})$ we have
$|\lltCFInt_{\svmInd,\vmPrior}(\lltCFAng)|\le\exp(-cn)$.
\end{lemma}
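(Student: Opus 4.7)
The plan is to exploit {\bf GEN} to reduce the statement to a uniform bound on a finite product of characteristic functions that is strictly below $1$ on the relevant set, and then conclude by compactness. Since $|\lltCFInt^*_{\vmPrior,\vmInd}(\lltCFAng)|\le 1$ for every $\vmInd$, we may drop all factors with $\vmInd\notin\vmIndSp^\circ$ and use $\pr[\vmRInd_\svmInd=\vmInd]\ge\epsGen$ to get
\begin{align*}
|\lltCFInt_{\svmInd,\vmPrior}(\lltCFAng)|\le\left(\prod_{\vmInd\in\vmIndSp^\circ}|\lltCFInt^*_{\vmPrior,\vmInd}(\lltCFAng)|\right)^{n\epsGen}=:g_\vmPrior(\lltCFAng)^{n\epsGen},
\end{align*}
a bound that no longer depends on $\svmInd$. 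Hence any uniform estimate $g_\vmPrior(\lltCFAng)\le 1-\delta$ on the compact set $\cP^*\times(\lltBox^*\setminus\cB_r(0_{[q-1]}))$ yields the claim with $c=-\epsGen\ln(1-\delta)$.

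Next I would establish the pointwise bound $g_\vmPrior(\lltCFAng)<1$ on $\cP^*\times(\lltBox^*\setminus\{0_{[q-1]}\})$ by contradiction. If $g_\vmPrior(\lltCFAng)=1$, then every individual factor equals $1$, which forces $\lltCFAng^\matTr(\lltRCFA^*_{\vmPrior,\vmInd})_{[q-1]}$ to be a.s.\ constant modulo $2\pi$ for each $\vmInd\in\vmIndSp^\circ$. Since $\vmPrior\in\cP^*\subseteq\cP^\circ([q])$ has full support, $\mu_{\vmPrior,\vmInd}$ is supported on all of $\vmAssSp_\vmInd$, so $\lltCFAng^\matTr((\lltCFA_\vmAss-\lltCFA_{\vmAss'})_{[q-1]})\in 2\pi\ZZ$ for every pair $\vmAss,\vmAss'\in\vmAssSp_\vmInd$. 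Choosing $\vmAss'=q1_{[d_\vmInd]}$ (available by {\bf SPAN}), whose frequency vector vanishes in the first $q-1$ coordinates, we find that every generating vector $(\lltCFA_\vmAss)_{[q-1]}$ of $\lltLat$ arising from $\vmInd\in\vmIndSp^\circ$ satisfies $\lltCFAng^\matTr(\lltCFA_\vmAss)_{[q-1]}\in 2\pi\ZZ$. The spanning clause of {\bf GEN} extends this to $\lltCFAng^\matTr v\in 2\pi\ZZ$ for all $v\in\lltLat$, i.e.\ $\lltCFAng$ lies in the dual lattice $B^*\ZZ^{q-1}$. But by the construction of the fundamental domain, $\lltBox^*$ intersects this dual lattice only at $0$, contradicting $\lltCFAng\ne 0$.

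Finally, the map $(\vmPrior,\lltCFAng)\mapsto g_\vmPrior(\lltCFAng)$ is continuous, since the characteristic functions depend continuously on $\lltCFAng$ and on $\mu_{\vmPrior,\vmInd}$, and the weights $\mu_{\vmPrior,\vmInd}(\vmAss)$ are continuous in $\vmPrior$ with denominator $Z_{\vmPrior,\vmInd}$ bounded away from $0$ uniformly on the compact set $\cP^*$. Therefore $g$ attains its maximum on the compact domain $\cP^*\times(\lltBox^*\setminus\cB_r(0_{[q-1]}))$; by the previous paragraph this maximum is strictly less than $1$, and setting $\delta$ to its distance from $1$ completes the proof. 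The main obstacle is the lattice-degeneracy step, which is the rigidity at the heart of any lattice local CLT: the combination of {\bf SPAN} and the spanning clause of {\bf GEN} is exactly what is needed to convert pointwise degeneracy of the individual characteristic functions into the lattice condition $\lltCFAng^\matTr\lltLat\subseteq 2\pi\ZZ$, ruling out any nonzero $\lltCFAng$ inside the fundamental domain.
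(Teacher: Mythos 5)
Your proof is correct and follows essentially the same route as the paper: bound $|\lltCFInt_{\svmInd,\vmPsi}|$ by the product over the finite spanning set $\vmIndSp^\circ$ raised to the power $\epsGen n$ using {\bf GEN}, show the resulting function of $(\vmPrior,\lltCFAng)$ is strictly below $1$ away from the origin, and conclude by continuity and compactness on $\cP^*\times(\lltBox^*\setminus\cB_r(0_{[q-1]}))$. The only difference is that where the paper cites Lemma~21.6 of \cite{bhattacharya2010} for the pointwise strict inequality, you supply a self-contained (and correct) proof of that fact via the degeneracy/dual-lattice argument, using {\bf SPAN} and the reference assignment $q1_{[d_\vmInd]}$.
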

\begin{proof}
First, since $\lltCFInt^*_{\vmPrior,\vmInd}$ is a characteristic function we have
$|\lltCFInt^*_{\vmPrior,\vmInd}(\lltCFAng)|\le 1$.
This shows that $|\lltCFInt_{\svmInd,\vmPrior}(\lltCFAng)|\le|\lltCFInt^\circ_{\vmPrior}(\lltCFAng)|^{\epsGen n}$ with
$\lltCFInt^\circ_{\vmPrior}(\lltCFAng)=\prod_{\vmInd\in\vmIndSp^\circ}\lltCFInt^*_{\vmPrior,\vmInd}(\lltCFAng)$.
The unique maximizer of $|\lltCFInt^\circ_{\vmPrior}(\lltCFAng)|$ is $0_{[q-1]}$ on the closure $\lltBox^{\mathrm{c}}$ of $\lltBox^*$ for all $\vmPrior\in\cP^\circ([q])$, due to Lemma 21.6 in \cite{bhattacharya2010} and the fact that $\vmIndSp^\circ$ spans $\lltLat$. Considering $\lltCFInt^\circ(\vmPrior,\lltCFAng)=\lltCFInt^\circ_{\vmPrior}(\lltCFAng)$ as a function of both $\vmPrior$ and $\lltCFAng$, we notice that $\lltCFInt^\circ$ and further $|\lltCFInt^\circ|$ are both continuous on the compact set $\cP^*\times\lltBox^{\mathrm{c}}$, so the latter attains its maximum $M_r\in[0,1)$ on $\cP^*\times(\lltBox^{\mathrm{c}}\setminus\cB_r(0_{[q-1]}))$ for sufficiently small $r$.
If $r$ is too large the assertion is trivially true, otherwise take $c'\in(M_r,1)$ to obtain $|\lltCFInt_{\svmInd,\vmPrior}(\lltCFAng)|\le\exp(-cn)$ with $c=-\epsGen\log(c')\in\RR_{>0}$, valid for all required $n$, $\svmInd$, $\vmPrior$ and $\lltCFAng$.
\end{proof}
With the coarse tail bound in place we establish subgaussian tails.
For this purpose we take a closer look at the atoms $\lltCFInt^*_{\vmPrior,\vmInd}$.
Let $\Sigma^*_{\vmPrior,\vmInd}=\Cov(\lltRCFA^*_{\vmPrior,\vmInd,[q-1]})$ denote the corresponding covariance.
\begin{lemma}\label{llt_global_atoms}
For all $\vmInd\in\vmIndSp$ with $d_\vmInd>0$,
$\vmPrior\in\cP^\circ([q])$ and $\lltCFAng\in\RR^{q-1}$ with $\|\lltCFAng\|_\infty<\frac{\pi}{2d_\vmInd}$ we have
\begin{align*}
\left|\ln(\lltCFInt^*_{\vmPrior,\vmInd}(\lltCFAng))+\frac{1}{2}\lltCFAng^{\matTr}\Sigma^*_{\vmPrior,\vmInd}\lltCFAng\right|\le \frac{d_\vmInd^3}{3\cos(d_\vmInd\|\lltCFAng\|_\infty)^3}\|\lltCFAng\|_1^3\textrm{.}
\end{align*}
Further, there exists $c_\vmInd\in\RR_{>0}$ such that $c_\vmInd\le\|\Sigma^{*-1}_{\vmPrior,\vmInd}\|_2^{-1}\le\|\Sigma^*_{\vmPrior,\vmInd}\|_2\le d_\vmInd^2$ for all $\vmPrior\in\cP^*$.
\end{lemma}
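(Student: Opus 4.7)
The plan is to prove both claims of the lemma by a third-order Taylor expansion of $s\mapsto\log\lltCFInt^*_{\vmPrior,\vmInd}(s\lltCFAng)$ along the segment $[0,1]$, combined with a pointwise lower bound on $|\lltCFInt^*_{\vmPrior,\vmInd}|$. The decisive first step is the lower bound. Writing $\tilde\lltCFAng\in\RR^q$ for the extension of $\lltCFAng$ by $\tilde\lltCFAng_q=0$ and unfolding the definition of $\mu_{\vmPrior,\vmInd}$ yields
\begin{align*}
\lltCFInt^*_{\vmPrior,\vmInd}(\lltCFAng)=e^{-\cpxI\lltCFAng^\matTr\lltECFA^*_{\vmPrior,\vmInd,[q-1]}}Z_{\vmPrior,\vmInd}^{-1}\sum_{\vmAss\in\vmAssSp_\vmInd}\mu_\vmInd(\vmAss)\prod_{h=1}^{d_\vmInd}\vmPrior(\vmAss_h)e^{\cpxI\tilde\lltCFAng_{\vmAss_h}}.
\end{align*}
Each phase $S(\vmAss):=\sum_{h=1}^{d_\vmInd}\tilde\lltCFAng_{\vmAss_h}$ satisfies $|S(\vmAss)|\le d_\vmInd\|\lltCFAng\|_\infty<\pi/2$ by assumption, so $\Re e^{\cpxI S(\vmAss)}\ge\cos(d_\vmInd\|\lltCFAng\|_\infty)>0$. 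Taking real parts of the inner sum and dividing by $Z_{\vmPrior,\vmInd}$ gives $|\lltCFInt^*_{\vmPrior,\vmInd}(\lltCFAng)|\ge\cos(d_\vmInd\|\lltCFAng\|_\infty)$; replacing $\lltCFAng$ by $s\lltCFAng$ preserves this uniformly in $s\in[0,1]$.

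Equipped with the lower bound, set $X:=\lltCFAng^\matTr Y$ with $Y:=(\lltRCFA^*_{\vmPrior,\vmInd}-\lltECFA^*_{\vmPrior,\vmInd})_{[q-1]}$ and $\psi(s):=\lltCFInt^*_{\vmPrior,\vmInd}(s\lltCFAng)=\Erw[e^{\cpxI sX}]$. Since $\psi$ is nowhere zero on $[0,1]$, the continuous branch $g(s):=\log\psi(s)$ with $g(0)=0$ is well-defined and $C^\infty$. Routine differentiation in terms of the moments $m_k(s):=\Erw[X^k e^{\cpxI sX}]$ produces $g'(0)=0$, $g''(0)=-\lltCFAng^\matTr\Sigma^*_{\vmPrior,\vmInd}\lltCFAng$ and
\begin{align*}
g'''(s)=\cpxI\left[-\frac{m_3(s)}{\psi(s)}+\frac{3m_1(s)m_2(s)}{\psi(s)^2}-\frac{2m_1(s)^3}{\psi(s)^3}\right].
\end{align*}
Using $|X|\le d_\vmInd\|\lltCFAng\|_1$ to bound $|m_k(s)|\le(d_\vmInd\|\lltCFAng\|_1)^k$ together with $|\psi(s)|^{-1}\le\cos(d_\vmInd\|\lltCFAng\|_\infty)^{-1}$ from the first step, the three summands collapse into $|g'''(s)|\le 2 d_\vmInd^3\|\lltCFAng\|_1^3/\cos(d_\vmInd\|\lltCFAng\|_\infty)^3$ uniformly in $s\in[0,1]$. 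Taylor's theorem with integral remainder, combined with $\int_0^1(1-s)^2/2\,\intD s=1/6$, then delivers the first bound.

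For the second statement, the upper bound $\|\Sigma^*_{\vmPrior,\vmInd}\|_2\le d_\vmInd^2$ follows by writing $v^\matTr Y=\sum_{h=1}^{d_\vmInd}(\tilde v_{\vmRAss^*_{\vmPrior,\vmInd,h}}-\Erw\tilde v_{\vmRAss^*_{\vmPrior,\vmInd,h}})$ with $\|\tilde v\|_\infty\le\|v\|_2$ and applying Cauchy-Schwarz to the double sum representing $\Var(v^\matTr Y)$. The lower bound $\|\Sigma^{*-1}_{\vmPrior,\vmInd}\|_2^{-1}\ge c_\vmInd$ exploits {\bf SPAN}: since $\mu_{\vmPrior,\vmInd}(\vmCol 1_{[d_\vmInd]})\ge Z_{\vmPrior,\vmInd}^{-1}\mu_\vmInd(\vmCol 1_{[d_\vmInd]})(\min_{\vmCol'}\vmPrior(\vmCol'))^{d_\vmInd}$ is uniformly positive for $\vmPrior$ in the compact set $\cP^*$, the random vector $\lltRCFA^*_{\vmPrior,\vmInd,[q-1]}$ assumes each of the $q$ monochromatic extremes with uniformly positive probability. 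As these $q$ extreme values affinely span $\RR^{q-1}$, a standard minimal-variance argument over unit directions produces the required uniform constant $c_\vmInd$.

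The main obstacle is securing the sharp cosine lower bound $\cos(d_\vmInd\|\lltCFAng\|_\infty)$ rather than the weaker $\cos(2 d_\vmInd\|\lltCFAng\|_\infty)$ that would follow from the crude $|X|\le 2d_\vmInd\|\lltCFAng\|_\infty$ and $\Re\lltCFInt^*\ge\Erw\cos X$: the crude bound already degenerates for $d_\vmInd\|\lltCFAng\|_\infty>\pi/4$. The tilted representation pulls out the centering phase $e^{-\cpxI\lltCFAng^\matTr\lltECFA^*}$ and reduces the estimate to a convex combination of phases $S(\vmAss)$ each bounded by $d_\vmInd\|\lltCFAng\|_\infty$ in absolute value, which is what allows the admissibility region to extend up to $\|\lltCFAng\|_\infty<\pi/(2d_\vmInd)$ as needed in the subgaussian tail estimates on $\lltBox^*$ that follow.
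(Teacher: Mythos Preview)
Your approach is essentially the one in the paper: bound $|\lltCFInt^*_{\vmPrior,\vmInd}|$ from below by passing to the uncentered phases $S(\vmAss)=\sum_h\tilde\lltCFAng_{\vmAss_h}$, then Taylor-expand the log of the characteristic function to second order and control the third-order remainder. The lower bound argument and the covariance bounds match the paper's reasoning (the paper makes the variance lower bound explicit by conditioning on the two monochromatic events $\{\lltRCFA^*_{\vmPrior,\vmInd}=d e_{[q],q}\}$ and $\{\lltRCFA^*_{\vmPrior,\vmInd}=d e_{[q],\vmCol^*}\}$ with $|v(\vmCol^*)|=\|v\|_\infty$, which is precisely your ``standard minimal-variance argument'').

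There is, however, a gap in the constant. From your representation $g'''(s)=\cpxI[-m_3/\psi+3m_1m_2/\psi^2-2m_1^3/\psi^3]$ together with $|m_k|\le(d_\vmInd\|\lltCFAng\|_1)^k$ and $|\psi|^{-1}\le\cos(d_\vmInd\|\lltCFAng\|_\infty)^{-1}$, the triangle inequality only yields the coefficient $1+3+2=6$, not $2$; this produces $d_\vmInd^3\|\lltCFAng\|_1^3/\cos^3$ after the Taylor remainder, a factor $3$ off the stated bound. The paper obtains $2$ by writing the third partial derivatives of $\ln\lltCFInt^*_{\vmPrior,\vmInd}$ with three i.i.d.\ copies $\lltRCFA_1,\lltRCFA_2,\lltRCFA_3$ of $\lltRCFA^*_{\vmPrior,\vmInd}$ as an expectation of
\[
c(\lltCFA,\vmCol)=\lltCFA_1(\vmCol_1)\bigl(\lltCFA_1(\vmCol_2)-\lltCFA_2(\vmCol_2)\bigr)\bigl(\lltCFA_1(\vmCol_3)+\lltCFA_2(\vmCol_3)-2\lltCFA_3(\vmCol_3)\bigr)
\]
times $\exp(\cpxI\lltCFAng^\matTr(\sum_i\lltCFA_i)_{[q-1]})/\lltCFInt^{*3}$, and then bounds the three factors by $d_\vmInd$, $d_\vmInd$, $2d_\vmInd$ respectively before summing over $\vmCol\in[q-1]^3$ against $|\lltCFAng_{\vmCol_1}\lltCFAng_{\vmCol_2}\lltCFAng_{\vmCol_3}|$. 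If you want the constant $1/3$ as stated, you need this symmetrised representation (or an equivalent); your one-variable cumulant formula does not ``collapse'' to $2$ without it. That said, the precise constant is immaterial for every subsequent use of the lemma in the paper (both the subgaussian tail corollary and the second-order expansion only need the qualitative $O(d_\vmInd^3\|\lltCFAng\|^3)$ bound), so your argument still proves what is actually needed.
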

\begin{proof}
Using $d=d_\vmInd$ recall that $\|\lltRCFA^*_{\vmPrior,\vmInd}\|_1=d$ almost surely and thereby
$|\lltCFAng^{\matTr}\lltRCFA^*_{\vmPrior,\vmInd,[q-1]}|\le d\|\lltCFAng\|_\infty<\pi/2$.
With $\lltCFInt^*_{\vmPrior,\vmInd}(\lltCFAng)=a+\cpxI b=re^{\cpxI\alpha}$ and
since the cosine is even, non-negative and decreasing on $[0,\pi/2]$ we have
\begin{align*}
r=|\lltCFInt^*_{\vmPrior,\vmInd}(\lltCFAng)|=\left|\Erw\left[\exp\left(\cpxI\lltCFAng^{\matTr}\lltRCFA^*_{\vmPrior,\vmInd,[q-1]}\right)\right]\right|
\ge |a|=a=\sum_\lltCFA\pr[\lltRCFA^*_{\vmPrior,\vmInd}=\lltCFA]\cos(|\lltCFAng^{\matTr}\lltCFA_{[q-1]}|)\ge\cos(d\|\lltCFAng\|_\infty)\textrm{.}
\end{align*}
With $a>0$ the choice of $\alpha$ with $|\alpha|<\pi/2$ is unique, and thereby $\lltCFLog^*_{\vmPrior,\vmInd}(\lltCFAng)=\ln(\lltCFInt^*_{\vmPrior,\vmInd}(\lltCFAng))=\ln(r)+\cpxI\alpha$ is well-defined.
By direct computation we obtain that the first derivatives of $\lltCFLog^*_{\vmPrior,\vmInd}$ at $0_{[q-1]}$ vanish, the second partial derivatives yield $\Sigma^*_{\vmPrior,\vmInd}$ and for the third partial derivatives we get
\begin{align*}
\frac{\partial\lltCFLog^*_{\vmPrior,\vmInd}(\lltCFAng)}{\prod_{i\in[3]}\partial\lltCFAng(\vmCol_i)}&=-\cpxI\Erw\left[c(\lltRCFA,\vmCol)\frac{\exp\left(\cpxI\lltCFAng^{\matTr}\left(\sum_i\lltRCFA_i\right)_{[q-1]}\right)}{\lltCFInt^*_{\vmPrior,\vmInd}(\lltCFAng)^3}\right]\textrm{,}\\
c(\lltCFA,\vmCol)&=\lltCFA_1(\vmCol_1)\left(\lltCFA_1(\vmCol_2)-\lltCFA_2(\vmCol_2)\right)\left(\lltCFA_1(\vmCol_3)+\lltCFA_2(\vmCol_3)-2\lltCFA_3(\vmCol_3)\right)\textrm{,}
\end{align*}
with i.i.d. copies $\lltRCFA_i\sim\lltRCFA^*_{\vmPrior,\vmInd}$ for $i\in[3]$ and $\vmCol\in[q-1]^3$.
This gives $|c(\lltCFA,\vmCol)|\le 2d^3$ uniformly, hence the third partial derivatives can be upper bounded by $\frac{2d^3}{|\lltCFInt^*_{\vmPrior,\vmInd}(\lltCFAng)|^3}\le\frac{2d^3}{\cos(d\|\lltCFAng\|_\infty)^3}$, which proves the first assertion using Taylor's theorem.

Recall that the covariance $\Sigma^*_{\vmPrior,\vmInd}$ is positive semi-definite.
With $\lltRCFA^\circ_{\vmPrior,\vmInd}(\vmCol)=(\lltRCFA^*_{\vmPrior,\vmInd}(\vmCol)-\lltECFA_{\vmPrior,\vmInd}^*(\vmCol))\in[-d,d]$ almost surely for all $\vmCol\in[q-1]$ and $\vmPrior$, this gives $|v^\matTr\lltRCFA^\circ_{\vmPrior,\vmInd}|\le d$ and hence $v^{\matTr}\Sigma^*_{\vmPrior,\vmInd}v\le d^2$ for all $v\in\RR^{q-1}$ with $\|v\|_2=1$.

For the lower bound notice that there exists $\vmCol^*\in[q-1]$ with $|v(\vmCol^*)|=\|v\|_\infty\ge 1/\sqrt{q-1}$ by equivalence of norms and $\|v\|_2=1$. With $\cE=\{e_{[q],q},e_{[q],\vmCol^*}\}$ this yields
\begin{align*}
v^{\matTr}\Sigma^*_{\vmPrior,\vmInd}v
&=
\Erw\left[\left(v^{\matTr}\lltRCFA^\circ_{\vmPrior,\vmInd,[q-1]}\right)^2\vecone\left\{\lltRCFA^*_{\vmPrior,\vmInd}\not\in\cE\right\}\right]
+\pr\left[\lltRCFA^*_{\vmPrior,\vmInd}\in\cE\right]\Erw\left[\left(v^{\matTr}\lltRCFA^\circ_{\vmPrior,\vmInd,[q-1]}\right)^2\middle|\lltRCFA^*_{\vmPrior,\vmInd}\in\cE\right]\textrm{.}
\end{align*}
For the conditional expectation we use $\alpha_1x_1^2+\alpha_2x_2^2\ge\alpha_1\alpha_2(x_1-x_2)^2$ valid for all $\alpha\in\cP([2])$ and $x\in\RR^2$, which gives
\begin{align*}
v^{\matTr}\Sigma^*_{\vmPrior,\vmInd}v
&\ge\pr\left[\lltRCFA^*_{\vmPrior,\vmInd}\in\cE\right]\pr\left[\lltRCFA_{\vmPrior,\vmInd}^*=e_{[q],\vmCol^*}\middle|\lltRCFA^*_{\vmPrior,\vmInd}\in\cE\right]\pr\left[\lltRCFA_{\vmPrior,\vmInd}^*=e_{[q],q}\middle|\lltRCFA^*_{\vmPrior,\vmInd}\in\cE\right]\|v\|_\infty^2\\
&\ge\pr\left[\lltRCFA_{\vmPrior,\vmInd}^*=e_{[q],\vmCol^*}\right]\pr\left[\lltRCFA_{\vmPrior,\vmInd}^*=e_{[q],q}\right]\|v\|_\infty^2
\ge\frac{\mu_\vmInd(\vmCol^*1_{[d]})\vmPrior(\vmCol^*)^{d}\mu_\vmInd(q1_{[d]})\vmPrior(q)^{d}}{q-1}\textrm{.}
\end{align*}
With $\mu_{\min,\vmInd}=\min_{\vmCol\in[q]}\mu_\vmInd(\vmCol 1_{[d]})$ and $\lltEpsP=\min_{\vmPrior\in\cP^*}\min_{\vmCol\in[q]}\vmPrior(\vmCol)\in(0,1)$ this gives $v^{\matTr}\Sigma^*_{\vmPrior,\vmInd}v\ge\frac{\mu_{\min,\vmInd}^2\lltEpsP^{2d}}{q-1}$ for all required $v$, $\vmPrior$ and $\vmInd$.
\end{proof}
With Lemma \ref{llt_global_atoms} we are ready to establish the subgaussian tails.
\begin{corollary}\label{llt_global_subgaussian}
There exists a constant $c\in\RR_{>0}$ such that $|\lltCFInt_{\svmInd,\vmPrior}(\lltCFAng)|\le\exp(-c\|\lltCFAng\|_2^2n)$ for all $n\in\ZZ_{>0}$, $\svmInd\in\vmIndSp_n$, $\vmPrior\in\cP^*$ and $\lltCFAng\in\lltBox^*$.
\end{corollary}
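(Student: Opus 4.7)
\textbf{Proof proposal for Corollary \ref{llt_global_subgaussian}.} The plan is to split $\lltBox^*$ into a small neighborhood of the origin, on which the Taylor expansion from Lemma \ref{llt_global_atoms} gives genuine Gaussian decay, and its complement, on which the crude exponential decay from Lemma \ref{llt_global_tails} suffices after paying a harmless constant factor. The assumption {\bf GEN}, together with the fact that $\vmIndSp^\circ$ is finite and that $d_\vmInd>0$ for $\vmInd\in\vmIndSp^\circ$, will deliver uniformity across $\svmInd$ and $\vmPrior$.

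\emph{Small regime.} Fix $r_0>0$ with $r_0<\min_{\vmInd\in\vmIndSp^\circ}\pi/(4d_\vmInd)$. For $\|\lltCFAng\|_\infty\le r_0$ and $\vmInd\in\vmIndSp^\circ$, Lemma \ref{llt_global_atoms} gives a real-part bound
\begin{align*}
\ln|\lltCFInt^*_{\vmPrior,\vmInd}(\lltCFAng)|
&\le -\tfrac12\lltCFAng^\matTr\Sigma^*_{\vmPrior,\vmInd}\lltCFAng+\tfrac{d_\vmInd^3}{3\cos(d_\vmInd\|\lltCFAng\|_\infty)^3}\|\lltCFAng\|_1^3.
\end{align*}
Combining the lower bound $\lltCFAng^\matTr\Sigma^*_{\vmPrior,\vmInd}\lltCFAng\ge c_\vmInd\|\lltCFAng\|_2^2$ from the same lemma with $\|\lltCFAng\|_1\le\sqrt{q-1}\,\|\lltCFAng\|_2$ and $\cos(d_\vmInd\|\lltCFAng\|_\infty)\ge \sqrt2/2$, I get a bound of the form $-\tfrac{c_\vmInd}{2}\|\lltCFAng\|_2^2+C_\vmInd\|\lltCFAng\|_2^3$ with constants $c_\vmInd,C_\vmInd>0$ that depend only on $\vmInd$. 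Since $\vmIndSp^\circ$ is finite, I can shrink $r_0$ further (uniformly in $\vmPrior\in\cP^*$ and $\vmInd\in\vmIndSp^\circ$) so that the cubic correction is absorbed into half of the quadratic term, yielding $\ln|\lltCFInt^*_{\vmPrior,\vmInd}(\lltCFAng)|\le -\tfrac{c_\vmInd}{4}\|\lltCFAng\|_2^2$. Using $|\lltCFInt^*_{\vmPrior,\vmInd}|\le 1$ for $\vmInd\notin\vmIndSp^\circ$ and $\pr[\vmRInd_\svmInd=\vmInd]\ge\epsGen$ for $\vmInd\in\vmIndSp^\circ$ (by {\bf GEN}) gives
\begin{align*}
\ln|\lltCFInt_{\svmInd,\vmPrior}(\lltCFAng)|
&\le -n\epsGen\sum_{\vmInd\in\vmIndSp^\circ}\tfrac{c_\vmInd}{4}\|\lltCFAng\|_2^2 \le -c_1 n\|\lltCFAng\|_2^2
\end{align*}
with $c_1=\epsGen\,|\vmIndSp^\circ|\min_{\vmInd\in\vmIndSp^\circ}c_\vmInd/4>0$, uniformly in $\svmInd\in\vmIndSp_n$ and $\vmPrior\in\cP^*$.

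\emph{Large regime.} For $\lltCFAng\in\lltBox^*$ with $\|\lltCFAng\|_\infty>r_0$, the inequality $\|\lltCFAng\|_2\ge\|\lltCFAng\|_\infty>r_0$ places $\lltCFAng$ outside $\cB_{r_0}(0_{[q-1]})$, so Lemma \ref{llt_global_tails} applied with this $r_0$ yields $|\lltCFInt_{\svmInd,\vmPrior}(\lltCFAng)|\le\exp(-c_2 n)$ for a constant $c_2>0$ independent of $\svmInd,\vmPrior,\lltCFAng$. Because $\lltBox^*$ is bounded, there is $R>0$ with $\|\lltCFAng\|_2^2\le R^2$ on $\lltBox^*$, and hence $-c_2 n\le -(c_2/R^2)n\|\lltCFAng\|_2^2$. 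Taking $c=\min(c_1,c_2/R^2)$ gives the corollary on all of $\lltBox^*$.

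The only real obstacle is keeping every constant uniform over the problem data. This reduces to two observations already built into the setup: $\vmIndSp^\circ$ is finite with $d_\vmInd>0$ on it (so $r_0$, $c_\vmInd$ and $C_\vmInd$ can be bounded independently of $\vmInd$), and $\cP^*\subseteq\cP^\circ([q])$ is compact (so the lower bound $c_\vmInd$ on the smallest eigenvalue of $\Sigma^*_{\vmPrior,\vmInd}$ holds uniformly in $\vmPrior$, as per the second assertion of Lemma \ref{llt_global_atoms}). With these in hand, the rest is bookkeeping.
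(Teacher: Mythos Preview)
Your proposal is correct and follows essentially the same two-regime strategy as the paper: near the origin you use the Taylor estimate and eigenvalue lower bound from Lemma~\ref{llt_global_atoms} (summing over $\vmIndSp^\circ$ via {\bf GEN}, where the paper uses a single $\vmInd\in\vmIndSp^\circ$), and away from the origin you invoke Lemma~\ref{llt_global_tails} and convert the $\exp(-cn)$ bound to subgaussian form via the boundedness of $\lltBox^*$. The uniformity bookkeeping is handled correctly.
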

\begin{proof}
Fix $\vmInd\in\vmIndSp^\circ$ and $r\in(0,1)$ sufficiently small to obtain a good approximation of $\ln(\lltCFInt^*_{\vmPrior,\vmInd}(\lltCFAng))$ for $\lltCFAng\in\cB_r(0_{[q-1]})$ using Lemma \ref{llt_global_atoms}, say $|\ln(\lltCFInt^*_{\vmPrior,\vmInd}(\lltCFAng))+\frac{1}{2}\lltCFAng^{\matTr}\Sigma^*_{\vmPrior,\vmInd}\lltCFAng|\le\frac{1}{4}c_\vmInd\|\lltCFAng\|_2^2$ uniformly for all $\vmPrior\in\cP^*$, i.e.~the relative error is at most $1/2$. With $\ln(\lltCFInt^*_{\vmPrior,\vmInd}(\lltCFAng))=a+\cpxI b$ this gives $|a+\frac{1}{2}\lltCFAng^{\matTr}\Sigma^*_{\vmPrior,\vmInd}\lltCFAng|\le\frac{1}{4}\lltCFAng^{\matTr}\Sigma^*_{\vmPrior,\vmInd}\lltCFAng$, so $|\lltCFInt^*_{\vmPrior,\vmInd}(\lltCFAng)|=e^a\le\exp(-\frac{1}{4}\lltCFAng^{\matTr}\Sigma^*_{\vmPrior,\vmInd}\lltCFAng)\le\exp(-\frac{1}{4}c_\vmInd\|\lltCFAng\|_2^2)$, and further $|\lltCFInt_{\svmInd,\vmPrior}(\lltCFAng)|\le\exp(-\frac{1}{4}c_\vmInd\epsGen\|\lltCFAng\|_2^2n)$.
For $\lltCFAng\in\lltBox^*\setminus\cB_r(0_{[q-1]})$ we use the constant $c'\in\RR_{>0}$ from Lemma \ref{llt_global_tails} to obtain $|\lltCFInt_{\svmInd,\vmPrior}(\lltCFAng)|\le\exp(-\frac{c'}{r^2}r^2n)\le\exp(-\frac{c'}{r^2}\|\lltCFAng\|_2^2n)$, and taking the minimum of the two choices completes the proof.
\end{proof}
Since the assertion indicates that the integral is of order $\sqrt{n}^{-(q-1)}$,
we fix $\varepsilon_{\mathrm{a},n}=c^*\sqrt{\ln(n)/n}$ for some large $c^*\in\RR_{>0}$ and set $\cB_{\mathrm{a},n}=\cB_{\varepsilon_{\mathrm{a},n}}(0_{[q-1]})$, since then
\begin{align*}
\left|\int_{\lltBox^*}\vecone\{\lltCFAng\not\in\cB_{\mathrm{a},n}\}\lltCFInt_{\svmInd,\vmPrior}(\lltCFAng)\intD\lltCFAng\right|
\le n^{-cc^{*2}}\prod_\vmCol h(\vmCol)=o\left(\sqrt{n}^{-q}\right)
\end{align*}
uniformly in $\svmInd\in\vmIndSp_n$ and $\vmPrior\in\cP^*$.
The remainder of the proof is dedicated to the material contributions $\cB_{\mathrm{a},n}$.
First, we extend Lemma \ref{llt_global_atoms} to $\lltCFInt_{\svmInd,\vmPrior}$. For this purpose let
\begin{align*}
\lltCFLog_{\svmInd,\vmPrior}(\lltCFAng)=\ln(\lltCFInt_{\svmInd,\vmPrior}(\lltCFAng))
=n\Erw\left[\ln\left(\lltCFInt^*_{\vmPrior,\vmRInd_\svmInd}(\lltCFAng)\right)\right]\textrm{,}
\end{align*}
which is defined for sufficiently small $\lltCFAng$ (depending on $\svmInd$, $\vmPrior$) as shown in Lemma \ref{llt_global_atoms}.
\begin{lemma}\label{llt_global_second_order}
Uniformly for all $\svmInd\in\vmIndSp_n$, $\vmPrior\in\cP^*$ and $\lltCFAng\in\cB_{\mathrm{a},n}$ we have
\begin{align*}
\left|\lltCFLog_{\svmInd,\vmPrior}(\lltCFAng)+\frac{1}{2}\Erw[\vd_\svmInd]n\lltCFAng^{\matTr}\Sigma_{\svmInd,\vmPrior}\lltCFAng\right|=O\left(\thiB_n\sqrt{\frac{\ln(n)^3}{n}}\right)\textrm{.}
\end{align*}
Further, there exists $c\in(0,1)$ with $c\le\|\Sigma_{\svmInd,\vmPrior}^{-1}\|_2^{-1}\le\|\Sigma_{\svmInd,\vmPrior}\|_2\le c^{-1}$ for all $\vmPrior\in\cP^*$, $\svmInd\in\vmIndSp_n$ and $n\in\ZZ_{>0}$.
\end{lemma}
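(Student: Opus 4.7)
My first step will be to unfold the identity $\Sigma_{\svmInd,\vmPrior}=\Erw[\Sigma^*_{\vmPrior,\vmRInd_\svmInd}]/\Erw[\vd_\svmInd]$, which follows because $\lltRCFA_{\vmPrior,\svmInd,[q-1]}=\sum_\lltVL\lltRCFA^*_{\vmPrior,\lltVL,[q-1]}$ is a sum of independent contributions. The upper bound $\|\Sigma_{\svmInd,\vmPrior}\|_2=O(1)$ then follows from $\|\Sigma^*_{\vmPrior,\vmInd}\|_2\le d_\vmInd^2$ (Lemma~\ref{llt_global_atoms}), assumption \textbf{VAR}, and $\Erw[\vd_\svmInd]\ge\epsGen\,d_{\vmInd^\circ}$ for any fixed $\vmInd^\circ\in\vmIndSp^\circ$ with $d_{\vmInd^\circ}>0$; such an index exists because $\vmIndSp^\circ$ spans the rank-$(q-1)$ lattice $\lltLat$ and its probability mass is guaranteed by \textbf{GEN}. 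For the matching lower bound on $\|\Sigma_{\svmInd,\vmPrior}^{-1}\|_2^{-1}$, I will fix the same $\vmInd^\circ$, use positive semi-definiteness of each $\Sigma^*_{\vmPrior,\lltVL}$ to restrict the average to indices with $\vmInd_\lltVL=\vmInd^\circ$, combine $\pr[\vmRInd_\svmInd=\vmInd^\circ]\ge\epsGen$ with the coercivity $v^\matTr\Sigma^*_{\vmPrior,\vmInd^\circ}v\ge c_{\vmInd^\circ}\|v\|_2^2$ from Lemma~\ref{llt_global_atoms}, and close via the Cauchy--Schwarz bound $\Erw[\vd_\svmInd]\le\sqrt\secB$.

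\textbf{Quadratic approximation---key observation.} The linchpin of the proof is that assumption \textbf{SKEW} is tuned precisely so that the maximum degree shrinks faster than $1/\varepsilon_{\mathrm{a},n}$. Indeed $\max_\lltVL d_\lltVL\le\bigl(\sum_\lltVL d_\lltVL^3\bigr)^{1/3}=(n\thiB_n)^{1/3}=o\bigl(\sqrt{n/\log n}\bigr)$, hence $\max_\lltVL d_\lltVL\cdot\varepsilon_{\mathrm{a},n}=o(1)$ uniformly in $\svmInd\in\vmIndSp_n$. Thus for all $n$ sufficiently large, every $\lltVL\in[n]$ and every $\lltCFAng\in\cB_{\mathrm{a},n}$ will satisfy $d_\lltVL\|\lltCFAng\|_\infty\le\pi/4$. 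This has two benefits: the cosine factor in Lemma~\ref{llt_global_atoms} stays bounded below by $1/\sqrt2$, and $\Re\lltCFInt^*_{\vmPrior,\lltVL}(\lltCFAng)>0$, so that each $\lltCFLog^*_{\vmPrior,\lltVL}(\lltCFAng)=\ln\lltCFInt^*_{\vmPrior,\lltVL}(\lltCFAng)$ is unambiguous via the principal branch and $\sum_\lltVL\lltCFLog^*_{\vmPrior,\lltVL}(\lltCFAng)=\lltCFLog_{\svmInd,\vmPrior}(\lltCFAng)$.

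\textbf{Summing the errors.} Using the identity $\tfrac12\Erw[\vd_\svmInd]n\lltCFAng^\matTr\Sigma_{\svmInd,\vmPrior}\lltCFAng=\tfrac12\sum_\lltVL\lltCFAng^\matTr\Sigma^*_{\vmPrior,\lltVL}\lltCFAng$, applying Lemma~\ref{llt_global_atoms} term by term, and inserting $\|\lltCFAng\|_1\le\sqrt{q-1}\,\varepsilon_{\mathrm{a},n}$, the triangle inequality will give
\[
\Bigl|\lltCFLog_{\svmInd,\vmPrior}(\lltCFAng)+\tfrac12\Erw[\vd_\svmInd]n\lltCFAng^\matTr\Sigma_{\svmInd,\vmPrior}\lltCFAng\Bigr|\le\tfrac{2\sqrt2}{3}\|\lltCFAng\|_1^3\sum_\lltVL d_\lltVL^3\le C\,\varepsilon_{\mathrm{a},n}^3\,n\thiB_n=O\!\bigl(\thiB_n\sqrt{\log^3 n/n}\bigr)
\]
uniformly in $\svmInd,\vmPrior,\lltCFAng$, as required. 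The hard part is the max-degree observation in the previous paragraph: \textbf{SKEW} is just strong enough to make the all-purpose Taylor bound of Lemma~\ref{llt_global_atoms} applicable at every vertex simultaneously, so no separate treatment of large-degree vertices (which would otherwise require Corollary~\ref{llt_global_subgaussian}, Markov-type counts of bad indices, and a careful handling of the branches of $\lltCFLog^*_{\vmPrior,\lltVL}$) is needed.
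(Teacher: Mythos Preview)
Your proof is correct and follows essentially the same route as the paper's. Both arguments hinge on the observation that \textbf{SKEW} forces $d_{\max,\svmInd}\le(n\thiB_n)^{1/3}=o(\sqrt{n/\ln n})$, so that $d_{\max,\svmInd}\varepsilon_{\mathrm{a},n}=o(1)$ and Lemma~\ref{llt_global_atoms} applies uniformly at every vertex; the triangle inequality over $\lltVL\in[n]$ then gives the $O(n\thiB_n\varepsilon_{\mathrm{a},n}^3)$ error, and the spectral bounds are obtained by restricting the average defining $\Sigma_{\svmInd,\vmPrior}$ to a single generating index $\vmInd^\circ\in\vmIndSp^\circ$ with $\pr[\vmRInd_\svmInd=\vmInd^\circ]\ge\epsGen$. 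Your remarks on the principal branch of $\lltCFLog^*_{\vmPrior,\lltVL}$ are more explicit than the paper's, but the underlying argument is the same.
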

\begin{proof}
For given $\svmInd\in\vmIndSp_n$ the maximum degree $d_{\max,\svmInd}$ satisfies
$d_{\max,\svmInd}^3\le n\Erw[\vd_\svmInd^3]\le n\thiB_n$, so we have $d_{\max,\svmInd}\le(n\thiB_n)^{1/3}=o(\sqrt{n/\ln(n)})$ and further $d_{\max,\svmInd}\varepsilon_{\mathrm{a},n}=o(1)$ uniformly in $\svmInd$ (and $\vmPrior$). So, with Lemma \ref{llt_global_atoms} and equivalence of norms we obtain $n_0\in\ZZ_{>0}$, $c\in\RR_{>0}$ such that
\begin{align*}
\left|\ln(\lltCFInt^*_{\vmPrior,\vmInd}(\lltCFAng))+\frac{1}{2}\lltCFAng^{\matTr}\Sigma^*_{\vmPrior,\vmInd}\lltCFAng\right|\le cd_\vmInd^3\varepsilon_{\mathrm{a},n}^3
\end{align*}
for all $\lltCFAng\in\cB_{\mathrm{a},n}$, $\vmPrior\in\cP^*$, $\vmInd$ in $\svmInd\in\vmIndSp_n$ and $n\in\ZZ_{\ge n_0}$. With the definition of $\lltCFLog_{\svmInd,\vmPrior}$, $\Sigma_{\svmInd,\vmPrior}$ and the triangle inequality this gives
\begin{align*}
\left|\lltCFLog_{\svmInd,\vmPrior}(\lltCFAng)+\frac{1}{2}\Erw[\vd_\svmInd]n\lltCFAng^{\matTr}\Sigma_{\svmInd,\vmPrior}\lltCFAng\right|\le nc\Erw[\vd_\svmInd^3]\varepsilon_{\mathrm{a},n}^3\le nc\thiB_n\varepsilon_{\mathrm{a},n}^3=O\left(\thiB_n\sqrt{\frac{\ln(n)^3}{n}}\right)
\end{align*}
uniformly in $\lltCFAng\in\cB_{\mathrm{a},n}$, $\vmPrior\in\cP^*$ and $\svmInd\in\vmIndSp_n$.
Finally, using Lemma \ref{llt_global_atoms}, $\vmInd\in\vmIndSp^\circ$, and with $v\in\RR^{q-1}$, $\|v\|_2=1$, we have
\begin{align*}
\frac{\epsGen c_\vmInd}{\secB}\le\frac{1}{\Erw[\vd_\svmInd]}\pr[\vmRInd_\svmInd=\vmInd]c_\vmInd\le v^{\matTr}\Sigma_{\svmInd,\vmPrior}v\le\frac{\Erw[\vd_\svmInd^2]}{\Erw[\vd_\svmInd]}\le\frac{\secB}{\epsGen}
\end{align*}
uniformly for all $\svmInd$, $\vmPrior$.
\end{proof}
For the sake of transparency let $\lltCFInt_{\svmInd,\vmPrior}(\lltCFAng)=\lltCFInt_{\mathrm{r},\svmInd,\vmPrior}(\lltCFAng)+\cpxI\lltCFInt_{\mathrm{i},\svmInd,\vmPrior}(\lltCFAng)$,
$\lltCFLog_{\svmInd,\vmPrior}(\lltCFAng)=\lltCFLog_{\mathrm{r},\svmInd,\vmPrior}(\lltCFAng)+\cpxI \lltCFLog_{\mathrm{i},\svmInd,\vmPrior}(\lltCFAng)$ be the decompositions into real and imaginary part, so in particular
$\lltCFInt_{\mathrm{r},\svmInd,\vmPrior}(\lltCFAng)=\exp\left(\lltCFLog_{\mathrm{r},\svmInd,\vmPrior}(\lltCFAng)\right)\cos(\lltCFLog_{\mathrm{i},\svmInd,\vmPrior}(\lltCFAng))$.
As directly implied by the left hand side of the inversion formula we only need to evaluate the integral over $\lltCFInt_{\mathrm{r},\svmInd,\vmPrior}(\lltCFAng)$ since the integral over $\lltCFInt_{\mathrm{i},\svmInd,\vmPrior}(\lltCFAng)$ vanishes. Using $|a|\le|z|$ for $z=a+\cpxI b$ we obtain the bound on the tails for the real part from the bound on the tails of the complex integral.
Further, Lemma \ref{llt_global_second_order} suggests that $\lltCFLog_{\mathrm{r},\svmInd,\vmPrior}(\lltCFAng)=-\frac{1}{2}\Erw[\vd_{\svmInd}]n\lltCFAng^{\matTr}\Sigma_{\svmInd,\vmPrior}\lltCFAng+O(\thiB_n\sqrt{\ln(n)^3/n})$ and $\lltCFLog_{\mathrm{i},\svmInd,\vmPrior}(\lltCFAng)=O(\thiB_n\sqrt{\ln(n)^3/n})=o(1)$. Using $\cos(x)=1-O(x^2)$ and $\exp(x)=1+O(x)$ this gives
\begin{align*}
\int_{\lltBox^*}\lltCFInt_{\mathrm{r},\svmInd,\vmPrior}(\lltCFAng)\intD\lltCFAng
&=\left(1+O\left(\thiB_n\sqrt{\frac{\ln(n)^3}{n}}\right)\right)\int_{\cB_{\mathrm{a},n}}\exp\left(-\frac{1}{2}\Erw[\vd_{\svmInd}]n\lltCFAng^{\matTr}\Sigma_{\svmInd,\vmPrior}\lltCFAng\right)\intD\lltCFAng+o\left(\sqrt{n}^{-q}\right)\textrm{.}
\end{align*}
Rescaling with $v=\sqrt{\Erw[\vd_{\svmInd}]n}\lltCFAng$ gives a Gaussian integral.
With $\vv_{\svmInd,\vmPrior}\sim\cN(0,\Sigma_{\svmInd,\vmPrior}^{-1})$ reflecting the corresponding normal and $\cV_{\svmInd,\vmPrior}=\sqrt{\Erw[\vd_{\svmInd}]n}\cB_{\mathrm{a},n}$ the corresponding event this gives
\begin{align*}
\int_{\lltBox^*}\lltCFInt_{\mathrm{r},\svmInd,\vmPrior}(\lltCFAng)\intD\lltCFAng
&=\left(1+O\left(\thiB_n\sqrt{\frac{\ln(n)^3}{n}}\right)\right)\frac{\sqrt{(2\pi)^{q-1}}}{\sqrt{\Erw[\vd_{\svmInd}]n}^{q-1}\sqrt{\det(\Sigma_{\svmInd,\vmPrior})}}\pr[\vv_{\svmInd,\vmPrior}\in\cV_{\svmInd,\vmPrior}]+o\left(\sqrt{n}^{-q}\right)\textrm{.}
\end{align*}
With $\Erw[\vd_{\svmInd}]\ge\epsGen$ we obtain $c\in\RR_{>0}$ with $\|v\|_2\ge c\sqrt{n}\varepsilon_{\mathrm{a},n}$ for all $v\not\in\cV_{\svmInd,\vmPrior}$ and $\svmInd$, $\vmPrior$.
Further, with Lemma \ref{llt_global_second_order} we have $c'\in(0,1)$ to bound the eigenvalues of $\Sigma_{\svmInd,\vmPrior}$ uniformly, suggesting the existence of constants $c$, $c'\in\RR_{>0}$ such that $\pr[\vv_{\svmInd,\vmPrior}\not\in\cV_{\svmInd,\vmPrior}]\le c'\exp(-c\varepsilon_{\mathrm{a},n}^2n)$ uniformly for all $\vmPrior\in\cP^*$, $\svmInd\in\vmIndSp_n$ and $n\in\ZZ_{>0}$.
With the definition of $\varepsilon_{\mathrm{a},n}$ this gives $\pr[\vv_{\svmInd,\vmPrior}\not\in\cV_{\svmInd,\vmPrior}]\le c'n^{-cc^{*2}}$, hence for some fixed large $c^*\in\RR_{>0}$ we have $\pr[\vv_{\svmInd,\vmPrior}\not\in\cV_{\svmInd,\vmPrior}]=o(\sqrt{n}^{-1})$ uniformly. Now, since $\Erw[\vd_{\svmInd}]=\Theta(1)$ uniformly and $\det(\Sigma_{\svmInd,\vmPrior})=\Theta(1)$ uniformly the dominant contribution is of order $\sqrt{n}^{-(q-1)}$. Hence, extracting the material part gives
\begin{align*}
\int_{\lltBox^*}\lltCFInt_{\mathrm{r},\svmInd,\vmPrior}(\lltCFAng)\intD\lltCFAng
&=\left(1+O\left(\thiB_n\sqrt{\frac{\ln(n)^3}{n}}\right)+o(\sqrt{n}^{-1})\right)\frac{\sqrt{(2\pi)^{q-1}}}{\sqrt{\Erw[\vd_{\svmInd}]n}^{q-1}\sqrt{\det(\Sigma_{\svmInd,\vmPrior})}}\textrm{.}
\end{align*}
Here, the fact that $\thiB_n\ge 1$ completes the proof.
\subsection{Proof of Theorem \ref{llt_local}}
We split the proof into two parts. The first part is dedicated to a local limit theorem for $\vmRPrior_\svmInd=\iota_{\svmInd}^{-1}(\lltRCFR_\svmInd)\in\lltLatP_\svmInd$ around $u_{[q]}$, and in the second part we translate the result to $\lltRCFR_\svmInd$.
\begin{proposition}\label{llt_local_p}
Uniformly for all $\svmInd\in\vmIndSp_n$ and $\vmPrior\in\lltLatP_\svmInd$ with $\|\vmPrior-u_{[q]}\|_2<r_n$ we have
\begin{align*}
\pr[\vmRPrior_\svmInd=\vmPrior]
&=\left(1+O\left(\left(\sqrt{\frac{\ln(n)}{n}}^3+r_n^3\right)\thiB_nn\right)\right)
\frac{\det(\Sigma_\svmInd^{-1}))\prod_\vmCol h(\vmCol)}{\sqrt{\Erw[\vd_\svmInd ]n}^{q-1}}\phi_\svmInd\left(\sqrt{\Erw[\vd_\svmInd ]n}(q\vmPrior-1_{[q]})_{[q-1]}\right)\textrm{,}
\end{align*}
where $\phi_\svmInd$ denotes the density of $\cN(0,\Sigma_\svmInd^{-1})$.
\end{proposition}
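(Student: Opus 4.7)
The plan is to reduce Proposition \ref{llt_local_p} to Theorem \ref{llt_local} via the bijection $\iota_\svmInd$ and then rewrite the resulting Gaussian into the claimed form. The key observation is that, since $\iota_\svmInd$ is a homeomorphism by \Prop~\ref{pp_mr_homeo} and $\lltLatP_\svmInd = \iota_\svmInd^{-1}(\lltLatR_\svmInd)$ by construction, the events agree exactly:
\begin{align*}
\pr[\vmRPrior_\svmInd = \vmPrior] = \pr[\lltRCFR_\svmInd = \iota_\svmInd(\vmPrior)]\textrm{.}
\end{align*}
So the task is to evaluate the right-hand side using Theorem \ref{llt_local} and translate the result into the claimed form.

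First I would verify the hypothesis $\|\iota_\svmInd(\vmPrior) - \lltECFR_\svmInd\|_2 < r_n'$ required by Theorem \ref{llt_local}. Directly from the definitions,
\begin{align*}
\iota_\svmInd(u_{[q]})(\vmCol) = \frac{1}{\Erw[\vd_\svmInd]n}\sum_{\lltVL\in[n]}\sum_h \mu_\lltVL\vert_h(\vmCol) = \Erw[\lltRCFR_\svmInd](\vmCol) = \lltECFR_\svmInd(\vmCol)\textrm{,}
\end{align*}
so $\iota_\svmInd(u_{[q]}) = \lltECFR_\svmInd$. Continuity of $\iota$ from \Prop~\ref{pp_mr_cont}, applied uniformly on a compact neighbourhood of $u_{[q]}$ and combined with the uniform bounds on $\Sigma_{\svmInd,\vmPrior}$ from Theorem \ref{llt_global}, then yields $\|\iota_\svmInd(\vmPrior) - \lltECFR_\svmInd\|_2 = O(r_n)$ uniformly in $\svmInd \in \vmIndSp_n$ and in $\vmPrior \in \lltLatP_\svmInd$ with $\|\vmPrior - u_{[q]}\|_2 < r_n$. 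Consequently Theorem \ref{llt_local} applies and gives an asymptotic for $\pr[\lltRCFR_\svmInd = \iota_\svmInd(\vmPrior)]$ in terms of the density of $\cN(0,\Sigma_\svmInd)$ evaluated at $\sqrt{\Erw[\vd_\svmInd]n}(\iota_\svmInd(\vmPrior) - \lltECFR_\svmInd)_{[q-1]}$, with the same lattice prefactor $\prod_\vmCol h(\vmCol)/\sqrt{\Erw[\vd_\svmInd]n}^{q-1}$ that appears in the target formula.

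Next I would Taylor-expand $\iota_\svmInd$ at $u_{[q]}$. Starting from $\mu_{\vmPrior,\vmInd}(\vmAss) = Z_{\vmPrior,\vmInd}^{-1}\mu_\vmInd(\vmAss)\prod_h\vmPrior(\vmAss_h)$, a direct differentiation at $\vmPrior = u_{[q]}$ produces
\begin{align*}
\frac{\partial \iota_\svmInd(\vmPrior)(\vmCol)}{\partial \vmPrior(\vmCol')}\bigg\vert_{u} = q\,\hat\Sigma_\svmInd(\vmCol,\vmCol')\textrm{,}
\end{align*}
where $\hat\Sigma_\svmInd$ is the full $q \times q$ normalised covariance matrix of $\lltRCFA_{u,\svmInd}$, whose restriction to $[q-1] \times [q-1]$ is precisely $\Sigma_\svmInd$. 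After projecting to the tangent space of the simplex via the parametrisation by the first $q-1$ coordinates and accounting for the constraint $\sum_\vmCol\vmPrior(\vmCol) = 1$, one obtains the effective Jacobian $\tilde J$ expressed in terms of $\Sigma_\svmInd$. Substituting the linearisation $\iota_\svmInd(\vmPrior) - \lltECFR_\svmInd = \tilde J(\vmPrior - u_{[q]})_{[q-1]} + O(\|\vmPrior-u_{[q]}\|_2^2)$ into the Gaussian density supplied by Theorem \ref{llt_local}, and passing from the parametrisation by $(\vmPrior - u_{[q]})$ to $(q\vmPrior - 1_{[q]})$ via the standard change-of-variable identities for Gaussians, yields the target form with prefactor $\det(\Sigma_\svmInd^{-1})$ and $\phi_\svmInd$ the density of $\cN(0,\Sigma_\svmInd^{-1})$.

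The main obstacle is the bookkeeping of error terms. The contribution of the quadratic Taylor remainder to the argument of the Gaussian is $O(\sqrt{\Erw[\vd_\svmInd]n}\,r_n^2)$, which after multiplication by the gradient of the quadratic form at the linear approximation yields a multiplicative error of $\exp(O(nr_n^3))$ in the density. Under the hypothesis $n\thiB_nr_n^3 = o(1)$ this is exactly absorbed into the stated error term $(1+O((\sqrt{\ln n/n}^3+r_n^3)\thiB_nn))$. The uniformity of this remainder over all $\svmInd \in \vmIndSp_n$ follows from the uniform boundedness of the higher derivatives of $\iota_\svmInd$, which in turn is guaranteed by {\bf VAR}, {\bf SKEW} and the uniform estimates on $\Sigma_{\svmInd,\vmPrior}$ that were already obtained in Lemma \ref{llt_global_second_order}.
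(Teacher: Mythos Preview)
Your reduction is circular. In the paper's logical structure, Proposition~\ref{llt_local_p} is the \emph{first step} in the proof of Theorem~\ref{llt_local}: one first establishes the local limit theorem for $\vmRPrior_\svmInd$ around $u_{[q]}$ and only afterwards translates it to $\lltRCFR_\svmInd$ via the linearisation of $\iota_\svmInd$. So invoking Theorem~\ref{llt_local} to prove Proposition~\ref{llt_local_p} assumes exactly what the proposition is meant to supply. The Taylor expansion of $\iota_\svmInd$ that you carry out is essentially the paper's second step, run in reverse; it is correct in spirit but proves nothing without an independent input.

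The right starting point is Theorem~\ref{llt_global}, which expresses $\pr[\lltRCFR_\svmInd=\lltCFR]$ in terms of the exponent $n\,\Erw[\KL{\mu_{\vmPrior,\vmRInd_\svmInd}}{\mu_{\vmRInd_\svmInd}}]$ and the determinant $\det(\Sigma_{\svmInd,\vmPrior})$. One then Taylor-expands $\alpha_\svmInd(\vmPrior)=\Erw[\vd_\svmInd]^{-1}\Erw[\KL{\mu_{\vmPrior,\vmRInd_\svmInd}}{\mu_{\vmRInd_\svmInd}}]$ in $\vmPrior$ around $u_{[q]}$: the first derivatives vanish (since $u_{[q]}$ is the global minimiser), the Hessian equals $q^2\Sigma^\circ_{\svmInd,u_{[q]}}$, and an explicit computation of the third partial derivatives (involving the third centred moments $S^\circ_{\svmInd,\vmPrior}$) shows they are $O(\thiB_n)$ uniformly, giving the $O(n\thiB_nr_n^3)$ error in the exponent. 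Passing from the full $q\times q$ matrix $\Sigma^\circ_{\svmInd,u_{[q]}}$ to $\Sigma_\svmInd$ via the basis $B=(e_{[q],\vmCol}-e_{[q],q})_{\vmCol\in[q-1]}$ and the identity $B^\matTr\Sigma^\circ_{\svmInd,u_{[q]}}B=\Sigma_\svmInd$ yields the quadratic form $q^2(\vmPrior-u_{[q]})_{[q-1]}^\matTr\Sigma_\svmInd(\vmPrior-u_{[q]})_{[q-1]}$, i.e.\ precisely $\phi_\svmInd$ evaluated at $\sqrt{\Erw[\vd_\svmInd]n}(q\vmPrior-1_{[q]})_{[q-1]}$. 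A separate zeroth-order expansion of $\det(\Sigma_{\svmInd,\vmPrior})$ around $u_{[q]}$, again with derivative bound $O(\thiB_nr_n)=O(n\thiB_nr_n^3)$, furnishes the remaining factor $\det(\Sigma_\svmInd^{-1})$.
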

\begin{proof}
With Theorem \ref{llt_global} we expand the exponent to second order, control the resulting errors and proceed analogously for the determinant by expanding to zeroth order.
For given $\svmInd$ and $\vmPrior$ let $\alpha_{\svmInd}(\vmPrior)=\Erw[\vd_\svmInd ]^{-1}\Erw[\KL{\mu_{\vmPrior,\svmInd }}{\mu_{\svmInd }}]$, notice that $\KL{\mu_{\vmPrior,\vmInd}}{\mu_\vmInd}=0$ if $d_\vmInd=0$ and further
\begin{align*}
\alpha_\svmInd(\vmPrior)=\Erw[\vd_\svmInd ]^{-1}\Erw\left[\sum_{\vmCol\in[q]}\lltECFA^*_{\vmPrior,\vmRInd_\svmInd}(\vmCol)\ln(\vmPrior(\vmCol))-\ln(Z_{\vmPrior,\vmRInd_\svmInd})\right]=\sum_\vmCol\lltECFR_{\svmInd,\vmPrior}(\vmCol)\ln(\vmPrior(\vmCol))-\frac{\Erw[\ln(Z_{\vmPrior,\vmRInd_\svmInd})]}{\Erw[\vd_\svmInd]}\textrm{,}
\end{align*}
thereby removing dependencies on the product spaces $[q]^d$.
Recall that $\vmPrior=u_{[q]}$ is the unique global minimizer of $\alpha_\svmInd$ as discussed in Section \ref{vom}, hence the first derivatives vanish. For the sake of completeness and later use we provide the derivatives.
For transparency we use the shorthand $f^{(\vmCol_1,\dots,\vmCol_k)}=\frac{\partial f}{\partial \vmPrior(\vmCol_k)\cdots\partial \vmPrior(\vmCol_1)}$ to denote the $k$-th partial derivatives of the extension of a map $f:\cP^{\circ}([q])\rightarrow\RR$ to $\RR_{>0}^q$.
With the shorthand $\Sigma^\circ_{\vmPrior,\vmInd}=\Cov(\lltRCFA^*_{\vmPrior,\vmInd})$ (as opposed to $\Sigma^*_{\vmPrior,\vmInd}=\Cov(\lltRCFA^*_{\vmPrior,\vmInd,[q-1]})$ in Section \ref{llt_global_prop_proof})
and for $\vmInd\in\vmIndSp$ with $d_\vmInd>0$ the derivatives at $\vmPrior\in\cP^\circ([q])$ are
\begin{align*}
Z_{\vmPrior,\vmInd}^{(\vmCol)}=\frac{Z_{\vmPrior,\vmInd}\lltECFA^*_{\vmPrior,\vmInd}(\vmCol)}{\vmPrior(\vmCol)}\textrm{, }
\lltECFA^{*(\vmCol_2)}_{\vmPrior,\vmInd}(\vmCol_1)=\frac{\Sigma^\circ_{\vmPrior,\vmInd,\vmCol_1,\vmCol_2}}{\vmPrior(\vmCol_2)}\textrm{.}
\end{align*}
With $L_\vmPrior=(\ln(\vmPrior(\vmCol)))_\vmCol $ and $\Sigma^\circ_{\svmInd,\vmPrior}=\frac{1}{\Erw[\vd_\svmInd ]}\Erw[\Sigma^\circ_{\vmPrior,\vmRInd_\svmInd}]$, i.e.~$\Sigma_{\svmInd,\vmPrior}=(\Sigma^\circ_{\svmInd,\vmPrior})_{[q-1]\times[q-1]}$, this gives
\begin{align*}
\lltECFR^{(\vmCol_2)}_{\svmInd,\vmPrior}(\vmCol_1)=\frac{\Sigma^\circ_{\svmInd,\vmPrior,\vmCol_1,\vmCol_2}}{\vmPrior(\vmCol_2)}\textrm{, }
\alpha^{(\vmCol)}_{\svmInd}(\vmPrior)=\frac{\sum_{\vmCol'}\Sigma^\circ_{\svmInd,\vmPrior,\vmCol,\vmCol'}L_{\vmPrior,\vmCol'}}{\vmPrior(\vmCol)}\textrm{.}
\end{align*}
For given $\vmCol\in[q]^3$ we have $\Sigma_{\vmPrior,\vmInd,\vmCol_1,\vmCol_2}^{\circ(\vmCol_3)}=\frac{S^\circ_{\vmPrior,\vmInd,\vmCol}}{\vmPrior(\vmCol_3)}$ with $S^\circ_{\vmPrior,\vmInd,\vmCol}=\Erw[\prod_i(\lltRCFA^*_{\vmPrior,\vmInd}(\vmCol_i)-\lltECFA^*_{\vmPrior,\vmInd}(\vmCol_i))]$.
Hence, with $S_{\svmInd,\vmPrior}^\circ=\frac{1}{\Erw[\vd_\svmInd ]}\Erw[S^\circ_{\vmPrior,\vd_\svmInd }]$ the derivatives on the next level are given by
\begin{align*}
\lltECFR^{(\vmCol_2,\vmCol_3)}_{\svmInd,\vmPrior}(\vmCol_1)&=\frac{S^\circ_{\svmInd,\vmPrior,\vmCol}-\delta_{\vmCol_2,\vmCol_3}\Sigma^{\circ}_{\svmInd,\vmPrior,\vmCol_1,\vmCol_2}}{\vmPrior(\vmCol_2)\vmPrior(\vmCol_3)}\textrm{, }
\alpha^{(\vmCol_1,\vmCol_2)}_{\svmInd}(\vmPrior)
=\frac{\Sigma^\circ_{\svmInd,\vmPrior,\vmCol}+\sum_{\vmCol'}(S^\circ_{\svmInd,\vmPrior,\vmCol,\vmCol'}-\delta_{\vmCol_1,\vmCol_2}\Sigma^\circ_{\svmInd,\vmPrior,\vmCol_1,\vmCol'})L_{\vmPrior,\vmCol'}}{\vmPrior(\vmCol_1)\vmPrior(\vmCol_2)}\end{align*}
using the Kronecker symbol.
For $\vmCol\in[q]^4$ we have $S_{\vmPrior,\vmInd,\vmCol_1,\vmCol_2,\vmCol_3}^{\circ(\vmCol_4)}=\frac{F^\circ_{\vmPrior,\vmInd,\vmCol}}{\vmPrior(\vmCol_4)}$ with
\begin{align*}
F^{\circ}_{\vmPrior,\vmInd,\vmCol}=\Erw\left[\prod_{i\in[4]}(\lltRCFA^*_{\vmPrior,\vmInd}(\vmCol_i)-\lltECFA^*_{\vmPrior,\vmInd}(\vmCol_i))\right]-\sum_{i\in[3]}\Sigma^\circ_{\vmPrior,\vmInd,\vmCol_i,\vmCol_4}\Sigma^\circ_{\vmPrior,\vmInd,\vmCol_{[3]\setminus\{i\}}}\textrm{,}
\end{align*}
recall that $\Sigma^\circ_{\vmPrior,\vmInd}$ is symmetric, so e.g.~$\Sigma^\circ_{\vmPrior,\vmInd,\vmCol_1,\vmCol_2}=\Sigma^\circ_{\vmPrior,\vmInd,\vmCol_2,\vmCol_1}$, and hence $F^{\circ}_{\vmPrior,\vmInd}$ is symmetric in that $F^\circ_{\vmPrior,\vmInd,\vmCol\circ\sigma}=F^\circ_{\vmPrior,\vmInd,\vmCol}$ for all permutations $\sigma:[4]\rightarrow[4]$.
With $F^\circ_{\svmInd,\vmPrior,\vmCol}=\frac{1}{\Erw[\vd_\svmInd ]}\Erw[F^\circ_{\vmPrior,\vmRInd_\svmInd ,\vmCol}]$ and $\vmCol\in[q]^3$ this yields
\begin{align*}
\alpha_{\svmInd}^{(\vmCol)}(\vmPrior)&=\frac{2S^\circ_{\svmInd,\vmPrior,\vmCol}+
\sum_{\vmCol'}F^\circ_{\svmInd,\vmPrior,\vmCol,\vmCol'}L_{\vmPrior,\vmCol'}
-\alpha^{(\vmCol)}_{2,\svmInd}(\vmPrior)
+2\delta_{\vmCol_1,\vmCol_2}\delta_{\vmCol_1,\vmCol_3}\sum_{\vmCol'}\Sigma^\circ_{\svmInd,\vmPrior,\vmCol_1,\vmCol'}L_{\vmPrior,\vmCol'}}{\vmPrior(\vmCol_1)\vmPrior(\vmCol_2)\vmPrior(\vmCol_3)}\textrm{,}\\
\alpha^{(\vmCol)}_{2,\svmInd}(\vmPrior)&=\sum_{i<j}\delta_{\vmCol_i,\vmCol_j}\left(\Sigma^\circ_{\svmInd,\vmPrior,\vmCol_i,\vmCol_{[3]\setminus\{i,j\}}}+\sum_{\vmCol'}S^\circ_{\svmInd,\vmPrior,\vmCol_i,\vmCol_{[3]\setminus\{i,j\}},\vmCol'}L_{\vmPrior,\vmCol'}\right)
\end{align*}
Recall that all $\Sigma^\circ_{\svmInd,\vmPrior}$, $S^\circ_{\svmInd,\vmPrior}$ and $F^\circ_{\svmInd,\vmPrior}$ are invariant to permutations of the indicies and further $\sum_{\vmCol'}\Sigma^\circ_{\svmInd,\vmPrior,\vmCol,\vmCol'}=0$, $\sum_{\vmCol'}S^\circ_{\svmInd,\vmPrior,\vmCol,\vmCol'}=0$, $\sum_{\vmCol'}F^\circ_{\svmInd,\vmPrior,\vmCol,\vmCol'}=0$ for all suitable $\vmCol$ respectively, i.e.~the ``column'' sum for any given dimension and choice of remaining indicies vanishes.
On the one hand, since $L_{\vmPrior}\equiv\ln(q^{-1})$ for $\vmPrior=u_{[q]}$ all inner products involving $L_\vmPrior$ vanish, i.e. all first derivatives vanish and further $(\alpha_{\svmInd}^{(\vmCol)}(u_{[q]}))_\vmCol =q^2\Sigma^\circ_{\svmInd,u_{[q]}}$.
On the other hand, this means that the inner product with $L_\vmPrior$ equals the inner product with $L_\vmPrior+c1_{[q]}$ for any $c\in\RR$. Since we discuss $\alpha_{\svmInd}$ locally around $\vmPrior=u_{[q]}$ we choose $c=\ln(q)$ and let $L^\circ_\vmPrior=L_\vmPrior+c1_{[q]}=(\ln(\vmPrior(\vmCol)/q^{-1}))_\vmCol$.

Now, since $r_n=o(n^{-1/3})$ we can fix any small compact neighbourhood $\cP^*\subseteq\cP^\circ([q])$ of $u_{[q]}$ to obtain $\cB_n\subseteq\cP^*$ with $\cB_n=\cB_{r_n}(u_{[q]})$ for $n\in\ZZ_{\ge n_0}$ and some $n_0\in\ZZ_{>0}$, so in particular we get some $\lltEpsP\in(0,q^{-1})$ close to $q^{-1}$ with $\vmPrior(\vmCol)\ge\lltEpsP$ for all $\vmPrior\in\cB_n$ and $n\in\ZZ_{\ge n_0}$.
This e.g.~takes care of the denominator of the third partial derivatives.

Recall from the proof of Lemma \ref{llt_global_atoms} that $\|\lltRCFA^*_{\vmPrior,\vmInd}-\lltECFA^*_{\vmPrior,\vmInd}\|_\infty\le d_\vmInd$ almost surely for all $\vmInd\in\vmIndSp$ and $\vmPrior\in\cP([q])$, and further that $\Erw[\vd_\svmInd ]\ge\epsGen$. This gives $|\Sigma^\circ_{\vmPrior,\vmInd,\vmCol}|\le d_\vmInd^2$, $|S^\circ_{\vmPrior,\vmInd,\vmCol}|\le d_\vmInd^3$ and $|F^{\circ}_{\vmPrior,\vmInd,\vmCol}|\le 4d_\vmInd^4$ for all suitable $\vmCol$ respectively, and uniformly in $\vmPrior$ and $\vmInd$, so
$|\Sigma^\circ_{\svmInd,\vmPrior,\vmCol}|\le\epsGen^{-1}\Erw[\vd_\svmInd ^2]\le\epsGen^{-1}\secB$,
$|S^\circ_{\svmInd,\vmPrior,\vmCol}|\le\epsGen^{-1}\thiB_n$ and
$|F^\circ_{\svmInd,\vmPrior,\vmCol}|\le 4\epsGen^{-1}\thiB_nd_{\max,\svmInd}$ uniformly in $\vmPrior$ and $\svmInd$, where we recall $d_{\max,\svmInd}\in\ZZ_{>0}$ from the proof of Lemma \ref{llt_global_second_order}, in particular that $d_{\max,\svmInd}^3\le n\thiB_n$ and hence $d_{\max,\svmInd}r_n=o(1)$.
Finally, due to the restriction to $\cP^*$ and with equivalence of norms we get a global constant $c_{\mathrm{l}}\in\RR_{>1}$ with $\|L^\circ_\vmPrior\|_1\le c_{\mathrm{l}}\|\vmPrior-u_{[q]}\|_2\le c_{\mathrm{l}}r_n$ for all $\vmPrior\in\cB_n$ and $n\in\ZZ_{\ge n_0}$. Using these bounds we get $\alpha^{(\vmCol)}_{\svmInd}(\vmPrior)=O(\thiB_n)$ with the order given by the first contribution and uniformly in $\vmCol\in[q]^3$, $\svmInd\in\vmIndSp_n$ and $\vmPrior\in\cB_n$.
Now, Taylor's theorem with equivalence of norms yields
\begin{align*}
\left|\Erw[\vd_\svmInd ]n\alpha_{\svmInd}(\vmPrior)-\frac{q^2}{2}\Erw[\vd_\svmInd ]n(\vmPrior-u_{[q]})^{\matTr}\Sigma^\circ_{\svmInd,u_{[q]}}(\vmPrior-u_{[q]})\right|=O\left(n\thiB r_n^3\right)
\end{align*}
uniformly in $\vmPrior\in\cB_n$ and $\svmInd\in\vmIndSp_n$.
Now, for $\vmCol\in[q-1]$ let $b_\vmCol =e_{[q],\vmCol}-e_{[q],q}$ denote a basis of $1_{[q]}^{\perp}$ and further $B=(b_{\vmCol}(\vmCol^*))_{\vmCol^*\in[q],\vmCol\in[q-1]}$ the corresponding transformation, then $(\vmPrior-u_{[q]})=B(\vmPrior-u_{[q]})_{[q-1]}$ and the precision matrix of our normal distribution is given by $B^{\matTr}\Sigma^\circ_{\svmInd,u_{[q]}}B$.
On the other hand, since $1_{[q]}$ is both a row and column eigenvector of $\Sigma^\circ_{\svmInd,u_{[q]}}$ with eigenvalue $0$ we have $\Sigma^\circ_{\svmInd,u_{[q]}}=B\Sigma_{\svmInd,u_{[q]}}B^{\matTr}=B\Sigma_{\svmInd}B^{\matTr}$.
Hence, with $B^{\matTr}B=I_{[q-1]}+1_{[q-1]}1_{[q-1]}^{\matTr}$ we obtain $B^{\matTr}\Sigma^\circ_{\svmInd,u_{[q]}}B=\Sigma_{\svmInd}$.

With the exponent in place we turn to the asymptotics of the determinant $f_{\svmInd}(\vmPrior)=\det(\Sigma_{\svmInd,\vmPrior})$.
Interpreting the matrix entries $\Sigma_{\svmInd,\vmPrior,\vmCol}=\Sigma^\circ_{\svmInd,\vmPrior,\vmCol}$, $\vmCol\in[q]^2$, as functions in $\vmPrior$ the discussion above shows that $|\Sigma_{\svmInd,\vmPrior,\vmCol}-\Sigma_{\svmInd,u_{[q]},\vmCol}|=O(\thiB r_n)$ uniformly in $\vmCol\in[q]^2$, $\vmPrior\in\cB_n$ and $\svmInd\in\vmIndSp_n$.
Due to the assumption $nr_n^2=\Omega(1)$ we have $\thiB r_n=O(n\thiB r_n^3)$.
Using the Leibniz formula to view $f_{\svmInd}(\vmPrior)$ as a polynomial and taking derivatives in $\Sigma_{\svmInd,\vmPrior,\vmCol}$, $\vmCol\in[q-1]^2$ (as opposed to $\vmPrior(\vmCol)$) we obtain
\begin{align*}
|f_{\svmInd}(\vmPrior)-f_{\svmInd}(u_{[q]})|=O\left(\left(\sum_{\vmCol}\left|\Sigma_{\svmInd,\vmPrior,\vmCol}\right|\right)^{q-2}\sum_\vmCol |\Sigma_{\svmInd,\vmPrior,\vmCol}-\Sigma_{\svmInd,u_{[q]},\vmCol}|\right)
=O\left(n\thiB r_n^3\right)
\end{align*}
since we already showed that $\left|\Sigma_{\svmInd,\vmPrior,\vmCol}\right|\le\epsGen^{-1}\secB$ uniformly in $\vmCol\in[q-1]^2$, $\vmPrior\in\cB_n$ and $\svmInd\in\vmIndSp_n$. With $\det(\Sigma_{\svmInd})=\Theta(1)$ uniformly in $\svmInd\in\vmIndSp_n$ as derived in the proof of Theorem \ref{llt_global} the assertion follows.
\end{proof}
In the remainder of the proof we approximate $\iota_{\svmInd}$ to first order and control the errors in the exponent, while the remainder already agrees with the assertion in Theorem \ref{llt_local}.
In the proof of Proposition \ref{llt_local_p} we have already established the first and second partial derivatives of $\iota_{\svmInd}(\vmPrior)=\lltECFR_{\svmInd,\vmPrior}$, and further the bounds on $\vmPrior$, $S^\circ_{\svmInd,\vmPrior}$ and $\Sigma^\circ_{\svmInd,\vmPrior}$ required to derive $\lltECFR_{\svmInd,\vmPrior}^{(\vmCol_2,\vmCol_3)}(\vmCol_1)=O(\thiB_n)$ uniformly for all $\vmCol\in[q]^3$, $\vmPrior\in\cB_n$ and $\svmInd\in\vmIndSp_n$. Hence, Taylor's theorem yields
\begin{align*}
\|\lltECFR_{\svmInd,\vmPrior}-\tilde\lltCFR_{\svmInd,\vmPrior}\|=O\left(\thiB_nr_n^2\right)\textrm{, }
\tilde\lltCFR_{\svmInd,\vmPrior}=\lltECFR_{\svmInd}+q\Sigma^\circ_{\svmInd,u_{[q]}}(\vmPrior-u_{[q]})\textrm{,}
\end{align*}
uniformly for all $\vmPrior\in\cB_n$ and $\svmInd\in\vmIndSp_n$, with $\thiB_nr_n^2=o(\frac{1}{r_nn})=o(\sqrt{n}^{-1})$ since $r_n=\Omega(\sqrt{n}^{-1})$.
Recall that the eigenvalues of $\Sigma^\circ_{\svmInd,u_{[q]}}$ can be upper bounded by $\Erw[\vd_\svmInd ^2]/\Erw[\vd_\svmInd ]$, and are hence uniformly bounded, and further that $1_{[q]}$ is an eigenvector of $\Sigma^\circ_{\svmInd,u_{[q]}}$ (with eigenvalue $0$), so $\Sigma^\circ_{\svmInd,u_{[q]}}$ maps $1_{[q]}^\perp$ into $1_{[q]}^{\perp}$. This shows that (for large enough $n$) the linear approximation $\tilde\lltCFR_{\svmInd,\vmPrior}$ is in $\cP^\circ([q])$ with $\|\tilde\lltCFR_{\svmInd,\vmPrior}-\lltECFR_{\svmInd}\|=O(r_n)$.
With $\tilde\lltCFR_{\svmInd,\vmPrior}\in\cP^\circ([q])$ we can safely project onto the first $(q-1)$ coordinates to obtain $(\tilde\lltCFR_{\svmInd,\vmPrior}-\lltECFR_{\svmInd})_{[q-1]}=q\tilde B^{\matTr}\Sigma^\circ_{\svmInd,u_{[q]}}B(\vmPrior-u_{[q]})_{[q-1]}$ with $B$ introduced in the proof of Proposition \ref{llt_local_p} and $\tilde B=(e_{[q],\vmCol}(\vmCol^*))_{\vmCol^*\in[q],\vmCol\in[q-1]}$.
With $\Sigma^\circ_{\svmInd,u_{[q]}}=B\Sigma_{\svmInd}B^{\matTr}$, $\tilde B^{\matTr}B=I_{[q-1]}$ and $B^{\matTr}B=I_{[q-1]}+1_{[q-1]}1_{[q-1]}^{\matTr}$ we have $(\tilde\lltCFR_{\svmInd,\vmPrior}-\lltECFR_{\svmInd})_{[q-1]}=q\Sigma_{\svmInd}(\vmPrior-u_{[q]})_{[q-1]}$.
For one, we already obtained uniform bounds on the eigenvalues of $\Sigma_{\svmInd}$ in the proof of Theorem \ref{llt_global} and hence a constant $c\in(0,1)$ such that $c\|(\vmPrior-u_{[q]})_{[q-1]}\|\le\|(\tilde\lltCFR_{\svmInd,\vmPrior}-\lltECFR_{\svmInd})_{[q-1]}\|\le c^{-1}\|(\vmPrior-u_{[q]})_{[q-1]}\|$ for all $\vmPrior\in\cP^*$, $\svmInd\in\vmIndSp_n$, $n\in\ZZ_{>0}$ whenever $\tilde\lltCFR_{\svmInd,\vmPrior}\in\cP([q])$. More than that, this map is invertible and allows to substitute $q(\vmPrior-u_{[q]})_{[q-1]}$ in the exponent $\alpha_{\mathrm{p},\svmInd}(\vmPrior)$, i.e.
\begin{align*}
\alpha_{\mathrm{p},\svmInd}(\vmPrior)=-\frac{1}{2}\Erw[\vd_\svmInd ]n q^2(\vmPrior-u_{[q]})_{[q-1]}^{\matTr}\Sigma_{\svmInd}(\vmPrior-u_{[q]})_{[q-1]}
=-\frac{1}{2}\Erw[\vd_\svmInd ]n(\tilde\lltCFR_{\svmInd,\vmPrior}-\lltECFR_{\svmInd})_{[q-1]}^{\matTr}\Sigma_{\svmInd}^{-1}(\tilde\lltCFR_{\svmInd,\vmPrior}-\lltECFR_{\svmInd})_{[q-1]}\textrm{.}
\end{align*}
Using the bounds on the fluctuations $\|\lltECFR_{\svmInd,\vmPrior}-\tilde\lltCFR_{\svmInd,\vmPrior}\|$ this gives
\begin{align*}
|\alpha_{\mathrm{r},\svmInd}(\vmPrior)-\alpha_{\mathrm{p},\svmInd}(\vmPrior)|
&=\frac{1}{2}\Erw[\vd_\svmInd ]n\left|(\lltECFR_{\svmInd,\vmPrior}-\tilde\lltCFR_{\svmInd,\vmPrior})_{[q-1]}^{\matTr}\Sigma_{\svmInd}^{-1}(\lltECFR_{\svmInd,\vmPrior}-\tilde\lltCFR_{\svmInd,\vmPrior})_{[q-1]}\right|
=O\left((\thiB_n)^2r_n^4n\right)=o\left(\thiB_nr_n^3n\right)\textrm{,}\\
\alpha_{\mathrm{r},\svmInd}(\vmPrior)&=-\frac{1}{2}\Erw[\vd_\svmInd ]n(\lltECFR_{\svmInd,\vmPrior}-\lltECFR_{\svmInd})_{[q-1]}^{\matTr}\Sigma_{\svmInd}^{-1}(\lltECFR_{\svmInd,\vmPrior}-\lltECFR_{\svmInd})_{[q-1]}\textrm{,}
\end{align*}
since $r_n^2n=\Omega(1)$ and hence $\thiB_nr_n=O(\thiB_nr_n^3n)=o(1)$.
Hence, the relative error made by approximating the exponent $\alpha_{\mathrm{\vmPrior},\svmInd}(\vmPrior)$ with $\alpha_{\mathrm{r},\svmInd}(\vmPrior)$ is strictly smaller than the existing bound.

To be thorough, fix a sequence of radii $r_n$ with $r_n^2n=\Omega(1)$ and $\thiB r_n^3n=o(1)$, bounding the fluctuations $\|\lltCFR-\lltECFR_{\svmInd}\|_2$ (as opposed to $\|\vmPrior-u_{[q]}\|$ which was the case so far) and let $\cB_{\svmInd}=\cB_{r_n}(\lltECFR_{\svmInd})$ be the corresponding ball.
Let $c^*\in\RR_{>0}$ be large and $r'_n=c^*r_n$, then our existing results hold for $r'_n$ respectively $\cB'_n=\cB_{r'_n}(u_{[q]})$.
The two-sided bounds for $\tilde\lltCFR_{\svmInd,\vmPrior}$ imply that all $\tilde\lltCFR\in\tilde{\cB}_{\svmInd}$, $\tilde{\cB}_{\svmInd}=\cB_{2r_n}(\lltECFR_{\svmInd})$, are covered by $\cB'_n$. But since the fluctuations $\|\lltECFR_{\svmInd,\vmPrior}-\tilde\lltCFR_{\svmInd,\vmPrior}\|=o(\sqrt{n}^{-1})=o(r_n)$ are very small, all $\lltCFR\in\cB_{\svmInd}$ are covered by $\cB'_n$, which completes the proof since $r_n=\Theta(r'_n)$.

\section{Assignment distributions} \label{assd}

This section extends the results from Section \ref{vom} and Section \ref{llt}.
For this purpose fix a non-trivial family $(d_\vmInd,\mu_\vmInd)_{\vmInd\in\vmIndSp}$ satisfying {\bf SPAN} and a sequence $(\vmIndSp_n)_{n\in\ZZ_{>0}}\subseteq\vmIndSp^n$ satisfying {\bf GEN}, {\bf VAR} and {\bf SKEW}.
For $\vmIndP\in\vmIndPSp$ and $\vmPrior\in\cP([q])$ let the expected assignment distribution $\acEAD_{\vmIndP,\vmPrior}\in\cP(\acSupp)$ be given by $\acEAD_{\vmIndP,\vmPrior}(\vmInd,\vmAss)=\vmIndP(\vmInd)\mu_{\vmPrior,\vmInd}(\vmAss)$ for $(\vmInd,\vmAss)\in\acSupp$ with $\acSupp=\{(\vmInd,\vmAss):\vmInd\in\vmIndSp,\vmAss\in[q]^{d_\vmInd}\}$. As before, we usually omit the subscript $\vmPrior$ if $\vmPrior=u_{[q]}$.
We consider the distributions $\acEAD_{\vmIndP,\vmPrior}$ elements of $\acADSp=\{\alpha\in\cP(\acSupp):\alpha\vert_1\in\vmIndPSp\}$ equipped with the metric
\begin{align*}
\acADD(\acAD,\acAD')
=\sum_{\vmInd\in\vmIndSp}(d_\vmInd+1)\sum_{\vmAss\in[q]^{d_\vmInd}}\left|\acAD(\vmInd,\vmAss)-\acAD'(\vmInd,\vmAss)\right|
\end{align*}
for $\acAD$, $\acAD'\in\acADSp$.
Further, based on the insights from Section \ref{llt} we let $\lltECFR_{\vmPrior,\vmIndP}=\iota_\vmIndP(\vmPrior)$ to stress the interpretation as expected colour frequencies and recall that $\lltECFR_{\vmPrior,\vmIndP_\svmInd}=\lltECFR_{\vmPrior,\svmInd}$ for non-trivial sequences $\svmInd\in\vmIndSp^n$.

For $n\in\ZZ_{>0}$, non-trivial $\svmInd\in\vmIndSp^n$ and $\lltHEA\in\prod_\lltVL[q]^{d_\lltVL}$ let $\acAD_\lltHEA\in\acADSp$ denote the assignment frequencies, i.e.
\begin{align*}
\acAD_\lltHEA(\vmInd,\vmAss)=\frac{1}{n}\left|\left\{\lltVL\in[n]:\vmInd_\lltVL=\vmInd,\vmAss_\lltVL=\vmAss\right\}\right|
\end{align*}
for $(\vmInd,\vmAss)\in\acSupp$, where we keep the dependence on $\svmInd$ implicit.
Finally, for $\vmPrior\in\cP([q])$ and $\lltCFR$ in the support of $\lltRCFR_{\vmPrior,\svmInd}$ we let $\lltRHEA_{\vmPrior,\svmInd,\lltCFR}=(\lltRHEA_{\vmPrior,\svmInd}|\lltRCFR_{\vmPrior,\svmInd}=\lltCFR)$, further $\acRAD_{\vmPrior,\svmInd}=\acAD_{\lltRHEA_{\vmPrior,\svmInd}}$ and $\acRAD_{\vmPrior,\svmInd,\lltCFR}=\acAD_{\lltRHEA_{\vmPrior,\svmInd,\lltCFR}}$.
The main result of this section ensures that, given colour frequencies $\lltCFR$ close to their expectation $\lltECFR_\vmIndP$ and a sequence $\svmInd$ with frequencies $\vmIndP_\svmInd$ close to the reference $\vmIndP$, the assignment distribution $\acRAD_{\svmInd,\lltCFR}$ is close to the expected unconditional assignment distribution $\acEAD_\vmIndP$ of the reference $\vmIndP$ with very high probability.
\begin{proposition}\label{ac_main_prop}
Fix $(\vmIndSp_n)_n\subseteq\vmIndSp^n$ satisfying {\bf SPAN}, {\bf GEN}, {\bf VAR} and {\bf SKEW}, a reference distribution $\vmIndP\in\vmIndPSp$ and $\varepsilon\in\RR_{>0}$.
Then there exists $\delta$, $c$, $c'\in\RR_{>0}$ such that
for all $n\in\ZZ_{>0}$, all $\svmInd\in\vmIndSp_n$ with $\vmIndP_\svmInd\in\cB_\delta(\vmIndP)$, and
all $\lltCFR\in\cB_\delta(\lltECFR_{\vmIndP})$ in the support of $\lltRCFR_{\svmInd}$ we have
\begin{align*}
\pr\left[\acADD(\acRAD_{\svmInd,\lltCFR},\acEAD_\vmIndP)\ge\varepsilon\right]\le c'\exp(-cn)\textrm{.}
\end{align*}
\end{proposition}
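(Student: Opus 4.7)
The plan is to reduce the stated concentration bound to an unconditional concentration statement for a product measure via the measure change from Lemma~\ref{llt_global_measure_switch}, and then to handle the resulting conditioning by the polynomial lower bound on the colour frequency supplied by the local limit theorem Proposition~\ref{llt_global_prop}. Throughout I would fix $\vmPrior = \iota_\svmInd^{-1}(\lltCFR) \in \lltLatP_\svmInd$; since the Radon--Nikodym derivative in Lemma~\ref{llt_global_measure_switch} depends on $\lltHEA$ only through its colour frequencies, the conditional law of $\lltRHEA_\svmInd$ given $\lltRCFR_\svmInd = \lltCFR$ coincides with the conditional law of $\lltRHEA_{\vmPrior,\svmInd}$ given $\lltRCFR_{\vmPrior,\svmInd} = \lltCFR$, and under the \emph{product} measure $\bigotimes_\lltVL \mu_{\vmPrior,\lltVL}$ the empirical assignment distribution $\acRAD_{\vmPrior,\svmInd}$ has mean precisely $\acEAD_{\vmIndP_\svmInd,\vmPrior}$.

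I would choose $\delta > 0$ small enough to guarantee two things uniformly over admissible $\svmInd$ and $\lltCFR$: first, that $\vmPrior$ stays in a fixed compact neighbourhood $\cP^* \subseteq \cP^\circ([q])$ of $u_{[q]}$ (using the joint continuity of $\iota^{-1}$ provided by Proposition~\ref{pp_mr_cont} together with $\iota_\vmIndP(u_{[q]}) = \lltECFR_\vmIndP$), and second, that $\acADD(\acEAD_{\vmIndP_\svmInd,\vmPrior}, \acEAD_\vmIndP) < \varepsilon/2$. The latter splits by the triangle inequality into $\acADD(\acEAD_{\vmIndP_\svmInd}, \acEAD_\vmIndP) = \vmIndPD(\vmIndP_\svmInd, \vmIndP) < \delta$ and a term $\sum_\vmInd (d_\vmInd+1)\vmIndP_\svmInd(\vmInd)\sum_\vmAss |\mu_{\vmPrior,\vmInd}(\vmAss) - \mu_\vmInd(\vmAss)|$ which I would control by truncating at $\vmIndSp^{**} = \{\vmInd : d_\vmInd \le M\}$: on the tail, {\bf VAR} together with Markov applied to $\vd_\svmInd^2$ gives the uniform bound $\sum_{\vmInd \notin \vmIndSp^{**}}(d_\vmInd+1)\vmIndP_\svmInd(\vmInd) \le 2\secB/M$, while on the finite set $\vmIndSp^{**}$ the continuity of $\vmPrior \mapsto (\mu_{\vmPrior,\vmInd})_{\vmInd \in \vmIndSp^{**}}$ closes the estimate.

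With these reductions in place, Bayes yields
\[
\pr[\acADD(\acRAD_{\svmInd,\lltCFR}, \acEAD_\vmIndP) \ge \varepsilon] \le \frac{\pr[\acADD(\acRAD_{\vmPrior,\svmInd}, \acEAD_{\vmIndP_\svmInd,\vmPrior}) \ge \varepsilon/2]}{\pr[\lltRCFR_{\vmPrior,\svmInd} = \lltCFR]}.
\]
Because $\lltECFR_{\vmPrior,\svmInd} = \iota_\svmInd(\vmPrior) = \lltCFR$, Proposition~\ref{llt_global_prop} applied on $\cP^*$ lower-bounds the denominator by $\Omega(n^{-(q-1)/2})$ uniformly. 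For the numerator I would reuse the truncation $\vmIndSp^{**}$: the tail contribution to the distance is deterministically at most $O(\secB/M) < \varepsilon/4$, and on $\vmIndSp^{**}$ each coordinate $n\acRAD_{\vmPrior,\svmInd}(\vmInd,\vmAss)$ is a sum of at most $n$ independent Bernoulli$(\mu_{\vmPrior,\vmInd}(\vmAss))$ variables under the product measure, so Hoeffding gives an $\exp(-\Omega(n))$ tail. A union bound over the finite collection of pairs $(\vmInd,\vmAss)$ with $\vmInd \in \vmIndSp^{**}$ preserves the exponential rate, and the polynomial factor $n^{(q-1)/2}$ coming from the denominator is absorbed into the exponential.

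The main obstacle is the uniformity in $\svmInd \in \vmIndSp_n$: the compact set $\cP^*$ needed to apply Proposition~\ref{llt_global_prop} must be chosen \emph{once and for all} depending only on $\vmIndP$, $\varepsilon$ and the parameters $\epsGen, \secB$ from {\bf GEN}, {\bf VAR}, rather than on the individual $\svmInd$. This forces one to combine the joint continuity of $\iota$ (Proposition~\ref{pp_mr_cont}) with the $\vmIndPD$-proximity $\vmIndP_\svmInd \in \cB_\delta(\vmIndP)$ and to use {\bf VAR} uniformly for the tail truncation; once these uniform ingredients are in hand the remaining concentration step is a routine Hoeffding/union-bound exercise.
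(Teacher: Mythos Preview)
Your approach is essentially the paper's own: the measure change via Lemma~\ref{llt_global_measure_switch} (paper's Fact~\ref{ac_dist_change}), the polynomial lower bound on the conditioning event from Proposition~\ref{llt_global_prop} (paper's Fact~\ref{ac_cf_max}), unconditional Hoeffding concentration of the empirical assignment distribution under the tilted product measure (paper's Lemma~\ref{ac_loc_uncond}), and the final continuity step comparing $\acEAD_{\vmIndP_\svmInd,\vmPrior}$ to $\acEAD_\vmIndP$ all match.

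There is, however, one step that fails as written. Your truncation set $\vmIndSp^{**}=\{\vmInd:d_\vmInd\le M\}$ need not be finite in the abstract setting of Section~\ref{vom}: the index set $\vmIndSp\subseteq\ZZ_{\ge 0}$ and the degree map $\vmInd\mapsto d_\vmInd$ are decoupled, so infinitely many indices can share a bounded degree. Your union bound ``over the finite collection of pairs $(\vmInd,\vmAss)$ with $\vmInd\in\vmIndSp^{**}$'' then collapses, and even restricting to the support of $\vmIndP_\svmInd$ leaves up to $n$ indices, which kills the Hoeffding rate once you distribute the $\varepsilon/4$ budget over $\Theta(n)$ coordinates. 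The paper's Lemma~\ref{ac_loc_uncond} fixes this by truncating on \emph{frequency}: $\vmIndSp_+=\{\vmInd:\vmIndP(\vmInd)\ge\varepsilon_{\mathrm f}\}$ is automatically finite with $|\vmIndSp_+|\le 1/\varepsilon_{\mathrm f}$, and $\vmIndP_\svmInd\in\cB_\delta(\vmIndP)$ then forces $n_\vmInd\ge(\varepsilon_{\mathrm f}-\delta)n$ for each $\vmInd\in\vmIndSp_+$, giving genuine $\exp(-\Omega(n))$ Hoeffding tails. Your degree truncation \emph{is} the right device for the continuity step (since there the bound on $\|\mu_{\vmPrior,\vmInd}-\mu_\vmInd\|_1$ is uniform over $\vmInd$ with $d_\vmInd\le M$ without needing finiteness), but for the concentration step you need the frequency cut instead.
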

The proof of Proposition \ref{ac_main_prop} builds intuition for the construction in Section \ref{vom}, in particular for the distributions $\mu_{\vmPrior,\vmInd}$.
\subsection{Proof strategy}
Consider the specified $(\vmIndSp_n)_n$, $\vmIndP$ and $\varepsilon$ fixed in the remainder.
Further, for given $\lltEpsP\in\RR_{>0}$ let $\cP^*=\{\vmPrior\in\cP([q]):\min_{\vmCol\in[q]}\vmPrior(\vmCol)\ge\lltEpsP\}$ and for $\svmInd\in\vmIndSp_n$ let $\cP^*_\svmInd$ be the set of distributions $\vmPrior\in\cP^*$ with $\lltECFR_{\vmPrior,\svmInd}$ in the support of $\lltRCFR_\svmInd$.
Our first result is a corollary to Proposition \ref{llt_global_prop}.
\begin{fact}\label{ac_cf_max}
For fixed $\lltEpsP\in\RR_{>0}$ and uniformly over all $\svmInd\in\vmIndSp_n$ and $\vmPrior\in\cP^*_\svmInd$ we have
\begin{align*}
\pr[\lltRCFR_{\vmPrior,\svmInd}=\lltECFR_{\vmPrior,\svmInd}]=\Theta\left(\sqrt{n}^{-(q-1)}\right)\textrm{.}
\end{align*}
\end{fact}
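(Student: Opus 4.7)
The plan is to realise Fact~\ref{ac_cf_max} as an immediate corollary of Proposition~\ref{llt_global_prop}, with the only work being to check that every $\Theta(1)$ quantity on the right-hand side of its asymptotic formula is indeed bounded away from $0$ and $\infty$ uniformly in $\svmInd\in\vmIndSp_n$ and $\vmPrior\in\cP^*_\svmInd$. No new idea is required; the content has essentially been packed into the local limit theorem, so the difficulty is bookkeeping rather than analysis.

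First I would reconcile the parameter spaces. For fixed $\lltEpsP\in(0,q^{-1}]$, the set $\cP^*=\{\vmPrior\in\cP([q]):\min_\vmCol\vmPrior(\vmCol)\ge\lltEpsP\}$ is a closed subset of the open simplex $\cP^\circ([q])$, hence compact, so it is admissible in Proposition~\ref{llt_global_prop}. Because $\iota_\svmInd:\cP([q])\to\cP([q])$ is a bijection by Proposition~\ref{pp_mr_homeo} with $\lltECFR_{\vmPrior,\svmInd}=\iota_\svmInd(\vmPrior)$ and $\lltLatP_\svmInd=\iota_\svmInd^{-1}(\lltLatR_\svmInd)$, the condition defining $\cP^*_\svmInd$ (that $\lltECFR_{\vmPrior,\svmInd}$ lies in the support $\lltLatR_\svmInd$ of $\lltRCFR_\svmInd$) is equivalent to $\vmPrior\in\lltLatP_\svmInd$. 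Therefore $\cP^*_\svmInd=\cP^*\cap\lltLatP_\svmInd$, which is exactly the range over which Proposition~\ref{llt_global_prop} delivers the asymptotic
\begin{align*}
\pr[\lltRCFR_{\vmPrior,\svmInd}=\lltECFR_{\vmPrior,\svmInd}]
=(1+o(1))\,\frac{\prod_\vmCol h(\vmCol)}{\sqrt{\Erw[\vd_\svmInd]n}^{\,q-1}}\cdot\frac{1}{\sqrt{(2\pi)^{q-1}\det(\Sigma_{\svmInd,\vmPrior})}}
\end{align*}
uniformly.

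It then remains to verify that the three constants appearing on the right-hand side are $\Theta(1)$ uniformly. The lattice factor $\prod_\vmCol h(\vmCol)$ is a fixed positive number attached to the family $(d_\vmInd,\mu_\vmInd)_\vmInd$, hence contributes $\Theta(1)$ trivially. For the expected degree, assumption {\bf GEN} supplies some $\vmInd^\circ\in\vmIndSp^\circ$ with $d_{\vmInd^\circ}>0$ and $\pr[\vmRInd_\svmInd=\vmInd^\circ]\ge\epsGen$, giving $\Erw[\vd_\svmInd]\ge\epsGen\,d_{\vmInd^\circ}>0$, while {\bf VAR} combined with Cauchy--Schwarz gives $\Erw[\vd_\svmInd]\le\sqrt{\secB}$, so $\Erw[\vd_\svmInd]=\Theta(1)$. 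Finally, Proposition~\ref{llt_global_prop} itself guarantees $\|\Sigma_{\svmInd,\vmPrior}^{-1}\|_2^{-1},\|\Sigma_{\svmInd,\vmPrior}\|_2=\Theta(1)$ uniformly, from which $\det(\Sigma_{\svmInd,\vmPrior})=\Theta(1)$ follows via the bound $(\|\Sigma_{\svmInd,\vmPrior}^{-1}\|_2^{-1})^{q-1}\le\det(\Sigma_{\svmInd,\vmPrior})\le\|\Sigma_{\svmInd,\vmPrior}\|_2^{q-1}$. Absorbing all these $\Theta(1)$ contributions into a single constant then yields the claimed order $\Theta(\sqrt n^{-(q-1)})$, uniformly in $\svmInd$ and $\vmPrior\in\cP^*_\svmInd$.
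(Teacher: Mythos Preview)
Your proof is correct and follows the same approach as the paper: invoke Proposition~\ref{llt_global_prop} and check that the factors $\prod_\vmCol h(\vmCol)$, $\Erw[\vd_\svmInd]$ and $\det(\Sigma_{\svmInd,\vmPrior})$ on the right-hand side are all $\Theta(1)$ uniformly. Your write-up is in fact more explicit than the paper's three-line argument, spelling out the identification $\cP^*_\svmInd\subseteq\cP^*\cap\lltLatP_\svmInd$ and the eigenvalue bounds for the determinant.
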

The proof is postponed to Section \ref{ac_cf_max}. The next result deals with the unconditional case for the adjusted measures.
\begin{lemma}\label{ac_loc_uncond}
There exist constants $\delta$, $c$, $c'\in\RR_{>0}$ such that 
for all $n\in\ZZ_{>0}$, all $\svmInd\in\vmIndSp_n$ with $\vmIndP_{\svmInd}\in\cB_{\delta}(\vmIndP)$ and all $\vmPrior\in\cP^*$ we have
\begin{align*}
\pr\left[\acADD\left(\acRAD_{\vmPrior,\svmInd},\acEAD_{\vmPrior,\svmInd}\right)\ge\varepsilon\right]\le c'\exp(-cn)\textrm{.}
\end{align*}
\end{lemma}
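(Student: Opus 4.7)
The plan is to split the weighted $\ell^1$ metric $\acADD(\acRAD_{\vmPrior,\svmInd},\acEAD_{\vmPrior,\svmInd})$ into a deterministic tail controlled by the hypothesis $\vmIndP_{\svmInd} \in \cB_\delta(\vmIndP)$ and a finite-dimensional fluctuation piece to which Hoeffding's inequality applies. The main obstacle is that $\acADD$ weights possibly infinitely many indices by $(d_\vmInd+1)$, so one cannot apply a single concentration inequality on the whole infinite-dimensional object; the key observation to exploit will be that small $\vmIndPD$-distance to the fixed reference $\vmIndP$ forces $\vmIndP_{\svmInd}$ to match $\vmIndP$ coordinatewise on a finite sub-support and to carry only $O(\delta)$ weighted mass on the tail.

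Since $\vmIndP \in \vmIndPSp$ implies $\sum_{\vmInd}(d_\vmInd+1)\vmIndP(\vmInd)<\infty$, I would first fix a finite set $S \subseteq \vmIndSp$ with $\sum_{\vmInd\notin S}(d_\vmInd+1)\vmIndP(\vmInd) < \varepsilon/12$, set $S^+ = S \cap \supp\vmIndP$, $D_+ = \max_{\vmInd \in S^+} d_\vmInd$ and $p_{\min} = \min_{\vmInd \in S^+} \vmIndP(\vmInd) > 0$, and pick $\delta < \min\bc{\varepsilon/12,\ \varepsilon/(12|S|),\ p_{\min}/2}$. The coordinatewise inequality $(d_\vmInd+1)|\vmIndP(\vmInd) - \vmIndP_{\svmInd}(\vmInd)| \leq \vmIndPD(\vmIndP,\vmIndP_{\svmInd}) < \delta$ then yields (a) $\sum_{\vmInd \notin S}(d_\vmInd+1)\vmIndP_{\svmInd}(\vmInd) < \varepsilon/12 + \delta$, (b) $(d_\vmInd+1)\vmIndP_{\svmInd}(\vmInd) < \delta$ for each $\vmInd \in S \setminus S^+$, and (c) $\vmIndP_{\svmInd}(\vmInd) \geq p_{\min}/2$ for $\vmInd \in S^+$, so that the occupation count $N_\vmInd = n\vmIndP_{\svmInd}(\vmInd)$ satisfies $N_\vmInd = \Omega(n)$ on $S^+$.

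Decomposing $\acADD = T_1 + T_2 + T_3$ according to $\vmInd \notin S$, $\vmInd \in S \setminus S^+$, and $\vmInd \in S^+$, and using the elementary bound $\sum_{\vmAss}|\acRAD_{\vmPrior,\svmInd}(\vmInd,\vmAss) - \acEAD_{\vmPrior,\svmInd}(\vmInd,\vmAss)| \leq 2\vmIndP_{\svmInd}(\vmInd)$ (both marginals in $\vmAss$ equal $\vmIndP_{\svmInd}(\vmInd)$), the tail estimates (a) and (b) yield the deterministic bounds $T_1 < \varepsilon/3$ and $T_2 < \varepsilon/6$. For the fluctuation piece $T_3$, on the finite set $S^+$ each random variable $n\acRAD_{\vmPrior,\svmInd}(\vmInd,\vmAss)$ is a sum of $N_\vmInd$ independent Bernoullis with mean $N_\vmInd \mu_{\vmPrior,\vmInd}(\vmAss)$, so Hoeffding's inequality yields $\pr\bc{|\acRAD_{\vmPrior,\svmInd}(\vmInd,\vmAss) - \acEAD_{\vmPrior,\svmInd}(\vmInd,\vmAss)| > s} \leq 2\exp(-2ns^2)$.

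A union bound over the at most $|S^+|q^{D_+}$ coordinates in $T_3$, with $s = \varepsilon/(4|S^+|(D_++1)q^{D_+})$, gives $\pr\brk{T_3 > \varepsilon/2} \leq c'\exp(-cn)$ for constants $c, c' > 0$ depending only on $\varepsilon$, $|S|$, $D_+$, $p_{\min}$, and $q$. Combining with $T_1 + T_2 < \varepsilon/2$ finishes the proof, and uniformity over $\vmPrior \in \cP^*$ is automatic because the Hoeffding bound does not depend on the Bernoulli means $\mu_{\vmPrior,\vmInd}(\vmAss)$.
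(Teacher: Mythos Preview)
Your proof is correct and follows essentially the same strategy as the paper: split the weighted $\ell^1$ distance into a tail controlled deterministically via the finite weighted first moment of $\vmIndP$ together with $\vmIndPD(\vmIndP_{\svmInd},\vmIndP)<\delta$, and a finite bulk on which Hoeffding's inequality gives exponential concentration for each of the at most $q^{D_+}$ empirical frequencies per index. The paper uses a two-way split $\vmIndSp_-=\{\vmInd:\vmIndP(\vmInd)<\varepsilon_{\mathrm f}\}$ versus $\vmIndSp_+$, which automatically places indices outside $\supp\vmIndP$ into the tail, whereas you carry a separate piece $S\setminus S^+$; this is cosmetic, since you could have chosen $S\subseteq\supp\vmIndP$ from the start and merged $T_2$ into $T_1$.
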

The proof is postponed to Section \ref{ac_loc_uncond_proof}.
Combining Fact \ref{ac_cf_max} and Lemma \ref{ac_loc_uncond} allows to derive bounds for the conditional probability, still for the adjusted measures.
For this purpose let $\svmInd\in\vmIndSp_n$ and $\vmPrior\in\cP^*_\svmInd$ we let $\lltRHEA^*_{\vmPrior,\svmInd}=\lltRHEA_{\vmPrior,\svmInd,\lltECFR_{\vmPrior,\svmInd}}$
and $\acRAD^*_{\vmPrior,\svmInd}=\acAD_{\lltRHEA^*_{\vmPrior,\svmInd}}$, and
further use $\acEAD_{\vmPrior,\svmInd}=\acEAD_{P_{\svmInd},\vmPrior}$ for consistency.
\begin{lemma}\label{ac_loc}
For all $\lltEpsP\in\RR_{>0}$ there exist constants $\delta$, $c$, $c'\in\RR_{>0}$ such that
for all $n\in\ZZ_{>0}$, $\svmInd\in\vmIndSp_n$ with $\vmIndP_\svmInd\in\cB_\delta(\vmIndP)$ and all $\vmPrior\in\cP^*_\svmInd$  we have
\begin{align*}
\pr\left[\acADD(\acRAD^*_{\vmPrior,\svmInd},\acEAD_{\vmPrior,\svmInd})\ge\varepsilon\right]\le c'\exp(-cn)\textrm{.}
\end{align*}
\end{lemma}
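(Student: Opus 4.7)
The plan is to deduce Lemma~\ref{ac_loc} from Fact~\ref{ac_cf_max} and Lemma~\ref{ac_loc_uncond} by a standard conditioning argument. The point is that the event $\{\lltRCFR_{\vmPrior,\svmInd}=\lltECFR_{\vmPrior,\svmInd}\}$ on which we condition has only polynomially small probability, whereas the "bad" unconditional event $\{\acADD(\acRAD_{\vmPrior,\svmInd},\acEAD_{\vmPrior,\svmInd})\ge\varepsilon\}$ has exponentially small probability; thus the quotient remains exponentially small after we lose a polynomial factor.

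More concretely, let $\delta_0,c_0,c_0'$ be the constants supplied by Lemma~\ref{ac_loc_uncond} for the given $\lltEpsP$ and $\varepsilon$, and let $c_1>0$ be the lower constant from Fact~\ref{ac_cf_max} (so that $\pr[\lltRCFR_{\vmPrior,\svmInd}=\lltECFR_{\vmPrior,\svmInd}]\ge c_1 n^{-(q-1)/2}$ uniformly for all $\svmInd\in\vmIndSp_n$ and $\vmPrior\in\cP^*_\svmInd$). Since $\cP^*_\svmInd\subseteq\cP^*$, Lemma~\ref{ac_loc_uncond} applies whenever $\vmIndP_\svmInd\in\cB_{\delta_0}(\vmIndP)$. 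By the definition of conditional probability,
\begin{align*}
\pr\left[\acADD(\acRAD^*_{\vmPrior,\svmInd},\acEAD_{\vmPrior,\svmInd})\ge\varepsilon\right]
&=\frac{\pr\left[\acADD(\acRAD_{\vmPrior,\svmInd},\acEAD_{\vmPrior,\svmInd})\ge\varepsilon,\,\lltRCFR_{\vmPrior,\svmInd}=\lltECFR_{\vmPrior,\svmInd}\right]}{\pr\left[\lltRCFR_{\vmPrior,\svmInd}=\lltECFR_{\vmPrior,\svmInd}\right]}\\
&\le\frac{\pr\left[\acADD(\acRAD_{\vmPrior,\svmInd},\acEAD_{\vmPrior,\svmInd})\ge\varepsilon\right]}{\pr\left[\lltRCFR_{\vmPrior,\svmInd}=\lltECFR_{\vmPrior,\svmInd}\right]}
\le\frac{c_0'\exp(-c_0 n)}{c_1 n^{-(q-1)/2}}.
\end{align*}

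Setting $\delta=\delta_0$ and choosing any $c\in(0,c_0)$, the right-hand side is $O(n^{(q-1)/2}\exp(-c_0 n))$, so it is bounded by $c'\exp(-cn)$ for a suitable constant $c'$, uniformly in $n$, $\svmInd\in\vmIndSp_n$ with $\vmIndP_\svmInd\in\cB_\delta(\vmIndP)$, and $\vmPrior\in\cP^*_\svmInd$, as required. The only mild subtlety is to verify that the uniformity regimes of the two inputs match: Lemma~\ref{ac_loc_uncond} is uniform over $\vmPrior\in\cP^*$ (which contains $\cP^*_\svmInd$), while Fact~\ref{ac_cf_max} is uniform over $\vmPrior\in\cP^*_\svmInd$, so both bounds hold simultaneously on the parameter range of interest. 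There is no real obstacle here; the heavy lifting has already been done in the proofs of Fact~\ref{ac_cf_max} and Lemma~\ref{ac_loc_uncond} (the former via the local central limit theorem of Section~\ref{llt}, the latter via concentration for the adjusted product measures), and the present lemma is just a clean Bayes-style bookkeeping step bridging the unconditional and conditional regimes.
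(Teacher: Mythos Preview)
Your proof is correct and follows essentially the same approach as the paper: both arguments condition on $\{\lltRCFR_{\vmPrior,\svmInd}=\lltECFR_{\vmPrior,\svmInd}\}$, use Fact~\ref{ac_cf_max} to lower bound the conditioning probability by $\Theta(n^{-(q-1)/2})$, apply Lemma~\ref{ac_loc_uncond} to bound the unconditional bad event exponentially, and then absorb the resulting polynomial prefactor into the exponential by slightly weakening the rate and inflating $c'$ to cover small $n$. The paper is marginally more explicit about the small-$n$ bookkeeping (setting $c'=\exp(cn^*)$), but there is no substantive difference.
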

The proof is postponed to Section \ref{ac_loc_proof}.
Finally, the following fact justifies the discussion of the adjusted measures.
\begin{fact}\label{ac_dist_change}
For all $n\in\ZZ_{>0}$, $\svmInd\in\vmIndSp_n$ and $\vmPrior\in\cP^*_\svmInd$ the assignments $\acRAD_{\svmInd,\lltECFR_{\vmPrior,\svmInd}}$ and $\acRAD^*_{\vmPrior,\svmInd}$ have the same law.
\end{fact}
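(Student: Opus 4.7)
The plan is to deduce the claim directly from Lemma \ref{llt_global_measure_switch}, which compares the uniform product law with the $\vmPrior$-tilted product law on the level of single assignments. First I would fix $\svmInd\in\vmIndSp_n$ and $\vmPrior\in\cP^*_\svmInd$, set $\lltCFR^*=\lltECFR_{\vmPrior,\svmInd}=\iota_\svmInd(\vmPrior)$, and define the fibre
\begin{align*}
\cS_{\lltCFR^*}=\left\{\lltHEA\in\prod_{\lltVL\in[n]}[q]^{d_\lltVL}:\lltCFR_\lltHEA=\lltCFR^*\right\}\textrm{.}
\end{align*}
By the very definition of $\cP^*_\svmInd$ the set $\cS_{\lltCFR^*}$ is non-empty, and both $\lltRHEA_{\svmInd,\lltCFR^*}=(\lltRHEA_\svmInd\mid\lltRCFR_\svmInd=\lltCFR^*)$ and $\lltRHEA^*_{\vmPrior,\svmInd}=(\lltRHEA_{\vmPrior,\svmInd}\mid\lltRCFR_{\vmPrior,\svmInd}=\lltCFR^*)$ are supported on $\cS_{\lltCFR^*}$.

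The key observation is that for every $\lltHEA\in\cS_{\lltCFR^*}$ one has $\iota_\svmInd^{-1}(\lltCFR_\lltHEA)=\vmPrior$, so that Lemma \ref{llt_global_measure_switch} yields
\begin{align*}
\pr[\lltRHEA_\svmInd=\lltHEA]=C_{\vmPrior,\svmInd}\cdot\pr[\lltRHEA_{\vmPrior,\svmInd}=\lltHEA]\textrm{, }
C_{\vmPrior,\svmInd}=\exp\left(-n\Erw\left[\KL{\mu_{\vmPrior,\vmRInd_\svmInd}}{\mu_{\vmRInd_\svmInd}}\right]\right)\textrm{,}
\end{align*}
with a prefactor $C_{\vmPrior,\svmInd}$ that is \emph{constant on the fibre} $\cS_{\lltCFR^*}$. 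Summing this identity over $\lltHEA\in\cS_{\lltCFR^*}$ gives $\pr[\lltRCFR_\svmInd=\lltCFR^*]=C_{\vmPrior,\svmInd}\pr[\lltRCFR_{\vmPrior,\svmInd}=\lltCFR^*]$ (both probabilities are positive since $\vmPrior\in\cP^*_\svmInd$). Dividing the two displayed equations by the corresponding normalisers therefore yields
\begin{align*}
\pr[\lltRHEA_{\svmInd,\lltCFR^*}=\lltHEA]=\pr[\lltRHEA^*_{\vmPrior,\svmInd}=\lltHEA]
\qquad(\lltHEA\in\cS_{\lltCFR^*})\textrm{,}
\end{align*}
i.e.~the two conditional laws coincide as distributions on $\cS_{\lltCFR^*}$.

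Since the assignment distribution $\acAD_\lltHEA$ is a deterministic function of $\lltHEA$, equality of the laws of the underlying assignments transfers to equality of the laws of $\acRAD_{\svmInd,\lltCFR^*}=\acAD_{\lltRHEA_{\svmInd,\lltCFR^*}}$ and $\acRAD^*_{\vmPrior,\svmInd}=\acAD_{\lltRHEA^*_{\vmPrior,\svmInd}}$. Since $\svmInd$ and $\vmPrior$ were arbitrary, this proves the fact. The entire argument is essentially bookkeeping around Lemma \ref{llt_global_measure_switch}; the only mild obstacle is remembering that the Radon--Nikodym factor between the two product laws depends on $\lltHEA$ only through its colour frequencies, which is exactly why conditioning on $\lltRCFR=\lltCFR^*$ makes the tilting invisible.
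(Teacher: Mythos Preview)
Your proof is correct and follows essentially the same approach as the paper: both arguments apply Lemma \ref{llt_global_measure_switch} to see that on the fibre $\{\lltCFR_\lltHEA=\lltECFR_{\vmPrior,\svmInd}\}$ the two product laws differ only by the constant $\exp(-n\Erw[\KL{\mu_{\vmPrior,\vmRInd_\svmInd}}{\mu_{\vmRInd_\svmInd}}])$, then cancel this constant by conditioning. The paper just writes the cancellation in a single display, whereas you spell out the summation over the fibre and the push-forward to $\acAD_\lltHEA$ explicitly.
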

The proof is postponed to Section \ref{ac_dist_change_proof}.
Lemma \ref{ac_loc} combined with Fact \ref{ac_dist_change} yielsd concentration results for the assignment distributions given their colour frequencies $\lltECFR_{\vmPrior,\svmInd}$.
Hence, the only part left to show is that the local concentration points $\acEAD_{\vmPrior,\svmInd}$ are close to the reference $\acEAD_{\vmIndP}$ if $\vmIndP_{\svmInd}$ is close to $\vmIndP$ and $\vmPrior$ is close to $u_{[q]}$.
The details are presented in Section \ref{ac_main_prop_proof}.
\subsection{Proof of Fact \ref{ac_cf_max}}\label{ac_cf_max_proof}
By construction we satisfy the assumptions of Proposition \ref{llt_global_prop}.
But as thoroughly discussed e.g.~in Section \ref{llt} we also have $\Erw[\vd_\svmInd]$, $\det(\Sigma_{\svmInd,\vmPrior})=\Theta(1)$ uniformly in $\svmInd\in\vmIndSp_n$ which completes the proof.
Notice that $\pi\in\cP^*\cap\lltLatP_\svmInd$ suffices to show this result, since Proposition \ref{llt_global_prop} then implies that $\lltECFR_{\vmPrior,\svmInd}$ is in the support of $\lltRCFR_\svmInd$ for sufficiently large $n$.
\subsection{Proof of Lemma \ref{ac_loc_uncond}}\label{ac_loc_uncond_proof}
We consider $\lltEpsP$ fixed throughout this section.
Further, fix $\varepsilon_{\mathrm{f}}$, $\delta\in(0,1)$, $n\in\ZZ_{>0}$, $\svmInd\in\vmIndSp_n$ with $\vmIndP_{\svmInd}\in\cB_\delta(\vmIndP)$ and $\vmPrior\in\cP^*$.
Further let
\begin{align*}
\vmIndSp_-=\{\vmInd\in\vmIndSp:\pr[\vmRInd_{\vmIndP}=\vmInd]<\varepsilon_{\mathrm{f}}\}\textrm{, }
\vmIndSp_+=\vmIndSp\setminus\vmIndSp_-
\end{align*}
denote the partition into measures of low frequency and high frequency respectively.
Notice that $|\vmIndSp_+|\in\ZZ_{>0}$ for $\varepsilon_{\mathrm{f}}$ sufficiently small and let
$d_{\max}=\max\{d_\vmInd:\vmInd\in\vmIndSp_+\}$. With $\Delta=\acADD$ we consider the corresponding split
\begin{align*}
\Delta\left(\acRAD_{\vmPrior,\svmInd},\acEAD_{\vmPrior,\svmInd}\right)
&=\Delta_-\left(\acRAD_{\vmPrior,\svmInd},\acEAD_{\vmPrior,\svmInd}\right)+\Delta_+\left(\acRAD_{\vmPrior,\svmInd},\acEAD_{\vmPrior,\svmInd}\right)\textrm{,}\\
\Delta_{\pm}\left(\acRAD_{\vmPrior,\svmInd},\acEAD_{\vmPrior,\svmInd}\right)
&=\sum_{\vmInd\in\vmIndSp_\pm,\vmAss\in\vmAssSp_\vmInd}(d_\vmInd+1)\left|\acRAD_{\vmPrior,\svmInd}(\vmInd,\vmAss)-\acEAD_{\vmPrior,\svmInd}(\vmInd,\vmAss)\right|\textrm{.}
\end{align*}
Recall that $\acRAD_{\vmPrior,\svmInd}\vert_1=\acEAD_{\vmPrior,\svmInd}\vert_1=\vmIndP_{\svmInd}$,
so with $\acRAD_{\vmPrior,\svmInd,\vmInd}\in\cP(\vmAssSp_\vmInd)$ given by $\acRAD_{\vmPrior,\svmInd,\vmInd}(\vmAss)=\acRAD_{\vmPrior,\svmInd}(\vmInd,\vmAss)/\vmIndP_\svmInd(\vmInd)$ for $\vmAss\in\vmAssSp_\vmInd$ and $\vmInd$ in the support of $\vmIndP_\svmInd$ denoting the law conditional to $\vmRInd_\svmInd=\vmInd$ we have
\begin{align*}
\Delta_{\pm}\left(\acRAD_{\vmPrior,\svmInd},\acEAD_{\vmPrior,\svmInd}\right)
&=\sum_{\vmInd\in\vmIndSp_\pm}\vmIndP_\svmInd(\vmInd)(d_\vmInd+1)\left\|\acRAD_{\vmPrior,\svmInd,\vmInd}-\mu_\vmInd\right\|_1\textrm{.}
\end{align*}
Since we can uniformly bound the norm and $\vmIndP_{\svmInd}\in\cB_\delta(\vmIndP)$ we obtain
\begin{align*}
\Delta_-\left(\acRAD_{\vmPrior,\svmInd},\acEAD_{\vmPrior,\svmInd}\right)
\le 2\sum_{\vmInd\in\vmIndSp_-}\vmIndP_\svmInd(\vmInd)(d_\vmInd+1)
<2\sum_{\vmInd\in\vmIndSp_-}\vmIndP(\vmInd)(d_\vmInd+1)+2\delta\textrm{.}
\end{align*}
Since $\vmIndP\in\vmIndPSp$ has a finite first moment the latter expectation tends to $0$ for $\varepsilon_{\mathrm{f}}\rightarrow 0$, so for $\varepsilon_{\mathrm{f}}$ sufficiently small and $\delta=\varepsilon_{\mathrm{f}}/2$ we have $\Delta_-\left(\acRAD_{\vmPrior,\svmInd},\acEAD_{\vmPrior,\svmInd}\right)\le\varepsilon/2$ almost surely and thereby
\begin{align*}
p=\pr\left[\Delta\left(\acRAD_{\vmPrior,\svmInd},\acEAD_{\vmPrior,\svmInd}\right)\ge\varepsilon\right]
\le\pr\left[\Delta_+\left(\acRAD_{\vmPrior,\svmInd},\acEAD_{\vmPrior,\svmInd}\right)\ge\varepsilon/2\right]
=\pr\left[\Delta_+\left(\acRAD_{\vmPrior,\svmInd},\acEAD_{\vmPrior,\svmInd}\right)\ge p_+c\varepsilon\right]
\end{align*}
with $c=(2p_+)^{-1}$ and $p_+=\pr[\vmRInd_\svmInd \in\vmIndSp_+]$.
Writing both sides of $\Delta_+\left(\acRAD_{\vmPrior,\svmInd},\acEAD_{\vmPrior,\svmInd}\right)\ge p_+c\varepsilon$ as expectations with respect to $\vmRInd_\svmInd$ yields
\begin{align*}
p\le\sum_{\vmInd\in\vmIndSp_+}\pr\left[(d_\vmInd+1)\left\|\acRAD_{\vmPrior,\svmInd,\vmInd}-\mu_{\vmPrior,\vmInd}\right\|_1\ge c\varepsilon\right]
\le\sum_{\vmInd\in\vmIndSp_+}\pr\left[\left\|\acRAD_{\vmPrior,\svmInd,\vmInd}-\mu_{\vmPrior,\vmInd}\right\|_1\ge c'\varepsilon\right]
\end{align*}
with $c'=c/(d_{\max}+1)$ and where we notice that $\vmIndP_{\svmInd}(\vmInd)>\vmIndP(\vmInd)-\delta\ge\varepsilon_{\mathrm{f}}/2$ for all $\vmInd\in\vmIndSp_+$ in the support of $\vmRInd_\svmInd$.
Recall that for all $\vmAss\in\vmAssSp_\vmInd$ the frequency $\acRAD_{\vmPrior,\svmInd,\vmInd}(\vmAss)$ is a sum of $\vmIndP_{\svmInd}(\vmInd)n$ i.i.d.~random variables with expectation $\mu_{\vmPrior,\vmInd}(\vmAss)$, so Hoeffding's inequality for $\varepsilon'\in\RR_{\ge 0}$ yields
\begin{align*}
\pr\left[\left|\acRAD_{\vmPrior,\svmInd,\vmInd}(\vmAss)-\mu_{\vmPrior,\vmInd}(\vmAss)\right|\ge\varepsilon'\right]
\le 2\exp\left(-2\varepsilon'^2 \vmIndP_{\svmInd}(\vmInd)n\right)
\le 2\exp\left(-\varepsilon_{\mathrm{f}}\varepsilon'^2n\right)\textrm{.}
\end{align*}
Standard arguments yield a bound for the $\|\cdot\|_\infty$ norm and further
\begin{align*}
\pr\left[\left\|\acRAD_{\vmPrior,\svmInd,\vmInd}-\mu_{\vmPrior,\vmInd}\right\|_1\ge\varepsilon'\right]
\le 2q^{d_{\max}}\exp\left(-\frac{\varepsilon_{\mathrm{f}}}{q^{2d_{\max}}}\varepsilon'^2n\right)\textrm{.}
\end{align*}
This uniform bounds directly implies
\begin{align*}
p\le 2|\vmIndSp_+|q^{d_{\max}}\exp\left(-\frac{\varepsilon_{\mathrm{f}}c'^2}{q^{2d_{\max}}}\varepsilon^2n\right)
\end{align*}
and thereby completes the proof.
Finally, notice that $\vmPrior\in\cP^*$ was not required.
\subsection{Proof of Lemma \ref{ac_loc}}\label{ac_loc_proof}
For fixed $\lltEpsP\in\RR_{>0}$ and with Fact \ref{ac_cf_max} we obtain $c\in\RR_{>0}$ and such that
\begin{align*}
\pr\left[\acADD(\acRAD^*_{\vmPrior,\svmInd},\acEAD_{\vmPrior,\svmInd})\ge\varepsilon\right]
\le c\sqrt{n}^{q-1}\pr\left[\acADD(\acRAD_{\vmPrior,\svmInd},\acEAD_{\vmPrior,\svmInd})\ge\varepsilon\right]
\end{align*}
for all sufficiently large $n$, $\svmInd\in\vmIndSp_n$ and $\vmPrior\in\cP^*_\svmInd$.
Now, we summon Lemma \ref{ac_loc_uncond} to obtain $\delta$, $c_1$, $c_2$ such that for all $\svmInd\in\vmIndSp_n$ with $\vmIndP_{\svmInd}\in\cB_\delta(\vmIndP)$ and $\vmPrior\in\cP^*_\svmInd$ we have
\begin{align*}
\pr\left[\acADD(\acRAD^*_{\vmPrior,\svmInd},\acEAD_{\vmPrior,\svmInd})\ge\varepsilon\right]
\le cc_1\sqrt{n}^{q-1}\exp\left(-c_2n\right)
=\exp\left(-\left(c_2-n^{-1}\ln\left(cc_1\sqrt{n}^{q-1}\right)\right)n\right)\textrm{.}
\end{align*}
Hence, we fix a constant $c\in(0,c_2)$ and $n^*$ sufficiently large such that for all $n\in\ZZ_{\ge n^*}$ the leading coefficient in the exponent exceeds $c$, so for all $\svmInd\in\vmIndSp_n$ with $\vmIndP_{\svmInd}\in\cB_\delta(\vmIndP)$ and $\vmPrior\in\cP^*_\svmInd$ we have
\begin{align*}
\pr\left[\acADD(\acRAD^*_{\vmPrior,\svmInd},\acEAD_{\vmPrior,\svmInd})\ge\varepsilon\right]
\le\exp\left(-cn\right)\textrm{.}
\end{align*}
Finally, we set $c'=\exp(cn^*)$ which ensures that $c'\exp(-cn)\ge 1$ for all $n<n^*$ and hence the assertion holds.
\subsection{Proof of Fact \ref{ac_dist_change}}\label{ac_dist_change_proof}
For assignments $\lltHEA$ with $\lltCFR_\lltHEA=\lltECFR_{\vmPrior,\svmInd}$, using Lemma \ref{llt_global_measure_switch} and $\lltCFR=\lltECFR_{\vmPrior,\svmInd}$ we have
\begin{align*}
\pr\left[\lltRHEA_{\svmInd,\lltCFR}=\lltHEA\right]
&=\frac{\pr[\lltRHEA_{\svmInd}=\lltHEA]}{\pr[\lltRCFR_\svmInd=\lltCFR]}
=\frac{\exp\left(-n\Erw\left[\KL{\mu_{\vmPrior,\vmRInd_\svmInd }}{\mu_{\vmRInd_\svmInd }}\right]\right)\pr[\lltRHEA_{\vmPrior,\svmInd}=\lltHEA]}{\exp\left(-n\Erw\left[\KL{\mu_{\vmPrior,\vmRInd_\svmInd }}{\mu_{\vmRInd_\svmInd }}\right]\right)\pr[\lltRCFR_{\vmPrior,\svmInd}=\lltCFR]}
=\pr\left[\lltRHEA^*_{\vmPrior,\svmInd}=\lltHEA\right]\textrm{,}
\end{align*}
which directly translates to the distributions and thereby completes the proof.
\subsection{Proof of Proposition \ref{ac_main_prop}}\label{ac_main_prop_proof}
Fix suitable $(\vmIndSp_n)_n$, $\vmIndP$ and $\varepsilon$.
Further, fix some small $\lltEpsP\in(0,1)$ and let $\eta=\iota^{-1}$.
Since $\eta$ is continuous due to Proposition \ref{pp_mr_cont}, the preimage $\eta^{-1}(\cB_{\lltEpsP}(u_{[q]}))$ is open and $(\vmIndP,\lltECFR_{\vmIndP})\in\eta^{-1}(\cB_{\lltEpsP}(u_{[q]}))$ since $\eta(\vmIndP,\lltECFR_{\vmIndP})=u_{[q]}$.
From this we obtain $\delta_1\in\RR_{>0}$ such that $\cB_{\delta_1}(\vmIndP)\times\cB_{\delta_1}(\lltECFR_{\vmIndP})\subseteq\eta^{-1}(\cB_{\lltEpsP}(u_{[q]}))$.

With Lemma \ref{ac_loc} we obtain $\delta_2$, $c$, $c'$ such that for all $n\in\ZZ_{>0}$, all $\svmInd\in\vmIndSp_n$ with $\vmIndP_{\svmInd}\in\cB_{\delta_2}(\vmIndP)$ and all $\vmPrior\in\cB_{\lltEpsP}(u_{[q]})$ with $\lltECFR_{\vmPrior,\svmInd}$ in the support of $\lltRCFR_\svmInd$ we have
\begin{align*}
\pr\left[\acADD(\acRAD^*_{\vmPrior,\svmInd},\acEAD_{\vmPrior,\svmInd})\ge\varepsilon\right]
\le c'\exp\left(-cn\right)\textrm{.}
\end{align*}
Using Fact \ref{ac_dist_change} and $\lltCFR=\lltECFR_{\vmPrior,\svmInd}$ immediately yields
\begin{align*}
\pr\left[\acADD(\acRAD_{\svmInd,\lltCFR},\acEAD_{\vmPrior,\svmInd})\ge\varepsilon\right]
\le c'\exp\left(-cn\right)\textrm{.}
\end{align*}
Now, let $\delta=\min(\delta_1,\delta_2)$, $n\in\ZZ_{>0}$, $\svmInd\in\vmIndSp_n$ with $\vmIndP_{\svmInd}\in\cB_\delta(\vmIndP)$ and $\rho\in\cB_\delta(\lltECFR_{\vmIndP})$ in the support $\lltRCFR_\svmInd$.
By the above we have $\vmPrior=\eta(\vmIndP_{\svmInd},\rho)\in\cB_{\lltEpsP}(u_{[q]})$ and further with $\vmIndP_{\svmInd}\in\cB_\delta(\vmIndP)$ we obtain
\begin{align*}
\pr\left[\acADD(\acRAD_{\svmInd,\lltCFR},\acEAD_{\vmPrior,\svmInd})\ge\varepsilon\right]
\le c'\exp\left(-cn\right)\textrm{.}
\end{align*}
Now, we're left to show that the conditional assignment distribution expectations $\acEAD_{\vmPrior,\svmInd}$ are close to the unconditional expectation $\acEAD_{\vmIndP}$.
For this purpose notice that by using the triangle inequality and normalization of $\mu_{\vmPrior,\vmInd}$ we have
\begin{align*}
\acADD(\acEAD_{\vmPrior,\svmInd},\acEAD_{\vmIndP})
&=\sum_{\vmInd,\vmAss}(d_\vmInd+1)\left|\vmIndP_{\svmInd}(\vmInd)\mu_{\vmPrior,\vmInd}(\vmAss)-\vmIndP(\vmInd)\mu_{\vmInd}(\vmAss)\right|
\le\vmIndPD(\vmIndP_{\svmInd},\vmIndP)+\sum_{\vmInd}\vmIndP(\vmInd)(d_{\vmInd}+1)\|\mu_{\vmPrior,\vmInd}-\mu_{\vmInd}\|_1\textrm{.}
\end{align*}
For $\vmIndSp^*\subseteq\vmIndSp$ sufficiently large (but still finite) we use the uniform bounds for the norm on the $\vmIndSp\setminus\vmIndSp^*$ contribution to the expectation and an upper bound $d_{\mathrm{cap}}\in\RR_{>0}$ for the degrees of $\vmIndSp^*$.
Further, since $\vmPrior\mapsto(\mu_{\vmPrior,\vmInd})_{\vmInd\in\vmIndSp^*}$ is continuous with $u_{[q]}\mapsto(\mu_\vmInd)_{\vmInd\in\vmIndSp^*}$ we can also control the norm on $\vmIndSp^*$ and thereby find $\delta\in\RR_{>0}$ such that for all $\svmInd\in\vmIndSp_n$ with $\vmIndP_{\svmInd}\in\cB_\delta(\vmIndP)$ and $\vmPrior\in\cB_\delta(u_{[q]})$ we have
\begin{align*}
\acADD(\acEAD_{\vmPrior,\svmInd},\acEAD_{\vmIndP})
&\le\delta+2\Erw\left[(\vd_{\vmIndP}+1)\vecone\{\vmRInd_{\vmIndP}\not\in\vmIndSp^*\}\right]
+(d_{\mathrm{cap}}+1)\Erw\left[\vecone\{\vmRInd_{\vmIndP}\in\vmIndSp^*\}\|\mu_{\vmPrior,\vmRInd_{\vmIndP}}-\mu_{\vmRInd_{\vmIndP}}\|_1\right]
<\frac{1}{3}\varepsilon+\frac{1}{3}\varepsilon+\frac{1}{3}\varepsilon=\varepsilon\textrm{.}
\end{align*}
Finally, we combine the two arguments to obtain the result as follows.
First, choose $\delta_1\in(0,1)$ sufficiently small such that $\acADD(\acEAD_{\vmPrior,\svmInd},\acEAD_{\vmIndP})<\varepsilon/2$ for all $\vmPrior\in\cB_{\delta_1}(u_{[q]})$ and $\svmInd\in\vmIndSp_n$ with $\vmIndP_{\svmInd}\in\cB_{\delta_1}(\vmIndP)$.
Further, for $\varepsilon/2$ and $\lltEpsP=\delta_1$ the first argument provides $\delta_2$, $c$, $c'$ such that
for all $n\in\ZZ_{>0}$, $\svmInd\in\vmIndSp_n$ with $\vmIndP_{\svmInd}\in\cB_{\delta_2}(\vmIndP)$ and all $\rho\in\cB_{\delta_2}(\lltECFR_{\vmIndP})$ in the support of $\lltRCFR_\svmInd$ we have $\vmPrior=\eta(\vmIndP_{\svmInd},\rho)\in\cB_{\delta_1}(u_{[q]})$ and
\begin{align*}
\pr\left[\acADD(\acRAD_{\svmInd,\lltCFR},\acEAD_{\vmPrior,\svmInd})\ge\varepsilon/2\right]
\le c'\exp\left(-cn\right)\textrm{.}
\end{align*}
Now, let $\delta=\min(\delta_1,\delta_2)$. Then for all $n\in\ZZ_{>0}$, all $\svmInd\in\vmIndSp_n$ with $\vmIndP_{\svmInd}\in\cB_{\delta}(\vmIndP)$ and all $\rho\in\cB_\delta(\lltECFR_{\vmIndP})$ in the support of $\lltRCFR_\svmInd$ we have $\vmPrior=\eta(\vmIndP_{\svmInd},\rho)\in\cB_{\delta_1}(u_{[q]})$,
which gives $\acADD(\acEAD_{\vmPrior,\svmInd},\acEAD_{\vmIndP})<\varepsilon/2$, so using the triangle inequality $\acADD(\acRAD_{\svmInd,\lltCFR},\acEAD_{\vmIndP})\ge\varepsilon$ implies $\acADD(\acRAD_{\svmInd,\lltCFR},\acEAD_{\vmPrior,\svmInd})\ge\varepsilon/2$ and thereby
\begin{align*}
\pr\left[\acADD(\acRAD_{\svmInd,\lltCFR},\acEAD_{\vmIndP})\ge\varepsilon\right]
\le\pr\left[\acADD(\acRAD_{\svmInd,\lltCFR},\acEAD_{\vmPrior,\svmInd})\ge\varepsilon/2\right]
\le c'\exp(-cn)\textrm{.}
\end{align*}

\section{Degree distributions} \label{dd}

Recall the degree distributions introduced in Section \ref{Sec_rfg}, let $\d=\Erw[\vd]$, $\k=\Erw[\vk]$, $\m_n=\d n/\k$,  $\vt^*_n=(\vm,(\vd_i)_{i\in[n]},(\vk_i)_{i\in[\vm]})$ and $\cT^*_n$ denote the support of $\vt^*_n$.
For $t\in\cT^*_n$ we use $t=(m_t,d_t,k_t)$ to specify the components.
Further, let $\cE_n$ denote the event
\begin{align*}
\sum_{i=1}^n\vd_i=\sum_{i=1}^{\vm}\vk_i
\end{align*}
and $\cN$ the values of $n$ with $\pr[\cE_n]>0$. Finally, for $n\in\cN$ let $\vt_n=(\vt^*_n|\cE_n)$ denote the degree sequences for which $\G$ is well-defined and $\cT_n$ the support of $\vt_n$.

Let $\varepsilon_{\mathrm{deg}}$ be such that {\bf DEG} holds, further $\alpha=2+\varepsilon_{\mathrm{deg}}$ and $\cP_{\mathrm{deg}}=\{p\in\cP(\ZZ_{\ge 0}):\Erw[\vx_p^{\alpha}]\in\RR_{\ge 0}\}$ with $\bm x_p\sim p$.
Notice that the map
\begin{align*}
\Delta(p,p')=\sum_{x}x|p'(x)-p(x)|+\left|\Erw\left[\vx_p^2\right]-\Erw\left[\vx_{p'}^2\right]\right|+\left|\Erw\left[\vx_p^{\alpha}\right]-\Erw\left[\vx_{p'}^\alpha\right]\right|
\end{align*}
with $p$, $p'\in\cP_{\mathrm{deg}}$ defines a metric on $\cP_{\mathrm{deg}}$.
This metric induces a metric on the product space $\cT_{\mathrm{rel}}=\mathbb R_{\ge 0}\times\mathcal P_{\mathrm{deg}}^2$ given by
\begin{align*}
\Delta(\tau,\tau')=|\tau_{\mathrm{r}}-\tau'_{\mathrm{r}}|+\Delta(\tau_{\mathrm{v}},\tau'_{\mathrm{v}})+\Delta(\tau_{\mathrm{f}},\tau'_{\mathrm{f}})
\end{align*}
for $\tau=(\tau_{\mathrm{r}},\tau_{\mathrm{v}},\tau_{\mathrm{f}})$, $\tau'=(\tau'_{\mathrm{r}},\tau'_{\mathrm{v}},\tau'_{\mathrm{f}})\in\mathcal T_{\mathrm{rel}}$.
With $p_{\mathrm{d}}$, $p_{\mathrm{k}}$ denoting the laws of $\bm d$ and $\bm k$ respectively we notice that $\tau^*=(\bar d/\bar k,p_{\mathrm{d}},p_{\mathrm{k}})\in\mathcal T_{\mathrm{rel}}$.
For $n\in\ZZ_{>0}$ and $t\in\cT^*_n$ we let $\tau(t)=(m_t/n,p_{\mathrm{d},t},p_{\mathrm{k},t})\in\mathcal T_{\mathrm{rel}}$ with $p_{\mathrm{d},t}$ denoting the relative frequencies of the degrees on the variable side, or equivalently the law of $\bm d_{t}=d_{t,\bm i}$ with $\bm i$ uniform on $[n]$, and $p_{\mathrm{k},t}$ denoting the relative frequencies of the degrees on the factor side, or equivalently the law of $\bm k_t=k_{t,\bm a_t}$ with $\bm a_t$ uniform on $[m_t]$.
For the case $m_t=0$ we let $p_{\mathrm{k},t}$ be the one-point mass on $0$.
Notice that for given $n\in\cN$ and $t\in\cT_n$ the number $m_t\in\ZZ_{\ge 0}$ of factors may still be arbitrarily large. We say that a sequence $f_n:\cT_n\rightarrow\RR$, $n\in\cN$, is sublinear in the number of factors if there exists a constant $c\in\RR_{>0}$ such that $|f_n(t)|\le c+cm_t/n$ for all $t\in\cT_n$ and $n\in\cN$.
\begin{proposition}\label{dd_typ_prop}
Assume that {\bf DEG} holds. Then there exists $r_n\in\mathbb R_{>0}$ with $r_n=o(1)$
such that for all sequences $f_n:\cT_n\rightarrow\RR$, $n\in\cN$, that are sublinear in the number of factors we have
\begin{align*}
\Erw[f_n(\vt_n)]=\Erw[f_n(\vt_n)\vecone\{\tau(\vt_n)\in\cB_{r_n}(\tau^*)\}]+o(1)
=\Erw[f_n(\vt_n)|\tau(\vt_n)\in\cB_{r_n}(\tau^*)]+o(1)\textrm{.}
\end{align*}
\end{proposition}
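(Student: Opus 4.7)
The plan is to establish unconditional concentration of $\tau(\vt_n^*)$ around $\tau^*$, transfer this to the conditional law via a local central limit theorem showing $\Pr[\cE_n] = \Omega(n^{-1/2})$, and finally leverage the sublinearity of $f_n$ together with a Poisson cutoff on $\vm$.

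First I would show that for each fixed $\varepsilon > 0$, $\Pr[\tau(\vt_n^*) \notin \cB_\varepsilon(\tau^*)] \to 0$. This decomposes into a Chernoff bound for the Poisson variable $\vm$ and into the weak law of large numbers for the empirical distributions $p_{\mathrm{d},\vt_n^*}$ and $p_{\mathrm{k},\vt_n^*}$. The three components of the metric $\Delta$ translate into convergence of the empirical first moment, second moment, and $\alpha$-th moment; the uniform integrability granted by \textbf{DEG} (via $\Erw[\vd^{2+\varepsilon_{\mathrm{deg}}}], \Erw[\vk^{2+\varepsilon_{\mathrm{deg}}}] < \infty$) upgrades WLLN to these moment convergences.

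Second, the classical local CLT applied to the i.i.d.\ sum $S_n - T_{\vm} = \sum_{i=1}^n \vd_i - \sum_{j=1}^{\vm} \vk_j$ yields $\Pr[\cE_n] = \Theta(n^{-1/2})$ for $n \in \cN$; aperiodicity is built into the definition of $\cN$, and the second-moment bound of \textbf{DEG} suffices (cf.\ Chapter 5 of \cite{bhattacharya2010}, as referenced in Section~\ref{llt}). An equivalence-of-ensembles argument then gives the conditional concentration: conditioning on $(\vd_i)_{i\in[n]}$ one sees that $\Pr[\cE_n \mid (\vd_i)]$ factors through $S_n$ alone, so the density $w(\vd) = \Pr[\cE_n \mid \vd]/\Pr[\cE_n]$ has fluctuations of constant order on the typical event $\{|S_n - n\bar d| = O(\sqrt n)\}$. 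Combined with Step~1 this yields $\Pr[\tau(\vt_n) \notin \cB_\varepsilon(\tau^*)] \to 0$ for each fixed $\varepsilon$, and a diagonal extraction produces the required $r_n \to 0$ with $\Pr[\tau(\vt_n) \notin \cB_{r_n}(\tau^*)] \to 0$.

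Finally, for the first identity I bound
\[
\left|\Erw[f_n(\vt_n)\vecone\{\tau(\vt_n) \notin \cB_{r_n}(\tau^*)\}]\right| \le c\,\Erw[(1 + \vm/n)\vecone\{\tau(\vt_n) \notin \cB_{r_n}(\tau^*)\}],
\]
and split the right-hand side according to whether $\vm \le Mn$ or $\vm > Mn$, for a large constant $M > \bar d/\bar k$. On $\{\vm \le Mn\}$ the integrand is bounded by $1 + M$, yielding $O(\Pr[\tau(\vt_n) \notin \cB_{r_n}(\tau^*)]) = o(1)$. On $\{\vm > Mn\}$, Cauchy--Schwarz together with $\Erw[(1+\vm/n)^2] = O(1)$ and the Chernoff bound $\Pr[\vm > Mn] = \exp(-\Omega(n))$ produces an unconditional contribution $\exp(-\Omega(n))$, which dominates the $\Pr[\cE_n]^{-1} = O(\sqrt n)$ blow-up from conditioning. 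The second identity follows from the first: $\Pr[\tau(\vt_n) \in \cB_{r_n}(\tau^*)] = 1 - o(1)$, and on this event $m_{\vt_n}/n \le \bar d/\bar k + r_n = O(1)$ so $|f_n(\vt_n)| = O(1)$, giving $\Erw[f_n(\vt_n) \mid \tau(\vt_n) \in \cB_{r_n}(\tau^*)] = O(1)$. The main obstacle is the equivalence of ensembles in Step~2: the crude bound $\Pr[A \mid \cE_n] \le \Pr[A]/\Pr[\cE_n]$ would require an unavailable rate $o(n^{-1/2})$ on the empirical $\alpha$-moment concentration, forcing a more refined argument that exploits the LCLT structure under the conditional law.
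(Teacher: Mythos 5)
Your overall architecture matches the paper's: an unconditional law-of-large-numbers statement for $\tau(\vt_n^*)$, a local limit theorem giving $\Pr[\cE_n]=\Theta(n^{-1/2})$, a transfer of concentration to the conditional model $\vt_n$, and a final cutoff on $\vm$ exploiting sublinearity (the paper's version of your last step bounds $\Erw[\vm n^{-1}\vecone\{\vm\ge 2\bar m_n\}]$ directly via Poisson tails against the $\Theta(\sqrt n^{-1})$ denominator, which is equivalent to your Cauchy--Schwarz argument). The first, second and fourth steps are fine as sketched.

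The genuine gap is the step you yourself flag: the "equivalence of ensembles" transfer. Your density argument $w(\vd)=\Pr[\cE_n\mid\vd]/\Pr[\cE_n]=O(1)$ does work for the \emph{variable-side} empirical quantities, because $\Pr[\cE_n\mid S_n=s]=\Pr[\sum_{j\le\vm}\vk_j=s]=O(n^{-1/2})$ uniformly in $s$ (the Poisson $\vm$ is never degenerate). But the statement also requires conditional concentration of the \emph{factor-side} data $(\vm/n,p_{\mathrm{k},\vt_n})$, and the symmetric argument conditions on the factor-side data and uses $\Pr[S_n=s]=O(n^{-1/2})$ uniformly in $s$ --- which fails precisely when $\vd$ is degenerate, since then $\Pr[S_n=s]=\vecone\{s=\bar dn\}$ and the Radon--Nikodym derivative is $O(\sqrt n)\vecone\{T_{\vm}=\bar dn\}$ rather than $O(1)$. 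This is not a pathological corner: the $d$-regular stochastic block model of \Sec~\ref{Sec_sbm} lives exactly there. The crude bound then demands an $o(n^{-1/2})$ rate on the unconditional atypicality of the $(2+\eps_{\mathrm{deg}})$-moment, which, as you note, is unavailable. The paper's resolution (\Sec~\ref{ddp_typ_cond}) is a case analysis on whether $\vd$, $\vk$ are degenerate or lattice, and in the problematic case it writes $\vm=\vm_1+\vm_2$ with $\vm_1,\vm_2$ i.i.d.\ $\Po(\bar m_n/2)$: atypicality of the combined empirical degree distribution forces atypicality of one half, and the point-probability constraint $T_{\vm}=\bar dn$ is then absorbed by the local limit theorem applied to the \emph{other}, independent half, yielding the required $o(n^{-1/2})$. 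Without this (or an equivalent decoupling device) your Step 2 does not close in the degenerate cases, so the proposal as written is incomplete.
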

As a byproduct of the proof we will see that $|\cN|=\infty$, so taking limits is reasonable.
Using Proposition \ref{dd_typ_prop} we consider $r_n$ fixed and use $\cT^\circ_n$ to denote the typical valid degree sequences, i.e.~valid degree sequences $t\in\cT_n$ with $\tau(t)\in\mathcal B_{r_n}(\tau^*)$. In particular, we are free to choose $r_n$ such that uniform bounds on the various quantities are enforced, e.g.~$\frac{\d}{2\k}\le m_t/n\le\frac{3\d}{2\k}$ by choosing $r_n\le\frac{\d}{2\k}$ for all $n\in\cN$, uniform lower bounds for the point probabilities in finite subsets of the supports $\cD$, $\cK$ of $\vd$, $\vk$, bounds on the moments and so on.
Details on further implications can be found in Section \ref{ddp_typ_prop}.
%
%TYPICAL DD PROOF - STRATEGY
%
\subsection{Proof strategy}\label{ddp_ps}
The main ingredient to the proof of Proposition \ref{dd_typ_prop} is the following result.
\begin{proposition}\label{ddp_typ_whp}
Assume that {\bf DEG} holds. Then there exists $r_n=o(1)$ such that $\tau(\bm t_{n})\in\mathcal B_{r_n}(\tau^*)$ with high probability.
\end{proposition}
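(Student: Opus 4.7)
My plan is to bootstrap from the unconditional concentration of $\tau(\vt^*_n)$ by exploiting that conditioning on the ``point'' event $\cE_n$ (of probability $\Theta(n^{-1/2})$) only inflates probabilities of events measurable with respect to either side by a constant factor. First, I would establish the unconditional convergence $\tau(\vt^*_n)\to\tau^*$ in probability componentwise. The ratio $\vm/n\to\d/\k$ follows from Poisson concentration. The weighted distance $\sum_x x\abs{p_{\mathrm{d},\vt^*_n}(x) - p_{\mathrm{d}}(x)}$ tends to zero by the weak law for empirical distributions, using $\Erw[\vd]<\infty$ together with tail tightness via $\Erw[\vd\vecone\{\vd>M\}]\to 0$ as $M\to\infty$. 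The second and $\alpha$-th empirical moments of the $\vd_i$ converge to $\Erw[\vd^2]$ and $\Erw[\vd^\alpha]$ by the plain WLLN, both being finite by \textbf{DEG}. The factor-side statements are analogous after conditioning on $\vm$ and using $\vm\to\infty$ in probability.

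The heart of the proof is a transfer step via a local limit theorem. Set $X_n = \sum_{i=1}^n\vd_i - \sum_{j=1}^\vm\vk_j$: by design $\Erw[X_n]=0$ and $\Var(X_n)=\Theta(n)$, and $X_n$ is a sum of $\Theta(n)$ i.i.d.\ integer-valued summands with finite $(2+\eps)$-moment. The classical local limit theorem (applied after conditioning on $\vm$) then delivers the lower bound $\pr[\cE_n]=\pr[X_n=0]\ge c/\sqrt n$ for some $c>0$ and every $n\in\cN$ (the restriction to $n\in\cN$ ensures the lattice span of $X_n$ contains zero). For any event $A$ measurable with respect to $((\vd_i)_i,\vm)$, conditioning on these variables yields
\begin{align*}
\pr\brk{A\cap\cE_n} = \Erw\brk{\vecone\{A\}\,\pr\brk{\textstyle\sum_{j=1}^\vm\vk_j=\textstyle\sum_{i=1}^n\vd_i\,\big|\,(\vd_i)_i,\vm}} \le \frac{C}{\sqrt n}\pr[A],
\end{align*}
where the inner probability is bounded uniformly in its target by the Kolmogorov--Rogozin concentration inequality applied to the i.i.d.\ sum of $\vk_j$'s on the high-probability event $\vm=\Theta(n)$. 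Combined with the lower bound on $\pr[\cE_n]$, this gives $\pr[A\mid\cE_n]\le C'\pr[A]$. The symmetric inequality for events $B$ measurable with respect to $((\vk_j)_j,\vm)$ follows from applying the LLT upper bound to $\sum_{i\le n}\vd_i$ instead.

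Armed with these transfer bounds I would apply them to the three components of the distance $\Delta$ (ratio, variable side, factor side), each of which is measurable with respect to the appropriate half of the data; a union bound then gives $\pr[\tau(\vt_n)\notin\cB_r(\tau^*)]\to 0$ for every fixed $r>0$. A standard diagonal extraction produces $r_n\to 0$ along which this probability still vanishes. The main obstacle will be making the two sides of the LLT fully rigorous: the lower bound $\pr[\cE_n]=\Omega(n^{-1/2})$ hinges on identifying the lattice span of $X_n$ (resolved by $n\in\cN$ and the classical LLT for integer-valued walks with finite variance), while the uniform upper bound requires care when $\vm$ is small, dealt with by noting that $\pr[\vm<n^{1/2}]$ decays exponentially and is therefore negligible compared to $n^{-1/2}$.
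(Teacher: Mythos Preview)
Your strategy coincides with the paper's in the generic case where both $\vd$ and $\vk$ are non-degenerate: the paper likewise establishes unconditional concentration of $\tau(\vt^*_n)$ and the lower bound $\pr[\cE_n]=\Theta(n^{-1/2})$, and then bounds the conditional error terms by exploiting exactly your transfer inequalities---uniform anti-concentration $\sup_s P_{\mathrm{k}}(s,m)=O(n^{-1/2})$ for variable-side events and $\sup_s P_{\mathrm{d}}(s)=O(n^{-1/2})$ for factor-side events. There is, however, a genuine gap.

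Your transfer bound for factor-side events $B$ hinges on $\sup_s\pr[\sum_i\vd_i=s]=O(n^{-1/2})$, which collapses when $\vd$ is degenerate: then $\sum_i\vd_i=\d n$ is deterministic and the inequality fails. This is not an exotic corner case---the $d$-regular stochastic block model lives precisely there. Symmetrically, your variable-side transfer fails when $\vk$ is degenerate, since Kolmogorov--Rogozin gives nothing for a constant summand. You do not mention either case. In the paper, the case $\vk$ degenerate is salvaged by a different source of anti-concentration: the Poisson variable $\vm$ itself satisfies $p_{\mathrm{m}}(m)=O(n^{-1/2})$, and summing over $m\in\cM_n$ effectively sums over the admissible targets $s=\k m$, recovering the $o(n^{-1/2})$ bound for the error. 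The case $\vd$ degenerate is harder, because once you condition on $((\vk_j)_j,\vm)$ there is no independent randomness left to deliver the factor $n^{-1/2}$. The paper's remedy is a splitting trick: write $\vm=\vm_1+\vm_2$ with $\vm_1,\vm_2$ independent $\Po(\bar m_n/2)$ and split the $\vk$-sequence accordingly. The convexity bound $\Delta(p_{\mathrm{k},\vm},p_{\mathrm{k}})\le\frac{\vm_1}{\vm}\Delta(p_{\mathrm{k}1,\vm_1},p_{\mathrm{k}})+\frac{\vm_2}{\vm}\Delta(p_{\mathrm{k}2,\vm_2},p_{\mathrm{k}})$ forces at least one half to be far from $p_{\mathrm{k}}$ whenever the whole is; that half carries the bad event (unconditionally $o(1)$), while the local limit theorem applied to the \emph{other} half's partial sum supplies the missing $O(n^{-1/2})$ factor. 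Without this idea your argument does not close in the regular-degree regime.
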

We split the proof of Proposition \ref{ddp_typ_whp} into three parts.
In the first part we determine the order of the probability that $\bm t^*_n\in\mathcal T_n$ and show that $|\mathcal N|=\infty$.
\begin{lemma}\label{ddp_prop_valid}
Assume that {\bf DEG} holds. Then we have $\mathbb P[\bm t^*_n\in\mathcal T_n]=\Theta(\sqrt{n}^{-1})$.
\end{lemma}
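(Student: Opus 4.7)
I will write $S_V=\sum_{i=1}^{n}\vd_i$ and $S_F=\sum_{i=1}^{\vm}\vk_i$, so that the task is to prove $\pr[S_V=S_F]=\Theta(n^{-1/2})$, where the event $\cE_n=\{S_V=S_F\}$ involves two \emph{independent} random sums.  A direct variance computation using $\vm\sim\Po(n\d/\k)$ and Wald's identity gives
\[
\Erw[S_V-S_F]=0,\qquad \Var(S_V-S_F)=n\Var(\vd)+\tfrac{n\d}{\k}\Erw[\vk^2]=\Theta(n),
\]
both finite by {\bf DEG}.  So on a heuristic level $\pr[S_V=S_F]\approx\phi_{0,\Theta(n)}(0)=\Theta(n^{-1/2})$, and the plan is simply to turn this into a rigorous local limit theorem calculation, subject to the usual lattice bookkeeping.

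\smallskip\noindent
\textbf{Upper bound via the concentration function.} I will argue that at least one of $S_V,S_F$ obeys a Kolmogorov–Rogozin style bound $\sup_{x\in\ZZ}\pr[\cdot=x]=O(n^{-1/2})$, after which independence gives
\[
\pr[S_V=S_F]=\sum_{k}\pr[S_V=k]\pr[S_F=k]\le \sup_{k}\pr[S_V=k]=O(n^{-1/2}).
\]
If $\vd$ is non-degenerate, Kolmogorov–Rogozin applied to $S_V$ under the $(2+\varepsilon)$-moment of {\bf DEG} suffices; if $\vd$ is almost surely constant, the same bound with $S_V,S_F$ reversed applies as long as $\vk$ is non-degenerate (conditioning first on $\vm$, which is of order $n$ \whp); and if both $\vd\equiv d,\vk\equiv k$ are constant, the event reduces to $\vm=dn/k$, whose probability is $\Theta(n^{-1/2})$ by Stirling at the mode of a $\Po(dn/k)$ variable.

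\smallskip\noindent
\textbf{Lower bound via a conditional one-dimensional LLT.} Here I will condition on $\vm=m$.  Given $\vm=m$, the difference $S_V-S_F$ is an independent integer-valued sum of $n+m$ summands with bounded $(2+\varepsilon)$-moments, mean $n\d-m\k$, and variance $\Theta(n+m)$.  Letting $h=\gcd(\{d-d':d,d'\in\supp\vd\}\cup\{k-k':k,k'\in\supp\vk\})$ and picking arbitrary references $d_0\in\supp\vd$, $k_0\in\supp\vk$, the conditional law lives on the lattice $-nd_0+mk_0+h\ZZ$, and the classical LLT (Theorem 22.1 of \cite{bhattacharya2010}) yields
\[
\pr[S_V=S_F\mid \vm=m]=\Theta(n^{-1/2})
\]
uniformly over all $m$ satisfying the two conditions (i) $|m-n\d/\k|\le C\sqrt{n}$ (so the target $0$ lies in the bulk of the distribution) and (ii) $nd_0-mk_0\in h\ZZ$ (the lattice-compatibility condition ensuring positive probability).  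The Poisson local CLT simultaneously gives $\pr[\vm=m]=\Theta(n^{-1/2})$ in the same window, and in any such window the number of lattice-compatible $m$ is $\Theta(\sqrt{n})$ provided $n\in\cN$.  Summing,
\[
\pr[\cE_n]\ge\sum_{m}\pr[\vm=m]\pr[S_V=S_F\mid \vm=m]=\Theta(\sqrt{n})\cdot\Theta(n^{-1})=\Theta(n^{-1/2}),
\]
which matches the upper bound.  As a by-product the lattice-compatibility condition determines an arithmetic progression in~$n$ within which $|\cN|=\infty$, which is all that is claimed after the lemma.

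\smallskip\noindent
\textbf{Main obstacle.} The analytic core — applying the one-dimensional LLT to a centred triangular array of integer summands with finite $(2+\varepsilon)$-moments — is entirely standard.  The real subtlety is the \emph{uniform} lattice bookkeeping: I must verify that on the correct arithmetic progression of $n$'s the number of lattice-compatible $m$ in the $O(\sqrt n)$-window is itself $\Theta(\sqrt n)$ with explicit constants, and that the LLT prefactor is uniform in $n\in\cN$.  This is exactly the kind of computation already carried out in \Sec~\ref{llt} in the multidimensional coloured setting, so I plan to invoke that machinery (or its one-dimensional special case) rather than redo it from scratch.
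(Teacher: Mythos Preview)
Your approach is correct and takes a genuinely different route from the paper's proof, so a brief comparison is in order.

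The paper proceeds by an explicit four-way case split according to whether each of $\vd$ and $\vk$ is degenerate or lattice, and in each case applies the one-dimensional i.i.d.\ local limit theorem (Theorem~\ref{ddp_llt}, i.e.\ Durrett's Theorem~3.5.2) to whichever of $\vd_{\mathrm{tot},n}$, $\vk_{\mathrm{tot},m}$, $\vm_n$ is non-degenerate, keeping track of the relevant lattice $\cL_n$ or $\cL_m$ separately in each case.  Your upper bound via the Kolmogorov--Rogozin concentration inequality is cleaner than the paper's case-by-case use of the uniform LLT bound $\sup_s\pr[\cdot=s]=O(n^{-1/2})$, though both rest on the same second-moment input.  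Your lower bound unifies cases 2--4 of the paper by conditioning on $\vm=m$ and working with the single combined span $h=\gcd(h_{\mathrm d},h_{\mathrm k})$; the paper's case~4 does essentially this convolution computation explicitly as $\sum_s\pr[S_V=s]\pr[S_F=s\mid\vm=m]$, applying the i.i.d.\ LLT to each factor separately and then summing.  What you gain is a shorter, more uniform argument; what the paper gains is that it only ever needs the i.i.d.\ LLT, whereas your formulation of ``LLT for the triangular array $S_V-S_F$'' strictly speaking needs either a non-i.i.d.\ LLT or exactly the convolution-and-Riemann-sum step the paper carries out.

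Two small points to tighten.  First, your lower-bound paragraph silently assumes that at least one of $\vd,\vk$ is non-degenerate (otherwise $h$ is undefined and the conditional probability is $\{0,1\}$-valued); you did treat the biregular case in the upper-bound paragraph, but you should flag that it also supplies the lower bound there.  Second, the claim that the number of lattice-compatible $m$ in the window is $\Theta(\sqrt n)$ rests on the existence of one compatible $m$ for $n\in\cN$; the paper makes this explicit by exhibiting a concrete $t^*\in\cT_n$ and reading off $m_{t^*}$, which you should do as well rather than appeal to \Sec~\ref{llt} (which treats the coloured, not the degree, lattice).
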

Notice that the proof only requires existence of the second moments.
Next, we show that $\tau(\bm t^*_n)$ is typically close to $\tau^*$.
\begin{lemma}\label{ddp_prop_typ}
Assume that {\bf DEG} holds. Then there exists $r_n=o(1)$ such that $\tau(\bm t^*_{n})\in\mathcal B_{r_n}(\tau^*)$ with high probability.
\end{lemma}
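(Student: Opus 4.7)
The plan is to prove convergence in probability $\Delta(\tau(\vt^*_n),\tau^*)\to 0$; the existence of a deterministic sequence $r_n=o(1)$ with $\pr[\tau(\vt^*_n)\in\cB_{r_n}(\tau^*)]\to 1$ then follows by a routine diagonal extraction (choose $\varepsilon_k\downarrow 0$ and $n_k$ with $\pr[\Delta(\tau(\vt^*_n),\tau^*)\ge\varepsilon_k]\le\varepsilon_k$ for $n\ge n_k$, then set $r_n=\varepsilon_k$ on $[n_k,n_{k+1})$). By the definition of the product metric on $\cT_{\mathrm{rel}}$ it suffices to treat the three components $|\vm/n-\bar d/\bar k|$, $\Delta(p_{\mathrm{d},\vt^*_n},p_{\mathrm{d}})$, and $\Delta(p_{\mathrm{k},\vt^*_n},p_{\mathrm{k}})$ separately.

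The first component is immediate from $\vm\sim\Po(n\bar d/\bar k)$ and Chebyshev: $|\vm/n-\bar d/\bar k|$ is of order $n^{-1/2}$ in probability. For the variable-side piece I would split $\Delta(p_{\mathrm{d},\vt^*_n},p_{\mathrm{d}})$ into the three defining summands. The second-moment piece $|n^{-1}\sum_i\vd_i^2-\Erw[\vd^2]|$ and the $\alpha$-th-moment piece $|n^{-1}\sum_i\vd_i^\alpha-\Erw[\vd^\alpha]|$ both tend to $0$ in probability by the weak law of large numbers, using $\Erw[\vd^\alpha]<\infty$ from \textbf{DEG} (which in turn controls $\Erw[\vd^2]$ by Jensen). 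For the weighted total variation $\sum_x x\,|p_{\mathrm{d},\vt^*_n}(x)-p_{\mathrm{d}}(x)|$ I would apply Scheff\'e's lemma to the non-negative sequences $f_n(x)=x\,p_{\mathrm{d},\vt^*_n}(x)$ and $f(x)=x\,p_{\mathrm{d}}(x)$: pointwise $p_{\mathrm{d},\vt^*_n}(x)\to p_{\mathrm{d}}(x)$ by the LLN applied to $\vecone\{\vd_i=x\}$, and the totals satisfy $\sum_x f_n(x)=n^{-1}\sum_i\vd_i\to\bar d=\sum_x f(x)$ again by LLN. The standard subsequence principle promotes these in-probability hypotheses into an in-probability conclusion for Scheff\'e, which yields the desired weighted $L^1$-convergence.

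The factor-side piece $\Delta(p_{\mathrm{k},\vt^*_n},p_{\mathrm{k}})$ is handled in exactly the same way, with the one complication that the sample size $\vm$ is itself random. However, the first step already gives $\vm\to\infty$ and $\vm/n\to\bar d/\bar k>0$ in probability, so conditioning on $\vm=m$ and letting $m\to\infty$ along subsequences reduces matters to the previous argument; the assumption $\Erw[\vk^{2+\varepsilon_{\mathrm{deg}}}]<\infty$ from \textbf{DEG} supplies the required $\alpha$-th moment bound on the $\vk_j$. The main (mild) obstacle in the whole proof is this random-sample-size bookkeeping together with converting Scheff\'e from its textbook a.s.\ form into an in-probability statement; everything else reduces to direct appeals to the law of large numbers and Scheff\'e's lemma.
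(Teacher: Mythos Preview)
Your argument is correct. The overall structure matches the paper's: split the metric into its constituent pieces, handle the ratio $\vm/n$ by Poisson concentration, handle the second- and $\alpha$-th-moment pieces by the weak law of large numbers, and reduce the factor side to the variable side once $\vm\to\infty$ in probability. The one genuine methodological difference is in the weighted total-variation term $\sum_x x\,|p_{\mathrm{d},\vt^*_n}(x)-p_{\mathrm{d}}(x)|$. You invoke Scheff\'e's lemma (pointwise convergence of $x\,p_{\mathrm{d},\vt^*_n}(x)$ together with convergence of the totals $n^{-1}\sum_i\vd_i\to\bar d$), lifted from its a.s.\ formulation to an in-probability statement via the subsequence principle and a countable diagonalisation over $x$; this is clean and requires nothing beyond $\Erw[\vd]<\infty$. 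The paper instead works quantitatively: it first controls the \emph{unweighted} deviations $\sum_{d>0}|p_{\mathrm{d},\vt^*_n}(d)-p_{\mathrm{d}}(d)|$ by a Chebyshev bound with weights $\alpha(d)\propto d^{-(1+\varepsilon_{\mathrm{deg}}/2)}$ (this is where the finite $(2+\varepsilon_{\mathrm{deg}})$-th moment is used), then applies Markov to $\Erw[\vd_t^\alpha]$ on $\cT_{\mathrm{v3},n}$ to obtain the deterministic truncation $\max_i d_{t,i}\le c\,n^{1/\alpha}$, and finally bounds the weighted sum by $d_{\max,n}\cdot r_{\mathrm{e},n}+\Erw[\vd\,\vecone\{\vd\ge d_{\max,n}\}]=o(1)$. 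Your route is shorter for the lemma as stated; the paper's route yields explicit rates and, as a byproduct, the maximum-degree bound $\max_i \vd_i=O(n^{1/(2+\varepsilon_{\mathrm{deg}})})$, which is recorded and reused in Section~\ref{ddp_typ_prop}.
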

We then use a fairly general argument to show how Proposition \ref{ddp_typ_whp} is immediately implied by Lemma \ref{ddp_prop_valid} and Lemma \ref{ddp_prop_typ}.
Finally, we derive Proposition \ref{dd_typ_prop} from Proposition \ref{ddp_typ_whp}.
%
%TYPICAL DD PROOF - VALID
%
\subsection{Proof of Lemma \ref{ddp_prop_valid}}\label{ddp_valid}
The relevant quantities for the proof are the total variable degree $\bm d_{\mathrm{tot},n}$, $n\in\mathbb Z_{\ge 0}$, the total factor degree $\bm k_{\mathrm{tot},m}$, $m\in\mathbb Z_{\ge 0}$, and the number of factors $\bm m_n=\Po(\bar m_n)$, i.e.
\begin{flalign*}
\bm d_{\mathrm{tot},n}=\sum_{i\in[n]}\bm d_i\textrm{, }
\bm k_{\mathrm{tot},m}=\sum_{a\in[m]}\bm k_a\textrm{, }
\bm m'_n=\sum_{i\in[n]}\bm m'_{i}
\end{flalign*}
with $\bm m'_{i}\sim\Po(\bar d/\bar k)$, $i\in\mathbb Z_{>0}$, independent of anything else, hence $\bm m_n\sim\bm m'_n$ by the properties of the Poisson distribution. Hence, all relevant quantities are sums of i.i.d.~non-negative integer random variables with slightly more than the second moment, which allows to treat them simultaneously using Theorem 3.5.2 in \cite{durrett2010} and the discussion prior to the Theorem.

In particular, we need to distinguish four cases depending on whether or not $\bm d$ and $\bm k$ are degenerate.
To be thorough, notice that $\mathcal D\setminus\{0\}\neq\emptyset$ and $\mathcal K\setminus\{0\}\neq\emptyset$ since $\bar d\in\mathbb R_{>0}$ and $\bar k\in\mathbb R_{>0}$. Hence, we have $\mathbb P[\bm d=\bar d]=1$ with $\bar d\in\mathbb Z_{>0}$ if $\bm d$ is degenerate, and otherwise $\mathbb P[\bm d\in d^*+h_{\mathrm{d}}\mathbb Z]=1$ for some $d^*\in\mathcal D$ and $h_{\mathrm{d}}\in\mathbb Z_{>0}$ denoting the span of $\bm d$ as introduced in Section 3.5 of \cite{durrett2010}. In the latter case we say that $\bm d$ is lattice.
Obviously, the same holds for $\bm k$, while $\Po(\bar d/\bar k)$ is always lattice with span $1$.

In order to treat the random variables above simultaneously we let $\bm x\in\mathbb Z_{\ge 0}$ with $\bar x=\mathbb E[\bm x]\in\mathbb R_{>0}$ and $\sigma^2=\Var(\bm x)\in\mathbb R_{\ge 0}$. Further, for $n\in\mathbb Z_{>0}$ we let $\bm s_n=\sum_{i\in[n]}\bm x_i$ with $\bm x_i\sim \bm x$, $i\in\mathbb Z_{>0}$, being i.i.d.~random variables.
If $\bm x$ is degenerate then we have $\mathbb P[\bm x=\bar x]=1$ with $\bar x\in\mathbb Z_{>0}$ and further $\mathbb P[\bm s_n=\bar xn]=1$ for $n\in\mathbb Z_{>0}$.
Otherwise, we have $\mathbb P[\bm x\in x^*+h\mathbb Z]=1$ with $x^*\in\mathbb Z_{\ge 0}$ such that $\mathbb P[\bm x=x^*]\in(0,1)$ and $h$ being the span of $\bm x$.
In this case, as discussed in \cite{durrett2010}, we have $\mathbb P[\bm s_n\in\mathcal L_n]=1$ with $\mathcal L_n=x^*n+h\mathbb Z$, and the following local limit theorem.
\begin{theorem}\label{ddp_llt}
For $\bm x$ lattice with finite variance and the notions introduced above we have
\begin{align*}
\lim_{n\rightarrow\infty}&\sup_{s\in\mathcal L_n}\left|\frac{\sqrt{n}}{h}\mathbb P[\bm s_n=s]-\phi_n(s)\right|=0\textrm{,}\\
\phi_n(s)&=\frac{1}{\sqrt{2\pi\sigma^2}}\exp\left(-\frac{v_{n,s}^2}{2\sigma^2}\right)\textrm{, }
v_{n,s}=\frac{1}{\sqrt{n}}(s-\bar xn)\textrm{, }s\in\mathbb R_{\ge 0}\textrm{.}
\end{align*}
\end{theorem}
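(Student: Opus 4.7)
The plan is to follow the classical Fourier-analytic approach to local limit theorems for lattice distributions. Let $\varphi(t)=\Erw[\exp(\cpxI t\bm x)]$ and $\tilde\varphi(t)=\exp(-\cpxI t\bar x)\varphi(t)$, so that $\tilde\varphi$ is the characteristic function of $\bm x-\bar x$. Because $\bm s_n$ is supported on the lattice $\cL_n=x^*n+h\ZZ$, the standard Fourier inversion formula for lattice random variables yields, for every $s\in\cL_n$,
\begin{align*}
\pr[\bm s_n=s]=\frac{h}{2\pi}\int_{-\pi/h}^{\pi/h}\varphi(t)^n\exp(-\cpxI ts)\,\intD t.
\end{align*}
After substituting $t=u/\sqrt n$, absorbing the mean via $\tilde\varphi$, and recognising the Gaussian as its own Fourier transform $\phi_n(s)=\frac{1}{2\pi}\int_{\RR}\exp(-\sigma^2u^2/2)\exp(-\cpxI uv_{n,s})\,\intD u$, we obtain the clean identity
\begin{align*}
\frac{\sqrt n}{h}\pr[\bm s_n=s]-\phi_n(s)=\frac{1}{2\pi}\int_{|u|\le\pi\sqrt n/h}\bc{\tilde\varphi(u/\sqrt n)^n-\exp(-\sigma^2u^2/2)}\exp(-\cpxI uv_{n,s})\,\intD u-\frac{1}{2\pi}\int_{|u|>\pi\sqrt n/h}\exp(-\sigma^2u^2/2)\exp(-\cpxI uv_{n,s})\,\intD u.
\end{align*}

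The key observation is that after the triangle inequality the factor $\exp(-\cpxI uv_{n,s})$ drops out, so any bound we derive will be automatically uniform in $s\in\cL_n$. The second integral tends to zero as $n\to\infty$ because the Gaussian has exponentially decaying tails. For the first integral I would split the domain into $|u|\le A$ and $A<|u|\le\pi\sqrt n/h$ for a large parameter $A$ to be fixed. On $|u|\le A$, the Taylor expansion $\tilde\varphi(u/\sqrt n)=1-\sigma^2u^2/(2n)+o(u^2/n)$, valid because $\bm x$ has a finite second moment, combined with the elementary identity $(1+z/n)^n\to\exp(z)$ gives pointwise convergence $\tilde\varphi(u/\sqrt n)^n\to\exp(-\sigma^2u^2/2)$. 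Uniform convergence of this limit on $[-A,A]$ then follows from a compactness argument, and dominated convergence allows us to conclude that the contribution from this region vanishes as $n\to\infty$.

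On the annulus $A<|u|\le\pi\sqrt n/h$ the hard work begins, and this is the main obstacle. Here I need to prevent the characteristic function from contributing non-trivially. Because $\bm x$ is lattice with exact span $h$, the strict inequality $|\tilde\varphi(t)|<1$ holds for all $t\in(-\pi/h,\pi/h)\setminus\{0\}$, and a local second order expansion of $-\ln|\tilde\varphi(t)|$ around the origin combined with compactness of $[-\pi/h,\pi/h]$ yields a constant $c>0$ with $|\tilde\varphi(t)|\le\exp(-ct^2)$ uniformly for $|t|\le\pi/h$. Consequently $|\tilde\varphi(u/\sqrt n)|^n\le\exp(-cu^2)$ on the annulus, which together with the Gaussian gives an integrable dominating function; choosing $A$ large enough ensures the tail contribution is arbitrarily small. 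Concatenating the three bounds completes the uniform estimate. The main technical subtlety is establishing the quadratic upper bound on $-\ln|\tilde\varphi(t)|$; this requires the span of $\bm x$ to be exactly $h$, i.e.\ that no coarser lattice carries $\bm x$, which is precisely the definition of span used in Section~3.5 of~\cite{durrett2010}.
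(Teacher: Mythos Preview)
Your proof plan is correct and follows the classical Fourier-analytic route. Note, however, that the paper does not actually prove this statement: it is quoted verbatim as Theorem~3.5.2 in~\cite{durrett2010} and used as a black box. Your sketch is essentially the proof Durrett gives, so there is nothing to compare against in the paper itself. One small remark on your annulus step: the uniform bound $|\tilde\varphi(t)|\le\exp(-ct^2)$ on all of $[-\pi/h,\pi/h]$ does not follow purely from the local second-order expansion; you need to combine that expansion (valid on some $[-\delta,\delta]$) with the compactness bound $\sup_{\delta\le|t|\le\pi/h}|\tilde\varphi(t)|\le\eta<1$ and then take $c$ small enough to accommodate both regions. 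You allude to this, but the phrasing could be tightened.
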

As we will see in the following, Theorem \ref{ddp_llt} has immediate consequences for the distribution of $\bm s_n$ that facilitate the proof of Lemma \ref{ddp_prop_valid}. Now, we are ready for the discussion of the four cases.

First, assume that we are in the biregular case, i.e.~both $\bm d$ and $\bm k$ are degenerate.
Then $\bm d_{\mathrm{tot},n}=\bar dn$ and $\bm k_{\mathrm{tot},m}=\bar km$ are degenerate as well, which implies that $t\in\mathcal T_n$ iff $m_t=\bar m_n$ and hence $|\mathcal T_n|=1$.
For $n\in\bar k\mathbb Z_{>0}$ we have $\bar m_n\in\mathbb Z_{>0}$ and further $\mathbb P[\bm t^*_n\in\mathcal T_n]=\mathbb P[\bm m_n=\bar m_n]>0$ so $|\mathcal N|=\infty$.
Further, for any $n\in\mathcal N$ we must have $\bar m_n\in\mathbb Z_{>0}$, and further saw that
\begin{align*}
\mathbb P[\bm t^*_n\in\mathcal T_n]=\mathbb P[\bm m_n=\bar m_n]=\mathbb P[\bm m'_n=\bar m_n]\textrm{.}
\end{align*}
Hence, we can use the local limit theorem \ref{ddp_llt} for $\bm m'_n$ at $\bar m_n$, i.e.~$h=1$, $\sigma^2=\bar d/\bar k$, $v_{\bar m_n}=0$ and hence $\phi_n(\bar m_n)=c$ with $c=\sqrt{2\pi\sigma^2}^{-1}\in\mathbb R_{>0}$ which gives
\begin{align*}
\mathbb P[\bm t^*_n\in\mathcal T_n]=\frac{1}{\sqrt{n}}\left(c+o(1)\right)=\Theta(\sqrt{n}^{-1})\textrm{.}
\end{align*}
Next, we consider the case that $\bm d$ is degenerate and $\bm k$ is lattice.
Hence, we have $\bar d\in\mathbb Z_{>0}$, $\mathcal D=\{\bar d\}$ and $\mathbb P[\bm d_{\mathrm{tot},n}=\bar dn]=1$ for $n\in\mathbb Z_{>0}$, on the variable side.
Further, we have $\mathbb P[\bm k_{\mathrm{tot},m}\in\mathcal L_m]=1$ with $\mathcal L_m=k^*m+h_{\mathrm{k}}\mathbb Z$ for $m\in\mathbb Z_{>0}$ on the factor side, where $k^*\in\mathcal K\setminus\{0\}$ and $h_{\mathrm{k}}\in\mathbb Z_{>0}$ denotes the span of $\bm k$.
Now, for $n\in k^*\mathbb Z_{>0}$ we have $m=\frac{\bar dn}{k^*}\in\mathbb Z_{>0}$ and hence $t\in\mathcal T_n$, where $t$ is given by $m_t=m$, $d_{t,i}=\bar d$ for $i\in[n]$ and $k_{t,a}=k^*$ for $a\in[m]$, so $n\in\mathcal N$ and hence $|\mathcal N|=\infty$.
Further, for any $n\in\mathcal N$ there exists $t^*\in\mathcal T_n$, so $\bar dn\in\mathcal L_{m_{t^*}}$.
But by definition we have $\mathcal L_m=\mathcal L_{m_{t^*}}$ for any $m\in m_{t^*}+h_{\mathrm{k}}\mathbb Z$, so $\bar dn\in\mathcal L_m$. Now, fix some large radius $r\in\mathbb R_{>0}$ and let $\mathcal M_n$ be given by all $m\in m_{t^*}+h_{\mathrm{k}}\mathbb Z$ with $|m-\bar m_n|<r\sqrt{n}$. Due to the lattice structure this gives $|\mathcal M_n|=\Theta(\sqrt{n})$. Further, notice that for $m\in\mathcal M_n$ we have $\phi_n(m)=\Theta(1)$ uniformly since $|v_m|<r$ in the local limit theorem for $\bm m'_n$, so $\mathbb P[\bm m'_n=m]=\Theta(\sqrt{n}^{-1})$ uniformly and thereby $\mathbb P[\bm m'_n\in\mathcal M_n]=\Theta(1)$. For any $m\in\mathcal M_n$ we have $|\bar dn-\bar km|=\bar k|\bar m-m|<\bar kr\sqrt{n}$, so the required total degree $\bar dn$ is sufficiently close to the expected total degree $\bar km$ on the factor side.
Now, since we have $m=\Theta(n)$ uniformly for all $m\in\mathcal M_n$ and $\bar dn\in\mathcal L_m$, the local limit theorem for $\bm k_{\mathrm{tot},m}$ gives $\mathbb P[\bm k_{\mathrm{tot},m}=\bar dn]=\Theta(\sqrt{n}^{-1})$ uniformly for all $m\in\mathcal M_n$. This shows that $\mathbb P[\bm t^*_n\in\mathcal T_n]=\Omega(\sqrt{n}^{-1})$. To see that $\mathbb P[\bm t^*_n\in\mathcal T_n]=O(\sqrt{n}^{-1})$ we only have to notice that $\mathbb P[\bm k_{\mathrm{tot},m}=\bar dn]=O(\sqrt{n}^{-1})$ uniformly for all $m\in\mathbb Z_{\ge\varepsilon n}$ for any fixed $\varepsilon\in\mathbb R_{>0}$ and that $\mathbb P[\bm m_n<\varepsilon n]=o(\sqrt{n}^{-1})$ using the well-known Poisson tails.

Now, assume that $\bm d$ is lattice and $\bm k$ is degenerate.
Let $d^*\in\mathcal D\setminus\{0\}$ and $h_{\mathrm{d}}\in\mathbb Z_{>0}$ denote the span of $\bm d$.
Notice that for any $n\in\bar k\mathbb Z_{>0}$ we have $m=d^*n/\bar k\in\mathbb Z_{>0}$, so the corresponding sequences $t$ are in $\mathcal T_n$ and hence $|\mathcal N|=\infty$.
Further, for any $n\in\mathcal N$ we fix $t^*\in\mathcal T_n$ and notice that $\bar k m_{t^*}\in\mathcal L_n$ with $\mathcal L_n=d^*n+h_{\mathrm{d}}\mathbb Z$, so $\bar km\in\mathcal L_n$ for any $m\in m_{t^*}+h_{\mathrm{d}}\mathbb Z$.
We repeat the previous construction with $r\in\mathbb R_{>0}$ to obtain $\mathbb P[\bm m'_n\in\mathcal M_n]=\Theta(1)$, and again for any $m\in\mathcal M_n$ we have $|\bar dn-\bar k m|<\bar kr\sqrt{n}$. But this time, the consequence is that $\mathbb P[\bm d_{\mathrm{tot},n}=\bar km]=\Theta(\sqrt{n}^{-1})$ uniformly for all $m\in\mathcal M_n$, so $\mathbb P[\bm t^*_n\in\mathcal T_n]=\Omega(\sqrt{n}^{-1})$. For the upper bound notice that we have the uniform bound $\mathbb P[\bm d_{\mathrm{tot},n}=\bar k m]=O(\sqrt{n}^{-1})$ for any choice of $m$ and hence $\mathbb P[\bm t^*_n\in\mathcal T_n]=O(\sqrt{n}^{-1})$.

We turn to the final case that both $\bm d$ and $\bm k$ are lattice.
Let $d^*\in\mathcal D\setminus\{0\}$ and $k^*\in\mathcal K\setminus\{0\}$, further $h_{\mathrm{d}}$ and $h_{\mathrm{k}}$ denote the spans as before. With $n\in k^*\mathbb Z_{>0}$ and $m=d^*n/k^*\in\mathbb Z_{>0}$ we get $|\mathcal N|=\infty$.
Further, for any $n\in\mathcal N$ there exists $t^*\in\mathcal T_n$, so $\mathcal L_{\mathrm{d},n}\cap\mathcal L_{\mathrm{k},m_{t^*}}\neq\emptyset$, where $\mathcal L_{\mathrm{d},n}=d^*n+h_{\mathrm{d}}\mathbb Z$ and $\mathcal L_{\mathrm{k},m}=k^*m+h_{\mathrm{k}}\mathbb Z$. Fix $s^*\in\mathcal L_{\mathrm{d},n}\cap\mathcal L_{\mathrm{k},m_{t^*}}$, then we have $s^*\in\mathcal L_{\mathrm{d},n}\cap\mathcal L_{\mathrm{k},m}$ for any $m\in m_{t^*}+h_{\mathrm{k}}\mathbb Z$ since then $\mathcal L_{\mathrm{k},m}=\mathcal L_{\mathrm{k},m_{t^*}}$.
Hence, we are free to repeat the previous constrruction for given $r$ to obtain $\mathcal M_n$ with $\mathbb P[\bm m_n\in\mathcal M_n]=\Theta(1)$.
In the next step we need to improve on $s^*$, so for fixed $m\in\mathcal M_n$ we notice that we have $s\in\mathcal L_{\mathrm{d},n}\cap\mathcal L_{\mathrm{k},m}$ for any $s\in s^*+h_{\mathrm{e}}\mathbb Z$ with $h_{\mathrm{e}}=\gcd(h_{\mathrm{d}},h_{\mathrm{k}})$. So, for fixed and large $r'\in\mathbb R_{>0}$ let $\mathcal E_n$ be given by $s\in s^*+h_{\mathrm{e}}\mathbb Z$ with $|s-\bar dn|<r'\sqrt{n}$. But then for any $m\in\mathcal M_n$ and $s\in\mathcal E_n$ we have $|s-\bar km|<r'\sqrt{n}+r\bar k\sqrt{n}$, so $s$ is sufficiently close to the expected total degree on the factor side.
Now, we can summon the local limit theorem for $\bm d_{\mathrm{tot},n}$ to get $\mathbb P[\bm d_{\mathrm{tot},n}\in\mathcal E_n]=\Theta(1)$ and the local limit theorem for $\bm k_{\mathrm{tot},m}$ to get $\mathbb P[\bm k_{\mathrm{tot},m}=s]=\Theta(\sqrt{n}^{-1})$ uniformly for all $s\in\mathcal E_n$ and $m\in\mathcal M_n$.
This gives $\mathbb P[\bm t^*_n\in\mathcal T_n]=\Omega(\sqrt{n}^{-1})$, while $\mathbb P[\bm d_{\mathrm{tot},n}=s]=O(\sqrt{n}^{-1})$ uniformly gives $\mathbb P[\bm t^*_n\in\mathcal T_n]=O(\sqrt{n}^{-1})$.
%
%TYPICAL DD PROOF - TYPICAL
%
\subsection{Proof of Lemma \ref{ddp_prop_typ}}\label{ddp_typ}
We split the metric into the seven individual contributions and consider them separately.
For this purpose let $n\in\mathbb Z_{>0}$ and
\begin{align*}
\mathcal T_{\mathrm{r},n}&=\{t\in\mathcal T^*_n:|m_t/n-\bar d/\bar k|<r_{\mathrm{r},n}\}\textrm{,}\\
\mathcal T_{\mathrm{v1},n}&=\left\{t\in\mathcal T^*_n:\sum_dd\left|p_{\mathrm{d},t}(d)-p_{\mathrm{d}}(d)\right|<r_{\mathrm{v1},n}\right\}\textrm{, }
\mathcal T_{\mathrm{f1},n}=\left\{t\in\mathcal T^*_n:\sum_kk\left|p_{\mathrm{k},t}(k)-p_{\mathrm{k}}(k)\right|<r_{\mathrm{f1},n}\right\}\textrm{,}\\
\mathcal T_{\mathrm{v2},n}&=\left\{t\in\mathcal T^*_n:\left|\mathbb E[\bm d_t^2]-\mathbb E[\bm d^2]\right|<r_{\mathrm{v2},n}\right\}\textrm{, }
\mathcal T_{\mathrm{f2},n}=\left\{t\in\mathcal T^*_n:\left|\mathbb E[\bm k_t^2]-\mathbb E[\bm k^2]\right|<r_{\mathrm{f2},n}\right\}\textrm{,}\\
\mathcal T_{\mathrm{v3},n}&=\left\{t\in\mathcal T^*_n:\left|\mathbb E[\bm d_t^\alpha]-\mathbb E[\bm d^\alpha]\right|<r_{\mathrm{v3},n}\right\}\textrm{, }
\mathcal T_{\mathrm{f3},n}=\left\{t\in\mathcal T^*_n:\left|\mathbb E[\bm k_t^\alpha]-\mathbb E[\bm k^\alpha]\right|<r_{\mathrm{f3},n}\right\}\textrm{,}
\end{align*}
for some sequences of radii and with $\alpha=2+\varepsilon_{\mathrm{deg}}$.
Since $m_{\bm t^*_n}$ is $\Po(\bar m_n)$ we can use the standard Poisson bounds, e.g.~Theorem 2.1 with Remark 2.6 in \cite{janson2000}, to see that $\bm t^*_n\in\mathcal T_{\mathrm{r},n}$ with high probability for any $r_{\mathrm{r},n}=\omega(\sqrt{n}^{-1})$ with $r_{\mathrm{r},n}=o(1)$.
Further, we notice that
\begin{align*}
s_{t}=n\mathbb E[\bm d_t^2]=\sum_{i\in[n]}d_{t,i}^2\textrm{, so }
s_{\bm t^*_n}&=\sum_{i\in[n]}\bm d_{i}^2
\end{align*}
and thereby $s_{\bm t^*_n}$ is the sum over the i.i.d.~random variables $\bm d_{i}^2$, $i\in[n]$.
Hence, we use the weak law of large numbers, e.g.~Chapter 10.2 in \cite{feller1968},  applied to $\frac{1}{n}s_{\bm t^*_n}$ considered as the average over the i.i.d.~$\bm d^2_i$, $i\in\mathbb Z_{>0}$, with finite \emph{first} moment $\mathbb E[\bm d^2]\in\mathbb R_{>0}$ to obtain $r_{\mathrm{v2},n}=o(1)$ such that $\bm t^*_n\in\mathcal T_{\mathrm{v2},n}$ with high probability.
The discussion of $\mathcal T_{\mathrm{v3},n}$ is completely analogous.
Next, for $r_{\mathrm{e},n}\in\mathbb R_{>0}$ consider the event
\begin{align*}
\mathcal E_n=\left\{t\in\mathcal T^*_n:\sum_{d>0}\left|p_{\mathrm{d},t}(d)-p_{\mathrm{d}}(d)\right|<r_{\mathrm{e},n}\right\}\textrm{.}
\end{align*}
Let $\alpha'(d)=d^{-(1+\frac{1}{2}\varepsilon_{\mathrm{deg}})}$ for $d\in\mathcal D\setminus\{0\}$.
Let $a=\sum_d\alpha'(d)\in\mathbb R_{>0}$ and $\alpha=a^{-1}\alpha'\in\mathcal P(\mathcal D\setminus\{0\})$.
Then we have $\mathbb E[\vecone\{\bm d>0\}\alpha(\bm d)^{-2}]=a^2\mathbb E[\bm d^{2+\varepsilon_{\mathrm{deg}}}]$ and further
\begin{align*}
\mathbb P\left[\bm t^*_n\not\in\mathcal E_n\right]
&=\mathbb P\left[\sum_{d>0}|p_{\mathrm{d},\bm t^*_n}(d)-p_{\mathrm{d}}(d)|\ge\sum_{d>0}\alpha(d)r_{\mathrm{e},n}\right]
\le\sum_{d>0}\mathbb P\left[|p_{\mathrm{d},\bm t^*_n}(d)-p_{\mathrm{d}}(d)|\ge\alpha(d)r_{\mathrm{e},n}\right]\\
&\le\sum_{d>0}\frac{\Var(np_{\mathrm{d},\bm t^*_n}(d))}{\left(\alpha(d)nr_{\mathrm{e},n}\right)^2}
=\sum_{d>0}\frac{p_{\mathrm{d}}(d)(1-p_{\mathrm{d}}(d))}{\alpha(d)^2nr_{\mathrm{e},n}^2}
\le\frac{a^2\mathbb E[\bm d^{2+\varepsilon_{\mathrm{deg}}}]}{nr_{\mathrm{e},n}^2}\textrm{,}
\end{align*}
where we used that $np_{\mathrm{d},\bm t^*_n}(d)$ is binomial with size $n$ and success probability $p_{\mathrm{d}}(d)$.
Hence, we can choose any $r_{\mathrm{e},n}=\omega(\sqrt{n}^{-1})$ with $r_{\mathrm{e},n}=o(1)$ to obtain $\bm t^*_n\in\mathcal E_n$ with high probability.
With $\alpha=2+\varepsilon_{\mathrm{deg}}$, $r_{\mathrm{e},n}=o(n^{-1/\alpha})$, $d_{\max,n}=c_nn^{1/\alpha}=\omega(1)$, $c_n=(\mathbb E[\bm d^\alpha]+r_{\mathrm{v3},n})^{1/\alpha}=\Theta(1)$, and $t\in\mathcal E_n\cap\mathcal T_{\mathrm{v3},n}$ Markov's inequality implies that
\begin{flalign*}
\mathbb P[\bm d_t\ge d_{\max,n}]\le\frac{\mathbb E[\bm d_t^{\alpha}]}{d_{\max,n}^\alpha}<\frac{\mathbb E[\bm d^{\alpha}]+r_{\mathrm{v3},n}}{c_n^\alpha n}=\frac{1}{n}
\end{flalign*}
and hence $\mathbb P[\bm d_t\ge d_{\max,n}]=0$, which further yields
\begin{flalign*}
\sum_dd|p_{\mathrm{d},t}(d)-p_{\mathrm{d}}(d)|=\sum_{d<d_{\max,n}}d|p_{\mathrm{d},t}(d)-p_{\mathrm{d}}(d)|+\mathbb E[\vecone\{\bm d\ge d_{\max,n}\}\bm d]<d_{\max,n}r_{\mathrm{e},n}+o(1)=o(1)\textrm{,}
\end{flalign*}
meaning that there exists $r_{\mathrm{v1},n}=o(1)$ such that $\mathcal T_{\mathrm{v1},n}\subseteq\mathcal E_n\cap\mathcal T_{\mathrm{v3},n}$.
This shows the existence of radii $r_{\mathrm{r},n}$, $r_{\mathrm{v},n}=o(1)$ such that jointly $\bm t^*_n\in\mathcal T_{\mathrm{r},n}$ and $p_{\mathrm{d},\bm t^*_n}\in\mathcal B_{r_{\mathrm{v},n}}(p_{\mathrm{d}})$ with high probability.
Due to symmetry we obtain $r_{\mathrm{f},m}=o(1)$ (in the number of factors) such that $p_{\mathrm{k},\bm t^*_n}\in\mathcal B_{r_{\mathrm{f},m}}(p_{\mathrm{k}})|m_{\bm t^*_n}=m$ with high probability in $m$ and further uniformly in $n$.
But since we have $m_t=\Theta(n)$ for $t\in\mathcal T_{\mathrm{r},n}$ uniformly, we obtain radii $r_{\mathrm{f},n}=o(1)$ depending only on $n$ by taking the supremum of $r_{\mathrm{f},m_t}$ over $t\in\mathcal T_{\mathrm{r},n}$.
With $r_n=r_{\mathrm{r},n}+r_{\mathrm{v},n}+r_{\mathrm{f},n}=o(1)$ this immediately gives $\tau(\bm t^*_n)\in\mathcal B_{r_n}(\tau^*)$ with high probability.
%
%TYPICAL DD PROOF - TYPICAL CONDITIONAL
%
\subsection{Proof of Proposition \ref{ddp_typ_whp}}\label{ddp_typ_cond}
Recall from the proof of Lemma \ref{ddp_prop_valid} that $\tau(\bm t_n)=\tau^*$ almost surely for all $n\in\cN$ if $\bm d$, $\bm k$ are degenerate, i.e.~the assertion holds for any choice of $r_n=o(1)$.
Otherwise, let $r_n=o(1)$ be a sequence obtained from Lemma \ref{ddp_prop_typ} such that $r_n=\omega(\ln(n)^{-1})$.
Notice that $3r_n\in o(1)$ and
\begin{flalign*}
\mathbb P\left[\tau(\bm t^*_n)\in\mathcal B_{r_n}(\tau^*)\right]
&\le\mathbb P\left[\bm t^*_n\in\mathcal T_{\mathrm{s},n}\right]
\le\mathbb P\left[\tau(\bm t^*_n)\in\mathcal B_{3r_n}(\tau^*)\right]\textrm{,}\\
\bm t^*_n\in\mathcal T_{\mathrm{s},n}&\textrm{ iff }
\bm m_n/n\in\mathcal B_{r_n}(\bar d/\bar k),\bm p_{\mathrm{d},n}\in\mathcal B_{r_n}(p_{\mathrm{d}}),\bm p_{\mathrm{k},\bm m_n}\in\mathcal B_{r_n}(p_{\mathrm{k}})\textrm{,}
\end{flalign*}
with $\bm p_{\mathrm{d},n}=p_{\mathrm{d},\bm t^*_n}$ denoting the relative frequencies of $(\bm d_i)_{i\in[n]}$, $\bm p_{\mathrm{k},m}$ denoting the relative frequencies of $(\bm k_a)_{a\in[m]}$ for given $m\in\mathbb Z_{\ge 0}$ and where we recall that $\bm t^*_n=(\bm m_n,(\bm d_i)_{i\in[n]},(\bm k_a)_{a\in[\bm m_n]})$.
In particular, the above shows that all three events occur with high probability and further $3r_n$ is also a suitable choice in the context of Lemma \ref{ddp_prop_typ}. For given $s$ and $m$ we use the shorthands
\begin{flalign*}
p_{\mathrm{m}}(m)&=\mathbb P[\bm m_n=m]\textrm{, }
P_{\mathrm{d}}(s)=\mathbb P\left[\sum_{i\in[n]}\bm d_i=s\right]\textrm{, }
P_{\mathrm{k}}(s,m)=\mathbb P\left[\sum_{a\in[m]}\bm k_a=s\right]\textrm{, }\\
P^+_{\mathrm{d}}(s)&=\mathbb P\left[\sum_{i\in[n]}\bm d_i=s,\bm p_{\mathrm{d},n}\in\mathcal B_{r_n}(p_{\mathrm{d}})\right]\textrm{, }
P^+_{\mathrm{k}}(s,m)=\mathbb P\left[\sum_{a\in[m]}\bm k_a=s,\bm p_{\mathrm{k},m}\in\mathcal B_{r_n}(p_{\mathrm{k}})\right]\textrm{,}
\end{flalign*}
further $P^-_{\mathrm{d}}(s)=P_{\mathrm{d}}(s)-P^+_{\mathrm{d}}(s)$, $P^-_{\mathrm{k}}(s,m)=P_{\mathrm{k}}(s,m)-P^+_{\mathrm{k}}(s,m)$ and $\mathcal M_n=n\mathcal B_{r_n}(\bar d/\bar k)$.
Using this notation we have
\begin{flalign*}
\mathbb P\left[\bm t^*_n\in\mathcal T_n\right]
&=\sum_{s,m}p_{\mathrm{m}}(m)P_{\mathrm{d}}(s)P_{\mathrm{k}}(s)\textrm{,}\\
\mathbb P\left[\bm t^*_n\in\mathcal T_n\cap\mathcal T_{\mathrm{s},n}\right]
&=\sum_{m\in\mathcal M_n}\sum_{s}p_{\mathrm{m}}(m)P^+_{\mathrm{d}}(s)P^+_{\mathrm{k}}(s,m)
\ge\mathbb P\left[\bm t^*_n\in\mathcal T_n,\bm m_n\in\mathcal M_n\right]-E_{\mathrm{d}}-E_{\mathrm{k}}\textrm{,}\\
E_{\mathrm{d}}&=\sum_{m\in\mathcal M_n}\sum_sp_{\mathrm{m}}(m)P^-_{\mathrm{d}}(s)P_{\mathrm{k}}(s,m)\textrm{, }
E_{\mathrm{k}}=\sum_{m\in\mathcal M_n}\sum_sp_{\mathrm{m}}(m)P_{\mathrm{d}}(s)P^-_{\mathrm{k}}(s,m)\textrm{, }
\end{flalign*}
where we exploited the dependency structure of $\bm t^*_n$.
With the Poisson bounds used in the proof of Lemma \ref{ddp_prop_typ}, $r_n=\omega(\ln(n)^{-1})$ and Lemma \ref{ddp_prop_valid} we have
\begin{flalign*}
\mathbb P\left[\bm t^*_n\in\mathcal T_n,\bm m_n\in\mathcal M_n\right]
&=\mathbb P\left[\bm t^*_n\in\mathcal T_n\right]-\mathbb P\left[\bm t^*_n\in\mathcal T_n,\bm m_n\not\in\mathcal M_n\right]
=\mathbb P\left[\bm t^*_n\in\mathcal T_n\right]-o(\sqrt{n}^{-1})\\
&=(1+o(1))\mathbb P\left[\bm t^*_n\in\mathcal T_n\right]\textrm{.}
\end{flalign*}
Now, assume that both $\bm d$ and $\bm k$ are lattice.
With $m\in\mathcal M_n$ and the proof of Lemma \ref{ddp_prop_valid}, respectively Theorem \ref{ddp_llt}, notice that $P_{\mathrm{k}}(s,m)=O(\sqrt{n}^{-1})$ uniformly in $s$, $m$ since $\bm k$ is lattice, so
\begin{flalign*}
E_{\mathrm{d}}=O(\sqrt{n}^{-1})\sum_{m\in\mathcal M_n}\sum_sp_{\mathrm{m}}(m)P^-_{\mathrm{d}}(s)
=O(\sqrt{n}^{-1})\mathbb P[\bm m_n\in\mathcal M_n]\mathbb P\left[\bm p_{\mathrm{d},n}\not\in\mathcal B_{r_n}(p_{\mathrm{d}})\right]=o(\sqrt{n}^{-1})
\end{flalign*}
since $\bm p_{\mathrm{d},n}\in\mathcal B_{r_n}(p_{\mathrm{d}})$ with high probability.
Further, we have $P_{\mathrm{d}}(s)=O(\sqrt{n}^{-1})$ uniformly since $\bm d$ is lattice and hence we obtain $E_{\mathrm{k}}=o(\sqrt{n}^{-1})$ analogously. With $\mathbb P(\bm t^*_n\in\mathcal T_n\cap\mathcal T_{\mathrm{s},n}]\le\mathbb P[\bm t^*_n\in\mathcal T_n]$ this gives
\begin{flalign*}
\mathbb P[\bm t^*_n\in\mathcal T_n\cap\mathcal T_{\mathrm{s},n}]
&=(1+o(1))\mathbb P[\bm t^*_n\in\mathcal T_n]
\end{flalign*}
with another application of Lemma \ref{ddp_prop_valid}, which shows that $\mathbb P[\bm t_n\in\mathcal T_{\mathrm{s},n}]=1+o(1)$ and thereby $\mathbb P[\tau(\bm t_n)\in\mathcal B_{3r_n}(\tau^*)]=1+o(1)$, establishing the assertion for the current case with $3r_n$.

Next, we consider the case that $\bm d$ is lattice and $\bm k$ is degenerate.
Then we have $\bm p_{\mathrm{k},m}=p_{\mathrm{k}}$ almost surely for all $m\in\mathbb Z_{>0}$ and hence $P^-_{\mathrm{k}}(s,m)=0$ for all $s$ and further $E_{\mathrm{k}}=0$ for $n$ sufficiently large.
Further, we have $P_{\mathrm{k}}(s,m)=\vecone\{s=\bar km\}$ and hence
\begin{flalign*}
E_{\mathrm{d}}=\sum_{m\in\mathcal M_n}p_{\mathrm{m}}(m)P^-_{\mathrm{d}}(\bar km)
=O(\sqrt{n}^{-1})\sum_{m\in\mathcal M_n}P^-_{\mathrm{d}}(\bar km)
=O(\sqrt{n}^{-1})\sum_{s}P^-_{\mathrm{d}}(s)
=o(\sqrt{n}^{-1})
\end{flalign*}
using the local limit theorem for $\bm m_n$ and $\bm p_{\mathrm{d},n}\in\mathcal B_{r_n}(p_{\mathrm{d}})$ with high probability. Following the discussion above this yields $\mathbb P[\tau(\bm t_n)\in\mathcal B_{3r_n}(\tau^*)]=1+o(1)$.

Finally, assume that $\bm d$ is degenerate and $\bm k$ is lattice, so in particular $\bm p_{\mathrm{d},n}=p_{\mathrm{d}}$ almost surely, i.e.~$P^-_{\mathrm{d}}(s)=0$ and further $E_{\mathrm{d}}=0$, and $P_{\mathrm{d}}(s)=\vecone\{s=\bar dn\}$, leaving us with
\begin{flalign*}
E_{\mathrm{k}}=\sum_{m\in\mathcal M_n}p_{\mathrm{m}}(m)P^-_{\mathrm{k}}(\bar dn,m)\textrm{.}
\end{flalign*}
Now, let $\bm m_1$, $\bm m_2$ be i.i.d.~with law $\Po(\bar m_n/2)$, i.e.~we consider $\bm m_n=\bm m_1+\bm m_2\sim\Po(\bar m_n)$ as derived random variable.
Analogously, we consider i.i.d.~copies $\bm k_{1,a}$, $\bm k_{2,a}$ with law $p_{\mathrm{k}}$ and $a\in\mathbb Z_{>0}$, which allows to consider $(\bm k_a)_{a\in[\bm m_n]}=((\bm k_{1,a})_{a\in[\bm m_1]},(\bm k_{2,a})_{a\in[\bm m_2]})$ as derived random variables. This immediately gives $\bm p_{\mathrm{k},\bm m_n}=\frac{\bm m_1}{\bm m_n}\bm p_{\mathrm{k1},\bm m_1}+\frac{\bm m_2}{\bm m_n}p_{\mathrm{k2},\bm m_2}$ and further
\begin{flalign*}
\Delta(\bm p_{\mathrm{k},\bm m_n},p_{\mathrm{k}})\le\frac{\bm m_1}{\bm m_n}\Delta(\bm p_{\mathrm{k1},\bm m_1},p_{\mathrm{k}})+\frac{\bm m_2}{\bm m_n}\Delta(\bm p_{\mathrm{k2},\bm m_2},p_{\mathrm{k}})\textrm{.}
\end{flalign*}
Hence, in the event that $\bm p_{\mathrm{k},\bm m_n}\not\in\mathcal B_{r_n}(p_{\mathrm{k}})$ we have $\bm p_{\mathrm{k1},\bm m_1}\not\in\mathcal B_{r_n}(p_{\mathrm{k}})$ or $\bm p_{\mathrm{k2},\bm m_2}\not\in\mathcal B_{r_n}(p_{\mathrm{k}})$.
Using corresponding shorthands for this decomposition we first obtain
\begin{flalign*}
E_{\mathrm{k}}&=\sum_{m\in\mathcal M_n}\sum_{m_1}p_{\mathrm{m1}}(m_1)p_{\mathrm{m2}}(m-m_1)P^-_{\mathrm{k}}(\bar dn,m)\\
&\le\sum_{m_1,m_2\in\mathcal M_{2,n}}p_{\mathrm{m1}}(m_1)p_{\mathrm{m2}}(m_2)P^-_{\mathrm{k}}(\bar dn,m_1+m_2)+o(\sqrt{n}^{-1})
\end{flalign*}
with $\mathcal M_{2,n}=\mathcal B_{nr_n}(\bar m_n/2)$ by using the Poisson bounds for both $\bm m_1$, $\bm m_2$ and an extension of the domain. As discussed above this further yields
\begin{flalign*}
E_{\mathrm{k}}&\le E_{\mathrm{k1}}+E_{\mathrm{k2}}+o(\sqrt{n}^{-1})\textrm{,}\\
E_{\mathrm{k1}}&=\sum_{m\in\mathcal M_{2,n}^2}p_{\mathrm{m1}}(m_1)p_{\mathrm{m2}}(m_2)\sum_sP^-_{\mathrm{k1}}(s,m_1)P_{\mathrm{k2}}(\bar dn-s,m_2)\textrm{,}\\
E_{\mathrm{k2}}&=\sum_{m\in\mathcal M_{2,n}^2}p_{\mathrm{m1}}(m_1)p_{\mathrm{m2}}(m_2)\sum_sP_{\mathrm{k1}}(s,m_1)P^-_{\mathrm{k2}}(\bar dn-s,m_2)\textrm{.}
\end{flalign*}
Since both $m_1$ and $m_2$ are uniformly linear in $n$ we can apply the local limit theorem to obtain
\begin{flalign*}
E_{\mathrm{k1}}=O(\sqrt{n}^{-1})\mathbb P\left[\bm m_1\in\mathcal M_{2,n},\bm p_{\mathrm{k1},\bm m_1}\not\in\mathcal B_{r_n}(p_{\mathrm{k}})\right]
\end{flalign*}
and the corresponding result for $E_{\mathrm{k2}}$.
At this point we notice that both the assertion of Proposition \ref{ddp_typ_whp} and  Lemma \ref{ddp_prop_typ} allow the choice of any arbitrarily flat sequence $r_n=o(1)$ and in particular such that the assertion of Lemma \ref{ddp_prop_typ} still holds with $r'_n=r_{2n}$ (where we may assume $r_{2n}\le r_n$ without loss of generality).
Hence, the observation that the models corresponding to $\bm m_1$ and $\bm m_2$ exactly reflect the model corresponding to $\bm m_{n/2}$ with radii $r'_{n/2}=r_n$ shows that we can choose $r_n$ such that $E_{\mathrm{k1}}$, $E_{\mathrm{k2}}=o(\sqrt{n}^{-1})$ and thus $E_{\mathrm{k}}=o(\sqrt{n}^{-1})$.
With these error bounds we also conclude for the last case that $\mathbb P[\tau(\bm t_n)\in\mathcal B_{3r_n}(\tau^*)]=1+o(1)$.
%
%TYPICAL DD PROOF - EXPECTATION
%
\subsection{Proof of Proposition \ref{dd_typ_prop}}
Let a sequence $f_n:\mathcal T_n\rightarrow\mathbb R$, $n\in\mathcal N$, be given that is sublinear in the number of factors and let $c\in\mathbb R_{>0}$ such that $|f_n(t)|\le c+c\frac{m_t}{n}$ for all $t\in\mathcal T_n$ and $n\in\mathcal N$.
Using Proposition \ref{ddp_typ_whp} we obtain $r_n=o(1)$ and let $\mathcal T^\circ_n$ denote the set of $t\in\mathcal T_n$ with $\tau(t)\in\mathcal B_{r_n}(\tau^*)$.
With this notation we have
\begin{flalign*}
\left|\mathbb E\left[f_n(\bm t_n)\vecone\{\bm t_n\not\in\mathcal T^\circ_n\}\right]\right|
&\le c\mathbb P[\bm t_n\not\in\mathcal T^\circ_n]+c\mathbb E\left[\frac{m_{\bm t_n}}{n}\vecone\{\bm t_n\not\in\mathcal T^\circ_n\}\right]\\
&\le o(1)+\frac{2c\bar m_n}{n}\mathbb P[\bm t_n\not\in\mathcal T^\circ_n]+c\mathbb E\left[\frac{m_{\bm t_n}}{n}\vecone\{\bm t_n\not\in\mathcal T^\circ_n,m_{\bm t_n}\ge 2\bar m_n\}\right]\textrm{.}
\end{flalign*}
Using the definition of $\bar m_n$ and Proposition \ref{ddp_typ_whp} we notice that the second contribution is also $o(1)$.
For the last contribution we recall the definition of $\bm t_n$, resolve the conditional expectation and use Lemma \ref{ddp_prop_valid} for the bound $\Theta(\sqrt{n}^{-1})$ in the denominator $\mathbb P[\bm t^*_n\in\mathcal T_n]$, while the nominator can be upper bounded by $\mathbb E[\bm m_nn^{-1}\vecone\{\bm m_n\ge 2\bar m_n\}]$.
From the definition of the Poisson distribution we have $\mathbb E[\bm m_n\vecone\{\bm m_n=m\}]=\bar m_n\mathbb P[\bm m_n=m-1]$, so we obtain the upper bound $\bar d/\bar k\mathbb P[\bm m_n+1\ge 2\bar m_n]$ which is exponentially small using standard Poisson bounds and hence $\sqrt{n}\mathbb P[\bm m_n+1\ge 2\bar m_n]=o(1)$.
%
%TYPICAL DD PROOF - PROPERTIES
%
\subsection{Properties of typical sequences}\label{ddp_typ_prop}
In this section we summarize a few properties of the typical sequences $t\in\cT_n$ for later usage.
First, notice that
\begin{flalign*}
\|p_{\mathrm{d},t}-p_{\mathrm{d}}\|_1\le 2\sum_{d>0}|p_{\mathrm{d},t}(d)-p_{\mathrm{d}}(d)|\le 2\Delta(\tau(t),\tau^*)<2r_n\textrm{,}
\end{flalign*}
so we can choose $r_n$ such that for any finite subset $\cD'\subseteq\cD$ and sufficiently small $\varepsilon\in(0,1)$ we have $p_{\mathrm{d},t}(d)\ge\varepsilon$ uniformly in $n$, $t\in\cT^\circ_n$ and $d\in\cD'$, further impose any absolute bound on the distance to $p_{\mathrm{d}}$ in $\|\cdot\|_1$ as well as the degree reweighted distance
\begin{flalign*}
\sum_dd|p_{\mathrm{d},t}(d)-p_{\mathrm{d}}(d)|\textrm{.}
\end{flalign*}
In particular, we also obtain convergence of the first moment since
\begin{flalign*}
|\Erw[\vd_t]-\Erw[\vd]|\le\sum_dd|p_{\mathrm{d},t}(d)-p_{\mathrm{d}}(d)|<r_n\textrm{.}
\end{flalign*}
Since we obviously have $\Erw[\vd_t^2]\rightarrow\Erw[\vd^2]$ and $\Erw[\vd_t^{\alpha}]\rightarrow\Erw[\vd^\alpha]$ with $\alpha=2+\varepsilon_{\mathrm{deg}}$ uniformly, we can choose $r_n$ to enforce uniform upper bounds $E^{(2)}$, $E^{(\alpha)}\in\mathbb R_{>0}$ uniformly in $n$ and $t\in\cT^\circ_n$.
As discussed in the proof of Proposition \ref{ddp_prop_typ} Markov's inequality then implies that $\max\{d_{t,i}:i\in[n]\}\le d_{\max,n}$ with $d_{\max,n}=(E^{(\alpha)}n)^{1/\alpha}$ uniformly in $t\in\cT^\circ_n$, so $\|\vd_t\|_\infty\le cn^{\beta}$ almost surely for some $c\in\RR_{\ge 0}$, $\beta\in(0,1/2)$ and uniformly in $t\in\cT^\circ_n$.
Combining these gives the uniform bound
\begin{flalign*}
\Erw\left[\vd_t^3\right]\le d_{\max,n}^{3-\alpha}E^{(\alpha)}=cn^{\beta}
\end{flalign*}
with $c\in\RR_{>0}$ given by the above and $\beta=(3-\alpha)/\alpha\in(0,1/2)$ (if $\varepsilon_{\mathrm{deg}}<1$ and obviously $\beta=0$ otherwise).
Since $m_t\sim\m_n$ uniformly for $t\in\cT^\circ_n$ the discussion above directly yields corresponding results for the factor side.

\section{Mutual contiguity} \label{mc}

This section is dedicated to the mutual contiguity part of Proposition \ref{Prop_Nishi}.
We start with the definition of contiguity.
Let two sequences $p_n$, $p^*_n\in\mathcal P(\Omega_n)$ for $n\in\mathbb Z_{>0}$ on the same spaces $\Omega_n$ be given.
Then $(p_n)_n$ is contiguous with respect to $(p^*_n)_n$ if for every $\varepsilon\in\mathbb R_{>0}$ there exists $n_0\in\mathbb Z_{>0}$ and $\delta\in\mathbb R_{>0}$ such that for all $n\in\mathbb Z_{\ge n_0}$ and all events $\mathcal E\subseteq\Omega_n$ with $p^*(\mathcal E)<\delta$ we have $p(\mathcal E)<\varepsilon$. If further $(p^*_n)_n$ is contiguous with respect to $(p_n)_n$ then the two sequences are mutually contiguous.

The factor graph model introduced in the following has the same law as the model discussed in Section \ref{Sec_Noela} for $\Theta=0$ and a prescribed way of obtaining the measures $\sP$.
The reason for explicitly introducing the model is to build the connection to Section \ref{vom} and Section \ref{llt} for one, and further simplifying the notation for brevity.
Another feature is that factor graphs are defined or all possible distribution sequences (corresponding to degree sequences in the standard case), which is useful and required to obtain concentration results in the upcoming sections.
\subsection{Product measure families}\label{mc_measures}
For $k\in\ZZ_{\ge 0}$ let $\mcFPsi_k=\RR_{>0}^{\mcColSp^k}$ denote the set of functions $\mcFpsi:\mcColSp^k\rightarrow\RR_{>0}$.
Further, fix a family $(\mcFdeg_\mcFInd,\mcFPsiP_\vmInd)_{\mcFInd\in\mcFIndSp}$ with $\mcFIndSp\subseteq\ZZ_{\ge 0}$ and $\mcFdeg_\mcFInd\in\ZZ_{\ge 0}$, $\mcFPsiP_\mcFInd\in\cP(\mcFPsi_{\mcFdeg_\mcFInd})$ for $\mcFInd\in\mcFIndSp$.
Let $\mcFRpsi_\mcFInd\sim\mcFPsiP_\mcFInd$,
$\mcFEpsi_\mcFInd=\Erw[\mcFRpsi_\mcFInd]$,
$\mcFEZ_{\mcFInd}=\sum_{\mcFHEA}\mcFEpsi_\mcFInd(\mcFHEA)$,
$\mcFExi_{\mcFInd}=\mcFEZ_{\mcFInd}q^{-\mcFdeg_\mcFInd}$ and
$\mcFED_\mcFInd=\mcFEZ_\mcFInd^{-1}\mcFEpsi_\mcFInd\in\cP(\mcColSp^{\mcFdeg_\mcFInd})$ if $\mcFEZ_\mcFInd>0$ and the one-point mass on the empty assignment otherwise.
The family $(\mcFdeg_\mcFInd,\mcFPsiP_\mcFInd)_{\mcFInd\in\mcFIndSp}$ satisfies {\bf BAL'} if
\begin{align*}
\sum_{\mcFHEA}\mcFED_\mcFInd(\mcFHEA)\prod_{h\in[\mcFdeg_\mcFInd]}p(\mcFHEA_h)\le q^{-\mcFdeg_\mcFInd}
\end{align*}
for all $p\in\cP(\mcColSp)$ and $\mcFInd\in\mcFIndSp$ with $\mcFdeg_\mcFInd>0$.
Further, notice that $(\mcFdeg_\mcFInd,\mcFED_\mcFInd)_{\mcFInd\in\mcFIndSp}$ satisfies {\bf SPAN} and the induced lattice $\lltLat$ discussed in Section \ref{llt} is $\mathbb Z^{q-1}$ and in particular $h\equiv 1$.
Analogous to the coupling in Section \ref{Sec_Noela} we introduce a new index $\mcFInd^\circ\in\ZZ_{\ge 0}$ with $\mcFdeg_{\mcFInd^\circ}=1$ and $\mcFPsiP_{\mcFInd^\circ}$ being the one-point mass on $\mcFpsi\equiv 1$, and let $\mcFIndSp^\circ=\mcFIndSp\cup\{\mcFInd^\circ\}$. Notice that this modification does not change the associated lattice and further {\bf SPAN} still holds.

For the sake of symmetry we also fix a family $(\mcVdeg_\mcVInd)_{\mcVInd\in\mcVIndSp}$ with $\mcVIndSp\subseteq\ZZ_{\ge 0}$ and $\mcVdeg_\mcVInd\in\ZZ_{\ge 0}$ for $\mcVInd\in\mcVIndSp$.
Further, let $\mcVED_\mcVInd\in\cP(\mcColSp^{\mcVdeg_\mcVInd})$ be given by $\mcVED_\mcVInd(\mcCol 1_{[\mcVdeg_\mcVInd]})=q^{-1}$ for $\mcCol\in\mcColSp$ and notice that $(\mcVdeg_\mcVInd,\mcVED_\mcVInd)_{\mcVInd\in\mcVIndSp}$ satisfies {\bf SPAN} by definition.
In the remainder we tacitly assume that both $\mcVIndSp$ and $\mcFIndSp$ are non-trivial, i.e.~not all degrees are zero.
\subsection{Distribution sequences}\label{mc_profiles}
For $n\in\ZZ_{>0}$ we let $\mcSeqSp_n$ denote the set of distribution sequences, i.e.
\begin{align*}
\mcSeqSp_n=\left\{(m,\mcVSInd,\mcFSInd):m\in\ZZ_{\ge 0},\mcVSInd\in\mcVIndSp^n,\mcFSInd\in\mcFIndSp^m\right\}\textrm{.}
\end{align*}
For $\mcT=(m,\mcVSInd,\mcFSInd)\in\mcSeqSp_n$ we use the same shorthands as in Section \ref{llt} and Section \ref{vom}, e.g.~$\mcVdeg_i=\mcVdeg_{\mcVInd_i}$ for $i\in[n]$.
Further, let $\mcVDTot_\mcT=\sum_{i\in[n]}\mcVdeg_i$, $\mcFDTot_\mcT=\sum_{i\in[m]}\mcFdeg_i$, $\mcDTot_\mcT=\max(\mcVDTot_\mcT,\mcFDTot_\mcT)$, $\mcDVDelta(\mcT)=\mcDTot_\mcT-\mcVDTot_\mcT$ and $\mcDFDelta(\mcT)=\mcDTot_\mcT-\mcFDTot_\mcT$ denote the total degrees and missing half-edges on both sides. We use the notions $\mcVHESp_\mcT$ and $\mcFHESp_\mcT$ from Section \ref{Sec_Noela}, but introduce two sets $\mcDVDeltaSp(\mcT)$, $\mcDFDeltaSp(\mcT)$ with $|\mcDVDeltaSp(\mcT)|=\mcDVDelta(\mcT)$, $\mcDVDeltaSp(\mcT)\cap\mcVHESp_\mcT=\emptyset$ and $|\mcDFDeltaSp(\mcT)|=\mcDFDelta(\mcT)$, $\mcDFDeltaSp(\mcT)\cap\mcFHESp_\mcT=\emptyset$.

As indicated above and in Section \ref{Sec_Noela} and using the shorthand $m^\circ=m+\mcDFDelta(\mcT)$, we let $\mcT^\circ=(m^\circ,\mcVSInd,\mcFSInd^\circ)$ with $\mcFSInd^\circ\in(\mcFIndSp^\circ)^{m+\mcDFDelta(\mcT)}$ given by $\mcFSInd^\circ_{[m]}=\mcFSInd$ and $\mcFInd_{i}=\mcFInd^\circ$ otherwise.
\subsection{Factor graphs}\label{mc_graphs}
For given $n\in\ZZ_{>0}$ and $\mcT=(m,\mcVSInd,\mcFSInd)\in\mcSeqSp_n$ a factor graph $\mcG$ is given by a bijection
$\mcBij:\mcVHESp_\mcT\cup\mcDVDeltaSp(\mcT)\rightarrow\mcFHESp_\mcT\cup\mcDFDeltaSp(\mcT)$ and weights $\mcFpsi_{\mcFL_i}\in\mcFPsi_{\mcFdeg_i}$ for each factor $\mcFL_i\in\mcFLSp_m$, $i\in[m]$.
Let
\begin{align*}
\mcFLSp(\mcG)=\{\mcFL:(\mcFL,h)\in\mcFHESp_\mcT,\mcBij^{-1}(\mcFL,h)\in\mcVHESp_\mcT\}
\end{align*}
denote the subset of factors that are not connected to the dummy variables $\mcDVDeltaSp(\mcT)$.
For $\mcVA\in\mcColSp^{\mcVLSp_n}$ let $\mcVSHEA_{\mcT,\mcVA}=(\mcVA_{\mcVL})_{(\mcVL,h)\in\mcVHESp_\mcT}$
denote the assignment to the half-edges excluding dummies and further $\mcCFA_{\mcT,\mcVA}$ the corresponding absolute colour frequencies, i.e.~$\mcCFA_{\mcT,\mcVA}(\mcCol)=\sum_{\mcVL,h}\vecone\{\mcVHEA_{\mcVL,h}=\mcCol\}$ for $\mcCol\in\mcColSp$, and notice that these notions do not depend on $\mcG$.
Further, let $\mcFSHEA_{\mcG,\mcVA}=\mcFSHEA_{\mcBij,\mcVA}\in\mcColSp^{\mcFHESp_\mcT\cup\mcDFDeltaSp(\mcT)}$ be given by $\mcFHEA_{\mcG,\mcVA,h}=\mcVHEA_{\mcT,\mcVA,\mcBij^{-1}(h)}$ for $h\in\mcFHESp_\mcT\cup\mcDFDeltaSp(\mcT)$ with $\mcBij^{-1}(h)\in\mcVHESp_\mcT$ and undefined otherwise.
Finally, let $\mcFpsi_\mcG(\mcVA)=\prod_{\mcFL\in\mcFLSp(\mcG)}\mcFpsi_{\mcFL}(\mcFHEA_{\mcG,\mcVA,\mcFL})$, with $\mcZ_\mcG=\sum_\mcVA\mcFpsi_\mcG(\mcVA)\in\RR_{>0}$, $\mcBMD_\mcG=\mcZ_\mcG^{-1}\mcFpsi_\mcG\in\cP(\mcColSp^{\mcVLSp_n})$ unchanged and $\mcRVAPost_\mcG\sim\mcBMD_\mcG$.
\subsection{Random factor graphs}\label{mc_graphs_2}
For $n\in\ZZ_{>0}$ and $t=(m,\mcVSInd,\mcFSInd)\in\mcSeqSp_n$ we obtain the null model $\mcRG_t$ by drawing a uniformly random bijection $\mcRBij:\mcVHESp_\mcT\cup\mcDVDeltaSp(\mcT)\rightarrow\mcFHESp_\mcT\cup\mcDFDeltaSp(\mcT)$ and independently drawing the weight functions $\mcFpsi_{\mcFL_i}$ from $\mcFPsiP_i$.
Using $\mcFEpsi_{\mcT}=\Erw[\mcFpsi_{\mcRG_\mcT}]$ the teacher-student scheme $\mcRG^*_\mcT(\mcVA)$ with ground truth $\mcVA\in\mcColSp^{\mcVLSp_n}$ is given by the Radon-Nikodym derivative $\mcFpsi_\mcG(\mcVA)/\mcFEpsi_\mcT(\mcVA)$ with respect to $\mcRG_\mcT$.
Further, using $\mcZE_\mcT=\Erw[\mcZ_{\mcRG_\mcT}]$ the Nishimori ground truth $\mcRVAN_\mcT\in\mcColSp^{\mcVLSp_n}$ is given by
$\pr[\mcRVAN_\mcT=\mcVA]=\mcFEpsi_\mcT(\mcVA)/\mcZE_\mcT$.
Finally, we use the shorthand $\mcRFSHEA^*_\mcT(\mcVA)=\mcFSHEA_{\mcRG^*_\mcT(\mcVA),\mcVA}$ to denote the assignment to the factor side half-edges for a given ground truth.
Notice that the models $\mcRG_\mcT$, $\mcRG^*_\mcT(\mcVA)$ and $\mcRG_{\mcT^\circ}$, $\mcRG^*_{\mcT^\circ}(\mcVA)$ are equal in that they show exactly the same behaviour and only differ in the explicit modelling of the dummy factors in the latter case.
\subsection{Typical distribution sequences}\label{mc_profiles_typ}
A sequence $(\mcSeq_n)_n\subseteq\mcSeqSp_n$ satisfies {\bf MC} if the following holds.
The family $(\mcFdeg_\mcFInd,\mcFPsiP_\mcFInd)_{\mcFInd\in\mcFIndSp}$ satisfies {\bf BAL'}.
There exists $(\vmIndSp_n)_n\subseteq\mcVIndSp^n$ satisfying {\bf GEN}, {\bf VAR} and {\bf SKEW} such that for all $n\in\ZZ_{>0}$ and $(m,\mcVSInd,\mcFSInd)\in\mcSeq_n$ we have $\mcVSInd\in\vmIndSp_n$.
There exists $(\vmIndSp_m)_m\subseteq\mcFIndSp^m$ satisfying {\bf GEN}, {\bf VAR} and {\bf SKEW} such that for all $n\in\ZZ_{>0}$ and $(m,\mcVSInd,\mcFSInd)\in\mcSeq_n$ we have $\mcFSInd\in\vmIndSp_m$.
Finally, for all $n\in\ZZ_{>0}$ and $(m,\mcVSInd,\mcFSInd)\in\mcSeq_n$ we have
\begin{align*}
\sum_{i\in[n]}\mcVdeg_i\ge\sum_{i\in[m]}\mcFdeg_i\textrm{,}
\end{align*}
using the conventions from Section \ref{llt}, e.g.~$\mcVdeg_i=\mcVdeg_{\mcVInd_i}$.

Again, notice that for any sequence $(\mcSeq_n)_n\subseteq\mcSeqSp_n$ satisfying {\bf MC} the sequence $(\mcSeq^\circ_n)_n$ given by $\mcSeq^\circ_n=\{\mcT^\circ:\mcT\in\mcSeq\}$, $n\in\ZZ_{>0}$, satisfies {\bf MC} as well, and $\mcFInd^\circ$ spans the lattice as does any other index with non-trivial degree.
Further, notice that these assumptions ensure the existence of $c\in(0,1)$ with $cm\le\mcFDTot_\mcT\le\mcVDTot_\mcT\le c^{-1}n$ using {\bf GEN} on the factor side for the first inequality and {\bf VAR} on the variable side for the last inequality, so $m\in O(n)$ uniformly for all $\mcT\in\mcSeq_n$.
Using {\bf VAR} on the factor side and {\bf GEN} on the variable side we obtain $c\in(0,1)$ with $m^\circ\ge\max(m,cn-c^{-1}m)\ge\frac{c}{1+c^{-1}}n$, so $m^\circ$ is uniformly linear in $n$ for all $\mcT\in\mcSeq_n$ since the arguments for $m\in O(n)$ also apply to $m^\circ$. Let $\mcDegDens_-$, $\mcDegDens_+\in\RR_{>0}$ be corresponding bounds, i.e.~$\mcDegDens_-\le m^\circ\le\mcDegDens_+n$ for all $\mcT\in\mcSeq_n$ and $n$.

The arguments in the remainder of this section will clarify that using $(\mcSeq^\circ_n)_n$ is not only an alternative modelling approach, but superior to using $(\mcSeq_n)_n$. Intuitively, factors are \emph{not pruned} (or missing in any sense), but \emph{replaced} by trivial factors such that the total degree imposed by the variable side is met.
Hence, for consistency we consider $\mcFLSp^\circ_\mcT=\{\mcFL_1,\dots,\mcFL_{m+\mcDFDelta(t)}\}$ an extension of $\mcFLSp_m$ for $t=(m,\mcVSInd,\mcFSInd)\in\mcSeq_n$, let $\mcDFDeltaSp(t)=\{(\mcFL_i,1):m<i\le m+\mcDFDelta(t)\}$ and $\mcFHESp^\circ_\mcT=\mcFHESp\cup\mcDFDeltaSp(t)$.
\subsection{Random distributions}\label{mc_degseq}
In order to complete the picture recall $\vm$, $\vm_\varepsilon$, $\vd$, $\vk$, $\vpsi_k$ from the introduction and $\vt_n$ from Section \ref{dd}. For the standard case we let $\mcVIndSp$ be the support of $\vd$ with $\mcVdeg_\mcVInd=\mcVInd$ for $\mcVInd\in\mcVIndSp$, i.e.~there is no distinction between labels and degrees.
Analogously, we let $\mcFIndSp$ be the support of $\vk$ with $\mcFdeg_\mcFInd=\mcFInd$ for $\mcFInd\in\mcFIndSp$ and $\mcFPsiP_\mcFInd$ be the law of $\vpsi_\mcFInd$.
Notice that $\mcSeqSp_n=\cT^*_n$, further $\cT^\circ_n$ satisfies {\bf MC} and $\mcT^\circ=\mcT$ for all $\mcT\in\cT_n$.
Hence, compared to the discussion in Section \ref{Sec_cavs} and Section \ref{Sec_Noela} we slightly change the model in that we do not condition on suitable degree sequences, but define the factor graphs for all possible degree sequences.
However, as discussed in Section \ref{Sec_cavs} (and as is evident from Section \ref{dd}) the consistency condition
\begin{align}\label{eqepsdegs}
\sum_{i=1}^n\vd_i\geq\sum_{i=1}^{\vm_\eps}\vk_i.
\end{align}
is satisfied with very high probability and hence the change of the model may be considered of purely technical nature.

For $\varepsilon\in(0,1)$ we let $\vt^*_{\varepsilon,n}=(\vm_\varepsilon,(\vd_i)_{i\in[n]},(\vk_i)_{i\in[\vm_\varepsilon]})$ be the analogue of $\vt^*_n$, but contrary to $\vt_n$ we let $\vt_{\varepsilon,n}=\vt^*_{\varepsilon,n}$ as discussed above. With $\tau^*=((1-\varepsilon)\d/\k,p_{\mathrm{d}},p_{\mathrm{k}})$ and the metric $\Delta$ from Section \ref{dd} it is immediate from the results of Section \ref{dd} that there exists $r_n=o(1)$ such that $\tau(\vt_{\varepsilon,n})\in\cB_{r_n}(\tau^*_{\varepsilon})$ with high probability and Proposition \ref{dd_typ_prop} also holds for $\varepsilon>0$, so we can define $\cT^\circ_{\varepsilon,n}$ analogously. As discussed in Section \ref{dd} we can choose $r_n$ such that uniform bounds hold for $t=(m,\sd,\sk)\in\cT^\circ_n$, and in particular $\sum_{i\in[n]}d_i>\sum_{i\in[m]}k_i$.
We define $\mcVIndSp$ and $\mcFIndSp$ as before and notice that as opposed to the boundary case $\varepsilon=0$ above we now have $\mcT^\circ\neq\mcT$ for all $\mcT\in\cT^\circ_n$.

The conditions imposed by $\cT^\circ_n$ ensure that the number of factors is asymptotically equivalent to $(1-\varepsilon)\d/\k$ and the total degrees are asymptotically equivalent to $\d n$ and $(1-\varepsilon)\d n$ on the variable and factor side respectively, hence the absolute frequency of $\mcFInd^\circ$ in $\mcT^\circ$ is asymptotically equivalent to $\varepsilon\d n$.
Hence, for $\mcT=(m,\mcVSInd,\mcFSInd)\in\cT^\circ_{\varepsilon,n}$ with $\mcT^\circ=(m^\circ,\mcVSInd,\mcFSInd^\circ)$ the number $m^\circ$ of factors including the dummy factors is asymptotically equivalent to $(\frac{1-\varepsilon}{\k}+\varepsilon)\d n$.
Further, the relative frequencies $\vmIndP_\mcVSInd$ (introduced in Section \ref{llt}) converge to $p_{\mathrm{d}}$ with respect to the metric $\vmIndPD$ introduced in Section \ref{vom},
and the frequencies $\vmIndP_{\mcFSInd}$ converge to $p_{\mathrm{k}}$ with respect to $\vmIndPD$.
Hence, the frequencies $\vmIndP_{\mcFSInd^\circ}$ converge to $p^\circ_{\mathrm{k}}\in\cP(\mcFIndSp^\circ)$ given by $p^\circ_{\mathrm{k}}(\mcFInd^\circ)=\varepsilon/(\varepsilon+(1-\varepsilon)/\k)$ and $p^\circ_{\mathrm{k}}(\mcFInd)=(1-p^\circ_{\mathrm{k}}(\mcFInd^\circ))p_{\mathrm{k}}(\mcFInd)$ otherwise.

In a nutshell, the arguments above stress the fact that we \emph{always only} consider factor graphs where the total degrees of the variable side and of the factor side \emph{are equal}, a change of perspective that is essential for the upcoming sections.
\subsection{Mutual contiguity}\label{mc_mr}
Mutual contiguity of $\mcRVA^*$ and $\mcRVAN_\mcT$ uniformly over $\mcT\in\mcSeq_n$ follows with standard arguments from the following proposition.
Further implications are discussed in Section \ref{mc_hence}
Finally, in Section \ref{mc_pinning} we will briefly discuss why these results are entirely invariant to pinning.
\begin{proposition}\label{mcp_main_prop}
For all sequences $(\mcSeq_n)_n\subseteq\mcSeqSp_n$ satisfying {\bf MC} and $\varepsilon\in(0,1)$ there exist $c\in(0,1)$, $r\in\RR_{>0}$ and $n_0\in\ZZ_{>0}$ such that for all $n\in\ZZ_{\ge n_0}$, all $\mcT\in\mcSeq_n$, all $\mcVA\in\cE_\mcT$, with
$\cE_\mcT=\{\mcVA\in\mcColSp^{\mcVLSp_n}:\|\mcCFA_{\mcT,\mcVA}-\mcVDTot_\mcT u_{\mcColSp}\|_2<r\sqrt{n}\}$, we have
$\pr[\mcRVA^*\in\cE_\mcT]$, $\pr[\mcRVAPost_{\mcRG^*_\mcT(\mcRVA^*)}\in\cE_\mcT]$,
$\pr[\mcRVAN_\mcT\in\cE_\mcT]>1-\varepsilon$ and $c<\pr[\mcRVA^*=\mcVA]/\pr[\mcRVAN_\mcT=\mcVA]<c^{-1}$ .
\end{proposition}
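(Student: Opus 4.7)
The plan is to derive the four claims in a cascade: first the balance of $\mcRVA^*$ as a warm-up; then a uniform smoothness estimate on $\mcFEpsi_\mcT(\mcVA)$ as a function of the colour frequency $\lltCFR(\mcVA)=\mcCFA_{\mcT,\mcVA}/\mcVDTot_\mcT$, which yields both the balance of $\mcRVAN_\mcT$ and the contiguity bound; finally a Nishimori-identity argument to transfer balance to the Boltzmann sample. The warm-up is immediate: since $\mcRVA^*$ is uniform on $\mcColSp^{\mcVLSp_n}$, the frequencies $\mcCFA_{\mcT,\mcRVA^*}(\mcCol)=\sum_\mcVL\mcVdeg_\mcVL\vecone\{\mcRVA^*_\mcVL=\mcCol\}$ are sums of independent variables of total variance $O(n)$ by \textbf{VAR}, and Chebyshev plus a union bound over $\mcColSp$ gives $\pr[\mcRVA^*\notin\cE_\mcT]<\varepsilon/3$ for $r$ chosen large enough in terms of the constants supplied by \textbf{MC}.

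The heart of the argument is to show that for every $\mcVA\in\cE_\mcT$ the value $\mcFEpsi_\mcT(\mcVA)$ lies within a constant factor of $q^{-n}\mcZE_\mcT=\Erw_{\mcRVA^*}[\mcFEpsi_\mcT(\mcRVA^*)]$. Because $\mcRBij$ is uniform and the weight functions are independent of it, $\mcFEpsi_\mcT(\mcVA)=F_\mcT(\lltCFR(\mcVA))$ for a function $F_\mcT:\cP(\mcColSp)\to\RR_{>0}$. I would compute $F_\mcT$ by writing the expectation as a sum over the induced factor-side half-edge labelling $\mcFSHEA$ -- a uniformly random labelling with prescribed colour multiplicities -- and invoking Theorem~\ref{llt_local} together with the assignment-distribution concentration of Proposition~\ref{ac_main_prop}. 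This should yield, up to relative error $1+o(1)$,
\begin{align*}
F_\mcT(\lltCFR)=\Big(\prod_{i=1}^m\mcFExi_i\Big)\prod_{i=1}^m\Phi_{\mcFInd_i}(\lltCFR),\qquad \Phi_\mcFInd(\lltCFR)=q^{\mcFdeg_\mcFInd}\sum_{y\in\mcColSp^{\mcFdeg_\mcFInd}}\mcFED_\mcFInd(y)\prod_{h=1}^{\mcFdeg_\mcFInd}\lltCFR(y_h).
\end{align*}
By \textbf{BAL'} each $\Phi_\mcFInd$ is bounded above by $1$ with equality at $u_{\mcColSp}$; stationarity at $u_{\mcColSp}$ kills the linear term in a Taylor expansion, so $\Phi_\mcFInd(\lltCFR)=1-O(\|\lltCFR-u_{\mcColSp}\|_2^2)$. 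For $\mcVA\in\cE_\mcT$ this quadratic form is $O(r^2/n)$ and $m=O(n)$ by \textbf{MC}, so $\prod_i\Phi_{\mcFInd_i}(\lltCFR(\mcVA))$ sits in a compact subinterval of $\RR_{>0}$ depending only on $r$, and hence so does $F_\mcT(\lltCFR(\mcVA))/\prod_i\mcFExi_i$.

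The remaining claims fall out of this. Since $\pr[\mcRVA^*\in\cE_\mcT]>1-\varepsilon/3$ and since $F_\mcT\le\prod_i\mcFExi_i$ globally (by \textbf{BAL'}), the unconditional expectation $q^{-n}\mcZE_\mcT=\Erw[F_\mcT(\lltCFR(\mcRVA^*))]$ is also $\Theta(\prod_i\mcFExi_i)$. Dividing, the pointwise ratio $\pr[\mcRVA^*=\mcVA]/\pr[\mcRVAN_\mcT=\mcVA]=q^{-n}\mcZE_\mcT/\mcFEpsi_\mcT(\mcVA)$ sits in some $[c,c^{-1}]$ on $\cE_\mcT$, and integrating this bound against the law of $\mcRVA^*$ transfers the warm-up balance to $\mcRVAN_\mcT$. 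For the Boltzmann sample, the Nishimori identity (\Prop~\ref{Lemma_genNishi}) gives $(\hat\mcRG_\mcT,\mcRVAPost_{\hat\mcRG_\mcT})\stackrel{\mathrm{d}}{=}(\mcRG^*_\mcT(\mcRVAN_\mcT),\mcRVAN_\mcT)$, whence $\Erw_{\hat\mcRG_\mcT}[\mcBMD_G(\cE_\mcT^c)]=\pr[\mcRVAN_\mcT\in\cE_\mcT^c]$ is small; splitting $\pr[\mcRVAPost_{\mcRG^*_\mcT(\mcRVA^*)}\in\cE_\mcT^c]$ according to whether $\mcRVA^*\in\cE_\mcT$ and using the pointwise contiguity to bound the joint density of $(\mcRVA^*,\mcRG^*_\mcT(\mcRVA^*))$ by $c^{-1}$ times that of $(\mcRVAN_\mcT,\mcRG^*_\mcT(\mcRVAN_\mcT))$ on the good event then transfers the bound with only a constant loss. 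The main obstacle is the factorisation of $F_\mcT$ with a $(1+o(1))$ relative error uniformly in $\mcT\in\mcSeq_n$; this is exactly where the local limit theorem and the assignment-distribution concentration of Sections~\ref{llt} and~\ref{assd} do the heavy lifting, and where the rigidity forced by prescribed degrees genuinely complicates the analysis relative to the \Erdos--\Renyi\ setting of~\cite{CKPZ}.
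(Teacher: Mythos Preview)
Your overall architecture---concentration for $\mcRVA^*$, a smoothness estimate for $\mcFEpsi_\mcT(\mcVA)$ as a function of the colour frequency, contiguity from that, and Nishimori for the Boltzmann sample---matches the paper's proof. The final paragraph, transferring balance to $\mcRVAPost_{\mcRG^*_\mcT(\mcRVA^*)}$ via the Nishimori identity and the pointwise density bound, is exactly what the paper does.

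The gap is the factorisation formula for $F_\mcT$. The claimed identity
\[
F_\mcT(\lltCFR)=(1+o(1))\Big(\prod_{i=1}^m\mcFExi_i\Big)\prod_{i=1}^m\Phi_{\mcFInd_i}(\lltCFR)
\]
is the answer for the \emph{replacement} model (each factor independently samples clones from colour law $\lltCFR$), not for the pairing model. With a uniform bijection one has exactly
\[
F_\mcT(\lltCFR)=\Big(\prod_{i}\mcFEZ_i\Big)\,\frac{\pr[\mcCFA_{\mcRFSHEA^*}=\mcVDTot_\mcT\lltCFR]}{\binom{\mcVDTot_\mcT}{\mcVDTot_\mcT\lltCFR}},\qquad \mcRFSHEA^*\sim\bigotimes_i\mcFED_i,
\]
and the ratio on the right is a quotient of two local-limit Gaussian densities, not a product over factors. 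Even at $\lltCFR=u_{\mcColSp}$ the two expressions differ by a nontrivial determinant ratio, so the relative error is $\Theta(1)$, not $1+o(1)$; neither Theorem~\ref{llt_local} nor Proposition~\ref{ac_main_prop} produces the product formula you wrote. Consequently your Taylor argument, applied to $\Phi_\mcFInd$, is bounding the wrong quantity.

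The paper sidesteps this by never factorising $F_\mcT$. Instead it observes that $\mcRVAN_\mcT$ and $\mcRVA^*$ are both \emph{uniform given their colour frequencies}, so the pointwise density ratio on $\mcVA$ equals $\pr[\mcRCFR^*_\mcT=\lltCFR]/\pr[\mcRCFRN_\mcT=\lltCFR]$. It then invokes two separate local limit theorems---Theorem~\ref{llt_local} for $\mcRCFR^*_\mcT$ and the bespoke Proposition~\ref{mcp_llt} (itself extracted from the first-moment computation in Proposition~\ref{mcp_first_moment}) for $\mcRCFRN_\mcT$---and bounds the ratio of the two Gaussian densities on the $r/\sqrt n$ window. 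Your argument becomes correct if you replace the product formula by this ratio-of-LLT step; the rest of your outline then goes through unchanged.
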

From now on we consider $(\mcSeq_n)_n\subseteq\mcSeqSp_n$ satisfying {\bf MC} fixed.
In order to show Proposition \ref{mcp_main_prop} we first determine the asymptotics of the normalization constant of
$\mcRVAN_\mcT$, i.e.~the first moment $\mcZE_\mcT=\Erw[\mcZ_{\mcRG_\mcT}]$.
\begin{proposition}\label{mcp_first_moment}
Uniformly for $\mcT=(m,\mcVSInd,\mcFSInd)\in\mcSeq_n$ we have $\mcZE_\mcT=\Theta(\mcZ^*_{\mcT})$ with $\mcZ^*_{\mcT}=q^n\prod_{i\in[m]}\xi_i$.
\end{proposition}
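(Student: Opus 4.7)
My approach is to reduce $\mcZE_\mcT/\mcZ^*_\mcT$ to an explicit sum over color profiles and then evaluate it asymptotically using the local limit theorem uniformly over $\mcSeq_n$. Expanding $\mcZE_\mcT = \sum_\mcVA \mcFEpsi_\mcT(\mcVA)$ and using that under a uniformly random bijection $\mcRBij$ the factor-side coloring is a uniformly random assignment of the variable-side multiset of colors to $\mcFHESp^\circ_\mcT$, one sees that $\mcFEpsi_\mcT(\mcVA) = W(\mcCFA_{\mcT,\mcVA})$ depends on $\mcVA$ only through its half-edge profile, where $W(\lltCFA) = \Erw[\prod_{i=1}^m \mcFEpsi_i(\vy^\sharp_{\lltCFA,i})]$ with $\vy^\sharp_\lltCFA$ uniform on factor-side assignments of profile $\lltCFA$. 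Let $\pi$ denote the law of the half-edge profile $\lltRCFA_{\mcVSInd}$, let $\pi^\flat = \mathrm{Mult}(\mcDTot, u_\mcColSp)$ be the profile law of an i.i.d.\ uniform factor-side assignment $\vy^\flat$, and let $\tilde\pi$ denote the profile law of $\vy^\flat$ reweighted by $\prod_i \mcFEpsi^\circ_i$; in the notation of Section~\ref{llt}, $\tilde\pi$ is the law of $\lltRCFA_{\mcFSInd^\circ}$ for the measures $\mcFED^\circ_\mcFInd$. A direct computation gives $\pi^\flat(\lltCFA)\,W(\lltCFA) = \tilde\pi(\lltCFA) \prod_i \mcFExi_i$, so that counting $|\{\mcVA:\mcCFA_{\mcT,\mcVA}=\lltCFA\}| = q^n\pi(\lltCFA)$ yields the central identity
\[
\frac{\mcZE_\mcT}{\mcZ^*_\mcT} = \sum_\lltCFA \frac{\pi(\lltCFA)\,\tilde\pi(\lltCFA)}{\pi^\flat(\lltCFA)}.
\]

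Next I apply Theorem~\ref{llt_global} to all three distributions. The sequences $\mcVSInd$ and $\mcFSInd^\circ$ satisfy {\bf GEN}, {\bf VAR}, {\bf SKEW} by the assumption {\bf MC} (whose extension to $\mcFSInd^\circ$ is noted in Section~\ref{mc_profiles_typ}), and $\pi^\flat$ arises as the trivial case of $\mcDTot$ copies of $u_\mcColSp$. The first-order optimality of $p=u_\mcColSp$ in {\bf BAL'} gives $\sum_h \mcFED_\mcFInd|_h = \mcFdeg\,u_\mcColSp$, so all three profile laws share the mean $\mcDTot\,u_\mcColSp$; the second-order optimality gives $\Cov(\lltRCFA^*_\mcFInd) \preceq \mcFdeg\,\Sigma_0$ in the Loewner order, where $\Sigma_0 = q^{-1}(I - q^{-1}\vecone\vecone^{\matTr})_{[q-1]}$ is the single-variable uniform covariance. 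Averaging yields $\Sigma_{\tilde\pi} \preceq \Sigma_{\pi^\flat} = \Sigma_0$. Combined with the uniform lower bound on $\Sigma_\pi^{-1}$ from Lemma~\ref{llt_global_second_order}, this gives
\[
A_\mcT := \Sigma_\pi^{-1} + \Sigma_{\tilde\pi}^{-1} - \Sigma_{\pi^\flat}^{-1} \succeq \Sigma_\pi^{-1} \succ 0
\]
uniformly over $\mcT \in \mcSeq_n$. In the central LLT window the Gaussian approximations give $\pi\tilde\pi/\pi^\flat = \Theta(\mcDTot^{-(q-1)/2})\,\phi_\pi \phi_{\tilde\pi}/\phi_{\pi^\flat}$, and a Riemann sum over the relevant lattice converts the partial sum to the Gaussian integral $\int \phi_\pi\phi_{\tilde\pi}/\phi_{\pi^\flat}\,dz$, which by positive definiteness of $A_\mcT$ is a strictly positive finite constant bounded uniformly in $\mcT$.

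The main obstacle is controlling the contribution from $\lltCFR$ outside the LLT window, since the pointwise ratio $\pi/\pi^\flat$ can blow up by a factor $\exp(\Theta(n\|\lltCFR-u_\mcColSp\|^2))$ and the crude bound $W\le 2^m$ is exponentially large in $n$. I would handle this by partitioning the complement into a medium regime, where Theorem~\ref{llt_global} still applies and the uniform positive definiteness of $A_\mcT$ directly yields $\pi\tilde\pi/\pi^\flat \le \exp(-cn\|\lltCFR-u_\mcColSp\|^2)\cdot\mathrm{poly}(n)$, and a far regime, where the sub-Gaussian tails for $\pi$ and $\tilde\pi$ from Proposition~\ref{llt_tails} dominate the worst-case lower bound on $\pi^\flat$. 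Summing these shows the tail contribution is $o(1)$ uniformly in $\mcT$, which combined with the central estimate yields $\mcZE_\mcT = \Theta(\mcZ^*_\mcT)$ as required.
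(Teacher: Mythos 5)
Your reduction of $\mcZE_\mcT/\mcZ^*_\mcT$ to $\sum_{\lltCFA}\pi(\lltCFA)\tilde\pi(\lltCFA)/\pi^\flat(\lltCFA)$ is precisely the identity the paper works with (its $r_{t,+}+r_{t,-}$), and your treatment of the central window is sound. The paper arrives at the same effective precision matrix $\mcCov{\mcT}^{-1}=\mcVCov{\mcT}^{-1}+\mcFCov{\mcT}^{-1}-\mcECov{\mcT}^{-1}$, but it obtains $\mcFCov{\mcT}\preceq\mcECov{\mcT}$ from the \emph{global} rate-function inequality $\alpha_{\mathrm{f}}\ge\alpha_{\mathrm{e}}$, which it deduces from $\KL{p}{\mcFED_\mcFInd}\ge\mcFdeg_\mcFInd\KL{p\vert_*}{u_{\mcColSp}}$ together with {\bf BAL'}; your derivation via the second-order optimality condition of {\bf BAL'} at $u_{\mcColSp}$ is an equally valid (and arguably more direct) route to the same Loewner comparison.

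The genuine gap is the tail regime, and it cannot be closed the way you propose. For the far regime, $1/\pi^\flat(\lltCFA)$ grows like $\exp\bc{\mcVDTot_\mcT\KL{\lltCFR}{u_{\mcColSp}}}$ up to polynomial factors, with $\mcVDTot_\mcT=\Erw[\mcRVdeg_\mcT]n$, while Proposition \ref{llt_tails} only yields $\pi(\lltCFA),\tilde\pi(\lltCFA)\le c'\exp(-c\|\lltCFR-u_{\mcColSp}\|^2n)$ with a Hoeffding constant $c$ that bears no relation to $\Erw[\mcRVdeg_\mcT]\ln q$. Already for $d$-regular variables one has $\pi(\lltCFA)/\pi^\flat(\lltCFA)\approx\exp\bc{(1-1/d)\,\mcVDTot_\mcT\KL{\lltCFR}{u_{\mcColSp}}}$, which is exponentially \emph{large}, so the decay must come entirely from $\tilde\pi$, i.e., from the factor measures — and generic sub-Gaussianity of $\tilde\pi$ holds even for strongly "ferromagnetic" $\mcFED_\mcFInd$ violating {\bf BAL'}, for which the proposition is simply false. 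The medium regime has the same defect: positive definiteness of $A_\mcT$ is a statement about Hessians at $u_{\mcColSp}$ and controls the exponent only in an $o(1)$ neighbourhood, not at constant distance (and near the boundary of the simplex Theorem \ref{llt_global} does not apply at all). What is indispensable is the global content of {\bf BAL'}: either the pointwise inequality $\alpha_{\mathrm{f}}(\lltCFR)\ge\KL{\lltCFR}{u_{\mcColSp}}$ for all $\lltCFR$ (which, added to your Hessian bound, makes the exponent $\ge\alpha_{\mathrm{v}}>0$ away from $u_{\mcColSp}$), or the paper's one-shot manipulation of $r_{t,-}$: multiply and divide by $\prod_{\vmCol}\lltCFR(\vmCol)^{\lltCFA(\vmCol)}$, drop the indicator constraining the factor-side profile, and apply {\bf BAL'} factor by factor to bound the factor-side sum by $q^{-\mcVDTot_\mcT}$, which collapses the entire complement of the LLT window to $O\bc{n^{q}\,\pr[\mcRCFR^*_\mcT\notin\cB_{r_n}(u_{\mcColSp})]}$ and lets the variable-side tail bound finish the job.
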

From the proof of Proposition \ref{mcp_first_moment} we directly obtain tail bounds and a local limit theorem for the colour frequencies of $\mcRVAN_\mcT$.
For brevity let $\mcRCFRN_\mcT=\frac{1}{\mcVDTot_\mcT}\mcCFA_{\mcT,\mcRVAN_\mcT}\in\cP(\mcColSp)$ denote the random relative color frequencies on the half-edges under $\mcRVAN_\mcT$.
Recall from Section \ref{llt} that we have $\pr[\mcRCFRN_\mcT\in\lltLatR_\mcT]=1$ for $\mcT=(m,\mcVSInd,\mcFSInd)\in\mcSeq_n$ and $n\in\ZZ_{>0}$, where $\lltLatR_\mcT=\lltLatR_{\mcVSInd}$ is the set induced by the lattice $\lltLat$ obtained from $(\mcVdeg_\mcVInd,\mcVED_\mcVInd)_{\mcVInd\in\mcVIndSp}$.
\begin{proposition}\label{mcp_tails}
There exist constants $c$, $c'\in\RR_{>0}$ such that for all $n\in\ZZ_{>0}$, $\mcT\in\mcSeq_n$ and $r\in\RR_{\ge 0}$ we have
\begin{align*}
\pr[\|\mcRCFRN_\mcT -u_{\mcColSp}\|_2\ge r]\le c'\exp(-cr^2n)\textrm{.}
\end{align*}
\end{proposition}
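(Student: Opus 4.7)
The strategy is to deduce Proposition \ref{mcp_tails} as a direct corollary of the Laplace-type analysis that must underlie the proof of Proposition \ref{mcp_first_moment}, followed by a union bound over the lattice $\lltLatR_\mcT$ introduced in Section~\ref{llt}.

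First I would observe that by the permutation-invariance of the weight functions (Remark~\ref{remark_v}) together with the fact that $\mcRBij$ is a uniformly random bijection, the expected weight $\mcFEpsi_\mcT(\sigma)=\Erw[\mcFpsi_{\mcRG_\mcT}(\sigma)]$ depends on $\sigma$ only through its degree-weighted colour frequency $\rho_\sigma:=\mcCFA_{\mcT,\sigma}/\mcVDTot_\mcT$. Writing $N_\mcT(\rho)$ for the number of $\sigma \in \mcColSp^{\mcVLSp_n}$ with $\rho_\sigma=\rho$ and $\bar\psi_\mcT(\rho)$ for the common value of $\mcFEpsi_\mcT(\sigma)$ on this set, this yields
\[
\pr[\mcRCFRN_\mcT=\rho]=\frac{N_\mcT(\rho)\,\bar\psi_\mcT(\rho)}{\mcZE_\mcT},\qquad \mcZE_\mcT=\sum_{\rho\in\lltLatR_\mcT}N_\mcT(\rho)\,\bar\psi_\mcT(\rho).
\]

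Next I would extract the following pointwise estimate from the proof of Proposition~\ref{mcp_first_moment}: there exist constants $c_1>0$ and a polynomial $p$, depending only on the model, such that uniformly for all $\mcT\in\mcSeq_n$ and $\rho\in\lltLatR_\mcT$,
\[
N_\mcT(\rho)\,\bar\psi_\mcT(\rho) \le p(n)\,\mcZ^*_\mcT\,\exp\!\bigl(-c_1 n\,\|\rho-u_{\mcColSp}\|_2^2\bigr).
\]
Two inputs feed into this bound. Stirling's formula applied to the degree-weighted multinomial coefficient $N_\mcT(\rho)$ contributes a factor $\exp(-\Omega(n\|\rho-u\|_2^2))$ via Pinsker's inequality $\KL(\rho\|u)\ge \tfrac12\|\rho-u\|_2^2$, where uniformity is secured by {\bf VAR} (so that no single variable's degree can dominate). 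The balance assumption {\bf BAL'}, combined with the near-i.i.d.\ distribution of the factor-side half-edges under a uniform bijection when the variable-side frequency is $\rho$, gives $\bar\psi_\mcT(\rho)/\prod_i \xi_i \le p(n)$ globally and a strict quadratic contraction locally around $\rho=u_{\mcColSp}$. Combining the pointwise estimate with the matching lower bound $\mcZE_\mcT=\Omega(\mcZ^*_\mcT)$ from Proposition~\ref{mcp_first_moment} yields
\[
\pr[\mcRCFRN_\mcT=\rho]\le p(n)\,\exp\!\bigl(-c_1 n\,\|\rho-u_{\mcColSp}\|_2^2\bigr).
\]

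Finally I would sum this estimate over the $O(n^{q-1})$ lattice points $\rho\in\lltLatR_\mcT$ with $\|\rho-u_{\mcColSp}\|_2\ge r$. Absorbing the polynomial factor $n^{q-1}p(n)$ into a slightly smaller exponent $c<c_1$ for $n$ large and choosing the multiplicative constant $c'$ large enough to cover bounded $n$ (where the bound is trivial since $\pr[\cdot]\le 1$) gives the claimed uniform estimate $\pr[\|\mcRCFRN_\mcT-u_{\mcColSp}\|_2\ge r]\le c'\exp(-cr^2n)$.

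The main obstacle is establishing the pointwise bound uniformly in $\mcT\in\mcSeq_n$, i.e.\ verifying that the decay constant $c_1$ can be chosen independent of $\mcT$. The entropy part demands a careful Stirling expansion of the degree-weighted multinomial coefficient, where one exploits {\bf VAR} to control $\sum_i \mcVdeg_i^2=\Theta(n)$; the weight part requires extracting from {\bf BAL'} a uniform quadratic gap across $\mcFInd\in\mcFIndSp^\circ$ (with the trivial index $\mcFInd^\circ$ contributing nothing and the nontrivial indices appearing with density bounded below by {\bf GEN}). These uniform ingredients are exactly what the Laplace analysis behind Proposition~\ref{mcp_first_moment} must already provide, so once that proof is in place Proposition~\ref{mcp_tails} follows by the short union-bound argument above.
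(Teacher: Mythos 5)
Your decomposition of $\pr[\mcRCFRN_\mcT=\rho]$ and the ingredients you feed into it (the first-moment asymptotics $\mcZE_\mcT=\Theta(\mcZ^*_\mcT)$, the entropy/Hoeffding bound on the variable side, {\bf BAL'} on the factor side) are exactly the ones the paper uses; the gap is in the final union-bound step. From a pointwise estimate of the form $\pr[\mcRCFRN_\mcT=\rho]\le p(n)\exp(-c_1n\|\rho-u_{\mcColSp}\|_2^2)$ with $p$ a (growing) polynomial, summing over the $O(n^{q-1})$ lattice points gives $n^{q-1}p(n)\exp(-c_1r^2n)$, and the inequality $n^{q-1}p(n)\le c'\exp((c_1-c)r^2n)$ requires $r^2n\gtrsim\log n$. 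The trivial bound $\pr[\cdot]\le1\le c'\exp(-cr^2n)$ only covers $r^2n=O(1)$. This leaves the moderate-deviation window $1\ll r^2n\ll\log n$ uncovered: there $\exp(-cr^2n)\to0$ while $n^{q-1}p(n)\exp(-c_1r^2n)\to\infty$, so neither argument closes. (A refinement of the union bound by shells does not rescue this; the spurious polynomial-in-$n$ prefactor persists.)

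The paper closes exactly this window by splitting at $r_n=\ln(n)/\sqrt n$: for $r\ge r_n$ the polynomial is absorbed as you propose, but for $r<r_n$ it invokes the local limit theorem part of the first-moment analysis (the $r_{t,+}$ computation, with uniform two-sided eigenvalue bounds on $\mcCov{\mcT}$), which shows the point probabilities near $u_{\mcColSp}$ carry the correct normalisation $\Theta(n^{-(q-1)/2})$ rather than an arbitrary $p(n)$; the shell sum then approximates a genuine Gaussian tail and yields $c'\exp(-cr^2n)$ with no extra polynomial factor. To repair your argument you would need either this sharper local estimate for small $r$, or a direct sub-Gaussian concentration bound for $\mcRCFRN_\mcT$ that avoids point-by-point summation. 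A secondary caveat: the ``strict quadratic contraction'' you attribute to {\bf BAL'} on the factor side is not guaranteed (the assumption only gives a maximum at the uniform distribution, possibly with degenerate Hessian); fortunately it is not needed, since the variable-side entropy term alone supplies the quadratic decay, which is how the paper's $r_{t,-}$ bound proceeds.
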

In the following we may use the notions for $\mcT=(m,\mcVSInd,\mcFSInd)\in\mcSeq_n$ implied by the notions introduced in Section \ref{vom} and Section \ref{llt} without explicitly introducing them, e.g.~$\mcRVInd_\mcT=\mcRVInd_\mcVSInd$, $\mcRFInd_\mcT=\mcRFInd_\mcFSInd$ for the random indices and $\mcRVdeg_\mcT=\mcRVdeg_\mcVSInd$, $\mcRFdeg_\mcT=\mcRFdeg_\mcFSInd$ for the random degrees.
In addition, let $\mcVCov{\mcT}=\Sigma_{\mcVSInd}$, $\mcFCov{\mcT}=\Sigma_{\mcFSInd^\circ}$ as introduced in Section \ref{llt} and notice that $\Sigma_{\mcFSInd^\circ}\neq\Sigma_{\mcFSInd}$ in general.
Further, let $\mcECov{\mcT}=\frac{\Erw[\mcRVdeg_\mcT]}{\Erw[\mcRVdeg_\mcT^2]}\mcVCov{\mcT}$ and let $\mcCov{\mcT}$ be given by $\mcCov{\mcT}^{-1}=\mcVCov{\mcT}^{-1}+\mcFCov{\mcT}^{-1}-\mcECov{\mcT}^{-1}$.
Let $\mcVDGCD=\gcd\{\mcVdeg_\mcVInd:\mcVInd\in\mcVIndSp\}$ denote the greatest common divisor of the attainable variable side degrees.
\begin{proposition}\label{mcp_llt}
For $r_n=\Theta(\sqrt{\ln(n)/n})$, uniformly in $\mcT\in\mcSeq_n$ and $\mcCFR\in\lltLatR_\mcT\cap\cB_{r_n}(u_{\mcColSp})$ we have
\begin{align*}
\pr[\mcRCFRN_\mcT=\mcCFR]=(1+o(1))\frac{\mcVDGCD^{q-1}}{\sqrt{\mcVDTot_\mcT}^{q-1}}\phi_\mcT\left(\sqrt{\mcVDTot_\mcT}(\mcCFR-u_{\mcColSp})_{[q-1]}\right)\textrm{,}
\end{align*}
where $\phi_\mcT$ denotes the density of $\cN(0_{[q-1]},\mcCov{\mcT})$ and $\mcCov{\mcT}$.
Further, $\mcCov{\mcT}^{-1}$ is positive definite and $\|\mcCov{\mcT}\|_2$, $\|\mcCov{\mcT}^{-1}\|_2=\Theta(1)$ uniformly in $\mcT\in\mcSeq_n$.
\end{proposition}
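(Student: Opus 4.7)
The plan is to decompose the probability $\pr[\mcRCFRN_\mcT = \mcCFR]$ into three factors that each fall into the local limit framework of Section \ref{llt}, and then assemble the result via a Gaussian algebra identity.

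\textbf{Step 1: Decomposition via exchangeability.}
Writing $\mcCFA = \mcVDTot_\mcT \mcCFR$ for the absolute frequencies and partitioning over $\mcVA$, we have
\begin{align*}
\pr[\mcCFA_{\mcT,\mcRVAN_\mcT} = \mcCFA] = \frac{1}{\mcZE_\mcT}\sum_{\mcVA:\,\mcCFA_{\mcT,\mcVA}=\mcCFA}\mcFEpsi_\mcT(\mcVA).
\end{align*}
Since $\mcRBij$ is a uniform bijection $\mcVHESp_\mcT\to\mcFHESp^\circ_\mcT$, for any $\mcVA$ with frequencies $\mcCFA$ the induced factor-side profile $\mcFSHEA_{\mcRG,\mcVA}$ is uniformly distributed over all configurations $y\in\mcColSp^{\mcFHESp^\circ_\mcT}$ with colour counts $\mcCFA$. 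Combined with the independence of the random weight functions, this yields the clean identity
\begin{align*}
\mcFEpsi_\mcT(\mcVA) = \binom{\mcVDTot_\mcT}{\mcCFA}^{-1}\prod_{j\in[m^\circ]}\mcFEZ_j\cdot\pr\bigl[\mcCFA(\mcRFHEA_{\mcT^\circ})=\mcCFA\bigr],
\end{align*}
where $\mcRFHEA_{\mcT^\circ}\sim\bigotimes_j\mcFED_j$ is the product-measure sample from Section \ref{llt} on the extended factor side. Likewise the count $N(\mcCFA):=|\{\mcVA:\mcCFA_{\mcT,\mcVA}=\mcCFA\}|$ equals $q^n\pr[\mcCFA(\mcRVSHEA_\mcT)=\mcCFA]$ with $\mcRVSHEA_\mcT\sim\bigotimes_i\mcVED_i$. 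Substituting and using Proposition \ref{mcp_first_moment} to absorb $\mcZE_\mcT=\Theta(q^n\prod_j\mcFExi_j)$, the multiplicative constants involving $\prod_j\mcFEZ_j$ and the powers of $q$ cancel, leaving
\begin{align*}
\pr[\mcCFA_{\mcT,\mcRVAN_\mcT}=\mcCFA]=\Theta(1)\cdot\frac{P_V(\mcCFA)\,P_F(\mcCFA)}{P_U(\mcCFA)},
\end{align*}
with $P_V,P_F$ the variable-side and factor-side LLT probabilities and $P_U(\mcCFA)=\binom{\mcVDTot_\mcT}{\mcCFA}q^{-\mcVDTot_\mcT}$ the pure multinomial.

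\textbf{Step 2: Simultaneous local limits.}
We now apply Theorem \ref{llt_local} three times, in each case with prior $\vmPrior = u_{\mcColSp}$ so the expected frequencies equal $u_{\mcColSp}$. The variable-side family $(\mcVdeg_\mcVInd,\mcVED_\mcVInd)_\mcVInd$ satisfies {\bf SPAN} with lattice spanned by the vectors $\mcVdeg_\mcVInd(e_{\mcCol_1}-e_{\mcCol_2})$, giving $h\equiv \mcVDGCD$. The factor-side family $(\mcFdeg_\mcFInd,\mcFED_\mcFInd)_\mcFInd$ includes the dummy $\mcFInd^\circ$ with $\mcFdeg_{\mcFInd^\circ}=1$, so $h\equiv 1$, and likewise the uniform multinomial. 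The assumptions {\bf GEN}, {\bf VAR}, {\bf SKEW} required by Theorem \ref{llt_local} are exactly those imposed by {\bf MC}. Applied at $\mcCFR$ with $\|\mcCFR-u_{\mcColSp}\|_2 < r_n = \Theta(\sqrt{\ln n/n})$, the three factors become Gaussian densities (up to $(1+o(1))$) with covariances $\mcVCov{\mcT}$, $\mcFCov{\mcT}$, and $\mcECov{\mcT}$, respectively (the last identification uses that a pure multinomial with unit degrees has precisely the normalised covariance $\mcECov{\mcT} = (\Erw[\mcRVdeg_\mcT]/\Erw[\mcRVdeg_\mcT^2])\mcVCov{\mcT}$).

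\textbf{Step 3: Gaussian algebra.}
Writing $x = \sqrt{\mcVDTot_\mcT}(\mcCFR-u_{\mcColSp})_{[q-1]}$ and using the identity
\begin{align*}
\frac{\phi_{\mcVCov{\mcT}}(x)\phi_{\mcFCov{\mcT}}(x)}{\phi_{\mcECov{\mcT}}(x)} = c_\mcT\,\phi_{\mcCov{\mcT}}(x)\qquad\text{with}\qquad \mcCov{\mcT}^{-1}=\mcVCov{\mcT}^{-1}+\mcFCov{\mcT}^{-1}-\mcECov{\mcT}^{-1},
\end{align*}
where $c_\mcT=(\det\mcECov{\mcT})^{1/2}/(\det\mcVCov{\mcT}\det\mcFCov{\mcT}\det\mcCov{\mcT}^{-1})^{1/2}$, we obtain the claimed Gaussian shape with the correct power of $\mcVDTot_\mcT$ and the $\mcVDGCD^{q-1}$ lattice factor inherited from the variable-side LLT (the factor-side and multinomial LLTs contribute $h\equiv 1$, which cancel). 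The overall multiplicative constant is then pinned down automatically: summing the asserted Gaussian density over the lattice $\lltLatR_\mcT$ and using the subgaussian tails from Proposition \ref{mcp_tails} to confirm that only the $\cB_{r_n}(u_{\mcColSp})$ window matters, the total probability equals $1+o(1)$ iff the prefactor is as stated.

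\textbf{Step 4: Positive definiteness and uniformity.}
Finally, we need $\mcCov{\mcT}^{-1}\succ 0$ with $\|\mcCov{\mcT}\|_2,\|\mcCov{\mcT}^{-1}\|_2=\Theta(1)$ uniformly in $\mcT\in\mcSeq_n$. The bounds on $\mcVCov{\mcT}$ and $\mcFCov{\mcT}$ follow from Lemma \ref{llt_global_atoms} and its consequences as applied in the proof of Theorem \ref{llt_global}. Since the measures $\mcVED_\mcVInd$ are concentrated on monochromatic configurations, one checks by direct computation that $\mcECov{\mcT}^{-1}\preceq\mcVCov{\mcT}^{-1}$, so $\mcVCov{\mcT}^{-1}-\mcECov{\mcT}^{-1}\succeq 0$; adding the strictly positive definite $\mcFCov{\mcT}^{-1}$ then yields uniform positive definiteness of $\mcCov{\mcT}^{-1}$.

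\textbf{Main obstacle.} The delicate step is Step 4: the identity $\mcVCov{\mcT}^{-1}+\mcFCov{\mcT}^{-1}-\mcECov{\mcT}^{-1}\succ 0$ with uniform bounds requires exploiting the particular rank-one structure of $\mcECov{\mcT}$ as a scalar multiple of $\mcVCov{\mcT}$ coming from the monochromatic support of $\mcVED_\mcVInd$. The bookkeeping of Gaussian normalization constants in Step 3 is also somewhat involved but is a mechanical consequence of the three local limit theorems once the Gaussian algebra is written out.
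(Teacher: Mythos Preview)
Your decomposition in Steps 1--3 is essentially the same as the paper's: the identity $\pr[\mcRCFRN_\mcT=\mcCFR]\propto P_V(\mcCFA)P_F(\mcCFA)/P_U(\mcCFA)$ is exactly what the paper extracts in Section~\ref{mcp_first_moment_proof}, and the Gaussian algebra combining the three second-order expansions is the same.

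The gap is in Step~4. Your claim that $\mcECov{\mcT}^{-1}\preceq\mcVCov{\mcT}^{-1}$ is false in general. Since $\mcECov{\mcT}=\frac{\Erw[\mcRVdeg_\mcT]}{\Erw[\mcRVdeg_\mcT^2]}\mcVCov{\mcT}$, we have $\mcECov{\mcT}^{-1}=\frac{\Erw[\mcRVdeg_\mcT^2]}{\Erw[\mcRVdeg_\mcT]}\mcVCov{\mcT}^{-1}$, and the scalar $\Erw[\mcRVdeg_\mcT^2]/\Erw[\mcRVdeg_\mcT]\ge 1$ for integer-valued degrees (equality only when all degrees lie in $\{0,1\}$). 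So the inequality goes the \emph{wrong way}: $\mcVCov{\mcT}^{-1}-\mcECov{\mcT}^{-1}\preceq 0$, and adding $\mcFCov{\mcT}^{-1}$ need not yield a positive definite matrix.

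The paper groups the terms the other way, writing $\mcCov{\mcT}^{-1}=\mcVCov{\mcT}^{-1}+H_\mcT$ with $H_\mcT=\mcFCov{\mcT}^{-1}-\mcECov{\mcT}^{-1}$, and the whole point is that $H_\mcT\succeq 0$ is a \emph{consequence of} {\bf BAL$'$}. Concretely, {\bf BAL$'$} implies $\mcFED_\mcFInd\vert_*=u_{\mcColSp}$ and the inequality $\KL{p}{\mcFED_\mcFInd}\ge \mcFdeg_\mcFInd\KL{p\vert_*}{u_{\mcColSp}}$, whence $\alpha_{\mathrm{f},\mcT}(\mcCFA)\ge\alpha_{\mathrm{e}}(\mcCFA)$ with equality at $u_{\mcColSp}$; the Hessian of the difference at $u_{\mcColSp}$ is precisely $H_\mcT$, so it is positive semi-definite. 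You do not invoke {\bf BAL$'$} anywhere, but without it $\mcCov{\mcT}^{-1}$ can fail to be positive definite and the proposition would not hold. This is the missing idea; once you supply it, the rest of your argument is sound.
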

This local limit theorem for $\mcRCFRN_\mcT$ with the local limit theorem for $\mcRCFR^*_\mcT=\frac{1}{\mcVDTot_\mcT}\mcCFA_{\mcT,\mcRVA^*}\in\lltLatR_\mcT$ from Section \ref{llt} and the tail bounds above is sufficient to derive Proposition \ref{mcp_main_prop}.
%
%MUTUAL CONTIGUITY - FIRST MOMENT
%
\subsection{Proof of Proposition \ref{mcp_first_moment}}\label{mcp_first_moment_proof}
Fix parameters $\epsGen$, $\secB$ and $\thiB_n$ to satisfy the assumptions {\bf GEN}, {\bf VAR} and {\bf SKEW} jointly for the variable and factor side, this means in particular that $\thiB_n\in o(\sqrt{\ln(n)^3/n})$ is a uniform third moment bound for the variable side distribution sequences in $n$ and also for \emph{all} factor side sequences $t^\circ$, with $m^\circ$ ranging from $\mcDegDens_-n$ to $\mcDegDens_+n$.

For $\mcT\in\mcSeq_n$ and $\mcCFR$ in the support of $\mcRCFRN_\mcT$ we have $\binom{\mcVDTot_\mcT}{\mcVDTot_\mcT\mcCFR}\prod_\mcCol\mcCFR(\mcCol)^{\mcVDTot_\mcT\mcCFR(\mcCol)}\ge\binom{\mcVDTot_\mcT+q-1}{q-1}^{-1}$, i.e.~the maximal probability of the multinomial is at least the uniform. Hence, the uniform bounds on $\Erw[\mcRVdeg_{\mcT}]$ for $\mcT\in\mcSeq_n$ yield a uniform lower bound $\binom{\mcVDTot_\mcT}{\mcVDTot_\mcT\mcCFR}\prod_\mcCol\mcCFR(\mcCol)^{\mcVDTot_\mcT\mcCFR(\mcCol)}=\Omega(n^{-(q-1)})$.
With Proposition \ref{llt_tails} we have
\begin{align*}
\pr[\|\mcRCFR^*_\mcT-u_{\mcColSp}\|_2\ge r]\le c'\exp(-cr^2n)
\end{align*}
for all $n\in\ZZ_{>0}$, $\mcT\in\mcSeq_n$ and $r\in\RR_{\ge 0}$.
For $t=(m,\mcVSInd,\mcFSInd)\in\mcSeq_n$, with $m^\circ=m+\mcDFDelta(\mcT)$, $\mcCFA_{\mcFSHEA}$ denoting the colour frequencies of $\mcFSHEA\in\mcColSp^{\mcFHESp^\circ_\mcT}$ and using arguments analogous to Section \ref{Sec_Noela} this yields
\begin{align*}
\frac{\mcZE_\mcT}{\mcZ^*_\mcT}&=\sum_\mcVA\frac{q^{\mcVDTot_\mcT}}{q^n\binom{\mcVDTot_\mcT}{\mcCFA_{\mcT,\mcVA}}}\sum_{\mcFSHEA}\vecone\{\mcCFA_{\mcFSHEA}=\mcCFA_{\mcT,\mcVA}\}\prod_{i\in[m^\circ]}\mcFED_{i}(\mcFHEA_{\mcFL_i})
=r_{t,+}+r_{t,-}\textrm{,}\\
r_{t,+}&=\sum_{\mcCFA\in\cB_n}\frac{\pr[\mcCFA_{\mcT,\mcRVA^*}=\mcCFA]\pr[\mcCFA_{\mcRFSHEA^*_\mcT}=\mcCFA]}{\binom{\mcVDTot_\mcT}{\mcCFA}q^{-\mcVDTot_\mcT}}\textrm{,}\\
r_{t,-}&=q^{\mcVDTot_\mcT}\sum_{\mcCFA\not\in\cB_n}\frac{\pr[\mcCFA_{\mcT,\mcRVA^*}=\mcCFA]}{\binom{\mcVDTot_\mcT}{\mcCFA}}\sum_{\mcFSHEA}\vecone\{\mcCFA_{\mcFSHEA}=\mcCFA\}\prod_{i\in[m^\circ]}\mcFED_{i}(\mcFHEA_{\mcFL_i})\textrm{,}
\end{align*}
with $\cB_n=\mcVDTot_\mcT\cB_{r_n}(u_{\mcColSp})$ and $\mcRFSHEA^*_\mcT\sim\bigotimes_{i\in[m^\circ]}\mcFED_{i}$.
For $r_{t,-}$, using $\mcCFR_\mcCFA=\mcVDTot_\mcT^{-1}\mcCFA$ and {\bf BAL'} we get
\begin{align*}
r_{t,-}&=q^{\mcVDTot_\mcT}\sum_{\mcCFA\not\in\cB_n}\frac{\pr[\mcCFA_{\mcT,\mcRVA^*}=\mcCFA]}{\binom{\mcVDTot_\mcT}{\mcCFA}\prod_\mcCol\mcCFR_\mcCFA(\mcCol)^{\mcCFA(\mcCol)}}\sum_{\mcFSHEA}\vecone\{\mcCFA_{\mcFSHEA}=\mcCFA\}\prod_{i\in[m^\circ]}\left(\mcFED_{i}(\mcFHEA_{\mcFL_i})\prod_{h\in[\mcFdeg_{i}]}\mcCFR_\mcCFA(\mcFHEA_{\mcFL_i,h})\right)\\
&\le q^{\mcVDTot_\mcT}\sum_{\mcCFA\not\in\cB_n}\frac{\pr[\mcCFA_{\mcT,\mcRVA^*}=\mcCFA]}{\binom{\mcVDTot_\mcT}{\mcCFA}\prod_\mcCol\mcCFR_\mcCFA(\mcCol)^{\mcCFA(\mcCol)}}\sum_{\mcFSHEA}\prod_{i\in[m^\circ]}\left(\mcFED_{i}(\mcFHEA_{\mcFL_i})\prod_{h\in[\mcFdeg_{i}]}\mcCFR_\mcCFA(\mcFHEA_{\mcFL_i,h})\right)\\
&\le\sum_{\mcCFA\not\in\cB_n}\frac{\pr[\mcCFA_{\mcT,\mcRVA^*}=\mcCFA]}{\binom{\mcVDTot_\mcT}{\mcCFA}\prod_\mcCol\mcCFR_\mcCFA(\mcCol)^{\mcCFA(\mcCol)}}
=O(n^q\pr[\mcCFA_{\mcT,\mcRVA^*}\not\in\cB_n])
=O(n^q\exp(-cr_n^2n))
\end{align*}
uniformly in $\mcT\in\mcSeq_n$. Hence, for any $a\in\RR_{>0}$, all $c^*\in\RR_{>0}$ large enough and with $r_n=c^*\sqrt{\ln(n)/n}$ we have $r_{t,-}=o(n^{-a})$. This completes the discussion of the tails.

Next, we turn to the asymptotics of $r_{t,+}$.
Preparing the application of the local limit theorem \ref{llt_local} \emph{and} the large deviation result \ref{llt_global} jointly for the variable side and the factor side, we proceed with care.
First, recall the existence of sequences satisfying {\bf MC} that cover $\mcVSInd$ and $\mcFSInd^\circ$ respectively for all $t=(m,\mcVSInd,\mcFSInd)\in\mcSeq_n$ and $n\in\ZZ_{>0}$.
Further, we fix a sequence $R_{m^\circ}=\Theta(\sqrt{\ln(m^\circ)/m^\circ})$, with asymptotics in $m^\circ$, for the factor side and sufficiently large such that $\cB_n\subseteq\cB_{R_{m^\circ}}(u_{\mcColSp})$ for all sufficiently large $n$ and uniformly in $m^\circ$ for $t\in\mcSeq_n$.
Further, we fix a compact set $\cP^*\subseteq\cP^\circ(\mcColSp)$, covering $\cP^\circ(\mcColSp)$ but for a small residue at the boundary.  As discussed in the proof of theorem \ref{llt_local} using the first order approximation of the homeomorphism $\iota$ from Section \ref{vom}, we eventually have $\iota_{\mcVSInd}^{-1}(\cB_n)\subseteq\cP^*$ and $\iota_{\mcFSInd^\circ}^{-1}(\cB_n)\subseteq\cP^*$ for all $\mcT\in\mcSeq_n$ and $n$ sufficiently large.
Now, we first use the large deviation result \ref{llt_global} with the uniform error bounds.
Recalling that $\Erw[\mcRFdeg_{\mcT^\circ}]m^\circ=\mcFDTot_{\mcT^\circ}=\mcVDTot_\mcT$, using the notions from Section \ref{llt}, further for $\mcT=(m,\mcVSInd,\mcFSInd)\in\mcSeq_n$, $\mcCFA\in\cB_n$ in the support of $\mcCFA_{t,\mcRVA^*}$ and with $\mcCFR=\mcVDTot_\mcT^{-1}\mcCFA$, $\mcPrior=\iota^{-1}_{\mcVSInd}(\mcCFR)$, $\mcPrior'=\iota^{-1}_{\mcFSInd^\circ}(\mcCFR)$ we have
\begin{align*}
r_{\mcT,+}&=\left(1+O\left(\thiB_n\sqrt{\frac{\ln(n)^3}{n}}\right)\right)
\sum_{\mcCFA\in\cB_n}W_\mcT(\mcCFA)\textrm{,}\\
W_\mcT(\mcCFA)&=\frac{W_{\mathrm{V},\mcT}(\mcCFA)W_{\mathrm{F},\mcT}(\mcCFA)}{W_{\mathrm{E},\mcT}(\mcCFA)}
=\frac{\mcVDGCD^{q-1}}{\sqrt{\mcVDTot_\mcT^{q-1}}}\frac{\exp(-\mcVDTot_\mcT\alpha_\mcT(\mcCFA))}{\sqrt{2\pi}^{q-1}\sqrt{q^{q}\det(\mcCov{\mcVSInd,\mcPrior}\mcCov{\mcFSInd^\circ,\mcPrior'})}}\\
\alpha_\mcT(\mcCFA)&=\alpha_{\mathrm{v},\mcT}(\mcCFA)+\alpha_{\mathrm{f},\mcT}(\mcCFA)-\alpha_{\mathrm{e}}(\mcCFA)\\
W_{\mathrm{V},\mcT}(\mcCFA)&=\frac{\mcVDGCD^{q-1}}{\sqrt{\mcVDTot_\mcT^{q-1}}}\frac{\exp\left(-\mcVDTot_\mcT\alpha_{\mathrm{v},\mcT}(\mcCFA)\right)}{\sqrt{2\pi}^{q-1}\sqrt{\det(\mcCov{\mcVSInd,\mcPrior})}}\textrm{, }
\alpha_{\mathrm{v},\mcT}(\mcCFA)=\frac{1}{\Erw[\mcRVdeg_{\mcT^\circ}]}\Erw[\KL{\mcVED_{\mcPrior,\mcRVInd_{\mcT^\circ}}}{\mcVED_{\mcRVInd_{\mcT^\circ}}}]\\
W_{\mathrm{F},\mcT}(\mcCFA)&=\frac{1}{\sqrt{\mcVDTot_\mcT^{q-1}}}\frac{\exp\left(-\mcVDTot_\mcT\alpha_{\mathrm{f},t}(\mcCFA)\right)}{\sqrt{2\pi}^{q-1}\sqrt{\det(\mcCov{\mcFSInd^\circ,\mcPrior'})}}\textrm{, }
\alpha_{\mathrm{f},\mcT}(\mcCFA)=\frac{1}{\Erw[\mcRFdeg_{\mcT^\circ}]}\Erw[\KL{\mcFED_{\mcPrior',\mcRFInd_{\mcT^\circ}}}{\mcFED_{\mcRFInd_{\mcT^\circ}}}]\\
W_{\mathrm{E},\mcT}(\mcCFA)&=\frac{1}{\sqrt{\mcVDTot_\mcT^{q-1}}}\frac{\exp\left(-\mcVDTot_\mcT\alpha_{\mathrm{e}}(\mcCFA)\right)}{\sqrt{2\pi}^{q-1}\sqrt{q^{-q}}}\textrm{, }
\alpha_{\mathrm{e}}(\mcCFA)=\KL{\mcCFR}{u_{\mcColSp}}\textrm{,}
\end{align*}
Using {\bf BAL'} we notice that for all $\mcFInd\in\mcFIndSp^\circ$ and $p\in\cP(\mcColSp^{\mcFdeg_\mcFInd})$ we have $\mcFED_\mcFInd\vert_*=u_{\mcColSp}$, obtained from the fact that $u_{\mcColSp}$ is a maximizer of $\vmPrior\mapsto\sum_{\mcFHEA}\mcFED_{\mcFInd}(\mcFHEA)\prod_h\vmPrior(\mcFHEA_h)$, hence a stationary point, and taking the first derivatives.
Further, using $p_*=p\vert_*$, we have
\begin{align*}
\KL{p}{\mcFED_\mcFInd}
&=\KL{p}{\mcFED_{p_*,\mcFdeg_\mcFInd}}+\ln\left(\frac{q^{-\mcFdeg_\mcFInd}}{\mcFEZ_{p_*,\mcFdeg_\mcFInd}}\right)+\mcFdeg_\mcFInd\ln(q)-\CE{p}{p_*^{\otimes k}}\\
&=\KL{p}{\mcFED_{p_*,\mcFdeg_\mcFInd}}++\ln\left(\frac{q^{-\mcFdeg_\mcFInd}}{\mcFEZ_{p_*,\mcFdeg_\mcFInd}}\right)+\mcFdeg_\mcFInd\KL{p_*}{u_{\mcColSp}}
\ge \mcFdeg_\mcFInd\KL{p_*}{u_{\mcColSp}}\textrm{.}
\end{align*}
With this result, the convexity of the relative entropy under $\hat{\mcRFInd}_{\mcT^\circ}=\vmRInd_{\hat{\vmIndP}_{\mcFSInd^\circ}}$ from Section \ref{vom}, i.e.~for $\mcFInd\in\mcFIndSp^\circ$ given by $\pr[\hat{\mcRFInd}_{\mcT^\circ}=\mcFInd]=\frac{\mcFdeg_\mcFInd}{\Erw[\mcRFdeg_{\mcT^\circ}]}\pr[\mcRFInd_{\mcT^\circ}=\mcFInd]$, and the fact that 
$\mcCFR=\Erw[\mcFED_{\mcPrior',\hat{\mcRFInd}_{\mcT^\circ}}\vert_*]$ we obtain
\begin{align*}
\alpha_{\mathrm{f},\mcT}(\mcCFA)
\ge\frac{1}{\Erw[\mcRFdeg_{\mcT^\circ}]}\Erw[\mcRFdeg_{\mcT^\circ}\KL{\mcFED_{\mcPrior',\mcRFInd_{\mcT^\circ}}\vert_*}{u_{\mcColSp}}]
&\ge\alpha_{\mathrm{e}}(\mcCFA)
\end{align*}
Since we further have $\alpha_{\mathrm{f},\mcT}(\mcVDTot_\mcT u_{\mcColSp})-\alpha_{\mathrm{e}}(\mcVDTot_\mcT u_{\mcColSp})=0$ as discussed in Section \ref{llt} (notice that $\iota_{\mcFSInd^\circ}(u_{\mcColSp})=u_{\mcColSp}$ by {\bf BAL'}), this implies that the Hessian $H_\mcT$ of $f_\mcT(\mcCFR_{[q-1]})=\alpha_{\mathrm{f},\mcT}(\mcCFA)-\alpha_{\mathrm{e}}(\mcCFA)$ at $\mcCFR=u_{\mcColSp}$ is positive semi-definite.
With $B$ and composing the Hessians from the proof of Proposition \ref{llt_local} we have the Hessian $\mcECov{\mcT}^{-1}=qB^{\matTr}B$ (not depending on $\mcT$) for the latter contribution and $\mcFCov{\mcT}^{-1}$ for the former, so $H_\mcT=\mcFCov{\mcT}^{-1}-\mcECov{\mcT}^{-1}$.
Now, we follow the proof of Proposition \ref{llt_local} to obtain
\begin{align*}
\mcVDTot_\mcT\alpha_\mcT(\mcCFA)
&=\frac{1}{2}v^{\matTr}\mcCov{\mcT}^{-1}v+O\left(\thiB_n\sqrt{\frac{\ln(n)^3}{n}}\right)\textrm{, }
v=\sqrt{\mcVDTot_\mcT}^{-1}(\mcCFA-\mcVDTot_\mcT u_{\mcColSp})_{[q-1]}\textrm{.}
\end{align*}
The fact that $\mcCov{\mcT}^{-1}=\mcVCov{\mcT}^{-1}+H_\mcT$ shows that $\mcCov{\mcT}^{-1}$ is positive definite with $\|\mcCov{\mcT}\|_2\le\|\mcVCov{\mcT}\|_2=O(1)$ uniformly in $\mcT$. Further, since $\mcECov{\mcT}^{-1}=qI_{[q-1]}+q1_{[q-1]}1_{[q-1]}^{\matTr}$ is positive definite with eigenvalues $q$, $q^2$ (and determinant $q^q$) we get $\|\mcCov{\mcT}^{-1}\|_2\le\|\mcVCov{\mcT}^{-1}\|_2+\|\mcFCov{\mcT}^{-1}\|_2=O(1)$ uniformly in $\mcT$, which yields $\|\mcCov{\mcT}\|_2$, $\|\mcCov{\mcT}^{-1}\|_2=\Theta(1)$ uniformly in $\mcT$.
Using $\mcVCov{\mcT}=\frac{\Erw[\mcRVdeg_\mcT^2]}{\Erw[\mcRVdeg_{\mcT}]}\mcECov{\mcT}$,
we obtain $\det(\mcVCov{\mcT})=(\Erw[\mcRVdeg_\mcT^2]/\Erw[\mcRVdeg_{\mcT}])^{q-1}q^q$.
Following the proof we can take the asymptotics of the determinants to get
\begin{align*}
r_{t,+}&=\left(1+O\left(\thiB_n\sqrt{\frac{\ln(n)^3}{n}}\right)\right)\sum_{\mcCFA\in\cB_n}W_\mcT(\mcCFA)\textrm{,}\\
W_\mcT(\mcCFA)&=r^*_\mcT\frac{\mcVDGCD^{q-1}}{\sqrt{\mcVDTot_\mcT}^{q-1}}\frac{\exp\left(-\frac{1}{2}v^{\matTr}\mcCov{\mcT}^{-1}v\right)}{\sqrt{(2\pi)^{q-1}\det(\mcCov{\mcT})}}\textrm{, }
r^*_\mcT=\sqrt{\det\left(\frac{\Erw[\mcRVdeg_{\mcT}]}{\Erw[\mcRVdeg_\mcT^2]}\mcCov{\mcT}\mcFCov{\mcT}^{-1}\right)}\textrm{,}
\end{align*}
and notice that $r^*_\mcT=\Theta(1)$ uniformly since all eigenvalues of both matrices and the second moment are uniformly $\Theta(1)$.
Recall that $\mcCFA_{[q-1]}$ sits on a lattice of lengths $\mcVDGCD$ in all dimensions, hence $v$ is on a lattice with lengths $\mcVDGCD\sqrt{\mcVDTot_\mcT}^{-1}$ in all dimensions. Using the uniform bounds on the eigenvalues of $\mcCov{\mcT}$ we can approximate the Riemann sum by an integral over a growing domain of radius $c^*\sqrt{\ln(n)}$ (in $1_{[q]}^\perp$ with the $2$-norm), hence the error is of order $O(\sqrt{\ln(n)/n})$, i.e.~negligible. Choosing $c^*$ sufficiently large ensures that the extension of the domain comes at a negligible cost, say $\sqrt{n}^{-1}$, hence we have
\begin{align*}
\frac{\mcZE_\mcT}{\mcZ^*_\mcT}=\left(1+O\left(\thiB_n\sqrt{\frac{\ln(n)^3}{n}}\right)\right)r^*_\mcT=\Theta(1)\textrm{,}
\end{align*}
uniformly in $\mcT\in\mcSeq_n$. The constant $r^*_\mcT$ is of interest in its own right and provides further insights, but in this context we only need the uniform bounds.
%
%MUTUAL CONTIGUITY - TAILS
%
\subsection{Proof of Proposition \ref{mcp_tails}}\label{mcp_tails_proof}
First, notice that the discussion in Section \ref{mcp_first_moment} directly translates to $\mcRCFRN_\mcT$ since
\begin{align*}
\pr[\mcRCFRN_\mcT=\mcCFR]&=\frac{\mcZ^*_\mcT}{\mcZE_\mcT}\frac{\pr[\mcCFA_{\mcT,\mcRVA^*}=\mcVDTot_\mcT\mcCFR]\pr[\mcCFA_{\mcRFSHEA^*_\mcT}=\mcCFA]}{\binom{\mcVDTot_\mcT}{\mcVDTot_\mcT\mcCFR}q^{-\mcVDTot_\mcT}}\\
&=\left(1+O\left(\thiB_n\sqrt{\frac{\ln(n)^3}{n}}\right)\right)r^{*-1}_\mcT\frac{\pr[\mcCFA_{\mcT,\mcRVA^*}=\mcVDTot_\mcT\mcCFR]\pr[\mcCFA_{\mcRFSHEA^*_\mcT}=\mcCFA]}{\binom{\mcVDTot_\mcT}{\mcVDTot_\mcT\mcCFR}q^{-\mcVDTot_\mcT}}
\end{align*}
uniformly in $\mcCFR\in\lltLatR_\mcT$ and $\mcT\in\mcSeq_n^\circ$.
Analogously to the bounds derived for $r_{-,\mcT}$ and with the relative error bounds above we find $c$, $c'\in\RR_{>0}$ such that $\pr[\|\mcRCFRN_\mcT -u_{\mcColSp}\|_2\ge r]\le c'n^q\exp(-cr^2n)$ for all sufficiently large $n$, $\mcT\in\mcSeq_n$ and $r\in\RR_{\ge 0}$. In particular, if $r\ge r_n=\ln(n)/\sqrt{n}$, then we can weaken $c$ to $c''\in(0,c)$ to maintain the bound since $\frac{q\ln(q)}{c\ln(n)^2}=o(1)$.
For $r<r_n$ we can use the discussion of $r_{+,\mcT}$ with uniform bounds on the smallest eigenvalue which is ensured to be uniformly bounded away from zero, uniform bounds on the leading coefficient and the integral approximation to obtain uniform bounds up to $r_n$ and use the bound above on the remainder, then taking the smaller constant for the exponent and the sum of coefficients.
This completes the proof for $n$ large.
For small $n\le n_0$ we notice that $\|\mcRCFRN_\mcT -u_{\mcColSp}\|_2\le 2$, so if the leading coefficient $c'$ is sufficiently large and the constant $c$ in the exponent sufficiently small, then the right-hand side is larger than $1$ for all choices of $n\le n_0$ and $r$ with $\pr[\|\mcRCFRN_\mcT -u_{\mcColSp}\|_2\ge r]>0$. This ensures existence of $c$, $c'$ such that the assertion holds.
%
%MUTUAL CONTIGUITY - LOCAL LIMIT THEOREM
%
\subsection{Proof of Proposition \ref{mcp_llt}}\label{mcp_llt_proof}
Proposition \ref{mcp_llt} is immediate from Section \ref{mcp_first_moment_proof} with the discussion in Section \ref{mcp_tails_proof}.
%
%MUTUAL CONTIGUITY - MAIN RESULT
%
\subsection{Proof of Proposition \ref{mcp_main_prop}}\label{mcp_main_prop_proof}
Recall that the results of Section \ref{llt} are also valid for $\mcRCFR^*_\mcT$, hence for given $\varepsilon$ we can choose $r$ such that $\|\mcRCFRN_\mcT -u_{\mcColSp}\|_2<r(\Erw[\mcRVdeg_{\mcT}]\sqrt{n})^{-1}$ and $\|\mcRCFR^*_\mcT-u_{\mcColSp}\|_2<r(\Erw[\mcRVdeg_{\mcT}]\sqrt{n})^{-1}$ with probability at least $1-\varepsilon$, valid for all $n\in\ZZ_{>0}$ and $\mcT\in\mcSeq_n$ using the uniform bounds for $\Erw[\mcRVdeg_{\mcT}]$. Since the relative error bounds are uniform for both models, the Radon-Nikodym derivative $\pr[\mcRCFR^*=\mcCFR]/\pr[\mcRCFRN_\mcT=\mcCFR]$ is the ratio of the densities of the normal approximations up to a leading constant. This ratio can be uniformly bounded from above and below, uniformly for all sufficintly large $n$, $\mcT\in\mcSeq_n$ and all $\mcCFR$ in the $r(\Erw[\mcRVdeg_{\mcT}]\sqrt{n})^{-1}$ radius around $u_{\mcColSp}$. Finally, notice that
\begin{align*}
\pr[\mcRVAN_\mcT=\mcVA]=\frac{\mcZ^*_\mcT q^{\mcVDTot_\mcT}}{q^n\binom{\mcVDTot_\mcT}{\mcCFA_{\mcT,\mcVA}}}\pr[\mcCFA_{\mcRFSHEA^*_{\mcT}}=\mcCFA_{\mcT,\mcVA}]
\end{align*}
by the discussion at the beginning of Section \ref{mcp_first_moment_proof}, which means that $\mcRVAN_\mcT$ given $\mcCFA_{\mcT,\mcRVAN_\mcT}$ is uniform and hence equal to $\mcRVA^*$ given $\mcCFA_{\mcT,\mcRVA^*}$.
In particular the derivative of $\mcRVA^*$ with respect to $\mcRVAN_\mcT$ is the derivative of $\mcRCFR^*$ with respect to $\mcRCFRN_\mcT$ (constant on assignments with same color frequencies, to be precise).

We're left to show the assertion that $\mcRVAPost_{\mcRG^*_\mcT(\mcRVA^*)}\in\cE_\mcT$ with probability at least $1-\varepsilon$ uniformly. With $c$ denoting the upper bound on the derivative of $\mcRVA^*$ to $\mcRVAN_\mcT$ notice that
\begin{align*}
\pr[\mcRVAPost_{\mcRG^*_\mcT(\mcRVA^*)}\not\in\mathcal E_t]
&\le\Erw[\vecone\{\mcRVAPost_{\mcRG^*_\mcT(\mcRVA^*)}\not\in\mathcal E_t,\mcRVA^*\in\mathcal E_\mcT\}]+\varepsilon
\le c\Erw[\vecone\{\mcRVAPost_{\mcRG^*_\mcT(\mcRVAN)}\not\in\mathcal E_t,\mcRVAN_\mcT\in\mathcal E_\mcT\}]+\varepsilon\\
&\le c\Erw[\vecone\{\mcRVAPost_{\mcRG^*_\mcT(\mcRVAN)}\not\in\mathcal E_\mcT\}]+\varepsilon
=c\Erw[\vecone\{\mcRVAN_\mcT\not\in\mathcal E_\mcT\}]+\varepsilon
\le(c+1)\varepsilon
\end{align*}
uniformly in $n\in\ZZ_{>0}$ and $\mcT\in\mcSeq_n$.
So, if we now choose $r^*\in\RR_{>0}$ sufficiently large such that both $\mcRCFRN_\mcT$ and $\mcRCFR^*_\mcT$ attain frequencies in the corresponding ball with probability at least $1-(c+1)^{-1}\varepsilon$, then we obtain the $1-\varepsilon$ bound for $\mcRVAPost_{\mcRG^*_\mcT(\mcRVA^*)}$. Hence, the assertion holds with $r^*$ and the bound $c^*$ on the derivative corresponding to $r^*$. This completes the proof.
\subsection{Implications}\label{mc_hence}
The results in Section \ref{dd} directly impy mutual contiguity of $\mcRVA^*$ and $\mcRVAN_{\mcRDSC_n}$ with $\mcRDSC_n$ from Section \ref{dd}, since the assumptions of Proposition \ref{mcp_main_prop} are clearly met and the result is uniform in $\mcT\in\mcSeq_n$.
The fact $\mcRG^*_\mcT(\mcRVA^*)$ and $\mcRG^*_\mcT(\mcRVAN_\mcT)$ (or $\mcRGN_\mcT$ for that matter) conditional to a fixed ground truth obviously have the same law then yields mutual contiuity of the degree/assignment/factor graph triplets.
For the same reason we obtain joint mutual contiguity for the factor side half-edge assignments $\mcRFSHEA_\mcT(\mcRVA^*)$ and $\mcRFSHEA_\mcT(\mcRVAN)$, combined with $\mcRDSC_n$, further $\mcRVA^*$ and $\mcRVAN$ or the corresponding factor graph models.
\subsection{Pinning}\label{mc_pinning}
The pinned model is obtained from the regular model by fixing a subset $U\subseteq\mcVLSp_n$ and attaching constraints to the variables that fix the assignment to a uniformly random colour $\check\mcRVA_\mcVL\in\mcColSp$ for $\mcVL\in U$.
Since this process is independent of anything else we have $\Erw[\psi_{\mcRG_{\mcT,U}}(\mcVA)]=\Erw[\psi_{\mcRG_{\mcT}}(\mcVA)]\Erw[\prod_{\mcVL\in U}\vecone\{\check\mcRVA_\mcVL=\mcVA_\mcVL\}]=q^{-|U|}\Erw[\psi_{\mcRG_{\mcT}}(\mcVA)]$. The result immediately translates to the partition function, implying that $\mcRVAN_{t,U}$ and $\mcRVAN_{t}$ have the same law and thereby the mutual contiguity results also hold for pinned models.

\section{Typical assignments} \label{typ_ass}

In this section we derive results for the variable side and factor side half-edge assignments introduced in Section \ref{mc}.
We use the model, notions and notation introduced in Section \ref{mc}, further Section \ref{dd}, Section \ref{assd} and Section \ref{llt}.
For the model introduced in Section \ref{mc} degrees and labels coincide, i.e.~$\mcFInd=\mcFdeg$ and $\mcVInd=\mcVdeg$ using the corresponding notation.
The distributions $\mcFED_\mcFdeg$ are derived from $\mcFRpsi_\mcFdeg$ introduced in Section \ref{Sec_rfg}, while the distributions on the variable side are given by $\mcVED_\mcVdeg$ as introduced in Section \ref{mc}.
Combining Section \ref{dd} and Section \ref{assd}, we have $\acVdegP$, $\acFdegP\in\vmIndPSp$, let  $\acSupp=\{(d,\vmAss):d\in\ZZ_{\ge 0},\vmAss\in\mcColSp^d\}$ denote the joint support, further let
$\acVAD^*\in\cP(\acSupp)$ be given by $\acVAD^*(\mcVdeg,\mcVHEA)=\acVdegP(\mcVdeg)\mcVED_\mcVdeg(\mcVHEA)$ for $\mcVdeg$ in the support of $\vd$ and $\mcVHEA\in\mcColSp^\mcVdeg$, and let
$\acFAD^*\in\cP(\acSupp)$ be given by $\acFAD^*(\mcFdeg,\mcFHEA)=\acFdegP(\mcFdeg)\mcFED_\mcFdeg(\mcFHEA)$ for $\mcFdeg$ in the support of $\vk$ and $\mcFHEA\in\mcColSp^\mcFdeg$, i.e.~the expected assignment distributions on the variable and factor side. Based on Section \ref{assd} we let $\acADPD$ denote the metric on $\acADSp^2$ induced by $\acADD$ on $\acADSp$, i.e.~$\acADPD(\acAD,\acAD')=\acADD(\acAD_1,\acAD'_1)+\acADD(\acAD_2,\acAD'_2)$ for $\acAD$, $\acAD'\in\acADSp^2$.
Finally, we let $\acJAD^*=(\acVAD^*,\acFAD^*)\in\acADSp^2$.

For $n\in\ZZ_{>0}$, $\mcT=(m,\sd,\sk)\in\cT^*_n$, $\mcVA\in\mcColSp^{\mcVLSp_n}$ and $\mcFSHEA\in\mcColSp^{\mcFHESp_\mcT}$, we let $\acVAF{\mcT,\mcVA}\in\acADSp$ denote the variable side half-edge assignment distribution, i.e.~
\begin{flalign*}
\acVAF{\mcT,\mcVA}(\mcVdeg,\mcVHEA)=n^{-1}\left|\left\{i\in[n]:\mcVdeg_{i}=\mcVdeg,(\mcVA_{\mcVL_i})_{h\in[\mcVdeg_{i}]}=\mcVHEA\right\}\right|
\end{flalign*}
for $(\mcVdeg,\mcVHEA)\in\acSupp$, and $\acFAF{\mcT,\mcFSHEA}\in\acADSp$ denote the factor side half-edge assignment distribution, i.e.
\begin{flalign*}
\acFAF{\mcT,\mcFSHEA}(\mcFdeg,\mcFHEA)=m^{-1}\left|\left\{i\in[m]:\mcFdeg_{i}=\mcFdeg,\mcFHEA_i=\mcFHEA\right\}\right|
\end{flalign*}
for $(\mcFdeg,\mcFHEA)\in\acSupp$ if $m>0$ and the one-point mass on the empty assignment $(0,())$ if $m=0$. Finally let $\acJAF{\acS}=(\acVAF{\mcT,\mcVA},\acFAF{\mcT,\mcFHEA})\in\acADSp^2$ with $\acS=(\mcT,\mcVA,\mcFSHEA)$.
Now, for $n\in\cN$, $\mcT\in\cT_n$ and $\mcVA\in\mcColSp^{\mcVLSp_n}$
recall $\mcRFSHEA^*_\mcT(\mcVA)$ from Section \ref{mc}
and let $\acRS^*_\mcT=(\mcT,\mcRVA^*,\mcRFSHEA^*_{\mcT}(\mcRVA^*))$ and $\acRSN_\mcT=(\mcT,\mcRVAN,\mcRFSHEA^*_{\mcT}(\mcRVAN_\mcT))$ denote the coloured sequences for the two versions of the teacher-student scheme for given $\mcT\in\cT_n$ and further $\acRS^*=\acRS^*_{\vt_n}$, $\acRSN=\acRSN_{\vt_n}$.
Further, let $\acSSpV_n$ denote the set of valid coloured sequences $\acS=(\mcT,\mcVA,\mcFSHEA)$, i.e.~we have $\mcT\in\cT_n$, $\mcVA\in\mcColSp^{\mcVLSp_n}$ and $\mcFSHEA=\mcFSHEA_{\mcG,\mcVA}$ for some $\mcG$ in the support of $\mcRG_\mcT$.
Finally, for given $r\in\RR_{>0}$ let 
$\acSSpT_{n,r}=\{\acS\in\acSSpV_n:\acJAD_\acS\in\cB_r(\acJAD^*)\}$.

As before, a sequence $f_n(\acS)$ with $\acS\in\acSSpV_n$ and $n\in\cN$ is sublinear in the number of factors if there exists $c\in\RR_{>0}$ with $|f_n(\acS)|\le c+c\frac{m}{n}$ for all $\acS=(\mcT,\mcVA,\mcFSHEA)\in\acSSpV_n$ and $n\in\cN$.
\begin{proposition}\label{typ_ass_prop}
Assume that {\bf DEG} and {\bf BAL} hold. Then there exists $r_n\in\RR_{>0}$ with $r_n=o(1)$ such that
for all sequences $f_n(\acS)$ that are sublinear in the number of factors we have
\begin{flalign*}
\Erw[f_n(\acRS^*)]
&=\Erw[f_n(\acRS^*)\vecone\{\acRS^*\in\acSSpT_{n,r_n}\}]+o(1)
=\Erw[f_n(\acRS^*)|\acRS^*\in\acSSpT_{n,r_n}]+o(1)
\end{flalign*}
and the same holds for $\acRS^*$ replaced by $\acRSN$.
\end{proposition}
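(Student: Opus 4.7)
}
The plan is to reduce to typical degree sequences via Proposition \ref{dd_typ_prop}, establish exponential concentration of $\acRSN$ around $\acJAD^*$ by invoking Proposition \ref{ac_main_prop} separately on the two sides of the bipartite graph, and then transfer to $\acRS^*$ using the mutual contiguity from Section \ref{mc}. Throughout I first treat $\acRSN$ and then export the conclusion.

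\emph{Step 1 (Reduction).} By Proposition \ref{dd_typ_prop}, there is $r'_n = o(1)$ such that $\vt_n \in \cT_n^\circ$ w.h.p., and the contribution of $\vt_n \notin \cT_n^\circ$ to $\Erw[f_n(\acRS^*)]$ is $o(1)$ by sublinearity, Cauchy--Schwarz, and the Poisson tails on $\vm$. Condition on $\mcT = (m,\sd,\sk) \in \cT_n^\circ$ for the rest.

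\emph{Step 2 (Variable side of $\acRSN$).} By Proposition \ref{mcp_tails}, the colour frequency $\mcRCFRN_\mcT$ of $\mcRVAN_\mcT$ lies in any prescribed $\ell_2$-neighbourhood of $u_{\mcColSp}$ with probability $1-\exp(-\Omega(n))$, and conditional on its value $\lltCFR$ the ground truth $\mcRVAN_\mcT$ is uniform on the corresponding shell, as observed in Section \ref{mcp_main_prop_proof}. The family $(\mcVdeg_\vmInd,\mcVED_\vmInd)$ satisfies {\bf SPAN} because $\mcVED_d$ sits on the constant assignments, while {\bf GEN}, {\bf VAR}, {\bf SKEW} follow on $\cT_n^\circ$ from Section \ref{ddp_typ_prop}. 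Shrinking $r'_n$ so that $\vmIndP_{\sd}$ lies in the $\delta$-ball of Proposition \ref{ac_main_prop}, and recalling that $\acEAD_{p_\vd,u_{\mcColSp}} = \acVAD^*$, we obtain
\[
\pr\brk{\acADD\bc{\acVAF{\mcT,\mcRVAN_\mcT},\acVAD^*} \ge \varepsilon \mid \mcT} \le c'\exp(-cn)
\]
uniformly in $\mcT \in \cT_n^\circ$.

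\emph{Step 3 (Factor side of $\acRSN$).} Conditional on $\mcRVAN_\mcT = \mcVA$ with typical colour frequencies, Lemma \ref{lem_pebble_dist} identifies the law of the factor graph with the SHARP construction; in particular, $\mcRFSHEA^*_\mcT(\mcVA)$ is distributed as a sample from $\bigotimes_{i \in [m]} \mcFED_{k_i}$ conditioned on matching the variable-side colour profile, which is precisely the setting of Proposition \ref{ac_main_prop}. The strict positivity of the weights from {\bf SYM} gives {\bf SPAN}, {\bf BAL} delivers {\bf BAL$'$} so that $\mcFED_\vk\vert_* = u_{\mcColSp}$ (consistent with the lattice setup of Section \ref{mc_measures}), and {\bf GEN}, {\bf VAR}, {\bf SKEW} hold on $\cT_n^\circ$ by Section \ref{ddp_typ_prop}. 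A second application of Proposition \ref{ac_main_prop} with reference $p_\vk$ therefore yields
\[
\pr\brk{\acADD\bc{\acFAF{\mcT,\mcRFSHEA^*_\mcT(\mcRVAN_\mcT)},\acFAD^*} \ge \varepsilon \mid \mcT} \le c'\exp(-cn)
\]
uniformly in $\mcT \in \cT_n^\circ$. Together with Step 2, for a suitable $r_n = o(1)$ we have $\pr[\acRSN \in \acSSpT_{n,r_n} \mid \mcT] = 1 - \exp(-\Omega(n))$.

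\emph{Step 4 (Transfer and cleanup).} Proposition \ref{mcp_main_prop} bounds the Radon--Nikodym derivative of $\mcRVA^*$ with respect to $\mcRVAN_\mcT$ by a uniform constant on an event of arbitrarily small complement, and since conditional on the ground truth the factor half-edges have identical law under both schemes, this contiguity transfers to the triples, giving $\pr[\acRS^* \in \acSSpT_{n,r_n}] = 1 - o(1)$. Finally, $|f_n(\acRS^*)| \le c + c\vm/n$ together with $\Erw[(\vm/n)^2] = O(1)$ and Cauchy--Schwarz imply $\Erw[|f_n(\acRS^*)|\vecone\{\acRS^* \notin \acSSpT_{n,r_n}\}] = o(1)$, producing both asserted equalities; the case of $\acRSN$ is identical.

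\emph{Main obstacle.} The technical heart is Step 3: the factor-side half-edge assignment is not a plain product but a product conditioned on matching the variable-side colour profile, and only Lemma \ref{lem_pebble_dist} exposes it in a form to which Proposition \ref{ac_main_prop} applies. Additionally, verifying the hypotheses {\bf SPAN}, {\bf GEN}, {\bf VAR}, {\bf SKEW} uniformly on $\cT_n^\circ$ for the factor-side atoms $\mcFED_\vk$---and tracing the dependence of the constants in Proposition \ref{mcp_main_prop} as we transfer from $\acRSN$ to $\acRS^*$---constitutes the bulk of the technical work.
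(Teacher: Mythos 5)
Your proposal is correct and follows essentially the same route as the paper's proof: reduction to typical degree sequences via Proposition~\ref{dd_typ_prop}, exponential tail bounds for the colour frequencies, the identification of the conditional law of the half-edge assignment pair given the colour profile with independent products conditioned on their frequencies, and two applications of Proposition~\ref{ac_main_prop} (one per side), followed by the Cauchy--Schwarz cleanup using sublinearity. The only deviation is that you derive the teacher model $\acRS^*$ from $\acRSN$ by contiguity, whereas the paper treats $\acRS^*$ directly (its ground truth is a plain product measure, so Proposition~\ref{llt_tails} gives the colour-frequency tails immediately), which is marginally simpler and preserves the uniform exponential rate recorded as Proposition~\ref{typ_ass_conc}; for the $o(1)$ statement of Proposition~\ref{typ_ass_prop} itself your transfer suffices.
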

Using Proposition \ref{typ_ass_prop} we fix a suitable choice of $r_n$ and let $\acSSpT_n=\acSSpT_{n,r_n}$ denote the set of valid typical coloured sequences.
Notice that while we discuss the standard model for brevity, the entire section canonically translates to the case including dummy factors as discussed in Section \ref{mc}, where the reference assignment distribution $\acFAD^*$ is the distribution corresponding to $p_{\mathrm{k}}^\circ$.
%
%TYPICAL ASS PROOF - HALF-EDGE ASSIGNMENTS
%
\subsection{Half-edge assignments}\label{typ_assp_he}
As opposed to the definition of the teacher-student scheme and the discussion in Section \ref{typ_ass} we will work with assignments to the variable side half-edges directly, or equivalently with assignments to non-isolated variables.
While there is almost a one-to-one correspondence between assignments $\mcVA\in\mcColSp^{\mcVLSp_n}$ to the variables and the assignments $\mcVSHEA\in\mcColSp^{\mcVHESp_\mcT}$ to the half-edges given $\mcT\in\cT_n$, we discard assignments $\mcVA_{\mcVL_i}$ to isolated variables $\mcVL_i\in\mcVLSp_n$ with $\mcVdeg_i=0$, $i\in[n]$.
Hence, this transition needs to be justified.

Let $n_\mcT=\pr[\bm d_t>0]n$ denote the number of variables with non-trivial degree and
$\mcVSHEA_{\mcT,\mcVA}\in\mcColSp^{\mcVHESp_\mcT}$ for $\mcVA\in\mcColSp^{\mcVLSp_n}$ and $\mcT\in\cT_n$ be given by $\mcVSHEA_{\mcVL_i,h}=\mcVA_{\mcVL_i}$ for $i\in[n]$ and $h\in[\mcVdeg_i]$.
For $\mcVSHEA$ in the support of $\mcRVSHEA_\mcT=\mcVSHEA_{\mcT,\mcRVA^*}$ and $\mcG$ in the support of $\mcRG^*_\mcT$ the definitions of $\mcFSHEA_{\mcG,\mcVA}$, $\mcFpsi_{\mcG}(\mcVSHEA)$, $\acVAF{\mcVSHEA}$ and hence $\acJAD^*$ are completely analogous to the previous case and coincide.
However, notice that with $\mcZ'_\mcG=\sum_{\mcVSHEA}\mcFpsi_{\mcG}(\mcVSHEA)$ we have $\mcZ_\mcG=q^{n-n_\mcT}\mcZ'_{\mcG}$.
Let $\mcRG^*_\mcT(\mcVSHEA)$ be the teacher-student model with ground truth $\mcVSHEA$ (in the support of $\mcRVSHEA^*_\mcT$) be given by the Radon-Nikodym derivative $\frac{\mcFpsi_{\mcG}(\mcVSHEA)}{\Erw\left[\mcFpsi_{\mcRG_\mcT}(\mcVSHEA)\right]}$ with respect to $\mcRG_\mcT$, so $\mcRG^*_\mcT(\mcVSHEA)$ and $\mcRG^*_\mcT(\mcVA)$ have the same law for all $\mcVA\in\mcColSp^{\mcVLSp_n}$ with $\mcVSHEA=\mcVSHEA_{\mcT,\mcVA}$, implying that $\mcRFSHEA^*_\mcT(\mcVSHEA)$ and $\mcRFSHEA^*_\mcT(\mcVA)$ have the same law.
Further, with $\mcRVSHEAN_\mcT=\mcVSHEA_{\mcT,\mcRVAN_\mcT}$ we have
\begin{flalign*}
\pr[\mcRVSHEA^*_{\mcT}=\mcVSHEA]=q^{n-n_\mcT}q^{-n}=q^{-n_\mcT}\textrm{, }
\pr[\mcRVSHEAN_\mcT=\mcVSHEA]
=q^{n-n_\mcT}\frac{\Erw[\mcFpsi_{\mcRG_\mcT}(\mcVSHEA)]}{q^{n-n_\mcT}\Erw[\mcZ'_{\mcRG_\mcT}]}
=\frac{\Erw[\mcFpsi_{\mcRG_\mcT}(\mcVSHEA)]}{\Erw[\mcZ'_{\mcRG_\mcT}]}\textrm{,}
\end{flalign*}
i.e.~consistent definitions of $\mcRVSHEA^*_{\mcT}$ and $\mcRVSHEAN$.
The remaining notions directly translate,
hence with the discussion above it is obvious that Proposition \ref{typ_ass_prop} holds if and only if it holds on the half-edge level.
%
%TYP ASS PROOF - PROOF STRATEGY
%
\subsection{Proof strategy}\label{typ_assp_ps}
We start with the main result that yields Proposition \ref{typ_ass_prop} as a corollary.
\begin{proposition}\label{typ_ass_conc}
Assume that {\bf DEG} and {\bf BAL} hold. For all $\varepsilon\in\RR_{>0}$ there exist constants $c$, $c'\in\RR_{>0}$ such that the following holds.
For all $n\in\cN$ and all $\mcT\in\cT^\circ_n$ we have
\begin{flalign*}
\pr\left[\acADPD(\acJAD_{\acRS^*_\mcT},\acJAD^*)\ge\varepsilon\right]\le c'\exp(-cn)\textrm{, }
\pr\left[\acADPD(\acJAD_{\acRSN_\mcT},\acJAD^*)\ge\varepsilon\right]\le c'\exp(-cn)\textrm{.}
\end{flalign*}
\end{proposition}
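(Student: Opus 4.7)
The plan is to treat the variable-side and factor-side components of $\acJAD$ separately, and to handle the uniform ground truth $\mcRVA^*$ first and the Nishimori ground truth $\mcRVAN_\mcT$ afterwards via mutual contiguity. By Section \ref{typ_assp_he} we may work on the half-edge level throughout, so that $\mcRVA^*$ corresponds to a uniform draw $\mcRVSHEA^*_\mcT\in\mcColSp^{\mcVHESp_\mcT}$ subject to all clones of a given variable sharing the same colour, and likewise for $\mcRVAN_\mcT$. Throughout, all constants will be traced to depend only on the ambient model parameters and the radius $r_n$ defining $\cT^\circ_n$, so bounds are uniform in $\mcT\in\cT^\circ_n$.

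For the variable side under $\mcRVA^*$, the indicators $\vecone\{\mcVdeg_i=\mcVdeg,\mcRVA^*_{\mcVL_i}=\mcCol\}$ are independent across $i\in[n]$ with mean $p_{\mathrm{d},\mcT}(\mcVdeg)/q$. Since $\mcT\in\cT^\circ_n$ guarantees that $\vmIndP_{\mcVSInd}$ is close to $p_{\mathrm{d}}$ in the metric $\vmIndPD$ and hence that the degree distribution has the tail control built into Section \ref{ddp_typ_prop}, Hoeffding's inequality applied degree-by-degree up to a truncation level $\mcVdeg\le\mcVdeg^*$ (with a contribution $\le\varepsilon/3$ from degrees $\mcVdeg>\mcVdeg^*$ by the uniform $(2+\varepsilon_{\mathrm{deg}})$-th moment bound) produces a uniform exponential bound $\pr[\acADD(\acVAF{\mcT,\mcRVA^*},\acVAD^*)\ge\varepsilon]\le c'\exp(-cn)$.

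For the factor side under $\mcRVA^*$, Lemma \ref{lem_pebble_dist} together with the SHARP construction identifies the conditional law of the factor-side half-edge assignment $\mcRFSHEA^*_\mcT(\mcRVA^*)$, given the half-edge colour frequencies $\lltCFR=\lltRCFR^*_\mcT$, with the distribution $\lltRHEA_{\mcFSInd^\circ,\lltCFR}$ from Section \ref{assd}; here we pass to the dummy-factor extension $\mcT^\circ$ of Section \ref{mc_degseq}, whose effective factor law is $p_{\mathrm{k}}^\circ$. Proposition \ref{llt_tails} provides $\pr[\|\lltCFR-u_{\mcColSp}\|_1\ge\delta]\le c'\exp(-cn)$ uniformly, so $\lltCFR\in\cB_\delta(\lltECFR_{p_{\mathrm{k}}^\circ})$ up to an exponentially small event. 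Since $\mcT\in\cT^\circ_n$ also puts $\vmIndP_{\mcFSInd^\circ}$ in $\cB_\delta(p_{\mathrm{k}}^\circ)$, and since {\bf DEG} together with the unary dummy factor $\mcFInd^\circ$ secures {\bf SPAN}, {\bf GEN}, {\bf VAR}, {\bf SKEW} on the factor side, while {\bf BAL} translates into {\bf BAL'} for $\mcFED_{\mcFdeg}$, Proposition \ref{ac_main_prop} applies uniformly and yields $\pr[\acADD(\acFAF{\mcT,\mcRFSHEA^*_\mcT(\mcRVA^*)},\acFAD^*)\ge\varepsilon]\le c'\exp(-cn)$ after absorbing the colour-frequency event. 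Combining the two sides through $\acADPD$ settles the claim for $\acRS^*_\mcT$.

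To transfer to $\mcRVAN_\mcT$, Proposition \ref{mcp_main_prop} supplies, for any prescribed tolerance, a balanced event $\cE_\mcT\subseteq\mcColSp^{\mcVLSp_n}$ and a uniform constant $c^*\in(0,1)$ with $\pr[\mcRVA^*=\mcVA]/\pr[\mcRVAN_\mcT=\mcVA]\in(c^*,c^{*-1})$ on $\cE_\mcT$, while Proposition \ref{mcp_tails} upgrades $\pr[\mcRVAN_\mcT\notin\cE_\mcT]$ to be exponentially small. For the variable side, the change-of-measure bound immediately transports the exponential bound from $\mcRVA^*$ to $\mcRVAN_\mcT$. For the factor side, the decisive observation is that the conditional law of $\mcRFSHEA^*_\mcT(\mcVA)$ given the half-edge colour frequencies depends on $\mcVA$ only through those frequencies, so the factor-side argument applies verbatim once the analogous tail bound from Proposition \ref{mcp_tails} replaces Proposition \ref{llt_tails}. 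The main technical obstacle is precisely this uniformity step: verifying that {\bf SPAN}, {\bf GEN}, {\bf VAR}, {\bf SKEW} and {\bf BAL'} hold for the factor-side dummy-extended sequences with constants independent of $\mcT\in\cT^\circ_n$, so that Proposition \ref{ac_main_prop} (and, for the Nishimori version, Proposition \ref{mcp_tails}) may be invoked with $\mcT$-independent exponents; this is precisely what the typicality radius $r_n$ and the unary-factor construction of Section \ref{mc_degseq} were designed to achieve.
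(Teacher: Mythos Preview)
Your proposal is correct, but it takes a somewhat hybrid route compared to the paper. The paper's argument is more uniform: it proves Lemma~\ref{typ_ass_tails} (colour-frequency concentration for both $\mcRCFR^*_\mcT$ and $\mcRCFRN_\mcT$, the latter via Proposition~\ref{mcp_tails}) and Lemma~\ref{typ_ass_center}, whose key observation is that \emph{both} $(\mcRVSHEA^*_\mcT,\mcRFSHEA^*_\mcT(\mcRVSHEA^*_\mcT))$ and $(\mcRVSHEAN_\mcT,\mcRFSHEA^*_\mcT(\mcRVSHEAN_\mcT))$ have \emph{the same} conditional law given the colour frequencies $\mcCFA$ (since $\mcRVAN_\mcT$ given its frequencies is uniform, as noted in Section~\ref{mcp_main_prop_proof}). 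Proposition~\ref{ac_main_prop} is then applied once, simultaneously to the variable and factor side and to both models. Your approach instead treats the variable side under $\mcRVA^*$ by a direct Hoeffding argument (which is legitimate and more elementary, since $\mcVED_\mcVdeg$ is supported on constant sequences), and transfers to $\mcRVAN_\mcT$ on the variable side via the bounded Radon--Nikodym derivative from Proposition~\ref{mcp_main_prop} together with the exponential complement bound from Proposition~\ref{mcp_tails}. This works, but the paper's exact conditional-law equality makes the Nishimori case a verbatim repetition of the $\mcRVA^*$ case rather than a separate change-of-measure step, and also avoids splitting the variable and factor sides into different arguments.
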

The proof of Proposition \ref{typ_ass_conc} is split into two parts.
In the first part we show that the colour frequencies in both models are close to uniform with very high probability.
In the second part we show that for colour frequencies sufficiently close to uniform the assignment distributions are indeed very close to the reference with very high probability.
\begin{lemma}\label{typ_ass_tails}
Assume that {\bf DEG} and {\bf BAL} hold. Then there exist constants $c$, $c'\in\RR_{>0}$ such that the following holds. For all $n\in\cN$, all $\mcT\in\cT^\circ_n$ and all $\varepsilon\in\RR_{>0}$ we have
\begin{flalign*}
\pr\left[\left\|\mcRCFR^*_\mcT-u_{\mcColSp}\right\|_1\ge\varepsilon\right]\le c'\exp(-c\varepsilon^2n)\textrm{, }
\pr\left[\left\|\hat{\bm\rho}_\mcT-u_{\mcColSp}\right\|_1\ge\varepsilon\right]\le c'\exp(-c\varepsilon^2n)\textrm{.}
\end{flalign*}
\end{lemma}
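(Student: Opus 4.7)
The plan is to split the assertion into its two independent tail bounds. The first, on $\mcRCFR^*_\mcT$, is a textbook Hoeffding computation because the uniform ground truth decouples the variables. The second, on $\hat{\bm\rho}_\mcT$, will be imported wholesale from Proposition~\ref{mcp_tails} once we check that the family $\cT^\circ_n$ fits into its hypothesis {\bf MC}.

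For $\mcRCFR^*_\mcT$, observe that under the uniform law of $\mcRVA^*$ the spins $(\mcRVA^*_{\mcVL_i})_{i\in[n]}$ are i.i.d.\ uniform on $\mcColSp$, so for each $\mcCol\in\mcColSp$,
\begin{align*}
\mcCFA_{\mcT,\mcRVA^*}(\mcCol)=\sum_{i\in[n]}\mcVdeg_i\vecone\{\mcRVA^*_{\mcVL_i}=\mcCol\}
\end{align*}
is a sum of $n$ independent random variables in $[0,\mcVdeg_i]$ with mean $\mcVdeg_i/q$. Section~\ref{ddp_typ_prop} furnishes a uniform bound $\sum_{i\in[n]}\mcVdeg_i^2\le E^{(2)}n$ and $\mcVDTot_\mcT=\Theta(n)$ over $\mcT\in\cT^\circ_n$, so a direct application of Hoeffding, followed by a union bound over $\mcCol$ and the trivial $\ell_1$–$\ell_\infty$ inequality, delivers the desired tail with uniform constants $c,c'$.

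For $\hat{\bm\rho}_\mcT=\mcRCFRN_\mcT$, the task is to verify that $\cT^\circ_n$ satisfies {\bf MC}. The uniform lower bounds $p_{\mathrm{d},t}(d),p_{\mathrm{k},t}(k)\ge\text{const}$ on any finite subset of the supports $\cD,\cK$ furnished by Section~\ref{ddp_typ_prop} supply {\bf GEN} together with a spanning set on each side; the uniform second moment bound supplies {\bf VAR}; and the uniform estimate $\Erw[\vd_t^3],\Erw[\vk_t^3]=O(n^{(3-\alpha)/\alpha})$ with $\alpha=2+\varepsilon_{\mathrm{deg}}$ (again from Section~\ref{ddp_typ_prop}) supplies {\bf SKEW}, since $(3-\alpha)/\alpha<1/2$ guarantees $\thiB_n=o(\sqrt{n/\log(n)^3})$. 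Hypothesis {\bf BAL} translates verbatim into {\bf BAL'} for the averaged weight densities $\mcFED_\mcFdeg$ thanks to the equivalent formulation at the uniform distribution. Proposition~\ref{mcp_tails} then yields constants so that $\pr[\|\mcRCFRN_\mcT-u_{\mcColSp}\|_2\ge r]\le c'\exp(-cr^2n)$ uniformly for $\mcT\in\cT^\circ_n$, and converting $\ell_2\to\ell_1$ via $\|\cdot\|_1\le\sqrt q\,\|\cdot\|_2$ closes the case.

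The only delicate point—really the only obstacle—is uniformity of all constants over the \emph{set} $\cT^\circ_n$ rather than along a single sequence $(\mcT_n)_n$. This is exactly what the typicality conditions carving out $\cT^\circ_n$ were designed to provide: the parameters $\epsGen,\secB,\thiB_n$ certifying {\bf GEN}–{\bf SKEW} may be chosen as functions of the global parameters $q,\vd,\vk$ and the constants in {\bf DEG}, {\bf BAL}, not of the particular $\mcT$. Taking the smaller of the two exponential rates and the larger of the two prefactors obtained in the two cases produces the constants $c,c'$ claimed in the lemma.
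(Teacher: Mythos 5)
Your proposal is correct and follows essentially the same route as the paper: the paper handles the first bound by citing Proposition~\ref{llt_tails}, whose proof is exactly the Hoeffding computation you write out, and handles the second by invoking Proposition~\ref{mcp_tails} after noting that $\cT^\circ_n$ meets the hypotheses of Sections~\ref{llt} and~\ref{mc} (your explicit verification of {\bf MC} just spells out what the paper records in Section~\ref{mc_degseq}). The final step of taking the worse of the two pairs of constants is identical.
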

With the tail bounds in place we can focus on the center, i.e.~colour frequencies close to uniform.
\begin{lemma}\label{typ_ass_center}
Assume that {\bf DEG} and {\bf BAL} hold. Then for all $\varepsilon\in\RR_{>0}$ there exist $\delta$, $c$, $c'\in\RR_{>0}$ such that the following holds. For all $n\in\cN$ and all $\mcT\in\cT^\circ_n$ we have
\begin{flalign*}
\pr\left[\acADPD(\acJAD_{\acRS^*_\mcT},\acJAD^*)\ge\varepsilon\middle\vert\mcRCFR^*_\mcT\in\cB_\delta(u_{\mcColSp})\right]\le c'\exp(-cn)
\end{flalign*}
and the same holds for $\acRS^*_\mcT$, $\mcRCFR^*_\mcT$ replaced by $\acRSN_\mcT$, $\mcRCFRN_\mcT$.
\end{lemma}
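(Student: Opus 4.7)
The plan is to reduce to Proposition \ref{ac_main_prop} applied separately on the variable and factor sides with the uniform prior $\vmPrior = u_{\mcColSp}$. By the triangle inequality and the definition of $\acADPD$ as a sum, the event $\{\acADPD(\acJAD_{\acRS^*_\mcT},\acJAD^*)\ge\varepsilon\}$ is contained in the union $\{\acADD(\acVAF{\mcT,\mcRVA^*},\acVAD^*)\ge\varepsilon/2\}\cup\{\acADD(\acFAF{\mcT,\mcRFSHEA^*_\mcT(\mcRVA^*)},\acFAD^*)\ge\varepsilon/2\}$, so it suffices to bound each side separately.

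For the variable side: the uniform teacher $\mcRVA^*$ assigns each variable $\mcVL_i$ an independent uniform colour, so the half-edge profile of $\mcVL_i$ is a sample from $\mcVED_{\mcVdeg_i}$, and the unconditional law of the half-edge assignment is $\bigotimes_i \mcVED_{\mcVdeg_i}$. Hence, conditional on $\mcRCFR^*_\mcT = \lltCFR$, the distribution $\acVAF{\mcT,\mcRVA^*}$ coincides with $\acRAD_{\mcVSInd,\lltCFR}$ of Section \ref{assd} (with implicit $\vmPrior=u_{\mcColSp}$). I would invoke Proposition \ref{ac_main_prop} with reference $\vmIndP=\acVdegP$: {\bf SPAN} is immediate from the explicit definition of $\mcVED_d$; the moment conditions {\bf GEN}, {\bf VAR}, {\bf SKEW} on $(\mcVSInd)_{\mcT\in\cT_n^\circ}$ follow from {\bf DEG} together with the uniform bounds recorded in Section \ref{ddp_typ_prop}; the closeness $\vmIndP_{\mcVSInd}\in\cB_\delta(\acVdegP)$ is built into $\cT_n^\circ$; and $\lltECFR_{\acVdegP}=u_{\mcColSp}$ since the marginals of $\mcVED_d$ are trivially uniform.

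For the factor side: the construction \textbf{SHARP1}–\textbf{SHARP3} of Section \ref{Sec_Noela} shows that the conditional law of $\mcRFSHEA^*_\mcT(\mcRVA^*)$ given $\mcRVA^*$ is exactly the law of $\vy^\sharp$, namely independent samples from $(\mcFED_{\mcFdeg_i})_i$ conditioned on matching the variable-side colour frequencies. Thus this conditional law depends on $\mcRVA^*$ only through $\mcCFA_{\mcT,\mcRVA^*}$, and $\acFAF{\mcT,\mcRFSHEA^*_\mcT(\mcRVA^*)}$ conditioned on $\mcRCFR^*_\mcT=\lltCFR$ has the law of $\acRAD_{\mcFSInd,\lltCFR}$. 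Proposition \ref{ac_main_prop} with $\vmIndP=\acFdegP$ then applies: {\bf SPAN} holds because {\bf SYM} together with strict positivity of the weight functions forces $\mcFED_k(\omega 1_{[k]})>0$; {\bf GEN}, {\bf VAR}, {\bf SKEW} follow from {\bf DEG} applied to $\vk$ combined with $m=\Theta(n)$ for $\mcT\in\cT_n^\circ$; $\vmIndP_{\mcFSInd}\in\cB_\delta(\acFdegP)$ is again built into $\cT_n^\circ$; and {\bf SYM} forces the marginals of $\mcFED_k$ to be uniform, giving $\lltECFR_{\acFdegP}=u_{\mcColSp}$. The Nishimori case follows from the observation (already exploited in Section \ref{mcp_main_prop_proof}) that $\mcRVAN_\mcT$ given its colour frequency is uniform over assignments with that frequency—identical in law to $\mcRVA^*$ given its frequency—and the factor-side draw $\mcRFSHEA^*_\mcT(\sigma)$ depends on $\sigma$ only through $\mcCFA_{\mcT,\sigma}$; hence the conditional law of $\acRSN_\mcT$ given $\mcRCFRN_\mcT=\lltCFR$ agrees with that of $\acRS^*_\mcT$ given $\mcRCFR^*_\mcT=\lltCFR$, and (b) reduces to (a).

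The main obstacle will be ensuring that the constants $c,c'$ from Proposition \ref{ac_main_prop} can be chosen uniformly over $\mcT\in\cT_n^\circ$. This requires verifying that a single sequence satisfying {\bf GEN}, {\bf VAR}, {\bf SKEW} can be chosen so that every $\mcT\in\cT_n^\circ$ fits—which is the reason {\bf DEG} is formulated with the margin $\varepsilon_{\mathrm{deg}}>0$ in the third-moment condition, and why Section \ref{ddp_typ_prop} pushes through the uniform bound $\Erw[\vd_t^3]\le cn^\beta$ with $\beta<1/2$. Modulo this bookkeeping, which is routine given the toolkit already assembled, the two applications of Proposition \ref{ac_main_prop} deliver the lemma.
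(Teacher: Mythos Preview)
Your proposal is correct and follows essentially the same route as the paper: split the event via an $\varepsilon/2$ argument into the variable and factor side contributions, observe that conditional on the colour frequency both the $*$-model and the Nishimori model induce the same law (namely the product measures $\bigotimes_i\mcVED_{\mcVdeg_i}$ and $\bigotimes_i\mcFED_{\mcFdeg_i}$ conditioned on hitting that frequency), and apply Proposition~\ref{ac_main_prop} on each side with reference distributions $\acVdegP$ and $\acFdegP$. One small remark: you invoke {\bf SYM} to secure {\bf SPAN} and the uniform marginals of $\mcFED_k$, but the lemma is stated under {\bf DEG} and {\bf BAL} only; strict positivity of the weight functions (built into the definition $\mcFPsi_k=\RR_{>0}^{\mcColSp^k}$) already gives {\bf SPAN}, and {\bf BAL} alone forces $u_{\mcColSp}$ to be a stationary point of $\mu\mapsto\sum_\sigma\mcFEpsi_k(\sigma)\prod_h\mu(\sigma_h)$, whence the marginals of $\mcFED_k$ are uniform---this is how the paper derives $\lltECFR_{\acFdegP}=u_{\mcColSp}$ in Section~\ref{mcp_first_moment_proof}.
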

Proposition \ref{typ_ass_conc} is then an almost immediate consequence of Lemma \ref{typ_ass_tails} and Lemma \ref{typ_ass_center}.
%
%TYP ASS PROOF - TAILS
%
\subsection{Proof of Lemma \ref{typ_ass_tails}}\label{typ_assp_tails}
Notice that $\mcRVSHEA^*_{\mcT}\sim\bigotimes_{i\in[n]}\nu_{d_{i}}$ and recall that $\cT^\circ_n$ satisfies the assumptions in Section \ref{llt} on both the variable and the factor side.
Hence, Proposition \ref{llt_tails} yields constants $c$, $c'\in\RR_{>0}$ such that
\begin{flalign*}
\pr[\|\mcRCFR^*_\mcT-u_{\mcColSp}\|_1\ge\varepsilon]\le c'\exp(-c\varepsilon^2n)
\end{flalign*}
for all $n\in\cN$, $\mcT\in\cT^\circ_n$ and $\varepsilon\in\RR_{\ge 0}$.
Proposition \ref{mcp_tails} yields constants $c$, $c'\in\RR_{>0}$ such that the assertion holds for $\mcRCFRN_\mcT$.
Taking the maximum $c'$ and minimum $c$ completes the proof.
%
%TYP ASS PROOF - CENTER
%
\subsection{Proof of Lemma \ref{typ_ass_center}}\label{typ_assp_center}
Fix $\mcT=(m,\sd,\sk)\in\cT^\circ_n$, recall that $\mcRVSHEA^*_{\mcT}\sim\bigotimes_{i\in[n]}\nu_{\mcVdeg_{i}}$ and let $\mcRFSHEA^*_\mcT\sim\bigotimes_{i\in[m]}\mu_{\mcFdeg_{i}}$ be independent of anything else.
For $\mcCFA$ in the support of $\mcRCFA^*_\mcT$ let $\mcRVSHEA_{\mcT,\mcCFA}=(\mcRVSHEA^*_{\mcT}|\mcRCFA^*_\mcT=\gamma)$ and $\mcRFSHEA_{\mcT,\mcCFA}=(\mcRFSHEA^*_\mcT|\mcCFA_{\mcRFSHEA^*_{\mcT}}=\mcCFA)$ denote the half-edge assignments on the variable side and factor side for given $\mcCFA$ (with $\gamma_{\mcFSHEA}$ denoting the colour frequencies of $\mcFSHEA$, as introduced in Section \ref{mc}).
Notice that both $(\mcRVSHEA^*_{\mcT},\mcRFSHEA^*_\mcT(\mcRVSHEA^*_{\mcT}))|\mcRCFA^*_\mcT=\mcCFA$ and $(\mcRVAN_\mcT,\mcRFSHEA^*_\mcT(\mcRVAN_\mcT))|\mcRCFAN_\mcT=\mcCFA$ have the same law as $(\mcRVSHEA_{\mcT,\mcCFA},\mcRFSHEA_{\mcT,\mcCFA})$.

With the results in Section \ref{assd} we obtain $\delta$, $c$, $c'\in\RR_{>0}$ such that for all $n\in\cN$, $\mcT\in\cT^\circ_n$ with $p_{\mathrm{d},t}\in\cB_\delta(p_{\mathrm{d}})$ and $p_{\mathrm{k},t}\in\cB_\delta(p_{\mathrm{k}})$ (with respect to the corrsesponding metric) and attainable $\mcCFR\in\cB_\delta(u_{\mcColSp}$ with $\mcCFA=\Erw[\mcRVdeg_\mcT]n\mcCFR$ we have
\begin{flalign*}
\pr\left[\acADPD(\acVAF{\mcT,\mcRVSHEA_{\mcT,\mcCFA}},\acVAD^*)\ge\varepsilon\right]\le c'\exp(-cn)\textrm{.}
\end{flalign*}
Further, a corresponding result holds on the factor side. By weakening the constants and using $m\sim\vm_n$ uniformly we obtain $\delta$, $c$, $c'\in\RR_{>0}$ to obtain uniform exponential tail bounds on both sides.
Further, since we have $(p_{\mathrm{d},t},p_{\mathrm{k},t})\rightarrow(p_{\mathrm{d}},p_{\mathrm{k}})$ uniformly in $\mcT\in\cT^\circ_n$ the assumptions $p_{\mathrm{d},t}\in\cB_\delta(p_{\mathrm{d}})$ and $p_{\mathrm{k},t}\in\cB_\delta(p_{\mathrm{k}})$ are redundant for sufficiently large $n$.
By readjusting the leading coefficient $c'$ the tail bounds are trivial for small $n$. Finally, the assertion follows from an $\varepsilon/2$ argument.
\subsection{Proof of Proposition \ref{typ_ass_conc}}
Using Lemma \ref{typ_ass_center} we obtain uniform exponential tail bounds for the center, i.e.~restricted to $\mcRCFR^*_\mcT\in\cB_\delta(u_{\mcColSp})$ and $\mcRCFRN_\mcT\in\cB_\delta(u_{\mcColSp})$ respectively for some $\delta\in\RR_{>0}$. With Lemma \ref{typ_ass_tails} we then obtain exponential tail bounds for $\mcRCFR^*_t\not\in\cB_\delta(u_{\mcColSp})$ and $\mcRCFRN_t\not\in\cB_\delta(u_{\mcColSp})$ respectively, which immediately yield the assertion by splitting the probability into the two regimes and weakening the constants.
\subsection{Proof of Proposition \ref{typ_ass_prop}}
With Proposition \ref{typ_ass_conc} we can construct a sequence $r_n\in\RR_{>0}$, $n\in\cN$, with $r_n=o(1)$ such that uniformly in $\mcT\in\cT^\circ_n$ we have $\acRS^*_\mcT\in\acSSpT_{n,r_n}$ and $\acRSN_\mcT\in\acSSpT_{n,r_n}$ with high probability.
Following the proof of Proposition \ref{dd_typ_prop} we can restrict to $\mcT\in\cT^\circ_n$, and since $f$ is bounded on this subset the assertion follows from the above.
\section{Proof of \Prop~\ref{Prop_IZ}}\label{mi}

This section is dedicated to the proof of Proposition \ref{Prop_IZ}.
The arguments rely on the results in Section \ref{dd} and Section \ref{mc}.
Hence, we tacitly assume that the assumptions {\bf DEG}, {\bf BAL} and further {\bf SYM} are satisfied.
Notice that {\bf DEG} implies the corresponding assumptions in Section \ref{mc} and {\bf BAL} implies {\bf BAL'}.

Recall the valid numbers $\cN$ of variables and for $n\in\cN$ the valid degree sequences $\cT_n$.
For $n\in\cN$, $t\in\cT_n$ and $\mcVA\in\mcColSp^{\mcVLSp_n}$ let $r^*_{\mcVA}$ denote the Radon-Nikodym derivative $\mcG\mapsto\mcFpsi_\mcG(\mcVA)/\mcFEpsi_t(\mcVA)$, $\mcG$ in the support of $\mcRG_t$, of the teacher-student scheme $\mcRG^*_t(\mcVA)$ with respect to the null model $\mcRG_t$, where $\mcFEpsi_\mcT=\Erw[\mcFpsi_{\mcRG_t}]$ denotes the expected total weight. Further, let $r^*$ denote the derivative $\mcG\mapsto\Erw[r^*_{t,\mcRVA^*}]$ of $\mcRG^*_t(\mcRVA^*)$ with respect to $\mcRG_t$. Notice that we can keep the dependence of $r^*_{\mcVA}$ and $r^*$ on $t$ implicit since $t$ is determined by $\mcG$, i.e.~the sets of factor graphs for distinct degree sequences are disjoint.
The mutual information given $t$ and the unconditional mutual information are given by
\begin{flalign*}
I(t)=I(\mcRVA^*,\mcRG^*_\mcT(\mcRVA^*))=\Erw\left[\ln\left(\frac{r^*_{\mcRVA^*}(\mcRG^*_\mcT(\mcRVA^*))}{r^*(\mcRG^*_\mcT(\mcRVA^*))}\right)\right]\textrm{, }
I=I(\mcRVA^*,\mcRG^*_{\vt_n}(\mcRVA^*))=\Erw\left[I(\vt_n)\right]\textrm{.}
\end{flalign*}
We obtain the following proposition as a corollary. For this purpose recall the notions from Section \ref{dd}, Section \ref{mc} and let $\Lambda(x)=x\ln(x)$.
\begin{proposition}\label{mi_prop}
Under {\bf DEG}, {\bf BAL} and {\bf SYM} we have
\begin{flalign*}
\frac{1}{n}I(\mcRVA^*,\mcRG^*_{\vt_n}(\mcRVA^*))
&=\ln(q)+\Erw\left[\frac{\d}{\k\mcFExi_{\vk}q^{\vk}}\sum_{\mcFHEA\in\mcColSp^{\vk}}\Lambda(\mcFRpsi_{\vk}(\mcFHEA))\right]-\frac{1}{n}\Erw\left[\ln\left(\mcZ_{\mcRG^*_{\vt_n}(\mcRVA^*)}\right)\right]+o(1)\textrm{.}
\end{flalign*}
\end{proposition}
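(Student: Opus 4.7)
The plan is to unfold the mutual information into explicit terms involving $\ln\psi_{\mcRG^*}(\mcRVA^*)$, the null-model partition density $W_t(\sigma) := \Erw[\psi_{\mcRG_t}(\sigma)]$, and $\ln Z_{\mcRG^*}$; cancel the $W_t$-terms using \SYM\ together with the typicality results of Sections \ref{mc} and \ref{typ_ass}; and finally evaluate the remaining single-factor contribution via a direct Nishimori-style calculation.

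Expanding $r^*_\sigma(G) = \psi_G(\sigma)/W_t(\sigma)$ and $r^*(G) = q^{-n}\sum_\tau r^*_\tau(G)$ in the definition of the mutual information,
\begin{align*}
I(t) = \Erw[\ln\psi_{\mcRG^*_t(\mcRVA^*)}(\mcRVA^*)] + n\ln q - \Erw[\ln W_t(\mcRVA^*)] - \Erw\left[\ln\sum_\tau \frac{\psi_{\mcRG^*_t(\mcRVA^*)}(\tau)}{W_t(\tau)}\right].
\end{align*}
Under \SYM, $W_t(\sigma)$ depends on $\sigma$ only through its colour histogram, and via the identity $W_t(\sigma) = \pr[\mcRVAN_t = \sigma]\,\Erw[Z_{\mcRG_t}]$, \Prop~\ref{mcp_first_moment} together with the local limit theorem \Prop~\ref{mcp_llt} yields $W_t(\sigma) = \xi^{m_t}(1+o(1))$ uniformly for $\sigma$ whose half-edge colour frequencies lie in a ball $\cB_{r_n}(u_\Omega)$ with $r_n = o(1)$. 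Both $\mcRVA^*$ and any Boltzmann sample $\mcRVAPost_{\mcRG^*_t(\mcRVA^*)}$ fall into this window with probability $1-o(1)$ by \Prop~\ref{typ_ass_prop} and \Prop~\ref{mcp_main_prop}. Outside, $\ln W_t$ remains $O(m_t)$ whereas the complementary events carry mass $e^{-\Omega(n)}$ by \Prop~\ref{mcp_tails}, so the two $m_t\ln\xi$ contributions cancel and
\begin{align*}
I(t) = n\ln q + \Erw[\ln\psi_{\mcRG^*}(\mcRVA^*)] - \Erw[\ln Z_{\mcRG^*}] + o(n).
\end{align*}

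For the surviving weight term, decompose $\ln\psi_{\mcRG^*}(\mcRVA^*) = \sum_{i=1}^{m_t}\ln\psi_{a_i}(\mcRVA^*_{\partial a_i})$ and condition on $\mcRVA^* = \sigma$. Since under the null model the weight functions are independent of the pairing $\mcBij$,
\begin{align*}
\Erw_{\mcRG^*|\sigma}[\ln\psi_{a_i}(\sigma_{\partial a_i})] = \frac{\Erw_{\mcBij}\!\left[\Erw_{\vpsi_{k_i}}[\Lambda(\vpsi_{k_i}(\sigma_{\partial a_i}))]\prod_{j\neq i}\bar\psi_{k_j}(\sigma_{\partial a_j})\right]}{\Erw_{\mcBij}\!\left[\prod_j\bar\psi_{k_j}(\sigma_{\partial a_j})\right]}.
\end{align*}
By the same concentration used in the previous step, the product $\prod_{j\neq i}\bar\psi_{k_j}(\sigma_{\partial a_j})$ pulls out of numerator and denominator as $\xi^{m_t-1}(1+o(1))$ for typical $\sigma$. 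Moreover, by \SYM\ the marginal distribution of $\sigma_{\partial a_i}$ under the uniform pairing is asymptotically uniform on $\Omega^{k_i}$, so the ratio reduces to $(q^{k_i}\mcFExi_{k_i})^{-1}\Erw[\sum_\tau\Lambda(\vpsi_{k_i}(\tau))]$. Summing over $i$ and substituting $m_t/n = \Erw[\vd]/\Erw[\vk] + o(1)$ via \Prop~\ref{dd_typ_prop} produces the advertised per-factor term.

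The main technical obstacle is the cancellation step: one needs the pointwise equality $W_t(\sigma) = \xi^{m_t}(1+o(1))$ with a quantitative error, and an $O(m_t)$ envelope on $\ln W_t$ in the tails so that expectations of $\ln W_t$ with respect to both $\mcRVA^*$ and the Boltzmann sample are legitimate to manipulate. This requires the full strength of the local limit theorem \Prop~\ref{mcp_llt}, the subgaussian tail bound \Prop~\ref{mcp_tails}, and the transfer of typicality from $\mcRVA^*$ to Boltzmann samples via \Prop~\ref{mcp_main_prop} and \Prop~\ref{typ_ass_prop}; an $\eps/2$-truncation argument at the typicality radius $r_n$ stitches these ingredients together.
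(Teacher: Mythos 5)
Your algebraic unfolding of $I(t)$ is correct and is equivalent to the paper's conditional-entropy decomposition once you substitute $W_t(\sigma)=\pr[\mcRVAN_t=\sigma]\cdot\Erw[Z_{\mcRG_t}]$: after cancelling the $\bar Z_t$'s, the four-term formula becomes $n\ln q + \Erw[\ln\psi_{\mcRG^*}(\mcRVA^*)] - \Erw[\ln Z_{\mcRG^*}] + n\,i^*_{\mathrm{err}}(t)$, where the residual $n\,i^*_{\mathrm{err}}(t) = \Erw[\ln r_t(\mcRVA^*)] - \Erw[\ln \Erw_{\mcRVAPost\sim\mu_{\mcRG^*}}[r_t(\mcRVAPost)]]$ is exactly the expected Kullback--Leibler divergence between the true posterior $\pr[\mcRVA^*=\cdot\mid\mcRG^*]$ and the Boltzmann measure $\mu_{\mcRG^*}$.

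The gap is in how you dispose of this residual. You claim the ``two $m_t\ln\xi$ contributions cancel'' because $\mcRVA^*$ and Boltzmann samples are typical with probability $1-o(1)$ and $\ln W_t$ has an $O(m_t)$ envelope with $e^{-\Omega(n)}$-mass tails. But the term $\Erw\bigl[\ln\sum_\tau\psi_{\mcRG^*}(\tau)/W_t(\tau)\bigr]$ is a \emph{log of a sum}, and is therefore not linear in the Boltzmann mass of $\tau$. Pointwise, $1/W_t(\tau)$ can be as large as $\varepsilon^{-m_t}=e^{\Theta(n)}$ for $\tau$ with atypical colour frequency, while the typicality bounds only give $\mu_{\mcRG^*}(\text{atypical})=e^{-\Omega(n)}$ with an unspecified constant; there is nothing in what you cite that guarantees the decay rate beats the blow-up. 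Also, your claim $W_t(\sigma)=\xi^{m_t}(1+o(1))$ for typical $\sigma$ is overstated — it is only $\Theta(\xi^{m_t})$, with a $\Theta(1)$ factor depending on the colour frequency — though since the needed statement is $\ln W_t(\sigma)=m_t\ln\xi+o(n)$, that imprecision alone is harmless. What is missing is the device the paper uses: recognise that $n\,i^*_{\mathrm{err}}(t)\ge 0$ (it is a KL divergence) for one inequality, and for the other apply Jensen's inequality \emph{inside} $\ln r_{\mcG}(\sigma)=\ln\Erw_{\mcRVAPost\sim\mu_{\mcG}}[W_t(\sigma)/W_t(\mcRVAPost)]$ to push the log through, producing $i^*_{\mathrm{err}}(t)\le \delta^*_0(t)-\delta^*_1(t)$ with $\delta^*_j$ each of the form $\frac{1}{n}\Erw[\ln r_t(\text{random assignment})]$. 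Only after this linearisation do the pointwise $O(1)$ bound from \Prop~\ref{mcp_main_prop} on the set $\cE_n$, the envelope $|\frac{1}{n}\ln r_t|\le c^*$, and the high-probability window suffice (the $\varepsilon/2$-truncation you allude to is then legitimate). A secondary weak point is the computation of $\Erw[\ln\psi_{\mcRG^*}(\mcRVA^*)]$: the product $\prod_{j\neq i}\bar\psi_{k_j}(\sigma_{\partial a_j})$ does \emph{not} equal $\xi^{m_t-1}(1+o(1))$ pointwise (its multiplicative fluctuations are of order $e^{\Theta(\sqrt n)}$); what actually makes it ``pull out'' is asymptotic decorrelation of the factor $a_i$ from the rest, which the paper obtains cleanly by conditioning on the factor-side half-edge assignment (the SHARP construction and \Prop~\ref{typ_ass_prop}) so that the weight functions become conditionally independent with distribution $p^*_{k,\mcFHEA}$, rather than trying to concentrate the product directly.
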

%
%MUTUAL INFORMATION PROOFS - INTRO
%
\subsection{Preliminaries}\label{mip_intro}
Using Section \ref{mi} with respect to Proposition \ref{Prop_IZ} allows to restrict to $t\in\cT^\circ_n$.
With respect to Proposition \ref{mi_prop} and using the definitions of Section \ref{mi} we first notice that under {\bf SYM} the mutual information per variable is sublinear in the number of factors, i.e.~there exists $\varepsilon\in(0,1)$ with $\varepsilon^{m}\le\mcFpsi_\mcG(\mcVA)\le \varepsilon^{-m}$ uniformly for $\mcG$ in the support of $\mcRG_t$, $\mcVA\in\mcColSp^{\mcVLSp_n}$ and $t=(m,\sd,\sk)\in\cT_n$, hence the same holds for $\mcFEpsi_t(\mcVA)$, further $|\ln(r^*_{\mcVA}(\mcG))|$, $|\ln(r^*(\mcG))|\le m\ln(\varepsilon^{-2})$ and thereby
$|i^*_t|\le 2\ln\left(\varepsilon^{-2}\right)\frac{m}{n}$ with $i^*_\mcT=\frac{1}{n}I(t)$.
With $i^*=\Erw[i^*_{\vt_n}]$ we can hence use Proposition \ref{dd_typ_prop} to obtain $i^*=\Erw[i^*_{\vt_n}\vecone\{\vt_n\in\cT^\circ_n\}]+o(1)$ which again justifies the restriction to typical degree sequences.

For $t\in\cT_n$ we can rewrite $i^*_t$ as follows to extract the material contributions.
While the following steps can be traced algebraically using the definition of $i^*_t$, but we prefer to give the conceptual and more intuitive derivation using the conditional entropy $H(\vx|\vy)$ and cross entropy $\CE{\vx}{\vy}$, i.e.
\begin{flalign*}
i^*_t
&=\frac{1}{n}\KL{(\mcRVA^*,\mcRG^*_t(\mcRVA^*))}{\mcRVA^*\otimes\mcRG^*_t(\mcRVA^*)}
=\frac{1}{n}H(\mcRVA^*)-\frac{1}{n}H(\mcRVA^*|\mcRG^*_t(\mcRVA^*))\\
&=\frac{1}{n}H(\mcRVA^*)-\frac{1}{n}\Erw[\CE{\mcRVA^*|\mcRG^*_t(\mcRVA^*)}{\mcRVAPost_{\mcRG^*_t(\mcRVA^*)}}]+\frac{1}{n}\Erw[\KL{\mcRVA^*|\mcRG^*_t(\mcRVA^*)}{\mcRVAPost_{\mcRG^*_t(\mcRVA^*)}}]\\
&=\ln(q)+\left(\eta^*_t-\phi^*_t\right)+i^*_{\mathrm{err}}(t)\textrm{,}\\
\eta^*_t&=\frac{1}{n}\Erw\left[\ln\left(\psi_{\mcRG^*_t(\mcRVA^*)}(\mcRVA^*)\right)\right]\textrm{, }
\phi^*_\mcT=\frac{1}{n}\Erw\left[\ln\left(\mcZ_{\mcRG^*_t(\mcRVA^*)}\right)\right]\textrm{, }\\
i^*_{\mathrm{err}}(t)&=-\frac{1}{n}\Erw\left[\ln\left(r_{\mcRG^*_t(\mcRVA^*)}(\mcRVA^*)\right)\right]\textrm{, }
r_\mcG(\mcVA)=\Erw\left[\frac{\mcFEpsi_t(\mcVA)}{\mcFEpsi_t(\mcRVAPost_\mcG)}\right]\textrm{.}
\end{flalign*}
The quantities $\eta^*_t$ and $\phi^*_t$ reflect the split of $\mcBMD_\mcG(\mcVA)=\frac{\mcFpsi_\mcG(\mcVA)}{\mcZ_\mcG}$ into the weight $\mcFpsi_\mcG(\mcVA)$ and normalization constant $\mcZ_\mcG$, and $\phi^*_t$ already appears in the right hand side of the assertion.
Hence, we are left to derive the material contributions from $\eta^*_t$ and to show that the relative entropy per variable $i_{\mathrm{err}}*(t)$ is negligible, where $r_\mcG$ is the derivative of the posterior $\mcRVAPost_{\mcRG^*_t(\mcRVA^*)}$ with respect to the prior $\mcRVA^*|\mcRG^*_t(\mcRVA^*)$  given $\mcG$ from $\mcRG^*_t(\mcRVA^*)$ (notice the leading minus sign in the definition of $i^*_{\mathrm{err}}(t)$).
%
%MUTUAL INFORMATION PROOFS - MATERIAL
%
\subsection{The material contribution}\label{mip_material}
For given $t=(m,\sd,\sk)\in\cT^\circ_n$ we add the conditioning level for the factor side assignments, i.e.~$\eta^*_\mcT=\Erw[\eta^*_t(\mcRVA^*,\mcRFSHEA^*_t(\mcRVA^*))]$ with $\eta^*_t(\mcVA,\mcFSHEA)=\frac{1}{n}\Erw\left[\ln\left(\mcFpsi_{\mcRG^*_t(\mcVA,\mcFSHEA)}(\mcVA)\right)\right]$.
With the results from Section \ref{mip_intro} we notice that $\eta^*_t(\mcVA,\mcFSHEA)$ is sublinear in the number of factors, so we can use Proposition \ref{typ_ass_prop} to obatin 
$\eta^*_\mcT=\Erw[\eta^*_t(\mcRVA^*,\mcRFSHEA^*_t(\mcRVA^*))\vecone\{(\mcRVA^*,\mcRFSHEA^*_t(\mcRVA^*)\in\acSuppT_t\}]+o(1)$ uniformly for all $t\in\cT^\circ_n$.
However, by the very definition of $\mcRG^*_t(\mcVA,\mcFSHEA)$ we have $\mcFpsi_{\mcRG^*_t(\mcVA,\mcFSHEA)}(\mcVA)=\prod_{\mcFL\in\mcFLSp_m}\mcFpsi_{\mcRG^*_t(\mcVA,\mcFSHEA),\mcFL}(\mcFHEA_\mcFL)$, and the weights $\mcFpsi_{\mcRG^*_t(\mcVA,\mcFSHEA),\mcFL}$, $\mcFL\in\mcFLSp_m$, are drawn independently and independent of the bijection $\mcBij_{\mcRG^*_t(\mcVA,\mcFSHEA)}$.
As discussed in Section \ref{Sec_Noela}, for $\mcFdeg$ in the support of $\vk$ and $\mcFHEA\in\mcColSp^\mcFdeg$ let $p^*_{\mcFdeg,\mcFHEA}\in\cP((0,2)^{\mcColSp^\mcFdeg})$ be the law given by the derivative $\mcFpsi\mapsto\frac{\mcFpsi(\mcFHEA)}{\mcFEpsi_\mcFdeg(\mcFHEA)}\in\RR_{>0}$ with respect to $\vpsi_\mcFdeg$,
then we have $(\psi_{\mcRG^*_t(\mcVA,\mcFSHEA),a})_a\sim\bigotimes_{i\in[m]}p^*_{\mcFdeg_i,\mcFHEA_{\mcFL_i}}$ and further
\begin{flalign*}
\eta^*_t(\mcVA,\mcFSHEA)&=\frac{m}{n}\sum_{i\in[m]}\frac{1}{m}\Erw\left[\ln\left(\psi_{\mcRG^*_t(\mcVA,\mcFSHEA),i}(\mcFHEA_{\mcFL_i})\right)\right]
=\frac{m}{n}\Erw\left[\ln\left(\mcFRpsi^*_{\mcFdeg_{\vi},\mcFHEA_{\mcFL_{\vi}}}(\mcFHEA_{\mcFL_{\vi}})\right)\right]
\end{flalign*}
using $\mcFRpsi^*_{\mcFdeg,\mcFHEA}\sim p^*_{\mcFdeg,\mcFHEA}$ and $\vi\in[m]$ uniform. Since $\eta^*_t(\mcVA,\mcFSHEA)=\Theta(1)$ uniformly in $t\in\cT^\circ_n$ and for all $(\mcVA,\mcFSHEA)$ assuming {\bf SYM}, and further $(\mcFdeg_{\vi},\mcFHEA_{\mcFL_{\vi}})$ converges to $(\vk,\mcRFSHEA^*_{\vk})$, $\mcRFSHEA^*_\mcFdeg\sim\mcFED_\mcFdeg$, in total variation distance for $(t,\mcVA,\mcFSHEA)\in\acSSpT_n$, to be precise we have uniform bounds in $1$-norm for the laws given $t$ and bounds on the $1$-norm of the degree laws, this gives $\eta^*_t(\mcVA,\mcFSHEA)=\frac{\d}{\k}\Erw[\ln(\mcFRpsi^*_{\vk,\mcRFSHEA^*_{\vk}}(\mcRFSHEA^*_{\vk}))]+o(1)$ uniformly for all valid typical colored sequences $(t,\mcVA,\mcFSHEA)\in\acSSpT_n$.
With the discussion at the beginning of this section we obtain
$\eta^*_\mcT=\frac{\d}{\k}\Erw[\ln(\mcFRpsi^*_{\vk,\mcRFSHEA^*_{\vk}}(\mcRFSHEA^*_{\vk}))]+o(1)$ uniformly for all $t\in\cT^\circ_n$ and further $\Erw[\eta^*_{\vt_n}]=\frac{\d}{\k}\Erw[\ln(\mcFRpsi^*_{\vk,\mcRFSHEA^*_{\vk}}(\mcRFSHEA^*_{\vk}))]+o(1)$.
While this may be considered the natural form in terms of our proof strategy, the form of the assertion can be established by expanding the expectation over $\mcRFSHEA^*_{\vk}$ and using the derivative of $p^*_{\mcFdeg,\mcFHEA}$.
%
%MUTUAL INFORMATION PROOFS - NEGLIGIBLE
%
\subsection{The negligible contribution}\label{mip_negligible}
The discussion in Section \ref{mip_intro} allows to restrict to $t\in\cT^\circ_n$, but as before subllinearity in the number of factors can also be easily obtained for $i^*_{\mathrm{err}}(t)$. Further, the fact that $i^*_{\mathrm{err}}(t)=\frac{1}{n}\Erw[\KL{\mcRVA^*|\mcRG^*_t(\mcRVA^*)}{\mcRVAPost_{\mcRG^*_t(\mcRVA^*)}}]$ directly yields $I^*_{\mathrm{err}}(t)\ge 0$ by basic properties of the relative entropy respectively an application of Jensen's inequality to $x\ln(x)$.
Upper bounding $i^*_{\mathrm{err}}(t)$ is involved since we consider the relative entropy given $\mcRG^*_t(\mcRVA^*)$, a model that is not as accessible as say $\acRS^*_n$.

However, the derivative $r_\mcG$ is an expectation by design and hence we can apply Jensen's inequality to $-\ln(x)$ with respect to the inner expectation, yielding $-\ln(r_\mcG(\mcVA))\le\Erw[-\ln(\mcFEpsi_t(\mcVA)/\mcFEpsi_t(\mcRVAPost_f))]$ and hence
\begin{flalign*}
i_{\mathrm{err}}^*(t)
&\le\frac{1}{n}\Erw\left[\ln\left(\mcFEpsi_t(\mcRVAPost_{\mcRG^*_t(\mcRVA^*)})\right)\right]-\frac{1}{n}\Erw\left[\ln\left(\mcFEpsi_t(\mcRVA^*)\right)\right]
=\delta^*_0(t)-\delta^*_1(t)\textrm{,}\\
\delta^*_0(t)&=\Erw\left[\frac{1}{n}\ln\left(r_t(\mcRVA^*)\right)\right]\textrm{, }
\delta^*_1(t)=\Erw\left[\frac{1}{n}\ln\left(r_t(\mcRVAPost_{\mcRG^*_t(\mcRVA^*)})\right)\right]\textrm{, }
r_t(\mcVA)=\frac{\pr[\mcRVA^*=\mcVA]}{\pr[\mcRVAN_\mcT=\mcVA]}\textrm{.}
\end{flalign*}
Notice that $\delta_0^*(t)=n^{-1}\KL{\mcRVA^*}{\mcRVAN_t}$ and $|n^{-1}\ln(r_t(\mcVA))|=n^{-1}|\ln(\Erw[\mcFEpsi_t(\mcRVA^*)])-\ln(\mcFEpsi_t(\mcVA))|$, which yields $|n^{-1}\ln(r_t(\mcVA))|<c^*$ uniformly in $t\in\cT^\circ_n$ and $\mcVA\in\mcColSp^{\mcVLSp_n}$ for some $c^*\in\RR_{>0}$.
For given $\varepsilon\in(0,1)$ we summon Proposition \ref{mcp_main_prop} to obtain $c$, $r\in\mathbb R_{>0}$ such that uniformly in $t\in\cT^\circ_n$ we have $\pr[\mcRVA^*\not\in\cE_n]$, $\pr[\mcRVAPost_{\mcRG^*_t(\mcRVA^*)}\not\in\cE_n]<\varepsilon$ and $|\ln(r_t(\mcVA))|<c$ for all $\mcVA\in\cE_n$ with $\cE_n=\{\mcVA\in\mcColSp^{\mcVLSp_n}:\|\mcCFA_{t,\mcVA}-\mcVDTot_tu_{\mcColSp}\|_2<r\sqrt{n}\}$.
Then we have
$|i_{\mathrm{err}}^*(t)|\le 2\left(\frac{c}{n}+\varepsilon c^*\right)\sim 2c^*\varepsilon$. Taking $\varepsilon$ to $0$ shows that $i^*_{\mathrm{err}}(t)=o(1)$ uniformly in $t\in\cT^\circ_n$, so $\Erw[i^*_{\mathrm{err}}(\vt_n)]=o(1)$ and the assertion holds.

\begin{proof}[Proof of \Prop~\ref{Prop_IZ}]
Since standard arguments, i.e., Section 5 in \cite{Coja_2018} show that there exists a simple $\G$ with the desired degree sequences with positive probability, the proposition is an immediate consequence of \Sec~\ref{mip_intro}-\ref{mip_negligible}. 
\end{proof}

\section{Concentration} \label{conc}

In this section we focus on the central quantity discussed in this work, the quenched free entropy density.
In the remainder we tacitly assume that {\bf DEG}, {\bf BAL} and {\bf SYM} hold and reuse the conventions and notions from Section \ref{dd}, Section \ref{mc} and Section \ref{typ_ass}.
For $t\in\mathcal T_n$ and a factor graph $\mcG$ in the support of $\mcRG_\mcT$ the free entropy density of $\mcG$ is $\phi(\mcG)=\frac{1}{n}\ln(\mcZ_{\mcG})$.
Now, depending on our model the quenched free entropy densities given $t$ are $\bar\phi_t=\Erw[\phi(\mcRG_\mcT)]$,
$\phi^*_{t}=\Erw[\phi(\mcRG^*_t(\mcRVA^*))]$ and
$\hat\phi_{t}=\Erw[\phi(\mcRG^*_t(\mcRVAN_t))]$.

From these we obtain the model dependent quenched free entropy densities by averaging over $\bm t_n$, i.e.~$\bar\phi=\Erw[\bar\phi_{\bm t_n}]$, $\phi^*=\Erw[\phi^*_{\bm t_n}]$ and $\hat\phi=\Erw[\hat\phi_{\bm t_n}]$.
As before, the results of this section canonically translate to the factor pruned models as discussed in Section \ref{mc}, combined with the argument in Section \ref{conc_pruned} which ensures that pathological cases can indeed be neglected.
\subsection{Null model}
As opposed to the teacher student scheme concentration of $\bar\phi_{\bm t_n}$ around $\bar\phi$ can be easily obtained.
In the first step we show concentration of $\phi(\mcRG_\mcT)$ around $\bar\phi_t$ for any given $t\in\mathcal T_n^\circ$.
\begin{proposition}\label{conc_std_loc}
There exist constants $c$, $c'\in\mathbb R_{>0}$ such that for all $n\in\mathcal N$, all $t\in\mathcal T_n^\circ$ and all $r\in\mathbb R_{\ge 0}$ we have
\begin{align*}
\pr[|\phi(\mcRG_\mcT)-\bar\phi_t|\ge r]\le c'\exp(-cr^2n)\textrm{.}
\end{align*}
\end{proposition}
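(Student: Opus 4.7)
The plan is to decompose the fluctuation into the contribution from the random weight functions and the contribution from the random bijection, and to handle each piece by Azuma--Hoeffding on a suitable Doob martingale, exploiting that under \textbf{SYM} every weight function takes values in $(\varepsilon,1/\varepsilon)$. Let $\bm g$ denote the random bijection from \textbf{G1}, set $\phi_1(\bm g) := \Erw[\phi(\mcRG_t) \mid \bm g]$, and write
\begin{align*}
\phi(\mcRG_t) - \bar\phi_t \;=\; \bigl(\phi(\mcRG_t) - \phi_1(\bm g)\bigr) + \bigl(\phi_1(\bm g) - \bar\phi_t\bigr).
\end{align*}

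For the first summand I would condition on $\bm g$, so that $\mcZ_{\mcRG_t}$ depends only on the $m$ independent weights $(\vec\psi_{a_i})_{i\in[m]}$. By \textbf{SYM}, replacing a single $\psi_{a_j}$ by any admissible $\psi'_{a_j}$ rescales each term $\prod_i \psi_{a_i}(\sigma_{\partial a_i})$ by a factor in $[\varepsilon^2,\varepsilon^{-2}]$, so $|\ln \mcZ - \ln \mcZ'| \le 2|\ln\varepsilon|$. Azuma--Hoeffding applied to the Doob martingale exposing the weight functions one at a time, together with the uniform bound $m \le C n$ on $\cT_n^\circ$ from Section~\ref{dd}, then yields
\begin{align*}
\pr\bigl[|\phi(\mcRG_t) - \phi_1(\bm g)| \ge r/2\bigr] \;\le\; 2\exp(-c_1 r^2 n).
\end{align*}

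For the second summand I would expose $\bm g$ half-edge by half-edge in the pairing-model picture of Section~\ref{Sec_Noela}. Enumerate the variable-side half-edges by $[N]$ with $N = \sum_i d_i = \Theta(n)$ uniformly on $\cT_n^\circ$, and set $M_j := \Erw[\phi_1(\bm g) \mid \bm g(1),\dots,\bm g(j)]$, so $M_0 = \bar\phi_t$ and $M_N = \phi_1(\bm g)$. Any two full bijections agreeing on $\{1,\dots,j-1\}$ and disagreeing at $j$ differ by a single transposition, which alters the incidences of at most two factor nodes. Since such a change multiplies $\psi_G(\sigma)$ uniformly in $\sigma$ by a factor in $[\varepsilon^4,\varepsilon^{-4}]$, I get $|n\phi_1(\bm g) - n\phi_1(\bm g')| \le 4|\ln\varepsilon|$. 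The Doob increments therefore satisfy $|M_j - M_{j-1}| \le 4|\ln\varepsilon|/n$, and Azuma--Hoeffding with $N = \Theta(n)$ gives
\begin{align*}
\pr\bigl[|\phi_1(\bm g) - \bar\phi_t| \ge r/2\bigr] \;\le\; 2\exp(-c_2 r^2 n),
\end{align*}
so a union bound completes the proof with constants depending only on $\varepsilon$ and on the uniform bounds $m/n, N/n = \Theta(1)$ from Section~\ref{dd}.

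The main obstacle will be the rigorous justification of the increment bound $|M_j - M_{j-1}| \le 4|\ln\varepsilon|/n$ for the bijection-exposure martingale: the pairings are exchangeable but not independent, so one has to argue that conditionally on $\bm g(1),\dots,\bm g(j-1)$ the remaining matching is uniform on pairings of the remaining half-edges, and then couple the conditional laws under two different choices of $\bm g(j)$ by a single transposition so that the per-sample bound $|n\phi_1(\bm g)-n\phi_1(\bm g')|\le 4|\ln\varepsilon|$ can be transferred to the conditional expectations. Uniformity over $t\in\cT_n^\circ$ is built in, since the quantities $m/n$ and $N/n$ are uniformly bounded on typical sequences, and neither the per-weight nor the per-swap bound depends on the maximum degree; hence $c$ and $c'$ in the statement can be chosen independently of $t$.
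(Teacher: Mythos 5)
Your proof is correct, and it uses the same core tool as the paper (Azuma--Hoeffding with a switching coupling to bound the martingale increments), but with a genuinely different filtration. The paper runs a \emph{single} Doob martingale that exposes the factor nodes $a_1,\dots,a_m$ one at a time, where the $\ell$-th coordinate carries both the wiring \emph{and} the weight function of $a_\ell$; resampling that coordinate is repaired by rewiring up to $k_\ell$ half-edges, so the increment is $(k_\ell+1)\ln(\eps^{-2})/n$ and the Azuma constant involves $\frac{m}{n}\Erw[(\vk_t+1)^2]$, which is uniformly bounded on $\cT^\circ_n$ precisely because the typicality conditions of Section~\ref{dd} control the second moment of the factor degrees. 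You instead split off the weight randomness (conditionally independent coordinates, increment $2\ln(\eps^{-1})/n$ each) and expose the pairing half-edge by half-edge, where a single transposition touches at most two factor nodes and gives a \emph{degree-independent} increment $4\ln(\eps^{-1})/n$ over $\Theta(n)$ steps. The trade-off: the paper gets one martingale instead of two plus a union bound, while your route needs only $m=O(n)$ and $\sum_i d_i=\Theta(n)$ rather than a uniform bound on $\Erw[\vk_t^2]$, and the conditional-uniformity-plus-transposition step you flag as the main obstacle is exactly the standard configuration-model argument and goes through (the remaining matching given the first $j-1$ pairs is uniform on the unused clones, and the swap map is a measure-preserving bijection between the two conditional supports). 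Both yield constants depending only on $\eps$ and the uniform bounds on $\cT^\circ_n$, as required.
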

This result suggests that for any $t\in\mathcal T_n^\circ$ and $r_n\in\omega(\sqrt{n}^{-1})$ we have $|\phi(\mcRG_\mcT)-\bar\phi_t|<r_n$ with high probability, so the free entropy densities of almost all instances asymptotically coincide with their expectation.
The next result implies that the same is true for the conditional expectations.
\begin{proposition}\label{conc_std_glob}
We have $\bar\phi_t=\bar\phi+o(1)$ uniformly for all $t\in\mathcal T^\circ_n$.
\end{proposition}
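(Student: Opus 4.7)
The plan is to reduce the uniform bound to a coupling statement between $\mcRG_t$ and $\mcRG_{t'}$ for two arbitrary typical sequences $t,t'\in\cT^\circ_n$, and then invoke the machinery of Section~\ref{Sec_Noela}. Observe first that, by definition, $\bar\phi=\Erw[\bar\phi_{\vt_n}]$, and since $\bar\phi_t$ is bounded by a sublinear function of $m_t$ (using {\bf SYM} as in Section~\ref{mip_intro}) while $\vt_n\in\cT^\circ_n$ with high probability by Proposition~\ref{dd_typ_prop}, we have $\bar\phi=\Erw[\bar\phi_{\vt_n}\mid \vt_n\in\cT^\circ_n]+o(1)$. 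Therefore it suffices to show
\begin{align*}
\sup_{t,t'\in\cT^\circ_n}|\bar\phi_t-\bar\phi_{t'}|=o(1),
\end{align*}
because this forces $\bar\phi_t$ to agree with its conditional average $\bar\phi+o(1)$, uniformly.

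To prove the displayed bound, I would interpolate between $t=(m,\sd,\sk)$ and $t'=(m',\sd',\sk')$ by a sequence of elementary moves. Because both $\tau(t)$ and $\tau(t')$ lie in $\cB_{r_n}(\tau^*)$ with $r_n=o(1)$, the degree profiles in $\|\cdot\|_1$ and the first moments on both sides agree up to $o(n)$; after a suitable relabeling of the variable and factor nodes we may assume that $\sd,\sd'$ coincide on all but $o(n)$ coordinates and likewise $\sk,\sk'$ coincide on all but $o(n)$ coordinates. Thus one can exhibit a chain $t=t_0,t_1,\ldots,t_K=t'$ with $K=o(n)$, where each consecutive pair differs either by adding/removing a single factor node, or by shifting a single half-edge between variables (with total degree balanced by the dummy factors introduced in Section~\ref{Sec_Noela}). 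For each elementary move, Propositions~\ref{prop_coupling_int} and~\ref{prop_coupling_int_m}, together with the symmetric variant for adding a variable node (Proposition~\ref{Prop_oplus}), furnish a coupling under which the two graphs differ in $O(1)$ factor nodes with probability $1-\tilde O(n^{-1/2})$, and in at most $\sqrt{n}\log n$ factor nodes otherwise. Under {\bf SYM} each changed factor contributes $O(1)$ to $\log Z$, so a single move shifts $\phi$ by $\tilde O(n^{-1})$ in expectation and $O(n^{-1/2}\log n)$ in the unlikely case. Summing over $K=o(n)$ moves yields $|\bar\phi_t-\bar\phi_{t'}|=o(1)$.

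The main obstacle will be the fact that the sequences of couplings produced by the elementary moves are not applied to fixed ground truths: Propositions~\ref{prop_coupling_int}--\ref{Prop_oplus} were formulated for the teacher--student graph $\G^*(\sd,\sk,\sP,\sigma)$ with a prescribed $\sigma$ satisfying a balancedness condition. Since here we work with the null model $\mcRG_t$, the couplings simplify considerably because no reweighting by $\psi_G(\sigma)$ is needed; the pairing model argument of Section~\ref{Sec_Noela} collapses to a direct comparison of uniformly random bijections on slightly different clone sets, which can be handled by reading off the $\sigma$-free version of Lemma~\ref{lem_coupling_pebble} (i.e.\ dropping the conditioning on the colour profile of $\chi$). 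The only subtlety is bookkeeping the difference $\Delta_{t}-\Delta_{t'}$ of dummy factors along the interpolation, but because both sequences lie in $\cT^\circ_n$ one has $|\Delta_{t}-\Delta_{t'}|=o(n)$ uniformly, which is compatible with the $K=o(n)$ budget of elementary moves. Combined with Proposition~\ref{conc_std_loc} for the within-instance concentration, this completes the plan.
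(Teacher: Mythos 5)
Your plan is correct and is essentially the paper's own proof: the paper likewise observes that $\bar\phi_t$ depends only on the degree frequencies, relabels so that $\sd,\sd'$ and $\sk,\sk'$ agree except on $o(n)$ "bad" coordinates, couples $\mcRG_\mcT$ and $\mcRG_{\mcT'}$ by reusing the weight functions on the common factors and projecting one uniform bijection onto the other via the switching method, and bounds $|\bar\phi_t-\bar\phi_{t'}|$ by $\tfrac1n\ln(\eps^{-2})$ times the $o(n)$ count of differing factors and rewired half-edges (which is $o(n)$ precisely because the metric on $\cT^\circ_n$ weights degree profiles by degree), before concluding with Proposition~\ref{dd_typ_prop}. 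The only differences are cosmetic — the paper performs one global coupling rather than a chain of elementary moves, and it never needs the planted-model machinery of Propositions~\ref{prop_coupling_int}--\ref{Prop_oplus} (which, as you note, do not apply verbatim to the null model) — though in your accounting you should make explicit that the $o(1)$ total comes from the deterministic $O(1/n)$ contribution of each changed factor multiplied by the $o(n)$ number of changes, not from the coupling-failure probabilities, since $o(n)\cdot\tilde O(n^{-1})$ with genuine logarithmic factors would not by itself yield $o(1)$.
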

Combining Proposition \ref{conc_std_glob} controlling the free entropies globally via the conditional expectations and Proposition \ref{conc_std_loc} controlling the free entropies locally around the conditional expectation gives sufficient control for the arguments in the remainder.
\subsection{Teacher student model}
As indicated in Section \ref{typ_ass} we introduce another conditioning level based on the choice of assignment pairs $(\mcVA,\mcFSHEA)$.
So, for $t\in\mathcal T_n$, $\mcVA\in\mcColSp^{\mcVLSp_n}$ and $\mcFSHEA$ in the support of $\mcRFSHEA_\mcT(\mcVA)$ let $\mcRG^*_{t,\mcVA,\mcFSHEA}=(\mcRG^*_t(\mcVA)|\mcRFSHEA^*_t(\mcVA)=\mcFSHEA)$ be the teacher student model with the assignments on both sides fixed, and notice that the results from Section \ref{Sec_Noela} can be directly applied to this model.
Further, we introduce the corresponding conditional quenched free entropy density $\phi^*_{\mcT,\mcVA,\mcFSHEA}=\Erw[\phi(\mcRG^*_{t,\mcVA,\mcFSHEA})]$.
In the first step we show concentration of $\phi(\mcRG^*_{s})$ around $\phi^*_{s}$ for $s=(t,\mcVA,\mcFSHEA)$ in the support of $\acRS^*_n$ with $\mcT\in\cT^\circ$.
\begin{proposition}\label{conc_tss_loc}
There exist constants $c$, $c'\in\mathbb R_{>0}$ such that for all $n\in\cN$, all $\acS=(\mcT,\mcVA,\mcFSHEA)$ in the support of $\acRS^*_n$ with $t\in\cT^\circ_n$ and all $r\in\mathbb R_{\ge 0}$ we have
\begin{align*}
\pr[|\phi(\mcRG^*_s)-\phi^*_s|\ge r]\le c'\exp(-cr^2n)\textrm{.}
\end{align*}
\end{proposition}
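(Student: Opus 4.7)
The plan is to exploit the conditional independence structure of $\mcRG^*_s$ given $s=(t,\mcVA,\mcFSHEA)$ and apply bounded-differences concentration inequalities. Given $s$, the graph is built from two independent sources of randomness: a uniform color-preserving bijection $\mcRBij$ between the variable-side half-edges $\mcVHESp_t\cup\mcDVDeltaSp(t)$ and the factor-side half-edges $\mcFHESp^\circ_t$ (i.e.\ a uniform bijection within each color class $\omega\in\mcColSp$, which is well-defined because $\mcVA$ and $\mcFSHEA$ have matching color profiles), together with $m$ independent weight functions $\psi_{a_i}$ drawn from the tilted laws $P_i(\,\cdot\mid \mcFHEA_{a_i})$. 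Thus $\ln\mcZ_{\mcRG^*_s}$ is a function of $m$ independent random weight functions and of $q$ independent uniform permutations, one per color class, of total length $\mcVDTot_t$.

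The key ingredient is a pair of bounded-differences estimates flowing from \textbf{SYM}. Since $\eps<\psi(\cdot)<1/\eps$ for every $\psi\in\Psi_k$ and every $k$ in the support of $\vk$, replacing one weight function $\psi_{a_i}$ by another $\psi'_{a_i}$ multiplies $\mcFpsi_{\mcG}(\sigma)$ by a factor in $(\eps^2,\eps^{-2})$ uniformly in $\sigma$, hence $|\Delta\ln\mcZ|\le 2\ln(1/\eps)$. Likewise, composing $\mcBij$ with a single transposition inside one color class alters the argument of at most two weight functions, yielding $|\Delta\ln\mcZ|\le 4\ln(1/\eps)$.

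With these in place, the concentration assembles in two layers joined by a union bound. Conditional on $\mcRBij$, Azuma--Hoeffding applied to the Doob martingale over the $m$ independent weight functions gives
\begin{align*}
\pr\bc{\abs{\ln\mcZ_{\mcRG^*_s}-\Erw[\ln\mcZ_{\mcRG^*_s}\mid\mcRBij]}\ge rn/2\,\big|\,\mcRBij}\le 2\exp\bc{-\tfrac{r^2n^2}{8m\ln^2(1/\eps)}}.
\end{align*}
By \textbf{DEG} and the typicality of $t\in\cT^\circ_n$ (Section~\ref{dd}), $m\le Cn$ uniformly, so the exponent is $\Omega(r^2n)$. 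For the bijection layer we apply Maurey's concentration inequality for uniform random permutations to each of the $q$ color-class permutations, whose lengths sum to $\mcVDTot_t=O(n)$, to obtain $\pr[|\Erw[\ln\mcZ\mid\mcRBij]-n\phi^*_s|\ge rn/2]\le 2\exp(-c r^2 n)$. Combining, $\pr[|\phi(\mcRG^*_s)-\phi^*_s|\ge r]\le c'\exp(-cr^2 n)$ with $c,c'$ depending only on $\eps,\d,\k$.

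The main technical point is the bijection step: one must verify the transposition Lipschitz bound carefully (it is a genuinely local switch that touches only two factor nodes, making \textbf{SYM} sufficient) and invoke the permutation analogue of McDiarmid's inequality, which is slightly weaker than the independent-coordinate version but still yields subgaussian tails at scale $\sqrt n$. Uniformity of $c,c'$ over $s$ with $t\in\cT^\circ_n$ is automatic: the only $t$-dependence enters through $m\asymp\bar d n/\bar k$ and $\mcVDTot_t\asymp\bar d n$, both uniformly controlled on $\cT^\circ_n$ by Section~\ref{dd}, while the constant from \textbf{SYM} is intrinsic to the model.
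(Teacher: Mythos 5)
Your proof is correct and follows essentially the same route as the paper's: a bounded-differences/Azuma argument whose Lipschitz estimates come from \textbf{SYM} and whose permutation part rests on the transposition/switching method, with uniformity over $s$ supplied by the degree bounds on $\cT^\circ_n$. The paper organizes this as a single Doob martingale revealing each factor node's wiring and weight function together (with increments of order $(k_i+1)\ln(\eps^{-2})/n$), whereas you split the randomness into the colour-preserving bijection and the conditionally independent weights and treat the two layers separately via Maurey plus standard Azuma; this is a cosmetic reorganization of the same estimate.
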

This result suggests that the free entropy densities of almost all instances asymptotically coincide with their expectation.
The next result implies that the same is true for the conditional expectations.
\begin{proposition}\label{conc_tss_glob}
Uniformly for all $s\in\mathcal S^\circ_n$ we have $\phi^*=\phi^*_s+o(1)=\hat\phi+o(1)$ .
\end{proposition}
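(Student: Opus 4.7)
The plan is to treat the two equalities separately. The identity $\phi^* = \hat\phi + o(1)$ is an immediate consequence of Proposition~\ref{Prop_Nishi}, which supplies $\Erw[\ln Z_{\mathbf{G}^*}] = \Erw[\ln Z_{\hat{\mathbf{G}}}] + o(n)$; dividing by $n$ yields the second equality. The work lies in proving $\phi^*_s = \phi^* + o(1)$ uniformly for $s \in \mathcal{S}^\circ_n$.

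First I reduce the task to showing that $s \mapsto \phi^*_s$ is nearly constant on $\mathcal{S}^\circ_n$. Under \textbf{SYM} the quantity $\phi^*_s$ is bounded and sublinear in the number of factors, so Proposition~\ref{typ_ass_prop} applied to $f_n(s) = \phi^*_s$ gives
\begin{align*}
\phi^* = \Erw\brk{\phi^*_{\acRS^*}} = \Erw\brk{\phi^*_{\acRS^*} \mid \acRS^* \in \mathcal{S}^\circ_n} + o(1)\textrm{.}
\end{align*}
Hence if $\sup_{s, s' \in \mathcal{S}^\circ_n} |\phi^*_s - \phi^*_{s'}| = o(1)$, then every $\phi^*_s$ agrees with this conditional average, and therefore with $\phi^*$, up to $o(1)$.

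To control $|\phi^*_s - \phi^*_{s'}|$ I chain the couplings of Section~\ref{Sec_Noela}. The color-permutation invariance implied by \textbf{SYM} and \textbf{BAL} allows me to assume that $s$ and $s'$ share color-frequency marginals modulo relabeling, after which typicality forces $\|\alpha_s - \alpha_{s'}\|_1 = o(1)$ and $\|\vmIndP_{t} - \vmIndP_{t'}\|_1 = o(1)$. I then build a sequence $s = s_0, s_1, \dots, s_k = s'$ with $k = o(n)$, consecutive triples differing by a single atomic modification: adding or removing one factor via Proposition~\ref{prop_coupling_int_m}, adding or removing one variable via Proposition~\ref{Prop_oplus}, or swapping the color on one half-edge. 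Each atomic coupling produces factor graphs agreeing except for the added element with probability $1 - \tilde O(1/n)$ and otherwise differing in at most $\sqrt{n}\log n$ edges. Since \textbf{SYM} bounds each weight away from $0$ and $\infty$, inserting, deleting, or rewiring a single edge changes $\ln Z$ by $O(1)$, so in expectation each atomic step contributes $O(1/n)$ to $|\phi^*_{s_i} - \phi^*_{s_{i+1}}|$, and the telescoped bound is $k \cdot O(1/n) = o(1)$.

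The main obstacle will be arranging the atomic modifications so that every intermediate triple $s_i$ satisfies the near-uniform color-frequency hypotheses of Propositions~\ref{prop_coupling_int_m} and~\ref{Prop_oplus}. This requires ordering the atomic moves carefully, adjusting degree sequences before color profiles, and keeping the trajectory inside a slightly enlarged typical set. A further subtlety is that the couplings in Section~\ref{Sec_Noela} are formulated for $\mathbf{G}^*(\underline d, \underline k, \underline P, \theta, \sigma)$, in which only the variable assignment is fixed, whereas $\phi^*_s$ additionally conditions on the factor-side half-edge assignment; however this extra conditioning amounts to fixing color classes in the random matching, so the coupling constructions extend with only notational changes.
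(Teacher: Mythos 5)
Your opening reduction---apply Proposition~\ref{typ_ass_prop} to $f_n(\acS)=\phi^*_{\acS}$ and then show that $\acS\mapsto\phi^*_{\acS}$ is nearly constant on $\mathcal S^\circ_n$---is exactly the paper's first move. But both halves of your execution diverge in ways that leave gaps. For $\hat\phi=\phi^*+o(1)$ you quote the final clause of Proposition~\ref{Prop_Nishi}; the paper instead derives this equality \emph{here}, from mutual contiguity of the coloured sequences $\acRS^*_n$ and $\acRSN_n$ (Section~\ref{mc_hence}) combined with the just-established constancy of $\phi^*_{\acS}$ on $\mathcal S^\circ_n$. Since the $\Erw[\ln Z]$ clause of Proposition~\ref{Prop_Nishi} essentially \emph{is} the present statement, invoking it wholesale risks circularity; the contiguity route is what actually carries the argument.

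The main gap is in the core step. The paper does not chain atomic modifications: it couples $\mcRG^*_{\acS}$ and $\mcRG^*_{\acS'}$ in one shot. The point you wave off as ``notational'' is in fact the crux: once the factor-side half-edge assignment $\mcFSHEA$ is fixed as well as $\mcVA$, the conditional law of the matching factorises into $q$ independent uniform bijections between same-coloured half-edge classes, so one can run the deterministic switching argument of Section~\ref{conc_std_glob_proof} colour class by colour class and bound the number of disagreeing factors by $h_{\mathrm{vb}}+h'_{\mathrm{vb}}+m_{\mathrm{b}}+m'_{\mathrm{b}}=o(n)$ with no failure events whatsoever; the delicate couplings of Section~\ref{Sec_Noela} are neither needed nor the right tool for this fully conditioned model. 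Your chain has two concrete problems. First, the ``swap the colour on one half-edge'' move is not covered by Proposition~\ref{prop_coupling_int_m} or~\ref{Prop_oplus} and would require its own coupling lemma. Second, your per-step accounting assumes a failure probability of $\tilde O(1/n)$, but Proposition~\ref{prop_coupling_int_m} only guarantees $1-\tilde O(n^{-1/2})$; with the correct bound a single step costs $O(1)+\tilde O(n^{-1/2})\cdot O(\sqrt n\log n)=\tilde O(\log n)$ in $\ln Z$ rather than $O(1)$, so the telescoped total over $k=O(r_nn)$ steps is $O(r_n\,\mathrm{polylog}\,n)$, which need not vanish for the $r_n$ supplied by Proposition~\ref{typ_ass_prop}. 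Using Proposition~\ref{prop_coupling_int} (whose hypothesis {\bf SYM$'$} does hold here) would repair the factor-addition steps, but not the colour swaps; the deterministic one-shot switching coupling sidesteps all of this.
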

While Proposition \ref{conc_tss_glob} ensures the equivalence of the quenched free entropy densities, and concentration combined with Proposition \ref{conc_tss_loc}, we will derive significantly stronger exponential tail bounds for the Nishimori model in Section \ref{condp}.
\subsection{Proof strategy}
The following result ensures that it is sufficient to restrict to typical degree sequences $t\in\mathcal T_n^\circ$.
Further, an immediate consequence is that the quenched free entropy densities are bounded.
\begin{lemma}\label{conc_prop_dd_typ}
We have $\Erw[\bar\phi_{\bm t_n}]=\Erw[\bar\phi_{\bm t_n}\vecone\{\bm t_n\in\mathcal T^\circ_n\}]+o(1)$ and the same holds for $\bar\phi$ replaced by $\phi^*$ and $\hat\phi$.
\end{lemma}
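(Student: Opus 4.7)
The plan is to invoke Proposition~\ref{dd_typ_prop} separately for each of the three quenched free entropy densities. To apply that result, I must verify that each of the maps $t\mapsto\bar\phi_t$, $t\mapsto\phi^*_t$, $t\mapsto\hat\phi_t$ on $\cT_n$ is sublinear in the number of factors in the sense defined in Section~\ref{dd}, i.e.\ bounded in absolute value by $c+cm_t/n$ for a constant $c$ independent of $n$ and $t$.

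For this I use the fact, recalled just before Section~\ref{Sec_Noela}, that under {\bf SYM} we may assume all weight functions take values in $(\eps,2)$. Hence for any factor graph $\mcG$ in the support of $\mcRG_\mcT$ with $t=(m,\sd,\sk)\in\cT_n$,
\begin{align*}
q^n\eps^m\le\mcZ_\mcG=\sum_{\sigma\in\Omega^{V_n}}\prod_{a\in\mcFLSp_m}\psi_a(\sigma_{\partial a})\le q^n2^m,
\end{align*}
and therefore the deterministic pointwise bound
\begin{align*}
|\phi(\mcG)|=\bigl|n^{-1}\ln\mcZ_\mcG\bigr|\le\ln q+\frac{m}{n}\max\bigl(|\ln\eps|,\ln 2\bigr)
\end{align*}
holds. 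Taking expectations under the three models $\mcRG_\mcT$, $\mcRG^*_t(\mcRVA^*)$, and $\mcRG^*_t(\mcRVAN_t)$ preserves this bound, so each of $|\bar\phi_t|$, $|\phi^*_t|$, $|\hat\phi_t|$ is dominated by $\ln q+Cm_t/n$ with $C=\max(|\ln\eps|,\ln 2)$. This is exactly the sublinearity assumption required by Proposition~\ref{dd_typ_prop}.

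Applying Proposition~\ref{dd_typ_prop} to the three sequences $f_n(t)=\bar\phi_t$, $f_n(t)=\phi^*_t$, and $f_n(t)=\hat\phi_t$ then yields, with the sequence $r_n=o(1)$ furnished by that proposition and $\cT^\circ_n=\{t\in\cT_n:\tau(t)\in\cB_{r_n}(\tau^*)\}$, the three identities claimed in the lemma. There is no substantive obstacle here: the argument is essentially a one-line consequence of Proposition~\ref{dd_typ_prop} together with the crude uniform bound on $\phi$ provided by {\bf SYM}, and the only point that merits explicit verification is the sublinearity of $\phi$ in $m/n$, which is immediate from the boundedness of the weight functions.
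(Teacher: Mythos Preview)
Your proof is correct and follows essentially the same approach as the paper: establish a deterministic pointwise bound $|\phi(\mcG)|\le\ln q+Cm_t/n$ from {\bf SYM}, observe that this bound transfers to any conditional expectation, and then invoke Proposition~\ref{dd_typ_prop}. The only cosmetic difference is that the paper uses the two-sided bound $\eps<\psi<\eps^{-1}$ directly from {\bf SYM} rather than the normalised range $(\eps,2)$, but the argument is otherwise identical.
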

\begin{proof}
Using $\varepsilon$ from {\bf SYM} we have uniform bounds for $\phi(\mcG)$ for all $\mcG$ in the support of $\mcRG_t$ given $t=(m,\sd,\sk)\in\cT_n$, namely
\begin{align*}
\ln(q)+\frac{m}{n}\ln(\varepsilon)\le\phi(\mcG)<\ln(q)+\frac{m}{n}\ln(\varepsilon^{-1})\textrm{,}
\end{align*}
so $\phi(\mcG)$ is sublinear in the number of factors.
Hence, any conditional expectation is also sublinear, which completes the proof using Proposition \ref{dd_typ_prop}.
\end{proof}
Hence, we can safely restrict to typical degree sequences for all proofs. Proposition \ref{conc_std_loc} then immediately follows from Azuma's inequality combined with the switching method, discussed in Section \ref{conc_std_loc_proof}.
Proposition \ref{conc_std_glob} follows from a coupling argument that ensures Lipschitz continuity of the conditional expectations, discussed in Section \ref{conc_std_glob_proof}.

For the teacher student models we follow the same strategy on a more granular level. The first result ensures that we can restrict to typical assignments.
\begin{lemma}\label{conc_typ_qfed_prop}
We have
$\Erw\left[\phi(\mcRG^*_{\acRS^*_n})\right]=\Erw\left[\phi^*_{\acRS^*_n}\vecone\{\acRS^*_n\in\acSSpT_n\}\right]+o(1)$
and the same holds for $\acRS^*_n$ replaced by $\acRSN_n$.
\end{lemma}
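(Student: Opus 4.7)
The plan is to observe that the quantity $\phi^*_s = \Erw[\phi(\mcRG^*_{t,\mcVA,\mcFSHEA})]$ is uniformly sublinear in the number of factors, and then invoke Proposition \ref{typ_ass_prop} as a black box. First, I would reduce the left-hand side to an expectation over $\acRS^*_n$ alone by the tower property: since $\phi^*_s = \Erw[\phi(\mcRG^*_{t,\mcVA,\mcFSHEA}) \mid \acRS^*_n=s]$, we have $\Erw[\phi(\mcRG^*_{\acRS^*_n})] = \Erw[\phi^*_{\acRS^*_n}]$, so the assertion reduces to showing that the contribution from atypical colored sequences is negligible.

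Next I would verify the sublinearity hypothesis required by Proposition \ref{typ_ass_prop}. For any $s=(\mcT,\mcVA,\mcFSHEA)\in\acSSpV_n$ with $\mcT=(m,\sd,\sk)$ and any factor graph $\mcG$ in the support of $\mcRG^*_{t,\mcVA,\mcFSHEA}$, the bound from {\bf SYM} (giving $\varepsilon<\psi<\varepsilon^{-1}$ pointwise) yields
\begin{align*}
\ln(q)+\frac{m}{n}\ln(\varepsilon)\le\phi(\mcG)\le\ln(q)+\frac{m}{n}\ln(\varepsilon^{-1}),
\end{align*}
exactly as in the proof of Lemma \ref{conc_prop_dd_typ}. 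Taking conditional expectations preserves these bounds, so $|\phi^*_s|\le\ln(q)+(m/n)|\ln\varepsilon|$, which is precisely the definition of sublinearity in the number of factors.

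With sublinearity established, Proposition \ref{typ_ass_prop} applied to $f_n(s)=\phi^*_s$ yields directly
\begin{align*}
\Erw[\phi^*_{\acRS^*_n}]=\Erw[\phi^*_{\acRS^*_n}\vecone\{\acRS^*_n\in\acSSpT_n\}]+o(1),
\end{align*}
and combined with the tower identity above this gives the claim for $\acRS^*_n$. The statement for $\acRSN_n$ follows by exactly the same argument, since Proposition \ref{typ_ass_prop} is stated for both $\acRS^*$ and $\acRSN$, and the sublinearity bound depends only on $\mcG$ and $(n,m)$, not on which of the two teacher-student distributions generated the assignment.

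There is no real obstacle here: the lemma is essentially a book-keeping consequence of the typicality results of Section \ref{typ_ass} together with the pointwise weight bounds from {\bf SYM}. The only point that deserves mild care is making sure the expectation in $\phi(\mcRG^*_{\acRS^*_n})$ is interpreted as joint expectation over $\acRS^*_n$ and the factor graph, so that the reduction to $\Erw[\phi^*_{\acRS^*_n}]$ is clean; once this is observed the proof is a one-line application of Proposition \ref{typ_ass_prop}.
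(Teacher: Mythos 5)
Your proposal is correct and follows exactly the paper's own route: establish that $\phi(\mcG)$ (and hence the conditional expectation $\phi^*_s$) is sublinear in the number of factors via the pointwise bounds from {\bf SYM}, as already recorded in the proof of Lemma~\ref{conc_prop_dd_typ}, and then apply Proposition~\ref{typ_ass_prop}. The extra care you take with the tower-property reduction is implicit in the paper but not a different argument.
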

\begin{proof}
In the proof of Lemma \ref{conc_prop_dd_typ} we observed that $\phi(\mcG)$ is sublinear in the number of factors, hence we can use Proposition \ref{typ_ass_prop}.
\end{proof}
Now, the proof of Proposition \ref{conc_tss_loc} in Section \ref{conc_tss_loc_proof} and the proof of Proposition \ref{conc_tss_glob} in Section \ref{conc_tss_glob_proof} follow the same strategy as their counterparts for the null model, with an additional layer of complexity.
%
%CONCENTRATION ST PROOF - LOCAL
%
\subsection{Proof of Proposition \ref{conc_std_loc}}\label{conc_std_loc_proof}
The proof of Proposition \ref{conc_std_loc} is based on Azuma's inequality.
For this purpose fix $\mcT=(m,\sd,\sk)\in\cT^\circ_n$ and consider $\mcG$ in the support $\cG_t$ of $\mcRG_t$ as element of the product space $\mcG\in\prod_{i\in[m]}\cG_{t,i}$ with $\cG_{t,i}=\mcVHESp_t^{k_i}\times\Psi_{k_i}$. This allows to canonically extend the notation for assignments to factor graphs, i.e.~for $i\in[m]$ the coordinate $\mcG_i=((\mcBij^{-1}(\mcFL_i,h))_{h\in[k_i]},\psi_{\mcFL_i})$ encodes the wiring and weight function of the factor $\mcFL_i$.

Now, let $\ell\in[m]$ and $\mcG_{\mathrm{a}}$, $\mcG_{\mathrm{b}}\in\cG_t$ be given such that $\mcG^*=\mcG_{\mathrm{a},[\ell-1]}=\mcG_{\mathrm{b},[\ell-1]}$.
Recall that $\mcRG_\mcT$ is obtained by a uniformly random choice of $\mcRBij_t$ and independent choices of $\vpsi_{\mcG,\mcFL_i}$ for $i\in[m]$. Hence, $\tilde{\mcRG}_r\sim\mcRG_\mcT|\mcRG_{t,[\ell]}=\mcG_{r,[\ell]}$ for $r\in\{\mathrm{a},\mathrm{b}\}$ is obtained by a uniformly random completion of $\mcBij^{-1}_{\mcG_r,[\ell]}$ and independent choices of the remaining weight functions.
This means that we obtain the following canonical coupling of $\tilde{\mcRG}_\mathrm{a}$ and $\tilde{\mcRG}_\mathrm{b}$.
For any instance $\mcG$ from $\tilde{\mcRG}_\mathrm{a}$ obtain $\mcG'=\iota(\mcG)$ by replacing $\psi_{\mcG,\ell}$ with $\psi_{\mcG_\mathrm{b},\ell}$ and successively switching the wires $(\ell,h)$ with $\mcBij_\mcG(\mcBij^{-1}_{\mcG_{\mathrm{b}},\ell}(\ell,h))$ for $h\in[k_\ell]$.
It is obvious from the construction that $\mcG'$ is an instance of $\tilde{\mcRG}_{\mathrm{b}}$, and further that reversing the construction recovers $\mcG$ from $\mcG'$, hence $\iota$ is a bijection. This in turn shows that $\iota(\tilde{\mcRG}_{\mathrm{a}})\sim\tilde{\mcRG}_{\mathrm{b}}$. Further, next to the factor $\mcFL_\ell$ the maximum number of coordinates $\mcG_i$, $i\in[m]\setminus[\ell]$, changed by $\iota$ is upper bounded by the maximum number of rewirings, i.e.~by $k_\ell$.
Using {\bf SYM} and the definition of the free entropy density this gives
\begin{align*}
|\phi(\mcG)-\phi(\mcG')|<\frac{k_{\ell}+1}{n}\ln\left(\varepsilon^{-2}\right)\textrm{.}
\end{align*}
So, under this coupling and using the triangle inequality we have
\begin{align*}
\left|\Erw\left[\phi(\tilde{\mcRG}_{\mathrm{a}})\right]-\Erw\left[\phi(\tilde{\mcRG}_{\mathrm{b}})\right]\right|
=\left|\Erw\left[\phi(\tilde{\mcRG}_{\mathrm{a}})-\phi(\iota(\tilde{\mcRG}_{\mathrm{a}}))\right]\right|
<\frac{k_{\ell}+1}{n}\ln\left(\varepsilon^{-2}\right)\textrm{.}
\end{align*}
Since this bound is uniform in the choice of $\mcG_{\mathrm{b}}$ we obtain the bound
\begin{align*}
\left|\Erw\left[\phi(\mcRG_\mcT)\middle|\mcRG_{t,[\ell]}=\mcG_{\mathrm{a},[\ell]}\right]-\Erw\left[\phi(\mcRG_\mcT)\middle|\mcRG_{t,[\ell-1]}=\mcG_{\mathrm{a},[\ell-1]}\right]\right|
&=\left|\Erw[\gamma(\mcG_{\mathrm{a},\ell})-\gamma(\tilde{\mcRG})]\right|
\le\frac{k_{\ell}+1}{n}\ln\left(\varepsilon^{-2}\right)\textrm{,}\\
\gamma(\mcG)&=\Erw\left[\phi(\mcRG_\mcT)\middle|\mcRG_{t,[\ell-1]}=\mcG_{\mathrm{a},[\ell-1]},\mcRG_{t,\ell}=\mcG\right]\textrm{,}\\
\tilde{\mcRG}&=(\mcRG_{t,\ell}|\mcRG_{t,[\ell-1]}=\mcG_{\mathrm{a},[\ell-1]})\textrm{.}
\end{align*}
Since this bound is uniform in the choice of $\mcG_{\mathrm{a}}$ the corresponding Doob martingale has bounded differences almost surely and Azuma's inequality yields
\begin{align*}
\pr\left[\left|\phi(\mcRG_\mcT)-\bar\phi_t\right|\ge r\right]&\le 2\exp\left(-c_{t}r^2n\right)\textrm{,}\\
c_t&=\frac{1}{2\ln\left(\varepsilon^{-2}\right)^2\frac{m}{n}\Erw[(\bm k_t+1)^2]}
=\frac{1+o(1)}{2\ln\left(\varepsilon^{-2}\right)^2\frac{\bar d}{\bar k}\Erw[(\bm k_t+1)^2]}
\end{align*}
uniformly for all $t\in\cT^\circ_n$. This completes the proof.
%
%CONCENTRATION ST PROOF - GLOBAL
%
\subsection{Proof of Proposition \ref{conc_std_glob}}\label{conc_std_glob_proof}
While Proposition \ref{conc_std_loc} allows to control the fluctuations of the free entropy density locally, i.e.~for given $t\in\cT^\circ_n$, Proposition \ref{conc_std_glob} allows to control the fluctuations under a variation of the degree sequences.
However, the proof strategy is fairly similar.
Since the setup for the discussion of the teacher student scheme is related but far more involved, we discuss the steps in detail.

First we notice that $\bar\phi_t=\bar\phi_{t'}$ if $t'$ is obtained from $t$ by only relabeling factors and variables.
Hence, the conditional quenched free entropy $\bar\phi_t$ only depends on the absolute degree frequencies on both sides.
Intuitively, this means that for $t$, $t'\in\cT^\circ_n$ we may assume without loss of generality that the degree sequences are sorted such that the difference on both sides is minimized, i.e.~iteratively for increasing $d\in\mathcal D$ we equip $\min(n\pr[\bm d_t=d],n\pr[\bm d_{t'}=d])$ variables with degree $d$ and keep the difference (in any order) at the end. Then we proceed analogously on the factor side.
For transparency, let $n_{\mathrm{g}}\in[n]_0$ denote the number of \emph{good} variables, i.e.~$d_{t,[n_{\mathrm{g}}]}=d_{t',[n_{\mathrm{g}}]}$ by our construction above.
Analogously, we have $m_{\mathrm{g}}\in[\min(m_t,m_{t'})]_0$ \emph{good} factors with $k_{t,[m_{\mathrm{g}}]}=k_{t',[m_{\mathrm{g}}]}$.
The remaining variables $\mathcal I_{\mathrm{b}}=[n]\setminus[n_{\mathrm{g}}]$ variables are flagged as bad, so are the remaining factors $\mathcal A_{\mathrm{b}}=[m_t]\setminus[m_{\mathrm{g}}]$ in $t$ and factors $\mathcal A'_{\mathrm{b}}=[m_{t'}]\setminus[m_{\mathrm{g}}]$ in $t'$.
Finally, assume without loss of generality that the total degree of $t$ is at least the total degree of $t'$, i.e.~$\Erw[\bm d_t]n\ge\Erw[\bm d_{t'}]n$.

Now, we couple $\mcRG_\mcT$ and $\mcRG_{\mcT'}$ by choosing the weights for the factors $\mcFL_i$, $i\in[m_{\mathrm{g}}]$, identically from $\vpsi_{k_{t,i}}$ since $k_{t,i}=k_{t',i}$ and independently for $\mathcal A_{\mathrm{b}}$ and $\mathcal A'_{\mathrm{b}}$.
Further, we draw the bijection $g:\Erw[\bm d_t]n\rightarrow\Erw[\bm d_t]n$ for $\mcRG_\mcT$ uniformly and project it down to a bijection $g':\Erw[\bm d_{t'}]n\rightarrow\Erw[\bm d_{t'}]n$ for $\mcRG_{\mcT'}$ using the switching method, i.e.~by rewiring all positions in $[\Erw[\bm d_{t'}]n]$ pointing to $[\Erw[\bm d_{t}]n]\setminus[\Erw[\bm d_{t'}]n]$ with the positions in $[\Erw[\bm d_{t}]n]\setminus[\Erw[\bm d_{t'}]n]$ pointing to $[\Erw[\bm d_{t'}]n]$ in order of appearance.
This perspective induces a partition of the variable side half-edges $\mcVHESp_t$, namely the good half-edges $\mcVHESp_{\mathrm{g}}$ of the variables $\mcVL_i$, $i\in[n_{\mathrm{g}}]$, the bad half-edges $\mcVHESp_{\mathrm{bc}}$ that $\mcRG_\mcT$ and $\mcRG_{\mcT'}$ have in common with respect to the relative representations above, and the bad half-edges $\mcVHESp_{\mathrm{be}}$ that correspond to $[\Erw[\bm d_{t}]n]\setminus[\Erw[\bm d_{t'}]n]$.
Now, the switching method only affects good factors $\mcFL_i$, $i\in[m_{\mathrm{g}}]$ that have already \emph{turned bad} by the wiring, i.e.~that connect to $\mcVHESp_{t}\setminus\mcVHESp_{\mathrm{g}}$.
In other words, the good factors $\mcFL_i$, $i\in[m_{\mathrm{g}}]$, connecting to $\mcFHESp_{\mathrm{g}}$ are not affected by the switching and are thereby the factors on which we know $\mcRG_\mcT$ and $\mcRG_{\mcT'}$ to coincide under this coupling.

Hence, the maximum number of factors on which $\mcRG_\mcT$ and $\mcRG_{\mcT'}$ differ under this coupling is given by $\max(|\mathcal A_{\mathrm{b}}|,|\mathcal A'_{\mathrm{b}}|)+|\mcVHESp_{t}\setminus\mcVHESp_{\mathrm{g}}|$.
So, in terms of the free entropy density for given $(\mcG,\mcG')$ drawn from the coupling we consider the partition of $[\max(m_t,m_{t'})]$ into the good factors $\mathcal A_{\mathrm{g}}$, the good factors $\mathcal A_{\mathrm{bw}}$ turned bad by the wiring, the bad factors $\mathcal A_{\mathrm{b}}$ given by the difference of $t$ and $t'$, and finally some additional dummy factors $\mathcal A_{\mathrm{d}}$ with constant weights $\psi_{\mcG,\mcFL_i}=1$ for $i\in\mathcal A_{\mathrm{d}}$ in case $m_{t'}>m_t$.
Then we have
\begin{align*}
\phi(\mcG)&<\frac{|\mathcal A_{\mathrm{bw}}\cup\mathcal A_{\mathrm{b}}\cup\mathcal A_{\mathrm{d}}|}{n}\ln(\varepsilon^{-1})+\frac{1}{n}\ln\left(\sum_\mcVA\prod_{i\in\mathcal A_{\mathrm{g}}}\psi_{\mcG,\mcFL_i}(\mcVA_{\partial\mcFL_i})\right)\\
&\le\frac{|\mathcal A_{\mathrm{bw}}\cup\mathcal A_{\mathrm{b}}\cup\mathcal A_{\mathrm{d}}|}{n}\ln\left(\varepsilon^{-2}\right)+\phi(\mcG')
\end{align*}
and the lower bound follows analogously.
Now, notice that
\begin{align*}
|\mathcal A_{\mathrm{b}}|+|\mathcal A'_{\mathrm{b}}|&=\sum_k|m_t\pr[\bm k_t=k]-m_{t'}\pr[\bm k_{t'}=k]|\textrm{, }\\
|\mcVHESp_{t}\setminus\mcVHESp_{\mathrm{g}}|+|\mcVHESp_{t'}\setminus\mcVHESp_{\mathrm{g}}|&=\sum_dd|n\pr[\bm d_t=d]-n\pr[\bm d_{t'}=d]|\textrm{,}
\end{align*}
so with $|\mathcal A_{\mathrm{bw}}|\le|\mathcal H_{\mathrm{v},t}\setminus\mathcal H_{\mathrm{v},\mathrm{g}}|$ (respectively the maximum of the two in general) and $|\mathcal A_{\mathrm{b}}\cup\mathcal A_{\mathrm{d}}|\le\max(|\mathcal A_{\mathrm{b}}|,|\mathcal A'_{\mathrm{b}}|)$ the above gives bounds that match the order (while we still counted a fair amount of factors as being bad although they might connect to the same variables in both models, only that the degrees of the variables differ).

Finally, by the choice of metric for the degree sequences multiple applications of the triangle inequality in order to obtain the distance in terms of the reference distributions yield
\begin{align*}
|\mathcal A_{\mathrm{b}}|+|\mathcal A'_{\mathrm{b}}|&=o(n)\textrm{,}\\
|\mcVHESp_{t}\setminus\mcVHESp_{\mathrm{g}}|+|\mcVHESp_{t'}\setminus\mcVHESp_{\mathrm{g}}|&=o(n)\textrm{,}
\end{align*}
i.e.~bounds that are uniform over any choice of $t$, $t'\in\cT^\circ_n$ and $(\mcG,\mcG')$ from the corresponding coupling, so
$|\bar\phi_t-\bar\phi_{t'}|=o(1)$ uniformly in $t$, $t'$ and hence $|\bar\phi_t-\bar\phi|=o(1)$ uniformly in $t$ (using Proposition \ref{dd_typ_prop}).
%
%CONCENTRATION TSS - LOCAL
%
\subsection{Proof of Proposition \ref{conc_tss_loc}}\label{conc_tss_loc_proof}
The proof of this result is very similar to the proof of Proposition \ref{conc_std_loc}, however with an additional layer of complexity due to the assignment pairs.
Recall the distribution of $\mcRG^*_\acS$, $\acS=(t,\mcVA,\mcFSHEA)$ in the support of $\acRS^*_n$, from Section \ref{Sec_Noela} and notice that it is very similar to obtaining $\mcRG_\mcT$, with reweighted distributions for the weight functions and restrictions in the choice of bijections.

The proof of Azuma's inequality for this model is now in almost complete analogy to the proof of Proposition \ref{conc_std_loc}, only that the completions for the bijections have to be chosen separately, while the switching method is not affected (consistency with colors is preserved by switching, since the assignment pair $(\mcVA,\mcFSHEA)$ coincides in both models). Recalling the result this gives
$c'=2$ and
\begin{align*}
c_s&=\frac{1}{2\ln\left(\varepsilon^{-2}\right)^2\frac{m_t}{n}\Erw[(\bm k_t+1)^2]}
\end{align*}
using only {\bf SYM} and for all $s$. Using {\bf DEG} we obtain the uniform bound $c\in\mathbb R_{>0}$ for $s$ with $t\in\cT^\circ_n$ and $n\in\cN$.
%
%CONCENTRATION TSS - GLOBAL
%
\subsection{Proof of Proposition \ref{conc_tss_glob}}\label{conc_tss_glob_proof}
The result follows from a combination of the concepts for the proof of Proposition \ref{conc_std_glob} and the model introduced in Section \ref{Sec_Noela}.
First, we observe the invariance of $\phi^*_{\acS}$ with respect to a relabeling of variables and factors.
Fix $\acS\in\acSSpT_n$.
Since only the frequencies on both sides are relevant, we may assume that the sequences are sorted as in the Proof of Proposition \ref{conc_std_glob}, with the corresponding partitions into \emph{good} and \emph{bad} factors as well as \emph{good} and \emph{bad} variables.
The model introduced in Section \ref{conc_tss_loc_proof} allows for the same switching strategy, however this time we draw the $q$ bijections separately, and each bijection wiring half-edges of color $\mcCol\in\mcColSp$ for the model with most half-edges of color $\mcCol$ (a quantity that depends on $t$, $t'$, $(\mcVA,\mcFSHEA)$ and $(\mcVA',\mcFSHEA')$ only).
As before, all \emph{good} factors that don't \emph{turn bad} by the wiring of nodes of color $\mcCol\in\mcColSp$ are not affected (regardless of the direction from which model to which model we project), hence the total number $m_{\mathrm{gb}}(\mcCol)$ of \emph{good} factors \emph{turned bad} can still be upper bounded by the maximum $h_{\max,\mathrm{vb}}(\mcCol)=\max(h_{\mathrm{vb}}(\mcCol),h'_{\mathrm{vb}}(\mcCol))$ of the total degrees $h_{\mathrm{vb}}(\mcCol)$, $h'_{\mathrm{vb}}(\mcCol)$ of all \emph{bad} variables of given color $\mcCol$.
With $h_{\max,\mathrm{vb}}(\mcCol)\le h_{\mathrm{vb}}(\mcCol)+h'_{\mathrm{vb}}(\mcCol)$ and summing over all $\mcCol\in\mcColSp$ recovers the upper bound from the proof of Proposition \ref{conc_std_glob}, i.e.~the number of all good factors turned bad is at most the sum $h_{\mathrm{vb}}+h'_{\mathrm{vb}}$ of the total degrees $h_{\mathrm{vb}}$, $h'_{\mathrm{vb}}$ of the \emph{bad} variables in both models. In addition to these we need to consider the \emph{bad} factors, so analogously to the standard model case the number of disagreeing factors can be upper bounded by $h_{\mathrm{vb}}+h'_{\mathrm{vb}}+m_{\mathrm{b}}+m'_{\mathrm{b}}$ with $m_{\mathrm{b}}$, $m'_{\mathrm{b}}$ denoting the numbers of bad factors.
As before, we notice that
\begin{align*}
h_{\mathrm{vb}}+h'_{\mathrm{vb}}=\sum_{\mcVdeg,\mcVHEA}\mcVdeg|n\acVAF{\mcT,\mcVA}(\mcVdeg,\mcVHEA)-n\acVAF{\mcT',\mcVA'}(\mcVdeg,\mcVHEA)|\textrm{, }
m_{\mathrm{b}}+m'_{\mathrm{b}}=\|m_t\acFAF{\mcT,\mcFSHEA}-m_{t'}\acFAF{\mcT',\mcFSHEA'}\|_1\textrm{.}
\end{align*}
Now, by design of the metrics for the degree sequences and assignment sequences, multiple applications of the triangle inequality yield $h_{\mathrm{vb}}+h'_{\mathrm{vb}}=o(n)$, $m_{\mathrm{b}}+m'_{\mathrm{b}}=o(n)$.
This shows that
$\phi^*_{\acS'}=\phi^*_{\acS}+o(1)$ uniformly for all $\acS$, $\acS'\in\acSSpT_n$, and thereby shows $\phi^*=\phi^*_{\acS}+o(1)$ uniformly using Lemma \ref{conc_typ_qfed_prop}.

Mutual contiguity of $\bm s^*_{n}$ and $\hat{\bm s}_{n}$ as discussed in \ref{mc_hence} implies that
$\hat{\bm s}_{n}\in\mathcal S^\circ_{n}$ with high probability
since $\bm s^*_{n}\in\mathcal S^\circ_{n}$ with high probability.
This suggests that
$\hat\phi
=\Erw[\phi^*_{\hat{\bm s}_{n}}\vecone\{\hat{\bm s}_{n}\in\mathcal S^\circ_{n}\}]+o(1)
=\phi^*_s+o(1)$
since $\phi^*_{s'}=\phi^*_{s}+o(1)$ uniformly for all $s$, $s'\in\mathcal S^\circ_n$.
%
%ERASURE MODEL - CONCENTRATION
%
\subsection{Pruning factors}\label{conc_pruned}
As discussed in Section \ref{mc}, all arguments for fixed sequences $t$ canonically translate to the factor pruned model.
Hence, the only missing argument is that the free entropy of the generalized factor graphs in Section \ref{mc} for arbitrary sequences is still sublinear in the number of factors, i.e.~we only need to establish Lemma \ref{conc_prop_dd_typ}.
But this result is immediate from the definition of the factor graphs, since the number of non-trivial factors, i.e.~factors whose weight functions are not constant $1$, can always be upper bounded by the total number of factors.
\section{Proof of \Prop~\ref{Prop_prune}}\label{prune}

In this section we show Proposition \ref{Prop_prune} and further justify the factor pruning using the results, notions and conventions of Section \ref{conc}.

With $\vt_{\varepsilon,n}$ introduced in Section \ref{mc} and $\vt_{0,n}=\vt_n$ as introduced in Section \ref{dd} let
$\phi^*(\varepsilon,n)=\Erw[\phi(\mcRG^*_{\vt_{\varepsilon,n}}(\mcRVA^*)]$.
We say that $\phi^*$ is asymptotically continuous in $\varepsilon^*_{\mathrm{m}}\in[0,1)$ if for all $\varepsilon\in\mathbb R_{>0}$ there exists $\delta\in\mathbb R_{>0}$ such that for all $\varepsilon_{\mathrm{m}}\in[0,1)\cap\mathcal B_\delta(\varepsilon^*_{\mathrm{m}})$ there exists $n_0\in\mathcal N$ such that $|\phi^*(\varepsilon_{\mathrm{m}},n)-\phi^*(\varepsilon^*_{\mathrm{m}},n)|<\varepsilon$ for all $n\in\mathcal N$ with $n\ge n_0$.
Further, $\phi^*$ is asymptotically continuous in the number of factors if the above holds for all $\varepsilon_{\mathrm{m}}$.
This property ensures that $\phi^*$ can be asymptotically approximated, without assuming that a limit $\lim_{n\rightarrow\infty}\phi^*(\varepsilon,n)$ exists, and without enforcing uniform convergence in that $n_0$ may depend on the choice of the parameter.
\begin{proposition}\label{ef_cont_prop}
The quenched free entropy density $\phi^*$ is asymptotically continuous in the number of factors.
\end{proposition}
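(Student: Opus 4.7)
The plan is to derive asymptotic continuity from a coupling argument anchored on {\bf SYM}, which guarantees that every weight function $\psi$ takes values in $(\eps, \eps^{-1})$ for some $\eps > 0$. In particular, for any factor graph $G$ and any additional factor $a$, we have $\eps Z(G) \le Z(G \cup \{a\}) \le \eps^{-1} Z(G)$, so $\log Z$ changes by at most $\log(\eps^{-1})$ per added (or removed) factor. Consequently, under any coupling of $\G^*(n, \vm_{\varepsilon_m}, \vsigma^*)$ and $\G^*(n, \vm_{\varepsilon^*_m}, \vsigma^*)$ with symmetric factor-difference $k$, we get $|\log Z_1 - \log Z_2| \le k \log(\eps^{-1})$.

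Fix $\varepsilon^*_m \in [0, 1)$ and $\varepsilon > 0$, and consider $\varepsilon_m \in [0, 1)$ with $|\varepsilon_m - \varepsilon^*_m| < \delta$. Assume WLOG $\varepsilon_m \le \varepsilon^*_m$ (the reverse direction is symmetric). First I couple the Poisson variables via $\vm_{\varepsilon_m} = \vm_{\varepsilon^*_m} + \vm^+$ with $\vm^+ \sim \Po((\varepsilon^*_m - \varepsilon_m) \d n/\k)$ independent, so that $\vm^+$ concentrates tightly around its mean $\le \delta \d n/\k$ by standard Poisson tail bounds. Next I apply \Prop~\ref{prop_coupling_int} iteratively $\vm^+$ times to couple the two random factor graphs one factor at a time. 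At each step, with probability $1 - \tilde O(1/n)$ the larger graph equals the smaller plus the newly added factor, and with probability $1 - O(n^{-2})$ the per-step symmetric difference is at most $\sqrt n \log n$. On typical sequences---supplied by \Sec~\ref{dd} and \Prop~\ref{typ_ass_prop}---the hypotheses of \Prop~\ref{prop_coupling_int} (bounded degrees, slack $\sum_i \vd_i - \sum_j \vk_j = \Omega(n)$, and the $\vsigma^*$-concentration condition) persist throughout the iteration.

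Summing over the $\vm^+$ steps, the expected total symmetric factor-difference is $\vm^+(1 + o(1))$. Applying the $\log(\eps^{-1})$ bound per factor, taking expectations, and dividing by $n$ yields
\begin{align*}
|\phi^*(\varepsilon_m, n) - \phi^*(\varepsilon^*_m, n)| \le \frac{\d \log(\eps^{-1})}{\k} |\varepsilon_m - \varepsilon^*_m| + o(1),
\end{align*}
where the $o(1)$ term absorbs the Poisson fluctuations of $\vm^+$ and the cumulative coupling errors of \Prop~\ref{prop_coupling_int}. Choosing $\delta < \varepsilon \k / (2 \d \log(\eps^{-1}))$ and $n_0$ large enough to push the $o(1)$ term below $\varepsilon/2$ delivers $|\phi^*(\varepsilon_m, n) - \phi^*(\varepsilon^*_m, n)| < \varepsilon$ for all $n \ge n_0$.

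The principal obstacle is the boundary case $\varepsilon^*_m = 0$: here $\vt_n$ is conditioned on $\cE_n$, so the slack $\sum_i \vd_i - \sum_j \vk_j$ is exactly zero, and the hypothesis of \Prop~\ref{prop_coupling_int} fails for the final additions. To circumvent this, I would introduce an intermediate parameter $\varepsilon' > 0$ small, couple $\varepsilon_m$ with $\varepsilon'$ via the above scheme (both enjoy linear slack), and relate $\phi^*(\varepsilon', n)$ to $\phi^*(0, n)$ via a crude estimate: any natural coupling of the two corresponding graphs has symmetric factor-difference at most $\Theta(\varepsilon' n)$, which contributes at most $\varepsilon' \d \log(\eps^{-1})/\k$ per variable to the difference. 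Taking $\varepsilon'$ sufficiently small relative to $\varepsilon$ then closes the argument.
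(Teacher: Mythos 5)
Your route is genuinely different from the paper's. The paper's proof is a one-liner: it reuses the global coupling built in \Sec~\ref{conc_tss_glob_proof} for \Prop~\ref{conc_tss_glob}, in which the two configuration models are coupled directly (identical weight functions on ``good'' factors, bijections matched per colour class via the switching method), and the number of disagreeing factors is bounded by the $\ell_1$-distance between the degree/assignment profiles; since changing $\varepsilon_{\mathrm{m}}$ by $\delta$ moves the typical coloured sequence by $O(\delta)$, the {\bf SYM} Lipschitz bound of $\log(\eps^{-1})$ per factor gives the claim in one step. You instead chain the one-factor-at-a-time coupling of \Prop~\ref{prop_coupling_int} over $\Theta(\delta n)$ additions. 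For interior $\varepsilon^*_{\mathrm{m}}$ this works: even though the ``clean addition'' event fails with cumulative probability $\Theta(\delta n)\cdot\tilde O(1/n)$, which is not small, your argument only needs the \emph{expected} symmetric difference, and each step contributes $O(1)+\sqrt n\log n\cdot\tilde O(1/n)+n\cdot O(n^{-2})=O(1)$ in expectation, so the bound $O(\delta)+o(1)$ on the free-entropy difference per variable survives. The slack hypothesis of \Prop~\ref{prop_coupling_int} also persists through the iteration as long as the terminal pruning parameter stays bounded away from $0$.

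The genuine gap is the boundary case, which you flag but do not close. When one of the two parameters is $0$ the model is conditioned on $\cE_n$, the slack $\sum_i\vd_i-\sum_j\vk_j$ vanishes, and \Prop~\ref{prop_coupling_int} is unavailable for the final $\Theta(\varepsilon' n)$ additions. Your fix is to assert that ``any natural coupling'' of $\G^*(n,\vm_{\varepsilon'},\vsigma^*)$ and $\G^*(n,\vm_0,\vsigma^*)$ has symmetric factor-difference $\Theta(\varepsilon' n)$ — but constructing such a coupling \emph{is} the nontrivial content at the boundary: the two graphs have different bijection distributions (one saturates all variable clones, the other leaves $\Theta(\varepsilon' n)$ cavities), and there is no a priori reason a pair of independently drawn instances, or an ad hoc pairing, differs in only $O(\varepsilon' n)$ factors. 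The statement you need here is exactly the switching-method coupling of \Sec~\ref{conc_tss_glob_proof} (equivalently the claim $\Erw[\log Z(\G^*)]=\Erw[\log Z(\G^*(n,\vm_\eps,\vsigma^*))]+O(\eps n)$ from \Sec~\ref{Sec_ASS}), which requires no slack because it couples the two random bijections directly rather than adding factors sequentially. Without importing that tool, your argument establishes asymptotic continuity only on $(0,1)$, not at $\varepsilon^*_{\mathrm{m}}=0$ — and the $\varepsilon=0$ endpoint is the one actually used downstream in the proof of \Prop~\ref{Prop_prune}.
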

\begin{proof}
This result is immediate by combining Proposition \ref{conc_tss_glob} with the coupling in Section \ref{conc_tss_glob_proof} used to obtain Proposition \ref{conc_tss_glob} since we derived bounds in terms of the distance of coloured sequences $\acS$.
\end{proof}

\begin{proof}[Proof of \Prop~\ref{Prop_prune}]
We proceed with the proof for the configuration model and discuss the translation to simple factor graphs at the end.
Proposition \ref{ef_cont_prop} directly translates to degree distributions as follows.
We equip $\cT^*_n$ with the product metric induced by $\Delta$ discussed in Section \ref{dd}, preferably omitting the parts ensuring convergence of the higher moments. Since the underlying assignment distributions $\mu_k$ of the reference distribution (in Section \ref{typ_ass}) given $k$ are invariant to the choice of the degree distribution (analogously on the variable side), Proposition \ref{conc_tss_glob} with the coupling in Section \ref{conc_tss_glob_proof} ensures that $|\phi^*_{1,n}-\phi^*_{2,n}|$ is small for $n$ sufficiently large if $\Delta(t_1,t_2)$ is small, with $\phi^*_{i,n}$ denoting the quenched free entropy density of the teacher student model and $t_i\in\cT^\circ_{i,n}$ denoting typical degree sequences obtained from the degree distributions $(p_{\mathrm{d},i},p_{\mathrm{k},i})$.
This ensures that it is indeed sufficient to work with finitely supported degree distributions in order to approximate the quenched free entropy density (in the limit).

We're left to show that the Bethe functional is also continuous with respect to the degree distributions, then Proposition \ref{Prop_prune} follows from an $\varepsilon$-argument.
For given $d$ in the support of $\vd$, $k_i$ in the support of $\vk$, $h_i\in[k_i]$ and $\psi_i$ in the support of $\vpsi_{k_i}$ for $i\in[d]$, and finally $\mu_{i,j}$ in the support of $\pi\in\PP_*(\mcColSp)$ for $j\in[k_i]$ with $k=(k_i)_i$, $h=(h_i)_i$, $\psi=(\psi_i)_i$ and $\mu=(\mu_{i,j})_{i,j}$ let
\begin{align*}
M_{\mathrm{V}}(d,k,h,\psi,\mu)=\sum_{\mcCol\in\mcColSp}\prod_{i=1}^d\sum_{\tau\in\mcColSp^{k_i}}\vecone\{\tau_{h_i}=\mcCol\}\psi_i(\tau)\prod_{j\in[k_i]\setminus\{h_i\}}\mu_{i,j}(\tau_j)\textrm{.}
\end{align*}
Using the canonical bounds for $\mu$ and {\bf SYM} yields
\begin{flalign*}
q\varepsilon^d\prod_{i=1}^d q^{-(k_i-1)}\le M_{\mathrm{V}}(d,k,h,\psi,\mu)\le q\xi^d\prod_{i=1}^d q^{k_i-1}
\end{flalign*}
uniformly in $\psi$ and $\mu$. On the other hand, with $\pi\in\PP_*(\mcColSp)$ and with respect to Equation \ref{eqBFE} we have
\begin{align*}
\Erw[q^{-1}\xi^{-d}M_{\mathrm{V}}(d,k,(\vh_{k_i,i})_i,(\vpsi_{k_i,i})_i,(\vmu_{i,j,\pi})_{i,j})]=1\textrm{.}
\end{align*}
But now, using the uniform bounds inside the logarithm and the result for the expectation afterwards, the first contribution
\begin{align*}
\cB_1(\pi)=\Erw\left[q^{-1}\xi^{-\vd}\Lambda\left(M_{\mathrm{V}}(\vd,(\hat{\vk}_i)_i,(\vh_{\hat{\vk}_i,i})_i,(\vpsi_{\hat{\vk}_i,i})_i,(\vmu_{i,j,\pi})_{i,j})\right)\right]
\end{align*}
to the Bethe functional can be uniformly bounded by
\begin{align*}
\Erw\left[\ln\left(q\varepsilon^{\vd}\prod_{i=1}^{\vd} q^{-(\hat{\vk}_i-1)}\right)\right]\le\cB_1(\pi)\le\Erw\left[\ln\left(q\xi^{\vd}\prod_{i=1}^{\vd} q^{\hat{\vk}_i-1}\right)\right]\textrm{, }
\end{align*}
and is in particular finite. For the second contribution and fixed $k$ we obtain the uniform bound
\begin{align*}
M_{\mathrm{F}}(k,\vpsi_k,(\vmu_{j,\pi})_j)=\sum_\tau\vpsi_k(\tau)\prod_j\vmu_{j,\pi}(\tau_j)\le q^k\xi
\end{align*}
and $M_{\mathrm{F}}(k,\vpsi_k,(\vmu_{j,\pi})_j)\ge q^{-k}\varepsilon$ inside the logarithm, and the expectation
\begin{align*}
\Erw\left[M_{\mathrm{F}}(k,\vpsi_k,(\vmu_{j,\pi})_j)\right]=\xi\textrm{,}
\end{align*}
hence the second contribution
\begin{align*}
\cB_2(\pi)=\frac{\Erw[\vd]}{\Erw[\vk]}\Erw\left[(\vk-1)\Lambda\left(M_{\mathrm{F}}(\vk,\vpsi_{\vk},(\vmu_{j,\pi})_j)\right)\right]
\end{align*}
can be uniformly bounded by
\begin{align*}
\frac{\Erw[\vd]}{\Erw[\vk]}\Erw\left[(\vk-1)\xi\ln\left(q^{-\vk}\varepsilon\right)\right]\le\cB_2(\pi)\le\frac{\Erw[\vd]}{\Erw[\vk]}\Erw\left[(\vk-1)\xi\ln\left(q^{\vk}\xi\right)\right]\textrm{, }
\end{align*}
so in particular the expectations in the Bethe functional are finite.
However, most importantly the above suggests that the Bethe functional as a function of the degrees $\vd$, $\vk$ is uniformly continuous in the following sense. Let $\vd'$, $\vk'$ be finitely supported degrees such that both $\vd'$, $\vk'$ are close to $\vd$, $\vk$ and $\hat{\vd}'$, $\hat{\vk}'$ are close to $\hat{\vd}$, $\hat{\vk}$ in total variation (which gives bounds on the distance of the first moments), then so are the Bethe functionals uniformly for $\pi\in\PP_*(\mcColSp)$.
The argumentation is similar to the discussion in Section \ref{pp_cont}. The fact that the expectations are finite ensures that we can cut the tails (in $d$, $k=(k_i)_{i\in[d]}$) at arbitrarily small loss, leaving us with a uniform bound for the remaining contributions.
Choosing suitable (finitely supported) distributions $\vd'$, $\vk'$ sufficiently close to $\vd$, $\vk$ then ensures that cutting the tails with respect to $\vd'$, $\vk'$ comes at an arbitrarily small loss and further using the uniform bounds for the remainder we obtain a uniform bound on the distance of the Bethe functionals in terms of the distance of the degree distributions, thereby ensuring uniform continuity.
This immediately translates to $\sup_{\pi\in\PP_*(\mcColSp)}\cB(\pi)$.

With these continuity results we can show that the quenched free entropy density coincides with the supremum of the Bethe functional given that the assertion holds for finitely supported degree distributions.
For any given $\varepsilon$ choose degrees $\vd'$, $\vk'$ with finite support close to $\vd$, $\vk$ in the metric above.
Then the distance of the supremum of the Bethe functional with respect to the two pairs of degree distributions can be bounded by $\varepsilon/3$. Further, for $n$ sufficiently large the quenched free entropy density with respect to $\vd'$, $\vk'$ is at a distance at most $\varepsilon/3$ to the supremum of the Bethe functional with respect to $\vd'$, $\vk'$ since we obtained the results for bounded degrees.
But by the continuity result for the quenched free entropy density above, we know that for $n$ sufficiently large the quenched free entropy densities with respect to $\vd'$, $\vk'$ and with respect to $\vd$, $\vk$ are also at most at a distance $\varepsilon/3$. Taking $\varepsilon$ to $0$ completes the proof.

Since standard arguments, i.e., Section 5 in \cite{Coja_2018} show that there exists a simple $\G$ with the desired degree sequences with positive probability, the proposition readily follows. 
\end{proof}

\section{Proof of \Prop~\ref{Prop_ASS}} \label{Sec_Prop_ASS}

\subsection{Overview}\label{Sec_Prop_ASS_overview}
For a given $\eps>0$ we let $\vm_{\eps,n}$ be a Poisson variable with mean $(1-\eps)\d n/\k$.
Moreover, let $\G_{\eps,n}$ be the random factor graph with variables nodes $x_1,\ldots,x_n$ and factor nodes $a_1,\ldots,a_{\vm_{\eps,n}}$ obtained as follows.
Let
\begin{align*}
\cX&=\bigcup_{i=1}^n\{x_i\}\times[\vd_i],&\cA&=\bigcup_{i=1}^{\vm_{\eps,n}}\{a_i\}\times[\vk_i]
\end{align*}
contain clones of the variable nodes $x_1,\ldots,x_n$ and of the factor nodes, respectively.
Then choose a maximal matching $\vGamma_{\eps,n}$ of the complete bipartite graph on the vertex classes $\cX,\cA$.
For each matching edge we insert the corresponding variable--factor node edge into $\vG_{\eps,n}$.
Finally, for each factor node $a_i$ we choose a weight function $\psi_{a_i}$ independently from the distribution $P$.

%The random factor graph model $\G_{\eps,n}$ is invariant under permutations of the clones of the variable and factor nodes.
%Therefore, to simplify notation we may assume that the distribution of the weight function $\PSI$ itself is invariant under permutations of the inputs.
%Thus, for $k$ in the support of $\vk$, for a permutation $\kappa$ of $[k]$, $\sigma\in\Omega^k$ and for a $k$-ary $\psi\in\Psi$ let $\psi^\kappa(\sigma)=\psi(\sigma_{\kappa_1},\ldots,\sigma_{\kappa_k})$.
%We may safely assume for the rest of this section that $\PSI$ enjoys the following invariance property. 
%\begin{description}
%\item[PERM] for any $k\in\supp\vk$ and any permutation $\kappa$ of $[k]$ the weight functions $\PSI_k$ and $\PSI_k^\kappa$ are identically distributed.
%\end{description}

Let $\hat\G_{\eps,n},\G^*_{\eps,n}$ be the random factor graph models obtained from $\G_{\eps,n}$ via \eqref{eq_hatG}, \eqref{eq_starG}.
Further, let $\vsigma^*_n:\{x_1,\ldots,x_n\}\to\Omega$ be a uniformly random assignment.
Since $\Erw[\vm_{\eps,n}]<\sd n/\sk-\Omega(n)$, \whp\ some of the variable clones from $\cX$ remain vacant in the random factor graph $\G_{\eps,n}^*$.
Let $\cC^*$ denote the set of all such vacant clones.
As before, we refer to them as the cavities.
Further, let $(\vy_{i,j})_{i,j\geq1}$ be a sequence of uniformly chosen independent cavities.
Also let $\vd_\eps$ be a random variable with distribution $\Bin(\vd,1-\eps)$.
By Remark~\ref{remark_v}, the main step toward the proof of \Prop~\ref{Prop_ASS} is to show the following.

\begin{proposition}\label{Prop_ASS*}
We have
\begin{align*}
\Erw&\brk{\log Z_{\G^*_{\eps,n+1}}}-\Erw\brk{\log Z_{\G^*_{\eps,n}}}\\&
\leq\Erw\brk{q^{-1}\xi^{-\vd_\eps}\bc{\sum_{\sigma\in\Omega}\prod_{i=1}^{\vd_\eps}\PSI_{\hat\vk_i,i}(\sigma,\vsigma_{\vy_{i,2}}^*,\ldots,\vsigma_{\vy_{i,\hat\vk_i}}^*)}
\log\sum_{\sigma\in\Omega}\bck{\prod_{i=1}^{\vd_\eps}\PSI_{\hat\vk_i,i}(\sigma,\vsigma_{\vy_{i,2}},\ldots,\vsigma_{\vy_{i,\hat\vk_i}})}_{\G_{\eps,n}^*}}\\
&\quad-\frac{(1-\eps)\d}{\xi\k}\Erw\brk{(k_{\PSI}-1)\PSI(\vsigma^*_{\vy_{1,1}},\ldots,\vsigma^*_{\vy_{1,k_{\PSI}}})\log\bck{\PSI(\vsigma_{\vy_{1,1}},\ldots,\vsigma_{\vy_{1,k_{\PSI}}})}_{\G_{\eps,n}^*}} +o(1).
\end{align*}
\end{proposition}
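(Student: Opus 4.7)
The plan is to implement the Aizenman--Sims--Starr scheme adapted to the given-degree setting: view both $\G^*_{\eps,n+1}$ and $\G^*_{\eps,n}$ as constructed from a common base graph $\G_0^*$ by adding a small number of factors and, in the first case, also the variable $x_{n+1}$ together with its adjacent factors, and then track the contribution of each addition to $\log Z$. Concretely, I would take $\G_0^*$ to be the subgraph of $\G^*_{\eps,n+1}$ obtained by deleting $x_{n+1}$ and the factors incident to it, together with whatever further adjustment is required to match the factor count of $\G^*_{\eps,n}$. Since the Poisson increment $\vm_{\eps,n+1}-\vm_{\eps,n}$ has mean $(1-\eps)\d/\k$ while $x_{n+1}$ has degree $\vd$, the coupling amounts to adding $\vd$ factor--half-edges at $x_{n+1}$ and independently adjoining a small Poisson number of entire factor nodes; by \Prop s~\ref{Prop_NoelaRough} and~\ref{Prop_NoelaRoughVar} these couplings succeed up to a cumulative error of $\tilde O(1/n)$.

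Next, I compute the effect of each elementary addition on $\log Z$. Attaching a single factor $a$ with weight $\psi$ and neighbourhood $(y_1,\ldots,y_k)$ changes $\log Z$ by $\log\bck{\psi(\vsigma_{y_1},\ldots,\vsigma_{y_k})}$, the Boltzmann average being taken in the graph before the addition. Adjoining $x_{n+1}$ jointly with its $\vd_{n+1}$ adjacent factors $a'_1,\ldots,a'_{\vd_{n+1}}$, where $a'_i$ meets $x_{n+1}$ at position $h_i$ and meets cavities $y_{i,j}$ for $j\neq h_i$, produces a contribution $\log\sum_{\sigma\in\Omega}\bck{\prod_i\psi_{a'_i}(\vsigma_{y_{i,1}},\ldots,\sigma,\ldots,\vsigma_{y_{i,k_{a'_i}}})}$. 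The factors that appear in $\G^*_{\eps,n+1}$ but not in $\G^*_{\eps,n}$ (apart from those incident to $x_{n+1}$) generate analogous contributions on both sides; a Poisson--bookkeeping argument yields, after taking expectations, the second term in the stated bound with its characteristic $(k_{\PSI}-1)$ weighting, which reflects that the size-biased factor degree $\hat\vk$ is weighted by $\vk$ on both ends but only by $(\vk-1)$ for the internal accounting.

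To pass from the formal increment to the planted statistics, I invoke the Nishimori identity (\Prop~\ref{Prop_Nishi}, in the general form of \Prop~\ref{Lemma_genNishi}) in order to transfer computations between the reweighted $\hat\G$ model and the teacher--student model $\G^*$. Under Nishimori, the weight of a newly added factor whose half-edges land on cavities is distributed as $\PSI$ biased by the planted colours $\vsigma^*$ on its neighbours, producing the prefactor $q^{-1}\xi^{-\vd_\eps}\prod_i\PSI_{\hat\vk_i,i}(\sigma,\vsigma^*_{\vy_{i,2}},\ldots)$ in the first term; the size-biased $\hat\vk$ comes from the half-edge exploration on the factor side, and the Binomial-thinning $\vd_\eps=\Bin(\vd,1-\eps)$ reflects the Poisson pruning from $\vm$ down to $\vm_\eps$. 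Typicality of colour frequencies (\Prop~\ref{ac_main_prop}) and the pinning/regularity mechanism (\Lem s~\ref{lem_regularitylemma} and~\ref{Lemma_tpinning} combined with \Lem~\ref{lem_pairwise_symmetry}) guarantee that cavity marginals under $\mu_{\G^*_{\eps,n}}$ are asymptotically uniform and factorise across a bounded number of cavities, which is what makes the i.i.d.\ samples $(\vy_{i,j})$ in the statement meaningful. The ``$\leq$'' in the inequality is produced by Jensen's inequality, applied when pulling the expectation over the Poisson-distributed ``surplus'' factors inside the logarithm.

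The main technical obstacle is the interaction between the rigidity of the given-degree sequence and the teacher--student reweighting. For binomial factor graphs this step amounts to inserting independent factors; here every elementary addition perturbs the entire pairing globally, and it is precisely \Prop s~\ref{Prop_NoelaRough} and~\ref{Prop_NoelaRoughVar} that allow us to carry out such a perturbation with only an $\tilde O(1/n)$ cost per step, keeping the aggregate error $o(1)$. A secondary difficulty is to certify that the cavity marginals of $\mu_{\G^*_{\eps,n}}$ truly factorise on the boundaries of the newly inserted factors, which requires the pinning lemma together with the concentration statements of \Sec~\ref{conc}; without this factorisation the expectations $\bck{\cdot}_{\G^*_{\eps,n}}$ on the right-hand side could not be written in terms of independent samples from a common cavity law.
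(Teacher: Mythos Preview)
Your overall strategy matches the paper's: both build $\G^*_{\eps,n}$ and $\G^*_{\eps,n+1}$ from a common base graph $\G^*_{n,\vM^-}$ with slightly fewer factors, use the couplings of \Prop s~\ref{Prop_NoelaRough} and~\ref{Prop_NoelaRoughVar} to control the error, and then compute the increment in $\log Z$ produced by each elementary addition. The paper formalises this via Lemmas~\ref{Lemma_MM+}--\ref{Lemma_G''}; the $(k_\PSI-1)$ weight arises exactly as you say, from subtracting the factor-side contribution of Lemma~\ref{Lemma_G'} (weight $k_\PSI$) from that of Lemma~\ref{Lemma_G''} (weight $1$).

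Two points in your proposal are off, however. First, the ``$\leq$'' is \emph{not} produced by Jensen's inequality: the paper's computation (Lemmas~\ref{Lemma_G'} and~\ref{Lemma_G''}) in fact yields an equality up to $o(1)$; the inequality in the statement is simply all that is required downstream. There is no step where an expectation over a Poisson number of surplus factors is pulled inside a logarithm. Second, the pinning and regularity lemmas are \emph{not} invoked in the proof of \Prop~\ref{Prop_ASS*} itself. The Boltzmann averages $\bck{\prod_i\PSI_{\hat\vk_i,i}(\cdots)}_{\G^*_{\eps,n}}$ in the statement are genuine joint averages over the product, not products of marginals; the $\vy_{i,j}$ are i.i.d.\ random \emph{cavity locations}, not i.i.d.\ samples from a factorised cavity law, so no factorisation property is required at this stage. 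Pinning enters only in the subsequent step (\Lem~\ref{Lemma_sigma}) that passes from \Prop~\ref{Prop_ASS*} to the Bethe functional in \Prop~\ref{Prop_ASS}. So your plan would go through once you drop the Jensen step and defer the pinning argument to where it is actually needed.
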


To prove \Prop~\ref{Prop_ASS*} we couple the random factor graphs $\G^*_{\eps,n+1}$ and $\G^*_{\eps,n}$.
Specifically, for each $j$ in the support of $\vk$ let $\vM_j$ be a random variable with distribution $\Po((1-\eps)\d\pr\brk{\vk=j}n/\k)$.
Further, let $\vDelta_j$ be a random variable with distribution
\begin{align*}
\vDelta_j\disteq\Po((1-\eps)\d\pr\brk{\vk=j}/\k).
\end{align*}
Additionally, let $\vM_j^+=\vM_j+\vDelta_j$.
Further, let $\vM=(\vM_j)_j,\vM^+=(\vM_j^+)_j$ and let $\vG_{n,\vM}^*$, $\vG_{n,\vM^+}^*$ be the factor graphs obtained as follows.
Choose a random maximal matching $\vGamma_{n,\vM}$ of the complete bipartite graph with vertex classes
\begin{align*}
\cX_n&=\bigcup_{i=1}^n\{x_i\}\times[\vd_i],&\cA_{n,\vM}&=\bigcup_{i\in\supp\vk}\bigcup_{j\in[\vM_i]}\{a_{i,j}\}\times[i].
\end{align*}
Then let $\vG_{n,\vM}$ be the random factor graph with variable nodes $x_1,\ldots,x_n$ and factor nodes $a_{i,j}$, $i\in\supp\vk$, $j\in[\vM_i]$ where each edge of $\vGamma_{n,\vM}$ induces an edge between the corresponding variable and check node.
Additionally, the factor nodes $a_{i,j}$ receive independent weight functions with distribution $P_i$.
Finally, $\vG^*_{n,\vM}$ is the factor graph obtained from $\vG_{n,\vM}$ via \eqref{eq_starG}.
The model $\vG^*_{n,\vM^+}$ is defined analogously.

\begin{lemma}\label{Lemma_MM+}
The random factor graphs $\G_{\eps,n}^*$,$\G_{n,\vM}$ and $\G_{\eps,n+1}^*$, $\G_{n+1,\vM^+}$ are identically distributed.
\end{lemma}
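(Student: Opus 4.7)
The plan is to reduce the claim to two classical ingredients: Poisson thinning, and the invariance of the construction \eqref{eq_starG} under relabeling of factor nodes. Concretely, $\G_{\eps,n}^*$ is obtained from $\G_{\eps,n}$ by the teacher--student reweighting, and $\G_{\eps,n}$ is itself built in three independent stages: (i) draw $\vm_{\eps,n}\sim\Po((1-\eps)\d n/\k)$ and then independently assign each of the $\vm_{\eps,n}$ factor nodes a degree $\vk_i$ distributed as $\vk$; (ii) choose a uniformly random maximal matching of the variable and factor clones; (iii) draw independent weight functions $\psi_{a_i}\sim P_{\vk_i}$. The first stage is exactly a Poisson point process on $\supp\vk$ with intensity $(1-\eps)\d\Pr[\vk=j]n/\k$. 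The plan is to package this observation to realize $\G_{n,\vM}^*$ inside $\G_{\eps,n}^*$.

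First I would invoke the Poisson thinning property to obtain that, for each $j\in\supp\vk$, the count $N_j=|\{i\in[\vm_{\eps,n}]:\vk_i=j\}|$ is distributed as $\Po((1-\eps)\d\Pr[\vk=j]n/\k)$, and the family $(N_j)_j$ is independent across $j$. By construction this is precisely the joint law of $\vM=(\vM_j)_j$. Next, given the vector $(N_j)_j=(\vM_j)_j$, the joint distribution of the bipartite multigraph produced in stage (ii) together with the i.i.d.\ weight functions in stage (iii) depends only on the multiset of degrees and is symmetric under permutations of factor-node labels of a common degree. Therefore, after identifying the ordered labels $a_1,\ldots,a_{\vm_{\eps,n}}$ with the grouped labels $a_{i,j}$ (for $i\in\supp\vk$, $j\in[\vM_i]$) via any measurable bijection compatible with the degree partition, the conditional law of $\G_{\eps,n}$ given its degree count vector agrees with that of $\G_{n,\vM}$.

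Since the teacher--student reweighting in \eqref{eq_starG} depends on the factor graph only through its adjacency and weights (and the ground truth $\sigma$ is external), this distributional identity is preserved: $\G_{\eps,n}^*\stackrel{d}{=}\G_{n,\vM}^*$ under the same label identification. The same line of reasoning, with $n$ replaced by $n+1$ and $\vM$ by $\vM^+$, delivers the second identity in distribution. Notice that $\vM^+=\vM+\vDelta$ with $\vDelta=(\vDelta_j)_j$ independent of $\vM$ by the additivity of independent Poisson variables, which is exactly the structural feature that makes this coupling useful for the telescoping argument of \Prop~\ref{Prop_ASS*}.

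There is no real obstacle here; the only point requiring care is the bookkeeping of factor-node labels, i.e.\ checking that the bijection between the ordered label set $\{a_i:i\in[\vm_{\eps,n}]\}$ and the degree-indexed label set $\{a_{i,j}:i\in\supp\vk,\ j\in[\vM_i]\}$ can be chosen measurably and consistently so that the subsequent coupling arguments (used in the proof of \Prop~\ref{Prop_ASS*}) respect the identification across the two sample sizes $n$ and $n+1$. Once this is fixed, the lemma is immediate from Poisson thinning and the invariance of \eqref{eq_starG} under relabeling.
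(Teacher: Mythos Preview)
Your argument is correct and is precisely the content behind the paper's one-line proof (``immediate from the construction''): Poisson thinning gives that the degree-$j$ factor counts in $\G_{\eps,n}$ are independent $\Po((1-\eps)\d\Pr[\vk=j]n/\k)$, i.e.\ have the law of $\vM$, and the remaining steps (matching, weight functions, teacher--student reweighting) depend only on the degree multiset, hence are invariant under relabeling. Your additional remark on the additive Poisson structure of $\vM^+=\vM+\vDelta$ is also the point that motivates this decomposition in the first place.
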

\begin{proof}
This is immediate from the construction.
\end{proof}

Let $\vgamma_i$ be the number of factor nodes of degree $i$ adjacent to $x_{n+1}$ in $\vG^*_{n,\vM^+}$.
Further, let $$\vM_i^-=0\vee(\vM_i-\vgamma_i)$$ and let $\vG_{n,\vM^-}$, $\vG_{n,\vM^-}^*$ be the corresponding factor graphs.
Additionally, let $\fD^-$ be the $\sigma$-algebra generated by $(\vM_i,\vec\gamma_i,\vec\Delta_i)_{i\geq1}$ and $\vsigma^*_{n+1}$ and let $\fM^-$ be the $\sigma$-algebra generated by $\fD^-$ and $\vG_{n,\vM^-}^*$.

To set up the coupling, obtain $\G'$ from $\G_{n,\vM^-}^*$ as follows.
Let $\cC^-$ be the set of cavities of $\G_{n,\vM^-}^*$.
Moreover, for $i\in\supp\vk$ and $j\in[\vM_i-\vM_i^-]$ let $a_{i,j}'$ be a new factor node.
Now, obtain $\G'$ by adding the $a_{i,j}'$ to $\G_{n,\vM^-}^*$ by pairing them to cavities from $\cC^-$ and choosing weight functions such that for any possible result of this experiment we have
\begin{align}
\pr\brk{\G'=g\mid\fM^-}&\propto\prod_{i,j}P_i(\psi_{a_{i,j}'})\psi_{a_{i,j}'}(\vsigma^*). \label{eq_ASS_factor_dist}
\end{align} 

Additionally, let $\G''$ be the random factor graph obtained from $\G_{n,\vM^-}^*$ via the following process.
Add a variable node $x_{n+1}$, factor nodes $a_{i,j}''$ for $i\in\supp\vk$, $j\in[\vM_i^+-\vM_i-\vgamma_i]$ and further factor nodes $a_{i,j}'''$ for $i\in\supp\vk$, $j\in[\vgamma_i]$ with $x_{n+1}\in\partial a_{i,j}'''$ according to the distribution
\begin{align*}
\pr\brk{\G''=g\mid\G_{n,\vM^-}^*,\vsigma_{n+1}^*}&\propto\prod_{i,j}P_i(\psi_{a_{i,j}''})\psi_{a_{i,j}''}(\vsigma^*)\prod_{i,j}P_i(\psi_{a_{i,j}'''})\psi_{a_{i,j}'''}(\vsigma^*)
\end{align*}

\begin{lemma}\label{Lemma_NoelaGraphs}
We have
\begin{align*}
\Erw\brk{\log Z(\G')}&=\Erw\brk{\log Z(\G_{n,\vM}^*)}+o(1),&
\Erw\brk{\log Z(\G'')}&=\Erw\brk{\log Z(\G_{n+1,\vM^+}^*)}+o(1).
\end{align*}
\end{lemma}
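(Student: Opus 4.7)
The plan is to derive both identities by iteratively coupling $\G'$ and $\G''$ to $\G^*_{n,\vM}$ and $\G^*_{n+1,\vM^+}$, respectively, using the cavity-injection results of Propositions~\ref{prop_coupling_int} and~\ref{Prop_oplus}, and then converting the probabilistic closeness of the graphs into an $o(1)$ closeness of $\Erw[\log Z]$.

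For the first identity, I would note that by construction $\G'$ arises from $\G^*_{n,\vM^-}$ by reinserting the $\sum_i(\vM_i-\vM_i^-)=\sum_i(\vgamma_i\wedge\vM_i)$ factor nodes of arities matching the discarded ones, with weights drawn according to the Nishimori-reweighted distribution \eqref{eq_ASS_factor_dist}. Since the factor-count profile and degree sequence of $\G'$ then agree with those of $\G^*_{n,\vM}$, and by Proposition~\ref{Prop_prune} we may assume $\vd,\vk$ are bounded (so that $\vgamma=\sum_i\vgamma_i=O(1)$ a.s.), it suffices to apply Proposition~\ref{prop_coupling_int} a bounded number of times, one extra factor at a time, building up the coupling step by step.

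At every step the hypotheses of Proposition~\ref{prop_coupling_int} need to be checked: the degree-slack condition $\sum_id_i-\sum_jk_j\ge\varepsilon n$ holds because we work in the $\varepsilon$-pruned model and only $O(1)$ half-edges are added; the colour-frequency conditions on $\vsigma^*$ follow from Proposition~\ref{typ_ass_prop} combined with Lemmas~\ref{lem_cav1}--\ref{lem_cav2}, using that $\vsigma^*$ is contiguous to $\hat\vsigma$ by Proposition~\ref{Prop_Nishi}. Concatenating the $\vgamma=O(1)$ couplings produces a joint construction under which
\[
\Pr[\G'=\G^*_{n,\vM}]=1-\tilde O(1/n),\qquad \Pr\bigl[|\G'\triangle \G^*_{n,\vM}|\le\sqrt n\log n\bigr]=1-O(1/n^2).
\]

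The next step is to convert these coupling bounds into the desired expectation bound. By assumption \textbf{SYM}, each factor contributes $O(1)$ to $|\log Z|$. On the coincidence event the logs agree exactly; on the event of small but positive symmetric difference we have $|\log Z(\G')-\log Z(\G^*_{n,\vM})|=O(\sqrt n\log n)$, which against probability $\tilde O(1/n)$ contributes $\tilde O(1/\sqrt n)$; and on the rare event of large symmetric difference the trivial bound $|\log Z|=O(n)$ against probability $O(1/n^2)$ contributes $O(1/n)$. Summed over the bounded number of coupling steps this gives an $o(1)$ error, as required. For the second identity the argument is identical except that one coupling step replaces $\G^*_{n,\vM^-}$ with a graph on $n+1$ variables via the variable-injection Proposition~\ref{Prop_oplus}, and the remaining $O(1)$ steps are factor injections via Proposition~\ref{prop_coupling_int}; the same SYM-based accounting closes the bound.

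I expect the main obstacle to be a purely bookkeeping one: verifying that the conditional law of $\G''$ (and of $\G'$) prescribed in the construction agrees exactly with the law produced by iterating the couplings, once the conditioning on the $\sigma$-algebra $\fM^-$ and on the realized values of $\vgamma_i,\vDelta_i,\vsigma^*_{n+1}$ is properly unwound. This matches the factorised form of \eqref{eq_ASS_factor_dist}, since the reweighting by $\prod_{i,j}P_i(\psi)\psi(\vsigma^*)$ is precisely the Nishimori tilt underlying \eqref{eq_starG}; checking this at each step, together with consistently applying the permutation-invariance of $P_k$ from Remark~\ref{remark_v} so that the freshly added factors can be identified with the ``$a_{m+1}$'' factor of Proposition~\ref{prop_coupling_int}, is the only delicate point, while the rest of the argument is routine control of tail events against SYM-based bounds on $\log Z$.
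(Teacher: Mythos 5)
Your proposal is correct and follows essentially the same route as the paper: add the missing factor nodes one at a time via Proposition~\ref{prop_coupling_int} (plus Proposition~\ref{Prop_oplus} for the new variable), then convert the coupling into an $o(1)$ bound on $\Erw[\log Z]$ by splitting into the coincidence event, the small-symmetric-difference event contributing $O(\sqrt n\log n)\cdot\tilde O(1/n)$, and the rare event handled by the deterministic $O(n)$ bound. The only quibble is that for $\G''$ the number of injected factors is governed by the Poisson variables $\vDelta_i$ rather than by $\vd_{n+1}$, so it is $O(\log n)$ with probability $1-o(1/n)$ rather than $O(1)$ almost surely — a bookkeeping point the paper handles explicitly and which does not affect your error accounting.
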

\begin{proof}
By construction, $\G'$ is obtained from $\G_{n,\vM^-}^*$ by adding $\sum_{i \in \supp \vk} \vM_i-\vM_i^-$ factor nodes.
Because all degrees are bounded, we have
\begin{align*}
    \Erw \brk{\sum_{i \in \supp \vk} \vM_i-\vM_i^-} = \Theta(1)
\end{align*}
Since a Poisson random variable with bounded expectation is bounded by $O(\log n)$ with probability $1-o(1/n)$, we may assume that the number of factor nodes added from $\G_{n,\vM^-}^*$ to $\G'$ is $O(\log n)$.
Let us add these factor nodes one-by-one.
Then by \Prop~\ref{prop_coupling_int} we can couple $\G'$ and $\G_{n,\vM}^*$ such that
\begin{align*}
    \Pr \brk{\G'=\G_{n,\vM}^*} = 1 - \tilde O(n^{-1})
\end{align*}
whence the first statement of the lemma follows.

%\color{orange}
Let $$ \cE = \cbc{ \G''-x_{n+1}-\sum_{i,j}a_{i,j}''-\sum_{i,j}a_{i,j}'''=\G_{n+1,\vM^+}^*-\sum_{i,j}a_{i,j}''-\sum_{i,j}a_{i,j}'''  }$$ be the event that on the first $n$ variables the factor graphs $\G''$ and $\G^*_{n+1, \vec M^+}$ coincide. Furthermore, denote by $$ \Delta_s = \abs{\G''-x_{n+1}-\sum_{i,j}a_{i,j}''-\sum_{i,j}a_{i,j}''' \quad \triangle \quad  \G_{n+1,\vM^+}^*-\sum_{i,j}a_{i,j}''-\sum_{i,j}a_{i,j}'''} $$ the amount of edges in which the factor graphs differ (restricted on the first $n$ variables). Then \Prop~\ref{prop_coupling_int} and \Prop~\ref{Prop_oplus} show that $\G''$ and $\G_{n+1,\vM^+}^*$ can be coupled such that 
\begin{align}\label{eqAddVar1}
\pr\brk{ \cE }=1-\tilde O(n^{-1}), \qquad \pr\brk{\Delta_s >\sqrt n\log n}  = \tilde O(n^{-2}).
\end{align} 
Furthermore, comparing the definitions of $\G''$ and $\G_{n+1,\vM^+}^*$, we see that given $\cE$ the factor graphs $\G''$ and $G_{n+1,\vM^+}^*$ satisfy
$$ \dTV \bc{\G''_{ \mid \cE}, G_{n+1,\vM^+ \mid \cE}^* } = \tilde O( n^{-1} ).$$ As all weight functions are strictly positive by assumption, there is a coupling of $\G''$ and $G_{n+1,\vM^+}^*$ such that
\begin{align}
    \label{eqAddVar3} \abs{\Erw \brk{\log Z(\G'')-\log Z(\G_{n+1,\vM^+}^*) \mid \cE}} = o(1).
\end{align}
Additionally, given $\cE_2 = \cbc{ \Delta_s \leq \sqrt n\log n }$, we find
\begin{align}
    \label{eqAddVar4}  \abs{\Erw\brk{ \log Z(\G'')-\log Z(\G_{n+1,\vM^+}^*)|=O(\sqrt n\log n) \mid \cE_2} }.
\end{align}
Since, finally, 
\begin{align}
    \label{eqAddVar5} \abs{\log Z(G'')-\log Z(\G_{n+1,\vM^+}^*)}=O(n)
\end{align} 
deterministically, the second assertion follows from \eqref{eqAddVar1}--\eqref{eqAddVar5}.
%\color{black}
% \Prop~\ref{prop_coupling_int} and \Prop~\ref{Prop_oplus} show that $\G''$ and $\G_{n+1,\vM^+}^*$ can be coupled such that
% \begin{align}\label{eqAddVar1}
% \pr\brk{\G''-x_{n+1}-\sum_{i,j}a_{i,j}''-\sum_{i,j}a_{i,j}'''=\G_{n+1,\vM^+}^*-\sum_{i,j}a_{i,j}''-\sum_{i,j}a_{i,j}'''}&=1-\tilde O(1/n),\\
% \pr\brk{\abs{\G''-x_{n+1}-\sum_{i,j}a_{i,j}''-\sum_{i,j}a_{i,j}'''=\G_{n+1,\vM^+}^*-\sum_{i,j}a_{i,j}''-\sum_{i,j}a_{i,j}'''}>\sqrt n\log n}&=\tilde O(1/n^2).
% \label{eqAddVar2}
% \end{align}
% Furthermore, comparing the definitions of $\G''$ and $\G_{n+1,\vM^+}^*$, we see that given $\G''-x_{n+1}-\sum_{i,j}a_{i,j}''-\sum_{i,j}a_{i,j}'''=\G_{n+1,\vM^+}^*-\sum_{i,j}a_{i,j}''-\sum_{i,j}a_{i,j}'''$ the factor graphs $\G''$ and $G_{n+1,\vM^+}^*$ have total variation distance $o(1)$ (or actually $\tilde O(1/n)$).
% In effect, since all weight functions are strictly positive, they can be coupled such that the conditional expectation of $|\log Z(\G'')-\log Z(\G_{n+1,\vM^+}^*)|$ is $o(1)$.
% Additionally, given that $\G''$ and $\G_{n+1,\vM^+}^*$ differ in no more than $O(\sqrt n\log n)$ edges, we have $|\log Z(\G'')-\log Z(\G_{n+1,\vM^+}^*)|=O(\sqrt n\log n)$.
% Since, finally, $|\log Z(G'')-\log Z(\G_{n+1,\vM^+}^*)|=O(n)$ deterministically, the second assertion follows from \eqref{eqAddVar1}--\eqref{eqAddVar2}.

\end{proof}

Let $(\vgamma_i')_{i\in\supp\vk}$ be a random vector with distribution
\begin{align*}
\vgamma_i'&=\sum_{h=1}^{\vd_\eps}\vecone\{\hat\vk_h=i\}.
\end{align*}

\begin{lemma}\label{Lemma_NoelaGamma}
We have $\dTV((\vgamma_i)_{\in\supp\vk},(\vgamma_i')_{i\in\supp\vk})=o(1)$.
\end{lemma}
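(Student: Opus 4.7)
The plan is to reduce the comparison of $(\vgamma_i)_{i\in\supp\vk}$ and $(\vgamma_i')_{i\in\supp\vk}$ to an analysis of the pairing structure underlying $\vG^*_{n+1,\vM^+}$, using the symmetry encoded in \textbf{SYM} to neutralise the teacher-student reweighting. Throughout I condition on $\vd_{n+1}$ and on $\vM^+$. Since Proposition~\ref{Prop_prune} lets us assume that $\vd$ and $\vk$ are bounded, standard Poisson concentration gives
\begin{align*}
\sum_{i=1}^{n+1}\vd_i=\d n(1+o(1)),\qquad \sum_j j\vM^+_j=(1-\eps)\d n(1+o(1))
\end{align*}
with probability $1-o(1)$, so the total cavity count on the variable side is $\eps\d n(1+o(1))$ and each clone of $x_{n+1}$ has an abundance of possible partners.

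Next, I would pass from the reweighted model $\vG^*_{n+1,\vM^+}$ to the null model $\vG_{n+1,\vM^+}$ when computing the law of the neighbourhood of $x_{n+1}$. The Radon-Nikodym derivative $\psi_G(\vsigma^*)/\Erw[\psi_{\vG_{n+1,\vM^+}}(\vsigma^*)]$ factors over factor nodes. For any single factor $a$ of degree $k$ attached through position $h\in[k]$ to a clone of $x_{n+1}$ carrying colour $\vsigma^*_{n+1}$, the expected weight summed over the colours at the other $k-1$ positions equals $q^{k-1}\xi$ uniformly in $h$ and in $\vsigma^*_{n+1}$ by \textbf{SYM}. Hence, after integrating out the other neighbours of each factor incident with $x_{n+1}$, the reweighting contributes a factor $\xi$ per such edge that is the same across degree classes. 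Replaying the bookkeeping behind Claim~\ref{Claim_pebble_2}, but with only the $\vd_{n+1}=O(1)$ pairings emanating from $x_{n+1}$ exposed, one obtains
\begin{align*}
\dTV\bc{(\vgamma_i)_{i\in\supp\vk},(\tilde\vgamma_i)_{i\in\supp\vk}}=\tilde O(1/n),
\end{align*}
where $(\tilde\vgamma_i)_i$ denotes the corresponding degree count in the null pairing model.

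Finally, in the null model each of the $\vd_{n+1}$ clones of $x_{n+1}$ is paired with a uniformly random element of the remaining half-edge/cavity pool. Because $\vd_{n+1}=O(1)$ and the pool has size $\Omega(n)$, sampling without replacement differs from sampling with replacement by $O(1/n)$, so the $\vd_{n+1}$ choices are asymptotically independent. The single-clone probability of attaching to a degree-$j$ factor equals $j\vM^+_j/\sum_i\vd_i$, which by the concentration $\vM^+_j/n\to(1-\eps)\d\pr[\vk=j]/\k$ converges to $(1-\eps)\pr[\hat\vk=j]$; the complementary probability of becoming a cavity converges to $\eps$. Writing the joint law of $(\tilde\vgamma_i)_i$ as the degree histogram of a $\Bin(\vd_{n+1},1-\eps)$-sized i.i.d.\ sample from $\hat\vk$, up to $o(1)$ total-variation error, recovers precisely the law of $(\vgamma_i')_i$.

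The main obstacle is the second step: making the cancellation afforded by \textbf{SYM} quantitative in total variation. Concretely this requires a variant of Lemma~\ref{lem_coupling_pebble} tailored to exposing only the $\vd_{n+1}$ half-edges incident with $x_{n+1}$, rather than the entire factor side. Since all degrees are bounded and only finitely many half-edges are involved, each exposed edge contributes an error of order $\tilde O(1/n)$, which telescopes to the required $o(1)$ bound.
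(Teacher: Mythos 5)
Your proposal follows essentially the same route as the paper: neutralise the teacher--student reweighting of the edges at $x_{n+1}$ via \textbf{SYM}, then read off the neighbourhood law from the uniform pairing model, where each of the $\vd_{n+1}$ clones is independently (up to $O(1/n)$) a cavity with probability $\eps+o(1)$ or attaches to a factor of degree distributed as $\hat\vk$. The paper implements the cancellation by packaging $x_{n+1}$ together with its adjacent factors into a single super-factor node and invoking \Prop~\ref{prop_coupling_int_m}, rather than by redoing the Claim~\ref{Claim_pebble_2} bookkeeping edge by edge, but the substance is the same.

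One point in your second step is stated too quickly and is worth flagging. You argue that \textbf{SYM} gives a per-edge factor $\xi$ because ``the expected weight summed over the colours at the other $k-1$ positions equals $q^{k-1}\xi$.'' That identity sums \emph{uniformly} over the colour tuples at the remaining positions, whereas in the pairing those positions are filled by cavities whose \emph{empirical} colour distribution governs the expected weight. The cancellation is therefore only valid once one knows that the cavities carry each colour $(1+o(1))\Delta/q$ times; this is exactly why the paper conditions on the event $\cR$ and invokes \Lem s~\ref{lem_cav1} and~\ref{lem_cav2}. Your sketch points at the right machinery (the set $\cY$ underlying Claim~\ref{Claim_pebble_2} encodes precisely this balance), but the colour-balance requirement should be stated explicitly rather than absorbed into ``uniformly in $h$ and in $\vsigma^*_{n+1}$ by \textbf{SYM}.'' With that ingredient added, your argument closes.
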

\begin{proof}
Let $\cE$ be the event that the new variable node $x_{n+1}$ is adjacent to particular factor nodes $\alpha_1,\ldots,\alpha_\ell$, ordered according to the clones of $x_{n+1}$ that they connect to.
Let $\kappa_1,\ldots,\kappa_\ell$ be the degrees of $\alpha_1,\ldots,\alpha_\ell$.
Furthermore, let $\G^\star$ be the factor graph obtained from $\G_{n+1,\vM^+}$ by removing $x_{n+1}$ and its adjacent factor nodes.
Finally, let $\cR$ be the event that $\G^\star$ has $(1+o(1))\Delta/q$ cavities with each possible value $\tau\in\Omega$ under $\vsigma^*$.
Then \Prop~\ref{prop_coupling_int_m} implies that
\begin{align}\label{eqLemma_NoelaGamma0}
\pr\brk{\cE}&=\pr\brk{\cE\mid\G^\star\in\cR}+o(1).
\end{align}
To be precise, in order to apply \Prop~\ref{prop_coupling_int_m} we think of $x_{n+1}$ and its adjacent factor nodes $\alpha_1,\ldots,\alpha_\ell$ as a single `super-factor node' with weight function
\begin{align*}
\psi_{x_{n+1},\alpha_1,\ldots,\alpha_\ell}(\sigma)=\sum_{\sigma_{x_{n+1}}\in\Omega}\prod_{i=1}^\ell\psi_{\alpha_i}(\sigma_{x})_{x\in\partial\alpha_\ell}&&(\sigma\in\Omega^{V_n}).
\end{align*}
Furthermore, the random factor graph model $\G^\star$ can be described as follows.
There are $\ell$ fewer factor nodes, and thus \Prop~\ref{prop_coupling_int_m} and {\bf SYM} imply that \whp\
\begin{align}\label{eqLemma_NoelaGamma1}
\frac{\Erw[\psi_{\G^\star}(\vsigma^*)\mid\vsigma^*]}{\Erw[\psi_{\G_{n+1,\vM^+}}(\vsigma^*)\mid\vsigma^*]}=\xi^{-\ell}.
\end{align}
Similarly, 
\begin{align}\label{eqLemma_NoelaGamma2}
\frac{\Erw[\psi_{\G^\star}(\vsigma^*)\mid\vsigma^*]}{\Erw[\psi_{\G_{n+1,\vM^+}}(\vsigma^*)\mid\vsigma^*,\cE]}=\xi^{-\ell}.
\end{align}
Combining \eqref{eqLemma_NoelaGamma1}--\eqref{eqLemma_NoelaGamma2}, we obtain
\begin{align}\label{eqLemma_NoelaGamma4}
\pr\brk{\cE}&\sim\pr\brk{\vd_\eps=\ell}\prod_{h=1}^\ell\pr\brk{\hat\vk_h=\kappa_h}.
\end{align}
Finally, \Lem s \ref{lem_cav1} and \ref{lem_cav2} ensure that \whp there are $(1+o(1))\Delta/q$ cavities of each possible colour $\tau\in\Omega$. Thus, the assertion follows from~\eqref{eqLemma_NoelaGamma4}.
\end{proof}

\begin{lemma}\label{Lemma_G'}
We have
\begin{align*}
\Erw\brk{\log Z_{\G'}-\log Z_{\G_{n,\vM^-}}}&=\frac{(1-\eps)\d}{\xi\k}\Erw\brk{k_{\PSI}\PSI(\vsigma^*_{\vy_{1,1}},\ldots,\vsigma^*_{\vy_{1,k_{\PSI}}})\log\bck{\PSI(\vsigma_{\vy_{1,1}},\ldots,\vsigma_{\vy_{1,k_{\PSI}}})}_{\G_{\eps,n}}} +o(1).
\end{align*}
\end{lemma}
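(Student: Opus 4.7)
}
The plan is to expose $\log Z_{\G'}-\log Z_{\G_{n,\vM^-}^*}$ as a sum of at most $N=\sum_i(\vM_i-\vM_i^-)=O(1)$ single-factor Boltzmann-averaging contributions, and then compute the expectation by unpacking the tilted distribution \eqref{eq_ASS_factor_dist}. Concretely, label the added factors $a_1,\ldots,a_N$ in any order and let $G_\ell$ be $\G_{n,\vM^-}^*$ with $a_1,\ldots,a_\ell$ added. Telescoping gives
\begin{align*}
\log Z_{\G'}-\log Z_{\G_{n,\vM^-}^*}=\sum_{\ell=1}^N\log\bck{\psi_{a_\ell}(\sigma_{\partial a_\ell})}_{G_{\ell-1}}.
\end{align*}
Since $N$ is $O(1)$ in expectation with exponential tails (it is a sum of Poisson-thinnings with bounded means) and since attaching or removing a single factor node only perturbs the Boltzmann distribution on macroscopic scales by $\tilde O(1/n)$ by \Prop~\ref{prop_coupling_int}, one can replace $\bck{\cdot}_{G_{\ell-1}}$ by $\bck{\cdot}_{\G_{n,\vM^-}^*}$ and in turn by $\bck{\cdot}_{\G_{\eps,n}^*}$ at an additive cost of $o(1)$; here $\G_{\eps,n}^*\disteq\G_{n,\vM}^*$ by Lemma~\ref{Lemma_MM+} differs from $\G_{n,\vM^-}^*$ by only $O(1)$ factors.

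Next, condition on $N$ and on the multi-index $(\vM_i-\vM_i^-)_i$. By \eqref{eq_ASS_factor_dist} the joint distribution of the weight function $\psi_{a}$ of an added degree-$k$ factor and the tuple $y_1,\ldots,y_k$ of cavities it attaches to is, up to normalisation, $P_k(\psi)\psi(\vsigma^*_{y_1},\ldots,\vsigma^*_{y_k})$. Applying Lemma~\ref{lem_cav2}, the empirical distribution of $\vsigma^*$ on $\cC^-$ is uniform on $\Omega$ up to error $\tilde O(n^{-1/2})$ \whp, so summing over $y_1,\ldots,y_k$ and using {\bf SYM} ($q^{-k}\sum_\sigma\psi(\sigma)=\xi$ for every $\psi\in\Psi_k$) yields a normalising constant of $(1+o(1))|\cC^-|^k\xi$. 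Hence for a single degree-$k$ factor the expected contribution to the log partition function is
\begin{align*}
\frac{1+o(1)}{\xi}\Erw\brk{\PSI_k(\vsigma^*_{\vy_{1,1}},\ldots,\vsigma^*_{\vy_{1,k}})\log\bck{\PSI_k(\vsigma_{\vy_{1,1}},\ldots,\vsigma_{\vy_{1,k}})}_{\G_{\eps,n}^*}},
\end{align*}
with $\PSI_k\sim P_k$ and $\vy_{1,1},\ldots,\vy_{1,k}$ i.i.d.\ uniform cavities.

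Finally, the number of added degree-$k$ factors has mean $\Erw[\vM_k-\vM_k^-]=\Erw[\vgamma_k]$, and by Lemma~\ref{Lemma_NoelaGamma} the distribution of $(\vgamma_k)_k$ is within $o(1)$ total variation of $(\vgamma_k')_k=\bc{\sum_{h=1}^{\vd_\eps}\vecone\{\hat\vk_h=k\}}_k$. Using $\Erw[\vgamma_k']=(1-\eps)\d\cdot k\pr[\vk=k]/\k$, weighting the per-factor contribution by $\Erw[\vgamma_k]$, summing over $k$ in the support of $\vk$, and recognising that $\sum_kk\pr[\vk=k]f(k)=\Erw[k_{\PSI}f(k_{\PSI})]$ produces exactly the right-hand side of the claim. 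The main obstacle is bookkeeping the accumulated approximation errors: one has to show that replacing $\bck{\cdot}_{G_{\ell-1}}$ by $\bck{\cdot}_{\G_{\eps,n}^*}$ and the empirical cavity distribution by the uniform one introduces $o(1)$ error in expectation even after multiplying by the random (but bounded in $L^{1+\delta}$) number $N$ of added factors; this follows from \Prop~\ref{prop_coupling_int}, Lemmas~\ref{lem_cav1}--\ref{lem_cav2} and the uniform bound $\log\bck{\psi}\in[\log\eps,\log(1/\eps)]$ supplied by {\bf SYM}.
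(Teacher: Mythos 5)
Your proposal is correct and follows essentially the same route as the paper: the paper likewise writes $\log(Z_{\G'}/Z_{\G^*_{n,\vM^-}})$ as the log of the Boltzmann average of the product of the added factors, evaluates the per-factor contribution via the tilted distribution \eqref{eq_ASS_factor_dist} together with {\bf SYM} and the near-uniform cavity colours (giving the $1/\xi$ normalisation), weights by the expected number $\Erw[\vgamma_k]\approx(1-\eps)\d k\pr[\vk=k]/\k$ of added degree-$k$ factors, and finally swaps $\G^*_{n,\vM^-}$ for $\G^*_{\eps,n}$ using their $o(1)$ total variation distance and the boundedness of the integrand. Your write-up is simply a more explicit version (telescoping, error bookkeeping) of the paper's terse argument.
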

\begin{proof}
Since $\G'$ is obtained from $\G_{n,\vM^-}$ by adding factor nodes $a'_{i,j}$ for $i \in \supp \vk$ and $j \in \brk{\vM_i-\vM_i^-}$ according to \eqref{eq_ASS_factor_dist}, we obtain
\begin{align*}
    \log \frac{Z_{\G'}}{Z_{\G_{n,\vM^-}}} = \log \Bigg \langle \prod_{i \in \supp \vk} \prod_{j \in \brk{\vM_i-\vM_i^-}} \psi_{a'_{i,j}} \bc{\SIGMA(\partial_1 a'_{i,j}), \dots, \SIGMA(\partial_{k_{a'_{i,j}}} a'_{i,j})} \Bigg \rangle_{\G^*_{n,\vM^-}}
\end{align*}
Therefore, with $(\vy_i)_{i\geq1}$ signifying independent uniformly random cavities of $\G_{n,\vM^-}^*$, we obtain
\begin{align}\label{eqLemma_G'1}
\Erw\brk{\log Z_{\G'}-\log Z_{\G_{n,\vM^-}}}&=\frac{(1-\eps)\d}{\xi\k}\Erw\brk{k_{\PSI}\PSI(\vsigma^*_{\vy_{1,1}},\ldots,\vsigma^*_{\vy_{1,k_{\PSI}}})\log\bck{\PSI(\vsigma_{\vy_{1,1}},\ldots,\vsigma_{\vy_{1,k_{\PSI}}})}_{\G_{n,\vM^-}^*}} +o(1).
\end{align}
Since $\G_{n,\vM^-}^*$ and $\G_{\eps,n}^*$ have total variation distance $o(1)$ while the expression inside the expectation is bounded, the assertion follows from \eqref{eqLemma_G'1}.
\end{proof}

\begin{lemma}\label{Lemma_G''}
We have
\begin{align*}
\Erw&\brk{\log Z_{\G''}-\log Z_{\G_{n,\vM^-}}}\\&\quad=\Erw\brk{q^{-1}\xi^{-\vd_\eps}\bc{\sum_{\sigma\in\Omega}\prod_{i=1}^{\vd_\eps}\PSI_{\hat\vk_i,i}(\sigma,\vsigma_{\vy_{i,2}}^*,\ldots,\vsigma_{\vy_{i,\hat\vk_i}}^*)}
\log\sum_{\sigma\in\Omega}\bck{\prod_{i=1}^{\vd_\eps}\PSI_{\hat\vk_i,i}(\sigma,\vsigma_{\vy_{i,2}},\ldots,\vsigma_{\vy_{i,\hat\vk_i}})}_{\G_{\eps,n}}}\\
&\qquad+\frac{(1-\eps)\d}{\xi\k}\Erw\brk{\PSI(\vsigma^*_{\vy_{1,1}},\ldots,\vsigma^*_{\vy_{1,k_{\PSI}}})\log\bck{\PSI(\vsigma_{\vy_{1,1}},\ldots,\vsigma_{\vy_{1,k_{\PSI}}})}_{\G_{\eps,n}}} +o(1).
\end{align*}
\end{lemma}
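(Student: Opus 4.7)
The proof will parallel that of Lemma~\ref{Lemma_G'}, but with an added layer accounting for the introduction of the new variable node $x_{n+1}$ together with its $\vgamma_i$ adjacent factors of each degree $i$. My starting point is the identity
\begin{align*}
\frac{Z_{\G''}}{Z_{\G_{n,\vM^-}}} = \Bigl\langle \prod_{i,j} \psi_{a_{i,j}''}(\SIGMA_{\partial a_{i,j}''}) \cdot \sum_{\sigma\in\Omega} \prod_{i,j} \psi_{a_{i,j}'''}(\sigma, \SIGMA_{\partial a_{i,j}'''\setminus\{x_{n+1}\}}) \Bigr\rangle_{\G_{n,\vM^-}^*},
\end{align*}
which isolates the two classes of added structure: the non-adjacent factors $a_{i,j}''$ are going to produce the second (factor) term of the asserted identity, while the combined contribution of $x_{n+1}$ with its adjacent factors $a_{i,j}'''$ will produce the first (variable) term.

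To handle both I first argue that the Boltzmann average factorises approximately over the pairwise-disjoint cavity tuples that the added factors attach to. Standard Poisson tail bounds ensure that at most $O(\log n)$ factors are added w.h.p.; the pinning lemma (Lemma~\ref{Lemma_tpinning}) combined with the symmetry extension (Lemma~\ref{lem_pairwise_symmetry}) will then yield joint approximate independence of the spins on these cavities under $\mu_{\G_{n,\vM^-}^*}$. Applied to the non-adjacent factors, the expected contribution becomes $\sum_{i,j}\Erw[\log\langle\psi_{a_{i,j}''}\rangle]$. By a Poisson calculation the count of added non-adjacent factors of degree $i$ has expectation $(1-\eps)\d\pr[\vk=i]/\k$, and each weight function has the Nishimori-tilted law $P_i(\cdot)\psi(\vsigma^*)/\xi$ conditional on the attached cavities; a calculation parallel to the one in the proof of Lemma~\ref{Lemma_G'} then produces precisely the second term of the identity.

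For the adjacent contribution I apply Lemma~\ref{Lemma_NoelaGamma} to replace $(\vgamma_i)_i$ by $(\vgamma_i')_i$ at total variation cost $o(1)$, which lets me re-index the adjacent factors as arities $\hat\vk_1,\ldots,\hat\vk_{\vd_\eps}$ in the order of the clones of $x_{n+1}$. By Remark~\ref{remark_v} I may assume each such factor $a_h'$ attaches at position one to $x_{n+1}$ and at its remaining $\hat\vk_h-1$ slots to independent uniform cavities $\vy_{h,2},\ldots,\vy_{h,\hat\vk_h}$. Conditionally on $\vsigma^*_{n+1}$ (uniform on $\Omega$) and on the cavity endpoints of $\vsigma^*$, the weight function of $a_h'$ has Nishimori-tilted law $P_{\hat\vk_h}(\cdot)\psi(\vsigma^*_{n+1},\vsigma^*_{\vy_{h,2}},\ldots)/\xi$. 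Taking expectations and summing against the uniform $q^{-1}$ distribution of $\vsigma^*_{n+1}$ produces the factor $q^{-1}\xi^{-\vd_\eps}\sum_\sigma\prod_h\vpsi_{\hat\vk_h,h}(\sigma,\cdot)$ multiplying the log-Boltzmann average, i.e.~the first term. Swapping $\langle\cdot\rangle_{\G_{n,\vM^-}^*}$ for $\langle\cdot\rangle_{\G_{\eps,n}^*}$ then costs $o(1)$ by the contiguity guaranteed by Lemmas~\ref{Lemma_MM+} and~\ref{Lemma_NoelaGraphs}, combined with the uniform boundedness of integrands from {\bf SYM}.

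The principal technical obstacle is the decorrelation step: verifying that the log-Boltzmann average of a product of random weight functions attached to an $O(\log n)$-sized random set of cavities decomposes as the corresponding sum of logs up to $o(1)$. Since the pinning lemma delivers only pairwise approximate independence, boosting this to joint approximate independence over a number of cavities that grows (even slowly) with $n$ will require a careful application of Lemma~\ref{lem_pairwise_symmetry} together with a union bound over the added factors and tail control on the Poisson counts $\vDelta_i$ and $\vgamma_i$; assumption {\bf SYM} will be essential throughout to keep the weight functions, and hence each individual $\log\langle\cdot\rangle$, uniformly bounded away from zero and infinity.
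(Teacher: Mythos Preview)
Your approach matches the paper's: you write the partition function ratio as a single Boltzmann average, split into the variable block $(x_{n+1},a''')$ and the non-adjacent factors $a''$, invoke \Lem~\ref{Lemma_NoelaGamma} to re-index the adjacent factors by $\hat\vk_1,\ldots,\hat\vk_{\vd_\eps}$, use the Nishimori-tilted weight law, and finally swap $\G_{n,\vM^-}^*$ for $\G_{\eps,n}^*$ by bounded total variation. The paper does exactly this, asserting the decomposition
\[
\log\frac{Z_{\G''}}{Z_{\G_{n,\vM^-}}}=\log\sum_\sigma\Bigl\langle\prod_{i,j}\psi_{a'''_{i,j}}\Bigr\rangle_{\G_{n,\vM^-}^*}+\log\Bigl\langle\prod_{i,j}\psi_{a''_{i,j}}\Bigr\rangle_{\G_{n,\vM^-}^*}
\]
and then citing \Lem~\ref{Lemma_NoelaGamma} and the total variation estimate. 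The paper leaves the factorisation step entirely implicit; you are right to flag it.

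Two remarks on your handling of that step. First, your worry about $O(\log n)$-wise independence is overstated. Look at the target: the first term keeps \emph{all} the adjacent factors together inside a single $\log\sum_\sigma\langle\prod_i\cdots\rangle$, so no internal factorisation of the variable block is needed. What you actually need to separate are the variable block and the individual non-adjacent factors $a''$, and the latter have total count $\sum_i(\vM_i^+-\vM_i-\vgamma_i)$ with \emph{bounded} expectation (of order $(1-\eps)\d/\k$). So you are factorising $O(1)$ blocks in expectation, not $\log n$. Second, invoking the pinning lemma here is not quite what the paper does: the model $\G_{\eps,n}^*$ in this section carries no pinning parameter, and the paper only brings in pinning later, in \Lem~\ref{Lemma_sigma}, after \Lem s~\ref{Lemma_G'} and~\ref{Lemma_G''} have already been established. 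A cleaner route to the factorisation you need is to add the $a''$ factors one at a time and observe that attaching each one at fresh uniform cavities shifts the expected contribution of the next by $o(1)$; {\bf SYM} keeps everything bounded throughout.
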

\begin{proof}
Since $\G''$ is obtained from $\G_{n,\vM^-}$ by adding a variable node $x_{n+1}$ with associated factor nodes $a'''_{i,j}$ for $i \in \supp \vk, j \in [\vgamma_i]$ and further factor nodes $a''_{i,j}$ for $i \in \supp \vk, j \in [\vM^+_i-\vM_i-\vgamma_i]$, we obtain
\begin{align}\label{eqLemma_G''1}
    \log \frac{Z_{\G''}}{Z_{\G_{n,\vM^-}}} &= \log \sum_{\sigma \in \Omega} \Bigg \langle \prod_{i \supp \vk} \prod_{j \in [\vgamma_i]} \psi_{a'''_{i,j}} \bc{\sigma, \SIGMA(\partial_2 a'''_{i,j}), \dots, \SIGMA(\partial_{k_{a'_{i,j}}} a'''_{i,j}} \Bigg \rangle_{\G^*_{n,\vM^-}}\\
    &\qquad \qquad + \log \Bigg \langle \prod_{i \in \supp \vk} \prod_{j \in \brk{\vM^+_i-\vM_i-\vgamma_i}} \psi_{a''_{i,j}} \bc{\SIGMA(\partial_1 a''_{i,j}), \dots, \SIGMA(\partial_{k_{a''_{i,j}}} a''_{i,j}} \Bigg \rangle_{\G^*_{n,\vM^-}}
\end{align}
The assertion follows from \eqref{eqLemma_G''1}, \Lem~\ref{Lemma_NoelaGamma} and the fact that $\G_{n,\vM^-}^*$ and $\G_{\eps,n}^*$ have total variation distance $o(1)$.
\end{proof}

\begin{lemma}\label{Lemma_sigma}
Let $(\vy_i)_{i\geq1}$ be a sequence of uniformly random independent cavities of $\G_{\eps,n}^*$.
For any $\ell\geq1$, $\delta>0$ there exists $\theta$ such that for all functions $f:\Omega^\ell\to[0,1]$ we have 
\begin{align}\label{eqLemma_sigma}
\abs{\Erw\brk{f(\vsigma^*_{\vy_{1,1}},\ldots,\vsigma^*_{\vy_{1,\ell}})\mid\G_{\eps,n}^*}
	-\Erw\brk{\bck{f(\vsigma_{\vy_{1,1}},\ldots,\vsigma_{\vy_{1,\ell}})}\mid\G_{\eps,n}^*} }<\delta.
\end{align}
\end{lemma}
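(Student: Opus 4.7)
The plan is to derive this as a consequence of the Nishimori identity (\Prop~\ref{Lemma_genNishi}) for the generalised random factor graph model, combined with the pinning machinery of Section~\ref{Sec_groundwork}. View $\G^*_{\eps,n}$ as an instance of the generalised model $\G^*(\sd,\sk,\sP,\theta,\vsigma^*)$ with $\theta$ unary pinning factors. By \Prop~\ref{Lemma_genNishi} the joint law of $(\vsigma^*,\G^*_{\eps,n})$ coincides with that of $(\hat\vsigma,\hat\G_{\eps,n})$, and in particular conditional on $\G^*_{\eps,n}=G$ the ground truth $\vsigma^*$ is distributed according to $\mu_G$. Since the cavity indices $\vy_{1,1},\ldots,\vy_{1,\ell}$ are uniformly random over the cavities of $\G^*_{\eps,n}$ and are chosen independently of $\vsigma^*$ given the graph, integrating yields, pointwise in $\G^*_{\eps,n}$,
\begin{align*}
\Erw\brk{f(\vsigma^*_{\vy_{1,1}},\ldots,\vsigma^*_{\vy_{1,\ell}}) \mid \G^*_{\eps,n}} = \Erw\brk{\bck{f(\vsigma_{\vy_{1,1}},\ldots,\vsigma_{\vy_{1,\ell}})}_{\G^*_{\eps,n}} \mid \G^*_{\eps,n}}.
\end{align*}
In the idealised Nishimori setup, this already yields the statement of the lemma with $\delta=0$.

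The parameter $\theta$ enters to absorb the discrepancy between this idealised identity and the actual construction of $\G^*_{\eps,n}$ used in the Aizenman--Sims--Starr scheme. Applying the pinning lemma (Lemma~\ref{Lemma_tpinning}) with $\theta=T(\delta/(2q^\ell),\Omega)$ guarantees that the Boltzmann measure $\mu_{\G^*_{\eps,n}}$ is pairwise $\delta/(2q^\ell)$-factorised on most pairs of cavities, and Lemma~\ref{lem_pairwise_symmetry} promotes this to $\ell$-wise approximate independence for bounded $\ell$. This regularity, together with the contiguity between $\G^*_{\eps,n}$ and $\hat\G_{\eps,n}$ supplied by \Prop~\ref{Prop_Nishi}, bounds the residual error uniformly over $f:\Omega^\ell\to[0,1]$ by $\delta/2$.

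The main technical obstacle is verifying that the generalised Nishimori identity from Section~\ref{Sec_Nishiredux} is compatible with the configuration-model structure of $\G^*_{\eps,n}$, whose cavities arise from the deliberate deficit between the variable-side and factor-side total degrees. Here the coupling of \Prop~\ref{prop_coupling_int} is invoked to show that adjoining the $\theta$ pinning factors perturbs $\G^*_{\eps,n}$ by at most $\tilde O(1/n)$ in total variation, so the Nishimori identity transfers to the original unpinned model up to an $o(1)$ error that is absorbed into $\delta$ for $n$ and $\theta$ chosen appropriately.
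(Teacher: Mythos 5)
There is a genuine gap at the central step. You assert that ``conditional on $\G^*_{\eps,n}=G$ the ground truth $\vsigma^*$ is distributed according to $\mu_G$,'' and that the lemma therefore holds with $\delta=0$ in the idealised setup. This is not what the Nishimori identity says. \Prop~\ref{Lemma_genNishi} gives the identity $\pr[\hat\G=G]\mu_G(\sigma)=\pr[\hat\vsigma=\sigma]\pr[\G^*=G\mid\vsigma^*=\sigma]$, i.e.\ the posterior equals the Boltzmann distribution only when the ground truth is drawn from the \emph{reweighted} law $\hat\vsigma$ of \eqref{eqhatsigma}. In the model at hand $\vsigma^*$ is uniform, and a direct Bayes computation gives
\begin{align*}
\pr\brk{\vsigma^*=\sigma\mid\G_{\eps,n}^*=G}=\mu_G(\sigma)\cdot\frac{\Erw[Z_{\G_{\eps,n}}]}{q^n\Erw[\psi_{\G_{\eps,n}}(\sigma)]},
\end{align*}
where the correction factor depends on $\sigma$ through its colour profile and, even for balanced $\sigma$, is only $\Theta(1)$ rather than $1+o(1)$ (by {\bf BAL}); for unbalanced $\sigma$ it degenerates and that regime must be discarded separately using the contiguity statement of \Prop~\ref{Lemma_genNishi}. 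So the discrepancy you need to control is a $\Theta(1)$ multiplicative distortion of the Radon--Nikodym derivative between the posterior and the Boltzmann measure, not a small total-variation perturbation caused by the pinning factors, and it cannot be ``absorbed into $\delta$'' by taking $n$ large.

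This also changes the role of pinning in the argument. A bounded Radon--Nikodym derivative between two measures does \emph{not} by itself imply that their marginals on a few coordinates are close (e.g.\ tilt the uniform measure on $\Omega^n$ by $1+\frac12\vecone\{\sigma_1=\omega\}$: the derivative is bounded but the first marginal shifts by a constant). What rescues the argument is that, after pinning with a suitable $\theta$, $\mu_{\G^*_{\eps,n}}$ is $\delta$-symmetric, and the coordinates $\vy_{1,1},\ldots,\vy_{1,\ell}$ are \emph{random} cavities; under these two hypotheses a $\Theta(1)$ Radon--Nikodym distortion cannot move the expected joint marginal on $\ell$ random coordinates by more than $O(\delta)$. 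This is precisely the content of the external lemma the paper invokes at this point (\cite[\Lem~3.17]{SSBL}), and it is the missing ingredient in your write-up: without it, the chain ``bounded derivative $+$ pinning $\Rightarrow$ \eqref{eqLemma_sigma}'' does not close. The invocation of \Prop~\ref{prop_coupling_int} to compare the pinned and unpinned models is likewise not needed here and does not address the actual source of error.
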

\begin{proof}
Going back to the definitions of $\G^*$ and the Boltzmann distribution, we obtain
\begin{align}\nonumber
\pr\brk{\vsigma^*=\sigma\mid\G_{\eps,n}^*=G}&=\frac{\pr\brk{\G_{\eps,n}^*=G\mid\vsigma^*=\sigma}q^{-n}}{\pr\brk{\G_{\eps,n}^*=G}}=\frac{\psi_G(\sigma)}{q^n\Erw[\psi_{\G_{\eps,n}}(\sigma)]\pr\brk{\G_{\eps,n}^*=G}}\\
&=\frac{\psi_G(\sigma)}{q^n\Erw[\psi_{\G_{\eps,n}}(\sigma)]\sum_{\tau\in\Omega^{V_n}}\psi_G(\tau)/\Erw[Z_{\G_{\eps,n}}]} =\frac{\psi_G(\sigma)}{Z_G}\cdot\frac{\Erw[Z_{\G_{\eps,n}}]}{q^n\Erw[\psi_{\G_{\eps,n}}(\sigma)]}.
\label{eqLemma_sigma1}
\end{align}
There are two cases to consider.
First, if $|\sigma^{-1}(\omega)|=n/q+O(\sqrt n)$, then {\bf BAL} ensures that $q^n\Erw[\psi_{\G_{\eps,n}}(\sigma)]=\Theta(\Erw[Z_{\G_{\eps,n}}])$.
Hence, \eqref{eqLemma_sigma1} shows that for such $\sigma$,
\begin{align} \label{eqLemma_sigma2}	
\pr\brk{\vsigma^*=\sigma\mid\G_{\eps,n}^*=G}=\Theta(\mu_G(\sigma)).
\end{align}
The second case is that $|\sigma^{-1}(\omega)|-n/q\gg\sqrt n$ for some $\omega\in\Omega$.
Then \Prop~\ref{Lemma_genNishi} shows that $$\pr\brk{\vsigma^*=\sigma},\Erw[\mu_{\G_{\eps,n}^*}(\sigma)]=o(1).$$
Thus, we may confine ourselves to the former case and assume that \eqref{eqLemma_sigma2} holds.
In light of \Lem~\ref{Lemma_tpinning} and \Prop~\ref{Lemma_genNishi} we may assume that $\mu_{\G_{\eps,n}^*}$ is $\delta$-symmetric for a small $\delta>0$ (at the expense of increasing $\theta$).
Hence, \eqref{eqLemma_sigma2} implies together with~\cite[\Lem~3.17]{SSBL} that \eqref{eqLemma_sigma} is satisfied.
\end{proof}

\begin{proof}[Proof of \Prop~\ref{Prop_ASS}]
This is an immediate consequence of \Prop~\ref{Prop_ASS*} and \Lem~\ref{Lemma_sigma}.
\end{proof}

\section{Proof of \Prop~\ref{Prop_int}}\label{Sec_Prop_int}
\noindent Throughout this section we assume that {\bf BAL}, {\bf SYM} and {\bf POS} hold.

\subsection{Preliminaries and setup}
The proof of \Prop~\ref{Prop_int} relies on showing that for \emph{any} distribution $\pi\in\PP_*(\Omega)$,
\begin{align}\label{eqInterpolation}
\frac1n\Erw[\ln Z(\hat\G)]&\geq \cB(\pi)
\end{align}
We will show \eqref{eqInterpolation} via the interpolation method.
To be precise, for a given $\pi\in\PP_*(\Omega)$ we will construct a family of random factor graph models parametrised by $t\in[0,1]$.
The proof of \Prop~\ref{Prop_int} is based on two pillars. 
First, it will be easy to see that the free energy of the $t=0$ model is $n\cB(\pi)+o(n)$ and that the $t=1$ model is identical to $\hat\G$.
Second, we will show that the derivative of $\Erw[\ln Z(\hat\G)]/n$ with respect to $t$ is non-negative.
(\ref{eqInterpolation}) readily follows.

The interpolating family is constructed from the generalised model described in \Sec~\ref{Sec_generalised}.
To this end, we introduce the model $\G_{t, \eps, \pi}$ which is constructed as follows. Let
\begin{align*}
    \vm_\eps(t) \sim \Po \bc{(1-\eps)t\d n/\k} \qquad \text{and} \qquad \vm'_\eps(t) \sim \Po \bc{(1-\eps)(1-t)\d n/\k}
\end{align*}
As before, each variable comes with a target degree $d_i \geq 0$ giving rise to a variable degree sequence $\sd$.
Similarly, each of the $\vm_\eps(t)$ factor nodes comes with target degree $\vec k_i \geq 2$, 
while each of the $\vm'_\eps(t)$ factor nodes comes with a target degree of $\vk'_i$, which are independent and distributed as $\vk$.
Let the total number of factor nodes be given by 
$$\vm = \vm_\eps(t) + \sum_{i=1}^{\vm'_\eps(t)} \vk'_i$$
and define the factor degree sequence as $$\sk = (k_i)_{i \in \vm_\eps(t)} \cup (1)_{i \in \vm'_\eps(t), j \in\vk'_i}.$$
Moreover,
let $(\PSI_{i,j}')_{i,j}$ be a sequence of independent random weight functions such that $\PSI_{i,j}'$ has distribution $\PSI_{k_i'}$.
Then with $(\vmu_{i,j,h})_{i,j,h\geq1}$ drawn independently from $\pi$ and $\vh_{i,j}\in[\vk_i']$ drawn independently and uniformly, we let
\begin{align*}
\psi_{b_{i,j}}: \sigma \in \Omega \mapsto \sum_{\tau \in \Omega^{k'_i}} \PSI_{i,j}' \vecone \cbc{\tau_{\vh_{i,j}} = \sigma} \prod_{h \neq j} \vmu_{i,j,h}(\tau_h)
\end{align*}
Finally, let $\G_{t,\eps, \pi}$ be the resulting random factor graph.
In addition, for an integer $T>0$ let $\G_{t, \eps, \pi,T}$ be the random factor graph obtain by adding $\vtheta$ random unary factors that each fix a random variable node to a uniformly random spin chosen from $\Omega$, with $\vtheta\in[T]$ drawn uniformly at random. If the number of factor nodes is not obvious from the context, we will write $\Gmm$ for completeness.
It is straightforward to check the following.

\begin{fact}\label{Fact_NoelasAssumptions}
The $\G_{t,\eps,\pi,T}$ model satisfies the assumptions of \Prop~\ref{prop_coupling_int}.
\end{fact}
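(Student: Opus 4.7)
The plan is to verify in turn each of the four hypotheses of \Prop~\ref{prop_coupling_int}: the uniform degree bound, the half-edge surplus $\sum_i d_i - \sum_j k_j \geq \eps n$, the symmetry condition \textbf{SYM'}, and the near-uniform balance of the ground truth $\vsigma^*$.

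The first, second, and fourth conditions are routine. For bounded degrees, I would invoke \Prop~\ref{Prop_prune} so that $\vd,\vk$ are $\leq C$, and note that the auxiliary unary factors $b_{i,j}$ and the $\vtheta\leq T$ pinning factors all have arity $1$. For the surplus, the Poisson parameters of $\vm_\eps(t)$ and $\vm_\eps'(t)$ are chosen so that the expected factor-side total degree is $(1-\eps)\Erw[\vd]n/\Erw[\vk]\cdot\Erw[\vk]+O(1) = (1-\eps)\Erw[\vd]n+O(1)$, while $\sum_i\vd_i=\Erw[\vd]n+o(n)$; hence the surplus exceeds $\eps\Erw[\vd]n/2$ with exponentially high probability by Chernoff and Poisson tail estimates. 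For the balance condition, since $\vsigma^*$ is uniform on $\Omega^{V_n}$ and the $\vd_i$ are bounded, Hoeffding's inequality applied separately to $|\vsigma^{*-1}(\omega)|$ and to $\sum_i\vd_i\vecone\{\vsigma^*_i=\omega\}$ yields the required $O(\sqrt n\log n)$ fluctuations, again with high probability.

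The crux of the proof is verifying \textbf{SYM'} for every weight function that can appear in the model. For the $\vm_\eps(t)$ ``original'' factors the weights are drawn from $\vpsi_k$, and \textbf{SYM'} is exactly the standing assumption \textbf{SYM}. For the auxiliary unary factors $\psi_{b_{i,j}}$, one must recognise that the relevant symmetry is induced by the construction rather than by each pointwise realisation: the randomness in $(\vmu_{i,j,h},\vh_{i,j},\PSI'_{i,j})$ is part of the probabilistic definition of the factor-graph model, and absorbing the $\vmu$'s into the distribution $P_i$ over weight functions makes the marginal
\[
\sum_{\sigma\in\Omega}\psi_{b_{i,j}}(\sigma)
=\sum_{\tau\in\Omega^{\vk'_i}}\PSI'_{i,j}(\tau)\prod_{h\neq\vh_{i,j}}\vmu_{i,j,h}(\tau_h)
\]
equal to $q\xi$ in the effective sense, using $\Erw[\vmu(\omega)]=1/q$ (since $\pi\in\PP_*(\Omega)$) combined with \textbf{SYM} applied to $\PSI'_{i,j}$. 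The positivity bounds $\eps<\psi<1/\eps$ are inherited from \textbf{SYM} on $\PSI'$ together with the boundedness of probability measures on $\Omega$. The pinning factors are handled analogously, with uniformity in $\vomega$ providing the required symmetry.

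The main obstacle, as noted, is the correct interpretation of \textbf{SYM'} for the auxiliary unary factors: the hypothesis in \Prop~\ref{prop_coupling_int} is phrased pointwise in the realised weight function, whereas the interpolation construction achieves symmetry only after integrating against $\pi$. Once one recognises that the $\vmu_{i,j,h}$'s should be regarded as part of $P_i$ rather than as parameters of a deterministic weight function, the verification becomes a direct calculation and all four hypotheses of \Prop~\ref{prop_coupling_int} are established.
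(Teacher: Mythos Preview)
The paper offers no proof beyond the sentence ``It is straightforward to check the following,'' so there is no argument to compare against. Your outline of the four hypotheses is sound, and you are right that the only non-trivial point is \textbf{SYM$'$} for the unary interpolation factors $\psi_{b_{i,j}}$.

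There is, however, a gap in your treatment of that point. For a unary weight function, the condition in \textbf{SYM$'$} reads $\psi(\omega)=\xi$ for \emph{every} $\omega\in\Omega$; the function must be constant. Your displayed equation checks only the sum $\sum_{\sigma}\psi_{b_{i,j}}(\sigma)$, which is strictly weaker. More seriously, the pointwise condition $\psi_{b_{i,j}}(\omega)=\xi$ genuinely fails for individual realisations of $(\PSI'_{i,j},\vh_{i,j},(\vmu_{i,j,h})_h)$, and ``absorbing the $\vmu$'s into $P_i$'' does not repair this: \textbf{SYM$'$} is quantified over every $\psi\in\Psi_i$, not stated for $\Erw[\vpsi_i]$. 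So the Fact, read literally, is a slight overstatement.

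What is actually true---and sufficient---is that the proof of \Prop~\ref{prop_coupling_int} only invokes \textbf{SYM$'$} through the distribution of $\vy^\flat$ in {\bf SHARP1}, which depends on the expected weight $\Erw[\vpsi_{a_i}]$ rather than on individual realisations (see Claim~\ref{Claim_pebble_2} and the proof of \Lem~\ref{lem_coupling_pebble}). The averaged condition does hold: fixing $\sigma$ and integrating over $\vmu_h\sim\pi\in\PP_*(\Omega)$ gives
\[
\Erw\bigl[\psi_{b_{i,j}}(\sigma)\bigr]
= q^{-(k'_i-1)}\sum_{\tau:\,\tau_{\vh}=\sigma}\Erw\bigl[\PSI'_{i,j}(\tau)\bigr]
= \xi
\]
by \textbf{SYM} applied to $\PSI'_{i,j}$. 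This is the calculation that closes the argument; your instinct was correct, but the displayed formula and the phrase ``in the effective sense'' do not quite capture it.
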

Let
$$\Gamma_t=\frac{t\bar d}{\bar k\xi}\Erw\brk{(\vk-1)\Lambda\bc{\sum_{\tau\in\Omega^{\vk}}\PSI_{\vk}(\tau)\prod_{j=1}^{\vk}\MU_j^{(\pi)}(\tau_j)}}.$$
The following proposition, which we prove in \Sec~\ref{Sec_Lemma_interpolation}, shows that the free energy essentially increases with $t$, up to the correction term $\Gamma_t$.

\begin{proposition}\label{Lemma_interpolation}
For every $\eps>0$ there is $T>0$ such that for all large enough $n$ the following is true.
Let
 	$$\phi_T:t\in[0,1]\mapsto(\Erw[\ln Z(\hat\G_{t, \eps,\pi,T})]+\Gamma_t)/n.$$
Then $\phi'_T(t)>-\eps$ for all $t\in[0,1]$.
\end{proposition}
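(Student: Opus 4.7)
The plan is to compute $\phi'_T(t)$ directly, exploiting the Poisson rate structure of $\vm_\eps(t)$ and $\vm'_\eps(t)$, and then to use the Nishimori identity, the pinning lemma, and finally the {\bf POS} assumption to show non-negativity up to $\eps$. Concretely, for $\vm\sim\Po(\lambda(t))$ and bounded $f$, we have $\frac{d}{dt}\Erw[f(\vm)]=\lambda'(t)\Erw[f(\vm+1)-f(\vm)]$. Applying this to both Poisson families gives
\begin{align*}
\frac{1}{n}\frac{d}{dt}\Erw[\ln Z(\hat\G_{t,\eps,\pi,T})]=\frac{(1-\eps)\bar d}{\bar k}\bigl(\Erw[\ln(Z^+/Z)]-\Erw[\ln(Z^\circ/Z)]\bigr),
\end{align*}
where $Z=Z(\hat\G_{t,\eps,\pi,T})$, $Z^+$ is the partition function after adding one extra full factor node of random degree $\vk$ with weight $\PSI_{\vk}$, and $Z^\circ$ is the partition function after adding one extra "burst" of $\vk$ unary factors $b_{1,j}$ whose weight functions are defined via $\PSI'_{1}$ and fresh samples $\vmu_{1,j,h}\sim\pi$.

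Next I would apply the Nishimori identity (Proposition~\ref{Prop_Nishi}, in the generalised form Proposition~\ref{Lemma_genNishi}, whose assumptions hold by Fact~\ref{Fact_NoelasAssumptions}) to rewrite the $\hat\G$-expectations as expectations under $\G^*_{t,\eps,\pi,T}(\hat\vsigma)$ with planted ground truth. This replaces each $\ln(Z^+/Z)$ by an appropriately weighted logarithm of a Boltzmann average over the $\vk$ randomly chosen cavities $\vy_1,\dots,\vy_{\vk}$, producing a factor of $\PSI_{\vk}(\hat\vsigma_{\vy_1},\dots,\hat\vsigma_{\vy_{\vk}})$; by {\bf SYM} the prefactor $\xi$ appears naturally. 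An entirely parallel computation handles $\ln(Z^\circ/Z)$, which decomposes into $\vk$ one-cavity contributions because the burst consists of independent unary factors, each tied to a single cavity.

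At this point I would invoke the pinning operation of Lemma~\ref{Lemma_tpinning} with parameter $T$ large enough so that the Boltzmann measure $\mu_{\G^*}$ is $\delta$-symmetric for $\delta\ll\eps$, and then use Proposition~\ref{prop_coupling_int} together with the resulting approximate factorisation of finite-order marginals (Lemmas~\ref{lem_regularitylemma} and~\ref{lem_pairwise_symmetry}) to replace the Boltzmann averages over the randomly chosen cavities by products of the pertinent marginals. The empirical law $\rho\in\PP_*(\Omega)$ of these marginals at cavities of $\G^*_{t,\eps,\pi,T}$ plays the role of the first distribution in {\bf POS}, while the interpolation measure $\pi$ plays the role of $\rho'$. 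Meanwhile the explicit form of $\Gamma_t$ yields $\Gamma'_t=\frac{\bar d}{\bar k\xi}\Erw[(\vk-1)\Lambda(\sum_\tau\PSI_{\vk}(\tau)\prod_j\vmu_{j,1,\pi}(\tau_j))]$, which supplies exactly the $(\vk-1)\Lambda$-term appearing on the left-hand side of {\bf POS}.

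Collecting the contributions, $\phi'_T(t)$ becomes (up to an $O(\eps)$ error coming from the pinning, the coupling and the $\eps$-gap between $\vm_\eps(t)+\vm'_\eps(t)$ and the true $\vm$) exactly $\frac{(1-\eps)\bar d}{\bar k\xi}$ times the difference between the left- and right-hand sides of the {\bf POS} inequality, evaluated at $\rho$ and $\rho'=\pi$. Hence {\bf POS} yields $\phi'_T(t)\ge-\eps$ for $T$ large enough and $n$ large enough. The main obstacle is the pinning-and-coupling step: one must ensure that after pinning the $\vk$ cavities attached to the new factor (respectively to the burst) behave as independent draws with marginals governed by the \emph{same} random measure $\MU$, so that the three terms in $\phi'_T(t)$ line up cleanly with the three terms of {\bf POS}. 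Proposition~\ref{prop_coupling_int}, whose $\tilde O(n^{-1})$ accuracy was engineered for exactly this purpose, together with the symmetry Lemmas~\ref{lem_regularitylemma}--\ref{lem_pairwise_symmetry} and the cavity marginal control of Lemmas~\ref{lem_cav1}--\ref{lem_cav2}, are the tools that make this identification rigorous.
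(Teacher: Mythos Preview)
Your proposal is correct and follows essentially the same route as the paper: the Poisson derivative formula reduces $\phi_T'(t)$ to the difference of two one-factor increments plus $\Gamma_t'$, the Nishimori identity and contiguity let you work in the planted model $\G^*$, the coupling Proposition~\ref{prop_coupling_int} computes each increment as a reweighted log-Boltzmann-average over random cavities, and finally pinning (Lemma~\ref{Lemma_tpinning}) supplies $\delta$-symmetry so that the three resulting terms line up exactly with the {\bf POS} inequality applied to the empirical cavity-marginal law $\rho$ and the interpolation measure $\pi$. The paper organises the computation into Claims~\ref{Lemma_PoissonDeriv}--\ref{Lemma_Deltat''} and then Lemma~\ref{Prop_deriv}, but the substance is identical to what you outline.
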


We complement this statement by computing the free energy at `times' $t=0$.

\begin{proposition}\label{Lemma_00}
We have
\begin{align*}
	\frac1n\Erw[\ln Z(\hat\G_{0,0,\pi,0})] &=
	\Erw \brk{\frac{\xi^{-\vd}}{\abs{\Omega}} \Lambda \bc{\sum_{\sigma \in \Omega} \prod_{i=1}^{\vd} \sum_{\tau \in \Omega^{\hat \vk_i}} \vecone \cbc{\tau_{\vec h_i} = \sigma} \PSI_{\hat \vk_i}(\tau) \prod_{j \neq \vec h_i} \MU_{ij} (\tau_j)}}.
\end{align*}
\end{proposition}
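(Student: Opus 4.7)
The plan is to exploit the fact that at $t=\eps=T=0$ all factors of $\G_{0,0,\pi,0}$ are unary: $\vm_\eps(0)\sim\Po(0)$ vanishes almost surely, so only the $\vm'_\eps(0)\sim\Po(\bar d n/\bar k)$ ``meta-factors'' survive, each spawning $\vk'_i$ unary weight functions $\psi_{b_{i,j}}$. Consequently $Z$ factorises over variables,
\begin{align*}
Z(G)=\prod_{\ell=1}^nZ_\ell(G),\qquad Z_\ell(G)=\sum_{\sigma\in\Omega}\prod_{b:x_\ell\in\partial b}\psi_b(\sigma),
\end{align*}
and, conditional on the configuration-model matching, the $Z_\ell$ involve disjoint collections of the mutually independent weight functions $\psi_{b_{i,j}}$, hence are themselves independent.

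A short calculation combining {\bf SYM} with $\int\mu(\omega)\,\mathrm{d}\pi(\mu)=q^{-1}$ gives $\Erw[\psi_{b_{i,j}}(\sigma)]=\xi$ for every $\sigma\in\Omega$ and every $(i,j)$, uniformly in the underlying group size. Hence $\Erw[Z_\ell\mid\fD]=q\xi^{\vd_\ell}$ and $\Erw[Z\mid\fD]=\prod_\ell q\xi^{\vd_\ell}$ are both $\fD$-measurable. Expanding $Z\log Z=\sum_\ell\Lambda(Z_\ell)\prod_{j\neq\ell}Z_j$ and using conditional-on-matching independence, the reweighting $\Erw_{\hat\G}[\,\cdot\,\mid\fD]=\Erw_\G[Z\,\cdot\,\mid\fD]/\Erw_\G[Z\mid\fD]$ collapses to
\begin{align*}
\Erw_{\hat\G}[\log Z\mid\fD]=\sum_{\ell=1}^n\frac{\xi^{-\vd_\ell}}{q}\Erw_\G[\Lambda(Z_\ell)\mid\fD].
\end{align*}
Taking outer expectation and invoking symmetry across variables, together with the tower property,
\begin{align*}
\frac{1}{n}\Erw[\log Z(\hat\G_{0,0,\pi,0})]=\Erw\!\brk{\frac{\xi^{-\vd_1}}{q}\Lambda(Z_1)}.
\end{align*}

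It remains to match this with the statement, which reduces to identifying the limiting joint law of $(\vd_1,Z_1)$. Conditional on $\vd_1=d$, the $d$ factors attached to $x_1$ pair with uniformly random clones drawn from a pool of $\sum_i\vk'_i\sim\bar d n$ clones organised into i.i.d.\ blocks of sizes $\vk'_i\sim\vk$. The size-biasing principle identifies the block sizes accessed as asymptotically i.i.d.\ $\sim\hat\vk$, while by construction the pivot indices $\vh_{i,j}$, weight functions $\psi'_{i,j}$ and bystander messages $\vmu_{i,j,h}$ are mutually independent and drawn uniformly from $[\hat\vk_i]$, from $P_{\hat\vk_i}$ and from $\pi$ respectively. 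These approximations are made rigorous by the local-limit and coupling machinery of Sections~\ref{llt}--\ref{typ_ass}. After the bounded-degree reduction afforded by Proposition~\ref{Prop_prune}, both $\xi^{-\vd_1}$ and $\Lambda(Z_1)$ are uniformly bounded, so dominated convergence promotes the distributional convergence to equality of expectations.

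The genuinely subtle step is this last one: transferring from the configuration-model neighbourhood of $x_1$ conditioned on $\fD$ to the clean i.i.d./size-biased description dictated by the target. Everything else -- the product structure of $Z$, the {\bf SYM}-based mean computation, and the symmetry reduction across variables -- follows immediately from the unary structure at $t=0$.
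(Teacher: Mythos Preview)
Your proof is correct and follows the same route as the paper's own argument: exploit the unary factor structure at $t=0$ to factorise $Z$ over variables, reduce by symmetry to the contribution of a single variable, and identify the limiting law of its neighbourhood via size-biasing of the block sizes $\vk'_i$. The paper's proof is extremely terse---it simply asserts that ``we can view $\frac1n\Erw[\log Z(\hat\G_{0,0,\pi,0})]$ as the contribution to $\Erw[\log Z]$ of the connected component of $x_1$'' and then reads off the size-biased degree distribution $\hat\vk$---whereas you spell out explicitly the reweighting algebra behind this (the computation $\Erw[\psi_{b_{i,j}}(\sigma)]=\xi$ from {\bf SYM} and $\pi\in\PP_*(\Omega)$, the expansion $Z\log Z=\sum_\ell\Lambda(Z_\ell)\prod_{j\neq\ell}Z_j$, and the cancellation in the $\hat\G$-tilt). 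Both arguments treat the asymptotic identification of the neighbourhood law somewhat informally; your appeal to boundedness after degree truncation plus dominated convergence makes that step marginally more explicit than the paper's one-line $O(1/n)$ remark.
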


\noindent
The proof of \Prop~\ref{Lemma_00} can be found in \Sec~\ref{Sec_Lemma_00}.

\begin{proof}[Proof of \Prop~\ref{Prop_int}]
\Prop~\ref{Lemma_interpolation} implies that
\begin{align}\label{eqProp_int1}
\frac1n\Erw[\log Z(\hat\G_{1,0,\pi_0}]&=O(\eps)+\frac{1}{n}\Erw[\log Z(\hat\G_{1,\eps,\pi,T})]
\geq O(\eps)+\frac1n\Erw[\log Z(\hat\G_{0,\eps,\pi,T})]-\Gamma_1.
\end{align}
Further, \Prop~\ref{Lemma_00} implies together with the fact that all weight functions are strictly positive that
\begin{align}\nonumber
\frac1n\Erw[\log Z(\hat\G_{0,\eps,\pi,T})]&=\frac1n\Erw[\log Z(\hat\G_{0,0,\pi,0})]+O(\eps n)\\&=
	\Erw \brk{\frac{\xi^{-\vd}}{\abs{\Omega}} \Lambda \bc{\sum_{\sigma \in \Omega} \prod_{i=1}^{\vd} \sum_{\tau \in \Omega^{\hat \vk_i}} \vecone \cbc{\tau_{\vec h_i} = \sigma} \PSI_{\hat \vk_i}(\tau) \prod_{j \neq \vec h_i} \MU_{ij} (\tau_j)}}.
\label{eqProp_int2}
\end{align}
Combining \eqref{eqProp_int1} and \eqref{eqProp_int2} completes the proof.
\end{proof}

\subsection{Proof of \Prop~\ref{Lemma_interpolation}}\label{Sec_Lemma_interpolation}
As before let $\vsigma^*\in\Omega^{\{x_1,\ldots,x_n\}}$ be a uniformly random assignment.
Further, let $\fD'$ be the $\sigma$-algebra generated by $(\vd_i,\vk_i,\vk_i')_{i}$.
Let $\G'=\sGmm$ be the random factor graph drawn from the distribution
\begin{align*}
\pr\brk{\G'\in\cE\mid\fD',\vsigma^*}&=\frac{\Erw[\vecone\{\Gmm\in\cE\}\psi_{\Gmm}(\vsigma^*)\mid\fD',\vsigma^*]}{\Erw[\psi_{\Gmm}(\vsigma^*)\mid\fD',\vsigma^*]}.
\end{align*}
We define $\G''=\sGmpm$, $\G'''=\sGmmp$ analogously.
Moreover, let $\cC$ be the set of all variable clones $(x_i,h)$, $h\leq d_i$ that remain unmatched in $\G'$.
Let $(\vy_i)_{i\geq1}$ denote a sequence of independent uniform samples from $\cC$.
We identify the clone $\vy_i$ with its underlying variable node where convenient.
Finally, let $(\vmu_i)_{i\geq1}$ be independent samples from $\pi$.  
The key step towards the proof of \Prop~\ref{Lemma_interpolation} is the derivation of the following formula.

\begin{lemma}\label{Prop_deriv}
Let
	\begin{align*}
	\Xi_{t}&=\Erw \brk{\vec \psi_{\vk} (\SIGMA^{*}(\vy_1), \dots, \SIGMA^{*}(\vy_{\vk})) \log \langle \vec \psi_{\vk} (\SIGMA(y_1), \dots,  \SIGMA(y_{\vk})) \rangle_{\G'}} \\
	& \qquad -\Erw \brk{\sum_{i=1}^{\vk} \sum_{\tau \in \Omega^{\vk}} \vecone \cbc{\tau_j = \SIGMA^{*}(\vy_1)} \PSI_{\vk}(\tau) \prod_{j \neq i} \MU_j(\tau_j) \log \Bigg \langle \sum_{\sigma \in \Omega^{\vk}} \vecone \cbc{\sigma_i = \SIGMA(\vy_1)} \PSI(\sigma) \prod_{j \neq i} \MU_j(\sigma_j) \Bigg \rangle_{\G'}}\\
	&\qquad +\Erw\brk{(\vk-1)\Lambda\bc{\sum_{\tau\in\Omega^{\vk}}\PSI_{\vk}(\tau)\prod_{j=1}^{\vk}\MU_j^{(\pi)}(\tau_j)}}.
				\end{align*}
Then  uniformly for all $t\in(0,1)$ and all $T\geq0$,
	$$\frac{\partial}{\partial t}\phi_T(t)=o(1)+\frac \d{\k\xi}\Xi_{t}.$$
\end{lemma}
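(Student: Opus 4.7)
The strategy is interpolation via Poisson Palm calculus. Since $\vm_\eps(t)\sim\Po((1-\eps)t\bar d n/\bar k)$ and $\vm'_\eps(t)\sim\Po((1-\eps)(1-t)\bar d n/\bar k)$ depend linearly on $t$, the Mecke identity $\partial_t\Erw[f(\Po(\lambda t))]=\lambda\Erw[f(\Po(\lambda t)+1)-f(\Po(\lambda t))]$ yields
\begin{align*}
n\,\partial_t\phi_T(t)=\tfrac{(1-\eps)\bar d n}{\bar k}\bigl(\cA_n(t)-\cB_n(t)\bigr)+\partial_t\Gamma_t,
\end{align*}
where $\cA_n(t)$ (resp.\ $\cB_n(t)$) is the expected change in $\ln Z(\hat\G_{t,\eps,\pi,T})$ caused by inserting one additional type-1 factor (resp.\ one additional type-2 group of $\vk'$ unary cavity factors), while $\partial_t\Gamma_t/n=\frac{\bar d}{\bar k\xi}\Erw[(\vk-1)\Lambda(\sum_\tau\PSI_\vk(\tau)\prod_j\MU_j(\tau_j))]$ follows by direct differentiation.

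For $\cA_n(t)$ I would pass to the teacher-student picture via Proposition~\ref{Prop_Nishi}, so that the inserted $\vk$-ary factor is Nishimori-biased by $\PSI_\vk(\SIGMA^*(y_1),\ldots,\SIGMA^*(y_\vk))$ on uniformly random cavities $y_1,\ldots,y_\vk$. Fact~\ref{Fact_NoelasAssumptions} supplies the hypotheses of Proposition~\ref{prop_coupling_int}, which couples the post-insertion graph with $\G'+\text{factor}$ up to an $o(1)$ additive error in $\ln Z$; hence the contribution is $\ln\bck{\PSI_\vk(\sigma_{y_1},\ldots,\sigma_{y_\vk})}_{\G'}$. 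The normalising constant equals $\xi+o(1)$ by {\bf SYM} combined with Lemmas~\ref{lem_cav1}--\ref{lem_cav2}, which guarantee that the cavity spin profile under $\SIGMA^*$ is uniform up to $\tilde O(n^{-1/2})$. This delivers the first term of $\Xi_t/\xi$.

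For $\cB_n(t)$ I would perform the analogous calculation for each of the $\vk'$ unary factors $\psi_{b_j}$ constituting a type-2 group. Because each $\psi_{b_j}$ depends on independent $\PSI'_j,\vh_j,\vmu_{j,h}$, the Nishimori bias factorises across $j\in[\vk']$, and each individual unary factor has Nishimori-normalising constant $\xi+o(1)$ by the same cavity argument as in Step~1. To replace the joint insertion of the group by the sum of $\vk'$ independent unary contributions inside $\ln Z$, I would invoke the pinning Lemma~\ref{Lemma_tpinning}: choosing $T$ large enough enforces $\delta$-symmetry of $\mu_{\G'}$ for arbitrarily small $\delta$, and Lemmas~\ref{lem_regularitylemma}--\ref{lem_pairwise_symmetry} then upgrade this to $\vk'$-wise approximate independence of bounded cavity-spin tuples. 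Expanding $\ln\bck{\psi_{b_1}(\sigma_{y_1})}_{\G'}$ via the definition of $\psi_{b_1}$, averaging over the coordinate $\vh_1\in[\vk]$ and summing over the $\vk'$ copies reproduces the second term of $\Xi_t/\xi$; pairing with $\partial_t\Gamma_t/n$ gives the third.

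The main obstacle is the joint-to-independent reduction in Step~2: the $\vk'$ cavities to which a single type-2 group attaches are selected together, so one must show that the resulting change in $\ln Z$ equals a sum of $\vk'$ individual unary contributions up to $o(1)$, uniformly in the random cavity tuple; this requires marshalling pinning, regularity and Proposition~\ref{prop_coupling_int_m} (which does not require {\bf SYM$'$} on the inserted factor). A secondary bookkeeping point is that the Palm rate $(1-\eps)\bar d/\bar k$ differs from the coefficient $\bar d/(\bar k\xi)$ in the target formula by a factor of $1-\eps$; the resulting $O(\eps)$ discrepancy is absorbed into the $o(1)$ error, a loss tolerated by the $-\eps$-slack in the downstream Proposition~\ref{Lemma_interpolation}.
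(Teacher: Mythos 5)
Your proposal follows essentially the same route as the paper: the Poisson/Palm differentiation (the paper cites \cite[Lemma~4.2]{CKPZ}) splits $\partial_t\phi_T$ into the expected $\ln Z$-increments from inserting one $\vk$-ary factor and one group of $\vk'$ unary cavity factors plus $\partial_t\Gamma_t$, and each increment is evaluated via the coupling of Proposition~\ref{prop_coupling_int} in the teacher--student picture exactly as you describe. If anything you are more explicit than the paper on two points it glosses over — the factorisation of the joint unary-group insertion into $\vk'$ single-factor contributions (which does require the pinning-induced approximate independence of the cavity spins, not merely distinctness of the attachment points) and the residual $(1-\eps)$ prefactor, which must indeed be absorbed by the $\eps$-slack of Proposition~\ref{Lemma_interpolation}.
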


The steps to prove \Prop~\ref{Prop_deriv} are the following.
Let
\begin{align*}
	\Delta_t&=\Erw\brk{\ln Z(\sGmpm} -\Erw\brk{\ln Z(\sGmm)}=\Erw\brk{\log Z(\G'')}-\Erw[\log Z(\G')],\\
	\Delta_t'&= \Erw\brk{\ln Z(\sGmmp}- \Erw\brk{\ln Z(\sGmm)}=\Erw\brk{\log Z(\G''')}-\Erw[\log Z(\G')],\\
	\Delta_t''&=\Erw\brk{(\vk-1)\Lambda\bc{\sum_{\tau\in\Omega^{\vk}}\PSI_{\vk}(\tau)\prod_{j=1}^{\vk}\MU_j(\tau_j)}}.
\end{align*}
Because $t$ enters into the definition of the various factor graphs only through the Poisson variables $\vm_\eps(t),\vm_\eps'(t)$, the following claim follows directly from \cite[\Lem~4.2]{CKPZ}.

\begin{claim}[Lemma 4.2 of \cite{CKPZ}]\label{Lemma_PoissonDeriv}
We have $$\frac 1n \frac{\partial}{\partial t}\phi_T(t)=(1-\eps)\frac \d\k\bc{\Delta_t-\Delta_t'+\Delta_t''}.$$
\end{claim}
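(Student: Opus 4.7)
\smallskip
\noindent\textbf{Proof plan for \Prop~\ref{Prop_deriv}.}
The starting point is Claim~\ref{Lemma_PoissonDeriv}, which, because $t$ enters only through the Poisson parameters $\vm_\eps(t)$ and $\vm_\eps'(t)$, already reduces $\partial_t\phi_T(t)$ to the three ``one-step'' differences $\Delta_t,\Delta_t',\Delta_t''$.  The plan is therefore to identify each of these three quantities with the corresponding term of $\Xi_t$ up to an additive $o(1)$, and then check that all prefactors consolidate into $\bar d/(\bar k\xi)$.

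The first step is a coupling argument for $\Delta_t=\Erw[\log Z(\G'')]-\Erw[\log Z(\G')]$.  By Fact~\ref{Fact_NoelasAssumptions} the model $\G_{t,\eps,\pi,T}$ fits the hypotheses of \Prop~\ref{prop_coupling_int}, so we can couple $\G''$ and $\G'$ such that with probability $1-\tilde O(1/n)$ the graph $\G''$ is obtained from $\G'$ by pairing one new factor $a_{\mathrm{new}}$, of degree $\vk$ and weight function $\vec\psi_{\vk}$, to a tuple $(\vy_1,\ldots,\vy_{\vk})$ of uniform cavities of $\G'$.  On this event one has the exact telescoping identity
\begin{align*}
\log Z(\G'')-\log Z(\G')=\log\bck{\vec\psi_{\vk}(\sigma_{\vy_1},\ldots,\sigma_{\vy_{\vk}})}_{\G'},
\end{align*}
while on the exceptional event the integrand is bounded by {\bf SYM} and contributes only $\tilde O(1/n)$.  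In the teacher-student distribution of $\G''$, conditional on $\G'$, $\vsigma^*$ and the attachment tuple, the law of the new weight function $\vec\psi_{\vk}$ is tilted from $P_{\vk}$ by the factor $\vec\psi_{\vk}(\vsigma^*(\vy_1),\ldots,\vsigma^*(\vy_{\vk}))$.  The normalising constant is $\Erw_{\vec\psi\sim P_{\vk}}[\vec\psi(\vsigma^*(\vy_1),\ldots,\vsigma^*(\vy_{\vk}))]$, and by {\bf SYM} combined with the near-uniformity of $\vsigma^*$ guaranteed by {\bf BAL} and \Prop~\ref{typ_ass_prop} (restricting to typical coloured sequences), this equals $\xi+o(1)$.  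Undoing the tilt therefore yields
\begin{align*}
\Delta_t=\frac{1}{\xi}\Erw\brk{\vec\psi_{\vk}(\vsigma^*(\vy_1),\ldots,\vsigma^*(\vy_{\vk}))\log\bck{\vec\psi_{\vk}(\sigma_{\vy_1},\ldots,\sigma_{\vy_{\vk}})}_{\G'}}+o(1),
\end{align*}
which is precisely the first summand of $\Xi_t$.

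The treatment of $\Delta_t'$ is analogous but a single increment of $\vm_\eps'(t)$ inserts $\vk$ fresh unary factor nodes $b_{1},\ldots,b_{\vk}$, one per coordinate $i\in[\vk]$, with weight functions $\sigma\mapsto\sum_{\tau}\vecone\{\tau_i=\sigma\}\vec\psi(\tau)\prod_{j\neq i}\vec\mu_{j}(\tau_j)$.  We iterate the coupling of \Prop~\ref{prop_coupling_int} a bounded number of times (each step costs $\tilde O(1/n)$) to reduce $\log Z(\G''')-\log Z(\G')$ to a telescoping sum of logarithms of Boltzmann averages over the $\vk$ unaries.  Tilting each unary's weight by $\vec\psi(\vsigma^*(\vy))$, again SYM and near-uniformity give normalisation $\xi$, producing the second summand of $\Xi_t$.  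The third summand $\Erw[(\vk-1)\Lambda(\cdot)]$ is the term $\Delta_t''$, which appears with the same $\xi^{-1}$ prefactor once the derivative of the explicit correction $\Gamma_t$ is accounted for in Claim~\ref{Lemma_PoissonDeriv}.

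Assembling the three pieces and collecting the $(1-\eps)\bar d/\bar k$ prefactor from the Poisson derivative together with the $\xi^{-1}$ factors from the tilting yields $\partial_t\phi_T(t)=o(1)+(\bar d/(\bar k\xi))\Xi_t$ uniformly in $t\in(0,1)$ and $T\ge 0$; all error terms are uniform because the coupling errors in \Prop~\ref{prop_coupling_int} and the concentration estimates in \Prop~\ref{typ_ass_prop} are uniform in the relevant parameters.  The main technical obstacle is the third step: we must justify that the tilt normalisation is exactly $\xi+o(1)$, which requires confining the analysis to the typical event where $\vsigma^*$ has near-uniform marginals and where $\G'$ behaves as a genuine teacher-student sample.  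The pinning lemma (\Lem~\ref{Lemma_tpinning}) together with \Sec s~\ref{typ_ass}--\ref{mc} permits this restriction at the price of choosing $T$ large but fixed, consistent with the ``$T\ge 0$'' uniformity of the claim.
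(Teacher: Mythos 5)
For the claim actually at issue your justification is the same as the paper's: since $t$ enters the model only through the rates of the Poisson variables $\vm_\eps(t)$ and $\vm_\eps'(t)$, the formula is just the elementary identity $\frac{\partial}{\partial\lambda}\Erw[f(\Po(\lambda))]=\Erw[f(\Po(\lambda)+1)]-\Erw[f(\Po(\lambda))]$ applied with rates $(1-\eps)t\d n/\k$ (positive sign) and $(1-\eps)(1-t)\d n/\k$ (negative sign), together with $\frac{\partial}{\partial t}\Gamma_t$ contributing the $\Delta_t''$ term; the paper disposes of this by citing Lemma~4.2 of \cite{CKPZ}. Note, however, that everything in your proposal after the first sentence is a proof sketch of \Lem~\ref{Prop_deriv} — the identification of $\Delta_t$, $\Delta_t'$, $\Delta_t''$ with the three summands of $\Xi_t$ via \Prop~\ref{prop_coupling_int} — which is logically downstream of, and distinct from, Claim~\ref{Lemma_PoissonDeriv}; that material tracks the paper's Claims~\ref{Lemma_Deltat}--\ref{Lemma_Deltat''} but is not what was asked here.
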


To calculate $\Delta_t, \Delta'_t$ we continue to denote by $\PSI_k^*$ a weight function distributed as $\PSI_k$, drawn independently of everything else.

\begin{claim}\label{Lemma_Deltat}
We have
\begin{align*}
    \Delta_t = o(1) +  \Erw \brk{ \vec \psi_{\vk} (\SIGMA^{*}(\vy_1), \dots, \SIGMA^{*}(\vy_{\vk})) \log \langle \vec \psi_{\vk} (\SIGMA(\vy_1), \dots,  \SIGMA(\vy_{\vk})) \rangle_{\G'}}/\xi.
\end{align*}
\end{claim}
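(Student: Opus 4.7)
\medskip

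\noindent\textbf{Proof plan.}
The plan is to couple $\G'=\sGmm$ and $\G''=\sGmpm$ via \Prop~\ref{prop_coupling_int} (whose hypotheses hold by Fact~\ref{Fact_NoelasAssumptions}) so that, with probability $1-\tilde O(n^{-1})$, we have $\G''=\G'+a_{m+1}$, where $a_{m+1}$ is a single additional $\vk$-ary factor node, and with probability $1-O(n^{-2})$ the symmetric difference has at most $\sqrt{n}\log n$ edges. Conditioning on the coupling, whenever $\G''=\G'+a_{m+1}$ we have the deterministic identity
\begin{align*}
\log Z(\G'')-\log Z(\G')=\log\bck{\psi_{a_{m+1}}(\SIGMA(\partial a_{m+1}))}_{\G'}\textrm{,}
\end{align*}
because inserting a single factor rescales the partition function by the Boltzmann average of its weight. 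Outside the coupled event, the two contributions combine to $o(1)$: on the ``small symmetric difference'' event each differing factor contributes $O(1)$ to $\log Z$ by \textbf{SYM}, giving $\tilde O(n^{-1})\cdot O(\sqrt n\log n)=o(1)$; on the complement we use the trivial bound $|\log Z|=O(n)$ against probability $O(n^{-2})$.

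Next I would identify the conditional law of the pair $(\psi_{a_{m+1}},\vsigma^*_{\partial a_{m+1}})$ inside $\G''$. By the very definition of the teacher--student reweighting~\eqref{eq_starG} applied to the $(m+1)$-th factor, this pair has density proportional to $P_{\vk}(\psi)\cdot\psi(\vsigma^*_{\partial a_{m+1}})$ relative to drawing $\psi$ from $P_{\vk}$ and the $\vk$ neighbours independently and uniformly from the cavities of $\G'$. The normalising constant is $\Erw[\PSI_{\vk}(\vsigma^*(\vy_1),\ldots,\vsigma^*(\vy_{\vk}))]$, which by \textbf{SYM} equals $\xi$ up to a $1+o(1)$ factor once we know that the empirical distribution of the ground truth on the cavities is within $o(1)$ of uniform. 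That closeness is precisely what \Lem s~\ref{lem_cav1} and~\ref{lem_cav2} supply (after passing through mutual contiguity from \Prop~\ref{Prop_Nishi}), so the normaliser is $\xi+o(1)$ uniformly.

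Putting these two pieces together, the expected log-ratio, under the reweighted distribution, can be written as an expectation under the \emph{unreweighted} law of $(\vy_1,\ldots,\vy_{\vk},\PSI_{\vk},\vsigma^*)$ with the extra Radon--Nikodym factor $\PSI_{\vk}(\vsigma^*(\vy_1),\ldots,\vsigma^*(\vy_{\vk}))/\xi$. This yields
\begin{align*}
\Delta_t=\frac{1}{\xi}\Erw\brk{\PSI_{\vk}(\vsigma^*(\vy_1),\ldots,\vsigma^*(\vy_{\vk}))\log\bck{\PSI_{\vk}(\SIGMA(\vy_1),\ldots,\SIGMA(\vy_{\vk}))}_{\G'}}+o(1)\textrm{,}
\end{align*}
which is the desired identity. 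The log is uniformly bounded by $O(1)$ (by \textbf{SYM} and the strict positivity of weight functions), which is what makes the substitution of $o(1)$ errors into the Radon--Nikodym derivative legitimate under the bounded-convergence-type estimates above.

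The main technical obstacle is the careful bookkeeping in the reweighting step: strictly speaking the coupling of \Prop~\ref{prop_coupling_int} pairs $a_{m+1}$ with cavities of $\G'$, but those cavities are not exchangeable with a fresh uniform sample, and their joint ground-truth statistics are only close to i.i.d.\ uniform rather than exactly so. Controlling this discrepancy is exactly where the uniform cavity-colour concentration from \Lem s~\ref{lem_cav1}--\ref{lem_cav2} and the mutual contiguity furnished by \Prop~\ref{Lemma_genNishi} are indispensable; they turn the heuristic ``the cavity spins look i.i.d.\ uniform'' into the quantitative bound needed to identify the normaliser with $\xi$ up to $o(1)$.
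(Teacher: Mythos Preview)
Your proposal is correct and follows essentially the same approach as the paper: couple $\G'$ and $\G''$ via \Prop~\ref{prop_coupling_int}, split into the three cases (perfect coupling, small symmetric difference, failure) with the corresponding probability and size bounds, and on the main event identify the law of the new factor's weight and neighbourhood as the $P_{\vk}$-draw reweighted by $\psi(\vsigma^*_{\partial a})/\xi$ using cavity-colour near-uniformity. The paper's proof is organised around exactly these three cases and the same Radon--Nikodym identification, so there is no substantive difference.
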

\begin{proof}
Due to routine concentration arguments we may safely assume that
\begin{align*}
\sum_{i=1}^n\vd_i\geq\sum_{i=1}^{\vm_\eps(t)}\vk_i+\sum_{i=1}^{\vm_\eps'(t)}\vk_i'.
\end{align*}
\Prop~\ref{prop_coupling_int} provides a coupling of $\G',\G''$.
There are three possible scenarios.
\begin{description}
\item[\textbf{Case 1:} $\G'=\G''-a_{\vm_\eps(t)+1}$]
In this case, $ \G''$ can be obtained from $\G'$ by adding a single $\vk_{m_\eps(t)+1}$-ary factor node $\va=a_{\vm_\eps(t)+1}$.
Its weight function and the adjacent variable nodes are drawn from the distribution
    \begin{align} \label{eqPhilippCase0}
    \Pr \brk{\partial \va = (y_1, \dots, y_{\vk_{\vm_{\eps}(t)+1}}), \psi_{\va} = \psi \mid\fD',\vsigma^*}
=\frac{(1+o(1))\pr\brk{\PSI_{\vk_{\vm_\eps(t)+1}}=\psi}\psi(\vsigma^*(y_1),\ldots,\vsigma^*(y_{\vk_{\vm_\eps(t)+1}}))}
		{\Erw[\PSI_{\vk_{\vm_\eps(t)+1}}(\vsigma^*(\vy_1),\ldots,\vsigma^*(\vy_{\vk_{\vm_\eps}(t)+1}))]}.
    \end{align}
with $y_1,\ldots,y_{\vk_{\vm_\eps(t)+1}}\in\cC,\psi\in\Psi$;
the $1+o(1)$ term stems from the fact that the `cavities' where $\va$ attaches should be drawn without replacement.
Furthermore, since with probability $1-\exp(-\Omega(n))$ we have $\sum_{y\in C}\vecone\{\vsigma^*(y)=\tau\}=|\cC|/q+o(n)$ for all $\tau\in\Omega$, the expression \eqref{eqPhilippCase0} simplifies to
\begin{align}
\Pr \brk{\partial \va = (y_1, \dots, y_{\vk_{\vm_{\eps}(t)+1}}), \psi_{\va} = \psi \mid\fD',\vsigma^*}=\frac{1+o(1)}{\xi}\pr\brk{\PSI_{\vk_{\vm_\eps(t)+1}}=\psi}\psi(\vsigma^*(y_1),\ldots,\vsigma^*(y_{\vk_{\vm_\eps(t)+1}})).
\end{align}
Furthermore, the ensuing change in free energy upon adding $\va$ works out to be
\begin{align}\label{eqPhilippCase1}
\log\frac{Z(\G'')}{Z(\G')}&=\log\bck{\psi_{\va}(\vsigma)}_{\G'}.
\end{align}
\item[\textbf{Case 2:} $|\G'\triangle \G''|=O(\sqrt n\log n)$] 
because all weight functions are strictly positive, in this case we obtain
\begin{align}\label{eqPhilippCase2}
\abs{\log Z(\G'')-\log Z(\G')}=O(\sqrt n\log n).
\end{align}
\item[\textbf{Case 3:} cases 1,2 do not occur.]
In this case we have the trivial bound 
\begin{align} \label{eqPhilippCase3}
\log Z(\G'')/Z(\G')=O(n+\vm_\eps).
\end{align}
\end{description}
\Prop~\ref{prop_coupling_int} shows that Case~1 occurs with probability $1-O(1/n)$ and that Case~3 occurs with probability $O(1/n^2)$.
Therefore, \eqref{eqPhilippCase0}--\eqref{eqPhilippCase3} yield
\begin{align*}
\Erw\brk{\log\frac{Z(\G'')}{Z(\G')} \mid\fD',\vsigma^*}&=\frac{1+o(1)}{\xi}
\Erw \brk{ \vec \psi_{\vk} (\SIGMA^{*}(\vy_1), \dots, \SIGMA^{*}(\vy_{\vk})) \log \langle \vec \psi_{\vk} (\SIGMA(\vy_1), \dots,  \SIGMA(\vy_{\vk})) \rangle_{\G'}},
\end{align*}
as claimed.
\end{proof}

\begin{claim}\label{Lemma_Deltat'}
We have
\begin{align*}
	\Delta_t' = o(1) + \Erw \brk{\sum_{i=1}^{\vk} \sum_{\tau \in \Omega^{\vk}} \vecone \cbc{\tau_j = \SIGMA^{*}} \PSI_{\vk}(\tau) \prod_{j \neq i} \MU_j(\tau_j) \log \Bigg \langle \sum_{\sigma \in \Omega^{\vk}} \vecone \cbc{\sigma_i = \SIGMA(x)} \PSI(\sigma) \prod_{j \neq i} \MU_j(\sigma_j) \Bigg \rangle_{\G'}}/\xi.
\end{align*}
\end{claim}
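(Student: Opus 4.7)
The plan is to mimic the proof of Claim~\ref{Lemma_Deltat} almost verbatim, the sole structural change being that passing from $\sGmm$ to $\sGmmp$ appends one \emph{interpolation super-factor} of random degree $\vk' = \vk'_{\vm'_\eps(t)+1}$; in the model of \Sec~\ref{Sec_Prop_int} this super-factor decomposes into $\vk'$ independent \emph{unary} factors $b_{\vm'_\eps(t)+1,j}$, $j \in [\vk']$, each attached to its own cavity. Fact~\ref{Fact_NoelasAssumptions} ensures that the hypotheses of \Prop~\ref{prop_coupling_int} hold for $\Gi$, so iterating that coupling $\vk'$ times yields a coupling of $\G'$ and $\G'''$ under which $\G''' = \G' + \sum_{j\in[\vk']} b_{\vm'_\eps(t)+1,j}$ with probability $1 - \tilde O(1/n)$. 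On the complementary events one invokes the same trivial $O(\sqrt n \log n)$ and $O(n+\vm'_\eps)$ bounds used in Cases~2 and~3 of the proof of Claim~\ref{Lemma_Deltat}, together with the uniform two-sided bounds on weight functions from {\bf SYM}; these contribute only $o(1)$ to $\Delta'_t$.

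Conditional on $\fD'$ and $\vsigma^*$, the joint distribution of the attachment cavity $y$ and weight function $\psi$ of each added unary factor $b_{\vm'_\eps(t)+1,j}$ is proportional to $\psi(\vsigma^*(y))$ times the prior on $\psi$ induced by the corresponding $\vh$, $\PSI'$ and the i.i.d.\ samples $\vmu$ drawn from $\pi$. A direct calculation using {\bf SYM} and $\pi \in \PP_*(\Omega)$ gives $\Erw \psi_{b_{i,j}}(\omega) = \xi$ for every $\omega \in \Omega$, and combined with the concentration $|\{y \in \cC : \vsigma^*(y) = \tau\}| = |\cC|/q + O(\sqrt n \log n)$ holding with probability $1 - \exp(-\Omega(n))$, the normalising constant evaluates to $(1+o(1))\xi \cdot |\cC|$. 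Writing
\[
\log \frac{Z(\G''')}{Z(\G')} = \sum_{j=1}^{\vk'} \log \bck{\psi_{b_{\vm'_\eps(t)+1,j}}(\vsigma)}_{\G'_{j-1}},
\]
where $\G'_{j-1}$ is $\G'$ augmented by the first $j-1$ added unary factors, taking expectations and unfolding the definition of $\psi_{b_{i,j}}$ as in the Bethe functional~\eqref{eqBFE} yields the claimed formula: the outer sum $\sum_{i=1}^{\vk}$ in the statement corresponds to averaging over the positional parameter $\vh \in [\vk']$ that picks out the pinned coordinate.

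The main obstacle is the replacement of $\bck{\cdot}_{\G'_{j-1}}$ by $\bck{\cdot}_{\G'}$ when $j \ge 2$. The intermediate graphs $\G'_{j-1}$ differ from $\G'$ by at most $\vk'-1$ unary factors, each of which reweights a single-site marginal by a factor in $[\eps,1/\eps]$; since $\vk'$ is a.s.\ bounded and the single-site marginals of $\G'$ are approximately uniform on typical cavities --- a consequence of the pinning apparatus in \Lem s~\ref{lem_regularitylemma}--\ref{Lemma_tpinning} together with the $\vtheta \in [T]$ random pins already built into $\Gi$ --- the total error incurred by this substitution is $o(1)$ uniformly in $t$. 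This is precisely the place where the parameter $T$ in $\G_{t,\eps,\pi,T}$ becomes indispensable.
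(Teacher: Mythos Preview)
Your proof follows the same coupling-and-unfold strategy as the paper: invoke \Prop~\ref{prop_coupling_int} to reduce to the case where $\G'''$ is $\G'$ plus $\vk'$ extra unary factors, compute their conditional law, and express $\log Z(\G''')/Z(\G')$ as a sum of single-cavity Gibbs averages over $\G'$. The paper writes this sum directly (its display labelled \eqref{eq_Delta2_add}), justifying the factorisation in one line, whereas you telescope exactly through intermediate graphs $\G'_{j-1}$ and then argue for the replacement $\G'_{j-1}\to\G'$. Both routes are fine and lead to the same formula.

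There is one imprecision in your final paragraph. You claim the replacement works because ``single-site marginals of $\G'$ are approximately uniform on typical cavities.'' Uniform marginals alone are not sufficient: even if every $\mu_{\G',y_\ell}$ were exactly uniform, the joint law of $(\vsigma_{y_1},\ldots,\vsigma_{y_j})$ under $\mu_{\G'}$ could still be highly correlated, and then attaching bounded unary factors at $y_1,\ldots,y_{j-1}$ would shift $\mu_{\G'_{j-1},y_j}$ away from $\mu_{\G',y_j}$. What you actually need is approximate \emph{independence} of $\vsigma_{y_j}$ from $(\vsigma_{y_1},\ldots,\vsigma_{y_{j-1}})$ under $\mu_{\G'}$ for typical cavity tuples---precisely the $\delta$-symmetry that the pinning \Lem~\ref{Lemma_tpinning} delivers, lifted to $k$-wise via \Lem~\ref{lem_pairwise_symmetry}. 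You cite the right lemmas; just invoke their conclusion (approximate product structure) rather than uniform marginals.
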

\begin{proof}
We apply \Prop~\ref{prop_coupling_int} as in the proof of the previous proposition to obtain a coupling of $\G'$, $\G'''$.
As in that proof, because all weight functions are strictly positive we just need to consider the case that $\G'$ coincides with the factor graph obtained from $\G'''$ by removing $b_{\vm_{\eps}'(t)+1,1},\ldots,b_{\vm_{\eps}'(t)+1,\vk'_{\vm_{\eps}'(t)+1}}$.
Hence, we may assume that $\G'''$ is obtained from $\G'$ by adding unary factor nodes $\vb_1,\ldots,\vb_{\vk'_{\vm_\eps'(t)+1}}$
defined as follows.
Let $(\vmu''_{i,j})_{i,j\geq1}$ be independent samples from $\pi$ and let $(\vh_i)_{i\geq1}$ be independent and uniform samples from $[\vk'_{\vm_\eps'(t)+1}]$.
To simplify matters, we are going to discretise the continuous distribution on distributions $\pi$.
Then 
\begin{align} \label{eq_Delta2_dist}
    \Pr &\bc{\partial\vb_j=y,\,\psi_{\vb_j}(\nix)=
\sum_{\tau\in\Omega^{\vk'_{\vm_\eps'(t)+1}}}\vecone\{\tau_{\vh_j}=\nix\}\psi(\tau)\prod_{h\neq\vh_j}\mu_{i,j}(\tau_j) \mid\fD',\vsigma^*}\\ 
&= \frac{1+o(1)}{\xi\vk'_{\vm_\eps'(t)+1}}\sum_{\tau \in \Omega^{\vk'_{\vm'_\eps(t)+1}}}
		\vecone \cbc{\tau_i=\SIGMA^{*}(y)} \psi(\tau)\prod_{h \neq i} \mu_{i,j}(\tau_h)\pi(\mu_{i,j}).\nonumber
\end{align}
Let $\SIGMA$ be a sample from $\mu_{\G'}$. Since the factor nodes factorize up to a vanishing error term that is due to some variable nodes having two or more cavities, we have
\begin{align} \label{eq_Delta2_add}
    \log \frac{Z(\G''')}{Z(\G')} = \sum_{j=1}^{\vk'_{\vm_\eps'(t)+1}}\log\bck{\psi_{\vb_j}(\vsigma)}_{\G'} + o(1).
\end{align}
Combining \eqref{eq_Delta2_dist} and \eqref{eq_Delta2_add}, we finally obtain
\begin{align*}
    \Erw \brk{\log \frac{Z(\G''')}{Z(\G')} \mid\fD',\vsigma^*} &= \Erw \Bigg[\sum_{i=1}^{\vk} \sum_{\tau \in \Omega^{\vk}} \vecone \cbc{\tau_{\vh_i} = \SIGMA^{*}(\vy_i)} \PSI_{\vk'_{\vm_\eps'(t)+1}}(\tau) \prod_{j \neq \vh_i} \MU_{i,j}(\tau_j)\\
&\qquad\log \Bigg \langle \sum_{\sigma \in \Omega^{\vk'_{\vm_\eps'(t)+1}}} \psi_{\vk'_{\vm_\eps'(t)+1}}(\sigma)
		 \cbc{\sigma_{\vh_i} = \SIGMA(x)} \PSI(\sigma) \prod_{j \neq \vh_i} \MU_{i,j}(\sigma_j) \Bigg \rangle_{\G'}\Bigg]\bigg/(\xi+o(1)).
\end{align*}
The claim follows.
\end{proof}

\begin{claim}\label{Lemma_Deltat''}
With $\vec\mu_1,\vec\mu_2$ chosen independently from $\pi$ we have

\begin{align*}
\Delta_t''&=\frac{\k\xi}{\d}\frac{\partial}{\partial t}\Gamma_t=
\Erw\brk{(\vk-1)\Lambda\bc{\sum_{\tau\in\Omega^{\vk}}\PSI_{\vk}(\tau)\prod_{j=1}^{\vk}\MU_j^{(\pi)}(\tau_j)}}
\end{align*}
\end{claim}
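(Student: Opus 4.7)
The claim is essentially a direct computation, so my proposal is very short. First, I will observe that $\Gamma_t$ is linear in $t$, with
\[
\Gamma_t=\frac{t\bar d}{\bar k\xi}\Erw\brk{(\vk-1)\Lambda\bc{\sum_{\tau\in\Omega^{\vk}}\PSI_{\vk}(\tau)\prod_{j=1}^{\vk}\MU_j^{(\pi)}(\tau_j)}},
\]
so differentiating in $t$ gives
\[
\frac{\partial}{\partial t}\Gamma_t=\frac{\bar d}{\bar k\xi}\Erw\brk{(\vk-1)\Lambda\bc{\sum_{\tau\in\Omega^{\vk}}\PSI_{\vk}(\tau)\prod_{j=1}^{\vk}\MU_j^{(\pi)}(\tau_j)}}.
\]
Multiplying both sides by $\bar k\xi/\bar d$ produces the right-hand side of the stated identity, which establishes the second equality in the claim as a purely algebraic statement.

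For the first equality, I will simply unwind notation: by definition $\Delta_t''=\Erw\brk{(\vk-1)\Lambda\bc{\sum_{\tau\in\Omega^{\vk}}\PSI_{\vk}(\tau)\prod_{j=1}^{\vk}\MU_j(\tau_j)}}$, where the $\MU_j$ are independent samples from $\pi$. Since the $\MU_j^{(\pi)}$ appearing in the definition of $\Gamma_t$ are themselves independent samples from $\pi$ (this is merely notation introduced to emphasise the dependence on $\pi$ when $\pi$ is varied), the two expressions coincide and equal the same expectation. Hence $\Delta_t''=\frac{\bar k\xi}{\bar d}\frac{\partial}{\partial t}\Gamma_t$, as claimed.

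There is no substantive obstacle here: the correction term $\Gamma_t$ was introduced in the definition of $\phi_T$ precisely in order to cancel $\Delta_t''$ when Claim~\ref{Lemma_PoissonDeriv} is combined with Claims~\ref{Lemma_Deltat} and~\ref{Lemma_Deltat'} to produce the derivative formula of Lemma~\ref{Prop_deriv}. Thus the only thing to verify is the bookkeeping of the constants $\bar d$, $\bar k$, $\xi$, which the linear dependence of $\Gamma_t$ on $t$ makes immediate.
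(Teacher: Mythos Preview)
Your proposal is correct and matches the paper's proof, which simply states ``This follows immediately by plugging in the definition of $\Gamma_t$.'' Your more detailed version spells out exactly the linearity-in-$t$ and notational identification that the paper leaves implicit, but the argument is the same.
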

\begin{proof}
This follows immediately by plugging in the definition of $\Gamma_t$.
\end{proof}

\begin{proof}[Proof of \Lem~\ref{Prop_deriv}]
This lemma follows from Claims~\ref{Lemma_Deltat}, \ref{Lemma_Deltat'} and \ref{Lemma_Deltat''}.
\end{proof}

\begin{proof}[Proof of \Prop~\ref{Lemma_interpolation}]
Let $\rho_{\hat \G_{T,t,\eps}}$ be the empirical distribution of the marginals of $\mu_{\hGi,x}$ defined over the set of cavities, i.e.
\begin{align}
\hat\rho = \frac{1}{\abs{\cC}} \sum_{x \in \cC} \delta_{\mu_{\hGi},x} \quad \in \PP_*(\Omega)
\end{align}
\Lem~\ref{Lemma_tpinning} shows that choosing $T$ sufficiently large, we can ensure that $\mu_{\hGi,x}$ is $\delta$-symmetric for an arbitrarily small $\delta>0$.
Therefore, the Nishimori identity and \Lem~\ref{Prop_deriv} imply that
\begin{align*}
\frac{\partial}{\partial t}\phi_T(t)&=o(1)+\frac{\bar d}{\bar k\xi}\Xi_t=O(\delta)+\frac{\bar d}{\bar k\xi}\Xi_t'\qquad\mbox{where}\\
\Xi_{t}'&= \Erw \brk{\Lambda \bc{\sum_{\tau \in \Omega^{\vk}} \PSI_{\vk}(\tau) \prod_{i=1}^{\vk} \RHO_i(\tau_i)} + (\vk-1) \Lambda \bc{\sum_{\tau \in \Omega^{\vk}} \PSI_{\vk}(\tau) \prod_{i=1}^{\vk} \MU_i(\tau_i)} - \vk \Lambda \bc{\sum_{\tau \in \Omega^{\vk}} \PSI(\tau) \RHO_1(\tau_1) \prod_{i=2}^{\vk} \MU_i(\tau_i)}}.
\end{align*}
Hence, the assertion follows from assumption {\bf POS}.
\end{proof}

\subsection{Proof of \Prop~\ref{Lemma_00}}\label{Sec_Lemma_00}

Because the random graph model is symmetric under permutations of the variable nodes, we can view $\frac 1n \Erw \brk{\log Z(\hat \G_{0,0,\eps})}$ as the contribution to $\Erw \brk{\log Z(\hat \G_{0,0,\eps})}$ of the connected component of $x_1$. The partition function of the component of $x_1$ is nothing but
\begin{align*}
    z = \sum_{\sigma \in \Omega} \prod_{j=1}^{\vd_{x_1}} \psi_{b_{x_1},j} (\sigma)
\end{align*}
By construction at $t=0$, the degree $\vd$ is chosen from $\cD$. On the factor side, the variable is assigned to factor nodes by choosing uniformly at random without replacement among the emanating half-edges of the factor nodes. Moreover, changing the total number of half-edges by a bounded number only changes the probability of selecting factor nodes with specific arities by $O(1/n)$. Thus, the arity of the chosen factor nodes is distributed according to \eqref{eqhatk}.
Hence, we find
\begin{align*}
    \frac 1n \Erw \brk{ \log Z(\hat \G_{0,0,\eps}} = \Erw [z] = \Erw \brk{\frac{\xi^{-\vd}}{\abs{\Omega}} \Lambda \bc{\sum_{\sigma \in \Omega} \prod_{i=1}^{\vd} \sum_{\tau \in \Omega^{\hat \vk_i}} \vecone \cbc{\tau_{\vec h_i} = \sigma} \PSI_{\hat \vk_i}(\tau) \prod_{j \neq \vec h_i} \MU_{ij} (\tau_j)}}
\end{align*}

The overall proposition is immediate from \Prop s~\ref{Lemma_interpolation} and \ref{Lemma_00}.

\section{Applications}\label{Sec_applications}

\subsection{LDGM codes}

We start to show how to apply \Thm~\ref{Thm_main} to derive the statement in \Thm~\ref{Thm_ldgm}. To this end, let $\Omega=\cbc{\pm 1}, \PSI_k = \cbc{\psi_{k,1}, \psi_{k,-1}}$ for all $k \geq 3$ with
\begin{align*}
    \psi_{k,J}(\sigma) = 1+(1-2\eta)J \prod_{i=1}^k \sigma_i
\end{align*}
for all $\sigma \in \Omega^k, J \in \cbc{\pm 1}$. $P_k$ is simply the uniform distribution, i.e. $P_k(\psi_{k,J})=1/2$ for $J \in \cbc{-1,+1}$. Moreover, the distribution on $\Psi_k$ conditioned on the planted configuration for a factor node $a$ with degree $k$ for all $k\in \supp \cK$ is given by
\begin{align*}
    \Pr \brk{\psi_{a} = \psi_{k,J} | \SIGMA_{\partial a} = (\sigma_1 \dots \sigma_k}) = \bc{1+(1-2\eta) J \prod_{i=1}^k \sigma_i}/2
\end{align*}
which yields $1-\eta$ if $\prod_{i=1}^k \sigma_i = 1$ and $\eta$ if $\prod_{i=1}^k \sigma_i = -1$. Furthermore, we have $\xi = \Erw \brk{\abs{\Omega}^{-\vk} \sum_{\tau \in \Omega^{\vk}} \PSI_{\vk}(\tau)} = 1$. Moreover, we find
\begin{align} \label{eq_ldgm_info}
    \Erw \brk{\frac{1}{\abs{\Omega}^{\vk}} \sum_{\tau \in \Omega^{\vk}} \Lambda(\PSI_{\vk}(\tau))} = \brk{\log 2 - H(\eta)}.
\end{align}
Next, we check $\SYM, \BAL, \POS$. $\SYM$ and $\BAL$ are immediate since the function $\sigma \mapsto \Erw \brk{\PSI_{\vk}(\sigma)}$ is constant. For \POS, we employ an argument from \cite[Section 4.4]{CKPZ}. Expanding $\Lambda(\cdot)$ and using Fubini's theorem we obtain
\begin{align*}
    \Erw \brk{\Lambda\bc{\sum_{\tau \in \Omega^{\vk}} \PSI_{\vk}(\tau) \prod_{i=1}^{\vk} \RHO_i(\tau_i)}} 
    &= -1 + \sum_{\ell=2}^{\infty} \frac{\Erw \brk{\bc{1-\sum_{\sigma \in \Omega^{\vk}} \PSI_{\vk}(\sigma) \prod_{i=1}^{\vk} \RHO_i(\sigma_i)}^\ell}}{\ell(\ell-1)} \\
    &= -1 + \sum_{\ell=2}^{\infty} \frac{\Erw \brk{((1-2\eta)\vJ)^\ell} \Erw \brk{(\RHO_1(1) - \RHO_1(-1))^{\ell\vk}}}{\ell(\ell-1)}
\end{align*}
Applying the same procedure to the other two terms of \POS\ and letting $X_\ell = \Erw \brk{(\RHO_1(1) - \RHO_1(-1))^\ell}$ and $Y_\ell = \Erw \brk{(\RHO'_1(1) - \RHO'_1(-1))^\ell}$, we merely need to show that
\begin{align} \label{eq_ldgm_POS}
    \sum_{\ell=2}^{\infty} \frac{1}{\ell(\ell-1)} \Erw \brk{((1-2\eta)\vJ)^\ell} \Erw \brk{X_\ell^{\vk} - \vk X_\ell Y_\ell^{\vk-1} + (\vk-1) Y_\ell^{\vk}} \geq 0
\end{align}
Indeed, if $\ell$ is odd, then $\Erw \brk{((1-2\eta)\vJ)^\ell}=0$ due to the symmetry of $\vJ$. Moreover, for even $\ell$, both $X,Y\geq0$. Thus, \eqref{eq_ldgm_POS} follows from the fact that $X^k-kXY^{k-1}+(k-1)Y^k \geq 0$ for all $X,Y \geq 0$ and $\Erw \brk{((1-2\eta)\vJ)^\ell} \geq 0$ since $\ell$ is even. \Thm~\ref{Thm_main} together with \eqref{eq_ldgm_info} yield
\begin{align*}
    \lim_{n \to \infty} \frac 1n I(\SIGMA^*, \G^*) = (1+\d/\k) \log 2 - H(\eta) - \cB(\eta).
\end{align*}
Finally, we simplify the Bethe function $\cB(\eta)$. To this end, we can map a distribution $\MU^{(\pi')}$ drawn from $\pi' \in \cP_*(\cbc{\pm 1})$ to a distribution $\THETA^{(\rho)}$ drawn from $\rho \in \cP_0([-1,1])$ by
\begin{align*}
    \THETA^{(\rho)} = 2 \MU^{(\pi')}(1) - 1.
\end{align*}
Thus, we can simplify the Bethe functional to
\begin{align*}
    \cB(\pi') &= \Erw \brk{\frac{\xi^{-\vd}}{\abs{\Omega}} \Lambda \bc{\sum_{\sigma \in \Omega} \prod_{i=1}^{\vd} \sum_{\tau \in \Omega^{\hat \vk_i}} \vecone \cbc{\tau_{\vec h_i} = \sigma} \PSI_{\hat \vk_i}(\tau) \prod_{j \neq \vec h_i} \MU_{ij}^{(\pi)} (\tau_j)} - \frac{\d(\vk-1)}{\xi \k} \Lambda \bc{\sum_{\tau \in \Omega^{\vk}} \PSI_{\vk}(\tau) \prod_{j=1}^{\vk} \MU_j^{(\pi)} (\tau_j)}} \\
    &= \Erw \brk{\frac 12 \Lambda \bc{\sum_{\tau \in \cbc{\pm 1}} \prod_{i=1}^{\vd} \bc{ 1+\sum_{\tau \in \cbc{\pm 1}}^{\hat \vk-1} (1-2\eta) \vJ \sigma \prod_{j=1}^{\hat \vk-1} \tau_j \MU^{(\pi')}_{ij} (\tau_j)}} - \frac{\d(\vk-1)}{\k} \Lambda\bc{1+\sum_{\tau \in \cbc{\pm 1}}^{\hat \vk}(1-2\eta) \vJ \prod_{i=1}^{\vk} \tau_j \MU_j^{(\pi')}(\tau_j)}} \\
    &= \Erw \brk{\frac 12 \Lambda \bc{\sum_{\sigma \in \cbc{\pm 1}} \prod_{i=1}^{\vd} \bc{1+(1-2\eta) \sigma \vJ_b \prod_{j=1}^{\hat \vk-1} \THETA^{(\rho)}_{ij}}} - \frac{\d(\vk-1)}{\k} \Lambda \bc{1+(1-2\eta) \vJ \prod_{j=1}^{\vk} \THETA_{1,j}^{(\rho)}}} = \cB_{\mathrm{ldgm}}(\rho,\eta)
\end{align*}
concluding the proof.

\subsection{Stochastic Block Model}

First we need to check that the SBM indeed satisfies the assumptions \SYM, \BAL, \POS, which follows directly from \cite{CKPZ}.

\begin{lemma} \label{lem_sbm_conditions}
The Stochastic Block Model satisfies the assumptions \SYM, \BAL\ and \POS\ for all $q \geq 2, \beta \geq 0$.
\end{lemma}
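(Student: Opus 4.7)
The plan is to check \SYM, \BAL, \POS\ directly from the definition $\psi_{\beta,q}(\sigma_1,\sigma_2)=\exp(-\beta\vecone\{\sigma_1=\sigma_2\})$, exploiting the identity $\psi_{\beta,q}(\sigma_1,\sigma_2)=1-(1-\eul^{-\beta})\vecone\{\sigma_1=\sigma_2\}$, which linearises the weight function in a way that makes the \POS\ expansion transparent.

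First I would verify \SYM. Fixing $\omega\in[q]$ and, say, $j=1$, the sum $\sum_{\sigma\in[q]^2}\vecone\{\sigma_1=\omega\}\psi_{\beta,q}(\sigma)=\eul^{-\beta}+(q-1)$ is manifestly independent of $\omega$, so \SYM\ holds with $\xi=(\eul^{-\beta}+q-1)/q$. The two-sided bound $\eul^{-\beta}\le\psi_{\beta,q}\le 1$ gives the boundedness part. For \BAL, a one-line computation yields
\[
\sum_{\sigma\in[q]^2}\psi_{\beta,q}(\sigma)\mu(\sigma_1)\mu(\sigma_2)=1-(1-\eul^{-\beta})\sum_{\omega\in[q]}\mu(\omega)^2,
\]
which is an affine function of $-\|\mu\|_2^2$. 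Since $\mu\mapsto-\|\mu\|_2^2$ is concave and, by Lagrange multipliers (or Cauchy--Schwarz), uniquely maximised on $\cP([q])$ at $u_{[q]}$, \BAL\ follows for every $\beta\ge 0$.

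The main work is \POS. Using $\psi_{\beta,q}=1-(1-\eul^{-\beta})\vecone\{\cdot=\cdot\}$, for any $\mu_1,\mu_2\in\cP([q])$ we have $\sum_\tau\psi_{\beta,q}(\tau)\mu_1(\tau_1)\mu_2(\tau_2)=1-(1-\eul^{-\beta})\sum_{\omega}\mu_1(\omega)\mu_2(\omega)$, which lies in $[\eul^{-\beta},1]$. I would then invoke the convergent expansion
\[
\Lambda(1-y)=-y+\sum_{\ell\ge 2}\frac{y^\ell}{\ell(\ell-1)}\qquad(y\in[0,1-\eul^{-\beta}]),
\]
and substitute $y=(1-\eul^{-\beta})\sum_\omega\mu_1(\omega)\mu_2(\omega)$ in each of the four $\Lambda$-terms appearing in \POS. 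The linear-in-$y$ contributions cancel between the two sides: by the constraint $\pi,\pi'\in\PP_*(\Omega)$, for any two independent samples $\vmu,\vmu'$ (from either $\pi$ or $\pi'$) one has $\Erw[\sum_\omega\vmu(\omega)\vmu'(\omega)]=\sum_\omega(1/q)(1/q)=1/q$, so each of the four linear contributions equals $-(1-\eul^{-\beta})/q$, and the two on the left match the two on the right.

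The remaining task is to show nonnegativity of the order-$\ell$ contributions for every $\ell\ge 2$. Writing
\[
\Bigl(\sum_\omega\mu_1(\omega)\mu_2(\omega)\Bigr)^\ell=\sum_{\vec\omega\in[q]^\ell}\prod_{r=1}^\ell\mu_1(\omega_r)\prod_{r=1}^\ell\mu_2(\omega_r),
\]
and setting $M_\rho(\vec\omega)=\Erw_{\vmu\sim\rho}\bigl[\prod_r\vmu(\omega_r)\bigr]$, the expected $\ell$-th order LHS contribution equals
\[
\frac{(1-\eul^{-\beta})^\ell}{\ell(\ell-1)}\sum_{\vec\omega}\bigl(M_\rho(\vec\omega)^2+M_{\rho'}(\vec\omega)^2\bigr),
\]
while the RHS contribution equals $\frac{(1-\eul^{-\beta})^\ell}{\ell(\ell-1)}\cdot 2\sum_{\vec\omega}M_\rho(\vec\omega)M_{\rho'}(\vec\omega)$. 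The required inequality is therefore $\sum_{\vec\omega}(M_\rho(\vec\omega)-M_{\rho'}(\vec\omega))^2\ge 0$, which is trivially true; since $(1-\eul^{-\beta})^\ell\ge 0$, \POS\ follows. The only mildly delicate point is the justification of term-by-term exchange of expectation and summation, which is immediate here since $y$ is uniformly bounded in $[0,1-\eul^{-\beta}]<1$ and the expansion converges absolutely on that interval.
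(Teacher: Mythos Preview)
Your argument is correct. The paper's own proof simply cites \cite[Lemmas~4.3 and~4.5]{CKPZ} and says nothing more; what you have written is essentially a self-contained reconstruction of exactly that argument (the $\Lambda(1-y)=-y+\sum_{\ell\ge 2}y^\ell/(\ell(\ell-1))$ expansion, followed by reduction to a sum of squares), so the approaches coincide.
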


\begin{proof}
The lemma is an immediate consequence of Lemma 4.3  and 4.5 in \cite{CKPZ} which carry over the Stochastic Block Model defined in \Sec~\ref{Sec_sbm}.
\end{proof}

Now, the proof of \Thm~\ref{Thm_sbm} is reduced to an application of \Thm~\ref{main_kl_divergence} to the stochastic block model.

\subsection{The Potts antiferromagnet on random regular graphs}

Let $\G(n,d)$ denote a random regular graph with $n$ vertices, each with degree $d$.

\begin{theorem} \label{thm_potts}
Let $k=2, d \in \NN \geq 2$ and $m=dn/k$. For $q\geq 2$, and $c \in [0,1]$, let
\begin{align*}
    \cB_{\text{Potts}} (d, q, c) &= \sup_{\pi \in \PP_*([q])} \Erw \brk{ \frac{\Lambda \bc{\sum_{\sigma=1}^q \prod_{i=1}^{d} 1 - c \MU_i^{(\pi)}(\sigma)}}{q(1-c/q)^{d}} - \frac{d \Lambda(1-\sum_{\tau=1}^q c \MU_1^{(\pi)}(\tau) \MU_2^{(\pi)}(\tau)}{2(1-c/q)}}, \\
    \beta_{q, \text{cond}(d)} &= \inf \cbc{\beta >0: \cB_{\text{Potts}}(d, q, 1-e^{-\beta}) > \log q + d \log(1-(1-e^{-\beta})/q)/2}.
\end{align*}
Then we have
\begin{align*}
    &\lim_{n \to \infty} - \frac 1n \Erw \brk{\log Z_{\beta}(\G(n,d))} = -\log q - d \log(1-(1-e^{-\beta})/q)/2 \qquad \qquad \beta < \beta_{q,\text{cond}}(d) \\
    &\lim_{n \to \infty} - \frac 1n \Erw \brk{\log Z_{\beta}(\G(n,d))} < -\log q - d \log(1-(1-e^{-\beta})/q)/2 \qquad \qquad \beta > \beta_{q,\text{cond}}(d)
\end{align*}
\end{theorem}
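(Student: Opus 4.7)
The approach is to identify \Thm~\ref{thm_potts} as a corollary of \Thm~\ref{Thm_sbm} combined with \Lem~\ref{lem_free_energy_null_star}. Under the identification $\Omega = [q]$, deterministic variable degree $\vd \equiv d$, deterministic factor arity $\vk \equiv 2$, and weight function $\psi(\sigma_1,\sigma_2) = \exp(-\beta\vecone\{\sigma_1=\sigma_2\})$, the Potts partition function $Z_\beta(\G(n,d))$ equals $Z(\G)$ for the null factor-graph model of Example~\ref{Ex_sbm}. The {\bf SYM}-constant reads $\xi = 1-(1-\eul^{-\beta})/q$, the Bethe functional $\cB_{\text{Potts}}(d,q,1-\eul^{-\beta})$ coincides with $\cB_{\mathtt{sbm}}(\pi,\beta)$ term by term via the substitution $c=1-\eul^{-\beta}$, and the threshold $\beta_{q,\text{cond}}(d)$ matches $\beta^*$ of \Thm~\ref{Thm_sbm}. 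Conditions {\bf DEG}, {\bf SYM}, {\bf BAL}, {\bf POS} are already verified through \Lem~\ref{lem_sbm_conditions}.

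First I would establish the annealed free energy density $\frac 1n\log\Erw[Z(\G)] = \log q + \frac{d}{2}\log\xi + o(1)$ by a short symmetry calculation over balanced colour profiles using {\bf SYM}; this is precisely the quantity appearing on the right-hand side of the claimed formula. Next I would combine \Thm~\ref{Thm_sbm} with the identity $\KL{\G^*}{\G} = \Erw[\log Z(\G^*)] - \log\Erw[Z(\G)] + o(n)$, which follows from the Radon--Nikodym description of $\G^*$ together with \Prop~\ref{Prop_Nishi}, to conclude that below $\beta_{q,\text{cond}}$ the $\G^*$-free energy matches the annealed value to leading order, while above $\beta_{q,\text{cond}}$ it exceeds the annealed by $\Omega(n)$.

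The final step invokes \Lem~\ref{lem_free_energy_null_star} to transfer both statements from $\G^*$ to the null model $\G$. Below the threshold the biconditional yields $\frac 1n\Erw[\log Z(\G)] = \log q + \frac{d}{2}\log\xi + o(1)$, which is the first assertion. Above the threshold the biconditional rules out matching the annealed, and combining with the Jensen upper bound $\Erw[\log Z(\G)] \leq \log\Erw[Z(\G)]$ forces the quenched free energy density to lie strictly below the annealed value by an amount bounded away from zero, yielding the second assertion.

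The main obstacle I anticipate is largely bookkeeping: verifying the symbolic identifications of $\cB_{\text{Potts}}$ with $\cB_{\mathtt{sbm}}$ and of the two threshold definitions, and checking that \Lem~\ref{lem_free_energy_null_star} applies in the doubly regular regime where both $\vd$ and $\vk$ are deterministic (where {\bf DEG} becomes vacuous and the general framework of \Sec~\ref{Sec_generalised} accommodates constant factor arities). A secondary point is that above the condensation threshold the argument delivers only a strict deficit relative to the annealed value, not an explicit limiting expression; if the theorem is to be read as claiming existence of a genuine limit in that regime, one must invoke interpolation directly on the plain model $\G$, whereas the $\limsup$-type statement needed for a strict inequality follows immediately from the route outlined above.
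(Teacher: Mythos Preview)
Your proposal is correct and follows essentially the same route as the paper, which simply states that the theorem is an immediate consequence of \Thm~\ref{Thm_sbm} and \Lem~\ref{lem_free_energy_null_star} after observing that the stochastic block model is the planted version of the Potts antiferromagnet. Your added care about the existence of the limit above the threshold is apt: the paper's own argument (via \Lem~\ref{condp_rs_phase} and the end of the proof of \Thm~\ref{main_qfed}) in fact only delivers a $\limsup$ statement in the condensation regime, so the theorem as written should be read in that sense.
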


The key observation towards the proof of \Thm~\ref{thm_potts} is that the Stochastic Block Model is just the planted version of the Potts antiferromagnet.
Indeed, we find 
\begin{align*}
   \Pr [ \G^*_{\text{SBM}} = G \mid \SIGMA^* ] \propto \Pr \brk{ \G = G } \exp \bc{ - \beta \sum_{(v,w) \in E(G)} \vecone \cbc{\SIGMA^*(v) = \SIGMA^*(w)}} \propto \Pr [ \G_{\text{Potts}}(\SIGMA^*) = G]. 
\end{align*}

\begin{proof}[Proof of \Thm~\ref{thm_potts}]
The theorem is an immediate consequence of \Thm~\ref{Thm_sbm} and \Lem~\ref{lem_free_energy_null_star}.
\end{proof}

\subsection{Diluted mixed $k$-spin models}

The proof of \Thm~\ref{thm_mixed_k} is based on \Thm~\ref{Thm_main}. Clearly, the mixed $\vk$-spin model fits the definition of the generalized model underlying \Thm~\ref{Thm_main}. While we have a degree sequence on the factor side, each factor chooses variable nodes uniformly at random without replacement. Thus, up to a smaller-order error that adds $o(1)$ to the free energy, the number of neighbours for a variable node is a Poisson random variable. Of course, one problem is that the number of possible weight functions is infinite. We will tackle this issue in the proof of \Thm~\ref{thm_mixed_k} by introducing a discretised version of $\vJ$ that is cut off at the tails.
Let $p_{k, \vJ, \beta}$ be the law of $\psi_{k,\vJ,\beta}$. Then, fix some $r \in \mathbb{N}$ and define a discretised version of $\vJ$
\begin{align*}
\vJ^{(r)}&:= \sum_{i=0}^{r^2-1} \vecone\cbc{\vJ \in [-r+i/r, -r+(i+1)/r]}\bc{-r + \frac{i}{r}} + \sum_{i=r^2}^{2r^2-1} \vecone\cbc{\vJ \in [-r+i/r, -r+(i+1)/r]}\bc{-r + \frac{i+1}{r}} \\
 & - \vecone\cbc{\vJ < -r}r + \vecone\cbc{\vJ > r}r.
\end{align*}
Note that $r$ in $\vJ^{(r)}$ governs both the value range of the random variable and size of each discretised interval where for $\vJ<0$ the $\vJ^{(r)}$ takes the value of the left interval bound, while for $\vJ>0$ it is the right bound.
By construction, $\vec J^{(r)}$ is symmetric and bounded. Let $p_{k, \vJ, \beta}^{(r)}$ be the law of $\psi_{k,\vJ^{(r)},\beta}$.

\begin{lemma} \label{lem_mixed_conditions}
For all $r \in \mathbb{N}$, $k \geq 2, \d>0, \beta>0$, $p_{k, \vJ, \beta}^{(r)}$ satisfies conditions \SYM, \BAL~ and \POS~.
\end{lemma}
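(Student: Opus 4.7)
The plan is to observe that, for any $s \in \cbc{\pm 1}$, $\exp(\beta J s) = \cosh(\beta J)(1 + s\tanh(\beta J))$; applied with $s = \prod_{i=1}^k \sigma_i$ this gives $\psi_{k,J^{(r)},\beta}(\sigma) = \cosh(\beta J^{(r)})\,\tilde\psi_{k,J^{(r)},\beta}(\sigma)$ where $\tilde\psi_{k,J,\beta}(\sigma) = 1 + \tanh(\beta J)\prod_{i=1}^k \sigma_i$. Since a positive $J$-dependent multiplicative constant cancels in the Boltzmann distribution and the three conditions are insensitive to this rescaling, I would verify \SYM, \BAL\ and \POS\ for the law of $\tilde\psi_{k,\vJ^{(r)},\beta}$. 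For \SYM, a direct calculation using $k \geq 2$ yields
\begin{align*}
\sum_{\sigma \in \cbc{\pm 1}^k:\sigma_j = \omega} \tilde\psi_{k,J^{(r)},\beta}(\sigma) = 2^{k-1} + \omega\tanh(\beta J^{(r)}) \sum_{\tau \in \cbc{\pm 1}^{k-1}} \prod_{i} \tau_i = 2^{k-1},
\end{align*}
so \SYM\ holds with $\xi = 1$ uniformly in $k$ and in the realisation of $J^{(r)}$, and the boundedness requirement is satisfied because $|\tanh(\beta J^{(r)})| \leq \tanh(\beta r) < 1$.

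For \BAL, symmetry of $\vJ^{(r)}$ forces $\Erw[\tanh(\beta\vJ^{(r)})] = 0$, hence $\Erw[\tilde\psi_{k,\vJ^{(r)},\beta}(\sigma)] \equiv 1$ and the map $\mu \mapsto \sum_{\sigma \in \cbc{\pm 1}^k} \Erw[\tilde\psi_{k,\vJ^{(r)},\beta}(\sigma)]\prod_i \mu(\sigma_i) \equiv 1$ is constant, trivially concave, and attains its maximum at the uniform distribution.

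For \POS, I would write $m_\mu = \mu(1) - \mu(-1) \in [-1,1]$ for $\mu \in \cP(\cbc{\pm 1})$, so that $\sum_\tau \tilde\psi_{k,J,\beta}(\tau)\prod_i \mu_i(\tau_i) = 1 + \tanh(\beta J)\prod_i m_{\mu_i}$. Expanding $\Lambda(1+x) = x + \sum_{\ell \geq 2} \frac{(-1)^\ell}{\ell(\ell-1)} x^\ell$ and exploiting independence of the $\vmu_{i,1,\rho}$ and $\vmu_{i,1,\rho'}$ across $i$, both sides of the \POS\ inequality decouple into a sum over $\ell \geq 1$ whose $\ell$-th contribution is
\begin{align*}
\Erw[T^\ell]\bc{X_\ell^{\,k} + (k-1)Y_\ell^{\,k} - k X_\ell Y_\ell^{\,k-1}},
\end{align*}
with $T = \tanh(\beta\vJ^{(r)})$, $X_\ell = \Erw[M^\ell]$, $Y_\ell = \Erw[M'^\ell]$, and $M, M'$ distributed as $m_{\vmu}$ under $\rho$ and $\rho'$ respectively. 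Odd $\ell$ contribute zero by symmetry of $\vJ^{(r)}$; for even $\ell$ one has $\Erw[T^\ell], X_\ell, Y_\ell \geq 0$, and the bracket is non-negative by the AM-GM inequality $X^k + (k-1)Y^k \geq kXY^{k-1}$ applied to the $k$ non-negative numbers $X^k, Y^k, \ldots, Y^k$. The main (minor) technical hurdle will be justifying the term-by-term Taylor expansion, which is routine since the series converges absolutely and uniformly in light of $|\tanh(\beta J^{(r)})| < 1$ and $|m_\mu| \leq 1$; once this is in place, \POS\ reduces to AM-GM.
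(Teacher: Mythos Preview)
Your argument is correct and follows essentially the same route as the paper: \SYM\ by direct computation giving $\xi=1$, \BAL\ from the symmetry of $\vJ^{(r)}$ making $\Erw[\tilde\psi]$ constant, and \POS\ by Taylor expanding $\Lambda(1+x)$, using symmetry to discard odd powers, and invoking $X^k+(k-1)Y^k\ge kXY^{k-1}$ for the even ones. Your explicit passage from $\psi=\exp(\beta J\prod_i\sigma_i)$ to $\tilde\psi=1+\tanh(\beta J)\prod_i\sigma_i$ via $\exp(\beta J s)=\cosh(\beta J)(1+s\tanh(\beta J))$ is a point the paper leaves implicit (indeed $\xi$ would otherwise depend on $J$), so making this rescaling explicit is a welcome clarification rather than a different method.
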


\begin{proof}
Condition \SYM~ is satisfied with $\eps = 1-\tanh(\beta r) > 0$ and $\xi = 1$.
In \BAL~, the function that we need to check for concavity is $\mu \mapsto 1 + \Erw\brk{\tanh(\beta \vJ^{(r)})} \Erw_\mu\brk{\vX} = 1$, as $\vJ^{(r)}$ is distributed as $-\vJ^{(r)}$. Hence, \BAL~ follows.
Finally, for \POS~, we use the expansion $\Lambda(1-x) = - x + \sum_{\ell \geq 2} x^\ell/(\ell(\ell-1))$ and observe that for $j \geq 2$,
\begin{align*}
\bc{1-\sum_{\tau \in \cbc{\pm 1}^k}\psi_{k,\vJ^{(r)},\beta}(\tau)\prod_{i=1}^{k}\vec \mu_{i,\rho}(\tau_i)}^j = \bc{\tanh(\beta \vJ^{(r)})}^j\prod_{i=1}^k\bc{\vec \mu_{i,\rho}(1) - \vec \mu_{i,\rho}(-1)}^j.
\end{align*}
Therefore, by the dominated convergence theorem,
\begin{align*}
\Erw\brk{\Lambda\bc{\sum_{\tau \in \cbc{\pm 1}^k}\psi_{k,\vJ^{(r)},\beta}(\tau)\prod_{i=1}^{k}\vec \mu_{i,\rho}(\tau_i)}} = \sum_{j \geq 2}\Erw\brk{\bc{\tanh(\beta \vJ^{(r)})}^j} \Erw\brk{\bc{\vec \mu_{1,\rho}(1) - \vec \mu_{1,\rho}(-1)}^j}^k/(j(j-1)).
\end{align*}
We apply the same idea to the other two terms from \POS~ and setting $X_j = \Erw\brk{\bc{\vec \mu_{1,\rho}(1) - \vec \mu_{1,\rho}(-1)}^j}$ and $Y_j = \Erw\brk{\bc{\vec \mu_{1,\rho'}(1) - \vec \mu_{1,\rho'}(-1)}^j}$, we arrive at the condition
\begin{align*}
\sum_{j \geq 2}\Erw\brk{\bc{\tanh(\beta \vJ^{(r)})}^j} \bc{X_j^k + (k-1)Y_j^k - kX_jY_j^{k-1}}/(j(j-1)) \geq 0.
\end{align*}
Again, because $\vJ^{(r)}$ is symmetric, $\Erw\brk{\bc{\tanh(\beta \vJ^{(r)})}^j} = 0$ for odd $j$, while $\Erw\brk{\bc{\tanh(\beta \vJ^{(r)})}^j} \geq 0$ for even $j$. The claim follows from the fact that $X^k-kXY^{k-1}+(k-1)Y^k \geq 0$ for all $X,Y \geq 0$.
\end{proof}

\begin{lemma} \label{lem_mixedk_LRC1}
If long-range correlations are absent in $\G$, we have $\lim_{n \to \infty} \Erw \brk{\log Z(\G)}/n=\lim_{n \to \infty} \log \Erw \brk{Z(\G)}/n$
\end{lemma}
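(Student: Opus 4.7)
The annealed side of the identity is immediate. By \Lem~\ref{lem_mixed_conditions} the normalised weight functions $\tilde\psi_{k,\vJ^{(r)},\beta}(\sigma)=1+\tanh(\beta\vJ^{(r)})\prod_{i=1}^{k}\sigma_i$ satisfy {\bf SYM} with $\xi=1$, and since $\vJ^{(r)}$ has a symmetric distribution we obtain $\Erw[\tilde\psi_{\vk,\vJ^{(r)},\beta}(\sigma)]=1$ for every fixed $\sigma\in\{\pm1\}^{\vk}$. Consequently $\Erw[Z(\G)]=2^n$, so $\log\Erw[Z(\G)]/n=\log 2$, and Jensen's inequality gives $\Erw[\log Z(\G)]/n\le\log 2$. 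My plan is therefore to prove the matching lower bound $\Erw[\log Z(\G)]/n\ge\log 2-o(1)$ under the hypothesis that long-range correlations are absent in $\G$.

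The strategy is to express the quenched free entropy as a telescoping Aizenman--Sims--Starr sum, $\Erw[\log Z(\G_n)]=\sum_{N=0}^{n-1}(\Erw[\log Z(\G_{N+1})]-\Erw[\log Z(\G_N)])$, and estimate each increment by passing from $\G_N$ to $\G_{N+1}$ via the couplings of \Prop s~\ref{Prop_NoelaRough} and~\ref{Prop_NoelaRoughVar}, mirroring the analysis behind \Prop~\ref{Prop_ASS}. The dominant contribution of each summand works out to
\begin{align*}
\log 2+\frac{d}{\bar k}\Erw\brk{\log\bck{1+\tanh(\beta\vJ)\textstyle\prod_{i=1}^{\vk}\sigma_{v_i}}_{\G_N}},
\end{align*}
where $v_1,\dots,v_{\vk}$ are uniformly random variables of $\G_N$ and $\vJ\disteq\vJ^{(r)}$. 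Using the symmetry of $\vJ$, for any $x\in[-1,1]$ one has $\Erw_{\vJ}[\log(1+\tanh(\beta\vJ)x)]=\tfrac12\Erw_{\vJ}[\log(1-\tanh^2(\beta\vJ)x^2)]\le 0$, which both recovers the Jensen upper bound and reduces the lemma to showing that the random quantity $\vx:=\bck{\prod_{i=1}^{\vk}\sigma_{v_i}}_{\G_N}$ satisfies $\Erw[\vx^2]=o(1)$.

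The final step is where absence of long-range correlations enters. I plan to combine the Pinning Lemma~\ref{Lemma_tpinning} with the Symmetry Lemma~\ref{lem_pairwise_symmetry}: LRC absence implies that after pinning $\Theta(1)$ uniformly random variables the pinned Boltzmann distribution is approximately pairwise independent, and \Lem~\ref{lem_pairwise_symmetry} upgrades this to $\vk$-wise factorisation, yielding $\vx\approx\prod_{i=1}^{\vk}\bck{\sigma_{v_i}}_{\G_N}$. Exploiting the spin-flip invariance of the law of $\G$ (a consequence of the symmetry of $\vJ^{(r)}$), each magnetisation $\bck{\sigma_v}_{\G_N}$ has a distribution symmetric about $0$; combined with LRC absence, standard arguments then force $\Erw[\bck{\sigma_v}_{\G_N}^2]=o(1)$ on average. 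Hence $\Erw[\vx^2]\to 0$, and dominated convergence together with the uniform bound $|\tanh(\beta\vJ^{(r)})|\le\tanh(\beta r)<1$ yields the vanishing of the correction term.

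The main obstacle I anticipate is the last step: LRC absence as defined in \eqref{eqLRC} only supplies decorrelation averaged over pairs, and upgrading this to quantitative $L^2$-smallness of the \emph{individual} magnetisations $\bck{\sigma_v}_\G$ near $0$ requires a careful interplay between the pinning operation (which after finitely many fixations forces a near-product decomposition) and the spin-flip symmetry (which precludes a global bias). Once this is in place, the remainder of the argument is routine, essentially a specialisation of the coupling bookkeeping developed in \Sec~\ref{Sec_Noela} and the Aizenman--Sims--Starr template of \Sec~\ref{Sec_Prop_ASS}.
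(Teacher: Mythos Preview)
Your approach is correct in outline but takes a heavier route than the paper. The paper exploits the fact that the mixed $\vk$-spin model carries a natural continuous density parameter $d$ (through $\vm\sim\Po(dn/\Erw[\vk])$) and simply differentiates both $\frac{1}{n}\Erw[\log Z(\G)]$ and $\frac{1}{n}\log\Erw[Z(\G)]$ with respect to $d$. The quenched $d$-derivative is exactly $\Erw[\log(1+\tanh(\beta\vJ)\langle\prod_{i=1}^{\vk}\sigma_{\vy_i}\rangle_{\G})]$, while the annealed derivative is $\log\Erw[1+\tanh(\beta\vJ)\langle\prod_{i}\sigma_{\vy_i}\rangle]=0$ by the symmetry of $\vJ$; LRC absence combined with \Lem~\ref{lem_pairwise_symmetry} then closes the Jensen gap, and integrating in $d$ from $0$ finishes. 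This sidesteps the Aizenman--Sims--Starr machinery entirely. Your telescoping in the variable count $N$ reaches the same correction term, but the couplings you cite (\Prop s~\ref{Prop_NoelaRough} and~\ref{Prop_NoelaRoughVar}) are stated for the planted model $\G^*$, not the null model $\G$; the null-model analogues are of course simpler, but you would need to supply them rather than invoke those propositions. The pinning step in your final argument is also unnecessary here: LRC absence already \emph{is} approximate pairwise independence, so \Lem~\ref{lem_pairwise_symmetry} applies directly without any pinning. On the positive side, you are more explicit than the paper about why the product of magnetisations must vanish --- the paper asserts that $k$-wise independence alone makes the Jensen gap vanish, which tacitly requires $\prod_i\langle\sigma_{v_i}\rangle\approx 0$, a point you address via the spin-flip symmetry of the law of $\G$.
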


\begin{proof}
We readily find that
\begin{align} 
    & \frac{\partial}{\partial d} \frac{1}{n} \Erw \brk{ \log Z(\G)} = \Erw \brk{\log \bc{  1+\tanh(\beta \vJ) \Big \langle \prod_{i=1}^{\vk} \sigma_{\vy_i}  \Big \rangle_G}} \label{eq_kspin_LRC1} \\
    &\qquad \qquad \leq \log \bc{ \Erw \brk{  1+\tanh(\beta \vJ) \Big \langle \prod_{i=1}^{\vk} \sigma_{\vy_i}  \Big \rangle_G}} = \frac{\partial}{\partial d} \frac{1}{n} \log \Erw \brk{ Z(\G)} \nonumber
\end{align}
where the inequality follows by Jensen.
Assume that long-range correlations are absent in $\G$, hence by definition the spins are approximately pairwise independent and by \Lem~\ref{lem_pairwise_symmetry} $k$-wise independent. Therefore, the Jensen gap in \eqref{eq_kspin_LRC1} vanishes.
Finally,
\begin{align*}
    \frac{1}{n} \Erw \brk{ \log Z(\G)} = \int \frac{\partial}{\partial d} \frac{1}{n} \Erw \brk{ \log Z(\G)} \dd d = \int \frac{\partial}{\partial d} \frac{1}{n} \log \Erw \brk{ Z(\G)} \dd d = \frac{1}{n} \log \Erw \brk{Z(\G)}
\end{align*}
whence the lemma follows.
\end{proof}

\begin{claim}\label{claim_kspinsymm1}
If we find for almost all $i, j \in [n]$ that $\langle \sigma_i \sigma_j \rangle = o(1)$, then for all but $o(n)$ coordinates $i \in [n]$ we have $\mu_i(1) = 1/2 + o(1)$.
\end{claim}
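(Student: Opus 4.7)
My plan is to reduce the claim to an elementary calculation, using as an additional input the approximate pairwise independence of the Boltzmann measure $\mu$ that holds in the proof context in which this claim is invoked. Specifically, under either absence of long-range correlations as in \eqref{eqLRC} or an application of the pinning lemma \Lem~\ref{Lemma_tpinning} together with \Lem~\ref{lem_regularitylemma}, the measure $\mu$ is pairwise $o(1)$-independent on average over pairs of indices, meaning
\begin{align*}
\frac{1}{n^2}\sum_{i,j\in[n]}\bigl|\langle\sigma_i\sigma_j\rangle - \langle\sigma_i\rangle\langle\sigma_j\rangle\bigr| = o(1).
\end{align*}
Since $\sigma\in\{\pm1\}$, the target conclusion $\mu_i(1) = 1/2+o(1)$ is equivalent to $\langle\sigma_i\rangle = 2\mu_i(1)-1 = o(1)$.

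First, the hypothesis provides $\sum_{i,j}|\langle\sigma_i\sigma_j\rangle| = o(n^2)$: the $o(n^2)$ exceptional pairs contribute at most $o(n^2)$ via the trivial bound $|\langle\sigma_i\sigma_j\rangle|\le 1$, while the remaining pairs contribute $o(n^2)$ by definition. Combining this with the $o(1)$-symmetry display above via the triangle inequality
\begin{align*}
|\langle\sigma_i\rangle\langle\sigma_j\rangle| \le |\langle\sigma_i\sigma_j\rangle| + \bigl|\langle\sigma_i\sigma_j\rangle - \langle\sigma_i\rangle\langle\sigma_j\rangle\bigr|
\end{align*}
and summing over all $(i,j)\in[n]^2$, one obtains
\begin{align*}
\Bigl(\sum_{i\in[n]}|\langle\sigma_i\rangle|\Bigr)^2 = \sum_{i,j}|\langle\sigma_i\rangle|\cdot|\langle\sigma_j\rangle| = o(n^2),
\end{align*}
so that $\sum_i|\langle\sigma_i\rangle| = o(n)$. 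Markov's inequality then shows that for any vanishing $\varepsilon_n\to 0$ chosen sufficiently slowly, at most $o(n)$ coordinates $i$ satisfy $|\langle\sigma_i\rangle|\ge\varepsilon_n$, which is exactly the desired statement.

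The argument is essentially a single line once the two inputs are identified, so I foresee no genuine technical obstacle; the only thing that requires attention is to verify that one of the two pairwise-independence inputs (absence of LRC, or a pinned/regularised Boltzmann measure) is indeed in force at the point in the surrounding proof where this claim is needed, which is routine in the framework developed throughout the paper.
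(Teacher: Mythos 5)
There is a genuine gap: your argument hinges on the additional input $\frac1{n^2}\sum_{i,j}|\langle\sigma_i\sigma_j\rangle-\langle\sigma_i\rangle\langle\sigma_j\rangle|=o(1)$, which is not part of the hypothesis of the claim and is not available where the claim is invoked. The claim is used inside the proof of \Lem~\ref{lem_mixedk_LRC2} precisely as the step that \emph{establishes} absence of long-range correlations: equality in Jensen's inequality yields $\langle\sigma_i\sigma_j\rangle=o(1)$ for almost all pairs, the claim then gives $\langle\sigma_i\rangle=o(1)$ for most $i$, and only the combination of these two facts yields \eqref{eqLRC}. So taking absence of LRC (equivalently, the average covariance bound) as an input is circular. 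The alternative input you offer, the pinning lemma, produces approximate pairwise independence only for the \emph{pinned} measure $\check\mu$, whereas both the hypothesis and the conclusion of the claim concern the unpinned Boltzmann marginals; transferring between the two is not routine and is not done in the surrounding argument. The algebra you perform once the extra input is granted is fine, but the proof as written does not establish the claim as stated.

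For comparison, the paper's proof needs no independence assumption on $\mu$: it passes to the cut-metric limit object $\hat\mu:[0,1]^2\to\cP(\{\pm1\})$ of \cite{CutDistanceMax}, in which coordinates are conditionally independent given the ``state'' $s$, so that the hypothesis becomes exact orthogonality $\int_0^1F(s,x)F(s,y)\dd s=0$ for almost all $x,y$; since a separable Hilbert space contains only countably many pairwise orthogonal nonzero vectors, $F(\nix,x)=0$ for almost every $x$, which translates back into $\mu_i(1)=\frac12+o(1)$ for all but $o(n)$ coordinates. If you want to keep an elementary, finitary argument in the spirit of yours, the correct route is an almost-orthogonality (Bessel-type) estimate applied directly to the random variables $\sigma_i$ viewed as unit vectors in $L^2(\mu)$: if $|A^+|=\alpha n$ coordinates satisfy $\langle\sigma_i\rangle\ge\eps$ (after a sign-pigeonholing step), then $w=\sum_{i\in A^+}\sigma_i$ satisfies $\eps^2|A^+|^2\le\langle w,1\rangle^2\le\|w\|^2\le|A^+|+o(n^2)+\delta_n|A^+|^2$, forcing $\alpha=o(1)$. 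This uses only the stated hypothesis and replaces the unjustified factorisation $\langle\sigma_i\sigma_j\rangle\approx\langle\sigma_i\rangle\langle\sigma_j\rangle$.
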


\begin{proof}
We prove the claim by using limits, i.e., we associate a function $f_{\sigma}: [0,1] \to \cP(\cbc{-1, 1})$ with $\sigma \in \cbc{-1, 1}^n$ such that $$f_\sigma(x) = \sum_{i=0}^{n-1} \vecone \cbc{ x \in [i/n, (i+1)/n) } \delta_{\sigma_{i}}.$$ (Hence, $f_\sigma \in \cP((\cbc{-1, 1})$ is the atom on ${\sigma_i}$ that represents the assignment $\sigma$ when we \textit{shrink} the coordinates from $[n]$ to $[0,1]$). Coming with this embedding of $(\cbc{-1, 1}^n$ into the space of functions $f: [0,1] \to \cP(\cbc{-1, 1})$, there is an embedding of the corresponding probability measures $\mu \in \cP((\cbc{-1, 1}^n)$ into the space of functions $\hat \mu: [0,1]^2 \to \cP((\cbc{-1, 1})$ by taking well-defined limits. A detailed discussion and formal justification of the procedure is provided by \cite{CutDistanceMax}.  

Hence, we effectively need to prove the following.
Let $F:(s,x)\in[0,1]^2\to[-1,1]$ be a measurable function such that
\begin{align}\label{eqL2_1}
\int_0^1F(s,x)F(s,y)\dd s=0
\end{align}
for almost all $x,y\in[0,1]$.
Then $F(s,x)=0$ almost surely.
To prove this statement think of the integral as an inner product of the vectors $F(\nix,x),F(\nix,y)\in L^2([0,1])$.
Then \eqref{eqL2_1} shows that $(F(\nix,x))_x$ is an orthogonal family.
Since any orthonormal family of the separable Hilbert space $L^2([0,1])$ is countable, this implies that $\{F(\nix,x)/\|F(\nix,x)\|^2:F(\nix,x)\neq0\}$ is countable.
Therefore, unless $F(\nix,x)=0$ for almost all $x$, there exists $x$ with $F(\nix,x)\neq0$ such that the set $\{y\in[0,1]:F(\nix,y)\neq0,F(\nix,y)/\|F(\nix,y)\|_2=F(\nix,x)/\|F(\nix,x)\|_2\}$ has positive measure.
But this contradicts \eqref{eqL2_1}.
\end{proof}

The next lemma follows almost directly from Claim \ref{claim_kspinsymm1} as we find that almost all spins $\sigma_{\vy_1}$ and $\sigma_{\vy_2}$ need to be independent, hence, no long-range correlations are present.

\begin{lemma} \label{lem_mixedk_LRC2}
If we have $\lim_{n \to \infty} \Erw \brk{\log Z(\G)}/n=\lim_{n \to \infty} \log \Erw \brk{Z(\G)}/n$, long-range correlations are absent in $\G$.
\end{lemma}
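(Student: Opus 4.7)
The plan is to invert the argument of \Lem~\ref{lem_mixedk_LRC1}. Define $A_n(d) := \Erw[\log Z(\G)]/n$ and $B_n(d) := \log \Erw[Z(\G)]/n$ as functions of the mean variable degree $d$ (with $\beta$ and the law of $\vk$ held fixed). The identity \eqref{eq_kspin_LRC1} combined with Jensen's inequality gives $\partial_d A_n(d) \leq \partial_d B_n(d)$ pointwise, so the gap $\Delta_n(d) := B_n(d) - A_n(d)$ is non-negative and non-decreasing, with $\Delta_n(0) = 0$ since the empty graph has $Z = 2^n$. Let $d_0$ denote the parameter appearing in the hypothesis, so that $\Delta_n(d_0) \to 0$. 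Monotonicity then forces $\Delta_n(d') \to 0$ for every $d' \in [0, d_0]$, and writing $\Delta_n(d_0) = \int_0^{d_0}(\partial_d B_n - \partial_d A_n)\,\dd d$ with non-negative integrand, I deduce that $\partial_d B_n - \partial_d A_n \to 0$ in $L^1([0, d_0])$, hence pointwise at almost every $d' \in [0, d_0]$ along a subsequence.

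Fix such a good point $d'$ and set $X_n := 1 + \tanh(\beta \vJ)\bck{\prod_{i=1}^{\vk}\sigma_{\vy_i}}_{\G}$, so that $\log \Erw[X_n] - \Erw[\log X_n] \to 0$ at $d=d'$. By {\bf SYM} (see \Lem~\ref{lem_mixed_conditions}), $X_n$ takes values in a compact subinterval of $(0, \infty)$ on which $\log$ is strongly concave; hence the Jensen gap dominates the variance, $\log \Erw[X_n] - \Erw[\log X_n] \geq c\,\Var(X_n)$ for some $c > 0$, and consequently $\Var(X_n) \to 0$. Symmetry of $\vJ$ and its independence from $(\G, \vsigma, \vy, \vk)$ give $\Erw[X_n] = 1$, so $X_n \to 1$ in $L^2$; combining with $\Erw[\tanh^2(\beta \vJ)] > 0$ and independence of $\vJ$ yields $\Erw[\bck{\prod_{i=1}^{\vk}\sigma_{\vy_i}}_{\G}^2] = o(1)$.

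Conditioning on the positive-probability event $\vk = 2$ gives $\frac{1}{n^2}\sum_{i,j}\Erw[\bck{\sigma_i\sigma_j}_{\G}^2] = o(1)$, so $\bck{\sigma_i\sigma_j}_{\G} = o(1)$ for all but $o(n^2)$ pairs; Claim~\ref{claim_kspinsymm1} then yields $\mu_{\G, i}(1) = 1/2 + o(1)$ for all but $o(n)$ coordinates. The identity $4\mu_{\G}(\sigma_i = \sigma_j = 1) = 2\mu_{\G, i}(1) + 2\mu_{\G, j}(1) - 1 + \bck{\sigma_i\sigma_j}_{\G}$ combines both estimates into $|\mu_{\G}(\sigma_i = \sigma_j = 1) - \mu_{\G, i}(1)\mu_{\G, j}(1)| = o(1)$ for all but $o(n^2)$ pairs, establishing LRC-absence at the parameter $d'$. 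The remaining step, which I expect to be the main obstacle, is to transfer LRC-absence from almost every $d' < d_0$ to the endpoint $d_0$; the plan is an argument by contradiction, using the coupling of \Prop~\ref{prop_coupling_int} between $\G(n, d'')$ and $\G(n, d_0)$ differing in $\Po((d_0 - d'')n/\Erw[\vk])$ factor nodes to propagate hypothetical LRC at $d_0$ to a full interval $[d_0 - \delta, d_0]$, on which the strict Jensen gap $\partial_d B_n - \partial_d A_n \geq \eta > 0$ would contradict $\Delta_n(d_0) - \Delta_n(d_0 - \delta) = o(1)$.
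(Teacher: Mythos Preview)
Your approach is essentially the paper's: from $\Delta_n(d_0)\to 0$ and the pointwise inequality $\partial_d A_n\le\partial_d B_n$ one forces the Jensen gap in \eqref{eq_kspin_LRC1} to vanish, restricts to the event $\vk=2$, and feeds the resulting pairwise decorrelation into Claim~\ref{claim_kspinsymm1}. You are more explicit than the paper at two junctures. Your strong-concavity step --- passing from the vanishing Jensen gap to $\Var(X_n)\to 0$ and thence to $\Erw\bigl[\langle\sigma_i\sigma_j\rangle_{\G}^2\bigr]=o(1)$ --- spells out what the paper compresses into ``equality needs to hold in \eqref{eq_kspin_LRC} \dots\ in particular for $k=2$'' followed by a bare appeal to Claim~\ref{claim_kspinsymm1}. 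And you correctly flag that the $L^1$ argument only delivers vanishing of $\partial_d B_n-\partial_d A_n$ at \emph{almost every} $d'\in[0,d_0]$; the paper simply writes \eqref{eq_kspin_deriv} at the target parameter without addressing this.

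Your proposed endpoint fix, however, has a directional flaw. Under the natural coupling, $\G(d'')$ for $d''<d_0$ is obtained from $\G(d_0)$ by \emph{removing} $\Theta((d_0-d'')n)$ factor nodes, and there is no general principle forcing long-range correlations to persist under factor removal; so the implication ``LRC at $d_0$ $\Rightarrow$ LRC on $[d_0-\delta,d_0]$'' is unjustified as stated, and the contradiction does not close. The honest situation is that neither your argument nor the paper's actually handles the endpoint: what is needed is some uniform-in-$n$ regularity of $d\mapsto\frac{1}{n^2}\sum_{i,j}\Erw\bigl[\langle\sigma_i\sigma_j\rangle_{\G}^2\bigr]$ (or of the Jensen gap itself) so that a.e.\ vanishing on $[0,d_0]$ upgrades to vanishing at $d_0$. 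That requires a separate argument beyond what either proof supplies.
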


\begin{proof}
By Jensen's inequality, we have
\begin{align*}
    \lim_{n \to \infty} \frac{\partial}{\partial d} \frac{1}{n} \Erw \brk{ \log Z(\G)} \leq \lim_{n \to \infty} \frac{\partial}{\partial d} \frac{1}{n} \log \Erw \brk{ Z(\G)}
\end{align*}
Since $\lim_{n \to \infty} \Erw \brk{\log Z(\G)}/n=\lim_{n \to \infty} \log \Erw \brk{Z(\G)}/n$ by assumption, we find
\begin{align} \label{eq_kspin_deriv}
    \lim_{n \to \infty} \frac{\partial}{\partial d} \frac{1}{n} \Erw \brk{ \log Z(\G)} = \lim_{n \to \infty} \frac{\partial}{\partial d} \frac{1}{n} \log \Erw \brk{Z(\G)}
\end{align}
Moreover, another application of Jensen's inequality yields
\begin{align}
    \lim_{n \to \infty} \frac{\partial}{\partial d} \frac{1}{n} \Erw \brk{ \log Z(\G)} &= \Erw \brk{\log \bc{  1+\tanh(\beta \vJ) \Big \langle \prod_{i=1}^{\vk} \sigma_{\vy_i}  \Big \rangle_G}} \nonumber \\
    &\leq \log \bc{ \Erw \brk{  1+\tanh(\beta \vJ) \Big \langle \prod_{i=1}^{\vk} \sigma_{\vy_i}  \Big \rangle_G}} = \lim_{n \to \infty} \frac{\partial}{\partial d} \frac{1}{n} \log \Erw \brk{ Z(\G)} \label{eq_kspin_LRC}
\end{align}
By \eqref{eq_kspin_deriv}, we need equality to hold in  \eqref{eq_kspin_LRC}. Since $\Pr \brk{\vk = 2} > \eps$ for some $\eps>0$, this equality needs to hold in particular for $k=2$. By Claim \ref{claim_kspinsymm1}, this implies the absence of long-range correlations in $\G$ closing the proof of the lemma.
\end{proof}

\begin{proof}[Proof of \Thm~\ref{thm_mixed_k}]
By \Lem~\ref{lem_mixed_conditions} and since $\vX$ converges to $\vJ$ in probability as $\eps \to 0$ after taking $n \to \infty$, \Thm~\ref{Thm_main} is applicable to the mixed $\vk$-spin model. Moreover, \Lem s~\ref{lem_mixedk_LRC2} and \ref{lem_mixedk_LRC2} evince that long-range correlations are absent in $\G$ if and only if
\begin{align*}
    \lim_{n \to \infty} \Erw \brk{\log Z(\G)}/n=\lim_{n \to \infty} \log \Erw \brk{Z(\G)}/n
\end{align*}
The theorem readily follows.
\end{proof}

\section{Condensation threshold} \label{klcondp}

In this section we discuss two (asymptotical) quantities considered as functions of the model parameters $q$, $(\vpsi_k)_k$, $\vk$ and $\vd$.
For this purpose let $\mcFEZ_k=\sum_{\mcFHEA\in\mcColSp^k}\Erw[\vpsi_k(\mcFHEA)]$ for $k\in\ZZ_{\ge 0}$.
The \emph{annealed free entropy density} $\phi_{\mathrm{a}}\in\RR$ is given by
\begin{align*}
\phi_{\mathrm{a}}=(1-\d)\ln(q)+\frac{\d}{\k}\mathbb E\left[\ln\left(\mcFEZ_{\vk}\right)\right]\textrm{.}
\end{align*}
Assuming $q$ to be fixed we consider the regimes
\begin{align*}
\mathcal R_{\mathrm{RS}}=\left((\vd,\vk,(\vpsi_k)_k):\mathcal B_{\sup}\le\phi_{\mathrm{a}}\right)\textrm{ and }
\mathcal R_{\mathrm{cond}}=\left\{(\vd,\vk,(\vpsi_k)_k):\mathcal B_{\sup}>\phi_{\mathrm{a}}\right\}\textrm{.}
\end{align*}
The next result is dedicated to the relative entropy of the teacher-student model with respect to the null model.
\begin{theorem}\label{main_kl_divergence}
Assume that {\bf DEG}, {\bf SYM}, {\bf BAL} and {\bf POS} hold. Then we have
\begin{align*}
\lim_{n\rightarrow\infty}\frac{1}{n}\KL{(\mcRVA^*_n,\mcRG^*_n(\mcRVA^*_n))}{(\mcRVAPost_{\mcRG_n},\mcRG_n)}&=0\textrm{, }(\vd,\vk,(\vpsi_k)_k)\in\mathcal R_{\mathrm{RS}}\textrm{,}\\
\lim_{n\rightarrow\infty}\frac{1}{n}\KL{(\mcRVA^*_n,\mcRG^*_n(\mcRVA^*_n))}{(\mcRVAPost_{\mcRG_n},\mcRG_n)}&>0\textrm{, }(\vd,\vk,(\vpsi_k)_k)\in\mathcal R_{\mathrm{cond}}\textrm{.}
\end{align*}
\end{theorem}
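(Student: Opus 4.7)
The plan is to reduce the KL divergence to the gap between the quenched free entropy of the teacher--student model and the annealed free entropy of the null model, and then identify this gap with the gap $\cB_{\sup} - \phi_{\mathrm a}$ using \Thm~\ref{Thm_main}.

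\textbf{Reduction via Nishimori.} I would first invoke \Prop~\ref{Prop_Nishi}/\Prop~\ref{Lemma_genNishi} to write $(\mcRVA^*_n, \mcRG^*_n(\mcRVA^*_n))$ as the joint law $(\hat\vsigma_n,\hat\G_n)$, where $\hat\G_n$ has density proportional to $Z(\G_n)$ with respect to $\mcRG_n$ and $\hat\vsigma_n\mid\hat\G_n\sim\mu_{\hat\G_n}$. The Radon--Nikodym derivative of $(\hat\vsigma_n,\hat\G_n)$ with respect to $(\mcRVAPost_{\mcRG_n},\mcRG_n)$ at $(\sigma,G)$ collapses to $Z(G)/\Erw[Z(\mcRG_n)]$, so
\begin{align*}
\KL{(\mcRVA^*_n,\mcRG^*_n(\mcRVA^*_n))}{(\mcRVAPost_{\mcRG_n},\mcRG_n)}
=\Erw[\log Z(\hat\G_n)]-\log\Erw[Z(\mcRG_n)].
\end{align*}
The contiguity part of \Prop~\ref{Prop_Nishi} yields $\Erw[\log Z(\hat\G_n)]=\Erw[\log Z(\mcRG^*_n(\mcRVA^*_n))]+o(n)$, and combining \Prop~\ref{Prop_IZ} with \Thm~\ref{Thm_main} gives $\Erw[\log Z(\mcRG^*_n)]/n\to\cB_{\sup}$.

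\textbf{Computing the annealed term.} Next I would show that $n^{-1}\log\Erw[Z(\mcRG_n)]\to\phi_{\mathrm a}$. Since the weight functions are independent both of the wiring and of each other, $\Erw[Z(\mcRG_n)]=\sum_\sigma\Erw[\psi_{\mcRG_n}(\sigma)]$, and conditioning on the degree sequence one has, up to negligible configuration--model corrections,
\begin{align*}
\Erw[\psi_{\mcRG_n}(\sigma)\mid\fD]=\prod_{a=1}^{\vm}F_{\vk_a}(\rho_\sigma),\qquad F_k(\rho)=\sum_{\tau\in\Omega^k}\Erw[\vpsi_k(\tau)]\prod_{i=1}^k\rho(\tau_i),
\end{align*}
where $\rho_\sigma$ is the empirical colour frequency of $\sigma$. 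Assumption {\bf BAL} ensures $F_k(\rho)\le F_k(u_\Omega)=\xi$ with equality iff $\rho=u_\Omega$, and $\sum_\sigma\vecone\{\rho_\sigma=\rho\}=\binom{n}{n\rho}=\exp(n(\log q-\mathrm{KL}(\rho\|u_\Omega)+o(1)))$. A standard Laplace argument over $\rho$, using the results of Section~\ref{dd} to condition on typical degrees, then gives the unique maximizer $\rho=u_\Omega$ and
\begin{align*}
\frac1n\log\Erw[Z(\mcRG_n)]\longrightarrow\log q+\frac{\bar d}{\bar k}\log\xi=\phi_{\mathrm a},
\end{align*}
where the second equality uses that {\bf SYM} forces $\mcFEZ_k=q^k\xi$, so $(\bar d/\bar k)\Erw[\log\mcFEZ_{\vk}]=\bar d\log q+(\bar d/\bar k)\log\xi$.

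\textbf{Concluding.} Evaluating the Bethe functional at the point mass $\pi=\delta_{u_\Omega}\in\PP_*(\Omega)$ gives $\cB(\delta_{u_\Omega})=\phi_{\mathrm a}$ by the same computation that produced \Prop~\ref{Lemma_00}, so $\cB_{\sup}\ge\phi_{\mathrm a}$ unconditionally. By the definition of the two regimes, $\cB_{\sup}=\phi_{\mathrm a}$ on $\mathcal R_{\mathrm{RS}}$ and $\cB_{\sup}>\phi_{\mathrm a}$ on $\mathcal R_{\mathrm{cond}}$. Combining the three steps,
\begin{align*}
\frac1n\KL{(\mcRVA^*_n,\mcRG^*_n(\mcRVA^*_n))}{(\mcRVAPost_{\mcRG_n},\mcRG_n)}
\longrightarrow \cB_{\sup}-\phi_{\mathrm a},
\end{align*}
which is $0$ on $\mathcal R_{\mathrm{RS}}$ and strictly positive on $\mathcal R_{\mathrm{cond}}$, as claimed. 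The main point requiring real care is the Laplace--type computation of $\log\Erw[Z(\mcRG_n)]$: one must control contributions from atypical degree sequences (via \Prop~\ref{dd_typ_prop}) and from colour frequencies $\rho$ at distance $\Theta(1)$ from $u_\Omega$, but the same local--limit and tail--bound machinery developed in Sections~\ref{llt}--\ref{mc} (in particular, the arguments underlying \Prop~\ref{mcp_first_moment}) delivers the required uniform control without any genuinely new input.
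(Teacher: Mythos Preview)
Your overall strategy matches the paper's: reduce the KL to a free-entropy gap, identify that gap as $\cB_{\sup}-\phi_{\mathrm a}$ via \Thm~\ref{Thm_main}, and read off the dichotomy. The annealed computation and the concluding step are fine (and your observation $\cB(\delta_{u_\Omega})=\phi_{\mathrm a}$ is exactly how one sees $\cB_{\sup}\ge\phi_{\mathrm a}$).

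There is, however, a genuine gap in your Nishimori step. \Prop~\ref{Prop_Nishi} does \emph{not} assert that $(\mcRVA^*,\mcRG^*(\mcRVA^*))$ and $(\hat\vsigma,\hat\G)$ have the same law; it asserts that $(\hat\vsigma,\mcRG^*(\hat\vsigma))$ and $(\vsigma_{\hat\G},\hat\G)$ do. The pair $(\mcRVA^*,\mcRG^*(\mcRVA^*))$ uses the \emph{uniform} ground truth, and its Radon--Nikodym derivative with respect to $(\mcRVAPost_{\mcRG},\mcRG)$ at $(\sigma,G)$ is $q^{-n}Z(G)/\mcFEpsi_t(\sigma)$, not $Z(G)/\mcZE_t$. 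Working this out gives the paper's decomposition
\[
\frac1n\KL{(\mcRVA^*,\mcRG^*(\mcRVA^*))}{(\mcRVAPost_{\mcRG},\mcRG)}
=\phi^*-\phi_{\mathrm a}+g,\qquad g=\frac1n\KL{\mcRVA^*}{\mcRVAN},
\]
so compared to your displayed identity there is an extra term $g$. Mutual contiguity of $\mcRVA^*$ and $\mcRVAN$ alone does not force $g=o(1)$: KL divergence is not a continuous functional of its first argument under contiguity. The paper handles this by recognising $g$ as the quantity $\delta^*_0(t)$ already controlled in \Sec~\ref{mip_negligible}, where the two-sided bounds on the density $r_t(\sigma)$ on a set of probability $1-\varepsilon$ (supplied by \Prop~\ref{mcp_main_prop}) combined with the deterministic bound $|n^{-1}\ln r_t(\sigma)|\le c$ give $g=o(1)$. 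Once this is inserted, the rest of your argument (together with $\phi^*=\hat\phi+o(1)=\cB_{\sup}+o(1)$) goes through unchanged and coincides with the paper's proof.
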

The last result establishes that the \emph{quenched free entropy density} and the annealed entropy density coincide exactly in the replica symmetric regime.
\begin{theorem}\label{main_qfed}
Assume that {\bf DEG}, {\bf SYM}, {\bf BAL} and {\bf POS} hold. Then we have
\begin{align*}
\lim_{n\rightarrow\infty}\frac{1}{n}\Erw\left[\ln\left(Z_{\mcRG_n}\right)\right]&=\phi_{\mathrm{a}}\textrm{, }(\vd,\vk,(\vpsi_k)_k)\in\mathcal R_{\mathrm{RS}}\textrm{,}\\
\lim_{n\rightarrow\infty}\frac{1}{n}\Erw\left[\ln\left(Z_{\mcRG_n}\right)\right]&<\phi_{\mathrm{a}}\textrm{, }(\vd,\vk,(\vpsi_k)_k)\in\mathcal R_{\mathrm{cond}}\textrm{.}
\end{align*}
\end{theorem}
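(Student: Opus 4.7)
My plan is to combine Theorem~\ref{Thm_main}, Proposition~\ref{Prop_IZ}, Lemma~\ref{lem_free_energy_null_star}, and Jensen's inequality to relate the quenched free entropy density to the variational formula $\sup_{\pi}\cB(\pi)$. The first step is to evaluate the Bethe functional at the trivial measure $\pi_0 \in \PP_*(\Omega)$ concentrated on the uniform distribution on $\Omega$. Assumption \SYM\ implies that $\sum_{\tau \in \Omega^k}\psi(\tau) = q^k\xi$ for \emph{every} weight function in the support of $\vpsi_k$, so in particular $\mcFEZ_k = q^k\xi$ is deterministic. Plugging $\mu_{i,j} \equiv u_\Omega$ into \eqref{eqBFE}, the inner sum in the first term collapses to $\xi^{\vd}q$, and the second term to $(\d/\k)(\vk-1)\xi\ln\xi$, yielding $\cB(\pi_0)=\ln q + (\d/\k)\ln\xi = \phi_{\mathrm{a}}$. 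In particular $\sup_{\pi}\cB(\pi) \geq \phi_{\mathrm{a}}$ unconditionally, and a direct computation gives $n^{-1}\ln\Erw[Z(\G)] = \phi_{\mathrm{a}} + o(1)$.

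Second, combining Theorem~\ref{Thm_main} with Proposition~\ref{Prop_IZ} (and noting the first-term cancellation on either side of both identities) gives
\begin{align*}
\frac{1}{n}\Erw[\ln Z(\G^*)] \;=\; \sup_{\pi \in \PP_*(\Omega)}\cB(\pi) + o(1).
\end{align*}
In the replica symmetric regime $\cR_{\mathrm{RS}}$ we have $\cB_{\sup} \leq \phi_{\mathrm{a}}$, and together with $\cB(\pi_0) = \phi_{\mathrm{a}}$ this forces $\sup_{\pi}\cB(\pi) = \phi_{\mathrm{a}}$ exactly. Hence $\Erw[\ln Z(\G^*)] = \ln\Erw[Z(\G)] + o(n)$, and Lemma~\ref{lem_free_energy_null_star} transfers this saturation of the annealed bound from the teacher--student model to the null model, giving $\Erw[\ln Z(\G)] = n\phi_{\mathrm{a}} + o(n)$.

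Third, in the condensation regime $\cR_{\mathrm{cond}}$ we have $\sup_{\pi}\cB(\pi) > \phi_{\mathrm{a}}$ strictly, so $\Erw[\ln Z(\G^*)]/n$ does not converge to $\phi_{\mathrm{a}}$. The contrapositive of Lemma~\ref{lem_free_energy_null_star} then yields $\Erw[\ln Z(\G)]/n \not\to \phi_{\mathrm{a}}$. Since Jensen's inequality supplies the upper bound $\Erw[\ln Z(\G)]/n \leq n^{-1}\ln\Erw[Z(\G)] = \phi_{\mathrm{a}} + o(1)$, this already produces $\limsup_n \Erw[\ln Z(\G)]/n < \phi_{\mathrm{a}}$.

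The main obstacle I anticipate is promoting this $\limsup$ statement to the full limit asserted in the theorem. To do so I would exploit the concentration apparatus of Section~\ref{conc}: Proposition~\ref{conc_std_loc} together with Proposition~\ref{conc_std_glob} shows that $\phi(\G)$ concentrates around a sequence $\bar\phi_n$ that is Cauchy up to $o(1)$ uniformly in the typical degree sequences, which combined with the uniform convergence established for $\Erw[\ln Z(\G^*)]/n$ and the quantitative form of Lemma~\ref{lem_free_energy_null_star} should upgrade $\limsup$ to $\lim$. Preserving strict inequality through this step requires only that the gap $\sup_{\pi}\cB(\pi) - \phi_{\mathrm{a}} > 0$ be used as a uniform lower bound on the defect in Jensen's inequality, which it does by the quantitative version of the lemma.
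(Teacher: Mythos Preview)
Your approach is essentially the paper's: both reduce the theorem to Lemma~\ref{lem_free_energy_null_star} (which the paper proves as Lemma~\ref{condp_rs_phase}) together with the identity $\phi^*=\hat\phi=\sup_\pi\cB(\pi)+o(1)$ coming from Propositions~\ref{Prop_ASS}, \ref{Prop_int} and~\ref{Prop_Nishi}. The one substantive difference is how you obtain $\sup_\pi\cB(\pi)\ge\phi_{\mathrm a}$: you compute $\cB(\pi_0)=\phi_{\mathrm a}$ directly at the Dirac mass on $u_\Omega$ (a clean computation, and correct), whereas the paper obtains the equivalent inequality $\hat\phi_t\ge\phi_{\mathrm a,t}$ from the identity $\hat\phi_t=\phi_{\mathrm a,t}+\frac1n\KL{\hat\G_t}{\G_t}$ (Fact~\ref{klcondp_prel_qfed_inequ}). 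Your route is arguably more self-contained; the paper's route has the advantage of simultaneously delivering the Jensen bound $\bar\phi_t\le\phi_{\mathrm a,t}$ from the same identity.

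Regarding your flagged obstacle: the paper's own proof in Section~\ref{condp} in fact concludes with precisely the statement $\limsup_{n\to\infty}\bar\phi<\phi_{\mathrm a,\infty}$ in the condensation regime and declares the proof complete, so you have not fallen short of the paper here. Your instinct to look at the concentration machinery of Section~\ref{conc} to upgrade $\limsup$ to $\lim$ is reasonable, but note that Propositions~\ref{conc_std_loc} and~\ref{conc_std_glob} only give concentration around $\bar\phi_n$ and uniformity over $t\in\cT^\circ_n$; they do not by themselves prove that $(\bar\phi_n)_n$ is Cauchy, so the existence of the limit is not established by those results alone.
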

In the following we tacitly assume that {\bf DEG}, {\bf SYM}, {\bf BAL} and {\bf POS} are satisfied.
\subsection{Preliminaries}\label{klcondp_prel}
We use the notation from Section \ref{conc}
and further let $\phi_{\mathrm{a}}=\Erw[\phi_{\mathrm{a},\vt_n}]$ denote the annealed free entropy density with $\phi_{\mathrm{a},\mcT}=\frac{1}{n}\ln(\bar Z_\mcT)$, $\bar Z_\mcT=\Erw[Z_{\mcRG_\mcT}]$, denoting the annealed free entropy density for given $\mcT\in\cT_n$, $n\in\cN$.
The first result is a corollary to Proposition \ref{mcp_first_moment}.
\begin{fact}\label{klcondp_prel_afed}
Uniformly in $\mcT\in\cT^\circ_n$ we have $\phi_{\mathrm{a},t}=\phi_{\mathrm{a},\infty}+o(1)$ and further $\phi_{\mathrm{a}}=\phi_{\mathrm{a},\infty}+o(1)$, where
\begin{align*}
\phi_{\mathrm{a},\infty}
=(1-\d)\ln(q)+\frac{\d}{\k}\Erw\left[\ln\left(\mcFEZ_{\vk}\right)\right]\textrm{.}
\end{align*}
\end{fact}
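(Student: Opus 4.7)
My plan is to derive the first claim directly from the asymptotics of the first moment already established in Proposition \ref{mcp_first_moment}, and then bootstrap to the unconditional statement via the typicality results of Section \ref{dd}. The key observation is that $\phi_{\mathrm{a},\mcT} = n^{-1}\ln\bar Z_\mcT$ and Proposition \ref{mcp_first_moment} supplies $\bar Z_\mcT = \Theta(\mcZ^*_\mcT)$ with $\mcZ^*_\mcT = q^n \prod_{i\in[m]}\mcFExi_{\mcFdeg_i}$ uniformly over $\mcT \in \cT_n^\circ$. Taking logs and writing $\mcFExi_k = \mcFEZ_k/q^k$, I would obtain
\begin{align*}
\frac{1}{n}\ln\mcZ^*_\mcT = \ln q + \frac{1}{n}\sum_{i\in[m_\mcT]}\ln\mcFEZ_{\mcFdeg_i} - \frac{\mcFDTot_\mcT}{n}\ln q + O(n^{-1}).
\end{align*}

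The next step is to invoke the defining properties of typical sequences $\mcT\in\cT_n^\circ$ recalled in Section \ref{ddp_typ_prop}: uniformly one has $m_\mcT/n = \d/\k + o(1)$, $\mcFDTot_\mcT/n = \mcVDTot_\mcT/n = \d + o(1)$ (using the degree-matching constraint in the definition of $\cT_n$), and the empirical distribution $p_{\mathrm{k},\mcT}$ converges to $p_{\mathrm{k}}$ with respect to the degree-reweighted metric $\Delta$. Since {\bf SYM} gives uniform bounds $\eps \le \mcFEZ_k/q^k \le \eps^{-1}$ (so $\ln\mcFEZ_k$ is bounded by a linear function of $k$), and since typicality enforces convergence of the first moment of $\vk$ and an absolute bound on the $(2+\eps_{\mathrm{deg}})$-th moment, an $\eps$-truncation argument yields $m_\mcT^{-1}\sum_{i\in[m_\mcT]}\ln\mcFEZ_{\mcFdeg_i} = \Erw[\ln\mcFEZ_{\vk}] + o(1)$ uniformly. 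Plugging these estimates back and combining terms yields $\phi_{\mathrm{a},\mcT} = (1-\d)\ln q + (\d/\k)\Erw[\ln\mcFEZ_{\vk}] + o(1) = \phi_{\mathrm{a},\infty} + o(1)$, uniformly in $\mcT\in\cT_n^\circ$.

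For the second claim, I plan to use the fact that $\phi_{\mathrm{a},\mcT}$ is sublinear in the number of factors (which follows trivially from {\bf SYM}: $|\phi_{\mathrm{a},\mcT} - \ln q| \le (m_\mcT/n)|\ln\eps|$). By Proposition \ref{dd_typ_prop} this gives $\phi_{\mathrm{a}} = \Erw[\phi_{\mathrm{a},\vt_n}\vecone\{\vt_n\in\cT_n^\circ\}] + o(1)$, and combining with the uniform bound on $\cT_n^\circ$ established in the previous step yields $\phi_{\mathrm{a}} = \phi_{\mathrm{a},\infty} + o(1)$ as desired.

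There is no real obstacle here: everything reduces to plugging the uniform first-moment asymptotics of Proposition \ref{mcp_first_moment} into the definition and averaging over $\vt_n$. The only point requiring mild care is handling the tails of the degree distribution $\vk$ when passing from $p_{\mathrm{k},\mcT}$ to $p_{\mathrm{k}}$ inside the logarithm, which is where the $(2+\eps_{\mathrm{deg}})$-th moment assumption from {\bf DEG} is put to work exactly as in the truncation argument used in Section \ref{ddp_typ}.
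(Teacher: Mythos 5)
Your proposal is correct and follows essentially the same route as the paper: invoke the uniform first-moment asymptotics $\bar Z_\mcT=\Theta(q^n\prod_i\xi_{k_i})$ from Proposition \ref{mcp_first_moment}, pass from the empirical degree statistics to the limiting ones using the definition of $\cT^\circ_n$ together with the uniform bounds on $\xi_k$ from {\bf SYM}, and then deduce the unconditional statement from sublinearity in the number of factors via Proposition \ref{dd_typ_prop}. The only difference is cosmetic — you expand $\xi_k=\mcFEZ_kq^{-k}$ at the empirical level (hence needing the degree-reweighted convergence of $p_{\mathrm{k},t}$), whereas the paper keeps $\ln\xi_{\vk}$ bounded throughout and expands only at the end.
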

\begin{proof}
Recall from Proposition \ref{mcp_first_moment} that we have $\bar Z_\mcT=(1+o(1))r^*_\mcT q^n\prod_{i\in[m_\mcT]}\xi_{k_{\mcT,i}}$ uniformly in $\mcT\in\cT^\circ_n$ with $r^*_\mcT=\Theta(1)$ uniformly, which yields
\begin{align*}
\phi_{\mathrm{a},\mcT}=\ln(q)+\frac{m_\mcT}{n}\Erw\left[\ln\left(\xi_{\vk_\mcT}\right)\right]+o(1)
\end{align*}
uniformly. Now, notice that $\varepsilon\le\xi_k\le\varepsilon^{-1}$ for all $k$ in the support of $\vk$ using {\bf SYM}, so with these uniform bounds on the expectation and the uniform bounds imposed by $\cT^\circ_n$ we have
\begin{align*}
\phi_{\mathrm{a},\mcT}=\ln(q)+\frac{\d}{\k}\Erw\left[\ln\left(\xi_{\vk}\right)\right]+o(1)
=(1-\d)\ln(q)+\frac{\d}{\k}\Erw\left[\ln\left(\mcFEZ_{\vk}\right)\right]+o(1)\textrm{.}
\end{align*}
Finally, notice that $\phi_{\mathrm{a},\mcT}$, $\mcT\in\cT_n$, is sublinear in the number of factors using {\bf SYM}, so with Proposition \ref{dd_typ_prop} and the uniform convergence given $\mcT\in\cT^\circ_n$ we have $\phi_{\mathrm{a}}=\phi_{\mathrm{a},\mcT}+o(1)=\phi_{\mathrm{a},\infty}+o(1)$.
\end{proof}
The next fact relates the quantities $\hat{\phi}_\mcT$, $\phi_{\mathrm{a},\mcT}$ and $\bar\phi_\mcT$ through the distance of the models $\mcRG^*_\mcT(\mcRVAN_\mcT)$ and $\mcRG_\mcT$.
\begin{fact}\label{klcondp_prel_qfed_inequ}
For all $n\in\cN$ and $\mcT\in\cT_n$ we have
\begin{align*}
\hat\phi_\mcT
=\phi_{\mathrm{a},\mcT}+\frac{1}{n}\KL{\mcRG^*_\mcT(\mcRVAN)}{\mcRG_\mcT}
\ge\phi_{\mathrm{a},\mcT}-\frac{1}{n}\KL{\mcRG_\mcT}{\mcRG^*_\mcT(\mcRVAN)}
=\bar\phi_\mcT\textrm{.}
\end{align*}
\end{fact}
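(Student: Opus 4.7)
The plan is to unpack the three quantities via the Nishimori reweighting and use two explicit KL computations. By \Prop~\ref{Lemma_genNishi} (applied on the generalised model level), the marginal law of $\mcRG^*_\mcT(\mcRVAN_\mcT)$ equals the law of $\hat\G_\mcT$ as defined in \eqref{eq_hatG}, i.e.\ it has Radon--Nikodym derivative $G\mapsto Z_G/\bar Z_\mcT$ with respect to $\mcRG_\mcT$, where $\bar Z_\mcT=\Erw[Z_{\mcRG_\mcT}]$. In particular $n\phi_{\mathrm{a},\mcT}=\ln\bar Z_\mcT$, $n\bar\phi_\mcT=\Erw[\ln Z_{\mcRG_\mcT}]$ and $n\hat\phi_\mcT=\Erw[\ln Z_{\mcRG^*_\mcT(\mcRVAN_\mcT)}]$ are each expressible purely in terms of these two laws.

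First I would derive the leftmost equality by direct computation of the KL divergence of $\mcRG^*_\mcT(\mcRVAN_\mcT)$ with respect to $\mcRG_\mcT$:
\begin{align*}
\KL{\mcRG^*_\mcT(\mcRVAN_\mcT)}{\mcRG_\mcT}
=\sum_G\frac{Z_G\pr[\mcRG_\mcT=G]}{\bar Z_\mcT}\ln\frac{Z_G}{\bar Z_\mcT}
=\frac{\Erw[Z_{\mcRG_\mcT}\ln Z_{\mcRG_\mcT}]}{\bar Z_\mcT}-\ln\bar Z_\mcT.
\end{align*}
Because the first summand is precisely $\Erw[\ln Z_{\mcRG^*_\mcT(\mcRVAN_\mcT)}]=n\hat\phi_\mcT$ (by the very definition of the Nishimori reweighting), the right-hand side equals $n(\hat\phi_\mcT-\phi_{\mathrm{a},\mcT})$, which gives the claimed identity.

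Next I would derive the rightmost equality by computing the KL divergence in the opposite direction:
\begin{align*}
\KL{\mcRG_\mcT}{\mcRG^*_\mcT(\mcRVAN_\mcT)}
=\sum_G\pr[\mcRG_\mcT=G]\ln\frac{\bar Z_\mcT}{Z_G}
=\ln\bar Z_\mcT-\Erw[\ln Z_{\mcRG_\mcT}]
=n(\phi_{\mathrm{a},\mcT}-\bar\phi_\mcT),
\end{align*}
which rearranges to $\bar\phi_\mcT=\phi_{\mathrm{a},\mcT}-\tfrac1n\KL{\mcRG_\mcT}{\mcRG^*_\mcT(\mcRVAN_\mcT)}$. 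Finally, the middle inequality is immediate from the non-negativity of the two KL divergences: both $\tfrac1n\KL{\mcRG^*_\mcT(\mcRVAN_\mcT)}{\mcRG_\mcT}\ge 0$ and $\tfrac1n\KL{\mcRG_\mcT}{\mcRG^*_\mcT(\mcRVAN_\mcT)}\ge 0$, so adding $\phi_{\mathrm{a},\mcT}$ to the first and subtracting the second from $\phi_{\mathrm{a},\mcT}$ produces values that straddle $\phi_{\mathrm{a},\mcT}$ on opposite sides. There is no real obstacle here; the only point requiring care is invoking \Prop~\ref{Lemma_genNishi} to justify that the marginal law of $\mcRG^*_\mcT(\mcRVAN_\mcT)$ is exactly the Nishimori-reweighted law of $\mcRG_\mcT$, after which the identities reduce to the two lines above.
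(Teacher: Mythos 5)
Your proposal is correct and follows the same route as the paper: the paper's proof likewise observes that the Radon--Nikodym derivative of $\mcRG^*_\mcT(\mcRVAN_\mcT)$ with respect to $\mcRG_\mcT$ is $\mcG\mapsto Z_\mcG/\bar Z_\mcT$, deduces both equalities from this, and gets the inequality from non-negativity of the relative entropy. You simply write out the two KL computations that the paper leaves implicit.
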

\begin{proof}
Notice that the Radon-Nikodym derivative derivative of $\mcRG^*_\mcT(\mcRVAN)$ with respect to $\mcRG_\mcT$ is $\mcG\mapsto\frac{Z_\mcG}{\bar Z_\mcT}$ which gives the equivalences, while the inequality is obvious due to the non-negativity of the relative entropy.
\end{proof}
\subsection{Proof of Theorem \ref{main_qfed}}\label{condp}
Before we continue, we observe joint concentration given $\mcT\in\cT^\circ_n$.
\begin{lemma}\label{condp_conc}
Jointly in $\acS=(\mcT,\mcVA,\mcFSHEA)\in\acSSpT_n$ we have
$\hat\phi=\phi^*+o(1)=\phi^*_\acS+o(1)=\phi^*_\mcT+o(1)=\hat\phi_\mcT+o(1)$.
\end{lemma}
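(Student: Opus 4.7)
The plan is to assemble the claim from results already developed in Section \ref{conc} and Section \ref{mc}, chaining the identifications
$\hat\phi=\phi^*+o(1)=\phi^*_\acS+o(1)=\phi^*_\mcT+o(1)=\hat\phi_\mcT+o(1)$
by noting that each successive reduction only averages the next ``deeper'' conditioning away, and that each averaging operation takes place over events where the conditional quenched free entropy is already essentially constant. The first two equalities are immediate: Proposition~\ref{conc_tss_glob} already gives $\phi^*=\phi^*_{\acS}+o(1)=\hat\phi+o(1)$ uniformly for $\acS\in\acSSpT_n$, so the remaining work lies in identifying $\phi^*_\acS$ with $\phi^*_\mcT$ and $\phi^*_\mcT$ with $\hat\phi_\mcT$, both uniformly over $\mcT\in\cT^\circ_n$.

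For the identification $\phi^*_\mcT=\phi^*_\acS+o(1)$, I would start from the definition $\phi^*_\mcT=\Erw\bigl[\phi(\mcRG^*_\mcT(\mcRVA^*))\bigr]$ and insert the conditioning on $(\mcRVA^*,\mcRFSHEA^*_\mcT(\mcRVA^*))$, obtaining
\[
\phi^*_\mcT=\Erw\bigl[\phi^*_{\mcT,\mcRVA^*,\mcRFSHEA^*_\mcT(\mcRVA^*)}\bigr].
\]
By Lemma~\ref{conc_typ_qfed_prop} (whose proof uses the sublinearity in the number of factors of $\phi$ together with Proposition~\ref{typ_ass_prop}) the expectation concentrates on the typical set $\acSSpT_n$, and inside that set Proposition~\ref{conc_tss_glob} provides $\phi^*_{\acS'}=\phi^*_{\acS}+o(1)$ uniformly for all $\acS,\acS'\in\acSSpT_n$. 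Averaging and the boundedness of $\phi^*_{\acS'}$ (by $\textbf{SYM}$) then yield $\phi^*_\mcT=\phi^*_\acS+o(1)$ uniformly.

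For the identification $\hat\phi_\mcT=\phi^*_\mcT+o(1)$, I would replay the argument at the end of Section~\ref{conc_tss_glob_proof}: conditioning on the coloured triple attached to $\mcRVAN_\mcT$, write
\[
\hat\phi_\mcT=\Erw\bigl[\phi^*_{\mcT,\mcRVAN_\mcT,\mcRFSHEA^*_\mcT(\mcRVAN_\mcT)}\bigr],
\]
and invoke the mutual contiguity of $(\mcT,\mcRVA^*,\mcRFSHEA^*_\mcT(\mcRVA^*))$ and $(\mcT,\mcRVAN_\mcT,\mcRFSHEA^*_\mcT(\mcRVAN_\mcT))$ established in Section~\ref{mc_hence} (a consequence of Proposition~\ref{Lemma_genNishi}). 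This contiguity, combined with the fact that the former triple lies in $\acSSpT_n$ w.h.p.\ by Proposition~\ref{typ_ass_prop}, ensures that the latter triple lies in $\acSSpT_n$ w.h.p.\ as well. Since $\phi^*_{\acS'}$ is uniformly close to $\phi^*_\acS$ for $\acS'\in\acSSpT_n$ and is bounded outside it, the two averages coincide up to $o(1)$, uniformly in $\mcT\in\cT^\circ_n$.

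The main obstacle I anticipate is purely bookkeeping: one must verify that the $o(1)$ error terms from Proposition~\ref{typ_ass_prop}, Proposition~\ref{conc_tss_glob}, and the mutual contiguity are all uniform in $\mcT\in\cT^\circ_n$ simultaneously. This follows from the fact that each of these inputs was established with uniform estimates over $\cT^\circ_n$ (via the uniform bounds on degrees and spin frequencies enforced by the choice of $r_n$ in Section~\ref{ddp_typ_prop}), so one only has to chase the uniformity through the chain of reductions; no new probabilistic estimate is required. Finally, combining the two identifications with Proposition~\ref{conc_tss_glob} yields the full string of equivalences asserted in the lemma.
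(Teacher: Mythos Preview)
Your proposal is correct and follows essentially the same route as the paper: the first two equalities via Proposition~\ref{conc_tss_glob}, the step $\phi^*_\mcT=\phi^*_\acS+o(1)$ by restricting to $\acSSpT_n$ (sublinearity plus Proposition~\ref{typ_ass_prop}) and applying the uniform closeness of $\phi^*_{\acS'}$ from Proposition~\ref{conc_tss_glob}, and the step $\hat\phi_\mcT=\phi^*_\acS+o(1)$ by transferring the typicality of the coloured triple from $\mcRVA^*$ to $\mcRVAN_\mcT$. The only cosmetic difference is that the paper invokes the quantitative Radon--Nikodym bound of Proposition~\ref{mcp_main_prop} directly (writing $\pr[\acRSN_\mcT\notin\acSSpT_n]\le\varepsilon+c\,\pr[\acRS^*_\mcT\notin\acSSpT_n,\mcRVA^*\in\cE_\mcT]$), whereas you phrase this as mutual contiguity via Section~\ref{mc_hence}; since the latter is derived from the former and both give bounds uniform in $\mcT\in\cT^\circ_n$, the arguments are equivalent.
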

\begin{proof}
The first two equalities are immediate from Proposition \ref{conc_tss_glob}.
Thanks to Proposition \ref{typ_ass_prop} we have $\phi^*_\mcT=\Erw[\phi^*_{\acRS^*_\mcT}\vecone\{\acRS^*_\mcT\in\acSSpT_n\}]+o(1)$ uniformly in $\mcT\in\cT^\circ_n$,
and further for $(\mcT,\mcVA,\mcFSHEA)\in\acSSpT_n$ with Proposition \ref{conc_tss_glob} and the triangle inequality we have $\phi^*_{\mcT,\mcVA',\mcFSHEA'}=\phi^*_{\mcT,\mcVA,\mcFSHEA}+o(1)$ for any $(\mcT,\mcVA',\mcFSHEA')\in\acSSpT_n$ uniformly.
This yields $\phi^*_\mcT=\phi^*_{\acS}+o(1)$ uniformly in $\acS\in\acSSpT_n$.

Now, for given $\varepsilon\in(0,1)$ use Proposition \ref{mcp_main_prop} to obtain $c$, $r\in\RR_{>0}$ then for any $\mcT\in\cT^\circ_n$ we have
\begin{align*}
\pr[\acRSN_\mcT\not\in\acSSpT_n]
\le\varepsilon+c\pr[\acRS^*_\mcT\not\in\acSSpT_n,\mcRVA^*\in\mathcal E_\mcT]=\varepsilon+o(1)
\end{align*}
uniformly in $\mcT\in\cT^\circ_n$ thanks to Proposition \ref{typ_ass_prop}, i.e.~$\pr[\acRSN_\mcT\not\in\acSSpT_n]=o(1)$ uniformly.
Analogously to the above we obtain $\hat\phi_\mcT=\phi^*_{\mcT,\mcVA,\mcFSHEA}+o(1)$ uniformly which completes the proof.
\end{proof}
First, we derive the following contiguity-like result for the replica symmetric phase.
\begin{lemma}\label{condp_nishnull}
If we have $\bar\phi=\phi_{\mathrm{a}}+o(1)$, then for all $\hat c$, $\hat c'\in\RR_{>0}$ there exist $c$, $c'\in\RR_{>0}$ such that for all $n\in\cN$, all $\mcT\in\cT^\circ_n$ and $\cE\subseteq\mcColSp^{\mcVLSp_n}\times\cG_\mcT$ with $\pr[(\mcRVAN_\mcT,\mcRG^*_\mcT(\mcRVAN_\mcT))\in\cE]\le\hat c'\exp(-\hat cn)$ we have $\pr[(\mcRVAPost_{\mcRG_\mcT},\mcRG_\mcT)\in\cE]\le c'\exp(-cn)$, where $\cG_\mcT$ denotes the support of $\mcRG_\mcT$.
\end{lemma}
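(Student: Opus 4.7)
The plan is to exploit the Nishimori identity in combination with exponential concentration of the log-partition function. The starting point is to rewrite the probability under the planted measure as a reweighted expectation under the null. By Proposition~\ref{Lemma_genNishi} (applied pointwise and then summed), we have for every $(\sigma,G)$
\begin{align*}
\pr[\mcRVAN_\mcT=\sigma]\pr[\mcRG^*_\mcT(\sigma)=G]
= \pr[\hat{\G}_\mcT=G]\,\mu_G(\sigma)
= \frac{Z_G}{\bar Z_\mcT}\,\pr[\mcRG_\mcT=G]\,\mu_G(\sigma),
\end{align*}
and therefore
\begin{align*}
\pr\!\brk{(\mcRVAN_\mcT,\mcRG^*_\mcT(\mcRVAN_\mcT))\in\cE}
=\Erw\!\brk{\frac{Z_{\mcRG_\mcT}}{\bar Z_\mcT}\vecone\{(\mcRVAPost_{\mcRG_\mcT},\mcRG_\mcT)\in\cE\}}.
\end{align*}
Setting $X_\mcT=Z_{\mcRG_\mcT}/\bar Z_\mcT$ (so $\Erw[X_\mcT]=1$), the hypothesis becomes $\Erw[X_\mcT\,\vecone_\cE]\leq\hat c'\exp(-\hat c n)$, and we want to bound $\pr[(\mcRVAPost_{\mcRG_\mcT},\mcRG_\mcT)\in\cE]=\Erw[\vecone_\cE]$.

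The idea is a standard change-of-measure split. Pick a threshold $\alpha\in(0,\hat c)$ to be chosen later and decompose according to whether $X_\mcT\geq\exp(-\alpha n)$ or not. On the event $\{X_\mcT\geq\exp(-\alpha n)\}$ we have $\vecone_\cE\leq\exp(\alpha n)X_\mcT\vecone_\cE$, which by the hypothesis yields
\begin{align*}
\pr\!\brk{\cE\cap\{X_\mcT\geq\eul^{-\alpha n}\}}\le\eul^{\alpha n}\,\Erw\!\brk{X_\mcT\vecone_\cE}\le\hat c'\,\eul^{-(\hat c-\alpha)n}.
\end{align*}
On the complementary event we use the crude bound $\vecone_\cE\leq\vecone\{X_\mcT<\exp(-\alpha n)\}$.

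The main task is therefore to prove that, uniformly for $\mcT\in\cT^\circ_n$,
\begin{align*}
\pr\!\brk{X_\mcT<\eul^{-\alpha n}}=\pr\!\brk{\phi(\mcRG_\mcT)<\phi_{\mathrm{a},\mcT}-\alpha}\le c_1'\exp(-c_1\alpha^2 n)
\end{align*}
for some constants $c_1,c_1'>0$ depending only on the model. This is where the hypothesis $\bar\phi=\phi_{\mathrm{a}}+o(1)$ enters: combining it with Proposition~\ref{conc_std_glob} and Fact~\ref{klcondp_prel_afed}, both of which give uniform convergence on $\cT^\circ_n$, we obtain $\bar\phi_\mcT-\phi_{\mathrm{a},\mcT}=o(1)$ uniformly for $\mcT\in\cT^\circ_n$. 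Hence for all sufficiently large $n$ the event $\{\phi(\mcRG_\mcT)<\phi_{\mathrm{a},\mcT}-\alpha\}$ is contained in $\{|\phi(\mcRG_\mcT)-\bar\phi_\mcT|>\alpha/2\}$, and Proposition~\ref{conc_std_loc} (Azuma's inequality for $\phi$) gives the claimed subgaussian tail. Choosing $\alpha=\hat c/2$, the two contributions combine to
\begin{align*}
\pr\!\brk{(\mcRVAPost_{\mcRG_\mcT},\mcRG_\mcT)\in\cE}\le\hat c'\eul^{-\hat cn/2}+c_1'\eul^{-c_1\hat c^2n/4},
\end{align*}
uniformly in $\mcT\in\cT^\circ_n$, which is the desired bound after absorbing the finitely many small $n$ into the constants. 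The only genuine subtlety is making sure the hypothesis $\bar\phi=\phi_{\mathrm{a}}+o(1)$ upgrades to a uniform statement over $\cT^\circ_n$, which is precisely what Propositions~\ref{conc_std_glob} and Fact~\ref{klcondp_prel_afed} provide; the remaining bookkeeping is routine.
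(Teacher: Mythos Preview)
Your proof is correct and follows essentially the same route as the paper's: both rewrite the planted-model probability via the Nishimori identity as $\Erw[(Z_{\mcRG_\mcT}/\bar Z_\mcT)\vecone_\cE]$, then split according to whether $\phi(\mcRG_\mcT)$ exceeds $\phi_{\mathrm{a},\mcT}-\alpha$, using Proposition~\ref{conc_std_loc} for the bad event and the change-of-measure bound for the good one. The only cosmetic differences are that the paper writes the computation starting from the null-model side and chooses the threshold by equating the two exponents, whereas you fix $\alpha=\hat c/2$; both choices work.
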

\begin{proof}
Using Fact \ref{klcondp_prel_qfed_inequ} and Lemma \ref{condp_conc} we notice that $\bar\phi_\mcT=\phi_{\mathrm{a},\mcT}+o(1)$ jointly in $\mcT\in\cT^\circ_n$.
Now, fix $\hat c$, $\hat c'\in\RR_{>0}$, $n\in\cN$ and an event $\cE$ such that
\begin{align*}
\pr[(\mcRVAPost_{\mcRG^*_\mcT(\mcRVAN_\mcT)},\mcRG^*_\mcT(\mcRVAN_\mcT))\in\cE]\le\hat c'\exp(-\hat cn)\textrm{.}
\end{align*}
For any $\varepsilon\in\RR_{>0}$ and using Proposition \ref{conc_std_loc} we find constants $c$, $c'\in\RR_{>0}$ such that
\begin{align*}
\pr\left[\phi(\mcRG_\mcT)\le\bar\phi_\mcT-\frac{1}{2}\varepsilon\right]\le c'\exp\left(-\frac{c}{4}\varepsilon^2n\right)
\end{align*}
uniformly in $\mcT\in\cT^\circ_n$.
Due to the assumption we have $\bar\phi_\mcT\ge\phi_{\mathrm{a},\mcT}-\frac{1}{2}\varepsilon$ uniformly for all sufficiently large $n\in\cN$.
Combining these gives $\pr\left[\mcRG_\mcT\not\in\cG^\circ_\mcT\right]\le c'\exp\left(-\frac{c}{4}\varepsilon^2n\right)$, $\cG^\circ_\mcT=\{\mcG\in\cG_\mcT:\phi(\mcG)>\phi_{\mathrm{a},\mcT}-\varepsilon\}$, and further
\begin{align*}
\pr[(\mcRVAPost_{\mcRG_{\mcT}},\mcRG_{\mcT})\in\cE]
&\le c'\exp\left(-\frac{c\varepsilon^2}{4}n\right)+\pr\left[(\mcRVAPost_{\mcRG_{\mcT}},\mcRG_{\mcT})\in\cE,\mcRG_{\mcT}\in\cG^\circ_{\mcT}\right]\\
&=c'\exp\left(-\frac{c\varepsilon^2}{4}n\right)+\sum_{\mcVA}\Erw\left[\frac{\psi_{\mcRG_{\mcT}}(\mcVA)}{\exp(n\phi(\mcRG_{\mcT}))}\vecone\{(\mcVA,\mcRG_{\mcT})\in\cE,\mcRG_{\mcT}\in\cG_{\mcT}^{\circ}\}\right]\\
&<c'\exp\left(-\frac{c\varepsilon^2}{4}n\right)+\sum_{\mcVA}\Erw\left[\frac{\psi_{\mcRG_{\mcT}}(\mcVA)}{\exp(n\phi_{\mathrm{a},t}-\varepsilon n)}\vecone\{(\mcVA,\mcRG_{\mcT})\in\cE,\mcRG_{\mcT}\in\cG_{\mcT}^{\circ}\}\right]\\
&=c'\exp\left(-\frac{c\varepsilon^2}{4}n\right)+e^{\varepsilon n}\sum_{\mcVA}\Erw\left[\frac{\psi_{\mcRG_{\mcT}}(\mcVA)}{\bar Z_{\mcT}}\vecone\{(\mcVA,\mcRG_{\mcT})\in\cE,\mcRG_{\mcT}\in\cG_{\mcT}^{\circ}\}\right]\\
&=c'\exp\left(-\frac{c\varepsilon^2}{4}n\right)+e^{\varepsilon n}\Erw\left[\frac{Z_{\mcRG_{\mcT}}}{\bar Z_{\mcT}}\sum_{\mcVA}\mu_{\mcRG_{\mcT}}(\mcVA)\vecone\{(\mcVA,\mcRG_{\mcT})\in\cE,\mcRG_{\mcT}\in\cG_{\mcT}^{\circ}\}\right]\\
&=c'\exp\left(-\frac{c\varepsilon^2}{4}n\right)+e^{\varepsilon n}\pr\left[(\mcRVAPost_{\mcRG^*_{\mcT}(\mcRVAN_{\mcT})},\mcRG^*_{\mcT}(\mcRVAN_{\mcT}))\in\cE,\mcRG^*_{\mcT}(\mcRVAN_{\mcT})\in\cG_{\mcT}^{\circ}\right]\\
&\le c'\exp\left(-\frac{c\varepsilon^2}{4}n\right)+\hat c'\exp\left(\varepsilon n-\hat cn\right)\textrm{.}
\end{align*}
Let $\varepsilon$ be the solution for which the coefficients in the exponents coincide, i.e.~$c_1=\frac{1}{4}c\varepsilon^2=\hat c-\varepsilon\in\RR_{>0}$, then with $c_2=c'+\hat c'$ we have $\pr[(\mcRVAPost_{\mcRG_{\mcT}},\mcRG_{\mcT})\in\cE]<c_2\exp(-c_1n)$.
Recall that the result above holds for all $n\in\cN$ with $n>n_0$ for some suitable $n_0\in\cN$.
Now, redefine $c=c_1$ and let $c'\ge c_2$ be sufficiently large such that $c'\exp(-cn_0)\ge 1$, then the assertion is trivial for all small $n$ and also holds for large $n$.
\end{proof}
Next, we derive a concentration result for the Nishimori quenched free entropy density.
\begin{lemma}\label{condp_nishi_conc}
For all $\varepsilon\in\RR_{>0}$ there exist $c$, $c'\in\RR_{>0}$ such that for all $n\in\cN$ and $\mcT\in\cT^\circ_n$ we have
\begin{flalign*}
\pr\left[\left|\phi(\mcRGN_\mcT)-\hat\phi_\mcT\right|\ge\varepsilon\right]\le c'\exp(-cn)\textrm{.}
\end{flalign*}
\end{lemma}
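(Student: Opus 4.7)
The plan is to reduce the statement to the already-established teacher-student concentration by means of the Nishimori identity, together with the exponential typicality of the Nishimori coloured sequence. Specifically, by Proposition \ref{Lemma_genNishi} the marginal law of $\mcRGN_\mcT$ coincides with the law of $\mcRG^*_\mcT(\mcRVAN_\mcT)$, and enriching by the factor-side half-edge assignment gives the joint law $(\mcT,\mcRVAN_\mcT,\mcRFSHEA^*_\mcT(\mcRVAN_\mcT),\mcRG^*_\mcT(\mcRVAN_\mcT))$ with the property that, conditional on $\acRSN_\mcT=\acS=(\mcT,\mcVA,\mcFSHEA)$, the graph is distributed as $\mcRG^*_\acS$. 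Hence the problem is converted into a conditional concentration statement for $\phi(\mcRG^*_\acS)$ together with tail control on the event that $\acRSN_\mcT$ is atypical.

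Concretely, I would write
\begin{align*}
\pr[|\phi(\mcRGN_\mcT)-\hat\phi_\mcT|\ge\varepsilon]
&\le\pr[\acRSN_\mcT\not\in\acSSpT_n]+\sup_{\acS\in\acSSpT_n}\pr[|\phi(\mcRG^*_\acS)-\hat\phi_\mcT|\ge\varepsilon]
\end{align*}
uniformly in $\mcT\in\cT^\circ_n$. For the first term, Proposition \ref{typ_ass_conc} (applied to $\acRSN_\mcT$) gives $c_1,c_1'\in\RR_{>0}$ with
$\pr[\acRSN_\mcT\not\in\acSSpT_n]\le c_1'\exp(-c_1n)$ uniformly in $\mcT\in\cT^\circ_n$. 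For the second term, Lemma \ref{condp_conc} provides a sequence $\delta_n=o(1)$ with $|\phi^*_\acS-\hat\phi_\mcT|<\delta_n$ uniformly for all $\acS\in\acSSpT_n$, so for $n$ large enough $\delta_n<\varepsilon/2$ and the triangle inequality reduces the supremum to
$\sup_{\acS\in\acSSpT_n}\pr[|\phi(\mcRG^*_\acS)-\phi^*_\acS|\ge\varepsilon/2]$.
Since the support of $\acRSN_\mcT$ is contained in the support of $\acRS^*_\mcT$, Proposition \ref{conc_tss_loc} applies with $r=\varepsilon/2$ to yield $c_2,c_2'\in\RR_{>0}$ with
$\pr[|\phi(\mcRG^*_\acS)-\phi^*_\acS|\ge\varepsilon/2]\le c_2'\exp(-c_2\varepsilon^2n/4)$.
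Setting $c=\min(c_1,c_2\varepsilon^2/4)$ and choosing $c'$ so large that the resulting bound is trivial for the finitely many small $n$ (where $\delta_n\ge\varepsilon/2$) completes the argument.

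The main obstacle is the first step: verifying cleanly that the conditional law of $\mcRGN_\mcT$ given $\acRSN_\mcT=\acS$ coincides with $\mcRG^*_\acS$. This is a straightforward extension of Proposition \ref{Lemma_genNishi} from the variable-assignment layer to the joint variable/factor half-edge layer, obtained by noting that the bijection and the weight functions in $\mcRGN_\mcT$ are, conditional on both half-edge colourings, uniform over those consistent with $\acS$; but writing it out requires a careful bookkeeping of which randomness has been revealed. Everything afterwards is a routine assembly of previously proved concentration and typicality estimates.
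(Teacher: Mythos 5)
Your overall architecture is the right one and matches the paper's: split off the event that the Nishimori coloured sequence is atypical, use Lemma \ref{condp_conc} to replace $\hat\phi_\mcT$ by $\phi^*_\acS$ on the typical event, and finish with the conditional concentration of Proposition \ref{conc_tss_loc}. The conditional-law identification you worry about at the end is indeed valid and is used implicitly throughout Section \ref{conc}, so that is not where the difficulty lies.

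There is, however, a genuine gap in your treatment of the first term. You bound $\pr[\acRSN_\mcT\not\in\acSSpT_n]$ by $c_1'\exp(-c_1 n)$ citing Proposition \ref{typ_ass_conc}, but $\acSSpT_n=\acSSpT_{n,r_n}$ is defined with a radius $r_n=o(1)$, whereas Proposition \ref{typ_ass_conc} gives exponential bounds only for a \emph{fixed} threshold $\varepsilon$ on $\acADPD(\acJAD_{\acRSN_\mcT},\acJAD^*)$. Tracing the constants through Lemma \ref{ac_loc_uncond}, the exponent scales like $\varepsilon^2 n$, so for the shrinking radius $r_n$ one only gets $\exp(-\Omega(r_n^2n))$, which the construction of $r_n$ in Proposition \ref{typ_ass_prop} guarantees to be $o(1)$ but not exponentially small in $n$. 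The repair is exactly what the paper does: fix a radius $\varepsilon'>0$ (depending on $\varepsilon$), apply Lemma \ref{typ_ass_center} together with Proposition \ref{mcp_tails} (equivalently, Proposition \ref{typ_ass_conc} at the fixed threshold $\varepsilon'$) to get a genuinely exponential bound on $\pr[\acADPD(\acJAD_{\acRSN_\mcT},\acJAD^*)\ge\varepsilon']$, and then invoke the coupling of Section \ref{conc_tss_glob_proof} to show that $\acADPD(\acJAD_\acS,\acJAD^*)<\varepsilon'$ already forces $|\phi^*_\acS-\hat\phi_\mcT|<\varepsilon/2$. This last continuity step is the ingredient your argument is missing; Lemma \ref{condp_conc} alone only covers the $o(1)$-radius set, not the fixed-radius one on which the exponential tail bound lives. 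With that substitution the rest of your assembly (triangle inequality, Proposition \ref{conc_tss_loc}, adjusting $c'$ for small $n$) goes through as written.
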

\begin{proof}
Using Lemma \ref{condp_conc} we obtain uniform bounds for the distance of $\phi^*_\acS$ and $\hat\phi_\mcT$ over any choice of $\acS\in\acSSpT_n$ and $\mcT\in\cT^\circ_n$ for $n$ sufficiently large.
Further, for given $\varepsilon'\in\RR_{>0}$ with Lemma \ref{typ_ass_center} we obtain $\delta$ and exponential bounds for $\acADD(\acAD_{\acRSN_\mcT},\acAD^*)\ge\varepsilon'$ given that the distance of $\mcRCFRN_\mcT$ and $u_{\mcColSp}$ is less than $\delta$.
But Proposition \ref{mcp_tails} exactly provides the corresponding exponential bounds.
Combining these results leaves us with assignment distributions close to the reference distribution, for which the coupling in Section \ref{conc_tss_glob_proof} ensures that the corresponding quenched free entropy densities $\phi^*_\acS$, i.e.~$\acADD(\acAD_\acS,\acAD^*)<\varepsilon'$ and $\mcT\in\cT^\circ_n$ with $\acS=(\mcT,\mcVA,\mcFSHEA)\in\acSSpV_n$, are close to each other and the center.
Finally, Proposition \ref{conc_tss_loc} provides uniform exponential bounds for the distance of the free entropy density to its expectation given $\acS$, which concludes the proof for large $n$. However, choosing $c'\in\RR_{>0}$ sufficiently large ensures that the bound is valid for all $n$.
\end{proof}
\begin{lemma}\label{condp_rs_phase}
We have $\hat\phi=\phi_{\mathrm{a}}+o(1)$ if and only if $\bar\phi=\phi_{\mathrm{a}}+o(1)$.
\end{lemma}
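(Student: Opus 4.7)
The plan is to exploit the ordering $\hat\phi_\mcT\ge\phi_{\mathrm{a},\mcT}\ge\bar\phi_\mcT$ from Fact~\ref{klcondp_prel_qfed_inequ} together with the concentration of $\phi(\mcG)=n^{-1}\ln Z_\mcG$ around $\bar\phi_\mcT$ under the null model (Proposition~\ref{conc_std_loc}) and around $\hat\phi_\mcT$ under the Nishimori model (Lemma~\ref{condp_nishi_conc}). Thanks to Fact~\ref{klcondp_prel_afed} and Lemma~\ref{condp_conc}, the argument can be carried out uniformly in $\mcT\in\cT_n^\circ$, so I may treat $\bar\phi_\mcT$, $\hat\phi_\mcT$ and $\phi_{\mathrm{a},\mcT}$ as interchangeable with their unconditional averages up to $o(1)$.

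For the reverse implication ($\bar\phi=\phi_{\mathrm{a}}+o(1)\Rightarrow\hat\phi=\phi_{\mathrm{a}}+o(1)$), I would fix $\varepsilon>0$ and apply Lemma~\ref{condp_nishnull} to the graph-only event $\cF_\varepsilon=\{G:|\phi(G)-\hat\phi_\mcT|\ge\varepsilon\}$. Lemma~\ref{condp_nishi_conc} bounds $\pr[\mcRGN_\mcT\in\cF_\varepsilon]\le c'\exp(-cn)$, so the hypothesis activates Lemma~\ref{condp_nishnull} and yields $\pr[(\mcRVAPost_{\mcRG_\mcT},\mcRG_\mcT)\in\cF_\varepsilon]\le\tilde c'\exp(-\tilde cn)$; since $\cF_\varepsilon$ only depends on $G$, this equals $\pr[\mcRG_\mcT\in\cF_\varepsilon]$. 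Proposition~\ref{conc_std_loc} simultaneously gives $\phi(\mcRG_\mcT)=\bar\phi_\mcT+O(\varepsilon)$ with null probability $1-o(1)$, forcing the prescribed neighbourhoods of $\hat\phi_\mcT$ and $\bar\phi_\mcT$ to overlap and hence $|\hat\phi_\mcT-\bar\phi_\mcT|\le 2\varepsilon$. Letting $\varepsilon\to 0$ after $n\to\infty$ and combining with $\hat\phi_\mcT\ge\phi_{\mathrm{a},\mcT}$ and the hypothesis delivers $\hat\phi_\mcT=\phi_{\mathrm{a},\mcT}+o(1)$.

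The forward implication is trickier because Lemma~\ref{condp_nishnull} is not available---its hypothesis is precisely what I am trying to prove. Assuming $\hat\phi=\phi_{\mathrm{a}}+o(1)$ and fixing $\varepsilon>0$, I would decompose $\hat\phi_\mcT=\Erw[\phi(\mcRGN_\mcT)]$ along the event $\{\phi<\bar\phi_\mcT+\varepsilon/2\}$ and its complement, bounding the main piece by $\bar\phi_\mcT+\varepsilon/2$ and showing that the leftover $p:=\pr[\phi(\mcRGN_\mcT)\ge\bar\phi_\mcT+\varepsilon/2]$ is $o(1)$. To control $p$, split the event further along $\{\phi\le\phi_{\mathrm{a},\mcT}+\delta\}$ with $\delta:=c\varepsilon^2/8$, where $c$ is the constant from Proposition~\ref{conc_std_loc}. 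On this intersection, $Z/\bar Z\le\exp(\delta n)$, so Nishimori reweighting against the null concentration bound $c'\exp(-c\varepsilon^2 n/4)$ gives contribution at most $c'\exp(-c\varepsilon^2 n/8)$; on the complement $\{\phi>\phi_{\mathrm{a},\mcT}+\delta\}$, Lemma~\ref{condp_nishi_conc} applied around $\hat\phi_\mcT=\phi_{\mathrm{a},\mcT}+o(1)$ (so that $\{\phi>\phi_{\mathrm{a},\mcT}+\delta\}\subseteq\{|\phi-\hat\phi_\mcT|>\delta/2\}$ for large $n$) supplies an exponentially small Nishimori probability. Hence $\hat\phi_\mcT\le\bar\phi_\mcT+\varepsilon/2+o(1)$, so $\bar\phi_\mcT\ge\phi_{\mathrm{a},\mcT}-\varepsilon/2-o(1)$; letting $\varepsilon\to 0$ and invoking $\bar\phi_\mcT\le\phi_{\mathrm{a},\mcT}$ finishes the proof. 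The main obstacle is precisely this decomposition: pivoting around $\phi_{\mathrm{a},\mcT}$ (as in the reverse direction) loses an unaffordable factor $\exp(\varepsilon n)$ in the reweighting, whereas pivoting around $\bar\phi_\mcT$ with the calibrated $\delta=\Theta(\varepsilon^2)$ lets the quadratic null concentration rate beat the linear reweighting loss.
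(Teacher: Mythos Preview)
Your argument is correct in both directions. The reverse implication matches the paper's: the paper argues by contrapositive (assume $\hat\phi_\mcT\ge\phi_{\mathrm{a},\mcT}+\delta$ and derive a contradiction via Lemma~\ref{condp_nishnull}), while you phrase the same idea directly, but the substance is identical.

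The forward implication ($\hat\phi=\phi_{\mathrm{a}}+o(1)\Rightarrow\bar\phi=\phi_{\mathrm{a}}+o(1)$) is where the two proofs diverge. The paper does not bound $\pr[\phi(\mcRGN_\mcT)\ge\bar\phi_\mcT+\varepsilon/2]$ directly; instead it runs a second-moment argument. It truncates $\vec Z^\circ_\mcT=Z_{\mcRG_\mcT}\vecone\{|\phi-\phi_{\mathrm{a},\mcT}|<\varepsilon_n\}$, observes that under the hypothesis the truncation barely costs anything in expectation (because the Nishimori model puts mass $1-o(1)$ on the window), bounds the second moment crudely by $\exp(2\varepsilon_n n)\bar Z_\mcT^2$, and invokes Paley--Zygmund to obtain $\pr[Z_{\mcRG_\mcT}\ge\tfrac14\bar Z_\mcT]\ge\tfrac{1}{16}\exp(-2\varepsilon_n n)$. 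Since this subexponential lower bound dominates the sub-Gaussian concentration of Proposition~\ref{conc_std_loc}, the event must intersect $\{|\phi-\bar\phi_\mcT|<\delta_n\}$, forcing $\bar\phi_\mcT=\phi_{\mathrm{a},\mcT}+o(1)$. Your approach replaces this second-moment step by a single change-of-measure inequality, exploiting that on $\{\phi\le\phi_{\mathrm{a},\mcT}+\delta\}$ the Radon--Nikodym derivative $Z/\bar Z$ is at most $\exp(\delta n)$; the key calibration $\delta=\Theta(\varepsilon^2)$ lets the quadratic null-model rate $\exp(-c\varepsilon^2 n)$ absorb the linear loss $\exp(\delta n)$. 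Your route is arguably more direct and avoids Paley--Zygmund, at the price of the somewhat delicate two-scale splitting; the paper's route is more standard in this literature and makes the role of the first and second moments of $Z$ explicit.
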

\begin{proof}
With Fact \ref{klcondp_prel_afed} we have $\phi_{\mathrm{a}}=\phi_{\mathrm{a},\mcT}+o(1)$, with Lemma \ref{condp_conc} we have $\hat{\phi}=\hat{\phi}_{\mcT}+o(1)$ and with Proposition \ref{conc_std_glob} we have $\bar\phi=\bar\phi_{\mcT}+o(1)$, all uniformly in $\mcT\in\cT^\circ_n$.
Now, we show that $\hat\phi_{\mcT}=\phi_{\mathrm{a},\mcT}+o(1)$ if and only if $\bar\phi_{\mcT}=\phi_{\mathrm{a},\mcT}+o(1)$ for a fixed sequence $\mcT=\mcT_n\in\cT^\circ_n$, since then the assertion follows by the arguments above.
Let $c$, $c'\in\RR_{>0}$ and $\hat c$, $\hat c'\in\RR_{>0}$ be the constants obtained from Proposition \ref{conc_std_loc} and Proposition \ref{conc_tss_loc} respectively.

First, assume that $\hat\phi_{\mcT}=\phi_{\mathrm{a},\mcT}+o(1)$ holds.
Fix a sequence $\varepsilon_n\in\RR_{>0}$, $n\in\cN$, such that $\varepsilon_n=o(1)$, $\varepsilon_n^2n=\omega(1)$ and $|\hat\phi_{\mcT}-\phi_{\mathrm{a},\mcT}|<\frac{1}{3}\varepsilon_n$.
Use Lemma \ref{condp_conc} to obtain $|\phi^*_{\acS}-\hat{\phi}_{\mcT}|<\frac{1}{3}\varepsilon_n$ for all $\acS=(t_n,\mcVA,\mcFSHEA)\in\acSSpT_n$ and sufficiently large $n\in\cN$.
The probability for the event $\cE_n=\{\mcG\in\cG_\mcT:|\phi(\mcG)-\phi_{\mathrm{a},\mcT}|<\varepsilon_n\}$ can be bounded by
\begin{align*}
\pr[\mcRG^*_{\mcT}(\mcRVAN_{\mcT})\not\in\cE_n]
&\le\pr[\acRSN_\mcT\not\in\acSSpT_n]+\Erw\left[\vecone\{\acRSN_\mcT\in\acSSpT_n\}\pr\left[\mcRG^*_{\acRSN_\mcT}\not\in\cE_n\middle\vert\acRSN_\mcT\right]\right]\\
&\le o(1)+\Erw\left[\vecone\{\acRSN_\mcT\in\acSSpT_n\}\pr\left[\left|\phi(\mcRG^*_{\acRSN_\mcT})-\phi^*_{\acRSN_\mcT}\right|\ge\frac{1}{3}\varepsilon_n\middle\vert\acRSN_\mcT\right]\right]\\
&\le o(1)+\pr[\acRSN_\mcT\in\acSSpT_n]\hat c'\exp\left(-\frac{\hat c}{9}\varepsilon_n^2n\right)=o(1)\textrm{,}
\end{align*}
using $\pr[\acRSN_\mcT\in\acSSpT_n]=1+o(1)$ from the proof of Lemma \ref{condp_conc},
and that the exponent is of order $\omega(1)$ since $\varepsilon_n^2 n=\omega(1)$.
Next, we use $\bm Z^\circ_\mcT=Z_{\mcRG_\mcT}\vecone\{\mcRG_\mcT\in\cE_n\}$ to obtain $\bar Z^\circ_\mcT=\bar Z_\mcT\pr[\mcRG^*_\mcT(\mcRVAN_\mcT)\in\cE_n]\ge\frac{1}{2}\bar Z_\mcT$, where $\bar Z^\circ_\mcT=\Erw[\bm Z^\circ_\mcT]$ and $n$ sufficiently large such that $\pr[\mcRG^*_\mcT(\mcRVAN_\mcT)\in\cE_n]\ge\frac{1}{2}$.
On the other hand, using the definition of $\cE_n$ we have $\Erw[\bm Z^{\circ2}_\mcT]\le\exp(2n(\phi_{\mathrm{a},\mcT}+\varepsilon_n))\pr[\mcRG_\mcT\in\cE_n]\le\exp(2\varepsilon_nn)\bar Z_\mcT^2$, so the Paley-Zygmund inequality yields
\begin{align*}
\pr\left[\bm Z^\circ_\mcT\ge\frac{1}{2}\bar Z^\circ_\mcT\right]\ge\frac{\bar Z^{\circ 2}_\mcT}{4\Erw[\bm Z^{\circ 2}_\mcT]}
\ge\frac{1}{16}\exp\left(-2\varepsilon_nn\right)\textrm{.}
\end{align*}
Since by definition we always have $\bm Z^\circ_\mcT\le Z_{\mcRG_\mcT}$, the event $\bm Z^\circ_\mcT\ge\frac{1}{2}\bar Z^\circ_\mcT$ implies $Z_{\mcRG_\mcT}\ge\frac{1}{4}\bar Z_\mcT$ and hence
\begin{align*}
\pr\left[\mcRG_\mcT\in\cE'_n\right]=\pr\left[Z_{\mcRG_\mcT}\ge\frac{1}{4}\bar Z_\mcT\right]\ge\frac{1}{16}\exp\left(-2\varepsilon_nn\right)\textrm{, }
\cE'_n=\left\{\mcG\in\cG_\mcT:\phi(\mcG)\ge\phi_{\mathrm{a},\mcT}-\frac{\ln(4)}{n}\right\}\textrm{.}
\end{align*}
Fix a sequence $\delta_n\in\RR_{>0}$, $n\in\cN$, with $\delta_n=o(1)$ and $\delta_n^2=\omega(\varepsilon_n)$.
Now we can use Proposition \ref{conc_std_loc} with $\cE''_n=\{\mcG\in\cG_\mcT:|\phi(\mcG)-\bar\phi_\mcT|<\delta_n\}$ to obtain
\begin{align*}
\pr\left[\mcRG_\mcT\in\cE'_n\cap\cE''_n\right]
&\ge\frac{1}{16}\exp\left(-2\varepsilon_nn\right)-c'\exp(-c\delta_n^2n)\\
&=\left(\frac{1}{16}-c'\exp\left(-c\delta_n^2n\left(1-\frac{2\varepsilon_n}{c\delta_n^2}\right)\right)\right)\exp\left(-2\varepsilon_nn\right)
=(1+o(1))\frac{1}{16}\exp\left(-2\varepsilon_nn\right)\textrm{,}
\end{align*}
so in particular we have $\mcRG_\mcT\in\cE'_n\cap\cE''_n$ asymptotically with positive probability, and for all $\mcG\in\cE'_n\cap\cE''_n$ we have $|\phi_{\mathrm{a},\mcT}-\bar\phi_\mcT|\le|\phi_{\mathrm{a},\mcT}-\phi(\mcG)|+|\phi(\mcG)-\bar\phi_\mcT|\le n^{-1}\ln(4)+\delta_n=o(1)$.

Conversely, assume that $\hat\phi_\mcT=\phi_{\mathrm{a},\mcT}+\Omega(1)$, so there exists $\delta\in\RR_{>0}$ such that
$\hat\phi_\mcT\ge\phi_{\mathrm{a},\mcT}+\delta$ for $n$ sufficiently large using Fact \ref{klcondp_prel_qfed_inequ}.
Using Lemma \ref{condp_nishi_conc} yields that $\pr[|\phi(\mcRGN_\mcT)-\hat\phi_\mcT|\ge\delta/2]\le c'\exp(-cn)$, so $\pr[\phi(\mcRGN_\mcT)\le\phi_{\mathrm{a},\mcT}+\delta/2]\le c'\exp(-cn)$.
On the other hand, Fact \ref{klcondp_prel_qfed_inequ} shows that $\bar\phi_\mcT\le\phi_{\mathrm{a}}$, and further Proposition \ref{conc_std_loc} suggests that $\pr[|\phi(\mcRG_\mcT)-\bar\phi_\mcT|\ge\frac{1}{4}\delta]\le c'\exp(-cn)$, so $\pr[\phi(\mcRG_\mcT)\le\phi_{\mathrm{a},\mcT}+\delta/2]\ge 1-c'\exp(-cn)$. So, with Lemma \ref{condp_rs_phase}, contraposition and Fact \ref{klcondp_prel_qfed_inequ} we obtain $\bar\phi_\mcT=\phi_{\mathrm{a},\mcT}-\Omega(1)$.
\end{proof}
Recall that $\hat\phi=\sup_{\pi\in\cP^2_*(\mcColSp)}\cB(\pi)+o(1)$ using all assumptions.
Then with Fact \ref{klcondp_prel_afed} we obtain $\phi_{\mathrm{a}}=\phi_{\mathrm{a},\infty}+o(1)$,
and Lemma \ref{condp_rs_phase} yields $\bar\phi=\phi_{\mathrm{a},\infty}+o(1)$ iff $\sup_{\pi\in\cP^2_*([q])}\cB(\pi)=\phi_{\mathrm{a},\infty}$.
If $\sup_{\pi\in\cP^2_*([q])}\cB(\pi)\neq\phi_{\mathrm{a},\infty}$, then we have $\sup_{\pi\in\cP^2_*([q])}\cB(\pi)>\phi_{\mathrm{a},\infty}$ and $\limsup_{n\rightarrow\infty}\bar\phi<\phi_{\mathrm{a},\infty}$ using Fact \ref{klcondp_prel_qfed_inequ}. This completes the proof of Theorem \ref{main_qfed}.
\subsection{Proof of Theorem \ref{main_kl_divergence}}
Notice that the relative entropy density is given by
\begin{align*}
f(n)=\frac{1}{n}\KL{\mcRG^*_{\vt_n}(\mcRVA^*),\mcRVA^*}{\mcRG_{\vt_n},\mcRVAPost_{\mcRG_{\vt_n}}}
=\frac{1}{n}\Erw\left[\ln\left(r\left(\mcRG^*_{\vt_n}(\mcRVA^*),\mcRVA^*\right)\right)\right]\textrm{, }r(\mcG,\mcVA)=\frac{q^{-n}Z_\mcG}{\bar\psi_{\mcT_\mcG}(\mcVA)}\textrm{, }
\end{align*}
where $r$ denotes the derivative of $(\mcRG^*_{\vt_n}(\mcRVA^*),\mcRVA^*)$ with respect to $(\mcRG_{\vt_n},\mcRVAPost_{\mcRG_{\vt_n}})$. Basic algebra and using $g(\mcT)=n^{-1}\KL{\mcRVA^*}{\mcRVAN_\mcT}$ gives
$f(n)=\phi^*-\phi_{\mathrm{a}}+\Erw[g(\vt_n)]$. Using {\bf SYM} we get
\begin{align*}
q^{-n}\varepsilon^{2m_\mcT}\le\pr[\mcRVAN_\mcT=\mcVA]\le q^{-n}\varepsilon^{-2m_\mcT}\textrm{,}
\end{align*}
so $g(\mcT)$ is sublinear in the number of factors and hence $\Erw[g(\vt_n)]=\Erw[g(\vt_n)\vecone\{\vt_n\in\cT^\circ_n\}]+o(1)$.
Since $g(\mcT)$ coincides with $\delta^*_0(\mcT)$ in the mutual information proof we obtain $\Erw[g(\vt_n)]=o(1)$ using {\bf BAL}. Now, the result is immediate using Theorem \ref{main_qfed}.

\end{document}